\newtheorem*{theorem*}{Theorem}
\newcommand{\swap}{\mathrm{SWAP}}
\newcommand{\SW}{\mathrm{SW}}
\newcommand{\donate}{\mathrm{donate}}
\newcommand{\dU}{\mathrm{d}U}
\newcommand{\minusBox}{-\Box}
\title{The debiased Keyl's algorithm:\\ a new unbiased estimator for full state tomography}
\author{Angelos Pelecanos\thanks{UC Berkeley. \texttt{\{apelecan,jspilecki,jswright\}@berkeley.edu}}  \and Jack Spilecki\footnotemark[1] \and John Wright\footnotemark[1]}
\date{}
\begin{document}

\maketitle

\begin{abstract}
      In the problem of quantum state tomography,
    one is given $n$ copies of an unknown rank-$r$ mixed state $\rho \in \C^{d \times d}$ and asked to produce an estimator $\widehat{\brho}$ of $\rho$ which is close to it. One of the most basic and useful properties a statistical estimator can have is that of being \emph{unbiased}, meaning that $\widehat{\brho}$ is equal to the true state $\rho$ in expectation.
    Unfortunately, of the three estimators for tomography which are known to be either sample-optimal or almost sample-optimal,
    namely Keyl's algorithm~\cite{Key06}
    and the two tomography algorithms from~\cite{HHJ+16},
    none of them produce unbiased estimators. This situation has recently been highlighted as a bottleneck in several important tomographic applications~\cite{CLL24a,CLL24b}.
    
    In this work, we present the \emph{debiased Keyl's algorithm}, the first estimator for full state tomography which is both unbiased and sample-optimal.
    Our algorithm is the result of taking Keyl's algorithm and carefully modifying it in order to correct for its bias.
    In addition, we derive an explicit formula for the second moment of our estimator which is simple and easy to use.
    Using it, we show the following applications.
    \begin{itemize}
        \item[$\circ$] We show that $n = O(rd/\epsilon^2)$ copies are sufficient to learn a rank-$r$ mixed state $\rho \in \C^{d \times d}$ to trace distance error $\epsilon$. This gives a new proof of the main result of~\cite{OW16} and matches the lower bound of~\cite{SSW25}.
        \item[$\circ$] We then improve this result and show that $n = O(rd/\epsilon^2)$ copies are sufficient to learn to error $\epsilon$ in the more challenging Bures distance. This is optimal due to the lower bound of~\cite{Yue23} and improves on the best known bound of $n = \widetilde{O}(rd/\epsilon^2)$ from prior work~\cite{HHJ+16,OW17a,FO24}.
        \item[$\circ$] We consider the scenario of full state tomography with limited entanglement, in which one is given $n$ copies of $\rho$ but only allowed to perform entangled measurements across $k$ of them at a time. We show that 
        \begin{equation*}
            n =O\left(\max \left(\frac{d^3}{\sqrt{k}\epsilon^2}, \frac{d^2}{\epsilon^2} \right) \right)
        \end{equation*}
        copies of $\rho$ suffice to learn $\rho$ to trace distance error $\epsilon$ in this scenario.
        This improves on the prior work of \cite{CLL24a} and matches their lower bound which holds for $k \leq 1/\epsilon^c$, where $c$ is a constant.
        \item[$\circ$] In the problem of shadow tomography, one is given $m$ observables $O_1, \ldots, O_m$, and the goal is to estimate each quantity $\tr(O_i \cdot \rho)$ up to $\epsilon$ error.
        If $\Vert O_i \Vert_{\infty} \leq 1$ for all $i$, we show that $O(\log(m)/\epsilon^2)$ copies are sufficient to solve this task in the ``high accuracy regime'' when $\epsilon = O(1/d)$. This improves a result of~\cite{CLL24b}, who showed this bound holds when $\epsilon = O(1/d^{12})$.
        More generally, we show that
        if $\tr(O_i^2) \leq F$ for all $i$, then
        \begin{equation*}
            n = O\Big(\log(m) \cdot \Big(\min\Big\{\frac{\sqrt{r F}}{\epsilon}, \frac{F^{2/3}}{\epsilon^{4/3}}\Big\} + \frac{1}{\epsilon^2}\Big)\Big)
        \end{equation*}
        copies suffice to solve this task.
        This improves on a result of~\cite{GLM24}, who showed a bound of $n = O(\log(m) \cdot \sqrt{rF}/\epsilon^2)$, and disproves a conjecture of~\cite{CLL24b}.
        \item[$\circ$] Our final application is to the field of quantum metrology. In quantum metrology, one is provided copies of a mixed state $\rho_{\theta}$ parameterized by a vector $\theta \in \R^n$, and the goal is to estimate the parameter $\theta$. An algorithm is \emph{locally unbiased} at a parameter $\theta^*$ if, roughly, it is unbiased given $\rho_{\theta}$ when $\theta$ is in a small neighborhood around $\theta^*$. 
        The quantum Cramér–Rao bound states that the mean squared error matrix (MSEM) of any locally unbiased algorithm is lower bounded by the inverse of the Quantum Fisher Information (QFI) matrix.
        We give a locally unbiased algorithm whose MSEM is \emph{upper bounded} by twice the inverse of the QFI matrix in the asymptotic limit of large $n$, which is optimal.
    \end{itemize}
\end{abstract}

\newpage
\hypersetup{linktocpage}
\tableofcontents
\thispagestyle{empty}

\newpage

\part{Introduction}
\label{part:intro}

\section{Introduction}
A statistical estimator $\widehat{\bX}$ for a quantity $X$ is said to be \emph{unbiased} if $\E[\widehat{\bX}] = X$.
Unbiased estimators have many desirable properties,
and for many quantities of interest in statistics,
the gold standard estimator is unbiased.
As Voinov and Nikulin put it in \emph{Unbiased Estimators and their Applications}~\cite{VN12,VN96}, ``Unbiasedness is one of the most important properties of statistical estimators''. 
Illustrating the abundance of unbiased estimators in statistics,
they collect roughly a thousand (!) such estimators for different statistical parameters.

This work is about the statistical problem of estimating a mixed quantum state $\rho \in \C^{d \times d}$ given $n$ identical copies of~$\rho$.
This task, known as \emph{quantum tomography},
is of fundamental importance to both theory and practice,
and it has received significant attention by the research community in recent years.
A variety of different tomography algorithms have been studied in the literature, and many of these do actually give unbiased estimators for $\rho$.
Two well-known examples are the ``uniform POVM tomography algorithm'' from~\cite{Wri16,GKKT20} and the ``textbook'' Pauli tomography algorithm, which is covered in~\cite{Wri24a}.
In addition to these, we also know of several estimators which achieve the optimal sample complexity of $n = O(d^2)$ copies for full state tomography~\cite{HHJ+16,OW16}.
However, we do not know of any estimators which are simultaneously unbiased \emph{and} sample-optimal.
The reason is that most existing unbiased estimators for full state tomography measure each copy of $\rho$ independently,
but it is known that any sample-optimal tomography algorithm must use entangled measurements across many copies of $\rho$~\cite{CHL+23}.
On the flip side,
measurements which are entangled across many copies of $\rho$
are often mathematically complex and difficult to understand,
and even computing the bias of known sample-optimal tomography algorithms appears to be quite challenging.

It is important to understand whether sample-optimal estimators can also be unbiased
if we want to understand the information theoretic limits of quantum state learning.
This issue was highlighted in two recent works of Chen, Li, and Liu~\cite{CLL24a,CLL24b}, who studied two problems in quantum tomography for which having an unbiased estimator would be extremely useful:
full state tomography using measurements with limited entanglement and shadow tomography.
However, since we do not know of any sample-optimal estimators which are unbiased, they had to develop sophisticated mathematical techniques for analyzing the known biased, sample-optimal tomography algorithms as if they were unbiased.
In the course of this work, they suggested that perhaps there may not even \emph{exist} any unbiased, sample-optimal tomography algorithms,
saying that ``unbiasedness seems specific to incoherent measurements''.

In this work, we counter this intutition by developing an estimator for full state which is both unbiased and sample-optimal.
Our estimator is based on an estimator for full state tomography introduced by Keyl in~\cite{Key06} and shown to be sample-optimal in~\cite{OW16}.
Keyl's algorithm produces a biased estimator for $\rho$,
but we show that a simple modification to this algorithm
can correct for the bias and give an unbiased estimator.
With this ``debiased Keyl's algorithm'' in hand, we show a number of interesting applications to full state tomography.
For some of these applications, we are able to derive conceptually cleaner proofs of existing results.
For other applications, we are able to use the debiased Keyl's algorithm to give new results which were not known before, such as the first sample-optimal bound for learning mixed states with fidelity error.
Prior to our work, there were three other sample-optimal (or approximately sample-optimal) algorithms to choose from when performing full state tomography~\cite{Key06,HHJ+16},
but we believe that our results suggest that the debiased Keyl's algorithm may be the ``right'' fully entangled tomography algorithm to study in the future, and we expect that it will be useful in resolving several other remaining open problems in the study of full state tomography.
Below, we describe several known unbiased estimators for full state tomography in order to motivate our construction of the debiased Keyl's algorithm.
Following that, we survey the applications of this new algorithm.

\subsection{Debiasing tomography algorithms}

\subsubsection{The single copy case}\label{sec:single-copy}

Suppose we only have a single copy of $\rho$,
and we want to perform full state tomography on it.
Exactly which measurement is the best to perform is perhaps not clear,
but a natural thing to do is to measure $\rho$ in \emph{some} orthonormal basis,
as in the following algorithm.

\begin{enumerate}
    \item Pick an orthonormal basis $\ket{u_1}, \ldots, \ket{u_d} \in \C^d$.
    \item Measure $\rho$ in this basis. Let $\bi \in [d]$ be the outcome.
    \item Output $\widehat{\brho}_0 = \ketbra{u_{\bi}}$.
\end{enumerate}
How well this algorithm performs depends on which basis we decide to measure in.
For simplicity, let us assume that $\rho$ is diagonal in the standard basis, meaning that we can write $\rho = \sum_{i=1}^d \alpha_i \cdot \ketbra{i}$.
If the basis we measure in happens to also be the standard basis, it is easy to see that 
\begin{equation*}
    \E[\widehat{\brho}_0]
    = \alpha_1 \cdot \ketbra{1} + \cdots + \alpha_d \cdot \ketbra{d}
    = \rho,
\end{equation*}
and so $\widehat{\brho}_0$ is an unbiased estimator for $\rho$.
On the other hand, suppose the basis we measure in forms a mutually unbiased basis with the standard basis, meaning that $|\braket{i}{u_j}|^2 = 1/d$ for all $i$ and $j$.
This is the case if, for example, $\{\ket{u_i}\}_{i \in [d]}$ is the Fourier basis.
Then we will observe each $i \in [d]$ with equal probability, and so 
\begin{equation*}
    \E[\widehat{\brho}_0]
    = \tfrac{1}{d} \cdot \ketbra{u_1} + \cdots + \tfrac{1}{d} \cdot \ketbra{u_d}
    = \frac{I}{d},
\end{equation*}
giving the maximally mixed state in expectation,
regardless of what the state $\rho$ is.
We can view the first case as giving us ``all signal'' and the second case as giving us ``all noise''.

It is natural to randomize the choice of basis
in order to avoid accidentally falling into the ``all noise'' case all of the time.
Doing so involves measuring according to a random basis $\ket{\bu_1}, \ldots, \ket{\bu_d} \in \C^d$ sampled from the Haar measure.
Such a basis will still be \emph{almost} mutually unbiased with the standard basis,
in the sense that we expect $|\braket{i}{\bu_j}|^2$ to be close to $1/d$ for most values of $i$ and $j$, but the fact that it is not \emph{exactly} mutually unbiased means that we can still hope for a small amount of signal to leak through.
In this case, our measurement, over the randomness in the choice of basis, is equivalent to performing the \emph{uniform POVM},
the continuous POVM given by $\{d \cdot \ketbra{u} \cdot du\}$.
This means that we can rewrite the algorithm as follows.
\begin{enumerate}
    \item Measure $\rho$ with the uniform POVM $\{d \cdot \ketbra{u} \cdot du\}$. Let $\ket{\bu}$ be the outcome.
    \item Output $\widehat{\brho}_0 = \ketbra{\bu}$.
\end{enumerate}
The expectation of this estimator can be computed using standard techniques; the calculation can be found, for example, in \cite[Page 115]{Wri16}:
\begin{equation*}
    \E[\widehat{\brho}_0] = \Big(\frac{1}{d+1}\Big) \cdot \rho + \Big(\frac{d}{d+1}\Big) \cdot \frac{I}{d}.
\end{equation*}
We can see that the first term is a small ``signal'' term
and the second term is a large ``noise'' term.
Moreover, the weights on these two terms are fixed constants independent of the state $\rho$.
It is therefore straightforward to correct for this bias,
and doing so results in the following estimator:
\begin{equation*}
    \widehat{\brho}
    = (d+1) \cdot \widehat{\brho}_0 - I
    = (d+1) \cdot \ketbra{\bu} - I.
\end{equation*}
Indeed, this satisfies $\E[\widehat{\brho}] = \rho$ and is therefore an unbiased estimator.
This single copy estimator was independently introduced by Krishnamurthy and Wright~\cite[Section 5.1]{Wri16} and Guta et al.~\cite{GKKT20} and forms the basis of an optimal unentangled measurement tomography algorithm; we will discuss this algorithm further in \Cref{sec:limited-entanglement} below.

\subsubsection{The pure state case}\label{sec:pure-state}

Suppose we are given $n$ copies of a rank-one mixed state $\rho$,
i.e.\ $\rho = \ketbra{v}$ for some pure state $\ket{v} \in \C^d$.
In this case, there is a well-known tomography algorithm due to Hayashi~\cite{Hay98} which is given as follows.
\begin{enumerate}
    \item Measure $\rho^{\otimes n}$ with the POVM $\{d[n] \cdot \ketbra{u}^{\otimes n} \cdot du\}$, where $d[n] = \binom{d+n+1}{n}$ is the dimension of the symmetric subspace $\lor^n \C^d$. Let $\ket{\bu}$ be the outcome.
    \item Output $\widehat{\brho}_0 = \ketbra{\bu}$.
\end{enumerate}
The POVM $\{d[n] \cdot \ketbra{u}^{\otimes n} \cdot du\}$ can be viewed as a natural generalization of the uniform POVM from above to the $n$-fold symmetric subspace. 
It is well-known that this is a sample-optimal algorithm for pure state tomography.
In particular, when $n = O(d/\epsilon^2)$, this estimator satisfies $\mathrm{D}_{\mathrm{tr}}(\rho, \widehat{\brho}_0) \leq \epsilon$ with probability 99\%~\cite{Wri24b}.
However, $\widehat{\brho}_0$ is \emph{not} an unbiased estimator of $\rho$. This is because $\rho$ is a pure state, whereas $\E[\widehat{\brho}_0]$ will certainly be a mixed state with rank strictly larger than 1 (unless $\widehat{\brho}_0 = \rho$ with probability 1, which is impossible).
The expectation of this estimator can be computed using standard techniques;
\cite[Lemma 13]{GPS24} give the following expression:
\begin{equation*}
    \E[\widehat{\brho}_0] = \Big(\frac{n}{d+n}\Big) \cdot \rho + \Big(\frac{d}{d+n}\Big) \cdot \frac{I}{d}.
\end{equation*}
Again, the first term corresponds to the ``signal'' and the second term corresponds to the ``noise''.
Note that the coefficient on the ``signal'' term now improves as $n$ grows larger, and approaches $1$ as $n$ approaches $\infty$.
Since the coefficients are fixed constants independent of the state $\rho$,
it is  straightforward to correct for this bias,
resulting in the following unbiased estimator:
\begin{equation*}
    \widehat{\brho}
    = \Big(\frac{d+n}{n}\Big) \cdot \widehat{\brho}_0 - \frac{1}{n} \cdot I
    = \Big(\frac{d+n}{n}\Big) \cdot \ketbra{\bu} - \frac{1}{n} \cdot I.
\end{equation*}
This estimator was introduced by Grier, Pashayan, and Schaeffer~\cite{GPS24} with applications to the shadow tomography problem; we will discuss this algorithm further in \Cref{sec:shadow-tomography}.

\subsubsection{Keyl's algorithm}

Our goal is to perform a similar debiasing for a fully entangled tomography algorithm.
There are three reasonable candidates to select from: Keyl's algorithm~\cite{Key06} and the two pretty good measurement-inspired algorithms from Haah et al.~\cite{HHJ+16}.
Of these, two are known to achieve optimal sample complexities of $n = O(d^2/\epsilon^2)$ for full state tomography: Keyl's algorithm (due to~\cite{OW16}) and the first of the two Haah et al.\ algorithms (due to~\cite{OW15b}).
Of these two, we have decided to study Keyl's algorithm, for two reasons:
first, it is the more mathematically elegant of the two, and we believe this will make its bias easier to analyze.
Second, as Aram Harrow once suggested to the third author, it seems plausible that it uses the \emph{optimal} entangled measurement for full state tomography~\cite{Har16},
though it is not clear how to exactly formulate what it means for it to be optimal.

Keyl's algorithm is a sophisticated representation theoretic algorithm which exploits the symmetries in the state $\rho^{\otimes n}$.
In phase one of the algorithm, it performs a measurement on $\rho^{\otimes n}$ known as \emph{weak Schur sampling} in order to estimate the eigenvalues of $\rho$.
Weak Schur sampling returns as a measurement outcome a \emph{Young diagram} $\blambda = (\blambda_1, \ldots, \blambda_d) \vdash n$, which is a vector of nonnegative integers satisfying $\blambda_1 \geq \cdots \geq \blambda_d$ and $\blambda_1 + \cdots + \blambda_d = n$.
Young diagrams are represented pictorially by arranging boxes into rows, as in~\Cref{fig:intro-young-diagram}.
From this picture, the Young diagram looks like a (sorted) histogram of $n$ samples drawn from some discrete distribution over $d$ items,
and coincidentally, $\rho$'s eigenvalues $\alpha = (\alpha_1, \ldots, \alpha_d)$ form a probability distribution over the set of integers $\{1, \ldots, d\}$. 
It is therefore natural to wonder if the \emph{normalized Young diagram} $\blambda/n = (\blambda_1/n, \ldots, \blambda_d/n)$ provides an accurate approximation of the spectrum $\alpha$, and this question was answered in the affirmative by independent works of Alicki, Rudnicki, and Sadowski~\cite{ARS88} and Keyl and Werner~\cite{KW01}; these works showed that as $n \rightarrow \infty$, $\blambda/n \rightarrow \alpha$ with probability 1.
This estimator is known both as the \emph{empirical Young diagram algorithm} and also as the \emph{Keyl-Werner algorithm}, 
and there exists a number of follow-up works showing stronger and stronger bounds on the number of copies it requires to estimate $\alpha$~\cite{HM02,CM06,OW16,OW17a}.
The final two of these works show that $\blambda/n$ is close to $\alpha$ once $n = O(d^2)$ copies are used in total variation distance, Hellinger distance, KL divergence, and chi-squared divergence;
furthermore, a lower bound from~\cite{OW15} shows that the empirical Young diagram algorithm \emph{requires} $n = \Omega(d^2)$ copies to produce a good estimate in any of these distances.
To date, it remains the best known algorithm for estimating the spectrum of a mixed state. (That said, we don't yet have matching lower bounds, suggesting the tantalizing possibility that there might exist an even better spectrum estimation algorithm.)

\begin{figure}
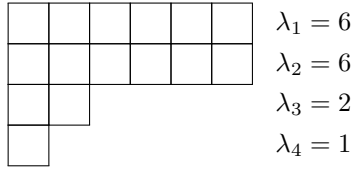

\begin{equation*}
\ytableausetup{boxframe=normal}
\begin{ytableau}
~ & ~ & ~ & ~ & ~ & ~ & \none[] & \none[\lambda_1 = 6]\\
~ & ~ & ~ & ~ & ~ & ~ & \none[] & \none[\lambda_2 = 6] \\
~ & ~ & \none & \none & \none & \none & \none[] & \none[\lambda_3 = 2] \\
~ & \none & \none & \none & \none & \none & \none[] & \none[\lambda_4 = 1] \\
\end{ytableau}
\end{equation*}
\caption{
    The Young diagram corresponding to the partition $\lambda = (6,6,2,1)$. In this case, $n = 15$ and $d = 4$.
    This resembles the histogram, sorted from highest to lowest and then turned on its side, of $n = 15$ samples $\bx_1, \ldots, \bx_{15}$ drawn from a discrete distribution $p = (p_1, p_2, p_3, p_4)$ whose probability values, when sorted from highest to lowest, are approximately $6/15$, $6/15$, $2/15$, and $1/15$.}
\label{fig:intro-young-diagram}
\end{figure}

Having received the measurement outcome $\blambda$, the state $\rho^{\otimes n}$ partially collapses.
In phase two of Keyl's algorithm, it performs a further measurement on this collapsed state in order to learn the eigenvectors of $\rho$.
This measurement is based on a strategy of ``rotated highest weight vectors''; we will describe it in detail in \Cref{sec:Keyl's_algorithm} below once we have established the relevant representation theoretic background.
It produces a unitary matrix $\bU \in U(d)$ as a measurement outcome which is meant to be a good proxy for the change of basis which rotates the standard basis into $\rho$'s eigenbasis.
Setting $\widehat{\balpha}_0 = \blambda/n$ for our approximation of the spectrum,
the estimate that Keyl's algorithm outputs is $\widehat{\brho}_0 = \bU \cdot \widehat{\balpha}_0 \cdot \bU^\dagger$.

Keyl's algorithm produces a high quality estimate of the state $\rho$.
It is known, for example, that $\mathrm{D}_{\mathrm{tr}}(\rho, \widehat{\brho}_0) \leq \epsilon$ with high probability once $n = O(d^2/\epsilon^2)$, and furthermore that this is optimal due to a matching lower bound of Haah et al.~\cite{HHJ+16}.
However, it is certainly not unbiased.
For example, when $n = 1$, it turns out that Keyl's algorithm is equivalent to the biased single-copy tomography algorithm from \Cref{sec:single-copy}.
Furthermore, when $n$ is allowed to be any value but $\rho$ is restricted to be a pure state, it turns out that Keyl's algorithm is equivalent to the biased pure state tomography algorithm from \Cref{sec:pure-state} due to Hayashi.
However, unlike in these special cases,
the bias of Keyl's algorithm is in general quite difficult to calculate,
and so there is no obvious method for debiasing it.
Compounding this issue,
the bias of Keyl's algorithm actually depends on the underlying state $\rho$,
and so there cannot be just a single correction factor which applies in every case.

\subsubsection{The debiased Keyl's algorithm}\label{sec:debiased_Keyl's_intro}

To debias Keyl's algorithm,
we draw inspiration from the debiased algorithms in \Cref{sec:single-copy,sec:pure-state}.
In both of these cases, to convert the biased estimator $\widehat{\brho}_0$ into an unbiased estimator $\widehat{\brho}$, we modify the eigenvalues of $\widehat{\brho}_0$ but leave the eigenvectors intact.
This suggests that to debias Keyl's algorithm, we will want an estimator of the form $\brho = \bU \cdot \widehat{\balpha} \cdot \bU^{\dagger}$, where $\bU$ is our original change of basis but $\widehat{\balpha}$ is a new estimate of the spectrum which results from modifying the original estimate $\widehat{\balpha}_0$.
To define this modification, we first introduce a transformation on Young diagrams.

\begin{definition}[Young diagram box donation] 
    \label{def:yd-box-donation}
    Let $\lambda = (\lambda_1, \ldots, \lambda_d)$ be a Young diagram.
    Then $\mathrm{donate}(\lambda) \in \Z^d$ is the vector defined as
    \begin{equation*}
        \mathrm{donate}(\lambda)_i = \lambda_i - \sum_{j=1}^{i-1} 1[\lambda_j > \lambda_i] + \sum_{j=i+1}^d 1[\lambda_i > \lambda_j].
    \end{equation*}
    Intuitively, each row of the Young diagram $\lambda$ ``donates'' a box to each row which starts off longer than it
    and receives a box from each row which starts off shorter than it.
    We include a diagram of $\mathrm{donate}(\lambda)$ in \Cref{fig:donation-intro}.
\end{definition}

\begin{figure}
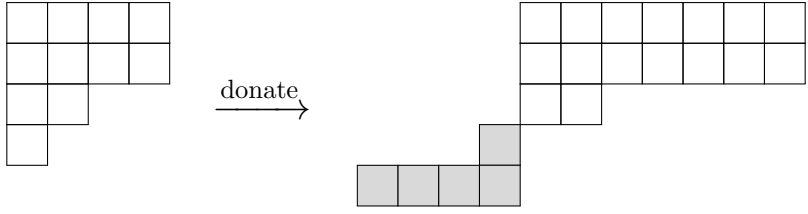

    \centering
    \begin{ytableau}
          ~ & ~ & ~ & ~ \\
          ~ & ~ & ~ & ~ \\
          ~ & ~ \\
          ~ \\
          \none
    \end{ytableau}
    ~
    \Large
    \begin{tabular}{c}
        \\ \\ \\
        $\xrightarrow[]{\mathrm{donate}}$  \\
    \end{tabular}
    \normalsize
    ~
    \begin{ytableau}
        \none & \none & \none & \none & ~ & ~ & ~ & ~ & ~ & ~ & ~ \\
        \none & \none & \none & \none & ~ & ~ & ~ & ~ & ~ & ~ & ~ \\
        \none & \none & \none & \none & ~ & ~  \\
        \none & \none & \none & *(gray!30)~ \\
        *(gray!30)~ & *(gray!30)~ & *(gray!30)~ & *(gray!30)~ 
    \end{ytableau}
    \caption{Let $d = 5$. On the left is the Young diagram $\lambda = (4,4,2,1,0)$. On the right is $\mathrm{donate}(\lambda) = (7,7,2, -1,-4)$. The gray-colored boxes correspond to rows with a negative length. Note that $\donate(\lambda)$ is not necessarily a Young diagram, since it may have negative entries.}
    \label{fig:donation-intro}
\end{figure}

To our knowledge, this is a new transformation on Young diagrams, and we could find no similar transformation in the literature.
Note that this transformation only shifts boxes around and does not introduce any new boxes, and so if $\lambda = (\lambda_1, \ldots, \lambda_d)\vdash n$ and $\lambda' = \mathrm{donate}(\lambda)$, then
$\lambda'_1 + \cdots + \lambda'_d = \lambda_1 + \cdots + \lambda_d = n$.
With this definition in place, we can define our debiased Keyl's algorithm.

\begin{definition}[Debiased Keyl's algorithm]
\label{def:debiased-keyl}
Suppose we are given $n$ copies of a mixed state $\rho \in \C^{d \times d}$.
The \emph{debiased Keyl's algorithm} works as follows.
\begin{enumerate}
\item Measure $\rho^{\otimes n}$ as in Keyl's algorithm, producing a Young diagram $\blambda = (\blambda_1, \ldots, \blambda_d) \vdash n$ and a unitary matrix $\bU \in U(d)$.
\item Set $\blambda^{\uparrow} = \mathrm{donate}(\blambda)$ and $\widehat{\balpha} = \blambda^{\uparrow}/n$. Output $\widehat{\brho} = \bU \cdot \widehat{\balpha} \cdot \bU^{\dagger}$.
\end{enumerate}
\end{definition}

To gain intuition for this algorithm,
let us define $\ket{\bu_i} \coloneqq \bU \cdot \ket{i}$ for each $1 \leq i \leq d$.
Then
\begin{equation*}
    \widehat{\brho}_0 = \sum_{i=1}^d (\blambda_i /n) \cdot \ketbra{\bu_i}
    \qquad
    \text{and}
    \qquad
    \widehat{\brho} = \sum_{i=1}^d (\blambda^{\uparrow}_i /n) \cdot \ketbra{\bu_i}
\end{equation*}
For example, when $n = 1$ and $\lambda = (1, 0, \ldots, 0)$, we have
\begin{align*}
    \widehat{\brho}_0 &= \ketbra{\bu_1}\\
    \widehat{\brho}_{\phantom{0}} &= d \cdot \ketbra{\bu_1} - (\ketbra{\bu_2} + \cdots + \ketbra{\bu_d}) = (d + 1) \cdot \ketbra{\bu_1} - I,
\end{align*}
as in \Cref{sec:single-copy}.
Next, when $\rho$ is a pure state,
then $\lambda = (n, 0, \ldots, 0)$ always,
and 
\begin{align*}
    \widehat{\brho}_0 &= \Big(\frac{n}{n}\Big) \cdot \ketbra{\bu_1} = \ketbra{\bu_1}\\
    \widehat{\brho}_{\phantom{0}} &= \Big(\frac{n+d-1}{n}\Big) \cdot \ketbra{\bu_1} - \frac{1}{n} \cdot (\ketbra{\bu_2} + \cdots + \ketbra{\bu_d}) = \Big(\frac{n+d}{n}\Big) \cdot \ketbra{\bu_1} - \frac{1}{n} \cdot I,
\end{align*}
as in \Cref{sec:pure-state}.
More generally, Keyl's algorithm is designed to ensure that $\ket{\bu_1}$, the eigenvector corresponding to the highest eigenvalue of $\widehat{\brho}_0$, is the most accurate of all the eigenvectors, whereas $\ket{\bu_d}$ is the least accurate.
This means that $\ketbra{\bu_1}$ will be the term with the highest ``signal'' and $\ketbra{\bu_d}$ will be the term with the highest ``noise''.
Thus, the noisier terms will donate boxes to each of the terms with more signal, and this ``rich gets richer'' approach will boost the signal while depressing the noise.
The next theorem is our first main result,
which shows that the Young diagram box donation transformation results in an estimator in which all of the noise is precisely cancelled out.

\begin{theorem}[Debiased Keyl's algorithm is an unbiased estimator]\label{thm:unbiased}
    Let $\widehat{\brho}$ be the output of the debiased Keyl's algorithm when run on $\rho^{\otimes n}$.
    Then $\E[\widehat{\brho}] = \rho$.
\end{theorem}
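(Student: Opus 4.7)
The plan is to use the equivariance of Keyl's algorithm to reduce to the case where $\rho$ is diagonal, unfold $\E[\widehat{\brho}]_{jj}$ as a Haar integral against highest-weight matrix coefficients, and then evaluate the resulting sum over Young diagrams via a Schur-polynomial identity in which the donation shifts play the role of telescoping weights. For the equivariance, recall that Keyl's POVM element has the form $M(\lambda,U) = \dim(W_\lambda)\cdot \pi_\lambda(U)\ket{v_\lambda}\bra{v_\lambda}\pi_\lambda(U)^\dagger \otimes I_{V_\lambda}$, where $\pi_\lambda$ is the irreducible $U(d)$-representation on $W_\lambda$ with highest-weight vector $\ket{v_\lambda}$. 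Since $V^{\otimes n}$ acts as $\pi_\lambda(V)\otimes I_{V_\lambda}$ on the $\lambda$-isotype by Schur--Weyl duality, one has $(V^\dagger)^{\otimes n} M(\lambda,U) V^{\otimes n} = M(\lambda, V^\dagger U)$; so running Keyl on $V\rho V^\dagger$ produces the same $\blambda$ and the unitary $V\bU$, hence $\widehat{\brho}\mapsto V\widehat{\brho}V^\dagger$ in distribution and $\E[\widehat{\brho}_{V\rho V^\dagger}] = V\,\E[\widehat{\brho}_\rho]\,V^\dagger$. Taking $V$ in the stabilizer of $\rho$'s eigenbasis forces $\E[\widehat{\brho}_\rho]$ to be diagonal in that basis, so it suffices to show, for $\rho = \sum_i \alpha_i \ketbra{i}$, that $\E[\widehat{\brho}]_{jj} = \alpha_j$ for every $j$.

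For this I would unfold the expectation as
\begin{equation*}
    \E[\widehat{\brho}]_{jj} = \sum_{\lambda \vdash n} \dim(W_\lambda)\,\dim(V_\lambda)\sum_{k=1}^d \frac{\donate(\lambda)_k}{n} \int_{U(d)} \psi_\lambda(U^\dagger \rho U)\,|U_{jk}|^2\, \mathrm{d}U,
\end{equation*}
where $\psi_\lambda(A) := \bra{v_\lambda}\pi_\lambda(A)\ket{v_\lambda}$ is the highest-weight matrix coefficient. For diagonal $\rho$, one has the explicit minor-product formula $\psi_\lambda(A) = \prod_i \det((U^\dagger\rho U)[:i,:i])^{\lambda_i - \lambda_{i+1}}$, which by Cauchy--Binet expands into a weighted sum of $\alpha^S$ over subsets $S\subseteq [d]$ with coefficients given by $|\det|^2$-minors of $U$. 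Together with the extra $|U_{jk}|^2 = U_{jk}\overline{U_{jk}}$ factor, the full integrand is a matrix coefficient of the representation $\pi_\lambda \otimes \mathrm{std} \otimes \mathrm{std}^*$ of $U(d)$ against fixed vectors, so Schur orthogonality combined with the Pieri decomposition $\pi_\lambda \otimes \mathrm{std} \cong \bigoplus_{\mu = \lambda + \Box}\pi_\mu$ evaluates the Haar integral as an explicit linear combination of Schur polynomials $s_\mu(\alpha)$ indexed by single-box additions $\mu = \lambda + \Box$.

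The last step is the combinatorial identity that makes the sum over $\lambda$ collapse to $\alpha_j$. I expect the donation shifts $\donate(\lambda)_k - \lambda_k = \#\{\ell>k : \lambda_k > \lambda_\ell\} - \#\{\ell<k : \lambda_\ell > \lambda_k\}$ to be exactly the integer weights that, when combined with the Pieri-rule expansions of $\alpha_j \cdot s_\nu(\alpha)$ and the dimension factor $\dim(V_\lambda)$, make adjacent-diagram contributions cancel pairwise (each cancellation corresponding to a box being moved between rows) and leave only the single term $\alpha_j$. The main obstacle I foresee is handling $\lambda$ with repeated parts, since the indicator-function structure in $\donate$ treats ties asymmetrically and these ties correspond to walls of the Weyl chamber where the generic Haar-integration formulas become singular. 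A clean workaround is to verify the identity first on the generic stratum (strictly decreasing $\alpha$ and strictly decreasing $\lambda$) using the Weyl integration formula with its Vandermonde denominator, then extend to all $\rho$ by polynomial identity, since $\E[\widehat{\brho}_\rho] - \rho$ is a polynomial of degree at most $n$ in the entries of $\rho$.
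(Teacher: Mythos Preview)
Your reduction by equivariance and the setup of the Haar integral are fine, and the Pieri/Clebsch--Gordan step you describe is in fact the same decomposition the paper uses (it computes the Clebsch--Gordan transform of $\ket{T^\lambda}\otimes\ket{k}$ explicitly). But the proposal stops exactly where the real work begins. The sentence ``I expect the donation shifts \ldots\ to be exactly the integer weights that \ldots\ make adjacent-diagram contributions cancel pairwise'' is the entire content of the theorem, and you have not supplied an identity, let alone a proof. Worse, the mechanism you conjecture is wrong: there is no cancellation between different $\lambda$'s. The paper shows that the identity holds \emph{separately for each} $\lambda$. Concretely, it proves that for every $\lambda\vdash n$ and every row $i$ where a box can be added,
\[
\sum_{k=1}^d \donate(\lambda)_k\,|c^\lambda_{k\to i}|^2 \;=\; (\lambda_i+1-i)\cdot \frac{\dim(V^d_{\lambda+e_i})}{\dim(V^d_\lambda)},
\]
where $c^\lambda_{k\to i}$ is the one-step Clebsch--Gordan coefficient $\braket{T^\lambda_{k\to i}}{T^\lambda,k}$. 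Once this is in hand, the operator $M_{\mathrm{avg}}^{(1)}=\sum_{\lambda,U} M_{\lambda,U}\otimes\widehat{\rho}_{\lambda,U}$ is seen to be diagonal in the $(n{+}1)$-copy Schur basis with eigenvalue $(\lambda_i+1-i)/n$ on the block indexed by $(\lambda+e_i,S_i)$; this is precisely the Jucys--Murphy element $\tfrac{1}{n}X_{n+1}$, and $\E[\widehat{\brho}]=\tr_{[n]}(\tfrac{1}{n}X_{n+1}\cdot\rho^{\otimes n}\otimes I)=\rho$ follows immediately.

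The paper proves the displayed identity by first checking it for the simpler ``staircase'' transformation $\lambda^{\uparrow\uparrow}_k=\lambda_k+d-2k+1$ via induction on $d$ (using that $|c^\lambda_{k\to i}|^2 = D^\lambda_{k\to i}-D^\lambda_{k-1\to i}$ with $D^\lambda_{k\to i}=\dim(V^k_{\lambda_{\leq k}+e_i})/\dim(V^k_{\lambda_{\leq k}})$), and then observing that $c^\lambda_{k\to i}$ is constant on blocks of equal $\lambda_k$, so any ``legal'' transformation with the same block sums---in particular $\donate(\lambda)$---gives the same left-hand side. Your worry about repeated parts is thus resolved not by a generic-position limit but by this block-constancy. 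If you want to push your Haar-integration framing through, you will need to rediscover exactly this per-$\lambda$ identity; the integral you wrote down does not telescope across different $\lambda$'s.
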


\ignore{Having shown that $\widehat{\brho}$ is equal to $\rho$ in expectation,
we next want to know how much it varies about its mean.
To do so, we can compute the variance of $\widehat{\brho}$, where we use the following natural analogue of variance defined for matrix-valued random variables.
\begin{equation*}
    \Var[\widehat{\brho}]
    \coloneqq \E[(\widehat{\brho} - \E[\widehat{\brho}])^{\otimes 2}]
    = \E[\widehat{\brho}^{\otimes 2}] - \E[\widehat{\brho}]^{\otimes 2}
    = \E[\widehat{\brho} \otimes \widehat{\brho}] - \rho \otimes \rho.
\end{equation*}
Computing this entails computing an expression for the second moment $\E[\widehat{\brho} \otimes \widehat{\brho}]$ of $\widehat{\brho}$,
and we do so in the following theorem,
which is our second main result.}
Having computed the first moment of $\widehat{\brho}$, we next want to compute an expression for its second moment $\E[\widehat{\brho} \otimes \widehat{\brho}]$.
The second moment is useful because it helps us understand the extent to which $\widehat{\brho}$ varies about its expectation.
We do so in the following theorem, which is our second main result.

\begin{theorem}[Second moment of the debiased Keyl's algorithm]\label{thm:var}
    Let $\widehat{\brho}$ be the output of the debiased Keyl's algorithm when run on $\rho^{\otimes n}$.
    Then
    \begin{equation}\label{eq:second-moment}
    \E[\widehat{\brho}\otimes \widehat{\brho}] = \frac{n-1}{n} \cdot \rho \otimes \rho + \frac{1}{n} \left( \rho \otimes I + I \otimes \rho \right) \cdot \swap + \frac{\E[\ell(\blambda)]}{n^2} \cdot \swap
    - \mathrm{Lower}_{\rho}.
    \end{equation}
    Here, $\mathrm{Lower}_{\rho}$ refers to the ``lower order terms'' and has the following characterization: it can be written as a positive linear combination of terms of the form $(P \otimes P) \cdot \swap$, where each $P \in \C^{d \times d}$ is a Hermitian, positive semidefinite matrix.
\end{theorem}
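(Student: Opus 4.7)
Start from the diagonalization produced by Keyl's algorithm, $\widehat{\brho} = \bU \cdot \diag(\blambda^\uparrow/n) \cdot \bU^\dagger$ with $\ket{\bu_i} = \bU \ket{i}$, and write
\begin{equation*}
    \widehat{\brho} \otimes \widehat{\brho} = \sum_{i,j=1}^d \frac{\blambda^\uparrow_i \, \blambda^\uparrow_j}{n^2} \cdot \bigl(\ketbra{\bu_i} \otimes \ketbra{\bu_j}\bigr).
\end{equation*}
Conditioning on $\blambda$, the prefactor is deterministic, so the problem reduces to evaluating the conditional second moments $\E_{\bU \mid \blambda}[\ketbra{\bu_i} \otimes \ketbra{\bu_j}]$ for every pair $(i,j)$, and then multiplying by $\blambda^\uparrow_i \blambda^\uparrow_j/n^2$, summing, and taking expectation over $\blambda$.

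The main computation is to obtain these conditional second moments using the explicit representation-theoretic description of Keyl's second phase (the rotated highest weight vector construction) together with the Schur--Weyl decomposition of $(\C^d)^{\otimes n}$ and the action of Keyl's POVM on $\rho^{\otimes n}$. This is a two-tensor-factor generalization of the calculation that underlies \Cref{thm:unbiased}: there, only $\E_{\bU \mid \blambda}[\ketbra{\bu_i}]$ was needed, and twirling against the $U(d)$-action on the irreducible $V_\blambda$ gives a clean answer. Here, the tensor-squared twirl decomposes $V_\blambda \otimes V_\blambda$ into isotypic components, producing terms which, upon contracting against $\rho^{\otimes n}$ on the surviving $n-2$ copies, yield pieces of two kinds: ``diagonal'' pieces that look like products $\rho^a \otimes \rho^b$, and ``swap-carrying'' pieces that arise when two indices are identified and contribute factors multiplied by $\swap$.

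After substituting into the sum over $(i,j)$ and taking the remaining expectation over $\blambda$, I would identify the three leading terms $\tfrac{n-1}{n} \rho \otimes \rho$, $\tfrac{1}{n}(\rho \otimes I + I \otimes \rho)\cdot \swap$, and $\tfrac{\E[\ell(\blambda)]}{n^2}\cdot \swap$ by bookkeeping the donation offsets $\blambda^\uparrow_i - \blambda_i$: the identity $\sum_i \blambda^\uparrow_i = n$ (and its second-moment analogue) should convert generic polynomials in $\blambda^\uparrow$ into clean factors of $n$ in the numerator, mirroring the way the first-moment proof turned donation-adjusted sums into $\rho$ itself. The genuinely ``lower order'' remainder should then reassemble into a negative sum $-\sum_k (P_k \otimes P_k)\cdot\swap$ with each $P_k \succeq 0$; concretely, I expect each $P_k$ to be built from $\rho$'s partial traces against specific Weyl projectors of $V_\blambda$, which are manifestly PSD.

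The main obstacle is the middle step. Computing the tensor-squared conditional expectation is substantially more involved than the first-moment version used for \Cref{thm:unbiased}, and the cancellations that collapse a generic polynomial in $\blambda^\uparrow$ into the three-term leading part of \eqref{eq:second-moment} are delicate. More importantly, organizing the leftover contributions so that each appears with a \emph{manifestly} positive coefficient, rather than as a signed combination that only happens to be non-positive overall, is the non-routine content of the argument: this positivity structure is what makes the formula useful in applications, since it lets one discard $\mathrm{Lower}_\rho$ whenever an upper bound on $\E[\widehat{\brho} \otimes \widehat{\brho}]$ suffices.
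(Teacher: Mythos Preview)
Your plan has the wrong decomposition at its core. You propose to compute $\E_{\bU\mid\blambda}[\ketbra{\bu_i}\otimes\ketbra{\bu_j}]$ by decomposing $V_\blambda\otimes V_\blambda$ into isotypics, but that tensor product never enters the calculation. The paper instead attaches two \emph{extra} qudits to $\rho^{\otimes n}$, forming the operator $M^{(2)}_{\mathrm{avg}}=\sum_{\lambda,U} M_{\lambda,U}\otimes\widehat\rho_{\lambda,U}\otimes\widehat\rho_{\lambda,U}$ on $(\C^d)^{\otimes(n+2)}$, and passes to the $(n{+}2)$-copy Schur basis by applying the Clebsch--Gordan transform $V_\lambda\otimes\C^d\to\bigoplus_{\lambda\nearrow\mu}V_\mu$ \emph{twice}. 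The resulting ``two-step Clebsch--Gordan coefficients'' $a^\lambda_{k\ell\to ij},b^\lambda_{k\ell\to ij}$ describe how the highest weight vector $\ket{T^\lambda}\otimes\ket{k}\otimes\ket{\ell}$ decomposes into irreps $V_{\lambda+e_i+e_j}$, and everything is block-diagonal in $2\times 2$ blocks indexed by standard Young tableaux. This is a very different object from the Littlewood--Richardson decomposition of $V_\blambda\otimes V_\blambda$.

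The mechanism producing the three leading terms and the positivity of $\mathrm{Lower}_\rho$ is also not what you sketch. The paper proves the operator identity
\[
M^{(2)}_{\mathrm{avg}}+M^{(2)}_{\mathrm{corr}}\cdot\swap=\tfrac{1}{n^2}X_{n+1}X_{n+2},
\]
where $M^{(2)}_{\mathrm{corr}}$ is built from the function $\lambda^\uparrow_{\max(k,\ell)}$ (not $\lambda^\uparrow_k\lambda^\uparrow_\ell$). Tracing $X_{n+1}X_{n+2}$ against $\rho^{\otimes n}$ yields the $\tfrac{n-1}{n}\rho\otimes\rho$ and $\tfrac{1}{n}(\rho\otimes I+I\otimes\rho)\cdot\swap$ terms. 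The correction is then handled by the combinatorial identity
\[
\sum_{k,\ell}\lambda^\uparrow_{\max(k,\ell)}\ketbra{k}\otimes\ketbra{\ell}=\Big(\sum_j m^\lambda_j\,\Pi_{\le j}\otimes\Pi_{\le j}\Big)-\ell(\lambda)\cdot I\otimes I,
\]
with $m^\lambda_j\ge 0$ and $\Pi_{\le j}=\sum_{i\le j}\ketbra{i}$. The $\ell(\lambda)$ piece produces the $\tfrac{\E[\ell(\blambda)]}{n^2}\cdot\swap$ term, and each $\Pi_{\le j}\otimes\Pi_{\le j}$ piece, after conjugating by $U^{\otimes 2}$ and integrating, contributes a positive combination of $(U\Pi_{\le j}U^\dagger)^{\otimes 2}\cdot\swap$. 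So each $P$ is a rotated coordinate projector, not a partial trace against a Weyl projector of $V_\blambda$ as you anticipate. Your ``bookkeeping the donation offsets'' is too vague to recover either the Jucys--Murphy identity (which requires the Diagonal Expression Lemma on two-step CG sums) or the $\max(k,\ell)$ correction structure that makes the positivity manifest.
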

\noindent
To understand this expression, the first term, which can be viewed as the ``main term'', is essentially equal to $\rho \otimes \rho$, which is what we would expect to see if $\widehat{\brho}$ were always equal to its expectation.
The second and third terms are the ``error terms'' and have the largest contribution to the variance of $\widehat{\brho}$.
As for the final term, it turns out that we can expand the second moment of $\widehat{\brho}$ as a polynomial in the variable $d^{-1}$, yielding an expression of the form
\begin{equation*}
    \E[\widehat{\brho}\otimes \widehat{\brho}]
    = A_{\rho} + B_{\rho} \cdot \tfrac{1}{d} +C_{\rho} \cdot \tfrac{1}{d^2} + \cdots,
\end{equation*}
where $A_{\rho}, B_{\rho}, C_{\rho}$, and so forth are matrix-valued coefficients which depend on the state $\rho$ and the number of copies $n$ but not on the dimension $d$.
As we will explain in \Cref{sec:taylor} below, the highest order term $A_{\rho}$ is exactly the first three terms in \Cref{eq:second-moment},
and so the remaining low-order terms $B/d + C/d^2 + \cdots$ are given by $-\mathrm{Lower}_{\rho}$.
In the case when $d \gg n$ these terms are small enough that they can effectively be discarded.
(Indeed, as we discuss in \Cref{sec:taylor}, there are situations where it is possible to embed $\rho$ into a larger space of dimension $D \gg d$ in order to artificially inflate its dimension and make these terms arbitrarily small.) 

However, it turns out that for our applications,
we can disregard these lower order terms for a simpler reason, which is that their contribution is  negative in the cases we care about.
In one of our applications, for example, we have to upper bound expressions of the form
\begin{equation*}
\E|\widehat{\brho}_{i,j}|^2
= \E[\widehat{\brho}_{i, j} \cdot \widehat{\brho}_{j, i}]
= \E[\bra{i} \widehat{\brho}\ket{j} \cdot \bra{j} \widehat{\brho} \ket{i}]
= \bra{ij} \cdot \E[\widehat{\brho} \otimes \widehat{\brho}] \cdot \ket{ji}.
\end{equation*}
The contribution to this expression from the lower order terms is $-\bra{ij} \cdot \mathrm{Lower}_{\rho} \cdot \ket{ji}$.
From our characterization of $\mathrm{Lower}_{\rho}$, we can expand $\bra{ij} \cdot \mathrm{Lower}_{\rho} \cdot \ket{ji}$ as a positive linear combination of terms of the form
\begin{equation*}
    \bra{ij} \cdot (P \otimes P) \cdot \swap \cdot \ket{ji}
    = \bra{ij} \cdot (P \otimes P) \cdot \ket{ij}
    = P_{ii} \cdot P_{jj} \geq 0,
\end{equation*}
where the last step uses the fact that $P$ is positive semidefinite. Hence, $-\bra{ij} \cdot \mathrm{Lower}_{\rho} \cdot \ket{ji}$ is nonpositive, and we are safe to discard it.
In another of our applications, we have to upper bound expressions of the form
\begin{equation*}
\E \tr(O \cdot \widehat{\brho})^2
= \E \tr(O \otimes O \cdot \widehat{\brho} \otimes \widehat{\brho})
= \tr(O \otimes O \cdot \E[\widehat{\brho} \otimes \widehat{\brho}]),
\end{equation*}
where $O \in \C^{d \times d}$ is an observable.
The contribution to this expression from the lower order terms is $- \tr(O \otimes O \cdot \mathrm{Lower}_{\rho})$;
we can expand $\tr(O \otimes O \cdot \mathrm{Lower}_{\rho})$ as a positive linear combination of the form
\begin{align*}
    \tr(O \otimes O \cdot (P \otimes P) \cdot \swap)
    &= \tr((OP) \otimes (OP) \cdot \swap)\\
    &= \tr((OP)^2)
    = \tr(OP OP)
    = \tr((P^{1/2} O P^{1/2})^2)
    \geq 0,
\end{align*}
where we used the fact that $P$ is positive semidefinite and therefore has a square root.
Hence, we again have that $-\tr(O \otimes O \cdot \mathrm{Lower}_{\rho})$ is nonpositive, and we are safe to discard it.
The bottom line is that when we apply \Cref{thm:var},
we only have to worry about the first three terms,
not the lower order terms, and these are easy to compute and bound.

In fact, we show that there is a whole family of unbiased estimators which one can define based on the measurements in Keyl's algorithm, of which our debiased Keyl's algorithm is just one among many.
Among these, we show in \Cref{sec:unbiased-family} that our debiased Keyl's algorithm is the \emph{minimum variance unbiased estimator}. 
Note that this is only among those estimators which use Keyl's measurement and does not preclude an unbiased estimator with even smaller variance which is based on a different measurement.

\paragraph{Discovering the estimator.}
Before moving on to applications of our debiased Keyl's algorithm, let us first describe how we discovered the debiased Keyl's algorithm and proved \Cref{thm:unbiased,thm:var}.
Recall that when $\rho$ is a pure state,
the observed $\blambda$ is always equal to $(n, 0, \ldots, 0)$, and Keyl's algorithm is equivalent to the biased pure state tomography algorithm from \Cref{sec:pure-state}.
As mentioned above, the correction we use to remove the bias in this case is the same as in our pure state tomography algorithm, and it is not too hard to see that this is essentially the \emph{only} way to correct the bias in this case. But this means we are forced to correct the bias in this way whenever we receive the Young diagram $\blambda = (n, 0, \ldots, 0)$, because there is a chance that $\rho$ is a pure state, and we have to ensure that our estimator is unbiased in this case. 
When $n = 2$ and $d = 2$, say,  there are only two possible Young diagrams $(2, 0)$ and $(1, 1)$, and the fact that we have already determined what to do in the first case severely limits the possible corrections that we can apply in the second case, which makes finding the right correction much easier. Applying similar logic to other small values of $n$ and $d$, we solved enough special cases that we were eventually able to spot a pattern that generalized to all values of our parameters.
Perhaps the biggest difficulty in this process was deciding on the best way to compute the bias of Keyl's algorithm; after trying various approaches, we settled on an approach which uses Clebsch-Gordan coefficients. 
Getting into what exactly Clebsch-Gordan coefficients are is perhaps a bit too technical for this introduction, and we will defer introducing them until \Cref{sec:CG-coefficients}.
Suffice to say, they appear to be especially well suited for computing expectations which arise from Keyl's measurements,
although they can be more than a little cumbersome to work with.

Proving \Cref{thm:var} was significantly more difficult than discovering the debiased Keyl's algorithm and is the most technically involved part of this paper.
The reason it was so difficult is that we also use Clebsch-Gordan coefficients to compute the variance of $\widehat{\brho}$, and going from our large expressions involving the Clebsch-Gordan coefficients to the nice expression in \Cref{thm:var} involves some miraculous cancellations which we admit to not having a great understanding of. 
Proving \Cref{thm:var} then required us first having a guess for the precise form of the variance, and only once we had the right guess could we reverse engineer a proof to give us the right calculations. 
Developing a guess for the variance, in turn, required spending significant amounts of time and effort on numerical simulations, in which we pored over command line outputs and performed numerology in a lengthy and difficult process that we likened to the office jobs in the television show \emph{Severance}. 
However, given that the main terms in \Cref{thm:var} are so nice, and the remaining terms are negative and so can be ignored, we hope that there is an alternative means of computing variances in which these nice expressions fall out naturally, and we leave this for future work.

\subsection{Application 1: full state tomography in trace distance}

Our first application is to the problem of full state tomography in trace distance.
Trace distance is perhaps the most fundamental distance measure between two quantum states, quantifying the extent to which they can be distinguished by a quantum measurement.
O'Donnell and Wright showed that $n = O(rd/\epsilon^2)$ copies suffice for full state tomography of rank-$r$ states in trace distance~\cite{OW16}.
In the case when $r = d$,
Haah et al.~\cite{HHJ+16} showed a matching lower bound of $n = \Omega(d^2/\epsilon^2)$ copies,
and concurrent work of Scharnhorst, Spilecki, and Wright~\cite{SSW25} has extended this to a matching lower bound of $n = \Omega(rd/\epsilon^2)$ copies for all values of $r$.
We show that the upper bound of O'Donnell and Wright also holds for the debiased Keyl's algorithm.

\begin{theorem}[Trace distance tomography]
    \label{thm:trace-dist-tomography-intro}
    Let $\rho$ be rank $r$, and let $\widehat{\brho}$ be the output of the debiased Keyl's algorithm when run on $\rho^{\otimes n}$.
    Then $\mathrm{D}_{\mathrm{tr}}(\rho, \widehat{\brho})\leq \epsilon$ with high probability when $n = O(rd/\epsilon^2)$.
\end{theorem}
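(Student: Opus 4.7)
The plan is to first bound $\E\|\widehat{\brho}-\rho\|_F^2$ by feeding the second moment formula of \Cref{thm:var} into the identity $\tr(A^2) = \tr((A\otimes A)\cdot\swap)$, and then to convert this Frobenius bound into a trace distance bound by exploiting a hidden low-rank structure of $\widehat{\brho}$. By unbiasedness, $\E\|\widehat{\brho}-\rho\|_F^2 = \tr\bigl(\E[\widehat{\brho}\otimes\widehat{\brho}]\cdot\swap\bigr) - \tr(\rho^2)$. Plugging in the three main terms of \Cref{eq:second-moment} and using $\swap^2 = I$, one obtains a total of $\tfrac{2d}{n} + \tfrac{\E[\ell(\blambda)]\cdot d^2}{n^2} - \tfrac{1}{n}\tr(\rho^2)$. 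The contribution $-\tr(\mathrm{Lower}_\rho\cdot\swap)$ can be discarded: using the characterization in \Cref{thm:var}, it is a \emph{negative} combination of terms of the form $\tr((P\otimes P)\cdot\swap\cdot\swap) = (\tr P)^2 \geq 0$. Finally, I would invoke the standard Schur--Weyl fact that $\rho^{\otimes n}$ is supported on the isotypic components with $\ell(\lambda)\leq r$ when $\rho$ has rank $r$, giving $\E[\ell(\blambda)]\leq r$ and hence $\E\|\widehat{\brho}-\rho\|_F^2 \leq \tfrac{2d}{n} + \tfrac{rd^2}{n^2}$.

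To upgrade this to a trace distance bound without losing an extra factor of $\sqrt{d}$, I would exploit the following structural observation: whenever $\ell(\blambda) = r'\leq r$, the donation rule forces the last $d-r'$ eigenvalues of $\widehat{\brho}$ (in the basis $\bU$) to all equal $-r'/n$. Therefore $\widetilde{\brho} \coloneqq \widehat{\brho} + (r'/n)\cdot I$ is a rank-$r'$ positive semidefinite matrix, and $\widetilde{\brho} - \rho$ has rank at most $r+r' \leq 2r$. The rank-Frobenius inequality gives $\|\widetilde{\brho}-\rho\|_1 \leq \sqrt{2r}\,\|\widetilde{\brho}-\rho\|_F$, and expanding via $\tr(\widehat{\brho}-\rho)=0$ yields
\begin{equation*}
\|\widetilde{\brho}-\rho\|_F^2 \;=\; \|\widehat{\brho}-\rho\|_F^2 + (r'/n)^2 \cdot d.
\end{equation*}
Combining with the triangle inequality $\|\widehat{\brho}-\rho\|_1 \leq \|\widetilde{\brho}-\rho\|_1 + r'd/n$, squaring, taking expectations, and bounding $r'\leq r$ leads to $\E\|\widehat{\brho}-\rho\|_1^2 = O(rd/n + r^2d^2/n^2)$. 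Setting $n = C\cdot rd/\epsilon^2$ for a sufficiently large constant $C$ makes this $O(\epsilon^2/C)$, and Markov's inequality on $\|\widehat{\brho}-\rho\|_1^2$ finishes the argument.

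The main obstacle I anticipate is the shift trick in the second step. The estimator $\widehat{\brho}$ is not itself low-rank --- most of its eigenvalues are small and negative --- so a naive application of the rank-Frobenius inequality with $\mathrm{rank}(\widehat{\brho}-\rho)\leq d$ would lose a factor of $\sqrt{d}$ and only yield a $d^2/\epsilon^2$-type bound. The crucial structural point is that these many negative eigenvalues are not just small but \emph{all equal}, so a single scalar shift eliminates them all at once. The shift costs only an additive $rd/n$ in trace distance (which is already at the target $\epsilon^2$ scale), and in exchange both $\widetilde{\brho}$ and $\widetilde{\brho}-\rho$ become rank-$O(r)$, which is exactly what is needed to recover the optimal $rd$ sample complexity.
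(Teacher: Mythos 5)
Your proof is correct, and it differs from the paper's in a minor but genuine way. Both arguments are powered by the same two ingredients: the variance bound $\E\|\widehat{\brho}-\rho\|_2^2 \leq \frac{2d}{n} + \frac{rd^2}{n^2}$ obtained from \Cref{thm:var} (the paper's \Cref{lem:var-brho}), and the structural observation that when $\ell(\blambda)=r'$, the bottom $d-r'$ eigenvalues of $\widehat{\brho}$ are all equal to $-r'/n$, so one only needs the Cauchy--Schwarz inequality $\|A\|_1 \leq \sqrt{\mathrm{rank}(A)}\,\|A\|_2$ on a rank-$O(r)$ object rather than a rank-$d$ one. Where you diverge is in how this low-rank structure is exposed: the paper \emph{decomposes} $\widehat{\brho} = \widehat{\brho}_{>0} + \widehat{\brho}_{=0}$ into its ``top'' part (rank $\leq r'$) plus the scaled projector $-\frac{r'}{n}\bPi_{=0}$, bounds $\|\widehat{\brho}_{=0}\|_1 \leq rd/n$, and shows $\E\|\widehat{\brho}_{>0}-\rho\|_2^2 \leq \E\|\widehat{\brho}-\rho\|_2^2$ by verifying the cross term $\tr(\rho\,\widehat{\brho}_{=0})$ is nonpositive; you instead \emph{shift} by $\frac{r'}{n}I$, which kills the same block of equal eigenvalues, and pay the same $r'd/n$ trace-norm cost via the triangle inequality. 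Your Frobenius identity $\|\widetilde{\brho}-\rho\|_2^2 = \|\widehat{\brho}-\rho\|_2^2 + (r'/n)^2 d$ (which relies on $\tr(\widehat{\brho})=\tr(\rho)=1$, correct since donation conserves $\sum_i\lambda_i=n$) is a touch looser than the paper's bound — you gain an additive $r^2 d/n^2$ rather than dropping nonpositive terms — but since $r \leq d$ this is dominated by the $rd^2/n^2$ term already present and the final $O(rd/\epsilon^2)$ bound is unaffected. One small note: the PSD-ness of $\widetilde{\brho}$ that you assert is true (for $i\leq r'$ one has $\lambda_i^\uparrow + r' \geq d - r' + 2 > 0$), but it is not actually needed — the rank-Frobenius inequality only requires $\mathrm{rank}(\widetilde{\brho}-\rho)\leq r+r'\leq 2r$, which follows from $\mathrm{rank}(\widetilde{\brho})\leq r'$ alone. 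The shift viewpoint is arguably cleaner to state; the paper's decomposition viewpoint avoids the extra additive term and generalizes more directly to the argument used in the Bures-distance section, where a positive shift (by $r/n$, not $\ell(\blambda)/n$) is used for a different purpose (restoring positive semidefiniteness rather than rank reduction).
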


In fact, this result follows immediately from the prior result of O'Donnell and Wright.
They showed that if $\widehat{\brho}_0$ is the output of Keyl's algorithm, then $\mathrm{D}_{\mathrm{tr}}(\rho, \widehat{\brho}_0) \leq \epsilon$ with high probability.
Next, it is easy to see that $\mathrm{D}_{\mathrm{tr}}(\widehat{\brho}_0, \widehat{\brho}) \leq d^2/n$ because $\Vert \blambda - \blambda^{\uparrow}\Vert_1 \leq d^2$ always.
Thus, by the triangle inequality,
\begin{equation*}
    \mathrm{D}_{\mathrm{tr}}(\rho, \widehat{\brho}_0)
    \leq \mathrm{D}_{\mathrm{tr}}(\rho, \widehat{\brho}_0) + \mathrm{D}_{\mathrm{tr}}(\widehat{\brho}_0, \widehat{\brho})
    \leq \epsilon + d^2/n
    \leq \epsilon + \epsilon^2
    = O(\epsilon),
\end{equation*}
where the second-to-last inequality uses the fact that $n = O(d^2/\epsilon^2)$.
However, we are able to give an entirely new proof of this fact which does not rely on this prior result of O'Donnell and Wright. 
Our proof begins by bounding the expected trace distance between $\rho$ and $\widehat{\brho}$ as follows:
\begin{equation*}
    \E[\mathrm{D}_{\mathrm{tr}}(\rho, \widehat{\brho})]
    = \tfrac{1}{2} \E \Vert \rho - \widehat{\brho} \Vert_1
    \leq \tfrac{1}{2} \E \sqrt{d} \cdot \Vert \rho - \widehat{\brho} \Vert_2
    \leq \tfrac{1}{2}  \sqrt{d} \cdot \sqrt{\E \Vert \rho - \widehat{\brho} \Vert_2^2},
\end{equation*}
where the last step uses Jensen's inequality and concavity of the square root function.
Now, because $\widehat{\brho}$ is an unbiased estimator for $\rho$,
we have $\E[\widehat{\brho}] = \rho$, and so $\E \Vert \rho - \widehat{\brho} \Vert_2^2$ is just the variance of $\widehat{\brho}$.
This means
\begin{align*}
    \E \Vert \rho - \widehat{\brho} \Vert_2^2
    = \Var[\widehat{\brho}]
    = \E[\mathrm{tr}(\widehat{\brho}^2)] - \tr(\E[\widehat{\brho}]^2)
    = \E[\mathrm{tr}(\widehat{\brho}^2)] - \tr(\rho^2)
    &= \E[\mathrm{tr}(\widehat{\brho}\otimes \widehat{\brho} \cdot \swap)] -\tr(\rho^2)\\
    &= \mathrm{tr}(\E[\widehat{\brho}\otimes \widehat{\brho}] \cdot \swap)] -\tr(\rho^2).
\end{align*}
Now we simply substitute our expression for the second moment of $\widehat{\brho}$ from \Cref{thm:var}, and from there bounding the variance is relatively straightforward.
\ignore{
where $\alpha = (\alpha_1, \ldots, \alpha_d)$ is the spectrum of $\rho$. Now the problem reduces to bounding the expectation of $\sum_{i=1}^d (\blambda_i^{\uparrow})^2$,
and we can do so by using off-the-shelf tools for bounding moments of moments of Young diagrams from~\cite{OW15,OW16} (in particular, a very similar expectation is bounded in~\cite[Lemma 3.1]{OW16}).
}
In comparison to the tomography proof of~\cite{OW16},
every step in this proof feels ``canonical'',
whereas the proof of~\cite{OW16} requires a few lucky tricks (for example, the ``$r \geq 2 - \tfrac{1}{r}$'' step always seemed like a stroke of luck to the third author) to work.

\subsection{Application 2: full state tomography in Bures distance}
We also give an application to \emph{fidelity} or \emph{Bures distance} tomography.
Bures distance is a more challenging distance metric than trace distance which requires learning the small eigenspace of $\rho$ to high accuracy,
and as a result we still lack truly optimal bounds for Bures distance tomography.
Haah et al.~\cite{HHJ+16}
showed that $\widetilde{O}(rd/\epsilon^2)$ samples suffice to learn a rank $r$  mixed state $\rho$ to $\epsilon$ error in Bures distance (equivalently, $\widetilde{O}(rd/\delta)$ samples suffice to learn to $\delta$ infidelity error).
This result has subsequently been reproved in \cite{OW17a,FO24}, the former of which showed that Keyl's algorithm itself attains this bound.
We improve on all of these results
and show that Bures distance tomography can be achieved ``without the tilde''.

\begin{theorem}[Bures distance tomography]
\label{thm:bures-dist-tomography-intro}
    Given a rank $r$ mixed state $\rho \in \C^{d \times d}$,
    $n = O(rd/\epsilon^2)$ samples suffice to output an estimate $\widehat{\brho}$ such that $\mathrm{D}_{\mathrm{B}}(\rho, \widehat{\brho})\leq \epsilon$ with high probability.
\end{theorem}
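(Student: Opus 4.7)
The plan is to reduce the Bures-distance bound to a second-moment calculation that \Cref{thm:var} handles directly, in close analogy with the trace-distance proof sketched for Application 1 but weighting each matrix element by the Bures symbol $1/(\alpha_i+\alpha_j)$. Since $\mathrm{D}_{\mathrm{B}}^2(\rho, \sigma) \le 2(1 - F(\rho, \sigma))$, it suffices by Markov to show $\E[1 - F(\rho, \widetilde{\brho})] = O(\epsilon^2)$ for $n = O(rd/\epsilon^2)$, where $\widetilde{\brho}$ is the PSD projection of $\widehat{\brho}$ (the rounding error being absorbed via \Cref{thm:trace-dist-tomography-intro}). I would then invoke a Kubo--Ando / quantum $\chi^2$-style upper bound on infidelity,
\begin{equation*}
1 - F(\rho, \sigma) \;\lesssim\; \sum_{i,j\,:\,\alpha_i+\alpha_j>0}\frac{|\langle i|(\sigma - \rho)|j\rangle|^2}{\alpha_i + \alpha_j} \;+\; \tr\bigl((I-P_\rho)\sigma\bigr),
\end{equation*}
written in the eigenbasis $\{\ket{i}\}$ of $\rho$, where $P_\rho$ is the support projector; the weight $1/(\alpha_i+\alpha_j)$ is the correct quadratic form because it is the symbol of the Bures metric on the tangent space at $\rho$.

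Next I would take expectations and read off each $\E\bigl[|\langle i|(\widehat{\brho}-\rho)|j\rangle|^2\bigr] = \bra{ij}\E[\widehat{\brho}\otimes\widehat{\brho}]\ket{ji} - |\rho_{ij}|^2$ directly from \Cref{thm:var}. As in the discussion following \Cref{thm:var}, expanding $-\mathrm{Lower}_\rho$ in terms of its $(P\otimes P)\cdot\swap$ generators shows that each such contribution to $\bra{ij}(\cdots)\ket{ji}$ is of the form $-P_{ii}P_{jj}\leq 0$; dividing by the positive scalar $\alpha_i+\alpha_j$ and summing preserves the sign, so the entire lower-order contribution is nonpositive and may be dropped. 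The surviving off-diagonal variance is $(\alpha_i+\alpha_j)/n + \E[\ell(\blambda)]/n^2$; dividing by $\alpha_i+\alpha_j$ and summing over $i\neq j$ with $i,j\leq r$ gives $r(r-1)/n \leq rd/n$ from the first piece, and the diagonal contribution is handled by a direct second-moment estimate on $\widehat{\brho}_{ii}$ together with unbiasedness. The off-support mass $\E[\tr((I-P_\rho)\widehat{\brho})]$ vanishes by unbiasedness, and the fluctuation that the PSD truncation incurs is folded into a trace-distance invocation of \Cref{thm:trace-dist-tomography-intro}.

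The principal obstacle is the $\E[\ell(\blambda)]/n^2$ contribution: the weight $1/(\alpha_i+\alpha_j)$ can be as large as $1/(2\alpha_r)$, and $\alpha_r$ may be arbitrarily small for a rank-$r$ state, so a naive bound would blow up and reintroduce exactly the $\log d$ factor that appears in \cite{HHJ+16,OW17a,FO24}. Avoiding it requires both a rank-sensitive estimate $\E[\ell(\blambda)] = O(r)$ for weak Schur sampling of a rank-$r$ state --- of the flavor studied in the empirical Young diagram analyses of \cite{OW15,OW16} --- and a careful routing of the small-eigenvalue contributions through the off-support / rounding term rather than the weighted in-support bound. Conceptually, the removal of the logarithm is driven by unbiasedness: a biased estimator would force a covering-style $L^\infty$ argument that incurs a $\log d$ overhead, whereas here the explicit second-moment formula of \Cref{thm:var} is integrated against the correct Bures weight in a single shot, yielding $\E[\mathrm{D}_{\mathrm{B}}^2(\rho, \widetilde{\brho})] = O(rd/n)$ and the claim.
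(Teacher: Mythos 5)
Your high-level strategy (weight the second moment by $1/(\alpha_i+\alpha_j)$, drop $\mathrm{Lower}_\rho$ by positivity of $P_{ii}P_{jj}$, restrict the weighted sum to large eigenvalues, control the tail separately) is in the same spirit as the paper's argument, and you correctly identified the danger spot: the $\E[\ell(\blambda)]/n^2$ term divided by $\alpha_i+\alpha_j$. But there are two genuine gaps in your execution.

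First, your handling of positive-semidefiniteness does not close. You propose to take $\widetilde{\brho}$ to be the PSD projection of $\widehat{\brho}$ and to ``fold the fluctuation the PSD truncation incurs into a trace-distance invocation of \Cref{thm:trace-dist-tomography-intro}.'' But the Fuchs--van de Graaf corollary only gives $\DBur(\rho,\sigma)\leq\sqrt{2\,\dtr(\rho,\sigma)}$, so a trace-distance bound of $O(\epsilon)$ for the rounding step yields only $\DBur = O(\sqrt{\epsilon})$ --- it reintroduces a square-root loss that defeats the purpose. Relatedly, your claim that ``the off-support mass $\E[\tr((I-P_\rho)\widehat{\brho})]$ vanishes by unbiasedness'' is true only for the \emph{signed} estimator $\widehat{\brho}$, where the quantity can be negative; once you project to the PSD cone, $\tr((I-P_\rho)\widetilde{\brho})$ is a nonnegative random variable whose mean is generically strictly positive, and unbiasedness no longer helps. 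The paper avoids both issues by never projecting to the PSD cone: it instead defines $\widetilde{\brho} \coloneq \frac{n}{n+rd}(\widehat{\brho} + \frac{r}{n}I)$, which is automatically PSD because $\widehat{\brho}\succeq -\frac{\ell(\blambda)}{n}I\succeq -\frac{r}{n}I$, and then controls the shift and rescaling directly inside the $\chi^2$ calculation. The tail mass is handled with the Gentle Measurement Lemma on both $\rho$ and $\widetilde{\brho}$, not a trace-distance surrogate, and that is exactly what keeps the error $\epsilon$ rather than $\sqrt{\epsilon}$.

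Second, the infidelity bound you write down,
\begin{equation*}
1 - F(\rho, \sigma) \;\lesssim\; \sum_{i,j:\,\alpha_i+\alpha_j>0}\frac{|\langle i|(\sigma - \rho)|j\rangle|^2}{\alpha_i + \alpha_j} \;+\; \tr\bigl((I-P_\rho)\sigma\bigr),
\end{equation*}
is a plausible second-order-expansion-style estimate but not a cited fact, and it is not what the paper uses. \Cref{lem:bures-leq-chi} (FO24, Prop.\ 2.31) gives $\DBur(\rho,\sigma)\leq\sqrt{\dchi(\rho\,\|\,\sigma)}$ only when the second argument is \emph{full rank}; for rank-deficient $\rho$ the paper first passes to the restricted normalized states $\widetilde{\brho}^{\mathrm{norm}}[R]$ and $\rho^{\mathrm{norm}}[R]$ (with $R$ the indices with $\alpha_i\geq r/n$), applies the $\chi^2$ bound there, and then combines with two Gentle-Measurement terms by the triangle inequality. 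If you want to use your displayed inequality you would need to supply a proof of it, and even then the rounding issue above remains. A smaller point: you treat $\E[\ell(\blambda)]=O(r)$ as a probabilistic Plancherel-type estimate, but for a rank-$r$ input it is the deterministic bound $\ell(\blambda)\leq r$ (\Cref{WSS_rank_r_comment}), which is what the paper uses.
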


This is optimal due to a matching lower bound of $n = \Omega(rd/\epsilon^2)$ due to Yuen~\cite{Yue23}.
Indeed, this lower bound also follows from the $n = \Omega(rd/\epsilon^2)$ lower bound for trace distance tomography from above~\cite{SSW25} combined with the fact that Bures distance tomography is at least as hard as trace distance tomography.
It is the first time an optimal Bures distance bound without any logs has been shown for either entangled or unentangled measurements.  

For intuition, let us sketch the proof of the general $n = O(d^2/\epsilon^2)$ copy upper bound from \Cref{thm:bures-dist-tomography-intro}, which holds when the rank $r = d$.
We actually rely on a slight variant of the debiased Keyl's algorithm in order to handle the case when the state $\rho$ is poorly conditioned (i.e.\ when it has extremely small eigenvalues).
To begin, we bound
\begin{equation*}
    \E[\mathrm{D}_{\mathrm{B}}(\rho, \widehat{\brho})]
    \leq \E[\sqrt{\mathrm{D}_{\chi^2}(\widehat{\brho} \Vert \rho)}]
    \leq \sqrt{\E[\mathrm{D}_{\chi^2}(\widehat{\brho} \Vert \rho)]},
\end{equation*}
where $\mathrm{D}_{\chi^2}(\widehat{\brho} \Vert \rho)$ is the \emph{Bures $\chi^2$-divergence} between $\widehat{\brho}$ and $\rho$. Assuming without loss of generality that $\rho$ is diagonal in the standard basis, i.e.\ $\rho = \sum_{i=1}^d \alpha_i \cdot \ketbra{i}$, we can express the Bures $\chi^2$-divergence as
\begin{equation*}
    \E[\mathrm{D}_{\chi^2}(\widehat{\brho} \Vert \rho)]
    = \E\Big[\sum_{i,j=1}^d \frac{2}{\alpha_i + \alpha_j} \cdot |\widehat{\brho}_{ij} - \rho_{ij}|^2\Big]
    = \sum_{i,j=1}^d \frac{2}{\alpha_i + \alpha_j} \cdot \E[|\widehat{\brho}_{ij} - \rho_{ij}|^2].
\end{equation*}
Now, if $\widehat{\brho}$ is the output of the debiased Keyl's algorithm, then it is an unbiased estimator for $\rho$,
and so $\E[|\widehat{\brho}_{ij} - \rho_{ij}|^2]$ is just the variance $\Var[\widehat{\brho}_{ij}]$.
To calculate this, we can use our bound on the second moment of $\widehat{\brho}$ from \Cref{thm:var},
which allows us to compute the bound $\Var[\widehat{\brho}_{ij}] \leq (\alpha_i + \alpha_j)/n + d/n^2$. Plugging this in above, we have
\begin{equation*}
    \E[\mathrm{D}_{\chi^2}(\widehat{\brho} \Vert \rho)]
    \leq \sum_{i,j=1}^d \frac{2}{\alpha_i + \alpha_j} \cdot \Big(\frac{\alpha_i + \alpha_j}{n} + \frac{d}{n^2}\Big)
    = \sum_{i, j=1}^d \Big(\frac{2}{n} + \frac{2d}{(\alpha_i + \alpha_j) n^2}\Big)
    \leq \frac{2d^2}{n} + \frac{d^3}{\min_i \{\alpha_i\} \cdot n^2}.
\end{equation*}
If we know that $\rho$ is well-conditioned
so that each $\alpha_i \geq d/n$,
then we can bound this final expression by $3d^2/n$,
which is $O(\epsilon^2)$ when $n = O(d^2/\epsilon^2)$.
This implies a bound of $O(\epsilon)$ on $\E[\mathrm{D}_{\mathrm{B}}(\rho, \widehat{\brho})]$, as desired.
To ensure that $\rho$ is sufficiently well-conditioned, we can always apply the depolarizing channel to each copy of $\rho$ with noise rate $d^2/n$ prior to running the debiased Keyl's algorithm in order to prepare identical copies of the state
\begin{equation*}
    \rho' = \Big(1 - \frac{d^2}{n}\Big) \cdot \rho + \frac{d^2}{n} \cdot \frac{I}{d},
\end{equation*}
which has all eigenvalues at least $d/n$.
We show that this $\rho'$ is still close enough to $\rho$ that it suffices to learn $\rho'$ instead of $\rho$.
One final wrinkle in this proof is that $\mathrm{D}_{\mathrm{B}}(\rho, \widehat{\brho})$ is only well-defined with $\widehat{\brho}$ is positive semi-definite, but the debiased Keyl's algorithm in general can output states with negative eigenvalues.
However, we show that because $\rho'$ is sufficiently well-conditioned and we are using $n = O(d^2/\epsilon^2)$ copies, 
the estimator $\widehat{\brho}$ is an actual density matrix (and is therefore positive semidefinite) with high probability.
This proof handles the case when $r = d$; to handle the case of small $r$, we will require a different argument that treats the small eigenvalues of $\rho$ with more care.

As an application of our Bures distance result, we show how to efficient learn a bipartite pure state given a bound on its Schmidt rank.
We thank Benjamin Lovitz for suggesting this application.

\begin{theorem}[Learning bipartite pure states]
    Let $\calH_A$ and $\calH_B$ be two Hilbert spaces of dimension $d_A$ and $d_B$, respectively.
    Then $n = O(r(d_A + d_B)/\epsilon^2)$ copies of a pure state $\ket{\psi}_{AB}$ with Schmidt rank $r$ are sufficient to learn it to Bures distance $\epsilon$ with high probability.
\end{theorem}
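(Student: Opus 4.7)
The plan is to exploit the fact that both reduced states $\rho_A = \tr_B(\ketbra{\psi})$ and $\rho_B = \tr_A(\ketbra{\psi})$ have rank at most $r$ (equal to the Schmidt rank), and that $\ket{\psi}$ lives entirely in the $r^2$-dimensional subspace $\mathrm{supp}(\rho_A) \otimes \mathrm{supp}(\rho_B)$; hence it suffices to learn these two supports accurately and then perform pure-state tomography within the projected subspace.

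First I would set aside $O(rd_A/\epsilon^2)$ copies of $\ket{\psi}_{AB}$, trace out the $B$ register to obtain copies of $\rho_A$, and apply the debiased Keyl's algorithm: by \Cref{thm:bures-dist-tomography-intro} the output $\widehat{\rho}_A$ satisfies $\mathrm{D}_{\mathrm{B}}(\rho_A, \widehat{\rho}_A) = O(\epsilon)$ with high probability. An analogous step using $O(rd_B/\epsilon^2)$ further copies produces $\widehat{\rho}_B$. Let $\widehat{S}_A \subseteq \C^{d_A}$ and $\widehat{S}_B \subseteq \C^{d_B}$ denote the spans of the top-$r$ eigenvectors of $\widehat{\rho}_A$ and $\widehat{\rho}_B$, and let $\widehat{\Pi} = P_{\widehat{S}_A} \otimes P_{\widehat{S}_B}$ be the projector onto the $r^2$-dimensional subspace $\widehat{S}_A \otimes \widehat{S}_B$.

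I would then apply the binary measurement $\{\widehat{\Pi}, I - \widehat{\Pi}\}$ to each remaining copy and post-select on the ``success'' outcome. A Davis--Kahan style perturbation argument, applied to the Bures estimates, would show $\|\widehat{\Pi}\ket{\psi}\|^2 = 1 - O(\epsilon^2)$, so the post-measurement state $\ket{\widetilde{\psi}}$ is $O(\epsilon)$-close to $\ket{\psi}$ in Bures distance and lives in a known $r^2$-dimensional subspace. Pure-state tomography in dimension $r^2$ (e.g., Hayashi's algorithm from \Cref{sec:pure-state}) then learns $\ket{\widetilde{\psi}}$ to Bures distance $\epsilon$ using $O(r^2/\epsilon^2)$ post-selected copies. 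A triangle inequality on $\mathrm{D}_{\mathrm{B}}$ gives an overall estimate within $O(\epsilon)$ of $\ket{\psi}$, and the total sample count is $O(rd_A/\epsilon^2) + O(rd_B/\epsilon^2) + O(r^2/\epsilon^2) = O(r(d_A+d_B)/\epsilon^2)$, where the last equality uses $r \leq \min(d_A, d_B)$.

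The main obstacle is the Davis--Kahan step: Bures-closeness of $\widehat{\rho}_A$ to $\rho_A$ only controls the top-$r$ eigenspace of $\widehat{\rho}_A$ when there is a nontrivial spectral gap between the nonzero eigenvalues of $\rho_A$ and zero, and this gap can be arbitrarily small if $\ket{\psi}$ has a very small Schmidt coefficient. I would handle this via a preliminary truncation: any Schmidt component $\sqrt{\lambda_i}\ket{a_i}\ket{b_i}$ with $\lambda_i \leq \epsilon^2/r$ contributes at most $O(\epsilon^2/r)$ to the infidelity, so discarding all such components yields an ``effectively low-rank'' pure state within $O(\epsilon)$ Bures distance of $\ket{\psi}$ whose reduced density matrices are well-conditioned on their supports, making the Davis--Kahan bound manageable. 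Post-selection efficiency and the matching of eigenvectors between $\widehat{S}_A$ and $\widehat{S}_B$ across degenerate Schmidt values are then routine.
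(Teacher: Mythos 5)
Your overall architecture matches the paper's: learn the low-rank marginal(s), project the pure state onto a low-dimensional subspace, then run pure-state tomography there. There are two differences worth flagging, one cosmetic and one substantive.

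The cosmetic one: the paper learns only $\rho_A$, projects with $\Pi_A \otimes I_B$, and then does pure-state tomography in the $r\cdot d_B$-dimensional space $\mathrm{supp}(\Pi_A)\otimes\mathcal{H}_B$ using the $O(rd_B/\epsilon^2)$ copies. You learn both marginals and do pure-state tomography in the $r^2$-dimensional space $\widehat{S}_A\otimes\widehat{S}_B$ using $O(r^2/\epsilon^2)$ copies. Both give the same asymptotic budget since $r\le\min(d_A,d_B)$, so this is fine; your second phase is, if anything, cheaper.

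The substantive one is the Davis--Kahan step, and here you should know that it is entirely avoidable. You correctly diagnosed its weakness — Davis--Kahan needs a spectral gap, which $\rho_A$ need not have — but the truncation patch is both unnecessary and incompletely specified: your tomography is run on $\rho_A$, not on the truncated state, so you would still need to convert a Bures bound between $\widehat{\rho}_A$ and the untruncated $\rho_A$ into a statement about the alignment of $\widehat{S}_A$ with a particular subspace of $\rho_A$'s support, and then combine that alignment with the residual spectral weight of the dropped components. None of this is hard in principle, but it is several moving parts, each of which would need quantitative care.

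What the paper does instead is cleaner and needs no gap, no truncation, and no perturbation theory. Because $\widehat{\rho}_A$ is rank $r$ and supported on $\widehat{S}_A$, write $\Pi = P_{\widehat{S}_A}$ and observe
\begin{equation*}
\Fid(\widehat{\rho}_A,\rho_A) = \Fid(\widehat{\rho}_A, \Pi\rho_A\Pi) \le \sqrt{\tr(\widehat{\rho}_A)}\cdot\sqrt{\tr(\Pi\rho_A\Pi)} = \sqrt{\tr(\Pi\rho_A)},
\end{equation*}
where the first equality holds because $\widehat{\rho}_A$ is supported on $\Pi$ and the inequality is the standard bound $\Fid(\rho,\sigma)\le\sqrt{\tr\rho\,\tr\sigma}$ for subnormalized PSD matrices (\cite[Proposition~3.12]{Wat18}). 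Since $\DBur(\widehat{\rho}_A,\rho_A)\le\epsilon$ gives $\Fid \ge 1-\epsilon^2/2$, this yields $\tr(\Pi\rho_A) = \|(\Pi\otimes I_B)\ket{\psi}_{AB}\|^2 \ge (1-\epsilon^2/2)^2$ directly. Running the same argument on the $B$ side and combining via $I - \Pi_A\otimes\Pi_B = (I-\Pi_A)\otimes I + \Pi_A\otimes(I-\Pi_B)$ gives $\|\widehat{\Pi}\ket{\psi}\|^2\ge 1-O(\epsilon^2)$ without any appeal to spectral perturbation. Substituting this for your Davis--Kahan paragraph turns the proposal into a complete proof; the rest (post-selection efficiency, triangle inequality, sample count) is routine as you say.
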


\subsection{Application 3: tomography with limited entanglement}\label{sec:limited-entanglement}

Suppose we want to perform tomography on $\rho$,
but we can only perform entangled measurements across $k$ copies of $\rho$ at a time.
This question is motivated by the fact that implementing large, entangled measurements can be challenging in practice; indeed, one often simply does not have enough qubits to store all $n$ copies of $\rho$ at once!
In this case, the natural strategy is to do the following.
\begin{enumerate}
\item Assuming for simplicity that $n$ is a multiple of $k$,
divide the $n$ copies of $\rho$ into $n'\coloneqq n/k$ disjoint blocks.
\item Run a $k$-copy tomography algorithm on each block. Let $\widehat{\brho}_1, \ldots, \widehat{\brho}_{n'}$ be the resulting estimators. 
\item Output $\widehat{\brho} = \frac{1}{n'} \cdot(\widehat{\brho}_1 + \cdots + \widehat{\brho}_{n'})$.
\end{enumerate}
The estimators $\widehat{\brho}_1, \ldots, \widehat{\brho}_{n'}$
are independent and identically distributed,
which in particular means that they have the same expectation $\E[\widehat{\brho}_1] = \cdots = \E[\widehat{\brho}_{n'}]$.
By linearity of expectation, the averaged estimator $\widehat{\brho}$ has the same expectation as well.
However, even though it has the same expectation, it will concentrate around this expectation much more strongly than the individual estimators do. This can be seen by considering its variance and recalling that the variance of independent random variables is equal to the sum of their variances:
\begin{equation*}
    \Var[\widehat{\brho}]
    = \frac{1}{(n')^2} \cdot \Var[\widehat{\brho}_1 + \cdots + \widehat{\brho}_{n'}]
    = \frac{1}{(n')^2} \cdot (\Var[\widehat{\brho}_1] + \cdots + \Var[\widehat{\brho}_1])
    = \frac{1}{(n')} \cdot \Var[\widehat{\brho}_{n'}].
\end{equation*}
If the tomography algorithm that is used produces an unbiased estimator, then this means that the estimator $\widehat{\brho}$ will concentrate well around $\rho$ when $n'$ is large.
As an example, in the case when $k = 1$,
the measurements must be unentangled across all $n$ copies of $\rho$,
and the natural tomography algorithm to use is the single-copy unbiased estimator from \Cref{sec:single-copy}.
In this case, the averaged estimator $\widehat{\brho}$ is known as the ``uniform POVM tomography algorithm'' and satisfies $\mathrm{D}_{\mathrm{tr}}(\rho, \widehat{\brho}) \leq \epsilon$ with high probability once $n = O(d^3/\epsilon^2)$,
which was shown independently by Krishnamurthy and Wright~\cite[Section 5.1]{Wri16} and Guta et al.~\cite{GKKT20}.

However, if the tomography algorithm that is used does \emph{not} produce an unbiased estimator, then although $\widehat{\brho}$ will still concentrate well around the expectation of the $\widehat{\brho}_i$'s, 
this will no longer be useful if this expectation is far from $\rho$.
This was the issue faced by Chen, Li, and Liu in~\cite{CLL24a},
who studied the case when the algorithm is allowed to make entangled measurements across $k > 1$ copies at a time.
In spite of the fact that they did not have an unbiased, entangled tomography algorithm at their disposal, they were still able to prove nearly matching upper and lower bounds for this problem.
In particular, when $k\leq \min\{d^2, (\sqrt{d}/\epsilon)^c)$, where $c$ is some small constant, they showed that $n = \widetilde{O}(d^3/(\sqrt{k}\epsilon^2))$ copies of $\rho$ suffice to learn $\rho$ to error $\epsilon$ in trace distance.
Interestingly, this suggests that increasing the amount of entanglement $k$ helps only until $k = d^2$,
and further increases in $k$ do not improve the copy complexity.
They also showed that $n = \Omega(d^3/(\sqrt{k} \epsilon^2))$ copies are necessary to learn $\rho$ to error $\epsilon$ in trace distance if one performs even adaptively chosen measurements across $k$ copies of $\rho$ at a time, at least when $k \leq 1/\epsilon^c$, where $c$ is some small constant.
Finally, they suggested that these restrictions on the parameter $k$ in both their upper and lower bounds are artifacts of their proof techniques,
and they conjectured that the optimal number of copies for this task is $n = \widetilde{\Theta}(d^3/(\sqrt{k} \epsilon^2))$ when $k \leq d^2$.

For our third application of the debiased Keyl's algorithm, we show that their conjectured upper bound is attainable even ``without the tilde''.

\begin{theorem}[Tomography with limited entanglement]
    Let $n = n' \cdot k$.
    Let $\widehat{\brho}_1,\ldots, \widehat{\brho}_{n'}$ be the outputs of the debiased Keyl's algorithm when run in $n'$ independent copies of $\rho^{\otimes k}$.
    Set $\widehat{\rho} = (\widehat{\brho}_1 + \cdots + \widehat{\brho}_{n'})/{n'}$.
    Then $\mathrm{D}_{\mathrm{tr}}(\rho, \widehat{\brho})\leq \epsilon$ with high probability when
    \begin{equation*}
        n =O\left(\max \left(\frac{d^3}{\sqrt{k}\epsilon^2}, \frac{d^2}{\epsilon^2} \right) \right).
    \end{equation*}
\end{theorem}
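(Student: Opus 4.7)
The plan is to exploit two properties of the debiased Keyl's algorithm that the biased algorithms used in prior work on this problem lack: first, unbiasedness (Theorem~\ref{thm:unbiased}) ensures that the average of i.i.d.\ estimates still has mean exactly $\rho$; second, the clean second-moment formula (Theorem~\ref{thm:var}) makes a direct calculation of $\Var[\widehat{\brho}]$ possible, and this variance benefits from the $1/n'$ reduction that comes from averaging $n' = n/k$ independent blocks. Since each $\widehat{\brho}_j$ is unbiased, so is $\widehat{\brho}$, and the Application~1 template applies verbatim: Jensen plus the matrix-norm inequality $\|M\|_1\le\sqrt{d}\,\|M\|_2$ give
\[
\E[\mathrm{D}_{\mathrm{tr}}(\rho,\widehat{\brho})] \le \tfrac{1}{2}\sqrt{d\cdot\Var[\widehat{\brho}]},
\]
and Markov's inequality on the nonnegative random variable $\mathrm{D}_{\mathrm{tr}}(\rho,\widehat{\brho})$ upgrades this to a high-probability bound.

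The heart of the argument is bounding $\Var[\widehat{\brho}]$. By independence, expanding $\widehat{\brho}\otimes\widehat{\brho}$ into diagonal and off-diagonal pieces yields $\E[\widehat{\brho}\otimes\widehat{\brho}] = \tfrac{n'-1}{n'}\,\rho\otimes\rho + \tfrac{1}{n'}\E[\widehat{\brho}_1\otimes\widehat{\brho}_1]$, from which $\Var[\widehat{\brho}] = \tfrac{1}{n'}\Var[\widehat{\brho}_1]$. I would then substitute Theorem~\ref{thm:var} with $k$ in place of $n$ and compute $\tr(\swap\cdot\,\cdot)$ term by term using $\tr(\swap\cdot\rho^{\otimes 2}) = \tr(\rho^2)$, $\tr(\swap\cdot(\rho\otimes I)\swap) = \tr(\rho\otimes I) = d$, and $\tr(\swap\cdot\swap) = d^2$. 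Exactly as in the $\mathrm{Lower}_{\rho}$ discussion in the excerpt, the contribution of the lower-order terms is a sum of the form $\sum c_i\tr(P_i)^2 \ge 0$ and is being subtracted, so it can be dropped in an upper bound. This yields
\[
\Var[\widehat{\brho}] \;\le\; \frac{2d}{n} + \frac{\E[\ell(\blambda)]\,d^2}{n\,k}.
\]

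The remaining input is a sharp upper bound on $\E[\ell(\blambda)]$, the expected number of rows in the Young diagram returned by weak Schur sampling on $\rho^{\otimes k}$. The trivial bound $\ell(\blambda)\le d$ only gives $n = \Omega(d^4/(k\epsilon^2))$, which is strictly weaker than the target when $k \ll d^2$. I would instead invoke the stronger estimate $\E[\ell(\blambda)] = O(\min(d,\sqrt{k}))$, which holds uniformly in $\rho$ and is the quantum analog of the Vershik--Kerov bound on the longest row under Plancherel measure (transferred to $\ell(\blambda)$ via the conjugate partition). Since $\min(d,\sqrt{k})/k \le 1/\sqrt{k}$ for all $k$, this simplifies to $\Var[\widehat{\brho}] \le O(d/n + d^2/(n\sqrt{k}))$, and the inequality from the first paragraph gives
\[
\E[\mathrm{D}_{\mathrm{tr}}(\rho,\widehat{\brho})]^2 \;\le\; O\!\Bigl(\tfrac{d^2}{n}+\tfrac{d^3}{n\sqrt{k}}\Bigr) \;=\; O(\epsilon^2)
\]
under the stated hypothesis, after which Markov's inequality completes the proof.

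The main obstacle is precisely this $O(\sqrt{k})$ control on $\E[\ell(\blambda)]$: it is what matches the Chen--Li--Liu lower bound and removes their logarithmic factor. Crucially, having an unbiased estimator is what lets the proof isolate $\E[\ell(\blambda)]$ as the \emph{only} nontrivial ingredient; with a biased algorithm, one would additionally have to control the bias term $\|\E[\widehat{\brho}_1]-\rho\|$, which does not decay under averaging and which is exactly what forced prior analyses to lose log factors or to restrict the regime of $k$.
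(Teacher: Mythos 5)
Your proposal is correct and follows essentially the same route as the paper: unbiasedness gives $\Var[\widehat{\brho}] = \Var[\widehat{\brho}_1]/n'$, the second-moment formula (with the lower-order terms dropped by positivity) yields $\Var[\widehat{\brho}_1]\le 2d/k + d^2\,\E[\ell(\blambda)]/k^2$, the paper's Lemma~\ref{lem:bounds-on-row-number} supplies exactly the $\E[\ell(\blambda)]\le 2\sqrt{k}$ bound you invoke (via coupling to the uniform spectrum and the Vershik--Kerov/Plancherel estimate on $\blambda_1$, transferred to $\ell(\blambda)$ by conjugation), and the Cauchy--Schwarz/Jensen plus Markov step finishes it. Nothing is missing.
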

\noindent
By the lower bound of Chen, Li,  and Liu~\cite{CLL24a}, this bound is optimal, at least for all $k \leq 1/\epsilon^c$.
As they conjecture, we believe that our bound is optimal for all $k$, and we leave this question for future work.

\subsection{Application 4: shadow tomography}\label{sec:shadow-tomography}

Given $m$ observables $O_1, \ldots, O_m \in \C^{d \times d}$, \emph{shadow tomography} refers to the task of producing estimates of the observable values $\tr(O_1 \cdot \rho), \ldots, \tr(O_m \cdot \rho)$ given access to copies of $\rho$.
First introduced by Aaronson in~\cite{Aar18},
this problem occupies a central place in the field of quantum learning theory.
To  date, the best algorithm for general bounded observables (satisfying $\Vert O_i \Vert_{\infty} \leq 1$) is due to Bădescu and O'Donnell~\cite{BO24},
who showed that $n = \widetilde{O}(\log^2(m)\log(d)/\epsilon^4)$ copies suffice (see also the work of Bene Watts and Bostanci~\cite{BB24}, who reproved this bound with a different algorithm).
On the flip side, the only known lower bound is due to Aaronson, who showed that $n = \Omega(\log(m)/\epsilon^2)$ copies are necessary.

One natural approach for the shadow tomography problem
is the ``plug-in estimator'': simply run a full state tomography algorithm to produce an estimate $\widehat{\brho}$ of $\rho$
and ``plug it in'' to compute the observable values $\widehat{\bo}_1 \coloneqq \tr(O_1 \cdot \widehat{\brho}), \ldots, \widehat{\bo}_m \coloneqq \tr(O_m \cdot \widehat{\brho})$.
For example, if $\widehat{\brho}$ is the output of Keyl's algorithm on $n$ copies of $\rho$, then each $\widehat{\bo}_i$ will be within $\epsilon$ of the true observable value $\tr(O_i \cdot \rho)$ once $n = O(d^2/\epsilon^2)$, no matter how many observables $m$ there are.
This is because with this many samples,
Keyl's algorithm produces an estimate $\widehat{\rho}$ with $\mathrm{D}_{\mathrm{tr}}(\rho, \widehat{\brho}) \leq \epsilon$ with high probability, and being $\epsilon$ close in trace distance is equivalent to every observable value being correct within $\epsilon$.
However, if we use a smaller number of copies $n \ll d^2/\epsilon^2$, then each $\widehat{\bo}_i$ might be far from the corresponding $\tr(O_i \cdot \rho)$, as Keyl's algorithm does not give guarantees in this regime.
Keyl's algorithm fails in the ``sub-$d^2/\epsilon^2$ regime'' for a more fundamental reason, which is that the estimator $\widehat{\brho}$ it produces is heavily biased away from $\rho$ in this regime, which means that $\tr(O_i \cdot \widehat{\brho})$ will be far from $\tr(O_i \cdot \rho)$ for some observables $O_i$, at least in expectation.
This suggests using an unbiased estimator for $\rho$ in place of Keyl's algorithm;
in this case, $\widehat{\bo}_1, \ldots, \widehat{\bo}_m$ will be unbiased estimators for $\tr(O_1 \cdot \rho), \ldots, \tr(O_m \cdot \rho)$, respectively, and they may approximate these values with fewer copies of $\rho$ than it takes for $\widehat{\brho}$ to approximate all of $\rho$.
We note that this ``plug-in estimator'' approach
is one method for solving the more general \emph{oblivious shadow tomography} problem, in which one has to perform all of one's measurements on $\rho^{\otimes n}$ prior to learning the observables $O_1, \ldots, O_m$ to be estimated.
(The algorithm of Bădescu and O'Donnell, for example, is not oblivious, as the measurements it performs depend on the observables $O_1, \ldots, O_m$.)

Huang, Kueng, and Preskill~\cite{HKP20} studied this ``plug-in estimator'' approach for the case when $\widehat{\brho}$ is the output of the uniform POVM tomography algorithm on $n'$ copies of $\rho$.
In this case, they showed that each $\tr(O_i \cdot \widehat{\brho})$ is an unbiased estimator for $\tr(O_i \cdot \rho)$ with variance $O(\tr(O_i^2)/n') \leq O(F/n')$, where $F = \max\{\tr(O_1^2), \ldots, \tr(O_m^2)\}$.
By setting $n' = O(F/\epsilon^2)$, each $\tr(O_i \cdot \widehat{\brho})$ will have variance $O(\epsilon^2)$ and will therefore be equal to $\tr(O_i \cdot \rho) \pm O(\epsilon)$ with high probability. Suppose we then repeat this process $k$ times, generating observable estimates $\widehat{\bo}^{1}_1, \ldots, \widehat{\bo}^{1}_m$ through $\widehat{\bo}^{k}_1, \ldots, \widehat{\bo}^{k}_m$. If we then set $\widehat{\bo}_i = \mathrm{median}\{\widehat{\bo}^1_i, \ldots, \widehat{\bo}^k_i\}$ for all $1 \leq i \leq m$, then $\widehat{\bo}_i$ will be equal to $\tr(O_i \cdot \rho) \pm \epsilon$ for all $1 \leq i \leq m$ with high probability, once $k = O(\log(m))$. 
In total, this uses $n = n' \cdot k = O(\log(m) F/\epsilon^2)$ copies of $\rho$.\footnote{We note that the step of taking the median of multiple observable values is necessitated because Huang, Kueng, and Preskill also studied a variant of the uniform POVM tomography algorithm where the uniform POVM is replaced by a more computationally efficient measurement based on Clifford unitaries. 
Indeed, Helsen and Walter~\cite{HW23} showed that if $\widehat{\brho}$ is the output of the uniform POVM tomography algorithm on $n = O(\log(m) F/\epsilon^2)$ copies of $\rho$, then $\tr(O_i \cdot \widehat{\brho}) = \tr(O_i \cdot \rho) \pm \epsilon$ for all $1 \leq i \leq m$, no medians required.}
Note that if each $O_i$ is bounded, with $\Vert O_i \Vert_{\infty} \leq 1$, then $F$ can be as large as $d$, in which case this approach requires $n = O(\log(m)d/\epsilon^2)$ copies of $\rho$,
which is better than the $O(d^2/\epsilon^2)$ required by Keyl's algorithm but far from the copy complexities of the best algorithms.
If, on the other hand, the observables are all rank-1, then $F \leq 1$, and so this algorithm only needs $n = O(\log(m)/\epsilon^2)$ copies of $\rho$.
This is the same number of copies as Aaronson's lower bound from above, although strictly speaking Aaronson's lower bound holds against general and not just rank-1 observables and therefore does not apply.
Huang, Kueng, and Preskill called their algorithm the ``classical shadows algorithm'',
and as a result the more broad problem of oblivious shadow tomography is sometimes called the ``classical shadows problem''.

Grier, Pashayan, and Schaeffer~\cite{GPS24}
studied this ``plug-in estimator'' approach for the special case when $\rho$ is a pure state
and $\widehat{\brho}$ is the output of the bias-corrected pure state algorithm from \Cref{sec:pure-state}.
Following a similar analysis as~\cite{HKP20},
they showed that
\begin{equation*}
    n = O\Big(\log(m) \cdot \Big(\frac{\sqrt{F}}{\epsilon} + \frac{1}{\epsilon^2}\Big)\Big)
\end{equation*}
copies of $\rho$ suffice for shadow tomography.
This sample complexity bound splits into two regimes, based on whether $\epsilon$ is large or small:
when $\epsilon \geq 1/\sqrt{F}$, their sample complexity scales as $O(\log(m) \sqrt{F}/\epsilon)$,
and when $\epsilon \leq 1/\sqrt{F}$, their sample complexity scales as $O(\log(m)/\epsilon^2)$.
Interestingly, in the latter regime of small $\epsilon$, sometimes known as the ``high accuracy regime''~\cite{CLL24b}, their bound matches the lower bound of $n = \Omega(\log(m)/\epsilon^2)$ copies due to Aaronson,
although again here the Aaronson bound does not necessarily apply since his lower bound is shown for general states $\rho$ and not just for pure states.
However, their sample complexity \emph{is} optimal for the classical shadows problem,
as Huang, Kueng, and Preskill~\cite{HKP20} showed that
\begin{equation*}
    n = \Omega\Big(\log(m) \cdot \Big(\frac{\sqrt{d}}{\epsilon} + \frac{1}{\epsilon^2}\Big)\Big)
\end{equation*}
copies are necessary for the classical shadows problem,
even in the special case when $\rho$ is pure.

Subsequent work has attempted to generalize the Grier, Pashayan, and Schaeffer result to the case when $\rho$ is a general mixed state.
For example, the work of Grier, Liu, and Mahajan~\cite{GLM24} gave an algorithm for the classical shadows problem which uses $n = O(\log(m) \cdot \sqrt{rF}/\epsilon^2)$ copies in the case when $\rho$ is rank $r$.
For small values of $r$, this improves on the sample complexity of the result of~\cite{HKP20}
by a factor of $\sqrt{F}$;
however, it does not achieve a sample complexity of $O(\log(m)/\epsilon^2)$ in the high accuracy regime of small $\epsilon$.
On the flip side, Chen, Li, and Liu~\cite{CLL24b} focused exclusively on the high accuracy regime
and showed that in the case when $\rho$ is allowed to have arbitrary rank,
$n = O(\log(m)/\epsilon^2)$ copies are sufficient for the classical shadows problem, at least in the high accuracy regime of $\epsilon = O(1/d^{12})$.
As their result holds for general states and observables, the $n = \Omega(\log(m)/\epsilon^2)$ lower bound of Aaronson finally applies,
meaning that theirs is the first provably optimal algorithm for the general shadow tomography problem, for any regime of $\epsilon$.
This implies, perhaps surprisingly, that in the high accuracy regime, knowing the observables in advance does not actually help when performing shadow tomography.
More broadly, they conjectured that
\begin{equation*}
    n = \Theta\Big(\log(m) \cdot \Big(\frac{d}{\epsilon} + \frac{1}{\epsilon^2}\Big)\Big)
\end{equation*}
should be the optimal sample complexity for the classical shadows problem on general states and observables.

We study the ``plug-in estimator'' approach 
in the case when $\widehat{\brho}$ is the output of the debiased Keyl's algorithm.
Our main result improves on the sample complexity of~\cite{GLM24} in the low accuracy regime,
improves the threshold for the high accuracy regime in~\cite{CLL24b} from $\epsilon = O(1/d^{12})$ to $\epsilon=O(1/d)$,
and even refutes the conjecture of~\cite{CLL24b} by improving on their conjectured optimal bound.

\begin{theorem}[Classical shadows]\label{thm:classical_shadows}
    The classical shadows task for states of rank $r$ and observables $O_1, \ldots, O_m$ with $\tr(O_i^2) \leq F$ can be solved using
    \begin{equation*}
        n = O\Big(\log(m) \cdot \Big(\min\Big\{\frac{\sqrt{r F}}{\epsilon}, \frac{F^{2/3}}{\epsilon^{4/3}}\Big\} + \frac{1}{\epsilon^2}\Big)\Big)
    \end{equation*}
    copies of $\rho$.
\end{theorem}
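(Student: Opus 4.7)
The strategy is the plug-in estimator: partition the $n$ copies of $\rho$ into $n/k$ disjoint blocks of $k$ copies each, apply the debiased Keyl's algorithm to each block to obtain unbiased estimators $\widehat{\brho}_1,\ldots,\widehat{\brho}_{n/k}$, and for each observable $O_i$ combine the scalars $\tr(O_i\widehat{\brho}_j)$ via median-of-means across blocks. By \Cref{thm:unbiased}, each $\tr(O_i\widehat{\brho}_j)$ is an unbiased estimator of $\tr(O_i\rho)$, so everything reduces to variance control.

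The first ingredient is a per-block variance bound obtained by contracting \Cref{thm:var} with $O\otimes O$. Writing $\E[\tr(O\widehat{\brho})^2]=\tr((O\otimes O)\cdot\E[\widehat{\brho}\otimes\widehat{\brho}])$ and using the identity $\tr((A\otimes B)\,\swap)=\tr(AB)$, the first three terms of \Cref{eq:second-moment} contract to $\tfrac{k-1}{k}\tr(O\rho)^2+\tfrac{2}{k}\tr(O^2\rho)+\tfrac{\E[\ell(\blambda)]}{k^2}\tr(O^2)$, while the $-\mathrm{Lower}_\rho$ contribution is non-positive by exactly the argument the excerpt gives after \Cref{thm:var} (with $P$ the PSD factors in $(P\otimes P)\swap$) and so can be discarded. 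Subtracting $\tr(O\rho)^2$ and using $\tr(O^2\rho)\leq\tr(O^2)\leq F$, we obtain
\begin{equation*}
\Var[\tr(O\widehat{\brho})]\;\leq\;\frac{2F}{k}+\frac{\E[\ell(\blambda)]\,F}{k^2}.
\end{equation*}

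The second ingredient is a bound on $\E[\ell(\blambda)]$. For a rank-$r$ state, $\rho^{\otimes k}$ is supported on $(\operatorname{supp}\rho)^{\otimes k}$, and by Schur--Weyl duality every isotypic component indexed by a Young diagram with $\ell(\lambda)>r$ vanishes on this subspace; hence $\ell(\blambda)\leq r$ almost surely. Without a rank assumption we use the trivial $\ell(\blambda)\leq\min(k,d)$. Averaging $T$ blocks divides the variance by $T$, so Chebyshev combined with median-of-means over $\Theta(\log m)$ groups yields uniform $\epsilon$-accuracy across all $m$ observables with $n=kT\log m=\Theta(k V\log m/\epsilon^2)$ copies. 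Plugging in the rank-$r$ bound $kV\leq 2F+rF/k$ and optimizing at $k=\Theta(\sqrt{rF}/\epsilon)$ gives the $\sqrt{rF}/\epsilon$ term.

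The remaining $F^{2/3}/\epsilon^{4/3}$ and additive $1/\epsilon^2$ terms require sharper concentration than Chebyshev. The plan is to truncate $\tr(O\widehat{\brho})$ to a bounded window around $\tr(O\rho)\in[-1,1]$, control the truncation bias via the variance bound above, and apply a Bernstein-type tail inequality to the resulting bounded, low-variance per-block estimator; balancing the truncation threshold, the block size $k$, and Bernstein's variance/range trade-off yields the $F^{2/3}/\epsilon^{4/3}$ rank-free term in the moderate-$\epsilon$ regime and the $1/\epsilon^2$ term in the high-accuracy regime. The main obstacle is the $F^{2/3}/\epsilon^{4/3}$ bound, which strictly improves on the naive $F/\epsilon^2$ one gets directly from Chebyshev; obtaining it requires delicately trading off the $O(\sqrt{F/k})$ typical fluctuation of each per-block estimator against its almost-sure range after truncation. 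By contrast, the rank-$r$ term and the $1/\epsilon^2$ term fall out of the variance bound together with median-of-means and a standard truncation argument.
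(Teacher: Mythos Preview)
Your setup (block + median-of-means with the debiased Keyl estimator) matches the paper, and your contraction of \Cref{thm:var} against $O\otimes O$ is correct up to one crucial step. The gap is in the variance bound: you upper-bound $\tr(O^2\rho)\le \tr(O^2)\le F$, whereas the paper uses $\tr(O^2\rho)\le \|O\|_\infty^2\le 1$ (the classical shadows task is stated for bounded observables). This changes the first term of the per-block variance from your $2F/k$ to the paper's $2/k$, and that single factor is what produces the additive $1/\epsilon^2$ term. With your bound, the first term alone already forces $n=\Theta(F\log(m)/\epsilon^2)$, i.e.\ you recover only the Huang--Kueng--Preskill rate and cannot reach the claimed sample complexity, no matter what concentration inequality you apply afterward.

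The second gap is your control on $\E[\ell(\blambda)]$. You use only the deterministic bounds $\ell(\blambda)\le r$ and $\ell(\blambda)\le\min(k,d)$. The paper instead proves $\E[\ell(\blambda)]\le 2\sqrt{k}$ via a majorization/coupling argument to the Plancherel measure; this is precisely what makes the second variance term $O(F/k^{3/2})$ and hence yields the $F^{2/3}/\epsilon^{4/3}$ rate by choosing $k=\Theta(F^{2/3}/\epsilon^{4/3})$. Your trivial $\min(k,d)$ bound gives only $F/k$, which again collapses to $F/\epsilon^2$. Once both of these sharper inputs are in place, the paper's proof is just Chebyshev plus median over $\Theta(\log m)$ repetitions---no Bernstein inequality, no truncation, and no ``delicate trade-off'' is needed. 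Your proposed Bernstein/truncation route is a workaround for a self-inflicted variance loss rather than a necessary ingredient.
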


To understand our sample complexity bound, let us consider two special cases. In the $r = 1$ case of pure states, $r = 1$, we have that $F^{1/2}/\epsilon$ is less than $F^{2/3}/\epsilon^{4/3} = (F^{1/2}/\epsilon)^{4/3}$ unless $F^{1/2}/\epsilon < 1$, in which case the first term is trivial and can be discarded. This means the ``min'' term is just $\sqrt{F}/\epsilon$ and the overall sample complexity is
\begin{equation*}
    n = O\Big(\log(m) \cdot \Big(\frac{\sqrt{F}}{\epsilon} + \frac{1}{\epsilon^2}\Big)\Big),    
\end{equation*}
matching the bound of Grier, Pashayan, and Schaeffer~\cite{GPS24}.
On the flip side, in the fully general case of $r = F = d$, the bound becomes
\begin{equation*}
        n = O\Big(\log(m) \cdot \Big(\min\Big\{\frac{d}{\epsilon}, \frac{d^{2/3}}{\epsilon^{4/3}}\Big\} + \frac{1}{\epsilon^2}\Big)\Big).
\end{equation*}
Here, the ``min'' term is equal to $d^{2/3}/\epsilon^{4/3}$ for $\epsilon \geq 1/d$ and $d/\epsilon$ for $\epsilon \leq 1/d$.
But in the latter case, the $1/\epsilon^2$ term dominates both of these terms, meaning that we can simplify the sample complexity to be 
\begin{equation*}
        n = O\Big(\log(m) \cdot \Big(\frac{d^{2/3}}{\epsilon^{4/3}} + \frac{1}{\epsilon^2}\Big)\Big),
\end{equation*}
which improves on the conjectured upper bound of~\cite{CLL24b}
in the low accuracy regime.
We conjecture that our upper bound is optimal,
and leave the task of proving matching lower bounds to future work.

\subsection{Application 5: quantum metrology}

Quantum metrology is the study of optimal measurements for estimating parameters of quantum systems.
A parameterized quantum state is a density matrix $\rho_{\theta}$ which depends smoothly on some parameters $\theta = (\theta_1, \ldots, \theta_m) \in \Theta \subseteq \R^m$.
There is a fixed state $\rho_{\theta^*}$ whose parameters $\theta^* \in \Theta$ one is assumed to already know.
Now, one is given a state $\rho_{\theta}$, where $\theta$ is ``in the neighborhood'' of $\theta^*$, and the goal is to estimate the parameters $\theta$. 
This models, for example, a common strategy for learning an unknown parameterized quantum channel $\Phi_{\theta}$.
To do this, one might initialize a starting ``probe state'' $\rho_{\mathrm{probe}}$, send it through the channel $\Phi_{\theta}$, and measure the resulting state $\rho_{\theta}$ to learn $\theta$.
The better a measurement one uses, the better an estimate of $\Phi_{\theta}$ one can produce.
Furthermore, one is assumed to already have prior knowledge that $\theta$ is close to some parameter vector $\theta^*$, and one is allowed to use this knowledge to design one's measurements.

To formalize this question, we focus on \emph{locally unbiased estimators}.
As we have already seen, an estimator is an algorithm which takes as input a copy of $\rho_{\theta}$ and produces a (randomly distributed) estimate $\widehat{\btheta}$ of $\theta$, and it is unbiased if $\E[\widehat{\btheta}] = \theta$. Being locally unbiased is a weaker condition: it requires only that
\begin{equation*}
(\mathrm{i})~\E[\widehat{\btheta} \mid \rho_{\theta^*}] = \theta^*,
\quad\text{and}\quad
(\mathrm{ii})~\frac{\partial}{\partial\theta_i}\E[\widehat{\btheta} \mid \rho_{\theta}]\Big|_{\theta = \theta^*} = 0.
\end{equation*}
The first of these conditions states that $\widehat{\btheta}$ is indeed an unbiased estimator, but just on the state $\rho_{\theta^*}$.
The second of these conditions states that $\widehat{\btheta}$ remains unbiased, at least up to first order, in a small neighborhood around $\theta^*$. 
Given a locally unbiased estimator, 
the figure of merit we associate with it is its \emph{mean squared error matrix (MSEM)} $V \in \R^{m \times m}$, defined as
\begin{equation*}
    V_{i j} = \E[(\widehat{\btheta}_i - \theta_i^*)\cdot(\widehat{\btheta}_j - \theta_j^*) \mid \rho_{\theta^*}],
\end{equation*}
where $\widehat{\btheta}$ is the output of the estimator on the state $\rho_{\theta^*}$.

The most well-known lower bound on $V$ is in terms of the \emph{Quantum Fisher Information (QFI)} matrix $\calF$.
To define $\calF$, we first define the \emph{symmetric logarithm derivative} matrices $L_1, \ldots, L_m$ implicitly via the equations
\begin{equation*}
\frac{\partial}{\partial \theta_i} \rho_{\theta}|_{\theta = \theta^*} = \frac{1}{2} L_i \rho_{\theta^*} + \frac{1}{2} \rho_{\theta^*} L_i, \quad \text{ for all $1 \leq i \leq m$}.
\end{equation*}
These equations do not actually uniquely specify $L_1, \ldots, L_m$, but any matrices which satisfy these equations will lead to the same QFI matrix $\calF$.
Given these, we define
\begin{equation*}
    \calF_{i j} = \frac{1}{2} \tr(\rho_{\theta^*} L_i L_j) + \frac{1}{2} \tr(\rho_{\theta^*} L_j L_i).
\end{equation*}
The \emph{quantum Cramér–Rao bound (QCRB)}, shown by Helstrom in~\cite{Hel67b},
states that
\begin{equation*}
    V \succeq \calF^{-1}.
\end{equation*}
This places a lower bound on the variance of any locally unbiased estimator for $\theta$.

Is the QCRB aqually achievable?
The answer is yes, when $m = 1$, given asymptotically many copies of $\rho_{\theta}$.
To explain this, suppose we are given $n$ copies of $\rho_{\theta}$ rather than one.
This can be viewed as being given one copy of the parameterized state $\rho^n_{\theta} \coloneqq \rho_{\theta}^{\otimes n}$.
If $\calF$ is the QFI matrix of the $\rho_{\theta}$'s and $\calF_n$ is the QFI matrix of the $\rho^n_{\theta}$'s, they are related via the equality $\calF_n = n \cdot \calF$.
Hence, if $V_n$ is the MSEM of a locally unbiased estimator for the $\rho^n_{\theta}$'s, then the QCRB implies that
\begin{equation}\label{eq:qcrb}
    V_n \succeq \calF_n^{-1} = \frac{1}{n}\cdot \calF^{-1},
\end{equation}
or, equivalently, $n \cdot V_n \succeq \calF^{-1}$.
We say that the QCRB is achievable asymptotically if there is a sequence of such locally unbiased estimators, one for each value $n$,
such that $n \cdot V_n \rightarrow \calF^{-1}$ as $n \rightarrow \infty$.
In this case, $V_n$ is equal to $(1/n) \cdot \calF^{-1}$, plus other terms of order $o(1/n)$ which, albeit possibly large for small $n$, disappear as $n$ increases.
It was shown by Braunstein and Caves~\cite{BC94} that the QCRB is indeed achievable asymptotially in the $m = 1$ case of a single unknown parameter.

When $m > 1$, the problem is more challenging, and there are examples of multiparameter estimation problems for which the QCRB is not achievable, even asymptotically,
the key issue being the  noncommutativity of different quantum measurements.
To simplify the algorithm's task, let us introduce
a positive semidefinite cost matrix $C \in \R^{m \times m}$ and assign to the estimator the cost
\begin{equation*}
    \tr(C \cdot V)
    = \sum_{i, j=1}^m C_{i j} \cdot \E[(\widehat{\btheta}_i - \theta_i^*)\cdot(\widehat{\btheta}_j - \theta_j^*) \mid \rho_{\theta^*}].
\end{equation*}
Given this cost matrix, the QCRB implies that
\begin{equation}\label{eq:costly-qcrb}
    \tr(C \cdot V) \geq \tr(C \cdot \calF^{-1})
\end{equation}
for any locally unbiased estimator.
Achieving the QCRB in \Cref{eq:qcrb} is equivalent to achieving the bound in \Cref{eq:costly-qcrb} for all cost matrices $C$ simultaneously;
hence, achieving the bound in \Cref{eq:costly-qcrb} for just a single cost matrix $C$ is potentially an easier problem,
as perhaps one can hand-tailor the locally unbiased estimator for this particular cost matrix.
As it turns out, the QCRB with a specific cost matrix $C$ is still not achievable, even asymptotically.

However, there is a slight strengthening of \Cref{eq:costly-qcrb} which is achievable asymptotically.
In \cite{Hol11}, Holevo introduced a quantity $\mathsf{Hol}(C)$
and showed that $\tr(C \cdot V) \geq \mathsf{Hol}(C)$, an inequality known as the \emph{Holevo Cramér–Rao bound (HCRB)}.
For our purposes, it is not important what $\mathsf{Hol}(C)$ is exactly, only that it satisfies two properties. The first property is that
\begin{equation*}
    2 \cdot \tr(C \cdot \calF^{-1}) \geq \mathsf{Hol}(C) \geq \tr(C \cdot \calF^{-1}),
\end{equation*}
and so the HCRB is only stronger than the QCRB by at most a factor of 2.
The first inequality was shown by~\cite{ATD19,CSDV19} and is known to be tight in some cases; see \cite[Section 3.1.1]{DGG20} for a simple example.
The second property we will need is that if $\mathsf{Hol}_n(C)$ is the Holevo bound corresponding to the $n$ copy states $\rho_{\theta}^n = \rho_{\theta}^{\otimes n}$, then, similarly to the QFI matrices, we have $\mathsf{Hol}(C) = n \cdot \mathsf{Hol}_n(C)$.
Thus, the HCRB implies that
\begin{equation*}
    \tr(C \cdot V_n) \geq \mathsf{Hol}_n(C) = \frac{1}{n} \cdot \mathsf{Hol}(n).
\end{equation*}
We say that the HCRB is achievable asymptotically if $n \cdot \tr(C \cdot V_n) \rightarrow \mathsf{Hol}(n)$ as $n \rightarrow \infty$.
A line of work studying quantum local asymptotic normality~\cite{KG09,YFG13,YCH19} has developed estimators which do achieve the HCRB asymptotically.
This means that in the setting of asymptotically many copies, the HCRB is tight.

This paper is about designing unbiased estimators for full state tomography,
and naively it seems like having good unbiased estimators for quantum tomography would help in creating good locally unbiased estimators for quantum metrology.
Motivated by this reasoning,
a recent work of Zhou and Chen~\cite{ZC25}
studied a natural locally unbiased estimator for quantum metrology which first computes the $n = 1$ copy unbiased estimator $\widehat{\brho}$ from~\Cref{sec:single-copy} and then performs classical postprocessing on the outcome.
They showed that in the case when the parameterized states $\rho_{\theta}$ are all rank-one, the MSEM $V$ of their estimator satisfies
\begin{equation}\label{eq:their-bound}
    V = \frac{4 (d+1)}{d+2} \cdot \calF^{-1} \preceq 4 \cdot \calF^{-1}.
\end{equation}
As $V \succeq \calF^{-1}$, this shows that $V$ is within a factor of 4 of being optimal;
they referred to such estimators whose MSEM is within a constant factor of $\calF^{-1}$ as \emph{near optimal}.
Now, it was already known that the HCRB was attainable for rank one states $\rho_{\theta}$ due to the work of Matsumoto~\cite{Mat02},
even in the non-asymptotic regime where only one copy of $\rho_{\theta}$ is provided.
However, Zhou and Chen's estimator has the advantage that because it provides an approximation to $\calF^{-1}$ rather than just the $\mathsf{Hol}(C)$, it is oblivious to the cost matrix $C$,
in the sense that the measurement is independent of the cost matrix $C$ (in fact, it is independent of the parameterized state family $\rho_{\theta}$ altogether), 
and so one can measure first and be provided $C$ later.
(This is similar to the sense in which classical shadows algorithms are oblivious.)
They also generalized their results to hold in the case when the state family $\rho_{\theta}$ is low rank rather than just rank one, but their guarantees degrade as $\rho_{\theta^*}$ becomes more and more poorly conditioned.

What about when more than one copy of $\rho_{\theta}$ is available?
As we have better unbiased estimators for larger $n$,
one might hope that we can use these to design better locally unbiased estimators for quantum metrology.
To study this, we consider the locally unbiased estimator we get when we replace the unbiased estimator Zhou and Chen use with the $n$-copy debiased Keyl's algorithm,
but keep the classical post-processing the same.
We show the following result.

\begin{theorem}[Quantum metrology] \label{thm:metrology_intro}
    Let $\rho_{\theta}$ be a  parameterized state family with probe state \begin{equation*}
        \rho_{\theta^*} = \sum_{i=1}^d \alpha_i \cdot \ketbra{v_i},
    \end{equation*}
    where we assume without loss of generality that the eigenvalues are sorted so that $\alpha_1 \geq \cdots \geq \alpha_r > 0$ and $\alpha_{r+1} = \cdots = \alpha_d = 0$
    (so that $\rho_{\theta^*}$ is rank $r$).
    Then for every $n$, there is a locally unbiased estimator which takes as input $n$ copies of $\rho_{\theta}$
    and whose MSEM $V_n$ has the following property.
    When $n = r/(\epsilon \cdot \alpha_r)$,
    \begin{equation*}
        V_n \preceq (1 + \epsilon) \cdot \frac{2}{n} \cdot \calF^{-1}.
    \end{equation*}
    Hence, as $n \rightarrow \infty$, we have $n \cdot V_n \rightarrow 2\cdot\calF^{-1}$,
    and so this estimator achieves twice the QCRB asymptotically.
\end{theorem}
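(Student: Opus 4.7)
The strategy is to apply Zhou and Chen's linear post\-processing to the $n$-copy debiased Keyl's estimator and then reduce the MSEM upper bound to a short calculation using \Cref{thm:var}. First, choose SLDs $L_1,\ldots,L_m$ supported on the range of $\rho_{\theta^*}$ (the SLD equations only constrain them there), and define
\begin{equation*}
\widehat{\btheta}_i \;\coloneqq\; \theta_i^* + \sum_{k=1}^m (\calF^{-1})_{ik}\,\tr\!\bigl(L_k\,(\widehat{\brho}-\rho_{\theta^*})\bigr),
\end{equation*}
where $\widehat{\brho}$ is the output of the debiased Keyl's algorithm on $\rho_{\theta}^{\otimes n}$. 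Local unbiasedness is immediate from \Cref{thm:unbiased}: $\E[\widehat{\brho}\mid\rho_\theta]=\rho_\theta$ gives condition (i) directly, and differentiating in $\theta_j$ while using $\partial_j\rho_\theta\big|_{\theta^*}=\tfrac{1}{2}(L_j\rho_{\theta^*}+\rho_{\theta^*}L_j)$ together with the definition of $\calF_{jk}$ yields $\sum_k (\calF^{-1})_{ik}\calF_{jk}=\delta_{ij}$, establishing condition (ii).

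For the MSEM bound, I contract with an arbitrary vector $v\in\R^m$. Setting $M_v \coloneqq \sum_{i,k} v_i\,(\calF^{-1})_{ik} L_k$, which is Hermitian and supported on $\mathrm{supp}(\rho_{\theta^*})$, the identity $\tr(A)\tr(B)=\tr(A\otimes B)$ gives
\begin{equation*}
v^\top V_n v \;=\; \tr\!\Bigl((M_v\otimes M_v)\cdot\bigl(\E[\widehat{\brho}\otimes\widehat{\brho}] - \rho_{\theta^*}\otimes\rho_{\theta^*}\bigr)\Bigr).
\end{equation*}
Now plug in \Cref{thm:var} and evaluate the SWAP terms via $\tr((A\otimes B)\swap)=\tr(AB)$. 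The $\mathrm{Lower}_{\rho_{\theta^*}}$ piece contributes nonpositively by the observable-squared calculation given in the discussion after \Cref{thm:var} (since $M_v$ is Hermitian), so it can be dropped for an upper bound, leaving
\begin{equation*}
v^\top V_n v \;\le\; -\tfrac{1}{n}\bigl(\tr(M_v\rho_{\theta^*})\bigr)^2 + \tfrac{2}{n}\tr(\rho_{\theta^*}M_v^2) + \tfrac{\E[\ell(\blambda)]}{n^2}\,\tr(M_v^2).
\end{equation*}

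Two key identities finish the calculation. First, $\tr(\rho_{\theta^*}M_v^2)=v^\top\calF^{-1}v$: expanding with $c=\calF^{-1}v$ gives $\sum_{k,l} c_k c_l \tr(\rho_{\theta^*}L_kL_l)$, and the antisymmetric part in $(k,l)$ vanishes under the symmetric weights $c_kc_l$, leaving $c^\top\calF c = v^\top\calF^{-1}v$. Second, because $M_v$ is supported where $\rho_{\theta^*}\succeq \alpha_r\,\Pi$ (with $\Pi$ the projector onto $\mathrm{supp}(\rho_{\theta^*})$), we get $\tr(M_v^2)\le \alpha_r^{-1}\tr(\rho_{\theta^*}M_v^2)=\alpha_r^{-1}\,v^\top\calF^{-1}v$. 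Dropping the nonpositive $-\tfrac{1}{n}(\tr(M_v\rho_{\theta^*}))^2$ term and combining these bounds yields
\begin{equation*}
v^\top V_n v \;\le\; \tfrac{2}{n}\Bigl(1 + \tfrac{\E[\ell(\blambda)]}{2\,n\,\alpha_r}\Bigr)\,v^\top \calF^{-1} v.
\end{equation*}

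The final ingredient, which I expect to be the main obstacle, is a bound $\E[\ell(\blambda)] = O(r)$ for $\rho_{\theta^*}$ of rank $r$. This should follow from Young-diagram concentration of the type developed in~\cite{ARS88,KW01,OW15,OW16}: since $\rho_{\theta^*}$ has only $r$ nonzero eigenvalues, the rows $\blambda_{r+1},\ldots,\blambda_d$ each have expectation bounded by a small constant depending on $n$ and $d$, so the ``effective length'' of $\blambda$ concentrates on $r$. (There is some care needed because $\ell(\cdot)$ counts nonzero rows and not the full-tail mass, but the same moment-of-moments machinery cited throughout the paper controls either quantity.) Plugging $\E[\ell(\blambda)]=O(r)$ into the display above and choosing $n = r/(\epsilon\alpha_r)$ makes the correction factor at most $1+\epsilon$, giving $V_n\preceq (1+\epsilon)\cdot\tfrac{2}{n}\calF^{-1}$ as claimed. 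The asymptotic statement $nV_n\to 2\calF^{-1}$ then follows by letting the $1/(n\alpha_r)$ correction term vanish as $n\to\infty$ with $\rho_{\theta^*}$ fixed.
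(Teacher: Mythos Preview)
Your proposal is correct and follows essentially the same route as the paper: apply Zhou--Chen's local shadow estimator with the debiased Keyl's output, plug in \Cref{thm:var}, drop the $\mathrm{Lower}$ term via positivity, and compare $\tr(M_v^2)$ to $\tr(\rho_{\theta^*}M_v^2)$ using that $M_v$ lives on the support of $\rho_{\theta^*}$ where $\rho_{\theta^*}\succeq\alpha_r\Pi$. The paper presents this by conjugating the MSEM by $\calF$ rather than contracting with a test vector $v$, and it bounds $K\preceq\calF/(2\alpha_r)$ via the explicit eigenbasis formula for the SLDs rather than your support argument, but these are equivalent packagings of the same inequality.

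The one place you overthink things is the bound on $\E[\ell(\blambda)]$: this requires no concentration at all. When $\rho_{\theta^*}$ has rank $r$, weak Schur sampling \emph{deterministically} returns $\blambda$ with $\ell(\blambda)\le r$, because $s_\lambda(\alpha)=0$ whenever $\ell(\lambda)>r$ (every SSYT with more than $r$ rows must use a letter $>r$, and those eigenvalues vanish). So $\E[\ell(\blambda)]\le r$ with an exact constant, which is precisely what you need to get $n=r/(\epsilon\alpha_r)$ rather than $n=O(r)/(\epsilon\alpha_r)$.
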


As a result, for asymptotically large $n$ and any cost matrix $C$, we have that $n \cdot \tr(C \cdot V_n) \rightarrow 2 \cdot \tr(C\cdot \calF^{-1}) \leq 2 \cdot \mathsf{Hol}(C)$.
This implies the surprising fact that, in the asymptotic limit,
knowing the cost matrix $C$ ahead of time only buys you at most a factor of 2 in the final cost of the estimator.
This is the optimal result of this form that one could hope for.
To see this, recall that there are examples in which $\mathsf{Hol}(C) = 2 \cdot \tr(C \cdot \calF^{-1})$.
In these cases,
we have
\begin{equation*}
    \tr(C \cdot V_n) \geq \mathsf{Hol}_n(C)= \frac{1}{n} \cdot  \mathsf{Hol}(C) = \frac{1}{n} \cdot 2 \cdot \tr(C \cdot \calF^{-1}).
\end{equation*}
Hence, $n \cdot \tr(C \cdot V_n) \geq 2 \cdot \tr(C \cdot \calF^{-1})$ for all $n$ in such cases, and so one cannot have $n \cdot V_n \succeq  \beta \cdot \calF^{-1}$ for any constant $\beta < 2$.
Finally, we note that when $\rho_{\theta^*}$ is rank $r = 1$, then $\alpha_1 = 1$, and so if we apply \Cref{thm:metrology_intro} with $n = 1$, and therefore with $\epsilon = 1$, we get $V_1 \preceq 4 \cdot \calF^{-1}$, which recovers Zhou and Chen's bound from \Cref{eq:their-bound}. 

\subsection{Technical overview}

\newcommand{\medtableau}[1]{\ytableausetup{boxsize=1em, centertableaux,boxframe=normal}  \, \begin{ytableau} #1 \end{ytableau} \, \ytableausetup{boxsize=normal}}

\newcommand{\smalltableau}[1]{\ytableausetup{smalltableaux, centertableaux,boxframe=normal}  \, \begin{ytableau} #1 \end{ytableau} \, \ytableausetup{nosmalltableaux}}

\newcommand{\smalldiag}[1]{\ytableausetup{boxsize=0.36em, centertableaux,boxframe=normal} \, \ydiagram{#1} \, \ytableausetup{boxsize=normal}}

\newcommand{\meddiag}[1]{\ytableausetup{boxsize=0.75em, centertableaux,boxframe=normal} \, \ydiagram{#1} \, \ytableausetup{boxsize=normal}}

The key technical challenge that we overcome in this work is understanding how to compute exact expressions for the first and second moments $\E[\widehat{\brho}]$ and $\E[\widehat{\brho} \otimes \widehat{\brho}]$ of our representation-theoretic estimators. 
What makes this difficult is that to do so, one must understand the probability distributions on the Young diagram $\blambda$ and the unitary $\bU$ produced by Keyl's measurement, but this distribution involves complicated representation-theoretic formulas which are unwieldy to manipulate. Indeed, although the two works~\cite{OW16,HHJ+16} were able to analyze three separate entangled tomography algorithms and show that they are either sample-optimal or almost sample-optimal, neither work was able to derive expressions for even the first moment of their estimators.

\subsubsection{The single copy case}\label{sec:single-copy-overview}
To illustrate our techniques, let us consider the simplest possible case, when we are given only a single copy of~$\rho$.
In this case, it turns out that Keyl's measurement always returns the Young diagram $\blambda = \square$.
In addition, the unitary $\bU$ it returns has probability density function
\begin{equation}\label{eq:density}
    \Pr[\bU = U \mid \rho]
    =  d \cdot \bra{1} U^{\dagger} \rho U \ket{1} \cdot \dU.
\end{equation}
Since $\blambda^{\uparrow} = (d, -1, \ldots, -1)$, our debiased Keyl's algorithm will output the estimator
\begin{equation*}
    \widehat{\brho} = d \cdot \ketbra{\bu_1} - \ketbra{\bu_2} - \cdots - \ketbra{\bu_d},
\end{equation*}
where $\ket{\bu_i} \coloneqq \bU \cdot \ket{i}$.
Intuitively, $\ket{\bu_1}$ should be a rough approximation of the largest eigenvector of $\rho$,
or should at least have decent overlap with $\rho$'s larger eigenvectors.
This is because the PDF in \Cref{eq:density} scales with $\bra{1} U^{\dagger} \rho U \ket{1}$, and is  therefore maximized for any unitary $U$ such that $U \cdot \ket{1}$ is the largest eigenvector of $\rho$.
As we have already pointed out in \Cref{sec:debiased_Keyl's_intro}, this is equivalent to the uniform POVM tomography algorithm from \Cref{sec:single-copy}.

How to compute the expectation $\E[\widehat{\brho}]$ of this estimator?
It turns out that for the uniform POVM tomography algorithm, there is a slick, two-line proof that $\E[\widehat{\brho}] = \rho$~\cite[Section 5.1]{Wri16}.
Instead, we will describe a significantly more cumbersome version of this argument, as it is this version that we were able to extend to the case of larger $n$.
To begin, we can model a generic tomography algorithm by a POVM $M = \{M_{\widehat{\rho}}\}_{\widehat{\rho}}$,
so that the algorithm measures $\rho^{\otimes n}$ with $M$, receives an outcome $\widehat{\brho}$, and then outputs $\widehat{\brho}$ as its estimator. 
Then the expectation can be written as
\begin{equation*}
    \E[\widehat{\brho}]
    = \sum_{\widehat{\rho}} \tr(M_{\widehat{\rho}} \cdot \rho^{\otimes n}) \cdot \widehat{\rho}
    = \tr_{[n]}\Big(\Big(\sum_{\widehat{\rho}} M_{\widehat{\rho}} \otimes \widehat{\rho}\Big) \cdot \rho^{\otimes n} \otimes I\Big)
    \eqqcolon \tr_{[n]}(M_{\mathrm{avg}} \cdot \rho^{\otimes n} \otimes I).
\end{equation*}
It therefore suffices to compute a formula for the matrix $M_{\mathrm{avg}}$.
When $n = 1$ and the algorithm is the debiased Keyl's algorithm, this matrix is
\begin{equation*}
    M_{\mathrm{avg}}
    = \int_U \Big(d \cdot U \ketbra{1} U^{\dagger} \cdot \dU\Big) \otimes \sum_{i=1}^d \lambda_i^{\uparrow} \cdot U \ketbra{i} U^{\dagger}
    = d \cdot \sum_{i=1}^d \lambda_i^{\uparrow} \cdot \int_U U^{\otimes 2} \ketbra{1, i} U^{\dagger, \otimes 2} \cdot \dU.
\end{equation*}

\paragraph{The symmetric and antisymmetric subspaces.}
As a first step, let us compute a formula for the integral
\begin{equation}\label{eq:want-formula}
    \int_U U^{\otimes 2} \ketbra{1, i} U^{\dagger, \otimes 2} \cdot \dU.
\end{equation}
This is relatively simple in the case when $i = 1$, because the vector $\ket{1, 1}$ is an element of the \emph{symmetric subspace}
\begin{equation*}
    V_{\smalldiag{2}} \coloneqq \{\ket{\psi} \in (\C^d)^{\otimes 2} \mid \swap \cdot \ket{\psi} = \ket{\psi}\} = \mathrm{span}\{\ket{v}\otimes \ket{v} \mid \ket{v} \in \C^d\}.
\end{equation*}
The symmetric subspace is an irreducible representation of the unitary group, which implies that for any vector $\ket{\psi_{\smalldiag{2}}} \in V_{\smalldiag{2}}$,
\begin{equation}\label{eq:formula-1}
    \int_U U^{\otimes 2} \ketbra{\psi_{\smalldiag{2}}} U^{\dagger, \otimes 2} \cdot \dU = \frac{\Pi_{\smalldiag{2}}}{\dim(V_{\smalldiag{2}})},
\end{equation}
where $\Pi_{\smalldiag{2}}$ is the projector onto $V_{\smalldiag{2}}$. In particular, this applies for $\ket{\psi_{\smalldiag{2}}} = \ket{1,1}$, which gives us a formula for \Cref{eq:want-formula} when $i = 1$.

When $i > 1$, however, $\ketbra{1, i}$ is not entirely contained inside the symmetric subspace.
Instead, it is split between the symmetric subspace and the \emph{antisymmetric subspace}, defined as
\begin{equation*}
    V_{\smalldiag{1, 1}} \coloneqq \{\ket{\psi} \in (\C^d)^{\otimes 2} \mid \swap \cdot \ket{\psi} = - \ket{\psi}\}.
\end{equation*}
Like the symmetric subspace, the antisymmetric subspace is an irreducible representation of the unitary group, which implies that for any vector $\ket*{\psi_{\smalldiag{1,1}}} \in V_{\smalldiag{1,1}}$,
\begin{equation}\label{eq:formula-2}
    \int_U U^{\otimes 2} \ketbra*{\psi_{\smalldiag{1,1}}} U^{\dagger, \otimes 2} \cdot \dU = \frac{\Pi_{\smalldiag{1,1}}}{\dim(V_{\smalldiag{1,1}})},
\end{equation}
where $\Pi_{\smalldiag{1,1}}$ is the projector onto $V_{\smalldiag{1,1}}$.
In addition, because the antisymmetric subspace is non-isomorphic to the symmetric subspace, a result in representation theory known as Schur's lemma states that for any vectors $\ket{\psi_{\smalldiag{2}}} \in V_{\smalldiag{2}}$ and $\ket*{\psi_{\smalldiag{1,1}}} \in V_{\smalldiag{1,1}}$, 
\begin{equation}\label{eq:formula-3}
    \int_U U^{\otimes 2} \ketbra*{\psi_{\smalldiag{2}}}{\psi_{\smalldiag{1,1}}} U^{\dagger, \otimes 2} \cdot \dU = 0.
\end{equation}
Hence, if we could simply split $\ket{1, i}$ into its symmetric and antisymmetric components, we could use these formulas to evaluate \Cref{eq:want-formula}.

To this end, let us define the vectors
\begin{align*}
        \ket{\medtableau{i & i}} \coloneqq \ket{i, i},
        \qquad
        \ket{\medtableau{i & j}} \coloneqq \frac{1}{\sqrt{2}}  \ket{i, j} + \frac{1}{\sqrt{2}}  \ket{j, i},
        ~~\text{and}~~
        \ket{\medtableau{i \\ j}} \coloneqq \frac{1}{\sqrt{2}}  \ket{i, j} - \frac{1}{\sqrt{2}}  \ket{j, i},
        \text{ for $i < j$}.
\end{align*}
Note that the $\ket{\smalltableau{i & j}}$
vectors are elements of the symmetric subspace,
and the $\ket{\smalltableau{i \\ j}}$
vectors are elements of the antisymmetric subspace.
In addition, they are orthogonal to each other by inspection, and in fact they actually form an orthonormal basis for all of $(\C^d)^{\otimes 2}$, as we can write the standard basis vectors as linear combinations of these vectors:
\begin{equation}\label{eq:clebsch-gordan-for-the-zeroeth-time}
    \ket{i, i} = \ket{\medtableau{i & i}},
    \qquad
    \ket{i, j} = \frac{1}{\sqrt{2}} \ket{\medtableau{i & j}} + \frac{1}{\sqrt{2}} \ket{\medtableau{i \\ j}},
        ~~ \text{and} ~~
    \ket{j, i} = \frac{1}{\sqrt{2}} \ket{\medtableau{i & j}} - \frac{1}{\sqrt{2}} \ket{\medtableau{i \\ j}}, \text{ for $i < j$}.
\end{equation}
From this fact, we can derive several useful consequences.
\begin{enumerate}
    \item Because these vectors form an orthonormal basis for all of $(\C^d)^{\otimes 2}$, they also form orthonormal bases for their respective subspaces. Thus, we can conclude that $\dim(V_{\smalldiag{2}}) = d(d+1)/2$ and $\dim(V_{\smalldiag{1,1}}) = d(d-1)/2$.
    \item Similarly, this means that the symmetric and antisymmetric subspaces together span all of $(\C^d)^{\otimes 2}$. In other words, $(\C^d)^{\otimes 2} = V_{\smalldiag{2}} \oplus V_{\smalldiag{1,1}}$.
    \item Finally, since these vectors form an orthonormal basis for all of $(\C^d)^{\otimes 2}$, they also form a complete eigenbasis for $\swap$, with eigenvalues $+1$ or $-1$.
    More succinctly, $\swap = \Pi_{\smalldiag{2}} - \Pi_{\smalldiag{1,1}}$.
\end{enumerate}

We can now use these vectors to compute our desired expectation. 
In particular, when $i > 1$,
\begin{align*}
     \int_U U^{\otimes 2} \ketbra{1, i} U^{\dagger, \otimes 2} \cdot \dU
     & = \frac{1}{2} \int_U U^{\otimes 2} \ketbra{\medtableau{1 & i}} U^{\dagger, \otimes 2} \cdot \dU
        + \frac{1}{2} \int_U U^{\otimes 2} \ketbra{\medtableau{1 \\ i}} U^{\dagger, \otimes 2} \cdot \dU \\
    & + \frac{1}{2} \int_U U^{\otimes 2} \ketbra{\medtableau{1 & i}}{\medtableau{1 \\ i}} U^{\dagger, \otimes 2} \cdot \dU
        + \frac{1}{2} \int_U U^{\otimes 2} \ketbra{\medtableau{1 \\ i}}{\medtableau{1 & i}} U^{\dagger, \otimes 2} \cdot \dU.
\end{align*}
This we can analyze using our three integration formulas \Cref{eq:formula-1,eq:formula-2,eq:formula-3}, which tell us that for $i > 1$,
\begin{equation*}
     \int_U U^{\otimes 2} \ketbra{1, i} U^{\dagger, \otimes 2} \cdot \dU
     = \frac{1}{2} \cdot \frac{\Pi_{\smalldiag{2}}}{\dim(V_{\smalldiag{2}})} + \frac{1}{2} \cdot \frac{\Pi_{\smalldiag{1,1}}}{\dim(V_{\smalldiag{1,1}})}.
\end{equation*}

\paragraph{Computing $M_{\mathrm{avg}}$.}
Using our integration formulas, we can compute $M_{\mathrm{avg}}$ as
\begin{align*}
    M_{\mathrm{avg}} &= d \cdot \sum_{i=1}^d \lambda_i^{\uparrow} \cdot \int_U U^{\otimes 2} \ketbra{1, i} U^{\dagger, \otimes 2} \cdot \dU\\
    &= d \cdot \Big(d \cdot \frac{\Pi_{\smalldiag{2}}}{\dim(V_{\smalldiag{2}})} + (d-1) \cdot (-1) \cdot \Big(\frac{1}{2} \cdot \frac{\Pi_{\smalldiag{2}}}{\dim(V_{\smalldiag{2}})} + \frac{1}{2} \cdot \frac{\Pi_{\smalldiag{1,1}}}{\dim(V_{\smalldiag{1,1}})}\Big)\Big)\\
    & = \frac{d(d+1)}{2} \cdot \frac{\Pi_{\smalldiag{2}}}{\dim(V_{\smalldiag{2}})}
    - \frac{d (d-1)}{2} \cdot \frac{\Pi_{\smalldiag{1,1}}}{\dim(V_{\smalldiag{1,1}})} \\
    &= \Pi_{\smalldiag{2}} - \Pi_{\smalldiag{1,1}} \\
    &= \swap.
\end{align*}
Now that we have computed $M_{\mathrm{avg}}$,
we can conclude our calculation of $\widehat{\brho}$'s first moment as follows:
\begin{equation*}
    \E[\widehat{\brho}]
    = \tr_1(M_{\mathrm{avg}} \cdot \rho \otimes I)
    = \tr_1(\swap \cdot \rho \otimes I)
    = \rho,
\end{equation*}
where we have here used the fact that $\tr_1(\swap \cdot A \otimes I) = A$ for any matrix $A \in \C^{d \times d}$.
This shows that $\widehat{\brho}$ is an unbiased estimator for $\rho$ and therefore completes the proof.

\subsubsection{The two copy case}

\paragraph{Exploiting permutation symmetry.}
Next, let us illustrate how these ideas generalize to the case where our input is two copies $\rho \otimes \rho$.
Here, there is a new feature not present in the single copy case, in that the input has ``permutation symmetry'', meaning that it is invariant under swapping the two registers. 
Mathematically, this can be expressed through the equation
\begin{equation*}
\swap\cdot  \rho \otimes \rho \cdot \swap = \rho \otimes \rho.
\end{equation*}
One consequence of this is that $\rho \otimes \rho$ commutes with $\swap$, i.e.\ that $\rho \otimes \rho \cdot \swap = \swap \cdot \rho \otimes \rho$, and so $\rho \otimes \rho$ and $\swap$ share a common eigenbasis.
As we have seen, $\swap$ has two eigenspaces, a $+1$ eigenspace corresponding to the subspace $V_{\smalldiag{2}}$ and a $-1$ eigenspace corresponding to $V_{\smalldiag{1,1}}$, and so $\rho \otimes \rho$'s eigenvectors also split into those that live in $V_{\smalldiag{2}}$ and those that live in $V_{\smalldiag{1,1}}$.
This means that $\rho \otimes \rho$ can be written as $\rho \otimes \rho = \rho_{\smalldiag{2}} + \rho_{\smalldiag{1,1}}$, where $\rho_{\smalldiag{2}}$ is a subnormalized mixed state entirely supported on $V_{\smalldiag{2}}$ and $\rho_{\smalldiag{1,1}}$ is a subnormalized mixed state entirely supported on $V_{\smalldiag{1,1}}$. 

\paragraph{Inside the symmetric subspace.}
In the case of two copies of $\rho$, the debiased Keyl's algorithm begins by measuring $\rho \otimes \rho$ with the projective measurement $\{\Pi_{\smalldiag{2}}, \Pi_{\smalldiag{1,1}}\}$. 
Let us first consider the case when the outcome $\blambda$ is equal to $\blambda = \meddiag{2}$.
This case occurs with probability $\tr(\Pi_{\smalldiag{2}} \cdot \rho^{\otimes 2}) = \tr(\rho_{\smalldiag{2}})$, and when it occurs the state collapses to $\rho_{\smalldiag{2}}/\tr(\rho_{\smalldiag{2}})$.
Recalling that $\blambda^{\uparrow} = (d+2, -1, \ldots, -1)$, the debiased Keyl's algorithm will output an estimator $\widehat{\brho}$ with eigenvalues $\blambda^{\uparrow}/2$.
But what are its eigenvectors?
To learn these, we must perform a further measurement on the collapsed state $\rho_{\smalldiag{2}}/\tr(\rho_{\smalldiag{2}})$,
which lies within $V_{\smalldiag{2}}$.
The debiased Keyl's algorithm uses the POVM
\begin{equation*}
    \{\dim(V_{\smalldiag{2}}) \cdot U^{\otimes 2} \ketbra{\medtableau{1 & 1}} U^{\dagger, \otimes 2} \cdot \dU\}.
\end{equation*}
This is indeed a measurement on the space $V_{\smalldiag{2}}$, as
\begin{equation*}
    \dim(V_{\smalldiag{2}}) \cdot \int_U U^{\otimes 2} \ketbra{\medtableau{1 & 1}} U^{\dagger, \otimes 2} \cdot \dU = \Pi_{\smalldiag{2}}
\end{equation*}
due to \Cref{eq:formula-1}. Letting $\bU$ be the result of this measurement, the debiased Keyl's algorithm outputs the estimator
\begin{equation*}
    \widehat{\brho} = \sum_{i=1}^d \blambda_i^{\uparrow} \cdot \ketbra{\bu_i}
    = \frac{(d+2)}{2} \cdot \ketbra{\bu_1} - \frac{1}{2} \cdot \ketbra{\bu_2} - \cdots - \ketbra{\bu_d},
\end{equation*}
where $\ket{\bu_i} = \bU \cdot \ket{i}$ for all $1 \leq i \leq d$. As in the single copy case, $\ket{\bu_1}$ should be a rough approximation of the largest eigenvector of $\rho$, or
should at least have decent overlap with $\rho$'s larger eigenvectors, because the probability density function of $\bU$ is given by
\begin{align*}
    \Pr[\bU = U \mid \rho, \meddiag{2}]
    &= \tr\Big(\Big(\dim(V_{\smalldiag{2}}) \cdot U^{\otimes 2} \ketbra{\medtableau{1 & 1}} U^{\dagger, \otimes 2} \cdot \dU\Big) \cdot \frac{\rho_{\smalldiag{2}}}{\tr(\rho_{\smalldiag{2}})}\Big)\\
    & = \frac{\dim(V_{\smalldiag{2}})}{\tr(\rho_{\smalldiag{2}})} \cdot \bra{\medtableau{1 & 1}} U^{\dagger, \otimes 2} \rho_{\smalldiag{2}} U^{\otimes 2} \ket{\medtableau{1 & 1}} \cdot \dU\\
    & = \frac{\dim(V_{\smalldiag{2}})}{\tr(\rho_{\smalldiag{2}})} \cdot \big(\bra{1} U^ {\dagger} \rho U \ket{1}\big)^2 \cdot \dU.
\end{align*}
In fact, this PDF scales not just with $\bra{1} U^{\dagger} \rho U \ket{1}$ but with its \emph{square}, meaning that $\ket{\bu_1}$ should be even better concentrated around $\rho$'s highest eigenvectors than it was in the single copy case.

\paragraph{Highest weight vectors.}
If, instead, the measurement outcome is $\blambda = \smalldiag{1,1}$, then the state collapses to $\rho_{\smalldiag{1,1}}/\mathrm{tr}(\rho_{\smalldiag{1,1}})$. The debiased Keyl's algorithm then performs within the $V_{\smalldiag{1,1}}$  space the POVM
\begin{equation*}
    \Big\{\dim(V_{\smalldiag{1,1}}) \cdot U^{\otimes 2} \ketbra{\medtableau{1 \\ 2}} U^{\dagger, \otimes 2} \cdot \dU\Big\},
\end{equation*}
which is indeed a measurement due to \Cref{eq:formula-2}.
As in the single copy case and the $\blambda = \meddiag{2}$ case, this measurement is biased towards $U$'s for which $U \cdot \ket{1}$ and $U \cdot \ket{2}$ have good overlap with $\rho$'s largest eigenvectors.
In both the one- and two-copy cases, our measurements are chosen fixing a particular vector,
\begin{equation*}
    \text{either $\ket{1}$, $\ket{\medtableau{1 & 1}}$, or $\ket{\medtableau{1 \\ 2}},$}
\end{equation*}
and considering all possible rotations of this vector by a unitary matrix $U$.
(Note that one can unify the notation by observing that in the $n = 1$ case, the input space $\C^d$ is also known as the irreducible representation $V_{\smalldiag{1}}$, and within this representation $\ket{1}$ is written as the vector $\ket{\medtableau{1}}$.)
These vectors are chosen to have the ``most 1's'' possible within their subspace, then the ``most 2's'' possible, and so forth, 
in order to bias the $U$'s towards $\rho$'s largest eigenvectors, as described above.
Vectors of this form are known in representation theory as \emph{highest weight vectors};
as a result, this measurement, which was first introduced by Keyl~\cite{Key06} in his tomography algorithm, is sometimes referred to as the ``rotated highest weight measurement''.

\paragraph{Computing $M_{\mathrm{avg}}$.}
Computing $M_{\mathrm{avg}}$ in the two copy case splits into two expressions, one for the symmetric subspace and another for the antisymmetric subspace. Writing $\lambda = (2, 0, \ldots, 0)$, the expression for the symmetric subspace is
\begin{align*}
    &\int_U (\dim(V_{\smalldiag{2}}) \cdot U^{\otimes 2} \ketbra{\medtableau{1 & 1}} U^{\dagger, \otimes 2} \cdot \dU) \otimes \sum_{i=1}^d \lambda_i^{\uparrow} \cdot U \ketbra{i} U^{\dagger} \\
    ={}& \dim(V_{\smalldiag{2}}) \cdot \sum_{i=1}^d \lambda_i^{\uparrow} \cdot \int_U   U^{\otimes 3} \ketbra{\medtableau{1 & 1}} \otimes \ketbra{i} U^{\dagger, \otimes 3} \cdot \dU.
\end{align*}
Computing this involves computing the integrals
\begin{equation*}
\int_U   U^{\otimes 3} \ketbra{\medtableau{1 & 1}} \otimes \ketbra{i} U^{\dagger, \otimes 3} \cdot \dU.
\end{equation*}
As in the $n= 1$ case, this not easy to directly compute because $\ket{\medtableau{1 & 1}} \otimes \ket{i}$ is not contained inside an irreducible representation of the unitary group over $(\C^d)^{\otimes 3}$.
Instead, we must first split this vector into a component for each irreducible representation.
As it turns out, the irreducible representations of the unitary group on $(\C^d)^{\otimes 3}$ are labeled by Young diagrams $\lambda$ with 3 boxes,
and in particular
\begin{equation}\label{eq:clebsch-gordan-for-the-first-time}
    \ket{\medtableau{1 & 1}} \otimes \ket{1} = \ket{\medtableau{1 & 1 & 1}},
    \quad
    \text{and}
    \quad
    \ket{\medtableau{1 & 1}} \otimes \ket{i}
    = \sqrt{\frac{1}{3}}\cdot \ket{\medtableau{1 & 1 & i}}
    + \sqrt{\frac{2}{3}}\cdot \ket{\medtableau{1 & 1 \\ i}}, \text{ for $i > 1$},
\end{equation}
where the vectors in the decomposition are elements of the irreducible representations corresponding to $\meddiag{3}$ and $\ytableausetup{boxsize=0.50em, centertableaux,boxframe=normal} \, \ydiagram{2,1} \, \ytableausetup{boxsize=normal}$, respectively.
This process, in which we take a vector in an irreducible representation of the unitary group corresponding, tensor on a $\ket{i}$ vector, and then decompose that into vectors which live in separate irreducible representations, is known as a \emph{Clebsch-Gordan transform}, and the coefficients that arise in this process (for example, the coefficients in \Cref{eq:clebsch-gordan-for-the-first-time}, as well as the coefficients in \Cref{eq:clebsch-gordan-for-the-zeroeth-time}) are known as \emph{Clebsch-Gordan coefficients}.
Computing this integral, and therefore computing $M_{\mathrm{avg}}$, requires being able to understand and manipulate these Clebsch-Gordan coefficients and evaluate large expressions involving them.
Eventually, we aim to show that $M_{\mathrm{avg}} = \frac{1}{2} \cdot \swap_{1, 3} + \frac{1}{2} \cdot \swap_{2, 3}$.
If this is true, then 
\begin{align*}
    \E[\widehat{\brho}]
    = \tr_{[2]}(M_{\mathrm{avg}} \cdot \rho \otimes \rho \otimes I)
    &= \tr_{[2]}\Big(\Big(\frac{1}{2} \cdot \swap_{1, 3} + \frac{1}{2} \cdot \swap_{2, 3}\Big) \cdot \rho \otimes \rho \otimes I\Big) \\
    &= \frac{1}{2} \cdot \tr_{[2]}( \swap_{1, 3} \cdot \rho \otimes \rho \otimes I) + \frac{1}{2} \cdot \tr_{[2]}(\swap_{2, 3} \cdot \rho \otimes \rho \otimes I) \\
    &= \frac{1}{2} \cdot \tr_{[1]}( \swap\cdot \rho \otimes I) + \frac{1}{2} \cdot \tr_{[1]}(\swap \cdot \rho \otimes I) \\
    & = \rho,
\end{align*}
proving that $\widehat{\brho}$ is an unbiased estimator for $\rho$. 

\subsubsection{The general copy case}

The case of general $n$ closely resembles the case of $n = 2$ copies. First, our goal is to show that
\begin{equation}\label{eq:surprise-jucys}
    M_{\mathrm{avg}} = \frac{1}{n} \cdot (\swap_{1, n+1} + \swap_{2, n+1} + \cdots + \swap_{n, n+1}),
\end{equation}
which implies that the debiased Keyl's algorithm is indeed an unbiased estimator.
(As an aside, we note that the right-hand side of this expression is a well-known element of the group algebra $\C[S_{n+1}]$ called the \emph{Jucys-Murphy element}.)
Computing $M_{\mathrm{avg}}$ then breaks into a separate sub-expression for each irreducible representation of the unitary group corresponding to a Young diagram $\lambda$ with $n$ boxes. For each such $\lambda$, computing this sub-expression involves taking the highest weight vector $T^{\lambda}$ of $\lambda$ and applying the Clebsch-Gordan transform to vectors of the form $\ket{T^{\lambda}} \otimes \ket{i}$ in order to understand how these vectors decompose into irreducible representations corresponding to Young diagrams with $n+1$ boxes.
Having done this, proving \Cref{eq:surprise-jucys} then involves deriving exact formulas for certain complicated expressions involving Clebsch-Gordan coefficients, which is made all the more difficult since for general $n$ even individual Clebsch-Gordan coefficients can be quite complicated themselves.
We also then have to compute the \emph{second} moment of our estimator in order to prove \Cref{thm:var}, and this involves computing a formula for the matrix
\begin{equation*}
    M_{\mathrm{avg}}^{(2)} = \sum_{\widehat{\rho}} M_{\widehat{\rho}} \otimes \widehat{\rho} \otimes \widehat{\rho}.
\end{equation*}
This is the most technically challenging part of our work, as it involves understanding the effect that two back-to-back Clebsch-Gordan transforms have on a highest weight vector $\ket{T^{\lambda}}$.

In recent years, Clebsch-Gordan coefficients and the Clebsch-Gordan transform have become increasingly central to the study of quantum computing and information.
Perhaps their first appearance in the quantum computing literature dates back to the efficient algorithm for the quantum Schur transform by Bacon, Chuang, and Harrow from 2005~\cite{BCH05},
but more recently they have been used to compute the dual and mixed Schur transforms~\cite{Ngu24,GBO24}, perform optimal qudit purification~\cite{LFIC24}, efficiently implement port-based teleportation~\cite{GBO24}, take a quantum majority vote~\cite{BLM+23}, and prove lower bounds on the number of queries needed to invert a unitary~\cite{CYZ25}.
To our knowledge, our work is the first to apply the Clebsch-Gordan transform in the domain of quantum learning theory, though we expect that they will see further use in this domain in the future.
Our primary technical contribution, then, is to develop a host of new tools for evaluating natural expressions that arise when studying Clebsch-Gordan coefficients,
and we hope that our tools will find further applications in future works which make use of the Clebsch-Gordan transform. 

\subsection{Open problems}

Our work leaves open a number of interesting follow-up questions, which we list below.

\begin{enumerate}
    \item\label{item:very-high-prob} \textbf{(Learning with very high probability)} In this work, we have studied algorithms for quantum tomography which succeed with a high constant probability, say 99\%. But what if we want to succeed with probability $1-\delta$, where $\delta >0$ is a parameter which is allowed to vary? As we show in \Cref{sec:amplification} below, an immediate corollary of our results is that $n = O(d^2 / \epsilon^2 \cdot \log(1/\delta))$ copies suffice to estimate a mixed state $\rho \in \C^{d \times d}$ to error $\epsilon$ with success probability $1-\delta$ in both trace and Bures distances. (In fact, in the case of trace distance, this also follows from the work of O'Donnell and Wright~\cite{OW16}.)
    However, it is natural to conjecture that a stronger bound of
    \begin{equation*}
        n = O\Big(\frac{d^2}{\epsilon^2} + \frac{\log(1/\delta)}{\epsilon^2}\Big)
    \end{equation*}
    copies should hold for both distance measures. A similar bound is known to hold in the classical setting of learning discrete distributions: in particular, $n = O(d/\epsilon^2 + \log(1/\delta)/\epsilon^2)$ samples are sufficient to learn a distribution $p = (p_1, \ldots, p_d)$ up to error $\epsilon$ with probability $1-\delta$, in both trace and Hellinger distances~\cite{Can20}. We note that a bound of this form, with some additional logarithmic factors, can be derived from \cite[Equation 14]{HHJ+16}, and so the challenge here is to achieve this bound precisely, with no additional logarithms.
    One natural route for accomplishing this is to study higher moments of our estimator, generalizing \Cref{thm:var} beyond the second moment. 
    
    Let us note that this question also appears to be open in the case of tomography algorithms which use independent measurements, even for learning in trace distance. It follows from \cite[Theorem~2]{GKKT20} that $n = O(d^3/\epsilon^2 + d^2 \log(1/\delta)/\epsilon^2)$ copies are sufficient to learn in trace distance with independent measurements (in fact, the uniform POVM tomography algorithm achieves this bound); on the other hand, it appears to follow from the argument of~\cite{CHL+23} that $n = \Omega(d^3/\epsilon^2 + d \log(1/\delta)/\epsilon^2)$ copies are necessary in this setting~\cite{CL24}. We believe that closing this gap, and  determining the bound for Bures distance, is also an interesting open problem.
    \item \textbf{(Spectrum estimation)} How many copies do we need to learn a mixed state $\rho$'s eigenvalues $\alpha = (\alpha_1, \ldots, \alpha_d)$? This problem is certainly no harder than full state tomography, which requires learning $\rho$'s eigenvalues \emph{and} eigenvectors,
    and so $n = O(d^2/\epsilon^2)$ samples, which suffice for tomography, also suffice for spectrum estimation.
    But can we do better?
    Or does one first have to learn the eigenvectors in order to then learn the eigenvalues?
    Currently, the best known spectrum estimation algorithm is the EYD algorithm, and although it is known to only require $n = O(d^2/\epsilon^2)$ copies~\cite{OW16},
    it is also known to fail at spectrum estimation when $n = o(d^2/\epsilon^2)$~\cite{OW15}.
    On the flip side, in the classical setting, it is known that $n = O(d/\log(d))$ samples are necessary and sufficient to learn the set of probability values $\{p_1, \ldots, p_d\}$ of a probability distribution $p = (p_1, \ldots, p_d)$~\cite{VV11a,VV17,HJW18} (which is the classical analogue of a mixed state's spectrum), which does outperform the $n = \Omega(d)$ samples needed to learn $p$ itself. 
    
    Whether we can do better than full state tomography was recently resolved in the setting of algorithms which use unentangled measurements by Pelecanos, Tan, Tang, and Wright~\cite{PTTW25}, who showed that $n = o(d^3)$ samples suffice for spectrum estimation,
    which beats the $n = \Omega(d^3)$ copies which are required for full state tomography with unentangled measurements~\cite{CHL+23}.
    Their argument crucially relied on the fact that we have guarantees for full state tomography with unentangled measurements in the ``very high probability'' regime.
    As a result, we believe that a resolution of \Cref{item:very-high-prob} above is likely to help design improved algorithms for spectrum estimation with entangled measurements.

    \item \textbf{(Other distance measures)} For the first time, our work shows optimal bounds for learning a mixed state in Bures distance, which is a more stringent distance metric than trace distance.
    There are even more stringent divergences than Bures which are commonly studied in the quantum information literature, namely the quantum relative entropy and the Bures $\chi^2$-divergence. How many copies are needed to learn a state $\rho$ for either of these two divergences?
    This question was studied by Flammia and O'Donnell~\cite{FO24}, who were motivated by the application of testing whether a bipartite quantum state has zero quantum mutual information.
    In fact, combining our \Cref{thm:bures-dist-tomography-intro} with their Theorem~2.34 produces a tomography algorithm on rank $r$ states $\rho \in \C^{d \times d}$ which learns in quantum relative distance error $\epsilon$ using only $n = O(rd/\epsilon) \cdot \log(d/\epsilon)$ copies,
    improving on the bound in their Corollary 1.8 by a factor of $\log(d/\epsilon)$ (see their Page 12 for a computation of their precise bound).
    But the best lower bound we know for this problem is $n = \Omega(rd/\epsilon)$ (which follows from the Bures distance lower bound of~\cite{Yue23}), 
    suggesting that $n = O(rd/\epsilon)$ might be the optimal upper bound.
    
    \item \textbf{(Efficient algorithms for Keyl's measurement)} 
    Although the algorithms we consider in this work are sample-efficient, they are not yet known to be computationally efficient.
    The one exception is when $n = 1$, as it is known that the uniform POVM tomography algorithm still works if one substitutes the Haar random unitary used to specify the measurement basis with a random unitary chosen from a unitary 2-design or 3-design, depending on one's application~\cite{GKKT20,HKP20}.
    This yields an efficient algorithm, as there are examples of 2-designs and 3-designs, such as Clifford unitaries, which are computationally efficient to sample from.
    Can Keyl's measurement be made efficient for general $n$?

    \item \textbf{(Optimality of the debiased Keyl's algorithm)} 
    Is the debiased Keyl's algorithm \emph{the} optimal entangled tomography algorithm?
    This is of course not a well-defined question,
    but perhaps one way of making it formal is asking whether the debiased Keyl's algorithm is the minimum variance estimator among all unbiased estimators, where the variance is defined as $\E \Vert \widehat{\brho} - \rho \Vert_2^2$.
    We show in \Cref{lem:og-estimator-has-min-variance} below that the answer is yes, at least for a large family of estimators based on Keyl's measurement, but this doesn't rule out an even better unbiased estimator which uses a different measurement. 
    A good starting point for proving this, if it is indeed true, is the $n = 1$ case and the general $n$ pure state case.
    
    \item \textbf{(Simpler proofs of our bounds)} 
    Can our proofs be made simpler?
    As we mention in \Cref{sec:single-copy-overview}, there is a substantially simpler proof than the one we use that the debiased Keyl's algorithm is unbiased in the $n = 1$ case;
    could this proof be generalized to larger $n$?
    More broadly, is there a more conceptual reason why our algorithm gives an unbiased estimator, beyond just what we learn from working out the Clebsch-Gordan coefficients? 
    Possibly even one that doesn't involve representation theory?
    There have been several instances in quantum learning in which results which were first derived using heavy representation theory were then given new proofs which are substantially cleaner and more conceptual (e.g.\ the new proof of the mixedness testing upper bound from O'Donnell and Wright~\cite{OW15} due to Badescu, O'Donnell, and Wright~\cite{BOW19}, and the new proof of the mixedness testing \emph{lower} bound from~\cite{OW15} due to O'Donnell and Wadhwa~\cite{OW25}). It would be interesting if the same could be done here.
\end{enumerate}

\subsection{Paper organization}

This paper is organized into five parts.
\begin{itemize}
\item[$\circ$] \Cref{part:intro} contains this introduction, as well as preliminaries and a full description of the debiased Keyl's algorithm.  The preliminaries contain general background material on quantum learning theory, as well as an overview of all the representation theory needed to state the debiased Keyl's algorithm.
\item[$\circ$] \Cref{part:applications} contains all five of our applications. This part uses \Cref{thm:unbiased,thm:var}, our main results concerning the first and second moments of the debiased Keyl's algorithm, as a black box.
With the exception of the tomography with limited entanglement application, which relies on some lemmas from the trace distance tomography section, the applications are meant to be self-contained and readable in any order.
\item[$\circ$] \Cref{part:cg} contains an introduction to the Clebsch-Gordan transform and the Clebsch-Gordan coefficients.
\item[$\circ$]  \Cref{part:moments} uses the Clebsch-Gordan technology developed in the previous part to prove \Cref{thm:unbiased,thm:var}, our first and second moment bounds.
\item[$\circ$] \Cref{part:appendix} is an appendix which contains a technical result that we make use of in our first and second moment proofs. Informally, it shows that the algorithm for the Schur transform due to Bacon, Chuang, and Harrow~\cite{BCH05} also happens to compute a specific choice of basis for the irreducible representations of the symmetric group known as \emph{Young's orthogonal basis}. Previously, it was known only that it computes a \emph{Young-Yamanouchi basis}, a more general class of bases of which Young's orthogonal basis is a specific and especially convenient example. 
\end{itemize}

\paragraph{Acknowledgments.}
A.P.\ is supported by DARPA under Agreement No. HR00112020023. J.S.\ and J.W.\ are supported by the NSF CAREER award CCF-233971.

\section{Preliminaries}
\label{sec:preliminaries}
\newcommand{\Specht}{\mathrm{Sp}}
\newcommand{\subgroup}{\subset}
\newcommand{\diag}{\mathrm{diag}}
\newcommand{\content}{\mathrm{cont}}
\newcommand{\USW}[1]{\mathcal{U}^{(#1)}_{\mathrm{SW}}}
\newcommand{\USWdagger}[1]{\mathcal{U}^{(#1)\dagger}_{\mathrm{SW}}}
\newcommand{\hook}{\mathrm{hook}}

We use \textbf{boldface} to denote random variables, and define $[d] = \{1, \dots, d\}$. Additionally, we will use $\delta_{ij}$ for the Kronecker delta, that is, $\delta_{ij} \coloneq \mathbbm{1}[i = j]$.

\subsection{Quantum distances}

We first review the quantum distances (and divergences) that we will use throughout this paper. The interested reader is encouraged to refer to~\cite{BOW19} for a systematic treatment of these quantum distances and their relationships.

\begin{definition}[Schatten $p$-norm]
    Let $M \in \mathbb{C}^{d \times d}$ be a matrix with singular values $\lambda_1, \cdots, \lambda_d$. The \emph{Schatten $p$-norm}, for $p \geq 1$, is defined as
    \begin{equation*}
        \lVert M \rVert_p = \bigg(\sum_{i=1}^d |\lambda_i|^p\bigg)^{1/p}.
    \end{equation*}
\end{definition}

Let $\rho, \sigma \in \C^{d \times d}$ be quantum states. 

\begin{definition}[Trace distance]
    \label{def:trace-dist}
    The \emph{trace distance} between $\rho$ and $\sigma$ is 
    \begin{equation*}
        \dtr(\rho, \sigma) = \frac{1}{2}\lVert \rho - \sigma \rVert_1.
    \end{equation*}
\end{definition}

\begin{definition}[Fidelity]
    The \emph{fidelity} of $\rho$ and $\sigma$ is 
    \begin{equation*}
        \fidelity(\rho, \sigma) = \norm{\sqrt{\rho} \sqrt{\sigma}}_1 = \tr\sqrt{\sqrt{\rho} \sigma \sqrt{\rho}}.
    \end{equation*}
\end{definition}

Our version for the fidelity of quantum states is sometimes referred to as the ``square root fidelity''. Since fidelity is not a metric, we use the closely related Bures distance metric.
\begin{definition}[Bures distance]
    The \emph{Bures distance} between $\rho$ and $\sigma$ is
    \begin{equation*}
        \DBur(\rho, \sigma) = \sqrt{2(1 - \fidelity(\rho, \sigma))}.
    \end{equation*}
\end{definition}

\begin{definition}[Bures $\chi^2$-divergence]
    The \emph{Bures $\chi^2$-divergence} of $\rho$ from $\sigma$ is given by the following formula when $\sigma$ is a diagonal full-rank state with eigenvalues $(\alpha_1, \dots, \alpha_d)$:
    \begin{equation*}
        \dchi(\rho \mid\mid \sigma) = \sum_{i, j=1}^d \frac{2}{\alpha_i + \alpha_j}\cdot |\rho_{ij} - \sigma_{ij}|^2.
    \end{equation*}
\end{definition}

In this paper, we will compute the trace distance of matrices that are not necessarily PSD, or even Hermitian, as well as the the fidelity, Bures distance, and Bures $\chi^2$-divergence of sub-normalized PSD matrices. The definitions given above can be extended to these cases using the same formulas. 
We now state some useful results about the quantum distance measures above. The first set of inequalities relates trace distance and fidelity ~\cite[Section 9.2]{NC10}.

\begin{lemma}[Fuchs-van de Graaf inequalities]
    $1 - \fidelity(\rho, \sigma) \leq \dtr(\rho, \sigma) \leq \sqrt{1- \fidelity(\rho, \sigma)^2}$.
\end{lemma}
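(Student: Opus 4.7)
The plan is to prove the two inequalities separately, reducing each to a simpler case via a standard representation-theoretic/measurement-theoretic tool, and then handling the base case by direct computation.

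For the upper bound $\dtr(\rho,\sigma) \leq \sqrt{1-\fidelity(\rho,\sigma)^2}$, I would first establish it for pure states and then bootstrap to mixed states via Uhlmann's theorem. Concretely, if $\ket{\psi}, \ket{\phi} \in \mathbb{C}^d$ are unit vectors with $|\braket{\psi}{\phi}| = c$, the difference $\ketbra{\psi} - \ketbra{\phi}$ is a Hermitian operator of rank at most $2$ supported on $\mathrm{span}\{\ket{\psi},\ket{\phi}\}$, and a short direct computation in any orthonormal basis of this span shows its two nonzero eigenvalues are $\pm\sqrt{1-c^2}$. Hence $\dtr(\ketbra{\psi}, \ketbra{\phi}) = \sqrt{1-|\braket{\psi}{\phi}|^2}$, matching the desired bound with equality since $\fidelity(\ketbra{\psi}, \ketbra{\phi}) = |\braket{\psi}{\phi}|$. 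To lift to mixed states, invoke Uhlmann's theorem: there exist purifications $\ket{\widetilde\psi}, \ket{\widetilde\phi}$ in an enlarged Hilbert space $\mathbb{C}^d \otimes \mathbb{C}^d$ such that $\fidelity(\rho,\sigma) = |\braket{\widetilde\psi}{\widetilde\phi}|$. Then use monotonicity of trace distance under partial trace to conclude
\begin{equation*}
\dtr(\rho,\sigma) \leq \dtr(\ketbra{\widetilde\psi}, \ketbra{\widetilde\phi}) = \sqrt{1 - |\braket{\widetilde\psi}{\widetilde\phi}|^2} = \sqrt{1-\fidelity(\rho,\sigma)^2}.
\end{equation*}

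For the lower bound $1 - \fidelity(\rho,\sigma) \leq \dtr(\rho,\sigma)$, the cleanest route is through measurement characterizations of both quantities, reducing to a classical inequality. Recall the variational forms
\begin{equation*}
\dtr(\rho,\sigma) = \max_{\{M_i\}} \tfrac{1}{2}\sum_i |p_i - q_i|, \qquad \fidelity(\rho,\sigma) = \min_{\{M_i\}} \sum_i \sqrt{p_i q_i},
\end{equation*}
where $p_i = \tr(M_i\rho)$, $q_i = \tr(M_i\sigma)$ and $\{M_i\}$ ranges over POVMs. Fixing a POVM $\{M_i\}$ that attains the fidelity minimum, it suffices to prove the classical analog: for any two probability distributions $p,q$,
\begin{equation*}
1 - \sum_i \sqrt{p_i q_i} \leq \tfrac{1}{2} \sum_i |p_i - q_i|.
\end{equation*}
This follows from the pointwise inequality $(\sqrt{p_i}-\sqrt{q_i})^2 \leq |p_i - q_i|$, which yields $\sum_i (\sqrt{p_i}-\sqrt{q_i})^2 \leq \sum_i |p_i - q_i|$, combined with the identity $\sum_i (\sqrt{p_i}-\sqrt{q_i})^2 = 2 - 2 \sum_i \sqrt{p_i q_i}$.

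The main obstacle is that both reductions rely on nontrivial background results: Uhlmann's theorem for the upper bound and the variational characterization of fidelity (which itself can be derived from Uhlmann) for the lower bound. These are standard facts in quantum information that I would simply cite from a reference such as Nielsen and Chuang rather than reprove, since the core content of the lemma is the reduction to the elementary pure state / classical inequalities sketched above. The pure state eigenvalue computation and the classical Bhattacharyya inequality are straightforward and routine, so once the reductions are in place the proof is essentially complete.
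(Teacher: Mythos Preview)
Your proposal is correct and is essentially the standard textbook argument. However, the paper does not actually prove this lemma at all: it simply states the inequalities and cites \cite[Section 9.2]{NC10} (Nielsen and Chuang) as a reference. Your sketch is in fact the proof one finds there, so in that sense your approach matches what the paper implicitly points to, just made explicit.
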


The Fuchs-van de Graaf inequalities can be used to show a pair of inequalities between trace distance and Bures distance. 

\begin{corollary}
    $\frac{1}{2} \DBur(\rho, \sigma)^2 \leq \dtr(\rho, \sigma) \leq \DBur(\rho, \sigma)$. 
\end{corollary}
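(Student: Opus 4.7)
The plan is to derive both inequalities directly from the Fuchs--van de Graaf inequalities stated in the preceding lemma, combined with the definition $\DBur(\rho, \sigma) = \sqrt{2(1 - \fidelity(\rho, \sigma))}$. Nothing deeper is required; this is a short algebraic corollary.

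For the lower bound, I would start from the definition and rewrite
\begin{equation*}
    \tfrac{1}{2}\DBur(\rho, \sigma)^2 = 1 - \fidelity(\rho, \sigma).
\end{equation*}
The left Fuchs--van de Graaf inequality $1 - \fidelity(\rho, \sigma) \leq \dtr(\rho, \sigma)$ then immediately yields $\tfrac{1}{2}\DBur(\rho, \sigma)^2 \leq \dtr(\rho, \sigma)$, which is the desired lower bound.

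For the upper bound, I would begin from the right Fuchs--van de Graaf inequality, $\dtr(\rho, \sigma) \leq \sqrt{1 - \fidelity(\rho, \sigma)^2}$, and factor the expression inside the square root as
\begin{equation*}
    1 - \fidelity(\rho, \sigma)^2 = (1 - \fidelity(\rho, \sigma))(1 + \fidelity(\rho, \sigma)).
\end{equation*}
Since $\fidelity(\rho, \sigma) \leq 1$, we have $1 + \fidelity(\rho, \sigma) \leq 2$, so the right-hand side is bounded above by $2(1 - \fidelity(\rho, \sigma)) = \DBur(\rho, \sigma)^2$. Taking square roots gives $\dtr(\rho, \sigma) \leq \DBur(\rho, \sigma)$, completing the proof.

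There is no real obstacle here; the only thing to be careful about is the extension to subnormalized PSD matrices mentioned in the paragraph just before the statement. The factorization argument only uses that $\fidelity(\rho, \sigma) \leq 1$, which continues to hold for subnormalized states (since the fidelity can only decrease when we scale down), so the same proof applies verbatim in that case.
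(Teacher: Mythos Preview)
Your proof is correct and follows exactly the approach the paper indicates: the paper simply states that ``the Fuchs--van de Graaf inequalities can be used to show'' the corollary without spelling out the details, and your argument supplies precisely those details via the identity $\tfrac{1}{2}\DBur^2 = 1 - \fidelity$ and the factorization $1 - \fidelity^2 = (1-\fidelity)(1+\fidelity) \leq 2(1-\fidelity)$.
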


Bures distance is also related to the Bures $\chi^2$-divergence, via the following inequality. 

\begin{lemma}[Proposition 2.31,~\cite{FO24}]
    \label{lem:bures-leq-chi}
    $\DBur(\rho, \sigma) \leq \sqrt{\dchi(\rho \mid\mid \sigma)}$.
\end{lemma}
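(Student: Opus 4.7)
This lemma is Proposition~2.31 of~\cite{FO24}, and my plan is to reproduce their argument. The cleanest route is to factor through the squared Hilbert--Schmidt distance $\|\sqrt{\rho} - \sqrt{\sigma}\|_2^2$ and establish two separate inequalities: $\DBur(\rho, \sigma)^2 \leq \|\sqrt{\rho} - \sqrt{\sigma}\|_2^2$ and $\|\sqrt{\rho} - \sqrt{\sigma}\|_2^2 \leq \dchi(\rho \,\|\, \sigma)$.

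The first inequality is a short consequence of Uhlmann's theorem: applied with the trivial unitary, it gives $\fidelity(\rho, \sigma) = \|\sqrt{\rho}\sqrt{\sigma}\|_1 \geq \tr(\sqrt{\rho}\sqrt{\sigma})$, so
\begin{equation*}
\DBur(\rho, \sigma)^2 = 2\bigl(1 - \fidelity(\rho, \sigma)\bigr) \leq 2 - 2\tr(\sqrt{\rho}\sqrt{\sigma}) = \|\sqrt{\rho} - \sqrt{\sigma}\|_2^2.
\end{equation*}

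The second inequality is the main content. In the commuting case, where $\rho$ and $\sigma$ share an eigenbasis with eigenvalues $(p_i)$ and $(\alpha_i)$, it reduces to the elementary estimate $(\sqrt{p_i} - \sqrt{\alpha_i})^2 = (p_i - \alpha_i)^2/(\sqrt{p_i} + \sqrt{\alpha_i})^2 \leq (p_i - \alpha_i)^2/\alpha_i$, summed over $i$. For general $\rho$ and $\sigma$, assume without loss of generality that $\sigma$ is diagonal; the algebraic identity
\begin{equation*}
\tfrac{1}{2}\bigl((\sqrt{\rho}+\sqrt{\sigma})(\sqrt{\rho}-\sqrt{\sigma}) + (\sqrt{\rho}-\sqrt{\sigma})(\sqrt{\rho}+\sqrt{\sigma})\bigr) = \rho - \sigma
\end{equation*}
lets one rewrite the Hilbert--Schmidt distance in a form directly comparable to $\dchi(\rho \,\|\, \sigma) = \tr\bigl((\rho - \sigma)\mathbb{J}_\sigma^{-1}(\rho - \sigma)\bigr)$, where $\mathbb{J}_X(Y) \coloneqq \tfrac{1}{2}(XY + YX)$ is the symmetric Jordan product superoperator.

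The main obstacle is the non-commutative step: the superoperator inequality $\mathbb{J}_\sigma \preceq \mathbb{J}_{\sqrt{\rho}+\sqrt{\sigma}}^{2}$ that would immediately finish the argument is false as a statement on the full matrix space (one can check this on the sector where $\rho$ has small support), so one cannot simply ``carry over'' the one-line commutative proof. The way to work around this is to use the integral representation $\tfrac{2}{\alpha_i + \alpha_j} = 2\int_0^\infty e^{-t(\alpha_i + \alpha_j)}\,dt$, which yields
\begin{equation*}
\dchi(\rho \,\|\, \sigma) = 2\int_0^\infty \tr\bigl((e^{-t\sigma/2}(\rho - \sigma)e^{-t\sigma/2})^2\bigr)\,dt,
\end{equation*}
and a matching integral identity for $\|\sqrt{\rho} - \sqrt{\sigma}\|_2^2$ coming from $\sqrt{A} = \tfrac{1}{\pi}\int_0^\infty A(A+t)^{-1} t^{-1/2}\,dt$. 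Pointwise in the integration variable, the resulting expressions can be compared via a Cauchy--Schwarz argument in the Hilbert--Schmidt inner product, reducing to a classical inequality of the form $(\sqrt{a} - \sqrt{b})^2 \leq \frac{2(a-b)^2}{a+b}$. Assembling the integrals then delivers the second inequality and completes the proof.
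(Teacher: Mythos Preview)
The paper does not actually prove this lemma; it simply cites \cite[Proposition~2.31]{FO24} and moves on. So there is no ``paper's proof'' to compare against beyond the citation itself, which is also how you open your proposal.

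As for the content of your sketch: the first inequality $\DBur(\rho,\sigma)^2 \leq \|\sqrt{\rho}-\sqrt{\sigma}\|_2^2$ is correct and cleanly argued. The second inequality $\|\sqrt{\rho}-\sqrt{\sigma}\|_2^2 \leq \dchi(\rho\,\|\,\sigma)$ is where the work lies, and here your proposal is more of an outline than a proof. You correctly handle the commuting case, but for the non-commuting case you write down the integral representation of $\dchi$, assert that there is a ``matching integral identity'' for $\|\sqrt{\rho}-\sqrt{\sigma}\|_2^2$, and then claim a pointwise Cauchy--Schwarz comparison without actually exhibiting either the matching integral or the comparison. The two integral representations you mention (the exponential one for $\dchi$ and the resolvent one for $\sqrt{A}$) have different integration variables and different structures, so it is not at all clear how one would line them up pointwise; this is the step that needs to be written out, and as stated it is a gap rather than a proof.
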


Finally, the Gentle Measurement Lemma gives an expression for the fidelity between pre- and post-measurement versions of a state. 

\begin{lemma}[Gentle Measurement Lemma~\cite{Win02}]
    \label{lem:gentle-measurement}
    Given a quantum state $\rho$ and a projector $\Pi$, let 
    $\rho|_{\Pi}$ be the post-measurement state if $\rho$ is measured with the projective measurement $\{\Pi, I - \Pi\}$ and $\Pi$ is obtained. Then
    \begin{equation*}
        \big(1 - \frac{1}{2}\DBur(\rho, \rho|_\Pi)^2\big)^2
        = \fidelity(\rho, \rho|_\Pi)^2 = \tr(\Pi \cdot \rho).
    \end{equation*}
\end{lemma}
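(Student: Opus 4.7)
The plan is to prove the two stated equalities separately, starting with the (easier) left-hand equality and then doing the genuine calculation for the right-hand equality.

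First, I would observe that the leftmost equality is purely formal: by definition, $\DBur(\rho, \sigma) = \sqrt{2(1 - \fidelity(\rho,\sigma))}$, so $\fidelity(\rho,\sigma) = 1 - \tfrac{1}{2}\DBur(\rho,\sigma)^2$, and squaring both sides gives $\fidelity(\rho,\sigma)^2 = \bigl(1 - \tfrac{1}{2}\DBur(\rho,\sigma)^2\bigr)^2$ for any two states. Applied to $\sigma = \rho|_\Pi$, this yields the first equality in the statement, independent of what $\rho|_\Pi$ actually is.

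The work is therefore concentrated in the second equality $\fidelity(\rho, \rho|_\Pi)^2 = \tr(\Pi \rho)$. I would set $p \coloneqq \tr(\Pi \rho)$ and use the standard post-measurement update rule $\rho|_\Pi = \Pi \rho \Pi / p$. Plugging this into the definition of fidelity gives
\begin{equation*}
    \fidelity(\rho, \rho|_\Pi) = \tr\sqrt{\sqrt{\rho}\cdot \tfrac{\Pi \rho \Pi}{p} \cdot \sqrt{\rho}} = \tfrac{1}{\sqrt{p}} \cdot \tr\sqrt{\sqrt{\rho}\,\Pi\,\rho\,\Pi\,\sqrt{\rho}}.
\end{equation*}
The key algebraic trick is to use the idempotence $\Pi^2 = \Pi$ to rewrite the inner matrix as a perfect square:
\begin{equation*}
    \sqrt{\rho}\,\Pi\,\rho\,\Pi\,\sqrt{\rho} = \sqrt{\rho}\,\Pi\,\sqrt{\rho}\sqrt{\rho}\,\Pi\,\sqrt{\rho} = \bigl(\sqrt{\rho}\,\Pi\,\sqrt{\rho}\bigr)^2.
\end{equation*}
Since $\sqrt{\rho}\,\Pi\,\sqrt{\rho}$ is positive semidefinite (it has the form $A^\dagger A$ with $A = \Pi\sqrt{\rho}$), the principal matrix square root of its square is itself. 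Hence $\tr\sqrt{\sqrt{\rho}\,\Pi\,\rho\,\Pi\,\sqrt{\rho}} = \tr(\sqrt{\rho}\,\Pi\,\sqrt{\rho}) = \tr(\Pi \rho) = p$ by cyclicity of trace, so $\fidelity(\rho, \rho|_\Pi) = p/\sqrt{p} = \sqrt{p}$, giving $\fidelity(\rho,\rho|_\Pi)^2 = p = \tr(\Pi \rho)$ as required.

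There is no serious obstacle in this proof; the only step requiring any care is the ``perfect square'' identity, and the only subtlety worth flagging is that all the matrices in question may be subnormalized or degenerate (for instance, $\rho|_\Pi$ is undefined when $\tr(\Pi\rho) = 0$), so I would simply restrict to the nontrivial case $\tr(\Pi\rho) > 0$. The identity itself works for any density matrix $\rho$ and any projector $\Pi$ without requiring invertibility of $\rho$, since it proceeds purely through factoring.
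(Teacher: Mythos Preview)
Your proof is correct. The paper does not actually prove this lemma; it is stated with a citation to \cite{Win02} as a known result and used as a black box. Your argument---reducing the first equality to the definition of Bures distance and handling the second via the perfect-square identity $\sqrt{\rho}\,\Pi\,\rho\,\Pi\,\sqrt{\rho} = (\sqrt{\rho}\,\Pi\,\sqrt{\rho})^2$---is the standard one and is complete.
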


\subsection{Learning with high probability and amplification}\label{sec:amplification}

Our goal throughout this paper is to obtain tomography algorithms that output an estimator which is $\epsilon$-close to the true state, in some distance $\mathrm{D}$, with low probability of failure. Such an algorithm is said to succeed ``with high probability''. The exact threshold for success is not particularly important, but for concreteness, we will take it to mean with probability at least $0.99$. 

Sometimes our algorithms involve multiple (though, a small number of) steps, which each, individually, might only succeed with high probability themselves (possibly conditioned on the success of prior steps). We can use a union bound to lower bound the success probability of the overall algorithm. For example, an algorithm consisting of three steps, which each succeed with high probability, succeeds with probability at least $0.97$. However, we have now dropped under our threshold. Fortunately, there is a standard procedure we may apply to amplify the success probability over $0.99$, with only a constant factor increase in the number of samples, and a constant factor increase in error.


Let $\mathrm{D}$ be a metric (e.g.\ trace distance, or Bures distance). Consider a tomography algorithm $\calA$ that takes as input $n$ samples of a state $\rho$, and outputs an estimator $\widehat{\brho}$ such that $\mathrm{D}(\widehat{\brho}, \rho) \leq \epsilon$ with probability at least, say, $0.51$. We define a new algorithm $\calA'$ that uses $n' = O(n \cdot \log (1/\delta) )$ samples, and outputs a state $\widehat{\brho}'$ such that $\mathrm{D}(\widehat{\brho}, \rho) \leq 3\epsilon$ with probability at least $1-\delta$. 

The new algorithm divides the samples into $k = O(\log (1/\delta))$ groups of $n$ samples, and executes $\calA$ on each group of samples to obtain estimates $\widehat{\brho}_1, \dots, \widehat{\brho}_k$. Then $\calA'$ will output an estimate $\widehat{\brho}_i$ maximizing the following quantity:
\begin{equation*}
    N_i \coloneq \big|\{j \in [k] \mid \mathrm{D}(\widehat{\brho}_j, \widehat{\brho}_i)\leq 2\epsilon\}\big|.
\end{equation*}
See, e.g.\ \cite[Proposition 2.4]{HKOT23} for details. 

For our purposes, we will set $\delta = 0.01$, and thus $\mathcal{A}'$ succeeds with high probability with $n' = O(n)$ samples, with a constant factor degradation in error. If $\mathcal{A}$ scales, say, inverse-polynomially in $\epsilon$, then $\mathcal{A}'$ can be used to output estimates to the same accuracy, with high probability, with the same overall sample complexity.



\subsection{On low-rank tomography algorithms}

Our algorithms will often take copies of a rank-$r$ quantum state $\rho$, and output an estimator $\widehat{\brho}$ that is close to $\rho$ in some distance. However, it is not necessarily the case that $\widehat{\brho}$ is, itself, a rank-$r$ state. However, when we are dealing with tomography in some metric (e.g.\ the trace distance, or Bures distance) there exists a generic transformation to our algorithm that makes it output an estimator $\widehat{\brho}'$ of rank $r$.

\begin{remark}
    \label{rem:low-rank-tomography}
    Let $\mathrm{D}$ be a metric. Also let $\calA$ be a tomography algorithm that takes $n$ samples of a rank-$r$ state $\rho$ and outputs an estimator $\widehat{\brho}$, not necessarily of rank $r$, such that $\mathrm{D}(\widehat{\brho}, \rho) \leq \epsilon$ with high probability. Then there exists a tomography algorithm $\calA'$ that takes $n$ samples of a rank-$r$ state $\rho$ and outputs a rank-$r$ estimator $\widehat{\brho}'$ such that $\mathrm{D}(\widehat{\brho}', \rho) \leq 2\epsilon$ with high probability.
\end{remark}

The new algorithm $\calA'$ will set $\widehat{\brho}'$ to be the rank-$r$ quantum state that is closest to $\widehat{\brho}$ in the metric $\mathrm{D}$. Since $\rho$ is also rank-$r$, this means that $\widehat{\brho}'$ is at least as close to $\widehat{\brho}$:
\begin{equation*}
    \mathrm{D}(\widehat{\brho}, \widehat{\brho}') \leq \mathrm{D}(\widehat{\brho}, \rho).
\end{equation*}
Thus $\mathrm{D}(\widehat{\brho}, \widehat{\brho}') \leq \epsilon$ with high probability.
From the triangle inequality, we conclude that with high probability,
\begin{equation*}
    \mathrm{D}(\widehat{\brho}', \rho) \leq \mathrm{D}(\widehat{\brho}', \widehat{\brho}) + \mathrm{D}(\widehat{\brho}, \rho) \leq 2\epsilon.
\end{equation*}

\subsection{Representation theory}

In this section, we review the representation theory needed to describe the debiased Keyl's algorithm. For a more detailed treatment of applying representation theory to quantum learning algorithms, see \cite[Chapter 2]{Wri16}. 

\subsubsection{Basics}

The \emph{symmetric group} $S_n$ is the group of permutations of the set $[n]$. The \emph{unitary group} $U(d)$ is the group of $d \times d$ unitary matrices. More generally, for $V$ a complex, finite-dimensional Hilbert space, $U(V)$ is the group of unitary operators on $V$. 

\begin{definition}[Representations]
Let $G$ be a group. A \emph{complex, unitary, finite-dimensional representation} (henceforth, a \emph{representation}) of $G$ is a pair $(\mu, V)$, where $V$ is a finite-dimensional, complex Hilbert space, and $\mu: G \to U(V)$ is a group homomorphism. That is, $\mu(e) = I_V$, $\mu(g)$ is unitary for all $g \in G$, and $\mu(g \cdot h) = \mu(g) \cdot \mu(h)$ for all $g, h \in G$. The \emph{dimension} of the representation, denoted $\dim(\mu)$, is the dimension of $V$. 
\end{definition}

Commonly, a representation of $G$, $(\mu, V)$, is referred to either by $\mu$ or $V$, when the meaning is clear from context. 

\begin{definition}[Characters]
Let $G$ be a group, and let $\mu$ be a representation. The \emph{character} of $\mu$ is the function $\chi_\mu: G \to \C$ given by $\chi_\mu(g) = \tr ( \mu(g) )$. 
\end{definition}

\begin{definition}[Intertwining operators]
    Let $G$ be a group, and let $(\mu_1, V_1)$ and $(\mu_2, V_2)$ be representations of $G$. Then a map $T: V_1 \to V_2$ is an \emph{intertwining operator}, or \emph{intertwiner}, if $T \cdot \mu_1(g) = \mu_2(g) \cdot T$ for all $g \in G$. We say $T$ \emph{intertwines} $\mu_1$ and $\mu_2$. 
\end{definition}

\begin{definition}[Isomorphic representations]
Let $G$ be a group, and let $(\mu_1, V_1)$ and $(\mu_2, V_2)$ be representations of $G$. We say $\mu_1$ and $\mu_2$ are \emph{isomorphic} representations if there exists an invertible map $T$ that intertwines $\mu_1$ and $\mu_2$. We write $\mu_1 \cong \mu_2$, or, if we want to emphasize the group, $\mu_1 \stackrel{G}{\cong} \mu_2$.
\end{definition}

\begin{definition}[Irreducible representations]
    Let $G$ be a group, and let $(\mu, V)$ be a representation of $G$. A subspace $W \subseteq V$ is \emph{invariant} if $\mu(g) \cdot W \subseteq W$ for all $g \in G$. An invariant subspace is \emph{trivial} if $W = \{0\}$ or $W = V$. The representation $\mu$ is \emph{irreducible} if $V$ has no nontrivial invariant subspaces. An irreducible representation is also called an \emph{irrep} for short. The set of all isomorphism classes of irreducible representations of $G$ is denoted $\widehat{G}$.  
\end{definition}

We will usually fix an irrep $\mu_i$ from each isomorphism class, and identify $\widehat{G}$ with the set $\{\mu_i\}$.

\begin{lemma}[Schur's lemma]
    \label{lem:schur-lemma}
    Let $G$ be a group, and let $(\mu_1, V_1)$ and $(\mu_2, V_2)$ be two irreducible representations of $G$. Let $T: V_1 \to V_2$ intertwine $\mu_1$ and $\mu_2$. If $\mu_1$ and $\mu_2$ are non-isomorphic, then $T = 0$. If instead $\mu_1 = \mu_2$, then 
    \begin{equation*}
        T = \frac{\tr(T)}{\dim(\mu_1)}\cdot I_{V_1}.
    \end{equation*}
\end{lemma}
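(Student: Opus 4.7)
The plan is to execute the standard two-part proof of Schur's lemma, using invariant subspaces for the non-isomorphic case and then leveraging the algebraic closedness of $\mathbb{C}$ for the isomorphic case. The entire argument is purely formal and follows from the definitions already laid out in the preliminaries.

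For the first part, I would begin by showing that if $T: V_1 \to V_2$ intertwines $\mu_1$ and $\mu_2$, then both $\ker(T) \subseteq V_1$ and $\mathrm{im}(T) \subseteq V_2$ are invariant subspaces. Indeed, if $v \in \ker(T)$ and $g \in G$, then $T(\mu_1(g) v) = \mu_2(g) T(v) = 0$, so $\mu_1(g) v \in \ker(T)$; the argument for $\mathrm{im}(T)$ is symmetric. Since $\mu_1$ and $\mu_2$ are irreducible, $\ker(T) \in \{\{0\}, V_1\}$ and $\mathrm{im}(T) \in \{\{0\}, V_2\}$. If $T \neq 0$, then necessarily $\ker(T) = \{0\}$ and $\mathrm{im}(T) = V_2$, making $T$ an invertible intertwiner and thus an isomorphism of representations. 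This contradicts the assumption $\mu_1 \not\cong \mu_2$, so $T$ must vanish.

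For the second part, I would assume $\mu_1 = \mu_2$ so that $T: V_1 \to V_1$ is an endomorphism. Because $V_1$ is a finite-dimensional complex Hilbert space, $T$ admits an eigenvalue $\lambda \in \mathbb{C}$. The operator $T - \lambda \cdot I_{V_1}$ is itself an intertwiner of $\mu_1$ with itself, since intertwiners form a vector space and $I_{V_1}$ intertwines $\mu_1$ with itself trivially. Its kernel contains the (nonzero) $\lambda$-eigenspace, so by the invariant-subspace argument above applied to this intertwiner, the kernel must be all of $V_1$. Hence $T = \lambda \cdot I_{V_1}$, and taking traces gives $\lambda = \tr(T)/\dim(\mu_1)$, yielding the claimed formula. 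There is no substantive obstacle here; the only step requiring any care is noting that the existence of an eigenvalue requires the base field to be algebraically closed, which is why the statement is restricted to complex representations.
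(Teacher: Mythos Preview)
Your proof is correct and is the standard textbook argument for Schur's lemma. The paper does not actually prove this statement; it is presented in the preliminaries as a standard result without proof, so there is nothing to compare against beyond noting that your argument is the canonical one.
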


The following result, which we will make frequent use of, can be shown using Schur's lemma.

\begin{lemma} \label{lem:unitary_isomorphism}
    Let $G$ be a group, and let $(\mu_1, V_1)$ and $(\mu_2, V_2)$ be unitary, isomorphic representations of $G$. Then there exists a unitary $U: V_1 \to V_2$ intertwining $\mu_1$ and $\mu_2$. That is, for all $g \in G$, 
    \begin{equation*}
        U \cdot \mu_1(g) \cdot U^\dagger = \mu_2(g).
    \end{equation*}
\end{lemma}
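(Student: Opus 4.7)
The plan is to use the polar decomposition to convert the given invertible intertwiner into a unitary one. Since $\mu_1$ and $\mu_2$ are isomorphic, by definition there is an invertible intertwiner $T : V_1 \to V_2$ satisfying $T \cdot \mu_1(g) = \mu_2(g) \cdot T$ for all $g \in G$. The goal is to extract from $T$ a unitary intertwiner $U$.

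First I would take adjoints of the intertwining relation. Using that $\mu_1$ and $\mu_2$ are unitary representations, so $\mu_i(g)^\dagger = \mu_i(g^{-1})$, one gets that $T^\dagger : V_2 \to V_1$ intertwines $\mu_2$ with $\mu_1$. Composing the two relations, the positive operator $T^\dagger T : V_1 \to V_1$ satisfies
\begin{equation*}
T^\dagger T \cdot \mu_1(g) \;=\; T^\dagger \cdot \mu_2(g) \cdot T \;=\; \mu_1(g) \cdot T^\dagger T
\end{equation*}
for every $g \in G$, i.e.\ $T^\dagger T$ commutes with $\mu_1(g)$ as operators on $V_1$. Because $T$ is invertible, $T^\dagger T$ is positive definite, so it has a positive definite square root $|T| = (T^\dagger T)^{1/2}$ and this square root is invertible.

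The next step, which is the one requiring the most care, is to argue that $|T|$ (and hence $|T|^{-1}$) also commutes with every $\mu_1(g)$. This follows from the spectral theorem: diagonalize $T^\dagger T$ in an orthonormal eigenbasis of $V_1$; any operator that commutes with $T^\dagger T$ preserves each eigenspace, and therefore commutes with every function of $T^\dagger T$ applied via the functional calculus, in particular with $|T|$ and with $|T|^{-1}$. Once this is in hand, define $U \coloneqq T \cdot |T|^{-1}$. Then $U$ is unitary, since
\begin{equation*}
U^\dagger U \;=\; |T|^{-1} \cdot T^\dagger T \cdot |T|^{-1} \;=\; |T|^{-1} \cdot |T|^2 \cdot |T|^{-1} \;=\; I_{V_1},
\end{equation*}
and invertibility of $T$ gives that $U$ is a bijection $V_1 \to V_2$, so $U U^\dagger = I_{V_2}$ as well.

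Finally I would check the intertwining property: for any $g \in G$,
\begin{equation*}
U \cdot \mu_1(g) \;=\; T \cdot |T|^{-1} \cdot \mu_1(g) \;=\; T \cdot \mu_1(g) \cdot |T|^{-1} \;=\; \mu_2(g) \cdot T \cdot |T|^{-1} \;=\; \mu_2(g) \cdot U,
\end{equation*}
using the commutation of $|T|^{-1}$ with $\mu_1(g)$ established above, and then the original intertwining relation for $T$. Multiplying on the right by $U^\dagger$ yields $U \cdot \mu_1(g) \cdot U^\dagger = \mu_2(g)$, as desired. The only mildly subtle step is justifying that commutation of $T^\dagger T$ with the $\mu_1(g)$ passes to commutation of $|T|$, but the spectral theorem handles this directly on the finite-dimensional Hilbert space $V_1$.
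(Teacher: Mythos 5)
Your proof is correct and complete; the polar decomposition argument is a standard and clean way to establish this fact. However, the paper explicitly flags that the lemma ``can be shown using Schur's lemma'' (the preceding Lemma~\ref{lem:schur-lemma}), which points to a different route: decompose $V_1$ and $V_2$ into orthogonal direct sums of irreducible subrepresentations (possible because unitary representations are completely reducible), match isomorphic irreducible summands, and on each matched pair note that any nonzero intertwiner $T$ between irreducible unitary representations satisfies $T^\dagger T = cI$ with $c>0$ by Schur's lemma, so $T/\sqrt{c}$ is unitary; the direct sum of these unitaries is the desired $U$.

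The two approaches buy slightly different things. Your polar decomposition argument is self-contained given only the spectral theorem on a finite-dimensional Hilbert space: it never needs to know that the representation decomposes into irreducibles, and it constructs $U$ globally in one step from an arbitrary invertible intertwiner $T$. The Schur-based argument is more in the spirit of the representation-theoretic machinery the paper is building up (it reuses complete reducibility, branching, and Schur's lemma, which all appear elsewhere), but it requires the extra bookkeeping of choosing compatible orthonormal decompositions on both sides. Both are valid for the finite-dimensional unitary setting at hand; yours has the mild advantage of not invoking complete reducibility, while the paper's hinted route integrates more tightly with the surrounding development. One small thing worth making explicit in a polished write-up of your version: the step that functions of $T^\dagger T$ (in particular $|T|$ and $|T|^{-1}$) inherit commutation with $\mu_1(g)$ is exactly the fact that $\mu_1(g)$ preserves each eigenspace of $T^\dagger T$, which you correctly identify as the crux and justify via the spectral theorem.
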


\begin{definition}[Direct sum of representations]
    Let $G$ be a group, and let $(\mu_1, V_1)$, $\dots$, $(\mu_k, V_k)$ be representations of $G$. Then the \emph{direct sum} of $\mu_1, \dots, \mu_k$ is the representation $(\mu, V)$, where $V \coloneq V_1 \oplus \dots \oplus V_k$ and $\mu(g) \coloneq \mu_1(g) \oplus \dots \oplus \mu_k(g)$, for all $g \in G$. Equivalently, we can write $\mu(g) = \sum_{i=1}^k \ketbra{i} \otimes \mu_i(g)$. 

    Generally, each representation $\mu_i$ may appear multiple times. If $\mu_i$ appears $m_i$ times, we also write $\mu = \bigoplus_{i=1}^{k} m_i \cdot \mu_i$, and refer to $m_i$ as the \emph{multiplicity} of $\mu_i$. 
\end{definition}

\begin{definition}[Complete reducibility]
    Let $G$ be a group, and let $(\mu, V)$ be a representation. Then $\mu$ is \emph{completely reducible} if 
    \begin{equation*}
        \mu \cong \bigoplus_{i=1}^k m_i \cdot \mu_i,
    \end{equation*}
    with each $\mu_i$ irreducible. 
\end{definition}

Finite-dimensional unitary representations are completely reducible. 

\begin{definition}[Restriction of representations]
    Let $H$ be a subgroup of $G$, and let $\mu$ be a representation of $G$. The \emph{restriction of $\mu$ to $H$}, a representation of $H$, is denoted $\mu{\downarrow}^G_H$, and is given by 
    \begin{equation*}
        \mu{\downarrow}^G_H(h) = \mu(h).
    \end{equation*}
\end{definition}

The restriction of an irreducible representation may itself be reducible. \emph{Branching rules} describe how restrictions decompose into irreps. 

\begin{definition}[Branching rules]
    Let $H$ be a subgroup of $G$, and let $\mu$ be a representation of $G$. A \emph{branching rule} is a statement of the form:
    \begin{equation*}
        \mu{\downarrow}^G_H \, \stackrel{H}{\cong} \bigoplus_{\mu_i \in \widehat{H}} m_i \cdot \mu_i.
    \end{equation*}
    If $m_i \in \{0,1\}$ for all $i$, then we say the branching rule is \emph{multiplicity-free}.
\end{definition}

\subsubsection{Partitions and Young diagrams}

\begin{definition}[Partitions]
Let $n$ be a nonnegative integer. A \emph{partition} of $n$ is a finite list of nonnegative integers $\lambda = (\lambda_1, \dots, \lambda_m)$ such that $\lambda_1 \geq \lambda_2 \geq \dots \geq \lambda_m \geq 0$ and $\lambda_1 + \dots + \lambda_m = n$. We write $\lambda \vdash n$. The \emph{size} of $\lambda$ is $|\lambda| = n$, and the \emph{length} of $\lambda$, denoted $\ell(\lambda)$, is the largest index $i$ with $\lambda_i > 0$. 
\end{definition}

\begin{remark} 
Partitions which only differ by some number of trailing zeroes are considered equivalent. As we will see later, when considering the problem of learning a quantum state $\rho \in \C^d$, only partitions of length at most $d$ will be relevant to us. In this case, we will often view a partition as a tuple of length exactly $d$. 
\end{remark}

Partitions have diagrammatic representations known as Young diagrams.

\begin{definition}[Young diagrams]
    Let $\lambda \vdash n$. The \emph{Young diagram of shape $\lambda$} is a diagram consisting of $n$ boxes, organized into $\ell(\lambda)$ left-justified rows, such that the $i$-th row contains $\lambda_i$ boxes. We will also refer to boxes as \emph{cells}.
\end{definition}

\begin{figure}[h]
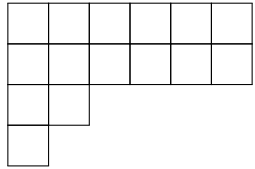

\centering
\ytableausetup{boxframe=normal}
\begin{ytableau}
~ & ~ & ~ & ~ & ~ & ~ \\
~ & ~ & ~ & ~ & ~ & ~ \\
~ & ~ \\
~ \\
\end{ytableau}
\caption{The Young diagram above corresponds to the partition $\lambda = (6,6,2,1)$, with $|\lambda| = 15$ and $\ell(\lambda) = 4$.}
\label{fig:example_Young_diagram}
\end{figure}

 See \Cref{fig:example_Young_diagram} for an example of a Young diagram. We use the following non-standard definition to refer to a consecutive group of rows with the same length, something that will be useful for describing our family of unbiased estimators. 

\begin{definition}[Blocks in Young diagrams]
    Let $\lambda = (\lambda_1, \dots, \lambda_d) \vdash n$ be a Young diagram. Then $\{i, i+1, \dots, j\}$ is a \emph{block} of $\lambda$ if
    \begin{equation*}
        \lambda_i = \lambda_{i+1} = \dots = \lambda_j,
    \end{equation*}
    and $\lambda_{i-1} > \lambda_i$ when $i > 1$, and $\lambda_{j} > \lambda_{j+1}$ when $j < d$.
\end{definition}

In the Young diagram in \Cref{fig:example_Young_diagram}, $\{1,2\}$, $\{3\}$ and $\{4\}$ are blocks. If we regard the corresponding partition as a tuple of length $d$, and $d \geq 5$, then $\{5, \dots, d\}$ is also a block. 

\begin{definition}[Standard Young tableaux]
Let $\lambda \vdash n$. A \emph{standard Young tableau} (SYT) of shape $\lambda$ is a filling of $\lambda$ with the numbers $\{1, \dots, n\}$, such that the entries strictly increase both from left to right, and from top to bottom. The set of all SYTs of shape $\lambda$ is denoted $\mathrm{SYT}(\lambda)$. 
\end{definition}

\begin{definition}[Semistandard Young tableaux]
    Let $\lambda \vdash n$, and let $d$ be a positive integer. A \emph{semistandard Young tableau} of shape $\lambda$ and alphabet $[d]$ is a filling of $\lambda$ with numbers from $[d]$, such that the entries weakly increase from left to right, and strictly increase from top to bottom. The set of all SSYTs of shape $\lambda$ and alphabet $[d]$ is denoted $\mathrm{SSYT}(\lambda, d)$. 
\end{definition}

\begin{figure}[h!]
\centering
\ytableausetup{boxframe=normal}

\begin{ytableau}
1 & 2 & 4 & 6 \\
3 & 5 \\
7
\end{ytableau}
\hspace{3cm}
\begin{ytableau}
1 & 1 & 2 & 3 \\
2 & 3 \\
3
\end{ytableau}

\caption{Let $\lambda = (4,2,1)$. On the left, a standard Young tableau of shape $\lambda$. On the right, a semistandard Young tableau of shape $\lambda$ and alphabet $[3]$.}
\label{fig:Young_tableaux_example_1}
\end{figure}


\begin{notation}
    We use the following additional notation.
    \begin{itemize}
        \item We also write $\square$ to denote the Young diagram corresponding to $\lambda = (1)$. 
        \item The box in the $i$-th row and $j$-th column is denoted by $(i,j)$. 
        \item The \emph{content} of $(i,j)$ is $\content_\lambda(i,j) \coloneq j - i$. 
        \item The \emph{hook} of $(i,j)$ is the set of boxes in $\lambda$, $(k, \ell)$, such that $i = k$ and $j \leq \ell$, or $j = \ell$ and $i \leq k$. The \emph{hook-length} of $(i,j)$, denoted $\hook_\lambda(i,j)$, is the number of boxes in the hook of $(i,j)$. 
        \item If $S$ is an SYT of shape $\lambda$, and $m \in [n]$ is the label of box $(i,j)$ in $S$, then we also write $\content_S(m) \coloneq \content_\lambda(i,j)$. 
        \item Let $\lambda$ and $\mu$ are Young diagrams such that $\ell(\mu) \geq \ell(\lambda)$, and suppose for each $i \in \ell(\lambda)$ we have $\lambda_i \leq \mu_i$. We can view $\lambda$ as a subset of the boxes of $\mu$, and write $\lambda \subseteq \mu$. We also write $\mu \setminus \lambda$ for the subset of boxes of $\mu$ not in $\lambda$. 
        \item If $\lambda \subseteq \mu$ and $| \mu \setminus \lambda| = 1$, then we write $\lambda \nearrow \mu$. The Young diagram of $\mu$ can be obtained by adding a single box to some row of $\lambda$. If this is the $i$-th row, we also write $\mu = \lambda + e_i$, or $\lambda = \mu - e_i$. 
        \item If $\lambda \subseteq \mu$, and $\mu$ can be obtained from $\lambda$ by adding some number of boxes to $\lambda$, such that none of these additional boxes are in the same column, we write $\lambda \precsim \mu$. Equivalently, the row-lengths of $\lambda$ and $\mu$ \emph{interlace}, meaning $\mu_{i+1} \leq \lambda_{i} \leq \mu_{i}$ for all $i$, so that $\lambda_d \leq \mu_d \leq \lambda_{d-1} \leq \dots \leq \mu_2 \leq \lambda_1 \leq \mu_1$.  
    \end{itemize}
\end{notation}

\begin{figure} [h]
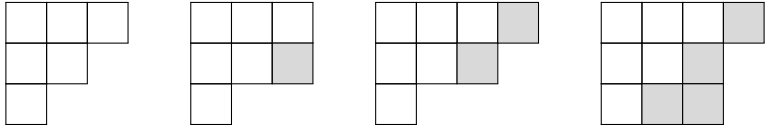

    \centering
    \begin{ytableau}
          ~ & ~ & ~ \\
          ~ & ~  \\
          ~ \\
    \end{ytableau}
    \qquad
    \begin{ytableau}
          ~ & ~ & ~ \\
          ~ & ~ & *(gray!30)~ \\
          ~ \\
    \end{ytableau}
    \qquad
    \begin{ytableau}
          ~ & ~ & ~ & *(gray!30)~ \\
          ~ & ~ & *(gray!30)~ \\
          ~ 
    \end{ytableau}
    \qquad
    \begin{ytableau}
          ~ & ~ & ~ & *(gray!30)~ \\
          ~ & ~ & *(gray!30)~ \\
          ~ & *(gray!30)~ & *(gray!30)~
    \end{ytableau}
    \caption{Four Young diagrams. On the left is $\lambda = (3,2,1)$. The remaining Young diagrams are $\mu_i$, for $i = 1, 2, 3$. Each $\mu_i$ satisfies $\lambda \subseteq \mu_i$, with the boxes of $\mu\setminus \lambda$ shaded. We have $\lambda \nearrow \mu_1$, with $\mu_1 = \lambda + e_2$, since the additional box is in the second row. The shaded box in $\mu_1$ is $(2,3)$, and the content of this cell is $\content(2,3) = 1$. Both $\mu_1$ and $\mu_2$ satisfy $\lambda \preceq \mu_i$, since there is at most one added box per column. However, $\lambda \not \precsim \mu_3$, since there are two additional boxes in the third column.}
    \label{fig:examples_of_additional_notation}
\end{figure}




\subsubsection{The irreducible representations of $S_n$} \label{irreps_of_Sn}

We now describe the irreducible representations of the symmetric group. For more, see \cite{Sag01} for general theory, \cite[Chapter 7]{Har05} or \cite{VK05} for subgroup adapted bases, and \cite{VK05} for Young's orthogonal representation and the Jucys-Murphy element.

\begin{theorem}
    The irreducible representations of $S_n$ are indexed by partitions $\lambda \vdash n$. 
\end{theorem}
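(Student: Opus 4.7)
The plan is to establish a bijection between partitions $\lambda \vdash n$ and the set $\widehat{S_n}$ of isomorphism classes of irreducible representations of $S_n$. The standard approach is first to match the cardinalities of the two sets, and then to explicitly construct, for each $\lambda$, a distinct irrep indexed by it.

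For the counting step I would appeal to the general fact from finite group theory that $|\widehat{G}| = $ (number of conjugacy classes of $G$), which follows from the column-orthogonality of the character table (or from the decomposition of the group algebra into matrix blocks via complete reducibility and Schur's lemma). Applied to $G = S_n$: two permutations are conjugate if and only if they have the same cycle type, and cycle types of elements of $S_n$ biject with partitions $\lambda \vdash n$ via writing the multiset of cycle lengths in weakly decreasing order. Hence $|\widehat{S_n}| = |\{\lambda : \lambda \vdash n\}|$.

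Next, for each $\lambda \vdash n$ I would construct the Specht module $S^\lambda$. Let $M^\lambda$ be the permutation module spanned by $\lambda$-tabloids (Young tableaux of shape $\lambda$ modulo row permutations), on which $S_n$ acts by permuting entries. For any $\lambda$-tableau $t$, let $C_t \subseteq S_n$ be its column stabilizer and define the polytabloid $e_t = \sum_{\sigma \in C_t} \mathrm{sgn}(\sigma) \cdot \{\sigma t\}$, where $\{ \cdot \}$ denotes passing to the tabloid class. Then $S^\lambda \subseteq M^\lambda$ is the $\C S_n$-submodule spanned by all polytabloids $e_t$. The key technical lemmas — the sign lemma (that $\sum_{\sigma \in C_t} \mathrm{sgn}(\sigma) \sigma$ annihilates any tabloid with two entries in the same column of $t$) and the dominance-order comparison lemma — then imply both that $S^\lambda$ is irreducible and that $S^\lambda \not\cong S^\mu$ for $\lambda \neq \mu$. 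Combined with the cardinality match, this gives the desired bijection $\lambda \leftrightarrow S^\lambda$.

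The hard part is establishing the irreducibility and pairwise non-isomorphism of the Specht modules; this is where the combinatorics of the dominance order on partitions does real work. A conceptually cleaner alternative, which aligns with the later use of Jucys-Murphy elements in this paper, is the Okounkov-Vershik approach: show that the Jucys-Murphy elements $X_k = \sum_{j < k}(j\ k)$ generate a maximal commutative subalgebra (the Gelfand-Tsetlin subalgebra) of $\C S_n$, classify their joint spectra, and observe that the spectra naturally organize into equivalence classes indexed by Young diagrams. Either route concludes with the same labeling of $\widehat{S_n}$ by partitions $\lambda \vdash n$.
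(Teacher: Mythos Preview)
Your proof sketch is correct and outlines the standard approaches (Specht modules via dominance order, or Okounkov--Vershik via Jucys--Murphy elements). However, the paper does not actually prove this theorem: it is stated as background in the preliminaries with a citation to \cite{Sag01}, so there is no proof in the paper to compare against. Your sketch would serve perfectly well as a proof, and the Okounkov--Vershik variant you mention aligns nicely with the paper's later use of Jucys--Murphy elements and Young's orthogonal form.
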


\begin{definition}
    The irrep corresponding to $\lambda$ is $(\kappa_\lambda, \Specht_\lambda)$, where $\Specht_\lambda$ is called the \emph{Specht module}. We will abbreviate $\dim(\kappa_\lambda)$ by $\dim(\lambda)$. 
\end{definition}

The symmetric group has the following branching rule. 

\begin{theorem} \label{thm:branching_rule_Sn}
    Let $\lambda \vdash n$ be a partition. Consider $S_{n-1}$ as a subgroup of $S_n$ via the embedding that maps $\pi \in S_{n-1}$ to the permutation on $[n]$ that fixes $n$ and permutes everything else according to $\pi$. The branching rule for $\kappa_\lambda$, restricted to $S_{n-1}$, is
    \begin{equation*}
        (\kappa_\lambda){\downarrow}^{S_n}_{S_{n-1}} \cong \bigoplus_{\mu \nearrow \lambda} \kappa_\mu. 
    \end{equation*}
    Note that the branching rule is multiplicity-free. 
\end{theorem}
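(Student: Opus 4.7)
The plan is to prove the branching rule by explicitly exhibiting, inside $\Specht_\lambda$, an $S_{n-1}$-invariant subspace for each $\mu \nearrow \lambda$ isomorphic to $\Specht_\mu$, using the combinatorial description of $\Specht_\lambda$ in terms of standard Young tableaux. The starting observation is that $\dim(\kappa_\lambda) = |\mathrm{SYT}(\lambda)|$, and every SYT of shape $\lambda$ has the label $n$ sitting in some corner box, i.e.\ in the unique box of $\lambda \setminus \mu$ for some $\mu \nearrow \lambda$. This already gives the dimension identity
\begin{equation*}
\dim(\lambda) = |\mathrm{SYT}(\lambda)| = \sum_{\mu \nearrow \lambda} |\mathrm{SYT}(\mu)| = \sum_{\mu \nearrow \lambda} \dim(\mu),
\end{equation*}
so it suffices to realize $\bigoplus_{\mu \nearrow \lambda}\Specht_\mu$ as a $S_{n-1}$-subrepresentation of $\Specht_\lambda$, as complete reducibility then forces equality.

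The concrete construction I would use is Young's orthogonal (equivalently, semi-normal) basis: fix a basis $\{\ket{T}\}_{T \in \mathrm{SYT}(\lambda)}$ of $\Specht_\lambda$ and recall the explicit formulas describing the action of each adjacent transposition $(i,i{+}1)$ on $\ket{T}$ in terms of the contents of the boxes containing $i$ and $i{+}1$ in $T$. For each $\mu \nearrow \lambda$, define
\begin{equation*}
W_\mu \coloneqq \mathrm{span}\{\ket{T} : T \in \mathrm{SYT}(\lambda),\ \text{the box labeled } n \text{ is } \lambda \setminus \mu\}.
\end{equation*}
Embedding $S_{n-1} \subseteq S_n$ as in the theorem, the group $S_{n-1}$ is generated by the adjacent transpositions $(i, i{+}1)$ for $1 \leq i \leq n-2$, and these do \emph{not} touch the label $n$: in the Young's orthogonal formulas, applying $(i,i{+}1)$ to $\ket{T}$ yields a linear combination of $\ket{T}$ and $\ket{T'}$, where $T'$ is obtained from $T$ by swapping $i$ and $i{+}1$, which does not alter the position of $n$. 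Hence each $W_\mu$ is $S_{n-1}$-invariant.

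To identify $W_\mu$ with $\Specht_\mu$, observe that the map $T \mapsto T|_\mu$ (erase the box containing $n$) is a bijection $\mathrm{SYT}(\lambda)|_{n \in \lambda \setminus \mu} \to \mathrm{SYT}(\mu)$, and the Young's orthogonal coefficients depend only on the contents of the boxes of $i$ and $i{+}1$, which are identical in $T$ and $T|_\mu$ for $i \leq n-2$. Therefore the linear map $W_\mu \to \Specht_\mu$ sending $\ket{T} \mapsto \ket{T|_\mu}$ intertwines the $S_{n-1}$-actions, giving $W_\mu \cong \Specht_\mu$. Combining with the dimension count yields the claimed decomposition.

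The main obstacle is justifying the explicit form of Young's orthogonal representation in the first place — this is where the real content lives, and it has to be either cited or developed as a prerequisite. A cleaner alternative, if one wishes to avoid exhibiting the basis, is a character-theoretic argument: express the character $\chi_\lambda$ via the Murnaghan–Nakayama rule and verify that $\chi_\lambda|_{S_{n-1}} = \sum_{\mu \nearrow \lambda}\chi_\mu$ by case analysis on whether the cycle containing $n$ is trivial, which reduces to a straightforward identity on border-strip tableaux. Either path works; I would pick the SYT approach since the Young's orthogonal basis is already needed elsewhere in the paper (it is mentioned in \Cref{part:appendix}).
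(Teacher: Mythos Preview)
The paper does not actually prove this theorem: it is stated as a standard result in the preliminaries section on the irreducible representations of $S_n$, with the reader referred to \cite{Sag01} for the general theory. So there is no ``paper's own proof'' to compare against.

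Your proposed argument is a correct and standard route to the branching rule. One caution, which you already flag: in the paper's logical flow, the branching rule (\Cref{thm:branching_rule_Sn}) comes \emph{before} the construction of Young-Yamanouchi bases and Young's orthogonal form (\Cref{def:YOB}), so using Young's orthogonal basis to prove the branching rule would be circular unless you first develop that basis independently (which can be done, e.g.\ by defining the matrices directly and verifying the Coxeter relations, as in Vershik--Okounkov). Your alternative character-theoretic route via Murnaghan--Nakayama avoids this issue entirely and would be the cleaner choice if you want a self-contained proof that respects the paper's order of exposition.
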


A multiplicity-free branching rule can be used to construct an explicit orthonormal basis for each irreducible representation. Such a basis is generally known as a \emph{subgroup adapted basis}. We now outline this construction for the symmetric group. From \Cref{thm:branching_rule_Sn}, we have the following decomposition of vector spaces.
\begin{equation} \label{eq:branching_rule_Specht}
    \Specht_\lambda \cong \bigoplus_{\mu \nearrow \lambda} \Specht_{\mu}.
\end{equation}
Taking the summands to be orthogonal reduces the problem to finding orthonormal bases for $\Specht_\mu$. We now recursively apply the branching rule to $S_{n-1}$, $S_{n-2}$, $\dots$, $S_1$ to obtain
\begin{equation}\label{Young_orthogonal_basis_decomposition}
    \Specht_\lambda \cong \bigoplus_{\lambda^{(1)} \nearrow \lambda^{(2)} \nearrow \dots \nearrow \lambda^{(n-1)} \nearrow \lambda^{n}} \Specht_{\lambda^{(1)}}.
\end{equation}
However, since $S_1$ is the trivial group, $\Specht_{\lambda^{(1)}}$ is a one-dimensional subspace, and contains a unique vector (up to phase). As a result, $\Specht_{\lambda}$ decomposes into the direct sum of one-dimensional subspaces indexed by chains of the form $\square = \lambda^{(1)} \nearrow \lambda^{(2)} \nearrow \dots \nearrow \lambda^{(n-1)} \nearrow \lambda^{n} = \lambda$. Moreover, such chains are in bijection with standard Young tableaux of shape $\lambda$, where the number $i \in [n]$ is placed in the box in the unique cell in $\lambda^{(i)} \setminus \lambda^{(i-1)}$ (taking $\lambda^{(0)} = \varnothing$). See \Cref{fig:SYT_bijection} for an illustration. 

\begin{figure}
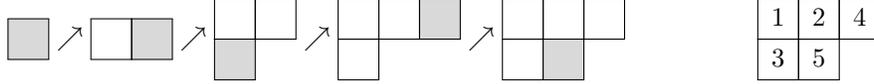

\centering
\ytableausetup{boxframe=normal}

\[
\begin{ytableau}
*(gray!30)~
\end{ytableau}
\nearrow
\begin{ytableau}
~ & *(gray!30)~
\end{ytableau}
\nearrow
\begin{ytableau}
~ & ~ \\
*(gray!30)~
\end{ytableau}
\nearrow
\begin{ytableau}
~ & ~ & *(gray!30)~ \\
~
\end{ytableau}
\nearrow
\begin{ytableau}
~ & ~ & ~ \\
~ & *(gray!30)~
\end{ytableau}
\hskip 5em
\begin{ytableau}
1 & 2 & 4 \\
3 & 5
\end{ytableau}
\]

\caption{On the left, a sequence of Young diagrams $(\lambda^{(i)})_{i \in [5]}$, each obtained from the last by adding a single box. The new box is shaded in each diagram. On the right, the SYT of shape $\lambda^{(5)}$ corresponding to the chain. The box added at the $i$-th step is labeled $i$. }
\label{fig:SYT_bijection}
\end{figure}

The above construction gives us a canonical decomposition into one-dimensional subspaces. Taking these subspaces to be orthogonal, and choosing a unit-norm vector in each subspace gives us a \emph{Young-Yamanouchi basis}.

\begin{definition}[Young-Yamanouchi basis]
    Given $\lambda \vdash n$, a \emph{Young-Yamanouchi basis} for $\Specht_\lambda$ has a unit vector $\ket{S}$ for each standard Young tableau $S$ of shape $\lambda$, associated to the corresponding one-dimensional subspace in \Cref{Young_orthogonal_basis_decomposition}. 
\end{definition}

\begin{corollary}
    $\dim(\lambda) = | \mathrm{SYT}(\lambda) |$. 
\end{corollary}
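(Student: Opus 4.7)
The plan is to read the corollary directly off the construction of the Young–Yamanouchi basis that was just set up. By the recursive application of the branching rule in \Cref{thm:branching_rule_Sn}, displayed in \Cref{Young_orthogonal_basis_decomposition}, the Specht module $\Specht_\lambda$ decomposes as an orthogonal direct sum
\begin{equation*}
\Specht_\lambda \cong \bigoplus_{\square = \lambda^{(1)} \nearrow \lambda^{(2)} \nearrow \cdots \nearrow \lambda^{(n)} = \lambda} \Specht_{\lambda^{(1)}},
\end{equation*}
where each summand is one–dimensional because $S_1$ is trivial. Taking dimensions, $\dim(\lambda)$ equals the number of chains appearing on the right.

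Next, I would invoke the bijection, already described in the text just before \Cref{fig:SYT_bijection}, between such chains and standard Young tableaux of shape $\lambda$: send a chain $\square = \lambda^{(1)} \nearrow \cdots \nearrow \lambda^{(n)} = \lambda$ to the SYT which places the label $i$ in the unique box of $\lambda^{(i)} \setminus \lambda^{(i-1)}$. The row–strictness and column–strictness of the resulting tableau are immediate from the fact that each $\lambda^{(i)}$ is a valid Young diagram, and conversely, given an SYT $S$ of shape $\lambda$, the sequence of sub–diagrams formed by the cells labeled $\{1,\dots,i\}$ recovers a unique such chain. Hence the set of chains is in bijection with $\mathrm{SYT}(\lambda)$.

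Combining the two steps, $\dim(\lambda)$ counts chains, which in turn biject with SYTs, so $\dim(\lambda) = |\mathrm{SYT}(\lambda)|$. I do not expect any real obstacle here: all the nontrivial content (namely the multiplicity–freeness of the branching rule) has already been stated as \Cref{thm:branching_rule_Sn}, and the only remaining step is to unfold the recursion and match chains with tableaux, which is a purely combinatorial observation.
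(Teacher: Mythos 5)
Your proof is correct and is exactly the argument the paper intends: the corollary is meant to be read directly off the construction of the Young--Yamanouchi basis, where $\dim(\lambda)$ counts the one-dimensional summands in \Cref{Young_orthogonal_basis_decomposition}, and these are indexed by chains $\square \nearrow \cdots \nearrow \lambda$, which biject with $\mathrm{SYT}(\lambda)$. No gaps.
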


Young-Yamanouchi bases are canonical up to a choice of phase in each basis vector. \emph{Young's orthogonal basis} is a particularly nice basis, in which each permutation is represented by an orthogonal matrix.  


\begin{definition}[Young's orthogonal basis] \label{def:YOB}
    Let $\lambda \vdash n$. \emph{Young's orthogonal basis} is a particular Young-Yamanouchi basis, which can be specified by the action of the transpositions $(i,i+1)$, for $i \in [n-1]$. If $i$ and $i+1$ occur in the same row of $S$, then 
    \begin{equation*}
    \kappa_\lambda(i,i+1) \cdot \ket{S} = +\ket{S}.
    \end{equation*}
    If, instead, $i$ and $i+1$ occur in the same column of $S$, then
    \begin{equation*}
    \kappa_\lambda(i,i+1) \cdot \ket{S} = -\ket{S}.
    \end{equation*}
    Finally, if $i$ and $i+1$ occur in neither the same row nor the same column of $S$, then swapping the locations of $i$ and $i+1$ in $S$ gives us another valid SYT $S'$, and 
    \begin{equation*}
    \kappa_\lambda(i,i+1) \cdot \ket{S} = \frac{1}{\Delta_S(i)} \ket{S} + \sqrt{1 - \frac{1}{\Delta_S(i)^2}} \ket{S'},
    \end{equation*}
    where $\Delta_S(i) \coloneq \content_S(i+1) - \content_S(i)$. 
\end{definition}

When written in this basis, the representation $\kappa_\lambda$ is also known as \emph{Young's orthogonal form}. Henceforth, we will take $\kappa_\lambda$ to mean Young's orthogonal form, unless otherwise mentioned.


For a group $G$, the group algebra $\C[G]$ is the associative algebra whose elements are formal linear combinations of group elements with complex coefficients. The multiplication of basis elements is inherited from the group's operation, and can be linearly extended to multiplication of generic elements. Representations can be extended to the group algebra, as $\rho ( \sum_i c_i g_i) = \sum_i c_i \rho(g_i)$. 

Jucys-Murphy elements play a prominent role in our analysis.

\begin{definition}[Jucys-Murphy element] \label{def:Jucys_Murphy}
    For $i \in [n]$, the \emph{$k$-th Jucys-Murphy element} is the element of the group algebra $\C[S_n]$ given by
    \begin{equation*}
        X_i \coloneq \sum_{j < i} (j,i).
    \end{equation*}
\end{definition}

\begin{theorem}
    \label{thm:jucys-murphy-diag}
    Let $\lambda \vdash n$. Consider $\kappa_\lambda(X_k)$ as an operator on $\Specht_\lambda$. Then Young's orthogonal basis is an eigenbasis of $\kappa_\lambda(X_k)$, with
    \begin{equation*}
        \kappa_\lambda( X_i) \cdot \ket{S} = \content_S(i) \cdot \ket{S}
    \end{equation*}
    for all standard Young tableaux $S$ of shape $\lambda$. 
\end{theorem}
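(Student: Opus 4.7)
The plan is to prove this by induction on $i$, leveraging the recursive definition of Young's orthogonal form given in \Cref{def:YOB}. The base case $i = 1$ is immediate: $X_1 = 0$ as an empty sum, and on the other side, the box labeled $1$ in any SYT must be $(1,1)$ (since entries strictly increase both rightward and downward), so $\content_S(1) = 1 - 1 = 0$.

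For the inductive step, the key algebraic identity is
\begin{equation*}
    X_{i+1} = \sigma_i \cdot X_i \cdot \sigma_i + \sigma_i, \qquad \sigma_i \coloneqq (i, i+1),
\end{equation*}
which follows by noting that conjugation by $\sigma_i$ turns each transposition $(j, i)$ with $j < i$ into $(j, i+1)$, so $\sigma_i X_i \sigma_i = \sum_{j<i}(j, i+1)$, and adding $\sigma_i$ contributes the missing term $(i, i+1)$. I would then split into the same three cases that govern $\sigma_i$ in \Cref{def:YOB}, using the inductive hypothesis $\kappa_\lambda(X_i)\ket{S} = \content_S(i) \ket{S}$ for \emph{all} standard Young tableaux of shape $\lambda$.

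In Case 1, where $i, i+1$ lie in the same row of $S$, $\sigma_i$ fixes $\ket{S}$, so $\kappa_\lambda(X_{i+1}) \ket{S} = (\content_S(i) + 1)\ket{S}$, which matches $\content_S(i+1) = \content_S(i) + 1$. In Case 2 (same column), $\sigma_i \ket{S} = -\ket{S}$, giving eigenvalue $\content_S(i) - 1 = \content_S(i+1)$. Case 3 is the main obstacle: letting $S'$ denote $S$ with $i$ and $i+1$ swapped, and writing $\alpha = 1/\Delta_S(i)$, $\beta = \sqrt{1 - \alpha^2}$, I would observe that $\Delta_{S'}(i) = -\Delta_S(i)$ so that $\sigma_i \ket{S'} = -\alpha \ket{S'} + \beta \ket{S}$, while $\content_{S'}(i) = \content_S(i+1)$. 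Expanding $\sigma_i X_i \sigma_i \ket{S} + \sigma_i \ket{S}$ in the $\{\ket{S}, \ket{S'}\}$ basis using these facts, the off-diagonal coefficient comes to $\alpha\beta(\content_S(i) - \content_S(i+1)) + \beta = -\beta + \beta = 0$, and the diagonal coefficient simplifies to $\alpha^2 \content_S(i) + \beta^2 \content_S(i+1) + \alpha$, which equals $\content_S(i+1)$ after using $\alpha^2 \Delta_S(i) = \alpha$. This closes the induction.

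The most delicate step is Case 3, both because it requires propagating the inductive hypothesis through two basis vectors simultaneously (so it is important that the hypothesis is asserted for \emph{every} tableau of shape $\lambda$, not just for $S$), and because the cancellation of the off-diagonal term---which is what actually makes Young's orthogonal basis an eigenbasis of $X_{i+1}$ rather than merely a basis in which $X_{i+1}$ has a block structure---depends on the precise form of the coefficients $1/\Delta_S(i)$ and $\sqrt{1 - 1/\Delta_S(i)^2}$ appearing in \Cref{def:YOB}. A conceptually slicker alternative would be to argue via the central element $Z_i = X_1 + \cdots + X_i \in \C[S_i]$ (the sum of all transpositions in $S_i$), show it acts on $\Specht_\mu$ by the scalar $\sum_{c \in \mu} \content(c)$, and then subtract consecutive such identities along the chain $\lambda^{(1)} \nearrow \cdots \nearrow \lambda^{(n)}$ associated to $S$; but this requires the nontrivial content-sum formula for the transposition class sum, whereas the inductive approach above uses only what has already been introduced in \Cref{def:YOB}.
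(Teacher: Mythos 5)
Your proof is correct, and all three cases check out: the identity $X_{i+1} = \sigma_i X_i \sigma_i + \sigma_i$ follows from conjugating $(j,i)$ by $(i,i+1)$, the row and column cases match $\content_S(i+1) = \content_S(i) \pm 1$, and in Case 3 the off-diagonal coefficient $\beta\bigl(\alpha(\content_S(i)-\content_S(i+1)) + 1\bigr)$ does vanish since $\alpha(\content_S(i)-\content_S(i+1)) = -\alpha\Delta_S(i) = -1$, while the diagonal term $\alpha^2\content_S(i) + \beta^2\content_S(i+1) + \alpha = \content_S(i+1) - \alpha^2\Delta_S(i) + \alpha = \content_S(i+1)$. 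The paper states this theorem without proof (it is cited as standard background on Young's orthogonal form and Jucys--Murphy elements, with references such as \cite{VK05}), and your inductive argument is the standard textbook proof, so there is no discrepancy to reconcile. You are also right to emphasize that the induction hypothesis must be invoked for $\ket{S'}$ as well as $\ket{S}$ in Case 3; that is the step that a careless reader could miss, and it is exactly why one phrases the inductive statement uniformly over all SYTs of shape $\lambda$ rather than for a single tableau.
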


We will sometimes identify $X_k$ with its action on a representation space, and use $X_k$ in place of $\mathcal{P}(X_k)$ or $\kappa_\lambda(X_k)$ whenever the representation is clear from context. 

\subsubsection{The irreducible representations of $U(d)$} \label{sec:irreps_of_Ud}
In this section, we describe the irreducible representations of the unitary group. For more, see \cite{GW09}.

\begin{definition}[Polynomial representations]
    A finite-dimensional representation $(\mu, V)$ of a matrix group $G$ is \emph{polynomial}, if, expressed in some basis of $V$, every matrix element $\mu(g)_{ij}$ is a polynomial in the entries of $g \in G$. 
\end{definition}

A representation being polynomial is a basis-independent notion, since if the entries of $\mu(g)$ are polynomials in one basis, then they are also polynomials in any other basis. 

\begin{theorem}
    The polynomial irreps of $U(d)$ are indexed by partitions $\lambda$ of length at most $d$. The irrep corresponding to $\lambda$ is written $(\nu_\lambda, V^d_\lambda)$, where $V^d_\lambda$ is called the \emph{Schur(-Weyl) module}. 
\end{theorem}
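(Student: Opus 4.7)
The plan is to prove this classical result via Schur--Weyl duality, which simultaneously constructs the irreducible representations and classifies them. The first step is to reduce from arbitrary polynomial representations to homogeneous ones. A polynomial representation $(\mu,V)$ of $U(d)$ is \emph{homogeneous of degree $n$} if every matrix entry $\mu(U)_{ij}$ is a homogeneous polynomial of degree $n$ in the entries of $U$; one shows any polynomial representation splits as a direct sum of homogeneous pieces by extending $\mu$ to $GL(d,\mathbb{C})$ via Weyl's unitary trick and decomposing under the central $\mathbb{C}^*$-action by scalar matrices. A homogeneous polynomial representation of degree $n$ embeds naturally into the $n$-th tensor power $(\mathbb{C}^d)^{\otimes n}$, so the problem reduces to decomposing this single space.

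Next, I would invoke Schur--Weyl duality. The group $U(d)$ acts diagonally on $(\mathbb{C}^d)^{\otimes n}$, the symmetric group $S_n$ acts by permuting tensor factors, and these actions commute. Since $\mathbb{C}[S_n]$ is semisimple (by Maschke's theorem) and $(\mathbb{C}^d)^{\otimes n}$ is a unitary representation of the compact group $U(d)$, the double commutant theorem yields a decomposition
\begin{equation*}
    (\mathbb{C}^d)^{\otimes n} \;\cong\; \bigoplus_{\lambda \vdash n} V^d_\lambda \otimes \mathrm{Sp}_\lambda
\end{equation*}
as a $U(d)\times S_n$-module, where each nonzero $V^d_\lambda$ is an irreducible $U(d)$-module (defined as the $\mathrm{Sp}_\lambda$-isotypic component).

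The remaining work is to pin down exactly which $\lambda$ yield nonzero $V^d_\lambda$ and to verify that distinct partitions give non-isomorphic irreps. For the length constraint, the key observation is that the Young symmetrizer of shape $\lambda$ involves antisymmetrizing groups of size equal to the column lengths of $\lambda$; if $\ell(\lambda) > d$, some column has more than $d$ entries, forcing the antisymmetrization of $d+1$ vectors in $\mathbb{C}^d$, which is identically zero, so $V^d_\lambda = 0$. Conversely, when $\ell(\lambda) \leq d$, one produces an explicit highest weight vector by taking a tensor product of minor-like expressions of the form $\bigwedge_{i=1}^{k} \ket{i}$ (one factor for each column of length $k$), giving a nonzero vector of weight $\lambda$. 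Non-isomorphism of distinct $V^d_\lambda$'s follows by restricting characters to the diagonal torus: the character of $V^d_\lambda$ there equals the Schur polynomial $s_\lambda(x_1,\ldots,x_d)$, and these polynomials are linearly independent (in fact a basis for symmetric polynomials in $d$ variables among partitions of length $\leq d$).

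The main technical obstacle is the reduction to homogeneous representations, since this is where one genuinely uses that we are working with a compact (or reductive complex) group and that the characters we need are polynomial rather than merely continuous or smooth. Everything else is a relatively clean application of double commutant plus explicit constructions inside the tensor power. An alternative route that avoids the homogeneity reduction would be to work from the start with the highest weight theory of $U(d)$ (classifying irreps by dominant integral weights and then identifying the polynomial ones with weights $\lambda_1 \geq \cdots \geq \lambda_d \geq 0$), but the Schur--Weyl approach has the advantage of aligning naturally with the representation-theoretic framework used throughout the rest of the paper.
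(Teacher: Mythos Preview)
The paper does not prove this theorem; it is stated as a classical background result with a reference to Goodman--Wallach \cite{GW09} for details. Your Schur--Weyl duality outline is a standard and correct route to the classification, but there is no proof in the paper to compare it against.
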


 Since $\nu_\lambda$ is polynomial, we can extend the domain to be all matrices $M \in \C^{d \times d}$. Also, note that here we have an irrep corresponding to the empty partition, $\lambda = \varnothing$.

We now turn to the characters of these irreps. 

\begin{definition}
    Let $x_1, \dots, x_d$ be indeterminates. For $\lambda \vdash n$, the \emph{Schur polynomial} $s_\lambda(x_1, \dots, x_d)$ is the degree-$n$ homogeneous polynomial defined as $s_\lambda(x_1, \dots, x_d) \coloneq \sum_{T} x^T$, where the sum is over all SSYTs $T$ of shape $\lambda$ and alphabet $[d]$, and 
    \begin{equation*}x^T = \prod_{i=1}^d x_i^{w_T(i)}.\end{equation*}
    Here, $w_T(i)$ is the number of boxes of $T$ labeled by $i$. 
\end{definition}

\begin{theorem} \label{thm:character_of_nu_lambda}
    Let $U$ be a unitary with eigenvalues $\alpha_1, \dots, \alpha_d$. Then $\chi_{\nu_\lambda}(U) =  s_\lambda(\alpha_1, \dots, \alpha_d)$. 
\end{theorem}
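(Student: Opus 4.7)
The plan is to reduce to the case of a diagonal unitary and then identify the character with the Schur polynomial via a weight-space decomposition of $V^d_\lambda$. First, since $\chi_{\nu_\lambda}$ is a class function and every unitary is diagonalizable by a unitary conjugation, it suffices to prove
\begin{equation*}
    \chi_{\nu_\lambda}(\diag(\alpha_1, \ldots, \alpha_d)) = s_\lambda(\alpha_1, \ldots, \alpha_d).
\end{equation*}
By the polynomial nature of $\nu_\lambda$, both sides are polynomial in the $\alpha_i$, so if the equality holds for $(\alpha_1, \ldots, \alpha_d)$ in any Zariski-dense set (in particular, for all unitary diagonal $U$), it holds on all of $\C^d$ and hence for all $U \in \C^{d \times d}$.

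The core of the argument is to exhibit an orthonormal basis $\{\ket{T}\}_{T \in \mathrm{SSYT}(\lambda, d)}$ of $V^d_\lambda$ on which the maximal torus acts diagonally by its weight, i.e.
\begin{equation*}
    \nu_\lambda(\diag(\alpha_1, \ldots, \alpha_d)) \cdot \ket{T} = \alpha^T \cdot \ket{T}, \qquad \alpha^T \coloneqq \prod_{i=1}^d \alpha_i^{w_T(i)}.
\end{equation*}
Taking the trace then gives $\chi_{\nu_\lambda}(\diag(\alpha_1, \ldots, \alpha_d)) = \sum_{T \in \mathrm{SSYT}(\lambda, d)} \alpha^T$, which is exactly the definition of $s_\lambda(\alpha_1, \ldots, \alpha_d)$. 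To construct such a basis, I would use Schur--Weyl duality, together with a construction completely parallel to the Young--Yamanouchi construction for $\Specht_\lambda$ from \Cref{irreps_of_Sn}: because the unitary group $U(d)$ admits the multiplicity-free branching rule $\nu_\lambda \downarrow_{U(d-1)}^{U(d)} \cong \bigoplus_{\mu \precsim \lambda} \nu_\mu$ (where the sum is over $\mu$ whose rows interlace those of $\lambda$, $\mu$ of length at most $d-1$), iterating the branching down to $U(1)$ decomposes $V^d_\lambda$ into one-dimensional subspaces indexed by chains $\varnothing = \lambda^{(0)} \precsim \lambda^{(1)} \precsim \cdots \precsim \lambda^{(d)} = \lambda$, and such chains are in bijection with SSYTs of shape $\lambda$ and alphabet $[d]$ (place the number $i$ in each box of $\lambda^{(i)} \setminus \lambda^{(i-1)}$). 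This is the Gelfand--Tsetlin basis. The diagonal subgroup of $U(d)$ acts on the $T$-th basis vector by its weight, which by inspection of the chain is exactly $\alpha^T$.

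The main obstacle is verifying the unitary branching rule and the weight computation, which require more representation theory than the paper sets up explicitly. A cleaner alternative that avoids constructing the Gelfand--Tsetlin basis is to use Schur--Weyl duality directly: $(\C^d)^{\otimes n} \cong \bigoplus_{\lambda \vdash n,\, \ell(\lambda) \leq d} \Specht_\lambda \otimes V^d_\lambda$ as a representation of $S_n \times U(d)$. Taking the $U(d)$-character at a diagonal $U$ of both sides yields
\begin{equation*}
    (\alpha_1 + \cdots + \alpha_d)^n = \sum_{\lambda \vdash n,\, \ell(\lambda) \leq d} \dim(\lambda) \cdot \chi_{\nu_\lambda}(\diag(\alpha_1, \ldots, \alpha_d)),
\end{equation*}
since $\tr(U^{\otimes n}) = (\tr U)^n$. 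Combining this with the Frobenius identity $(\alpha_1 + \cdots + \alpha_d)^n = \sum_\lambda \dim(\lambda) \cdot s_\lambda(\alpha_1, \ldots, \alpha_d)$ and inducting on $n$, using the linear independence of the Schur polynomials and the Pieri rule $s_\lambda \cdot (\alpha_1 + \cdots + \alpha_d) = \sum_{\mu : \lambda \nearrow \mu} s_\mu$ to match coefficients, forces $\chi_{\nu_\lambda}(\diag(\alpha)) = s_\lambda(\alpha)$. Either route works; the Gelfand--Tsetlin approach is more conceptual, while the Schur--Weyl route is more elementary given the tools already in the paper.
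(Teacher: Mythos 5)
Your first route — the Gelfand--Tsetlin weight decomposition — is the standard proof of this result, and it matches the machinery the paper builds in the preliminaries (the $U(d)$ branching rule and the chain/SSYT bijection). The one step you wave at (``by inspection of the chain, the torus weight is exactly $\alpha^T$'') does require a short argument: on the $\nu_\mu$-isotypic subspace of $\nu_\lambda{\downarrow}^{U(d)}_{U(d-1)}$, the element $\diag(1,\ldots,1,\alpha_d)$ commutes with $U(d-1)$ and hence acts by a scalar by Schur's lemma, and comparing the actions of $\alpha_d I_d$ (which acts on $V^d_\lambda$ by $\alpha_d^{|\lambda|}$, since $\nu_\lambda$ is homogeneous of degree $|\lambda|$) and $\alpha_d I_{d-1}$ (which acts on $\nu_\mu$ by $\alpha_d^{|\mu|}$) pins that scalar as $\alpha_d^{|\lambda|-|\mu|}$; iterating down the chain yields $\alpha^T$. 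With that filled in, your first route is correct.

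Your second route has a genuine gap. Taking characters of Schur--Weyl duality only at $\pi = e$ produces the single scalar identity
\begin{equation*}
    \Big(\sum_i \alpha_i\Big)^n = \sum_{\lambda \vdash n} \dim(\lambda)\cdot \chi_{\nu_\lambda}(\diag(\alpha)),
\end{equation*}
and the Clebsch--Gordan (Pieri) relation plus the inductive hypothesis $\chi_{\nu_\mu}=s_\mu$ for $|\mu|<n$ only gives, for each $\mu \vdash n-1$,
\begin{equation*}
    \sum_{\mu \nearrow \nu} \chi_{\nu_\nu} = s_\mu \cdot \Big(\sum_i \alpha_i\Big) = \sum_{\mu \nearrow \nu} s_\nu.
\end{equation*}
This system is underdetermined. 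Already at $n=2$ both constraints collapse to the single equation $\chi_{\nu_{(2)}}+\chi_{\nu_{(1,1)}} = s_{(2)}+s_{(1,1)}$ (since $\dim((2))=\dim((1,1))=1$), so it cannot separate $\chi_{\nu_{(2)}}$ from $\chi_{\nu_{(1,1)}}$. To make a Schur--Weyl argument close, you must trace against all of $\mathcal{P}(\pi)$, getting $p_{\mathrm{cyc}(\pi)}(\alpha) = \sum_\lambda \chi_{\kappa_\lambda}(\pi)\,\chi_{\nu_\lambda}(\diag(\alpha))$, compare with the Frobenius expansion $p_{\mathrm{cyc}(\pi)} = \sum_\lambda \chi_{\kappa_\lambda}(\pi)\, s_\lambda$, and then invoke orthogonality (linear independence) of the $S_n$ characters to match coefficients. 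As written, your second route does not force the conclusion.
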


We can extend the domain of $s_\lambda$ to all $M \in \C^{d\times d}$ by $s_\lambda(M) \coloneq \tr( \nu_\lambda(M))$. 

Next, we construct an explicit orthonormal basis for each polynomial irrep of the unitary group using the following result. 

\begin{theorem} \label{thm:Ud_branching_rule}
    Let $\lambda$ be a partition with $\ell(\lambda) \leq d$, and fix a basis $\{ \ket{i} \}_{i \in [d]}$. For $d \geq 2$, consider the embedding of $U(d-1)$ into $U(d)$ via the following map:
    \begin{equation*}
        U \mapsto \begin{bmatrix} U & 0 \\ 0 & 1 \end{bmatrix}.
    \end{equation*}
    Under this mapping, we can identify $U(d-1)$ with its image, a subgroup of $U(d)$. When restricted to $U(d-1)$, we have the multiplicity-free branching rule
    \begin{equation*}
        (\nu_\lambda){\downarrow}^{U(d)}_{U(d-1)}(U) \cong \bigoplus_{\substack{\mu \precsim \lambda \\ \ell(\mu) \leq d-1, }} \nu_\mu(U).
    \end{equation*}
\end{theorem}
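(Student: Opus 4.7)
The plan is to prove the branching rule via character theory, reducing it to a classical identity on Schur polynomials. Since $U(d-1)$ is compact, two finite-dimensional unitary representations are isomorphic if and only if their characters agree, so it suffices to verify the claimed decomposition at the level of characters.

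First, I would compute the character of $(\nu_\lambda){\downarrow}^{U(d)}_{U(d-1)}$. By \Cref{thm:character_of_nu_lambda}, $\chi_{\nu_\lambda}(V) = s_\lambda(\beta_1, \dots, \beta_d)$ whenever $V \in U(d)$ has eigenvalues $\beta_1, \dots, \beta_d$. Evaluated at the embedded matrix $\begin{bmatrix} U & 0 \\ 0 & 1 \end{bmatrix}$ for $U \in U(d-1)$ with eigenvalues $\alpha_1, \dots, \alpha_{d-1}$, the relevant eigenvalues are $(\alpha_1, \dots, \alpha_{d-1}, 1)$, and so the character is $s_\lambda(\alpha_1, \dots, \alpha_{d-1}, 1)$. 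The character of the right-hand side of the claimed isomorphism, evaluated at $U$, is $\sum_\mu s_\mu(\alpha_1, \dots, \alpha_{d-1})$, with the sum running over $\mu \precsim \lambda$ satisfying $\ell(\mu) \leq d-1$. It therefore suffices to show the Schur polynomial identity
\begin{equation*}
s_\lambda(x_1, \dots, x_{d-1}, 1) = \sum_{\substack{\mu \precsim \lambda \\ \ell(\mu) \leq d-1}} s_\mu(x_1, \dots, x_{d-1}).
\end{equation*}

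I would prove this identity by a direct combinatorial bijection, starting from the definition $s_\lambda(x_1, \dots, x_d) = \sum_{T \in \mathrm{SSYT}(\lambda, d)} x^T$. Given such a $T$, consider the set of boxes labeled $d$. Since entries strictly increase down columns, $d$ can appear at most once in each column, so these boxes form a horizontal strip. Moreover, since entries in row $i$ must be at least $i$, any box in row $d$ of $\lambda$ is forced to receive label $d$, so after deleting the $d$-labeled boxes the remaining shape $\mu$ satisfies $\ell(\mu) \leq d-1$ and $\mu \precsim \lambda$; what remains is an SSYT of shape $\mu$ in alphabet $[d-1]$. Conversely, any pair consisting of such a $\mu$ and an SSYT of shape $\mu$ in $[d-1]$ can be extended uniquely by filling $\lambda/\mu$ with $d$'s. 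Tracking the number of $d$-boxes in $x^T$ and then setting $x_d = 1$ yields exactly the identity.

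Combining these pieces, the two representations have identical characters and are therefore isomorphic; multiplicity-freeness is immediate because each $\mu$ appears with coefficient $1$ in the identity. The only subtle point is verifying the $\ell(\mu) \leq d-1$ constraint in the bijection, which I handle via the observation that row $d$ must be entirely filled with $d$'s; beyond that, everything is routine character theory and standard Schur function combinatorics.
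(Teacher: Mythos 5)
The paper states this branching rule without proof, treating it as a classical fact from the representation theory of $U(d)$ (it appears, for instance, in Goodman--Wallach, which the section cites at the outset) and using it only as input to construct the Gelfand--Tsetlin basis. Your argument is therefore not a variant of the paper's treatment but a self-contained replacement, and it is correct. Reducing to a character identity is valid because character equality determines isomorphism for finite-dimensional unitary representations of a compact group such as $U(d-1)$, and your verification of the Schur polynomial identity
\begin{equation*}
s_\lambda(x_1,\dots,x_{d-1},1) = \sum_{\substack{\mu \precsim \lambda \\ \ell(\mu)\le d-1}} s_\mu(x_1,\dots,x_{d-1})
\end{equation*}
by peeling off the horizontal strip of $d$-labeled boxes is the standard combinatorial bijection: the observation that every cell of row $d$ is forced to carry the label $d$ correctly yields $\ell(\mu)\le d-1$, and the reverse direction (refilling the horizontal strip $\lambda/\mu$ with $d$'s) manifestly preserves both the weak-increase-along-rows and strict-increase-along-columns constraints. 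The one step you should make explicit rather than leave implicit is that multiplicity-freeness rests on the linear independence of the irreducible characters $\{s_\mu\}_{\ell(\mu)\le d-1}$ of $U(d-1)$, which is what converts ``each $\mu$ appears with coefficient $1$ in the Schur identity'' into ``each $\nu_\mu$ appears with multiplicity $1$.'' What your approach buys is a proof living entirely in the combinatorial vocabulary the paper already develops (SSYTs, interlacing, Schur polynomials), at the cost of invoking the completeness of character theory for compact groups as a black box.
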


We can use this branching rule to decompose $V^d_\lambda$ as
\begin{equation*}
    V^d_\lambda \cong \bigoplus_{\substack{\mu \precsim \lambda \\ \ell(\mu) \leq d-1, }} V^{d-1}_{\mu}.
\end{equation*}
As in the construction of Young's orthogonal basis in \Cref{irreps_of_Sn}, we can recursively restrict, eventually decomposing 
\begin{equation*}
    V^d_\lambda \cong \bigoplus_{\substack{\lambda^{(1)} \precsim \dots \precsim \lambda^{(n)} \\ \ell(\lambda^{(i)}) \leq i}} V^1_{\lambda^{(1)}},
\end{equation*}
where $\lambda^{(n)} = \lambda$. However, since $U(1)$ is Abelian, and since $V^1_{\lambda^{(1)}}$ an irrep of $U(1)$, $V^1_{\lambda^{(1)}}$ is a one-dimensional representation, and contains a unique vector, up to phase. Hence, $V^d_\lambda$ is isomorphic to a direct sum of one-dimensional subspaces labeled by chains \begin{equation*}\lambda^{(1)} \precsim \dots \precsim \lambda^{(n)} = \lambda,\end{equation*} where $\ell(\lambda^{(i)}) \leq i$. 
Every such chain can be associated bijectively to a semistandard Young tableau of shape $\lambda$ and alphabet $[d]$, by filling the boxes of $\lambda^{(i)}\setminus \lambda^{(i-1)}$ with the number $i$, for each $i \in [d]$. See \Cref{fig:SSYT_bijection} for an illustration. 

\begin{figure}
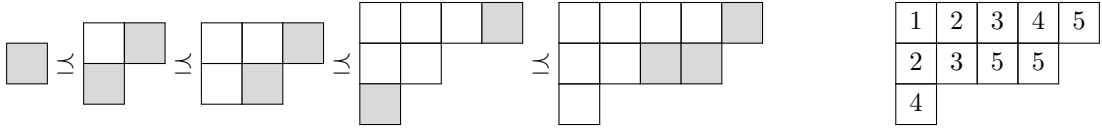

\centering
\ytableausetup{boxframe=normal}

\[
\begin{ytableau}
*(gray!30)~
\end{ytableau}
\preceq
\begin{ytableau}
~ & *(gray!30)~ \\
*(gray!30)~
\end{ytableau}
\preceq
\begin{ytableau}
~ & ~ & *(gray!30)~ \\
~ & *(gray!30)~
\end{ytableau}
\preceq
\begin{ytableau}
~ & ~ & ~ & *(gray!30)~ \\
~ & ~ \\ 
*(gray!30)~
\end{ytableau}
\preceq
\begin{ytableau}
~ & ~ & ~ & ~ & *(gray!30)~\\
~ & ~ & *(gray!30)~ & *(gray!30)~ \\ 
~
\end{ytableau}
\hskip 5em
\begin{ytableau}
1 & 2 & 3 & 4 & 5\\
2 & 3 & 5 & 5 \\
4 
\end{ytableau}
\]

\caption{On the left, a sequence of Young diagrams $(\lambda^{(i)})_{i \in [5]}$, such that $\lambda^{(i)} \preceq \lambda^{(i+1)}$ and $\ell(\lambda^{(i)}) \leq i$. The new boxes added at each step are shaded in each diagram. On the right, the SSYT of shape $\lambda^{(5)}$ corresponding to the chain. The boxes added at the $i$-th step are labeled $i$. }
\label{fig:SSYT_bijection}
\end{figure}

Choosing a vector in each one-dimensional subspace gives a \emph{Gelfand-Tsetlin basis}. 

\begin{definition}[Gelfand-Tsetlin basis]\label{GT_basis_and_highest_weight_vectors}
    Given $\lambda$ with $\ell(\lambda) \leq d$, the \emph{Gelfand-Tsetlin (GT) basis for $V^d_\lambda$} has a unit vector $\ket{T}$ for each semistandard Young tableau $T$ of shape $\lambda$ and alphabet $[d]$. 
\end{definition}

\begin{corollary}
    $\dim(V^d_\lambda) = | \mathrm{SSYT}(\lambda, d)|$.
\end{corollary}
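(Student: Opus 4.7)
The plan is to read off the dimension count from the Gelfand-Tsetlin construction that was just described. The setup already delivers the result almost for free: the multiplicity-free branching rule in \Cref{thm:Ud_branching_rule} gives the vector space decomposition
\begin{equation*}
V^d_\lambda \cong \bigoplus_{\substack{\mu \precsim \lambda \\ \ell(\mu) \leq d-1}} V^{d-1}_\mu,
\end{equation*}
and iterating this restriction from $U(d)$ down through $U(d-1), U(d-2), \ldots, U(1)$ decomposes $V^d_\lambda$ as an orthogonal direct sum indexed by chains $\lambda^{(1)} \precsim \lambda^{(2)} \precsim \cdots \precsim \lambda^{(d)} = \lambda$ with $\ell(\lambda^{(i)}) \leq i$.

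The first step of the proof is to take dimensions of this decomposition, yielding
\begin{equation*}
\dim(V^d_\lambda) = \sum_{\substack{\lambda^{(1)} \precsim \cdots \precsim \lambda^{(d)} = \lambda \\ \ell(\lambda^{(i)}) \leq i}} \dim(V^1_{\lambda^{(1)}}).
\end{equation*}
Since $U(1)$ is abelian, every one of its irreducible polynomial representations is one-dimensional, so each term $\dim(V^1_{\lambda^{(1)}})$ in the sum equals $1$. Hence $\dim(V^d_\lambda)$ equals the number of such chains.

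The second step is to invoke the bijection between these chains and $\mathrm{SSYT}(\lambda, d)$ that was explicitly described above the corollary (and illustrated in \Cref{fig:SSYT_bijection}): given a chain, fill the boxes of $\lambda^{(i)} \setminus \lambda^{(i-1)}$ with the label $i$ (taking $\lambda^{(0)} = \varnothing$). The interlacing condition $\lambda^{(i-1)} \precsim \lambda^{(i)}$ guarantees that no two boxes labeled $i$ lie in the same column, which together with weak row-growth translates exactly to the SSYT conditions: entries weakly increasing along rows and strictly increasing down columns. Conversely, from an SSYT we recover the chain by letting $\lambda^{(i)}$ be the sub-diagram consisting of the boxes with labels in $[i]$; the constraint $\ell(\lambda^{(i)}) \leq i$ follows because the first column of any SSYT on alphabet $[d]$ is strictly increasing. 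This is a routine verification and is the only substantive step; there is no real obstacle, as the corollary is essentially a bookkeeping consequence of the previously established branching rule.
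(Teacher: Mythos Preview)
Your proof is correct and follows exactly the approach the paper intends: the corollary is stated immediately after the Gelfand-Tsetlin basis construction precisely because the decomposition into one-dimensional summands indexed by chains, together with the chain--SSYT bijection already described (and illustrated in \Cref{fig:SSYT_bijection}), makes it an immediate consequence. The paper leaves the corollary unproved for this reason, and you have simply spelled out the implicit argument.
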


The dimension of $V_{\lambda}^d$ can be computed using the Weyl dimension formula~\cite[Equation (7.18)]{GW09}:
\begin{equation}
    \label{eq:weyl-dim-formula}
    \dim(V_{\lambda}^d) = \prod_{1 \leq i < j \leq d} \frac{(\lambda_i - \lambda_j) + (j - i)}{j-i}.
\end{equation}
It is also given by Stanley's hook-content formula~\cite[Equation (7.21.2)]{Sta99}:
\begin{equation}
    \label{eq:stanley's-hook-content-formula}
    \dim(V_{\lambda}^d) = \prod_{(i,j) \in \lambda} \frac{d + \content_\lambda(i,j)}{\hook_\lambda(i,j)}.
\end{equation}

Certain GT basis vectors, called the \emph{highest weight vectors}, play a prominent role in Keyl's algorithm and our debiased version.

\begin{definition}[Highest weight SSYTs, highest weight vectors] \label{def:highest_weight}
    The \emph{highest weight SSYT}, denoted $T^\lambda$, is the SSYT of shape $\lambda$ for which every box in the $i$-th row is labelled $i$. The corresponding GT basis vector, $\ket*{T^\lambda}$, is called the \emph{highest weight vector}.
\end{definition}

\begin{figure}[h!]
\centering
\ytableausetup{boxframe=normal}
\[
\begin{ytableau}
1 & 1 & 1 & 1 & 1\\
2 & 2 & 2 & 2 \\
3 
\end{ytableau}
\]
\caption{The highest weight SSYT of shape $\lambda = (5,4,1)$.}
\end{figure}

We conclude this section with an example which will be particularly important to us later on: the irrep corresponding to $\lambda = (1)$.

\begin{example} \label{ex:defining_rep}
    The irrep $V^d_{(1)}$ turns out to be isomorphic to the \emph{defining representation}, in which which $U$ acts on $\C^d$ as itself. The Gelfand-Tsetlin basis is the computational basis $\{ \ket{i} \}_{i \in [d]}$ (up to phase).
\end{example}





\subsubsection{Schur-Weyl duality} \label{sec:Schur-Weyl_duality}

In state tomography, we are given as input $n$ copies of some unknown state $\rho$, i.e.\ we are given the state $\rho^{\otimes n}$. In this setting, there are two particularly natural representations of the groups $S_n$ and $U(d)$, acting on $(\C^d)^{\otimes n}$.

\begin{definition}\label{reps_P_and_Q}

    The groups $S_n$ and $U(d)$ have the following representations acting on $(\C^d)^{\otimes n}$. The representation $\mathcal{P}^{(n)}(\pi)$ acts by permuting the $n$ tensor factors according to $\pi$, i.e.\ $\mathcal{P}^{(n)}(\pi)$ acts on a standard basis element $\ket{i_1} \otimes \dots \otimes \ket{i_n}$ as
    \begin{equation*}
    \mathcal{P}^{(n)}(\pi)\ket{i_1} \otimes \dots \otimes \ket{i_n} = \ket{i_{\pi^{-1}(1)}} \otimes \dots \otimes \ket{i_{\pi^{-1}(n)}}.
\end{equation*}
    The representation $\mathcal{Q}^{(n,d)}(U)$ acts on each tensor factor as $U$, i.e.\ $\mathcal{Q}^{(n,d)}(U)$ acts on a standard basis element as
    \begin{equation*}
    \mathcal{Q}^{(n,d)}(U)\ket{i_1} \otimes \dots \otimes \ket{i_n} = U \ket{i_1} \otimes \dots \otimes U\ket{i_n}. 
\end{equation*}
We will often just write $\mathcal{P}(\pi)$ and $\mathcal{Q}(U)$, when $n$ and $d$ are clear from context.  
\end{definition}

Since $\mathcal{P}(\pi)$ and $\mathcal{Q}(U)$ commute for any choices of $\pi$ and $U$, the product $\mathcal{P} \cdot \mathcal{Q}$ forms a representation of the product group $S_n \times U(d)$. Schur-Weyl duality provides us with a nice decomposition of this representation into products of irreps of $S_n$ and $U(d)$. 

\begin{theorem}[Schur-Weyl duality] \label{schur_weyl_duality}
Consider the representation $\mathcal{P} \cdot \mathcal{Q}$ of $S_n \times U(d)$. The action of this representation on $(\C^d)^{\otimes n}$ decomposes as
\begin{equation*}
    (\C^d)^{\otimes n} \cong \bigoplus_{\substack{\lambda \vdash n \\ \ell(\lambda) \leq d}} \Specht_\lambda \otimes V^d_\lambda. 
\end{equation*}
Therefore, there exists a fixed unitary $\USW{n}$ such that for all $\pi \in S_n$ and $U \in U(d)$, we have
\begin{equation} \label{eq:Schur_transform}
    \USW{n} \cdot \mathcal{P}(\pi)  \mathcal{Q}(U) \cdot \USWdagger{n} = \sum_{\substack{\lambda \vdash n \\ \ell(\lambda) \leq d}} \ketbra{\lambda} \otimes \kappa_\lambda(\pi) \otimes \nu_\lambda(U).
\end{equation} 
\end{theorem}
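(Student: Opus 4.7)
The plan is to prove Schur-Weyl duality via the double commutant theorem. Let $\mathcal{A} \subseteq \mathrm{End}((\C^d)^{\otimes n})$ denote the unital subalgebra generated by $\{\mathcal{P}(\pi) : \pi \in S_n\}$ and let $\mathcal{B}$ denote the unital subalgebra generated by $\{\mathcal{Q}(U) : U \in U(d)\}$. Both algebras are semisimple: $\mathcal{A}$ because it is a quotient of the group algebra $\C[S_n]$ of a finite group, and $\mathcal{B}$ because it is the image of the compact group $U(d)$ under a finite-dimensional unitary representation. My strategy is to show that $\mathcal{A}$ and $\mathcal{B}$ are mutual commutants inside $\mathrm{End}((\C^d)^{\otimes n})$, and then invoke the structure theorem for semisimple algebras to read off the decomposition.

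First I would establish the key algebraic equality $\mathcal{A}' = \mathcal{B}$. One inclusion is immediate: since $\mathcal{P}(\pi) \mathcal{Q}(U) = \mathcal{Q}(U) \mathcal{P}(\pi)$ for all $\pi$ and $U$, we have $\mathcal{B} \subseteq \mathcal{A}'$. For the reverse inclusion, observe that under the natural identification $\mathrm{End}((\C^d)^{\otimes n}) \cong \mathrm{End}(\C^d)^{\otimes n}$, the commutant $\mathcal{A}'$ is precisely the subspace of elements invariant under the $S_n$-action that permutes tensor factors, i.e.\ the symmetric tensors $\mathrm{Sym}^n(\mathrm{End}(\C^d))$. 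The remaining step is to show that this symmetric subspace is spanned by tensor powers $\{M^{\otimes n} : M \in \C^{d\times d}\}$ via a standard polarization argument, and that in turn the span of $\{M^{\otimes n}\}$ already coincides with the span of $\{U^{\otimes n} : U \in U(d)\}$, since $U(d)$ is Zariski-dense in $\mathrm{GL}_d(\C)$ and any polynomial identity that vanishes on $U(d)$ vanishes identically.

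Next I would invoke the double commutant theorem: for mutually commutant semisimple subalgebras of $\mathrm{End}(V)$ with $V$ finite-dimensional, there is a canonical decomposition
\begin{equation*}
    (\C^d)^{\otimes n} \;\cong\; \bigoplus_{\lambda \in \Lambda} W_\lambda \otimes X_\lambda,
\end{equation*}
where $\{W_\lambda\}$ exhausts the irreducible $\mathcal{A}$-modules appearing, $\{X_\lambda\}$ exhausts the irreducible $\mathcal{B}$-modules appearing, and the two indexing sets are in canonical bijection. What remains is to identify $\Lambda$ with the set of partitions $\lambda \vdash n$ of length at most $d$, and to identify $W_\lambda \cong \Specht_\lambda$ and $X_\lambda \cong V^d_\lambda$.

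The main obstacle will be this final identification step, and I expect to handle it via a highest weight vector / character argument. For each $\lambda \vdash n$ with $\ell(\lambda) \leq d$, one constructs an explicit vector in $(\C^d)^{\otimes n}$ that is a $U(d)$-highest weight vector of weight $\lambda$ (cf.\ \Cref{def:highest_weight}), witnesses the appearance of $V^d_\lambda$ as a $\mathcal{B}$-summand, and whose $S_n$-span realizes $\Specht_\lambda$. To confirm both the multiplicities and that no additional irreps appear, I would match characters: expanding $(\tr U)^n$, which is the character of $\mathcal{Q}$ evaluated at $U$, via the Frobenius character formula yields $\sum_{\lambda \vdash n,\, \ell(\lambda) \leq d} \dim(\kappa_\lambda) \cdot s_\lambda(\alpha_1,\ldots,\alpha_d)$, which by \Cref{thm:character_of_nu_lambda} is exactly the character of $\bigoplus_\lambda \Specht_\lambda \otimes V^d_\lambda$. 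The existence of the unitary $\USW{n}$ realizing \Cref{eq:Schur_transform} then follows from \Cref{lem:unitary_isomorphism} applied isotypic component by isotypic component.
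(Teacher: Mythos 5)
The paper does not prove this theorem: it is stated in the preliminaries (\Cref{sec:Schur-Weyl_duality}) as classical background and used as a black box, so there is no proof of record to compare against. Evaluated on its own terms, your sketch follows a standard and correct route: double commutant theorem, identification of the commutant of $\mathcal{A}$ with $\mathrm{Sym}^n(\mathrm{End}(\C^d))$, polarization plus Zariski-density of $U(d)$ in $\mathrm{GL}_d(\C)$ to show this equals the algebra generated by $\{U^{\otimes n}\}$, and finally a Frobenius character match $(\tr U)^n = \sum_{\lambda} \dim(\kappa_\lambda)\, s_\lambda$ to pin down the isotypic components. One step is stated slightly loosely: a single highest weight vector does not by itself control multiplicities or rule out other summands — as you note, that is what the character comparison is for — so the ``witness vector'' should be presented as motivation rather than as part of the multiplicity count. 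The final appeal to \Cref{lem:unitary_isomorphism}, applied isotypic block by isotypic block, correctly produces the unitary $\USW{n}$ of \Cref{eq:Schur_transform}. Overall the plan is sound and complete at sketch level.
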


In \Cref{sec:constructing_Schur_transform} we will construct the unitary $\USW{n}$, though for now, it suffices to know that such a unitary exists. After changing basis with $\USW{n}$, we will refer to the second register as the \emph{permutation register}, and the third register as the \emph{unitary register}, since $S_n$ and $U(d)$ act on these registers only. The collection of vectors of the form $\ket{\lambda} \otimes \ket{S} \otimes \ket{T}$, with $\lambda \vdash n$, $S \in \mathrm{SYT}(\lambda)$, and $T \in \mathrm{SSYT}(\lambda, d)$ forms the \emph{Schur basis}. The \emph{$\lambda$-subspace} is the subspace spanned by Schur basis vectors with fixed $\lambda$. We write $\Pi_\lambda$ for the projector defined by 
    \begin{equation}\label{def:Pi_lambda}
        \USW{n} \cdot \Pi_\lambda \cdot \USWdagger{n} =  \ketbra{\lambda} \otimes I_{\dim(\lambda)} \otimes I_{\dim(V^d_\lambda)}.
    \end{equation}

Recalling that the domain of $\mathcal{Q}$ can be extended to all matrices $M \in \C^{d \times d}$, we can apply Schur-Weyl duality to the state $\rho^{\otimes n}$, regarded as $\mathcal{P}(e) \cdot \mathcal{Q}(\rho)$, with $e$ the identity permutation. This gives us the following result. 

\begin{corollary}[Schur-Weyl duality for quantum states]\label{eq:schur_weyl_rho^n}
    There is a fixed unitary change of basis $\USW{n}$, and therefore an allowed quantum mechanical transformation, that puts an arbitrary input state $\rho^{\otimes n}$ into the following form:
\begin{equation*}
    \USW{n} \cdot \rho^{\otimes n} \cdot \USWdagger{n}=\sum_{\substack{\lambda \vdash n \\ \ell(\lambda) \leq d}} \ketbra{\lambda} \otimes I_{\dim(\lambda)} \otimes \nu_\lambda(\rho). 
\end{equation*}
\end{corollary}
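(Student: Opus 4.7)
The plan is to deduce this corollary as an essentially immediate consequence of Schur-Weyl duality (\Cref{schur_weyl_duality}). The key observation is that the state $\rho^{\otimes n}$ can be written as $\mathcal{P}^{(n)}(e) \cdot \mathcal{Q}^{(n,d)}(\rho)$, where $e \in S_n$ is the identity permutation. The factor $\mathcal{P}^{(n)}(e)$ acts as the identity on $(\C^d)^{\otimes n}$, while $\mathcal{Q}^{(n,d)}(\rho)$ acts as $\rho$ on each tensor factor and so equals $\rho^{\otimes n}$ by definition.

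Having rewritten $\rho^{\otimes n}$ this way, I would substitute $\pi = e$ and $U = \rho$ into the Schur transform identity from \Cref{eq:Schur_transform}. This yields
\begin{equation*}
    \USW{n} \cdot \rho^{\otimes n} \cdot \USWdagger{n} = \sum_{\substack{\lambda \vdash n \\ \ell(\lambda) \leq d}} \ketbra{\lambda} \otimes \kappa_\lambda(e) \otimes \nu_\lambda(\rho).
\end{equation*}
Because $\kappa_\lambda$ is a group homomorphism, $\kappa_\lambda(e) = I_{\dim(\lambda)}$, which plugged into the sum above gives exactly the claimed form.

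The one mild subtlety that must be addressed is that \Cref{schur_weyl_duality} is stated for $U \in U(d)$, whereas $\rho$ is a density matrix rather than a unitary. This is handled by the polynomial extension of $\nu_\lambda$ discussed in \Cref{sec:irreps_of_Ud}: both sides of \Cref{eq:Schur_transform} are polynomials in the entries of $U$, and two polynomials that agree on the unitary group (a set of real dimension $d^2$ that is Zariski-dense in $\C^{d\times d}$) must agree as polynomial identities on all of $\C^{d \times d}$. Thus the Schur-Weyl identity extends from $U(d)$ to arbitrary matrices, and in particular to $\rho$. This justification is the only nontrivial point in the argument; everything else is direct substitution.
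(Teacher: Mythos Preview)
Your proposal is correct and takes essentially the same approach as the paper: write $\rho^{\otimes n} = \mathcal{P}(e)\cdot\mathcal{Q}(\rho)$, invoke the polynomial extension of $\nu_\lambda$ (and hence of the Schur-Weyl identity) to non-unitary matrices, and substitute $\kappa_\lambda(e)=I_{\dim(\lambda)}$. The paper's treatment is the one-sentence remark preceding the corollary; your Zariski-density justification for the extension is a bit more explicit than what the paper spells out, but the argument is the same.
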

In other words, there is a fixed change-of-basis unitary which simultaneously block-diagonalizes all possible input states. 

As a further example, which will be useful to us later, we have
\begin{align}
    \USW{n} \cdot \mathcal{P}(X_k) \cdot \USWdagger{n} & =\sum_{\substack{\lambda \vdash n \nonumber \\ \ell(\lambda) \leq d}} \ketbra{\lambda} \otimes \kappa_\lambda(X_k) \otimes I_{\dim(V^d_\lambda)} \\
    & = \sum_{\substack{\lambda \vdash n \\ \ell(\lambda) \leq d}} \sum_{S \in \mathrm{SYT}(\lambda)} \content_S(k) \cdot \ketbra{\lambda} \otimes \ketbra{S} \otimes I_{\dim(V^d_\lambda)}, \label{eq:JM_Schur_basis}
\end{align}
using \Cref{thm:jucys-murphy-diag,schur_weyl_duality}. Thus, in the Schur basis, the Jucys-Murphy element acts diagonally.


\subsubsection{Weak Schur sampling} \label{sec:WSS}
\begin{definition}[Weak Schur sampling]
    \emph{Weak Schur sampling} (WSS) refers to performing the projective measurement $\{ \Pi_\lambda \}_{\substack{\lambda \vdash n, \ell(\lambda) \leq d}}$ on $\rho^{\otimes n}$, and induces a probability distribution on partitions $\lambda \vdash n$, denoted $\mathrm{WSS}_n(\rho)$.
\end{definition}

If $\rho$ has spectrum $\alpha = (\alpha_1, \dots, \alpha_d)$, weak Schur sampling yields outcome $\lambda$ with probability
\begin{equation} \label{WSS_prob}
    \Pr_{\blambda \sim \mathrm{WSS}_n(\rho)} [ \blambda = \lambda] = \tr(\Pi_{\lambda} \rho^{\otimes n}) = \dim(\lambda) \cdot \tr( \nu_{\lambda}(\rho) ) = \dim(\lambda) \cdot s_{\lambda}(\alpha). 
\end{equation}

\begin{remark} \label{WSS_rank_r_comment}
    Suppose $\rho$ has rank $r$. Recall that the boxes of an SSYT are strictly increasing as we move down a column. Therefore, any SSYT with more than $r$ rows necessarily has a box containing a number larger than $r$. Therefore, each term of $s_\lambda(\alpha)$ contains at least one $\alpha_{i}$ for $i > r$ (see \Cref{thm:character_of_nu_lambda}). So if $\ell(\lambda) > r$, then $s_\lambda(\alpha) = 0$. That is, we always receive a Young diagram $\lambda$ with $\ell(\lambda) \leq r$. 
\end{remark}

If we perform WSS on $\rho^{\otimes n}$, and obtain outcome $\lambda$, the resulting post-measurement state is 
\begin{equation}\label{WSS_state}
    \frac{\Pi_\lambda \cdot \rho^{\otimes n} \cdot \Pi_\lambda}{\tr(\Pi_\lambda \cdot \rho^{\otimes n})} = \USWdagger{n} \cdot \Big( \ketbra{\lambda} \otimes \frac{I_{\dim(\lambda)}}{\dim(\lambda)} \otimes \frac{\nu_\lambda(\rho)}{s_\lambda(\alpha)} \Big) \cdot \USW{n}.
\end{equation}
Note that, in the Schur basis, the first two registers contain no quantum information, and may be discarded as they are classically simulable.

\section{The debiased Keyl's algorithm}

Here, we formally introduce our unbiased entangled tomography algorithm. In \Cref{sec:Keyl's_algorithm}, we describe Keyl's algorithm. In \Cref{sec:debiased_Keyl's_algorithm}, we formally state our debiased version of Keyl's algorithm. Lastly, in~\Cref{sec:unbiased-family} we show that in addition to our debiased Keyl's algorithm, there is an entire family of unbiased estimators for quantum state tomography. Some of these estimators have simpler descriptions than our original estimator and will be useful in some intermediate steps of our proofs. We also show that our original debiased Keyl's algorithm is the family member with the minimal variance. 

\subsection{Keyl's algorithm} \label{sec:Keyl's_algorithm}
Keyl’s algorithm leverages tools from representation theory to perform quantum state tomography \cite{Key06}. It starts by performing weak Schur sampling, and uses the Young diagram $\blambda \vdash n$ it receives to form an estimate for the spectrum of $\rho$, as described in \Cref{sec:WSS}. It continues by measuring in the irrep $V^d_{\blambda}$ to learn an estimate for the change of basis unitary that rotates the computational basis onto the eigenbasis of $\rho$, using the highest weight vector introduced in \Cref{GT_basis_and_highest_weight_vectors}. Finally, it combines these ingredients to produce an estimate for the state $\rho$ itself.

\begin{definition}[Keyl's algorithm] \label{def:keyl's_algorithm}Given $n$ copies of a quantum state $\rho \in \C^{d}$:
\begin{enumerate}
    \item Perform WSS on $\rho^{\otimes n}$ to obtain a Young diagram $\blambda \vdash n$, and the corresponding state $\rho_{\blambda}$, as in \Cref{WSS_state}. Discard the first two registers, leaving only $\nu_{\blambda}(\rho)/s_{\lambda}(\alpha)$. Set $\widehat{\balpha}_0 = \blambda/n$. 
    \item Measure in $V^d_{\blambda}$ using the POVM $M_{\blambda}$, where $M_{\blambda} = \{M_{\blambda, U}\}_{U}$ is a POVM whose elements are labeled by $U \in U(d)$ and defined as follows:
    \begin{equation*}
        M_{\lambda, U} = \dim(V^d_\lambda) \cdot \nu_{\lambda}(U) \ketbra{T_\lambda} \nu_{\lambda}(U)^\dagger \cdot \dU, 
    \end{equation*}
    where $\dU$ is the Haar measure on $U(d)$. Let $\bU$ be the measurement outcome. We write $\bU \sim K_{\blambda}(\rho)$. 
    \item Output $\widehat{\brho} \coloneq \bU \cdot \widehat{\balpha}_0 \cdot \bU^\dagger$. 
\end{enumerate}
\end{definition}

Equivalently, Keyl's algorithm can be described by a single measurement, with outcomes labeled by the pair $(\lambda, U)$, and
\begin{equation} \label{eq:Keyl's_algorithm_Ms}
    M_{\lambda, U} = \USWdagger{n} \cdot \Big( \dim(V^d_\lambda) \cdot \ketbra{\lambda} \otimes I_{\dim(\lambda)} \otimes \nu_\lambda(U) \ketbra{T_\lambda} \nu_\lambda(U)^\dagger \cdot \dU \Big) \cdot \USW{n}.
\end{equation}

To see that the POVM described in the second step of the algorithm is a valid POVM, let us define
\begin{equation*}
    N_{\lambda} \coloneq \int_U M_{\lambda, U} = \dim(V^d_\lambda) \int_U \nu_\lambda(U) \ketbra{T_\lambda} \nu_\lambda(U)^\dagger  \cdot \dU.
\end{equation*}
For $M_\lambda$ to form a POVM, we need $N_\lambda = I_{\dim(V^d_\lambda)}$. We note that for any $V \in U(d)$,
\begin{align*}
    \nu_\lambda(V) N_\lambda \nu_\lambda(V)^\dagger &= \dim(V^d_\lambda) \int_U \nu_\lambda(VU) \ketbra{T_\lambda} \nu_\lambda(VU)^\dagger  \cdot \dU \\
    &= \dim(V^d_\lambda) \int_W  \nu_{\lambda}(W) \ketbra{T_\lambda} \nu_{\lambda}(W)^\dagger  \cdot \mathrm{d}W = N_\lambda,
\end{align*}
so that $N_{\lambda}$ is an intertwining operator. Here we have defined $W = VU$, and used the left-invariance of the Haar measure. Then, since $\nu_\lambda$ is an irreducible representation, Schur's lemma (\Cref{lem:schur-lemma}) implies that $N_{\lambda} = \tr(N_\lambda) \cdot I_{\dim(V^d_\lambda)} / \dim(V^d_{\lambda})$. However,
\begin{equation*}
    \tr(N_\lambda) = \dim(V^d_\lambda) \int_U \tr( \nu_\lambda(U) \ketbra{T_\lambda} \nu_{\lambda}(U)^\dagger) \cdot \dU = \dim(V^d_\lambda) \int_U \dU = \dim(V^d_\lambda),
\end{equation*}
so that $N_{\lambda} = I_{\dim(V^d_\lambda)}$.

We next provide some intuition for why the second step is able to learn a unitary that approximately diagonalizes $\rho$. We have
\begin{equation}
    \label{eq:keyl-weight-on-U}
    \tr( M_{\lambda, U} \cdot \nu_{\lambda}(\rho) ) = \dim(V^d_\lambda) \cdot \bra{T_\lambda} \nu_\lambda(U^\dagger \rho U ) \ket{T_\lambda}.
\end{equation}
As noted by Keyl \cite[Equation 141]{Key06}, it turns out that the matrix elements $\bra{T_\lambda} \nu_\lambda(U^\dagger \rho U) \ket{T_\lambda}$ can be re-expressed as $\Delta_\lambda(U^\dagger \rho U)$. This quantity is defined for a matrix $Z \in \C^{d\times d}$ as
\begin{equation*}
    \Delta_\lambda(Z) \coloneq \prod_{i=1}^d \mathrm{pm}_i(Z)^{\lambda_i - \lambda_{i+1}},
\end{equation*}
where $\mathrm{pm}_i$ is the $i$-th principal minor of $Z$, and we take $\lambda_{d+1} = 0$. Since $U^\dagger \rho U$ has eigenvalues $\alpha_1 \geq \dots \geq \alpha_d$, and $\mathrm{pm}_i(U^\dagger \rho U) \leq \prod_{j=1}^i \alpha_j$, we have 
\begin{equation*}
    \bra{T_\lambda} \nu_\lambda(U^\dagger \rho U) \ket{T_\lambda} = \prod_{i=1}^{d} \mathrm{pm}_i(U^\dagger \rho U)^{\lambda_i - \lambda_{i+1}} \leq \prod_{i=1}^{d} \prod_{j=1}^i \alpha_i^{\lambda_i - \lambda_{i+1}} = \prod_{i=1}^d \alpha_i^{\lambda_i}. 
\end{equation*}
This maximum is attained if and only if $U^\dagger \rho U$ is diagonal in the computational basis, with decreasingly sorted eigenvalues. In other words, this maximum is attained if $\rho = U \cdot \diag(\alpha_1, \dots, \alpha_d) \cdot U^\dagger$, so that $U$ is the change of basis unitary rotating the computational basis onto the eigenbasis of $\rho$. Therefore, the probability density of obtaining $U$ is maximal at the correct unitary. 

\subsection{The debiased Keyl's algorithm} \label{sec:debiased_Keyl's_algorithm}

Keyl's algorithm measures a Young diagram $\blambda \vdash n$ and a unitary matrix $\bU \in U(d)$, and outputs $\bU \cdot ( \blambda/n ) \cdot \bU^\dagger$. In order to debias Keyl's algorithm, it turns out that we can perform a simple, deterministic transformation on $\blambda$, called \emph{donation}, and output $\bU \cdot ( \mathrm{donate}(\blambda)/ n) \cdot \bU^\dagger$. Donation and the debiased Keyl's algorithm were both defined in the introduction, but we restate their definitions here for convenience.

\begin{definition}[Young diagram box donation; \Cref{def:yd-box-donation}, restated]
    \label{def:yd-box-donation_main}
    Let $\lambda = (\lambda_1, \ldots, \lambda_d)$ be a Young diagram.
    Then $\mathrm{donate}(\lambda) \in \Z^d$ is the vector defined as
    \begin{equation*}
        \mathrm{donate}(\lambda)_i = \lambda_i - \sum_{j=1}^{i-1} 1[\lambda_j > \lambda_i] + \sum_{j=i+1}^d 1[\lambda_i > \lambda_j].
    \end{equation*}
    Intuitively, each row of the Young diagram $\lambda$ ``donates'' a box to each row which starts off longer than it
    and receives a box from each row which starts off shorter than it.
    Note that $\mathrm{donate}(\lambda)$ need not be a Young diagram, even if $\lambda$ is,
    as it might have negative entries. We will also use the notation $\lambda^\uparrow$ for $\donate(\lambda)$. 
\end{definition}

See \Cref{fig:donation} for an example of donation. 

\begin{figure}
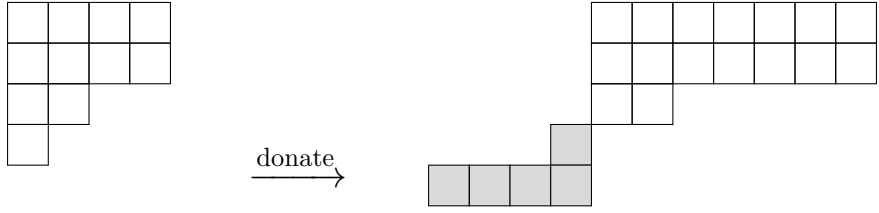

    \centering
    \begin{ytableau}
          ~ & ~ & ~ & ~ \\
          ~ & ~ & ~ & ~ \\
          ~ & ~ \\
          ~ \\
          \none
    \end{ytableau}
    \qquad
    \Large
    \begin{tabular}{c}
        \\ \\ \\
        $\xrightarrow[]{\mathrm{donate}}$  \\
    \end{tabular}
    \normalsize
    \qquad
    \begin{ytableau}
        \none & \none & \none & \none & ~ & ~ & ~ & ~ & ~ & ~ & ~ \\
        \none & \none & \none & \none & ~ & ~ & ~ & ~ & ~ & ~ & ~ \\
        \none & \none & \none & \none & ~ & ~  \\
        \none & \none & \none & *(gray!30)~ \\
        *(gray!30)~ & *(gray!30)~ & *(gray!30)~ & *(gray!30)~ 
    \end{ytableau}
    \caption{Take $d = 5$. On the left is the Young diagram corresponding to $\lambda = (4,4,2,1,0)$, with $d = 5$. On the right is $\mathrm{donate}(\lambda) = (7,7,2, -1,-4)$. The gray-colored boxes correspond to rows with a negative length. Note that $\donate(\lambda)$ is not necessarily a Young diagram, since it may have negative entries, but $\donate(\lambda)$ is a weakly decreasing list of integers with sum $|\lambda|$, and $\donate(\lambda)$ is constant on the blocks of $\lambda$.}
    \label{fig:donation}
\end{figure}

\begin{definition}[Debiased Keyl's algorithm; \Cref{def:debiased-keyl}, restated]
\label{def:debiased-keyl_main}
Given $n$ copies of a quantum state $\rho \in \C^{d \times d}$,
the \emph{debiased Keyl's algorithm} works as follows.
\begin{enumerate}
\item Measure $\rho^{\otimes n}$ as in Keyl's algorithm (\Cref{def:keyl's_algorithm}), producing a Young diagram $\blambda = (\blambda_1, \ldots, \blambda_d) \vdash n$ and a unitary matrix $\bU \in U(d)$.
\item Set $\widehat{\balpha} = \blambda^{\uparrow}/n$. Output $\widehat{\brho} = \bU \cdot \widehat{\balpha} \cdot \bU^{\dagger}$.
\end{enumerate}
\end{definition}

In \Cref{sec:first_moment}, we prove that the debiased Keyl's algorithm is, in fact, an unbiased estimator of $\rho$.

\subsection{A family of unbiased estimators}
\label{sec:unbiased-family}

Note that our estimator is equivalent to performing Keyl's algorithm, and then replacing the empirical Young diagram $\blambda / n$ with $\widehat{\balpha}$. It turns out that in addition to the algorithm given in \Cref{def:debiased-keyl_main}, there is a family of related unbiased estimators that use slightly different Young diagram transformations than $\blambda^{\uparrow}$. In this section, we describe this family of unbiased estimators, while highlighting a few particular members that we will use in our proofs. We end this section by showing that the original estimator $\widehat{\balpha}$ is the family member with the minimum variance.

We start by defining the set of legal Young diagram transformations, which generalizes the box donation from~\Cref{def:yd-box-donation}.
\begin{definition}[Legal Young diagram transformation]
    A map $f$ from Young diagrams to $\R^d$ is a \emph{legal Young diagram transformation} if for all Young diagrams $\lambda = (\lambda_1, \dots, \lambda_d)$, $\mu = f(\lambda) \in \R^d$ is a vector that satisfies
    \begin{equation*}
        \sum_{i \in B} \mu_i = \sum_{i \in B} \lambda^{\uparrow}_i
    \end{equation*}
    for every block of rows $B$ in $\lambda$.
\end{definition}

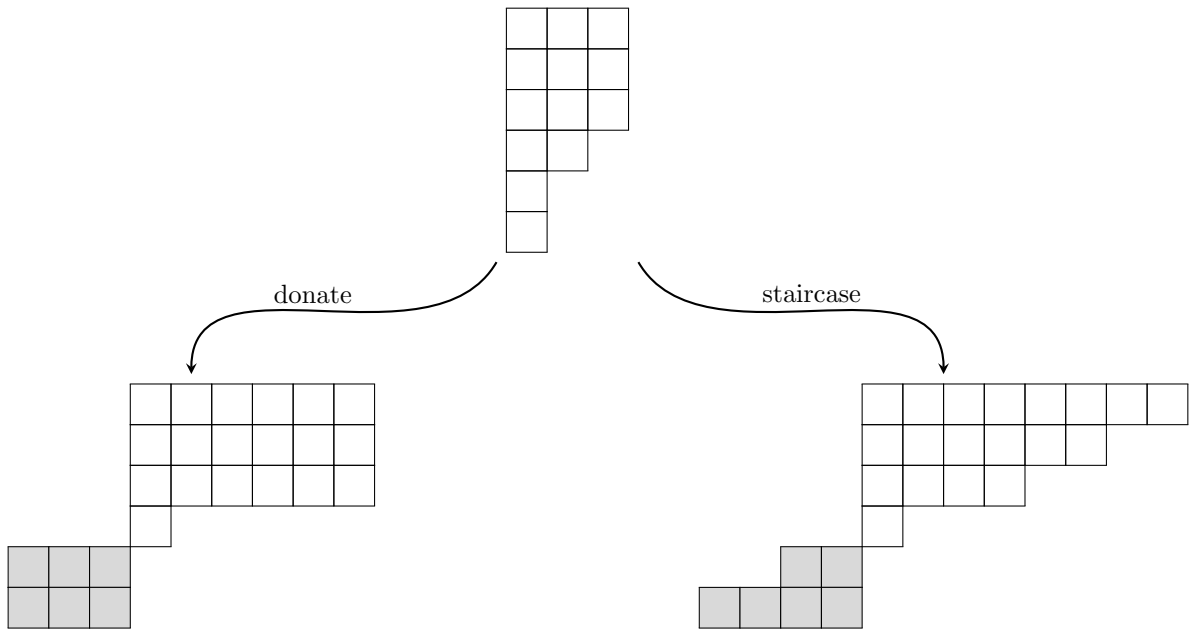
\begin{figure}
    \centering
    \begin{tikzpicture}[>=stealth, every node/.style={anchor=center}]
        \node (top) at (0, 0) {
            \begin{ytableau}
                ~ & ~ & ~ \\
                ~ & ~ & ~ \\
                ~ & ~ & ~ \\
                ~ & ~ \\
                ~ \\
                ~ \\
            \end{ytableau}
        };

        \node (left) at (-5, -5) {
            \begin{ytableau}
                \none & \none & \none & ~ & ~ & ~ & ~ & ~ & ~ \\
                \none & \none & \none & ~ & ~ & ~ & ~ & ~ & ~ \\
                \none & \none & \none & ~ & ~ & ~ & ~ & ~ & ~ \\
                \none & \none & \none & ~ \\
                *(gray!30)~ & *(gray!30)~ & *(gray!30)~ \\
                *(gray!30)~ & *(gray!30)~ & *(gray!30)~ \\
            \end{ytableau}
        };

        \node (right) at (5, -5) {
            \begin{ytableau}
                \none & \none & \none & \none & ~ & ~ & ~ & ~ & ~ & ~ & ~ & ~  \\
                \none & \none & \none & \none & ~ & ~ & ~ & ~ & ~ & ~ \\
                \none & \none & \none & \none & ~ & ~ & ~ & ~  \\
                \none & \none & \none & \none & ~ \\
                \none & \none & *(gray!30)~ & *(gray!30)~ \\
                *(gray!30)~ & *(gray!30)~ & *(gray!30)~ & *(gray!30)~ \\
            \end{ytableau}
        };

        \draw[->, thick] (top.south west) 
            to[out=240, in=90] 
            node[pos=0.4, above left] {donate} 
            (left.north);

        \draw[->, thick] (top.south east) 
            to[out=300, in=90] 
            node[pos=0.35, above right] {staircase} 
            (right.north);
    \end{tikzpicture}
    \caption{Two examples of legal Young diagram transformations for $\lambda = (3, 3, 3, 2, 1, 1)$. The Young diagram $\lambda$ has $3$ blocks that consist of the first three rows, the fourth row, and the last two rows, respectively. The diagram on the left is the one produced from the original box donation algorithm: $\lambda^{\uparrow} = (6,6,6,1,-3,-3)$. The one on the right can be obtained by moving boxes within its blocks, and it coincides with the staircase box donation from~\Cref{def:staircase-box-donation}: $\lambda^{\uparrow\uparrow} = (8,6,4,1,-2,-4)$. The gray-colored boxes correspond to rows with negative lengths.}
    \label{fig:legal-diagram-transformation}
\end{figure}

In words, any Young diagram transformation $f$ of $\lambda \vdash n$ is legal if the total number of ``boxes'' in a block $B$ after transformation $f$ (allowing for the possibility of negative or, more generally, real-valued amounts of boxes) equals the number of boxes in that block after the box donation transformation of~\Cref{def:yd-box-donation}.
An equivalent condition is that any legal transformation $f$ can be obtained from $\lambda^{\uparrow}$ by arbitrarily redistributing boxes among the rows in the same block $B$. In addition to having real-valued lengths which may be negative or non-integral, the resulting shape is not necessarily a Young diagram, also due to the possibility of non-monotone row lengths, as illustrated in~\Cref{fig:weird-donation}. Given any legal Young diagram transformation, we can define an unbiased estimator for entangled state tomography.

\begin{definition}[Family of unbiased estimators]
    \label{def:unbiased-family}
    Suppose we are given $n$ copies of a mixed state $\rho \in \C^{d \times d}$. For any legal Young diagram transformation $f$, we define an estimator as follows.
    \begin{enumerate}
        \item Measure $\rho^{\otimes n}$ as in Keyl's algorithm, producing a Young diagram $\blambda = (\blambda_1, \dots, \blambda_d) \vdash n$ and a unitary matrix $\bU \in U(d)$.
        \item Set $\blambda' = f(\blambda)$ and $\widehat{\balpha}' = \blambda'/n$. Output $\widehat{\brho}' = \bU \cdot \widehat{\balpha}' \cdot \bU^{\dagger}$.
    \end{enumerate}
    We will often use the term \emph{legal estimator} to refer to such an estimator.
\end{definition}

We will prove in \Cref{sec:first_moment} that all legal Young diagram transformations produce unbiased estimators.
\begin{theorem}[All legal estimators are unbiased]
    Let $\widehat{\brho}'$ be the output of a legal estimator when run on $\rho^{\otimes n}$. Then $\E[\widehat{\brho}'] = \rho$.
\end{theorem}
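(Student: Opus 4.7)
The plan is to reduce the claim to \Cref{thm:unbiased}, which already establishes that the debiased Keyl's algorithm with $\lambda^{\uparrow}$ is unbiased. I will condition on the measured Young diagram $\blambda = \lambda$ and show that $\E[\widehat{\brho}' \mid \blambda = \lambda]$ depends on the legal transformation $f$ only through the block sums $\sum_{i \in B} f(\lambda)_i$. Since the legality condition forces these to coincide with the block sums of $\lambda^{\uparrow}$, it will follow that $\E[\widehat{\brho}' \mid \blambda] = \E[\widehat{\brho} \mid \blambda]$, and taking an outer expectation over $\blambda$ yields $\E[\widehat{\brho}'] = \E[\widehat{\brho}] = \rho$.

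Writing $\ket{\bu_i} = \bU\ket{i}$, I have
\begin{equation*}
    \E[\widehat{\brho}' \mid \blambda = \lambda] = \sum_{i=1}^d \frac{f(\lambda)_i}{n} \cdot \E[\ketbra{\bu_i} \mid \blambda = \lambda],
\end{equation*}
so the reduction boils down to the symmetry claim: \emph{whenever $i$ and $j$ lie in the same block $B$ of $\lambda$, we have $\E[\ketbra{\bu_i} \mid \blambda = \lambda] = \E[\ketbra{\bu_j} \mid \blambda = \lambda]$.} Granting this and writing the common matrix as $P_B(\lambda)$, one groups the sum above by blocks and substitutes $\sum_{i \in B} f(\lambda)_i = \sum_{i \in B} \lambda^{\uparrow}_i$ to conclude.

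The main obstacle is the symmetry claim, and the plan is to exploit the explicit form of the density of $\bU$ conditional on $\blambda$. As reviewed around \Cref{eq:keyl-weight-on-U}, this density is proportional, against the Haar measure, to $\Delta_{\lambda}(U^\dagger \rho U) = \prod_{k=1}^d \mathrm{pm}_k(U^\dagger \rho U)^{\lambda_k - \lambda_{k+1}}$. I will show this expression is invariant under the substitution $U \mapsto UV$ for any permutation matrix $V$ that permutes $\{\ket{i} : i \in B\}$ among themselves and fixes the rest. Setting $Z = U^\dagger \rho U$, I would verify the factors one at a time: (i) if $\min B \leq k < \max B$ then $\lambda_k = \lambda_{k+1}$ and the exponent is zero, so that factor is trivially $1$; (ii) if $k < \min B$ then $V$ acts as the identity on the first $k$ coordinates, so $\mathrm{pm}_k(V^\dagger Z V) = \mathrm{pm}_k(Z)$; (iii) if $k \geq \max B$ then the first $k$ rows and columns contain all of $B$, so the top-left $k \times k$ block of $V^\dagger Z V$ is conjugate to that of $Z$ by a permutation matrix, and the minor is unchanged. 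Combined with the right-invariance of Haar measure, this invariance of the density implies that $\bU$ and $\bU V$ are equal in distribution conditional on $\blambda = \lambda$. Reading off the $i$-th column shows that $\ket{\bu_i}$ and $\ket{\bu_{\pi(i)}}$ are identically distributed for any permutation $\pi$ of $B$, which yields the symmetry claim and completes the reduction.
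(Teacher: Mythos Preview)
Your argument is correct and takes a genuinely different route from the paper. The paper proves this theorem in \Cref{sec:first_moment} as part of \Cref{lem:m-avg-1-is-jucys-murphy-full-estimator}, which computes $M_{\mathrm{avg}}^{(1)}$ in the Schur basis via Clebsch--Gordan coefficients and shows it equals the Jucys--Murphy element $\tfrac{1}{n}X_{n+1}$ for \emph{any} legal $f$. Legality enters there through \Cref{lem:the_combinatorial_identity}, whose proof uses that the one-step CG coefficients $c^\lambda_{k\to i}$ are constant on blocks (\Cref{lem:CG_coeffs_on_block_equal}); this block-constancy is the representation-theoretic shadow of your symmetry claim. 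Your approach bypasses the CG machinery entirely: you work directly with the density $\Delta_\lambda(U^\dagger\rho U)$, observe that it is invariant under right-multiplication by block-permutation matrices (the three cases you outline are all correct), and combine this with right-invariance of Haar measure to get $\E[\ketbra{\bu_i}\mid\blambda=\lambda]=\E[\ketbra{\bu_j}\mid\blambda=\lambda]$ for $i,j$ in the same block. This is more elementary and arguably more transparent. The trade-off is that the paper's route yields the stronger intermediate statement $M_{\mathrm{avg}}^{(1)}=\tfrac{1}{n}X_{n+1}$, which is reused as a template for the second-moment calculation, whereas your argument only gives the reduction to \Cref{thm:unbiased}.
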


Below, we highlight a few legal estimators that will be useful for proving the results in this paper. We start by defining the \emph{staircase estimator}, which is perhaps the simplest one, as its defining Young diagram transformation can be described using a closed-form expression. 

\begin{definition}[Young diagram staircase box donation]
    \label{def:staircase-box-donation}
    Let $\lambda = (\lambda_1, \dots, \lambda_d)$ be a Young diagram. Then $\mathrm{staircase}(\lambda) \in \Z^d$ is the vector defined
    \begin{equation*}
        \mathrm{staircase}(\lambda)_i = \lambda_i - \sum_{j=1}^{i-1} 1 + \sum_{j=i+1}^d 1 = \lambda_i - (i-1) + (d-i) = \lambda_i +d - 2i + 1.
    \end{equation*}
\end{definition}

Compared to the original box donation from~\Cref{def:yd-box-donation}, in the staircase donation transformation, each row of the Young diagram $\lambda$ donates a box to each row above it, regardless of whether they started with the same row length. We can also model it as a two-step process, where each row first donates boxes according to the original box donation algorithm, and then the rows further donate one box to each row above them in the same block. Since the second step only involves moving boxes within the same block, it does not change the total number of boxes in each block. This confirms that the staircase donation transformation is a legal donation transformation. We include a diagram of $\mathrm{staircase}(\lambda)$ in~\Cref{fig:legal-diagram-transformation}. Similar to $\donate(\lambda)$, $\mathrm{staircase}(\lambda)$ need not be a Young diagram, due to potentially negative entries.

Using this Young diagram transformation, we define the \emph{staircase estimator} and use the notation $\lambda^{\uparrow \uparrow} \coloneq \mathrm{staircase}(\lambda)$ and $\widehat{\balpha}^{\uparrow \uparrow} \coloneq \blambda^{\uparrow \uparrow}/n$. The final state estimate from this algorithm is denoted by $\widehat{\brho}^{\uparrow \uparrow} = \bU \cdot \widehat{\balpha}^{\uparrow \uparrow} \cdot \bU^{\dagger}$. We chose the double arrow superscript notation for this specific estimator, since we view the staircase box donation as a more ``extreme'' version of the original box donation.

The advantage of the staircase estimator is the simple closed-form expression of the transformation $\lambda^{\uparrow\uparrow}$. For this reason, the staircase estimator plays a central role in proving our expressions for the first and second moments. In particular, we first prove in \Cref{sec:first_moment} that $\widehat{\brho}^{\uparrow\uparrow}$ is an unbiased estimator, and then extend this result to all legal estimators, including our original estimator $\widehat{\brho}$ as in~\Cref{thm:unbiased}. Moreover, it turns out that the staircase estimator achieves the optimal sample complexity for tomography problems over general states, when the number of copies required is $n = \Omega(d^2)$. However, it can be suboptimal in the special case when $\rho$ is promised to be of low rank. One example is the pure state case, where weak Schur sampling always produces the tableau $\blambda = (n, 0, \dots, 0)$. The two estimators will have spectra
\begin{equation*}
    \widehat{\balpha} = \left(\frac{n+d-1}{n}, -\frac{1}{n}, \dots, -\frac{1}{n}\right), \quad \widehat{\balpha}^{\uparrow \uparrow} = \left(\frac{n + d - 1}{n}, \frac{d - 3}{n}, \frac{d - 5}{n}, \dots, \frac{-d + 1}{n}\right).
\end{equation*}
Observe that $\widehat{\brho}^{\uparrow\uparrow}$ is far from any pure state because the absolute sum of all of its eigenvalues except the first one is going to be $O(d^2/n)$. This implies that $\widehat{\brho}^{\uparrow\uparrow}$ is only guaranteed to be close to a pure state when the number of samples is $n \gg d^2$, which is much more than the optimal bound of $O(d/\epsilon^2)$.
The issue with the staircase transformation is that the large negative eigenvalues increase the variance of our estimator, and thus, we require more samples to approximate its true mean. 
As a result, when dealing with low-rank states, it will not suffice to use the staircase estimator if we want optimal bounds. However, there is a natural modification to the staircase estimator for the case of low-rank states, which reduces its variance while keeping its simplicity.

\begin{definition}[Young diagram rank-$r$ box donation]
    \label{def:rank-k-box-donation}
    Let $\lambda = (\lambda_1, \dots, \lambda_d)$ be a Young diagram, with at most $r$ nonzero rows. Then $\mathrm{staircase}_r(\lambda) \in \Z^d$ is the vector defined as follows.
    \begin{equation*}
        \mathrm{staircase}_r(\lambda)_i
        = \begin{cases}
            \lambda_i + d - 2i + 1 & i \leq r \\
            -r & i \in \{r+1, \dots, d\}.
        \end{cases}
    \end{equation*}
\end{definition}

\begin{figure}
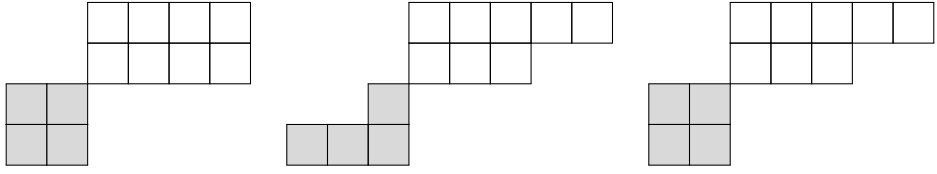

    \centering
    \begin{ytableau}
        \none & \none & ~ & ~ & ~ & ~ \\
        \none & \none & ~ & ~ & ~ & ~  \\
        *(gray!30)~ & *(gray!30)~ \\
        *(gray!30)~ & *(gray!30)~ \\
    \end{ytableau}
    \quad
    \begin{ytableau}
        \none & \none & \none & ~ & ~ & ~ & ~ & ~ \\
        \none & \none & \none & ~ & ~ & ~  \\
        \none & \none & *(gray!30)~  \\
        *(gray!30)~ & *(gray!30)~ & *(gray!30)~ \\
    \end{ytableau}
    \quad
    \begin{ytableau}
        \none & \none & ~ & ~ & ~ & ~ & ~ \\
        \none & \none & ~ & ~ & ~  \\
        *(gray!30)~ & *(gray!30)~ \\
        *(gray!30)~ & *(gray!30)~ \\
    \end{ytableau}
    \caption{The original, staircase, and rank-$2$ Young diagram transformations for $\lambda = (2, 2, 0, 0)$.}
    \label{fig:rank-r-donation}
\end{figure}

We first argue that this Young diagram transformation is a legal one. We model it as a two-step process: First, each row donates boxes according to the staircase algorithm. Since $\lambda$ has at most $r$ nonzero rows, there is a block of rows of $\lambda$ that includes the last $d-r$ rows. The total number of boxes in the last $d-r$ rows of the block after the staircase donation is $n - \sum_{i=1}^r (\lambda_i + d - 2i + 1) = -(d-r)r$, which is a multiple of $d-r$. The second phase of the box donation will have the last $d-r$ rows donate to each other until they have the same number of boxes, for a total $-r$ each. The last step only involves donations within the same block, so the total number of boxes in each block does not change from the first phase. As the staircase transformation is legal, we conclude that the rank-$r$ donation procedure is also legal.

\begin{figure}
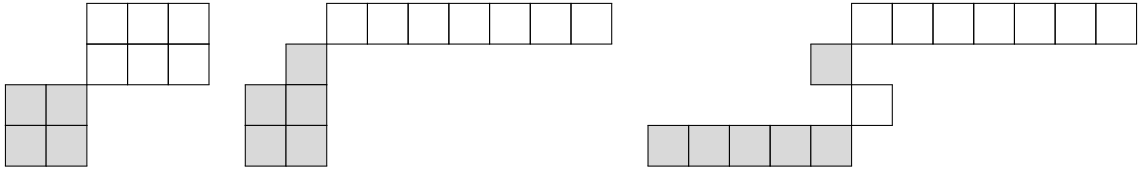

    \centering
    \begin{ytableau}
        \none & \none & ~ & ~ & ~ \\
        \none & \none & ~ & ~ & ~ \\
        *(gray!30)~ & *(gray!30)~ \\
        *(gray!30)~ & *(gray!30)~ \\
    \end{ytableau}
    \quad
    \begin{ytableau}
        \none & \none & ~ & ~ & ~ & ~ & ~ & ~ & ~ \\
        \none & *(gray!30)~ \\
        *(gray!30)~ & *(gray!30)~ \\
        *(gray!30)~ & *(gray!30)~ \\
    \end{ytableau}
    \quad
    \begin{ytableau}
        \none & \none & \none & \none & \none & ~ & ~ & ~ & ~ & ~ & ~ & ~ \\
        \none & \none & \none & \none & *(gray!30)~ \\
        \none & \none & \none & \none & \none & ~ \\
        *(gray!30)~ & *(gray!30)~ & *(gray!30)~ & *(gray!30)~ & *(gray!30)~ \\
    \end{ytableau}
    \caption{The transformations of $\lambda = (1, 1, 0, 0)$ represented above are also legal transformations, even though some of them seem weird. Indeed, the transformation on the right gives a shape whose row lengths are not monotonically decreasing.}
    \label{fig:weird-donation}
\end{figure}

First, observe that the transformation $\mathrm{staircase}_r(\blambda)$ also has a simple closed-form expression.
Using this Young diagram transformation, the final estimator will be $\widehat{\brho}' = \bU \cdot \widehat{\balpha}' \cdot \bU^{\dagger}$, where $\widehat{\balpha}' = \mathrm{staircase}_r(\blambda)/n$.
Moreover, when $\rho$ is a rank-$r$ state, the absolute sum of the $d-r$ smallest eigenvalues of $\widehat{\brho}'$ will be $O(rd/n)$, compared to $O(d^2/n)$ for $\widehat{\brho}^{\uparrow\uparrow}$. Looking forward, we will prove that this ``rank-$r$'' modification to the staircase estimator suffices to achieve the optimal bounds for rank-$r$ states, where the number of copies required is $n = \Omega(rd)$.

Finally, we prove that the original estimator is the one with the minimum variance among all unbiased estimators from~\Cref{def:unbiased-family}.
\begin{lemma}
    \label{lem:og-estimator-has-min-variance}
    The estimator that corresponds to the Young diagram box donation of~\Cref{def:yd-box-donation} has the minimal variance among all estimators in the family of estimators from~\Cref{def:unbiased-family}.
\end{lemma}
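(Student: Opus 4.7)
The plan is to reduce the lemma to a per-outcome convex optimization problem. First I would note that because $\bU$ is unitary and $\widehat{\brho}' = \bU \cdot \widehat{\balpha}' \cdot \bU^{\dagger}$, we have $\|\widehat{\brho}'\|_2^2 = \tr((\widehat{\brho}')^2) = \|\widehat{\balpha}'\|_2^2$. Combining this with the fact (established earlier) that every legal estimator satisfies $\E[\widehat{\brho}'] = \rho$, a short calculation shows
\begin{equation*}
\Var[\widehat{\brho}'] \;=\; \E\|\widehat{\brho}' - \rho\|_2^2 \;=\; \E\|\widehat{\brho}'\|_2^2 - \|\rho\|_2^2 \;=\; \frac{1}{n^2}\,\E_{\blambda}\bigl[\,\|f(\blambda)\|_2^2\,\bigr] - \|\rho\|_2^2,
\end{equation*}
where $f$ is the legal transformation defining $\widehat{\brho}'$. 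Crucially, the joint distribution of $(\blambda, \bU)$ produced by Keyl's measurement does not depend on $f$, and the $\bU$-dependence integrates out. Hence, minimizing $\Var[\widehat{\brho}']$ over the family reduces to minimizing $\|f(\lambda)\|_2^2$ pointwise for every Young diagram $\lambda$.

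Next I would fix $\lambda$ and minimize $\|f(\lambda)\|_2^2 = \sum_i f(\lambda)_i^2$ subject to the legality constraint that $\sum_{i \in B} f(\lambda)_i = \sum_{i \in B} \lambda^{\uparrow}_i$ for each block $B$ of $\lambda$. Since the constraints decouple across blocks, we may minimize each block independently. By the Cauchy--Schwarz inequality (or Lagrange multipliers), the minimum of $\sum_{i \in B} x_i^2$ subject to $\sum_{i \in B} x_i = c$ is uniquely achieved when all $x_i$ are equal to $c/|B|$. Therefore the unique minimizer, block by block, is any vector that is constant on each block of $\lambda$ and matches $\lambda^{\uparrow}$'s block sums.

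The remaining step is to verify that the original box donation $\lambda^{\uparrow}$ itself is constant on the blocks of $\lambda$. If $i$ and $i+1$ lie in the same block, then $\lambda_i = \lambda_{i+1}$, and a direct comparison using \Cref{def:yd-box-donation} shows
\begin{equation*}
\sum_{j<i+1}\!\! 1[\lambda_j > \lambda_{i+1}] \;=\; \sum_{j<i}\!\! 1[\lambda_j > \lambda_i] + 1[\lambda_i > \lambda_{i+1}] \;=\; \sum_{j<i}\!\! 1[\lambda_j > \lambda_i],
\end{equation*}
and an analogous identity holds for the sum over $j > i$. Hence $\lambda^{\uparrow}_i = \lambda^{\uparrow}_{i+1}$, and by induction $\lambda^{\uparrow}$ is constant on every block. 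Combining this with the previous step, $\lambda^{\uparrow}$ pointwise attains the minimum of $\|f(\lambda)\|_2^2$, and taking expectation over $\blambda$ finishes the proof. I don't expect a real obstacle here: the argument is essentially the observation that the constrained least-squares solution is the block average, together with the simple combinatorial check that $\donate$ already produces this block average. The slight subtlety is making sure the optimization really decouples across $\lambda$ and across blocks, which it does because the legality constraints themselves decouple that way.
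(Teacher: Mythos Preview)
Your proposal is correct and is essentially the same argument as the paper's, just organized slightly differently: the paper expands $\E\|\widehat{\brho}'-\rho\|_2^2$ by inserting $\widehat{\brho}$ and then shows the resulting cross-term expression is nonnegative block by block, while you go directly to $\Var[\widehat{\brho}'] = \tfrac{1}{n^2}\E_{\blambda}\|f(\blambda)\|_2^2 - \|\rho\|_2^2$ and minimize $\|f(\lambda)\|_2^2$ pointwise. Both reduce to the same elementary fact that, under a fixed-sum constraint on each block, the $\ell_2$ norm is minimized by the constant (block-average) vector, together with the observation that $\lambda^{\uparrow}$ is already constant on blocks; your route is a bit more streamlined, and your explicit verification that $\lambda^{\uparrow}$ is block-constant fills in a detail the paper only states in passing.
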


\begin{proof}
    Let $\widehat{\brho} = \bU \cdot \widehat{\balpha} \cdot \bU^\dagger$ for $\widehat{\balpha} = \lambda^{\uparrow}/n$ be the original unbiased estimator, and let $\widehat{\brho}' = \bU \cdot \widehat{\balpha}' \cdot \bU^\dagger$ be the unbiased estimator that corresponds to a legal Young diagram transformation $f$. Since $f$ is a legal transformation, it holds that for any $\blambda$ obtained from weak Schur sampling, $\widehat{\balpha}'$ has the same total weight as $\widehat{\balpha}$ on each block of $\blambda$, but not necessarily uniformly distributed among each row.
    The expected distance of $\widehat{\brho}'$ from $\rho$ in the Schatten $2$-norm is
    \begin{align}
        \E \norm{\widehat{\brho}' - \rho }_2^2
        &= \E \norm{(\widehat{\brho}' - \widehat{\brho}) + (\widehat{\brho} -\rho)}_2^2 \nonumber \\
        & = \E \norm{\widehat{\brho}' - \widehat{\brho}}_2^2 + 2\cdot \E \tr \left((\widehat{\brho}' - \widehat{\brho})(\widehat{\brho} -\rho) \right) + \E \norm{ \widehat{\brho} -\rho }_2^2 \nonumber \\
        & = \E \norm{\widehat{\brho}' - \widehat{\brho}}_2^2 + 2\cdot \E \tr \left((\widehat{\brho}' - \widehat{\brho})\widehat{\brho} \right)  -2\cdot \E \tr \left((\widehat{\brho}' - \widehat{\brho})\rho \right) + \E \norm{ \widehat{\brho} -\rho }_2^2 \nonumber \\
        & = \E \norm{ \widehat{\brho}' - \widehat{\brho}}_2^2 + 2 \cdot \E \tr \left( (\widehat{\brho}' - \widehat{\brho})\widehat{\brho}  \right) + \E \norm{ \widehat{\brho} -\rho }_2^2 \tag{$\E \widehat{\brho} = \E \widehat{\brho}' = \rho$} \nonumber\\
        & = \E \norm{ \widehat{\balpha}' - \widehat{\balpha}}_2^2 + 2 \cdot\E \tr \left( (\widehat{\balpha}' - \widehat{\balpha})\widehat{\balpha}  \right) + \E \norm{ \widehat{\brho} -\rho }_2^2. \label{eq:var-brho-prime-vs-var-brho}
    \end{align}
    In the last line, we have used the fact that $\widehat{\brho}'$ and $\widehat{\brho}$ are simultaneously diagonalized by $\bU$. Now fix a tableau $\lambda$ and consider a block of indices with $\lambda_i$ constant. We will analyze the contribution of the first two terms on this block. Without loss of generality, let this block of indices be $\{1, \dots, m\}$, and let $\{\widehat{\mu}_i'\}_{i=1}^m$ and $\{\widehat{\mu}_i\}_{i=1}^m$ be the entries of $f(\lambda)$ and $\lambda^{\uparrow}$ respectively on the rows in this block. Note that $\widehat{\mu}_i = \widehat{\mu} \coloneq \frac{1}{m} \sum_{j=1}^m \widehat{\mu}_j'$ for all $i \in \{1, \dots, m\}$. Then we have
    \begin{align*}
        \sum_{i=1}^m (\widehat{\mu}'_i - \widehat{\mu}_i)^2 + 2 \sum_{i=1}^m (\widehat{\mu}_i' - \widehat{\mu}_i) \widehat{\mu}_i ={}&\sum_{i=1}^m (\widehat{\mu}'_i - \widehat{\mu})^2 + 2 \sum_{i=1}^m (\widehat{\mu}_i' - \widehat{\mu}) \widehat{\mu} \\
        ={}&\sum_{i=1}^m \widehat{\mu}_i'^{2} - 2\widehat{\mu} \sum_{i=1}^m \widehat{\mu}'_i + m\widehat{\mu}^2 +  2\widehat{\mu} \sum_{i=1}^m \widehat{\mu}'_i - 2m\widehat{\mu}^2 \\
        ={}&\sum_{i=1}^m \widehat{\mu}_i'^{2}- m\widehat{\mu}^2 \\
        ={}& m\left(\frac{1}{m}\sum_{i=1}^m \widehat{\mu}_i'^{2} - \left(\frac{1}{m}\sum_{i=1}^m \widehat{\mu}_i'\right)^2\right) \geq 0,
    \end{align*}
    where the last inequality follows from Cauchy-Schwarz.
    Since $\widehat{\balpha}'$ and $\widehat{\balpha}$ are diagonal matrices with entries $f(\blambda)/n$ and $\blambda^{\uparrow}/n$ respectively, we conclude that for every tableau $\blambda$ produced by weak Schur sampling, the first two terms of~\Cref{eq:var-brho-prime-vs-var-brho} are positive, and thus
    \begin{equation*}
        \Var(\widehat{\brho}') = \E \norm{ \widehat{\brho}' - \rho }_2^2 \geq \E \norm{ \widehat{\brho} -\rho }_2^2 = \Var(\widehat{\brho}). \qedhere
    \end{equation*}
\end{proof}

\begin{remark}[Unitary invariance of Keyl measurement]
    \label{rem:unitary-invariance}
    Throughout this paper, we will consider distances or divergences that are unitarily invariant, that is, distances $\mathrm{D}$ that satisfy $\mathrm{D}(\rho, \sigma) = \mathrm{D}(U\rho U^{\dagger}, U\sigma U^{\dagger})$.
    Any algorithm from our family of unbiased estimators, when run on a state $\rho = V\cdot \alpha \cdot V^{\dagger}$ with spectrum $\alpha$, has the form:
    \begin{equation*}
        \blambda \sim \SW^n(\alpha),\quad
        \bU \sim K_{\blambda}(\rho),\quad
        \widehat{\balpha} = \mathrm{diag}(f(\blambda)/n),\quad
        \widehat{\brho} = \bU \cdot \widehat{\balpha} \cdot \bU^{\dagger}.
    \end{equation*}
    Here, $f$ is some legal Young diagram transformation. It holds that
    \begin{equation*}
        \mathrm{D}(\widehat{\brho}, \rho) = \mathrm{D}(\bU \cdot \widehat{\balpha} \cdot \bU^{\dagger}, V\cdot \alpha \cdot V^{\dagger}) = \mathrm{D}(V^{\dagger}\bU \cdot \widehat{\balpha} \cdot \bU^{\dagger}V, \alpha).
    \end{equation*}
    Moreover, from~\Cref{eq:keyl-weight-on-U} we see that
    \begin{align*}
        \tr(M_{\lambda, V^{\dagger}\bU} \cdot \nu_{\lambda}(\alpha))
        & = \dim(V^d_\lambda) \cdot \bra{T_\lambda} \nu_\lambda(\bU^{\dagger} V \alpha V^{\dagger} \bU) \ket{T_\lambda} \\
        &= \dim(V^d_\lambda) \cdot \bra{T_\lambda} \nu_\lambda(\bU^{\dagger} \rho \bU ) \ket{T_\lambda}
        = \tr(M_{\lambda, \bU} \cdot \nu_{\lambda}(\rho)).
    \end{align*}
    In words, the probability density of $\bU$ under $K_{\blambda}(\rho)$ is equal to the probability density of $V^{\dagger}\bU$ under $K_{\blambda}(\alpha)$. Therefore, when analyzing a unitarily invariant distance, we may assume, without loss of generality, that $\rho = \alpha$ is diagonal, and our algorithm operates as follows:
    \begin{equation*}
        \blambda \sim \SW^n(\alpha),\quad
        \bU \sim K_{\blambda}(\alpha),\quad
        \widehat{\balpha} = \mathrm{diag}(f(\blambda)/n),\quad
        \widehat{\brho} = \bU \cdot \widehat{\balpha} \cdot \bU^{\dagger}.
    \end{equation*}
\end{remark}

\newpage
\part{Applications}
\label{part:applications}

\section{Full state tomography in trace distance}
In this section, we formally prove~\Cref{thm:trace-dist-tomography-intro}, that our debiased Keyl's estimator can learn a rank-$r$ quantum state to trace distance $\epsilon$, using $O(rd/\epsilon^2)$ copies of $\rho$. Our bound matches the upper bound from Corollary~1.4 of~\cite{OW16}, and the lower bound from Theorem 1.1 of \cite{SSW25}. 

\begin{theorem}[\Cref{thm:trace-dist-tomography-intro} restated]
    \label{thm:trace-dist-tomography-main}
    Let $\rho$ be a rank-$r$ quantum state, and $\widehat{\brho}$ be the output of the debiased Keyl's algorithm when run on $\rho^{\otimes n}$. Then $\dtr(\rho, \widehat{\brho}) \leq \epsilon$ with high probability when $n = O(rd/\epsilon^2)$.
\end{theorem}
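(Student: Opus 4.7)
The plan is to bound $\E[\dtr(\rho, \widehat{\brho})]$ via the second-moment formula of \Cref{thm:var}, combined with a rank-splitting trick that exploits the low-rank structure in the input. By the unitary invariance of trace distance (\Cref{rem:unitary-invariance}), I may assume $\rho$ is diagonal with spectrum $\alpha_1 \geq \cdots \geq \alpha_r > 0 = \alpha_{r+1} = \cdots = \alpha_d$. By \Cref{WSS_rank_r_comment}, weak Schur sampling then returns a tableau $\blambda$ with $\ell(\blambda) \leq r$ almost surely. Inspecting \Cref{def:yd-box-donation_main}, every row $i$ with $\blambda_i = 0$ satisfies $\blambda^{\uparrow}_i = -\ell(\blambda)$, so the last $d - \ell(\blambda)$ entries of $\blambda^{\uparrow}$ are all equal to $-\ell(\blambda)$. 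I therefore split $\widehat{\brho} = \widehat{\brho}_{\mathrm{main}} + \widehat{\brho}_{\mathrm{tail}}$ orthogonally along the $\bU$-eigenbasis, so that $\widehat{\brho}_{\mathrm{main}}$ has rank at most $\ell(\blambda) \leq r$ while $\widehat{\brho}_{\mathrm{tail}}$ has its remaining $d - \ell(\blambda)$ equal eigenvalues of value $-\ell(\blambda)/n$. This immediately yields $\|\widehat{\brho}_{\mathrm{tail}}\|_1 \leq rd/n$ and $\|\widehat{\brho}_{\mathrm{tail}}\|_2 \leq \sqrt{d}\,r/n$.

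The triangle inequality, combined with the rank bound $\|A\|_1 \leq \sqrt{\mathrm{rank}(A)}\,\|A\|_2$ applied to $A = \rho - \widehat{\brho}_{\mathrm{main}}$ (which has rank at most $2r$), gives
\begin{equation*}
    \dtr(\rho, \widehat{\brho}) \leq \tfrac{1}{2}\sqrt{2r}\bigl(\|\rho - \widehat{\brho}\|_2 + \|\widehat{\brho}_{\mathrm{tail}}\|_2\bigr) + \tfrac{1}{2}\|\widehat{\brho}_{\mathrm{tail}}\|_1.
\end{equation*}
By Jensen's inequality, $\E\|\rho - \widehat{\brho}\|_2 \leq \sqrt{\E\|\rho - \widehat{\brho}\|_2^2}$, and unbiasedness of $\widehat{\brho}$ gives $\E\|\rho - \widehat{\brho}\|_2^2 = \tr(\swap \cdot \E[\widehat{\brho}\otimes\widehat{\brho}]) - \tr(\rho^2)$. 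Substituting \Cref{thm:var} and using the identities $\tr(\swap \cdot \rho\otimes\rho) = \tr(\rho^2)$, $\tr(\swap \cdot (\rho\otimes I + I\otimes\rho) \cdot \swap) = 2d$, $\tr(\swap^2) = d^2$, together with $\tr(\swap \cdot (P\otimes P)\cdot \swap) = \tr(P)^2 \geq 0$ (which lets me discard the nonpositive $-\mathrm{Lower}_\rho$ contribution), I obtain
\begin{equation*}
    \E\|\rho - \widehat{\brho}\|_2^2 \leq \frac{2d}{n} + \frac{\E[\ell(\blambda)]\,d^2}{n^2} \leq \frac{2d}{n} + \frac{rd^2}{n^2}.
\end{equation*}

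Setting $n = C rd/\epsilon^2$ for a sufficiently large constant $C$, the four contributions $\tfrac{1}{2}\sqrt{2r}\sqrt{2d/n}$, $\tfrac{1}{2}\sqrt{2r}\sqrt{rd^2/n^2}$, $\tfrac{1}{2}\sqrt{2r}\,\E\|\widehat{\brho}_{\mathrm{tail}}\|_2$, and $\tfrac{1}{2}\E\|\widehat{\brho}_{\mathrm{tail}}\|_1$ evaluate respectively to $\epsilon/\sqrt{C}$, $O(\epsilon^2/C)$, $O(\epsilon^2/C)$, and $O(\epsilon^2/C)$ (assuming $\epsilon \leq 1$), yielding $\E[\dtr(\rho, \widehat{\brho})] \leq \epsilon/2$. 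Markov's inequality then gives $\dtr(\rho, \widehat{\brho}) \leq \epsilon$ with constant probability, which the amplification procedure of \Cref{sec:amplification} boosts to success probability $0.99$ at a constant-factor cost in $n$. The main subtlety in this plan is the structural observation about $\blambda^{\uparrow}$ producing a flat, low-$\|\cdot\|_1$ tail; without it, the triangle inequality would have to use the full dimension $d$ in place of the rank $r$, losing a factor of $d/r$ in the sample complexity. Once this separation is in place, however, the rank-based Cauchy-Schwarz inequality and \Cref{thm:var} fit together very cleanly.
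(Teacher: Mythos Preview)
Your proof is correct and follows essentially the same approach as the paper: both split $\widehat{\brho}$ along the $\ell(\blambda)$ nonempty rows versus the empty rows (the paper calls these $\widehat{\brho}_{>0}$ and $\widehat{\brho}_{=0}$), use the rank-$2r$ Cauchy--Schwarz bound on the main part, and bound the $2$-norm variance via \Cref{thm:var} to get $\E\|\widehat{\brho}-\rho\|_2^2 \leq 2d/n + rd^2/n^2$. The only cosmetic difference is that the paper expands $\E\|\widehat{\brho}_{>0}-\rho\|_2^2$ directly and uses $\tr(\rho\cdot\widehat{\brho}_{=0})\leq 0$ to drop a term, whereas you use the triangle inequality $\|\rho-\widehat{\brho}_{\mathrm{main}}\|_2 \leq \|\rho-\widehat{\brho}\|_2 + \|\widehat{\brho}_{\mathrm{tail}}\|_2$ and bound the extra $\|\widehat{\brho}_{\mathrm{tail}}\|_2$ separately; both arrive at the same final bound.
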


Before we prove the main theorem of this section, we prove a bound on the variance of our estimator $\widehat{\brho}$.
\begin{lemma}[Variance of $\widehat{\brho}$]
    \label{lem:var-brho}
    Let $\rho$ be a quantum state, and $\widehat{\brho}$ be the output of the debiased Keyl's algorithm when run on $\rho^{\otimes n}$. Then
    \begin{equation*}
        \Var[\widehat{\brho}] = \E \norm{ \widehat{\brho}-\rho}_2^2 \leq \frac{2d}{n} + \frac{d^2 \cdot \E \ell(\blambda)}{n^2}.
    \end{equation*}
\end{lemma}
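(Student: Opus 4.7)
The plan is to expand the variance using the unbiasedness of $\widehat{\brho}$ and then plug in the second moment formula of \Cref{thm:var}, exploiting the $\swap$ trick to reduce everything to traces of simple expressions.

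First, since $\widehat{\brho}$ is unbiased (\Cref{thm:unbiased}), we have $\E \widehat{\brho} = \rho$, and so
\begin{equation*}
    \E \Vert \widehat{\brho} - \rho \Vert_2^2 = \E \tr(\widehat{\brho}^2) - \tr(\rho^2) = \tr\bigl(\E[\widehat{\brho} \otimes \widehat{\brho}] \cdot \swap\bigr) - \tr(\rho^2),
\end{equation*}
using the standard identity $\tr(A^2) = \tr((A \otimes A) \cdot \swap)$. Next, I would substitute the formula from \Cref{thm:var} for $\E[\widehat{\brho} \otimes \widehat{\brho}]$ into this expression, multiply by $\swap$, and take the trace term by term. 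Using the two identities $\swap^2 = I$ and $\tr((A \otimes B)\swap) = \tr(AB)$, each of the first three terms becomes elementary: $\tr((\rho \otimes \rho)\swap) = \tr(\rho^2)$, $\tr((\rho \otimes I + I \otimes \rho)\swap \cdot \swap) = 2\tr(\rho) \cdot d = 2d$, and $\tr(\swap \cdot \swap) = \tr(I_{d^2}) = d^2$. Collecting these contributions yields
\begin{equation*}
    \tr\bigl(\E[\widehat{\brho} \otimes \widehat{\brho}] \cdot \swap\bigr) = \frac{n-1}{n}\tr(\rho^2) + \frac{2d}{n} + \frac{d^2 \cdot \E[\ell(\blambda)]}{n^2} - \tr(\mathrm{Lower}_{\rho} \cdot \swap).
\end{equation*}

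The key step is then to show that $\tr(\mathrm{Lower}_{\rho} \cdot \swap) \geq 0$, so that this term may be dropped. By the characterization of $\mathrm{Lower}_{\rho}$ from \Cref{thm:var}, it is a positive linear combination of terms $(P \otimes P) \cdot \swap$ for PSD matrices $P$, and for each such summand
\begin{equation*}
    \tr\bigl((P \otimes P)\cdot \swap \cdot \swap\bigr) = \tr(P \otimes P) = \tr(P)^2 \geq 0,
\end{equation*}
so indeed $\tr(\mathrm{Lower}_{\rho} \cdot \swap) \geq 0$ and this contribution is negative after the minus sign. Finally, subtracting $\tr(\rho^2)$ from both sides gives $\Var[\widehat{\brho}] \leq -\tr(\rho^2)/n + 2d/n + d^2 \E[\ell(\blambda)]/n^2$, and dropping the nonpositive term $-\tr(\rho^2)/n$ yields the claimed bound. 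No step here is a real obstacle, as the heavy lifting has already been done in \Cref{thm:var}; the only subtlety is verifying the sign of the $\mathrm{Lower}_\rho$ contribution, which is exactly the type of ``inner product against $\swap$'' calculation already illustrated in the discussion following \Cref{thm:var}.
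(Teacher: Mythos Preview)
Your proposal is correct and follows essentially the same approach as the paper: expand the variance via unbiasedness, rewrite $\E\tr(\widehat{\brho}^2)$ as $\tr(\E[\widehat{\brho}\otimes\widehat{\brho}]\cdot\swap)$, substitute the formula from \Cref{thm:var}, and drop the $\mathrm{Lower}_\rho$ term using the same $\tr((P\otimes P)\cdot\swap\cdot\swap)=\tr(P)^2\geq 0$ argument. The only cosmetic difference is that the paper bounds $\frac{n-1}{n}\tr(\rho^2)\leq\tr(\rho^2)$ before subtracting, whereas you subtract first and then drop the resulting $-\tr(\rho^2)/n$ term; these are equivalent.
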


\begin{proof}
    Since $\widehat{\brho}$ is unbiased, we have
    \begin{equation}
        \label{eq:variance-unbiased}
        \E \norm{ \widehat{\brho}-\rho}_2^2 = \E \tr\big( ( \widehat{\brho}-\rho)^2 \big) = \E \mathrm{tr}\big(\widehat{\brho}^2\big) - \E \tr(\rho^2) = \E \mathrm{tr} \big(\widehat{\brho}^2\big) - \tr(\rho^2).
    \end{equation}
    We use our expression for the second moment from~\Cref{thm:var} to write
    \begin{align*}
        \E \mathrm{tr}\big(\widehat{\brho}^2\big)
        &= \E \tr\left(\widehat{\brho} \otimes \widehat{\brho} \cdot \swap\right) \\
        &= \tr\left(\E [\widehat{\brho} \otimes \widehat{\brho}] \cdot \swap\right) \\
        &= \frac{n-1}{n}\tr\left(\rho\otimes \rho \cdot \swap\right) + \frac{1}{n} \tr\left(\rho \otimes I + I \otimes \rho\right) + \frac{\E \ell(\blambda)}{n^2} \cdot \tr(I\otimes I) - \tr(\mathrm{Lower}_{\rho} \cdot \swap) \\
        &\leq \tr(\rho^2) + \frac{2d}{n} + \frac{d^2 \cdot \E \ell(\blambda)}{n^2}.
    \end{align*}
    In the last step, we are able to drop $\tr(\mathrm{Lower}_{\rho} \cdot \swap)$, since $\mathrm{Lower}_\rho$ is a positive linear combination of terms of the form $(P \otimes P)\cdot \swap$, and since $\tr( (P \otimes P) \cdot \swap \cdot \swap) = \tr( P \otimes P) = \tr(P)^2$. 
    Plugging this back into \Cref{eq:variance-unbiased} completes the proof.
\end{proof}
We are now ready to prove the main result of this section.
\begin{proof}[Proof of~\Cref{thm:trace-dist-tomography-main}]
    Since $\rho$ has rank $r$, the Young tableau $\blambda$ that we obtain from weak Schur sampling will have at most $r$ nonempty rows (see \Cref{WSS_rank_r_comment}). Let us decompose $\widehat{\brho}$ as
    \begin{equation*}
        \widehat{\brho}
        = \bU \cdot \widehat{\balpha}_{>0} \cdot \bU^{\dagger}
        + \bU \cdot \widehat{\balpha}_{=0} \cdot \bU^{\dagger} = \widehat{\brho}_{>0} + \widehat{\brho}_{=0},
    \end{equation*}
    where
    \begin{equation*}
        \widehat{\balpha}_{>0} \coloneq \sum_{i=1}^{\ell(\blambda)} \widehat{\balpha}_i \ketbra{i} = \sum_{i, \blambda_i > 0} \widehat{\balpha}_i \ketbra{i}, \quad
        \widehat{\balpha}_{=0} \coloneq \sum_{i=\ell(\blambda)+1}^{d} \widehat{\balpha}_i \ketbra{i} = \sum_{i, \blambda_i = 0} \widehat{\balpha}_i \ketbra{i}.
    \end{equation*}
    In words, we split $\widehat{\brho}$ into the state we obtain from the top $\ell(\blambda)$ eigenvalues (that correspond to nonempty rows of the Young diagram $\blambda$), plus the state from the bottom $d - \ell(\blambda)$ eigenvalues (that correspond to the empty rows of $\blambda$). Observe that $\widehat{\balpha}_i = -\frac{\ell(\blambda)}{n}$ for all $i > \ell(\blambda)$, and thus
    \begin{equation*}
        \widehat{\brho}_{=0} = \bU \cdot \Big(-\frac{\ell(\blambda)}{n}\sum_{i=\ell(\blambda)+1}^d \ketbra{i}\Big)\cdot \bU^{\dagger}
        = -\frac{\ell(\blambda)}{n} \cdot \bPi_{=0},
    \end{equation*}
    where $\bPi_{=0}$ is some projector of rank $d-\ell(\blambda)$.  By the triangle inequality, 
    \begin{equation}\label{td_triangle_ineq}
        \E \norm{ \widehat{\brho} - \rho}_1 \leq \E \norm{\widehat{\brho}_{>0} - \rho}_1 + \E \norm{\widehat{\brho}_{=0}}_1.
    \end{equation}
    First, since $\widehat{\brho}_{=0} = -\frac{\ell(\blambda)}{n} \cdot \bPi_{=0}$ and $\ell(\blambda) \leq r$:
    \begin{equation}\label{td_rho_>r}
        \norm{ \widehat{\brho}_{=0}}_1 = \frac{\ell(\blambda)}{n} \cdot (d-\ell(\blambda)) \leq \frac{rd}{n}. 
    \end{equation}
    Next, by Cauchy-Schwarz and Jensen's inequality:
    \begin{equation*}
        \norm{\widehat{\brho}_{>0} - \rho}_1 \leq \sqrt{2r} \cdot  \E \norm{\widehat{\brho}_{>0} - \rho}_2  \leq \sqrt{2r} \cdot \sqrt{ \E \norm{\widehat{\brho}_{>0} - \rho}_2^2}.
    \end{equation*}
    Then, we have
    \begin{equation*}
        \E \norm{\widehat{\brho}_{>0} - \rho}_2^2
        = \E \norm{ (\widehat{\brho}-\rho) - \widehat{\brho}_{=0}}_2^2 
        = \E \norm{ \widehat{\brho}-\rho}_2^2 - 2\E \tr ((\widehat{\brho}-\rho) \widehat{\brho}_{=0}) + \E \norm{ \widehat{\brho}_{=0}}_2^2.
    \end{equation*}
    We expand the second term as follows
    \begin{align*}
        2\E \tr ((\widehat{\brho}-\rho) \widehat{\brho}_{=0})
        &= 2\E \tr (\widehat{\brho}_{>0} + \widehat{\brho}_{=0} -\rho) \widehat{\brho}_{=0}) \\
        &=2 \E \tr (\widehat{\brho}_{>0} \cdot \widehat{\brho}_{=0}) + 2\E \norm{\widehat{\brho}_{=0}}_2^2 -  2\E \tr ( \rho \cdot \widehat{\brho}_{=0}) \\
        &=2 \E \norm{\widehat{\brho}_{=0}}_2^2 - 2 \E \tr ( \rho \cdot \widehat{\brho}_{=0}),
    \end{align*}
    where we have used the fact that $\tr( \widehat{\brho}_{>0} \cdot \widehat{\brho}_{=0}) = \tr( \widehat{\balpha}_{>0} \cdot \widehat{\balpha}_{=0}) = 0$, as $\widehat{\balpha}_{>0}$ and $\widehat{\balpha}_{=0}$ are supported on complementary subsets of computational basis vectors. Thus $\E \norm{\widehat{\brho}_{>0} - \rho}_2^2$ becomes
    \begin{equation*}
        \E \norm{\widehat{\brho}_{>0} - \rho}_2^2
        = \E \norm{ \widehat{\brho}-\rho}_2^2 + 2\E \tr ( \rho \cdot \widehat{\brho}_{=0}) - \E \norm{ \widehat{\brho}_{=0}}_2^2
        \leq \E \norm{ \widehat{\brho}-\rho}_2^2 + 2\E \tr ( \rho \cdot \widehat{\brho}_{=0}).
    \end{equation*}
    Then, since $\ell(\blambda)$ is always at most $r$,~\Cref{lem:var-brho} implies that
    \begin{equation*}
        \E \norm{ \widehat{\brho}-\rho}_2^2  \leq \frac{2d}{n} + \frac{rd^2}{n^2}.
    \end{equation*}
    Moreover, 
    \begin{equation*}
        \tr ( \rho \cdot \widehat{\brho}_{=0}) = -\frac{\ell(\blambda)}{n} \cdot \tr ( \rho \cdot \bPi_{=0}) \leq 0.
    \end{equation*}
    Thus,
    \begin{equation}\label{td_rho_<=r}
    \E \norm{\widehat{\brho}_{>0} - \rho}_2^2 \leq \frac{2d}{n} + \frac{rd^2}{n^2}.
    \end{equation}
    Plugging~\Cref{td_rho_>r,td_rho_<=r} into~\Cref{td_triangle_ineq}, we get
    \begin{equation*}
        \E \norm{\widehat{\brho} - \rho}_1 \leq \sqrt{ \frac{4rd}{n} + \frac{2r^2d^2}{n^2} } + \frac{rd}{n},
    \end{equation*}
    which is $O(\epsilon)$ when $n = O(rd/\epsilon^2)$.\qedhere
    
\end{proof}

\paragraph{Alternative proof using Young diagram statistics.} There is another way of proving~\Cref{thm:trace-dist-tomography-main} without relying on our second moment calculation. Let $\widehat{\brho}' = \bU \cdot \widehat{\balpha}' \cdot \bU^{\dagger}$ be the member of the family of unbiased estimators that uses the rank-$r$ box donation transformation from~\Cref{def:rank-k-box-donation}. The spectrum of $\widehat{\brho}'$ satisfies
\begin{equation*}
    \widehat{\balpha}'_i = \begin{cases}
        \frac{\blambda_i + d - 2i + 1}{n} & i \leq r \\
        -\frac{r}{n} & i \in \{r+1, \dots, d\}.
    \end{cases}
\end{equation*}
We will obtain a slightly weaker bound for the variance of $\widehat{\brho}$ as follows: $\Var[\widehat{\brho}] \leq \Var[\widehat{\brho}']\leq \frac{2d}{n} + \frac{rd^2}{n^2}$. We proved the first inequality in~\Cref{lem:og-estimator-has-min-variance}, and for the second inequality, we write
\begin{align}
    \Var[\widehat{\brho}']
    &= \E\mathrm{tr}(\widehat{\brho}'^2) - p_2(\alpha) \tag{\Cref{eq:variance-unbiased}} \\
    &= \E\left[\sum_{i=1}^{d} \widehat{\balpha}'^2_i\right] - p_2(\alpha) \nonumber \\
    &= \E\left[\sum_{i=1}^r \frac{(\blambda_i + (d-2i+1))^2}{n^2}\right] + \sum_{i=r+1}^d \frac{(-r)^2}{n^2} - p_2(\alpha) \label{eq:var-using-p2}
\end{align}
The first term is related to the second moment of the row lengths of the Young diagram random variable $\blambda$. Fortunately, these moments can be computed using \emph{Kerov's algebra of observables of diagrams}~\cite{Wri16}. In particular, the following member of this algebra captures the second moment of a Young diagram $\lambda$:
\begin{equation*}
    p_2^*(\lambda)
    = \sum_{i=1}^{d} \left(\left(\lambda_i - i + \frac{1}{2}\right)^2 - \left(-i + \frac{1}{2}\right)^2\right)
    = \sum_{i=1}^{d} \left(\lambda_i^2 - \lambda_i(2i - 1)\right).
\end{equation*}
Proposition 2.34 of~\cite{OW15} computes the expectation of this polynomial to be
\begin{equation}
    \label{eq:p2-star}
    \E[p_2^*(\blambda)] = n(n-1)p_2(\alpha).
\end{equation}
We conclude that
\begin{align*}
    (\ref{eq:var-using-p2}) \leq{}&\frac{1}{n^2} \E\left[\sum_{i=1}^d\left(\blambda_i^2 - \blambda_i(2i - 1)\right)\right] + \frac{1}{n^2} \E\left[\sum_{i=1}^r \blambda_i(2d - 2i + 1)\right] + \frac{1}{n^2}\sum_{i=1}^r (d - 2i + 1)^2 + \frac{r^2d}{n^2} - p_2(\alpha) \\
    \leq{}& \frac{1}{n^2} \cdot \E[p_2^*(\blambda)] + \frac{2d}{n^2}\E\left[\sum_{i=1}^r \blambda_i\right] + \frac{rd^2}{n^2} + \frac{r^2d}{n^2} - p_2(\alpha) \\
    \leq{}& \frac{2d}{n} + \frac{rd^2}{n^2} \tag{\Cref{eq:p2-star}, $\sum_{i=1}^r \blambda_i = n$}.
\end{align*}
Given the variance bound, we can prove the statement of~\Cref{thm:trace-dist-tomography-main} using the same proof.

\section{Full state tomography in Bures distance}
In this section, we formally prove~\Cref{thm:bures-dist-tomography-intro}, that our debiased Keyl's estimator can learn a rank-$r$ quantum state to Bures distance $\epsilon$, using $O(rd/\epsilon^2)$ copies of $\rho$. Equivalently, $O(rd/\delta)$ samples suffice to learn to infidelity error $\delta$. This matches the lower bound proven in \cite{Yue23}.

This section is structured as follows.
\begin{enumerate}
    \item As a warm-up, we present in~\Cref{sec:full-rank-bures} a simpler argument that shows how to learn a full-rank quantum state $\rho$ to Bures distance $\epsilon$ using $O(d^2/\epsilon^2)$ copies.
    \item The complete argument for rank-$r$ states appears in~\Cref{sec:low-rank-bures}.
    \item As a further application of our Bures distance tomography result, we present in~\Cref{sec:learning-bipartite} an algorithm for learning a bipartite pure state $\ket{\psi}_{AB}$ with Schmidt rank $r$ using $O( r (d_A +d_B)/\epsilon^2)$ copies of $\ket{\psi}_{AB}$. 
\end{enumerate}

Let us now make a couple of remarks that will be useful in multiple subsections below. First, we will use the Bures $\chi^2$-divergence to bound the Bures distance, both of which are unitarily invariant. Following~\Cref{rem:unitary-invariance}, we will therefore assume throughout this section, without loss of generality, that the state we are trying to learn $\rho$ is diagonal with sorted eigenvalues $\alpha = (\alpha_1, \dots, \alpha_d)$. Second, we will use the following lemma to upper bound the expectation of the square of the entries of $\widehat{\brho}$.

\begin{lemma}
    \label{lem:expectation-of-products-of-entries}
    Given a rank-$r$ diagonal mixed state $\rho \in \C^{d \times d}$ with eigenvalues $\alpha_1, \dots, \alpha_d$, let $\widehat{\brho}$ be the output of the debiased Keyl's algorithm on $\rho^{\otimes n}$. Then
    \begin{equation*}
    \E\big[|\widehat{\brho}_{ij}|^2\big] \leq |\rho_{ij}|^2 + \frac{\alpha_i + \alpha_j}{n} + \frac{r}{n^2}.
    \end{equation*}
\end{lemma}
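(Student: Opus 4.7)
The plan is to apply Theorem~\ref{thm:var} directly and then evaluate each term by a straightforward bra-ket computation, using the diagonality of $\rho$ to simplify.

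First I would rewrite the quantity of interest in the form used by Theorem~\ref{thm:var}. Since $\widehat{\brho}$ is Hermitian, $\widehat{\brho}_{ji} = \overline{\widehat{\brho}_{ij}}$, so
\begin{equation*}
    \E\bigl[|\widehat{\brho}_{ij}|^2\bigr]
    = \E[\widehat{\brho}_{ij}\cdot \widehat{\brho}_{ji}]
    = \bra{ij}\cdot \E[\widehat{\brho}\otimes \widehat{\brho}]\cdot \ket{ji}.
\end{equation*}
Substituting the four-term formula from \Cref{eq:second-moment}, the computation splits into four independent pieces, each of which is short.

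For the main term, $\bra{ij}\,\rho\otimes\rho\,\ket{ji} = \rho_{ij}\rho_{ji} = |\rho_{ij}|^2$ (though for diagonal $\rho$ this is just $\alpha_i^2\cdot \delta_{ij}$). For the second term, using $\swap\ket{ji}=\ket{ij}$,
\begin{equation*}
    \bra{ij}(\rho\otimes I + I\otimes \rho)\cdot \swap\ket{ji}
    = \bra{ij}(\rho\otimes I + I\otimes \rho)\ket{ij}
    = \rho_{ii}+\rho_{jj} = \alpha_i+\alpha_j.
\end{equation*}
The third term contributes $\bra{ij}\swap\ket{ji} = 1$, giving $\E[\ell(\blambda)]/n^2$, and here I would use the fact that weak Schur sampling on a rank-$r$ state always returns a Young diagram with $\ell(\blambda)\leq r$ (as noted in \Cref{WSS_rank_r_comment}), so $\E[\ell(\blambda)]\leq r$.

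Finally, I would handle the lower-order correction exactly as in the discussion following \Cref{thm:var}: since $\mathrm{Lower}_{\rho}$ is a positive linear combination of terms of the form $(P\otimes P)\cdot \swap$ with $P\succeq 0$, each such term contributes
\begin{equation*}
    \bra{ij}(P\otimes P)\cdot \swap\ket{ji} = \bra{ij}(P\otimes P)\ket{ij} = P_{ii}P_{jj}\geq 0,
\end{equation*}
so $-\bra{ij}\mathrm{Lower}_{\rho}\ket{ji}\leq 0$ and can be dropped in the upper bound. Collecting the pieces and using $(n-1)/n \leq 1$ to replace the coefficient on the main term yields the claimed inequality. There is no real obstacle here; the lemma is essentially a tailored readout of \Cref{thm:var}, and the only ``moves'' are the permutation symmetry identity $\swap\ket{ji}=\ket{ij}$ and the rank-based bound $\ell(\blambda)\leq r$.
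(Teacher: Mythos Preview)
Your proposal is correct and follows essentially the same approach as the paper's proof: rewrite $|\widehat{\brho}_{ij}|^2$ as $\bra{ij}\E[\widehat{\brho}\otimes\widehat{\brho}]\ket{ji}$, plug in \Cref{thm:var}, compute each term using $\swap\ket{ji}=\ket{ij}$, drop the $\mathrm{Lower}_\rho$ contribution via the PSD argument, and bound $\E[\ell(\blambda)]\leq r$ and $(n-1)/n\leq 1$.
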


\begin{proof}
    Using the fact that $\widehat{\brho}$ is Hermitian, we write
    \begin{equation*}
    \E\big[|\widehat{\brho}_{ij}|^2\big]
        = \E[\widehat{\brho}_{ij} \cdot \widehat{\brho}_{ji}]
        = \E\big[\bra{i}\widehat{\brho} \ket{j} \cdot \bra{j}\widehat{\brho} \ket{i}\big]
        = \bra{ij} \cdot \E [\widehat{\brho}\otimes \widehat{\brho}] \cdot \ket{ji}.
    \end{equation*}
    We use our second moment result from~\Cref{thm:var} to deduce that $\E|\widehat{\brho}_{ij}|^2$ is equal to
    \begin{align*}
        &= \frac{n-1}{n} \bra{ij} \rho \otimes \rho \ket{ji} + \frac{1}{n} \bra{ij}\left( \rho \otimes I + I \otimes \rho \right)\cdot \swap \ket{ji} + \frac{\E[\ell(\blambda)]}{n^2} \cdot \bra{ij}\swap \ket{ji} - \bra{ij}\mathrm{Lower}_{\rho} \ket{ji} \\
        &= \frac{n-1}{n} \cdot \left|\rho_{ij}\right|^2 + \frac{1}{n}\cdot \bra{ij}\left( \rho \otimes I + I \otimes \rho \right) \ket{ij} + \frac{\E[\ell(\blambda)]}{n^2} - \bra{ij}\mathrm{Lower}_{\rho} \ket{ji} \\
        &\leq \left|\rho_{ij}\right|^2 + \frac{1}{n}\left( \rho_{ii} + \rho_{jj} \right) + \frac{\E[\ell(\blambda)]}{n^2} \\
        &\leq \left|\rho_{ij}\right|^2 + \frac{\alpha_i + \alpha_j}{n} + \frac{r}{n^2} \tag{Because $\ell(\blambda) \leq r$ when $\rho$ is rank $r$}.
    \end{align*}
    The first inequality follows from our characterization of $\mathrm{Lower}_{\rho}$: we can expand $\bra{ij}\cdot \mathrm{Lower}_{\rho} \cdot \ket{ji}$ as a positive linear combination of terms of the form
    \begin{equation*}
        \bra{ij} \cdot (P \otimes P) \cdot \swap \ket{ji} = \bra{ij} \cdot (P \otimes P)\cdot \ket{ij} = P_{ii} \cdot P_{jj} \geq 0,
    \end{equation*}
    where the last step uses the fact that $P$ is positive semidefinite.
\end{proof}


In contrast to learning in trace distance, the Bures distance introduces additional challenges that we have to overcome in the following subsections:
\begin{enumerate}
    \item The first one is that our debiased Keyl's algorithm can return a matrix $\widehat{\brho}$ with negative eigenvalues. Since the definition of the Bures distance involves the square root of matrices, it is not well-defined for an estimator that is not positive semidefinite.
    \item The second challenge arises because we use~\Cref{lem:bures-leq-chi} to upper bound the Bures distance by the Bures $\chi^2$-divergence, which is sensitive 
    to the small eigenvalues of $\rho$. Thus, we would like to make sure that the eigenvalues of $\rho$ are large enough to obtain a useful bound on the Bures $\chi^2$-divergence.
\end{enumerate}
To deal with these challenges that are specific to the Bures distance, we will use two modified estimators in~\Cref{sec:full-rank-bures,sec:low-rank-bures} that go beyond our debiased Keyl's estimator.

\subsection{Full-rank tomography}
\label{sec:full-rank-bures}


The main result of this section is an algorithm for learning a mixed state $\rho$ to Bures distance $\epsilon$ using $O(d^2/\epsilon^2)$ copies. We deal with the second challenge of learning in Bures distance, that is, when $\rho$ has some eigenvalues which are too small, by first applying the depolarizing channel to each copy of $\rho$ with noise rate $C \cdot \frac{d^2}{n}$, where $C$ is a constant that we will choose later. That is, we obtain copies of the regularized state
\begin{equation*}
    \rho' \coloneq \bigg(1 - C\cdot \frac{d^2}{n}\bigg)\cdot \rho + C\cdot \frac{d^2}{n}\cdot \frac{I}{d}.
\end{equation*}
Note that we will ultimately take $n = \Theta(d^2/\epsilon^2)$ large enough so that the above procedure is well-defined (i.e., the noise rate does not exceed $1$). The depolarizing noise implies that each eigenvalue of $\rho'$ is at least $C \cdot \frac{d^2}{n} \cdot \frac{1}{d} = C \cdot \frac{d}{n}$, and as we will see, our Bures $\chi^2$-divergence argument does not have to deal with eigenvalues that are too small.

We will show that when the number of samples is $n = \Theta(d^2/\epsilon^2)$, the original $\rho$ and depolarized state $\rho'$ are $O(\epsilon)$-close in Bures distance. Thus, we will use the debiased Keyl's algorithm to produce an estimate $\widehat{\brho}$ that is $O(\epsilon)$-close to $\rho'$ in Bures distance, and it will also be a good estimate for the original state $\rho$.

In addition, we will show that the regularized state also deals with the first challenge of learning in the Bures distance, which is the negative eigenvalues of the estimator. In particular, we show in~\Cref{lem:bures-estimator-psd} that there exists a constant $C$ such that the eigenvalues of the estimate $\widehat{\brho}$ are all non-negative with high probability.

\begin{lemma}
    \label{lem:bures-estimator-psd}
    Let $\rho'$ be a diagonal mixed state with sorted eigenvalues $\alpha_1', \dots, \alpha_d'$, and $\widehat{\brho}$ be the output of the debiased Keyl's algorithm when run on $(\rho')^{\otimes n}$. There exists a constant $C$ such that, if the eigenvalues of $\rho'$ are all at least $C\cdot d/n$, then $\widehat{\brho}$ is a positive semidefinite matrix with high probability.
\end{lemma}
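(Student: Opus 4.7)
The plan is to reduce the positive semidefiniteness of $\widehat{\brho}$ to a lower bound on the smallest row length $\blambda_d$ of the empirical Young diagram, and then invoke concentration of weak Schur sampling to secure that bound.

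Since $\widehat{\brho} = \bU \cdot \widehat{\balpha} \cdot \bU^{\dagger}$ with $\widehat{\balpha} = \blambda^{\uparrow}/n$, and $\bU$ is unitary, $\widehat{\brho}$ is positive semidefinite if and only if $\blambda^{\uparrow}_i \geq 0$ for every $i \in [d]$. Dropping the nonnegative second sum in \Cref{def:yd-box-donation} yields $\blambda^{\uparrow}_i \geq \blambda_i - (i-1)$. Since $\blambda$ is sorted in nonincreasing order, whenever $\blambda_d \geq d - 1$ we have $\blambda^{\uparrow}_i \geq \blambda_d - (i-1) \geq d - i \geq 0$ for every $i \in [d]$, so $\widehat{\brho}$ is positive semidefinite. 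Thus the whole lemma reduces to showing $\Pr[\blambda_d \geq d - 1] \geq 0.99$ whenever $\alpha_d' \geq C \cdot d/n$ for a sufficiently large absolute constant $C$.

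For this, I would appeal to concentration of the row lengths of the empirical Young diagram around the spectrum $\alpha'$ of $\rho'$. Such concentration is classical: weak Schur sampling of $(\rho')^{\otimes n}$ yields $\blambda/n$ close to $\alpha'$, and one can derive either from the chi-squared divergence bound $\E[\mathrm{D}_{\chi^2}(\blambda/n \,\|\, \alpha')] = O(d^2/n)$ of~\cite{OW17a} (followed by Markov's inequality), or from a direct variance computation on individual row lengths via the Kerov algebra of observables used elsewhere in this paper, that with probability at least $0.99$,
\begin{equation*}
\bigl|\blambda_d - n\alpha_d'\bigr| \leq O\bigl(\sqrt{n\alpha_d'}\bigr).
\end{equation*}
Plugging in $n\alpha_d' \geq Cd$ then gives $\blambda_d \geq Cd - O(\sqrt{Cd})$, which exceeds $d - 1$ once $C$ is chosen to be a sufficiently large absolute constant.

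The main obstacle is the second step: obtaining a tail bound on $\blambda_d$ that is strong enough for the desired success probability. An in-expectation chi-squared bound translated by Markov's inequality is typically sharp up to logarithmic factors, which suffices here since $C$ is free to be any absolute constant. If the tail bound from the literature turns out to be too weak as stated for our needs, I would strengthen the conclusion by the majority-vote amplification scheme from~\Cref{sec:amplification}, which boosts a constant success probability to $1 - \delta$ at the cost of an unimportant constant factor in the sample complexity. The reduction to $\blambda_d \geq d-1$ and the final choice of $C$ are otherwise routine.
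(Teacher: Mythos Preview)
Your reduction to the event $\blambda_d \geq d-1$ is exactly the paper's approach: the paper likewise observes $\widehat{\balpha}_d \geq (\blambda_d - d)/n$ and then controls $\blambda_d$.

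The gap is in the concentration step. The bound you state, $|\blambda_d - n\alpha_d'| \leq O(\sqrt{n\alpha_d'})$, is too strong, and neither of the tools you propose delivers it. From the chi-squared route, Markov on $\E[\mathrm{D}_{\chi^2}(\blambda/n\,\|\,\alpha')]=O(d^2/n)$ only gives $(\blambda_d/n-\alpha_d')^2/\alpha_d' \leq O(d^2/n)$, i.e.\ $|\blambda_d - n\alpha_d'| \leq O(d\sqrt{n\alpha_d'})$. Plugging in $n\alpha_d' \geq Cd$ yields $\blambda_d \geq Cd - O(\sqrt{C}\,d^{3/2})$, which only clears $d-1$ if $C=\Omega(d)$, not a constant. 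The loss when isolating one term from the chi-squared sum is a full factor of $d$, not a logarithm. The Kerov-algebra identities used in the paper control symmetric polynomials in all row lengths, not the variance of a single $\blambda_d$, so that route does not directly yield a per-row tail bound either. The amplification scheme you mention boosts a success probability; it cannot tighten the scale of the deviation.

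What the paper actually invokes are the per-row moment bounds of \cite[Theorems~1.4 and~1.5]{OW17a}: $\E[(\blambda_d-n\alpha_d')^2]\leq O(\nu_d n)$ with $\nu_d=\min\{1,\alpha_d' d\}\leq \alpha_d' d$, together with $|\E[\blambda_d]-n\alpha_d'|\leq 2\sqrt{\nu_d n}$. Chebyshev then gives $\blambda_d \geq n\alpha_d' - O(\sqrt{\alpha_d' d n})$ with high probability. With $n\alpha_d' \geq Cd$ this becomes $\blambda_d \geq d\bigl(C - O(\sqrt{C})\bigr)$, which exceeds $d-1$ for a sufficiently large absolute constant $C$. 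So your skeleton is right; you just need the sharper per-row second-moment bound from \cite{OW17a} in place of the chi-squared divergence.
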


\begin{proof}
    Recall that $\widehat{\brho} = \bU\cdot \widehat{\balpha} \cdot \bU^{\dagger}$, where $\widehat{\balpha}$ is obtained by performing the Young diagram box donation procedure on $\blambda$ and normalizing by $n$. Then the smallest eigenvalue of $\widehat{\brho}$ is lower bounded by
    \begin{equation}
        \label{eq:smallest-eigval-bures}
        \widehat{\balpha}_d \geq \frac{\blambda_d - d}{n}.
    \end{equation}

    Our goal is to show that $\widehat{\balpha}_d$ is non-negative. For this, we will need to understand the distribution of $\blambda$ when drawn from $\mathrm{SW}^n(\alpha')$. Theorems 1.4 and 1.5 of~\cite{OW17a} give bounds on the first and second moments of $\blambda_k$ respectively. In particular, for $k \in [d]$ and $\nu_k \coloneq \min\{1, \alpha_k'd\}$, the first moment of $\blambda_k$ satisfies
    \begin{equation}
        \label{eq:first-moment-blambda}
        \alpha_k' n - 2\sqrt{\nu_kn} \leq \E_{\blambda \sim \mathrm{SW}^n(\alpha')} [\blambda_k] \leq \alpha_k'n + 2\sqrt{\nu_kn},
    \end{equation}
    and the second moment of $\blambda_k$ satisfies
    \begin{equation}
        \label{eq:second-moment-blambda}
        \E_{\blambda \sim \mathrm{SW}^n(\alpha')} \big[(\blambda_k - \alpha_k'n)^2 \big]\leq O(\nu_kn).
    \end{equation}
    We can bound the variance of $\blambda_d$ using~\Cref{eq:first-moment-blambda,eq:second-moment-blambda}:
    \begin{align*}
        \Var[\blambda_d]
        &= \E_{\blambda} \big[(\blambda_d - \E_{\blambda} [\blambda_d])^2\big] = \E_{\blambda} \big[\big( (\blambda_d - \alpha_d'n) + (\alpha_d'n - \E_{\blambda} [\blambda_d])\big)^2\big] \\
        &\leq \E_{\blambda} \big[ 2(\blambda_d - \alpha_d'n)^2 + 2 (\alpha_d'n - \E_{\blambda} [\blambda_d])^2\big] \leq O(\nu_dn). \tag{Cauchy-Schwarz}
    \end{align*}
    Now we will use the fact that $\nu_d \leq \alpha_d'd$ to deduce that
    \begin{equation*}
        \E_{\blambda \sim \mathrm{SW}^n(\alpha')} [\blambda_d]\geq \alpha_d' n - 2\sqrt{\alpha_d'dn},
        \qquad
        \Var[\blambda_d] \leq O(\alpha_d'dn).
    \end{equation*}
    So, Chebyshev's inequality says that with high probability, $\blambda_d$ is within $O(\sqrt{\Var[\blambda_d]}) \leq O(\sqrt{\alpha_d' dn})$ of its expectation. In particular, there exists a constant $C'$ such that, with high probability,
    \begin{equation*}
        \blambda_d \geq \alpha_d'n - C'\sqrt{\alpha_d'nd}.
    \end{equation*}
    Therefore, from~\Cref{eq:smallest-eigval-bures}, it suffices to show that we can pick our $C$ large enough so that
    \begin{equation*}
        \alpha_d'n - C'\sqrt{\alpha_d'nd} - d \geq 0.
    \end{equation*}
    From the AM-GM inequality, $C'\sqrt{\alpha_d'nd} \leq \frac{1}{2}\alpha_d'n + \frac{1}{2}(C')^2d$, and thus it suffices to show that
    \begin{equation*}
        \frac{1}{2} \alpha_d'n - \left(\frac{1}{2}(C')^2 + 1\right)d \geq 0.
    \end{equation*}
    Setting $C$ to be a constant at least $(C')^2 + 2$, and recalling that $\alpha_d' \geq C\cdot \frac{d}{n}$, we conclude that
    \begin{equation*}
        \frac{1}{2} \alpha_d'n - \left(\frac{1}{2}(C')^2 + 1\right) d 
        \geq \left(\frac{1}{2}C - \frac{1}{2}(C')^2 - 1\right) d \geq 0. \qedhere
    \end{equation*}
\end{proof}

We are now ready to prove the main result of this subsection: that $\widehat{\brho}$ is a good estimator for $\rho$ in Bures distance.
\begin{theorem}[Bures distance tomography]
\label{thm:full-rank-bures-dist-tomography-main}
    Given a mixed state $\rho \in \C^{d \times d}$,
    $n = O(d^2/\epsilon^2)$ samples suffice to output an estimator $\widehat{\brho}$ such that $\mathrm{D}_{\mathrm{B}}(\rho, \widehat{\brho})\leq \epsilon$ with high probability.
\end{theorem}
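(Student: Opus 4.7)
The plan is to follow the sketch in the introduction closely, adapted around the two technical wrinkles: the potential for $\widehat{\brho}$ to have negative eigenvalues, and the need for $\rho$ itself to have non-negligible minimum eigenvalue so that the Bures $\chi^2$-divergence expression is well-behaved. Both will be handled by first depolarizing each copy of $\rho$ at rate $p = Cd^2/n$ (for the constant $C$ promised by \Cref{lem:bures-estimator-psd}) to obtain $n$ copies of
\begin{equation*}
\rho' = (1-p)\rho + p\cdot I/d,
\end{equation*}
and then running the debiased Keyl's algorithm on $(\rho')^{\otimes n}$. Since $\rho$ and $\rho'$ differ by applying a depolarizing channel with parameter $p = O(d^2/n)$, a direct Fuchs--van de Graaf type calculation (using $\mathrm{D}_{\mathrm{tr}}(\rho,\rho')\leq p$ and $\mathrm{D}_{\mathrm{B}}\leq \sqrt{2\mathrm{D}_{\mathrm{tr}}}$) gives $\mathrm{D}_{\mathrm{B}}(\rho,\rho')\leq O(d/\sqrt n) = O(\epsilon)$ once $n = O(d^2/\epsilon^2)$, so by the triangle inequality it suffices to show $\mathrm{D}_{\mathrm{B}}(\rho',\widehat{\brho}) \leq O(\epsilon)$ with high probability.

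By \Cref{rem:unitary-invariance} I may assume $\rho'$ is diagonal with eigenvalues $\alpha'_1 \geq \cdots \geq \alpha'_d \geq Cd/n$. Condition on the event, promised by \Cref{lem:bures-estimator-psd}, that $\widehat{\brho} \succeq 0$; this happens with high probability, so it only contributes an additive constant to the final failure probability. Then the Bures distance is well-defined, and the key inequality chain is
\begin{equation*}
\E[\mathrm{D}_{\mathrm{B}}(\rho',\widehat{\brho})]\leq \E\big[\sqrt{\mathrm{D}_{\chi^2}(\widehat{\brho}\Vert \rho')}\big] \leq \sqrt{\E[\mathrm{D}_{\chi^2}(\widehat{\brho}\Vert \rho')]},
\end{equation*}
where the first inequality is \Cref{lem:bures-leq-chi} and the second is Jensen. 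Expanding the Bures $\chi^2$-divergence and using unbiasedness of the debiased Keyl's algorithm (\Cref{thm:unbiased}), which gives $\E[\widehat{\brho}] = \rho'$ and hence $\E[|\widehat{\brho}_{ij}-\rho'_{ij}|^2] = \mathrm{Var}[\widehat{\brho}_{ij}]$, I get
\begin{equation*}
\E[\mathrm{D}_{\chi^2}(\widehat{\brho}\Vert \rho')] = \sum_{i,j=1}^d \frac{2}{\alpha'_i + \alpha'_j}\cdot \mathrm{Var}[\widehat{\brho}_{ij}].
\end{equation*}

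Now I plug in \Cref{lem:expectation-of-products-of-entries} (with $r=d$), which says $\E|\widehat{\brho}_{ij}|^2 \leq |\rho'_{ij}|^2 + (\alpha'_i+\alpha'_j)/n + d/n^2$, so that $\mathrm{Var}[\widehat{\brho}_{ij}] \leq (\alpha'_i+\alpha'_j)/n + d/n^2$. Substituting,
\begin{equation*}
\E[\mathrm{D}_{\chi^2}(\widehat{\brho}\Vert \rho')] \leq \sum_{i,j}\frac{2}{\alpha'_i+\alpha'_j}\Big(\frac{\alpha'_i+\alpha'_j}{n}+\frac{d}{n^2}\Big) \leq \frac{2d^2}{n} + \frac{d^3}{\min_i \alpha'_i\cdot n^2}.
\end{equation*}
Using $\min_i \alpha'_i \geq Cd/n$, the second term becomes $O(d^2/n)$, so the whole expression is $O(d^2/n) = O(\epsilon^2)$ for $n = O(d^2/\epsilon^2)$. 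Hence $\E[\mathrm{D}_{\mathrm{B}}(\rho',\widehat{\brho})] = O(\epsilon)$, and Markov's inequality converts this to a high-probability bound (after adjusting constants via the amplification procedure of \Cref{sec:amplification} if needed).

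The only real obstacle is checking that the conditioning on $\widehat{\brho}\succeq 0$ does not corrupt the moment calculation: since the expectation bound on $\mathrm{D}_{\chi^2}$ is unconditional and the failure event has small probability, standard conditioning arguments show the conditional expectation is at most a small multiple of the unconditional one, which is what I need. I expect the entire proof to fit in about a page once the depolarization comparison is written out carefully.
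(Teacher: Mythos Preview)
Your proposal is correct and follows essentially the same approach as the paper: depolarize to get a well-conditioned state $\rho'$, invoke \Cref{lem:bures-estimator-psd} for PSD-ness, bound $\mathrm{D}_{\mathrm{B}}$ via $\mathrm{D}_{\chi^2}$ using \Cref{lem:bures-leq-chi} and Jensen, then plug in \Cref{lem:expectation-of-products-of-entries} and sum. Your explicit discussion of the conditioning issue around $\widehat{\brho}\succeq 0$ is slightly more careful than the paper's treatment, but otherwise the arguments are the same.
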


\begin{proof}
    Let $C$ be the constant from the statement of~\Cref{lem:bures-estimator-psd}, and let $\rho'$ be the state we obtain when we apply the depolarizing channel with noise rate $C\cdot d^2/n$ to $\rho$:
    \begin{equation*}
        \rho' \coloneq \Big(1 - C\cdot \frac{d^2}{n}\Big)\cdot \rho + C\cdot \frac{d^2}{n}\cdot \frac{I}{d}.
    \end{equation*}
    We will take $n \geq C\cdot d^2/\epsilon^2$, and thus the noise rate does not exceed $1$. We will denote the eigenvalues of $\rho'$ by $\alpha_1', \dots, \alpha_d'$, and observe that the smallest eigenvalue of $\rho'$ is at least $C\cdot d^2/n\cdot 1/d = C\cdot d/n$. On the other hand, this regularized state $\rho'$ is still $O(\epsilon)$-close to $\rho$ in Bures distance, since
    \begin{equation*}
        \frac{1}{2} \DBur^2(\rho, \rho') \leq \dtr(\rho, \rho') = \frac{1}{2} \norm{\rho - \rho'}_1 \leq \frac{1}{2} \norm{C\cdot d^2/n\cdot \rho}_1 + \frac{1}{2} \norm{C\cdot d^2/n \cdot I/d}_1 = O(d^2/n) = O(\epsilon^2).
    \end{equation*}
    As a result, it suffices to learn $\rho'$ up to $O(\epsilon)$ error instead of $\rho$. From~\Cref{lem:bures-estimator-psd}, we know that running the debiased Keyl's algorithm on $(\rho')^{\otimes n}$ will produce an estimator $\widehat{\brho}$ that is positive semidefinite with high probability. Hence the Bures distance between $\widehat{\brho}$ and $\rho'$ is well defined, and can be upper bounded by
    \begin{equation}
        \label{eq:bures-dummy-1}
        \E\big[\DBur(\widehat{\brho}, \rho')\big] \leq \E\Big[\sqrt{\dchi(\widehat{\brho} \, \| \,\rho')}\,\Big] \leq \sqrt{\E\big[\dchi( \widehat{\brho}\, \| \,\rho')\big]}.
    \end{equation}
    The right-hand side of the above equation  becomes:
    \begin{equation}
        \label{eq:chi-1}
        \E \big[\dchi(\widehat{\brho} \, \| \, \rho')\big]
        = \sum_{i,j=1}^d \frac{2}{\alpha_i' + \alpha_j'} \cdot \E \big[|\widehat{\brho}_{ij} - \rho_{ij}'|^2\big]
        = \sum_{i,j=1}^d \frac{2}{\alpha_i' + \alpha_j'} \cdot \left(\E \big[|\widehat{\brho}_{ij}|^2\big] - |\rho_{ij}'|^2\right).
    \end{equation}
    Let us consider the term corresponding to a fixed pair $(i, j)$:
    \begin{align*}
        \frac{2}{\alpha_i' + \alpha_j'} \cdot \left(\E \big[|\widehat{\brho}_{ij}|^2\big] - |\rho_{ij}'|^2\right)
        &\leq \frac{2}{\alpha_i' + \alpha_j'}\cdot \left(\frac{ \alpha_i' + \alpha_j'}{n} + \frac{d}{n^2} \right) \tag{\Cref{lem:expectation-of-products-of-entries}} \\
        & \leq  \frac{2}{n} + O\left(\frac{1}{n}\right)
        = O\left(\frac{1}{n}\right) \tag{$\alpha_i', \alpha_j' \geq \Omega(d/n) \implies \alpha_i' + \alpha_j' \geq \Omega(d/n)$}.
    \end{align*}
    \Cref{eq:chi-1} has a total of $d^2$ terms, each of which is at most $O\left(\frac{1}{n}\right)$ in expectation. We conclude that
    \begin{equation*}
        \E \big[\dchi(\widehat{\brho} \, \| \, \rho')\big]
        = \sum_{i,j=1}^d \frac{2}{\alpha_i' + \alpha_j'} \cdot \left(\E \big[|\widehat{\brho}_{ij}|^2\big] - |\rho_{ij}'|^2\right) \leq O\left(\frac{d^2}{n}\right).
    \end{equation*}

    From our choice of $n \geq C\cdot d^2/\epsilon^2 = \Theta(d^2/\epsilon^2)$ sufficiently large, it holds that $C\cdot d^2/n = O(\epsilon^2)$, and $\E\big[\dchi(\widehat{\brho} \, \| \,   \rho')\big] \leq O(\epsilon^2)$.
    We conclude from~\Cref{eq:bures-dummy-1} that with high probability:
    \begin{equation*}
        \E\big[\DBur(\rho, \widehat{\brho})\big] \leq \DBur(\rho, \rho') + \E\big[\DBur(\rho', \widehat{\brho})\big] \leq O(\epsilon).
    \end{equation*}
    The theorem statement follows by reducing $\epsilon$ by a constant factor, and applying Markov's inequality to deduce that $\DBur(\rho, \widehat{\brho}) \leq \epsilon$ with high probability.
\end{proof}

\subsection{Low-rank tomography}
\label{sec:low-rank-bures}

The main result of this section is an algorithm for learning a mixed state of rank $r$ to Bures distance $\epsilon$ using $O(rd/\epsilon^2)$ copies.

\begin{theorem}
    \label{thm:bures-dist-tomography-main}
    Given a rank $r$ mixed state $\rho \in \mathbb{C}^{d \times d}$, $n = O(rd/\epsilon^2)$ samples suffice to output an estimator $\widetilde{\brho}$ such that $\mathrm{D}_{\mathrm{B}}(\rho, \widetilde{\brho})\leq \epsilon$ with high probability.
\end{theorem}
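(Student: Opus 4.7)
The proof extends the strategy of~\Cref{thm:full-rank-bures-dist-tomography-main}, but has to handle the zero (or very small) eigenvalues of a rank-$r$ state $\rho$ more carefully. The main obstacle is that the depolarization trick from the full-rank proof, which replaces $\rho$ by $\rho' = (1-\eta)\rho + \eta I/d$ with $\eta = \Theta(d^2/n)$, introduces an extra Bures error of order $\sqrt{d^2/n}$. When $d \gg r$, this forces $n = \Omega(d^2/\epsilon^2)$, which is too large to match the target $n = O(rd/\epsilon^2)$.

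The plan is to regularize $\rho$ adaptively within an $r$-dimensional subspace determined by Keyl's measurement, rather than isotropically. I would first run a rank-$r$-aware variant of the debiased Keyl's algorithm (for instance, the rank-$r$ box donation from~\Cref{def:rank-k-box-donation}, whose eigenvalue profile matches that of a rank-$r$ state) to obtain a Young diagram $\blambda$ with $\ell(\blambda) \le r$ and a unitary $\bU$. I would then define $\bP_r = \sum_{i=1}^r \ketbra{\bu_i}$, where $\ket{\bu_i} = \bU\ket{i}$, and produce the final estimator $\widetilde{\brho}$ by projecting the raw output onto $\bP_r$, renormalizing, and clipping if necessary so that $\widetilde{\brho}$ is a genuine density matrix supported on $\bP_r$. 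Showing that $\widetilde{\brho}$ is positive semidefinite with high probability should follow from a Chebyshev-type argument on the Young diagram row lengths analogous to~\Cref{lem:bures-estimator-psd}.

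For the error analysis, I would introduce a virtual regularized reference $\rho^{\mathrm{reg}}$ inside $\bP_r$, obtained from $\rho|_{\bP_r}$ by mixing in $\bP_r/r$ at rate $\eta = \Theta(\epsilon^2)$, and bound the Bures distance via the triangle inequality
\begin{equation*}
    \DBur(\rho, \widetilde{\brho}) \;\leq\; \DBur(\rho, \rho|_{\bP_r}) + \DBur(\rho|_{\bP_r}, \rho^{\mathrm{reg}}) + \DBur(\rho^{\mathrm{reg}}, \widetilde{\brho}).
\end{equation*}
The Gentle Measurement Lemma (\Cref{lem:gentle-measurement}) bounds the first term by $O(\sqrt{1-\tr(\bP_r\rho)})$; the second term is at most $O(\sqrt{\eta}) = O(\epsilon)$ by construction; and the third term is bounded by $\sqrt{\dchi(\widetilde{\brho}\,\|\,\rho^{\mathrm{reg}})}$. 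Because both $\widetilde{\brho}$ and $\rho^{\mathrm{reg}}$ are supported on the $r$-dimensional subspace $\bP_r$, the Bures $\chi^2$-divergence reduces to a sum over only $r^2$ entries; combining~\Cref{lem:expectation-of-products-of-entries} with the $\Omega(\epsilon^2/r)$ lower bound on the eigenvalues of $\rho^{\mathrm{reg}}$ then gives a chi-squared bound of the form $O(r^2/n + r^4/(\epsilon^2 n^2))$, which is $O(\epsilon^2)$ as soon as $n = O(rd/\epsilon^2)$, exactly as desired.

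The hardest step I anticipate is controlling the expected subspace loss $1 - \tr(\bP_r\rho)$, i.e.\ showing that the top $r$ eigenvectors output by the Keyl measurement capture essentially all of the mass of $\rho$. Because $\rho$'s spectral gap at position $r$ can be arbitrarily small, this cannot be obtained from a generic Davis--Kahan-type bound. Instead, the argument has to use the specific structure of the Keyl POVM, whose probability density is maximized exactly when $\bU$ rotates $\ket{1},\ldots,\ket{r}$ onto $\rho$'s top $r$ eigenvectors, combined with the second-moment bound from~\Cref{thm:var} to make the statement quantitative. Once this subspace-alignment bound is established, combining it with the chi-squared computation above yields $\E[\DBur(\rho,\widetilde{\brho})] \leq O(\epsilon)$, and Markov plus the amplification procedure of~\Cref{sec:amplification} upgrade this to a high-probability bound.
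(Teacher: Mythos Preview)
Your overall shape is reasonable, but you have made the problem harder than it is. The step you identify as hardest---controlling $1-\tr(\bP_r\rho)$ when $\bP_r$ is the span of the top $r$ eigenvectors output by Keyl's measurement---is a genuine gap: you have not said how to bound it, and the second-moment formula of \Cref{thm:var} does not by itself give subspace alignment when $\rho$ has arbitrarily small spectral gaps. The paper simply bypasses this step.

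The key observation you are missing is that the projector in the triangle-inequality decomposition does not have to come from the algorithm. Working (as you may, by unitary invariance) in the basis where $\rho=\diag(\alpha_1,\ldots,\alpha_d)$, the paper takes $\Pi$ to be the projector onto the \emph{true} large-eigenvalue coordinates $R=\{i:\alpha_i\geq r/n\}$. Since $|R|\le r$ and each small $\alpha_i<r/n$, one gets $\tr(\Pi\rho)\ge 1-r^2/n$ immediately, with no subspace-alignment argument. The matching bound $\tr(\Pi\widetilde{\brho})\ge 1-O(rd/n)$ follows in one line from unbiasedness: $\E[\widetilde{\brho}_{ii}]$ is essentially $\alpha_i+r/n$, so the mass outside $R$ is $O(rd/n)$ in expectation, hence small with high probability by Markov. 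These two facts handle both endpoint terms via Gentle Measurement. For the middle term one computes $\dchi(\widetilde{\brho}[R]\,\|\,\rho[R])$ directly from \Cref{lem:expectation-of-products-of-entries}, which applies verbatim because $\rho$ is diagonal and $R$ indexes computational-basis coordinates; no rotation into the $\bU$ basis is needed.

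The PSD issue is also simpler than you make it. The smallest eigenvalue of the debiased Keyl output is always at least $-\ell(\blambda)/n\ge -r/n$, deterministically. So setting $\widetilde{\brho}=\tfrac{n}{n+rd}\bigl(\widehat{\brho}+\tfrac{r}{n}I\bigr)$ yields a genuine density matrix with no Chebyshev argument and no rank-$r$ box-donation variant; the analysis above then goes through with this $\widetilde{\brho}$.
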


Recall the first challenge of learning in the Bures distance, which is the fact that $\widehat{\brho}$ is not necessarily a positive semidefinite estimator. In the full rank case of the previous subsection, we dealt with this challenge by applying a particular depolarizing channel to each copy of our input state $\rho$ to obtain copies of the regularized state $\rho'$. We then showed that the debiased Keyl's algorithm on $(\rho')^{\otimes n}$ will output a positive semidefinite estimator $\widehat{\brho}$ with high probability, as long as the eigenvalues of $\rho'$ are sufficiently large.

In the low rank case that we consider now, the number of samples does not suffice for this argument to succeed. Instead, we will execute our debiased Keyl's algorithm on the copies of the original state $\rho^{\otimes n}$ to produce $\widehat{\brho}$, shift its eigenvalues by $r/n$, and renormalize:
\begin{equation*}
    \widetilde{\brho} \coloneq \frac{n}{n + rd}\left(\widehat{\brho} + \frac{r}{n} \cdot I\right).
\end{equation*}
Since the smallest eigenvalue of $\widehat{\brho}$ is at least $-\ell(\blambda)/n \geq -r/n$, the resulting state $\widetilde{\brho}$ is positive semidefinite and will serve as our low-rank estimator for the Bures distance.

We deal with the second challenge of learning in the Bures distance, which is sensitivity of the Bures $\chi^2$-divergence on the small eigenvalues of $\rho$, by following a strategy similar to the one Flammia and O'Donnell use in~\cite[Corollary 3.20]{FO24} to analyze their Bures distance estimator. In particular, we will bound the Bures $\chi^2$-divergence between $\widetilde{\brho}$ and $\rho$ when restricted to the large eigenvalues of $\rho$. We then show that the small eigenvalues of $\rho$ do not contribute much more to the Bures distance.

To define the restriction of a state to a subset of eigenvalues, we introduce the following notation, which appears in~\cite[Notation 2.1]{FO24}.

\begin{notation}
    Let $\sigma \in \C^{d \times d}$ be a matrix, and $R \subseteq [d]$. We define $\sigma[R] \in \C^{|R| \times |R|}$ to be the submatrix of $\sigma$ formed by the rows and columns in $R$.
\end{notation}

Note that when $\sigma$ is a quantum state, $\sigma[R]$ may have trace less than one, but by Cauchy's interlacing theorem below, it still holds that every eigenvalue of $\sigma[R]$ is in $[0, 1]$.

\begin{theorem}[Cauchy's interlacing theorem,~\cite{Fis05}]
    Let $A \in \C^{d \times d}$ be a Hermitian matrix with eigenvalues $\lambda_1 \leq \dots \leq \lambda_d$, and $R$ be a subset of $[d]$ of size $d-1$. Then the Hermitian matrix $B = A[R] \in \C^{(d-1) \times (d-1)}$ has eigenvalues $\mu_1 \leq \dots \leq \mu_{d-1}$ that satisfy
    \begin{equation*}
        \lambda_1 \leq \mu_1 \leq \lambda_2 \leq \dots \leq \lambda_{d-1} \leq \mu_{d-1} \leq \lambda_d.
    \end{equation*}
\end{theorem}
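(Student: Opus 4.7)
The plan is to prove Cauchy's interlacing theorem via the Courant-Fischer min-max characterization of eigenvalues of Hermitian matrices. Recall that for a $d \times d$ Hermitian matrix $A$ with eigenvalues $\lambda_1 \leq \cdots \leq \lambda_d$, one has the variational formulas
\begin{equation*}
\lambda_k \;=\; \min_{\substack{V \subseteq \C^d \\ \dim V = k}} \; \max_{\substack{v \in V \\ \|v\|=1}} v^{*} A v \;=\; \max_{\substack{V \subseteq \C^d \\ \dim V = d-k+1}} \; \min_{\substack{v \in V \\ \|v\|=1}} v^{*} A v.
\end{equation*}
The first step is to identify the quadratic form associated with the principal submatrix $B = A[R]$. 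Let $W = \mathrm{span}\{e_i : i \in R\} \subseteq \C^d$, a coordinate subspace of dimension $d-1$. For any vector supported in $W$, its quadratic form under $A$ agrees exactly with the corresponding quadratic form under $B$ (after identifying $W$ with $\C^{d-1}$). So the min-max formulas for $\mu_k$ can be taken over subspaces $V' \subseteq W$.

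Next I would prove the upper bound $\mu_k \leq \lambda_{k+1}$ using the $\min$-$\max$ form. Pick a $(k+1)$-dimensional subspace $V \subseteq \C^d$ achieving $\lambda_{k+1}$. Since $\dim V + \dim W = (k+1)+(d-1) \geq d+k$, the intersection $V \cap W$ has dimension at least $k$. On this intersection the quadratic form of $B$ equals that of $A$, and is bounded above by $\lambda_{k+1}$, so a $k$-dimensional subspace of $W$ achieves maximum quadratic value at most $\lambda_{k+1}$; hence $\mu_k \leq \lambda_{k+1}$.

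For the lower bound $\lambda_k \leq \mu_k$, I would use the other form: $\mu_k = \min_{V' \subseteq W, \dim V' = k} \max_{v \in V'} v^{*} B v$. Any $k$-dimensional subspace of $W$ is also a $k$-dimensional subspace of $\C^d$ on which $A$ and $B$ agree, so the minimum is taken over a more restricted family than in $\lambda_k$'s characterization, giving $\mu_k \geq \lambda_k$. Combining the two bounds across $k = 1,\ldots,d-1$ yields the interlacing chain. There is no real obstacle here — the only care needed is to keep straight the dimension counting in the intersection argument $\dim(V \cap W) \geq \dim V + \dim W - d$, which is what makes the min-max approach succeed.
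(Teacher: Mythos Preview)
Your proof via the Courant--Fischer min--max principle is correct and is the standard textbook argument for Cauchy's interlacing theorem. Note, however, that the paper does not prove this statement at all: it is stated as a cited classical result (attributed to~\cite{Fis05}) and used as a black box in the low-rank Bures tomography analysis, so there is no ``paper's own proof'' to compare against.
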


Roughly speaking, we will define $R \subseteq [d]$ to be the subset of indices corresponding to the large eigenvalues of $\rho$, and thus the expression
\begin{equation*}
    \dchi\big(\widetilde{\brho}[R]  \, \| \, \rho[R]\big) = \sum_{i, j \in R} \frac{2}{\alpha_i + \alpha_j}\cdot |\widetilde{\brho}_{ij} - \rho_{ij}|^2
\end{equation*}
will capture the contribution of the large eigenvalues to the Bures $\chi^2$-divergence. Note that this expression is well-defined as long as $\alpha_i > 0$ for each $i \in R$.

In addition, we use $\sigma^{\mathrm{norm}}[R]$ to denote the normalized version of $\sigma[R]$.
\begin{notation}
     Let $\sigma \in \C^{d \times d}$ be a quantum state, and $R \subseteq [d]$. Then $\sigma^{\mathrm{norm}}[R] \in \C^{|R| \times |R|}$ is the $|R|$-dimensional quantum state given by
     \begin{equation*}
        \sigma^{\mathrm{norm}}[R] \coloneq \frac{\sigma[R]}{\tr(\sigma[R])}.
    \end{equation*}
\end{notation}

\begin{notation}
     For a mixed state $\sigma$ and projector $\Pi$, we use
    \begin{equation*}
        \sigma|_\Pi \coloneq \frac{\Pi \cdot \sigma \cdot \Pi}{\tr(\Pi \cdot \sigma)}
    \end{equation*}
    to denote the post-measurement state if $\sigma$ is measured with the projective measurement $\{\Pi, I-\Pi\}$ and the outcome $\Pi$ is obtained.
\end{notation}

Observe that if $\Pi$ is the projector onto the space defined by the subset $R$, then the matrices $\sigma^{\mathrm{norm}}[R]$ and $\sigma|_{\Pi}$ correspond to the same quantum state, with the difference that the latter matrix corresponds to the state that lives in the larger ambient space $\C^{d\times d}$.

\begin{proof}[Proof of~\Cref{thm:bures-dist-tomography-main}]
    To make sure that our constants work out, we will always set $n \geq 400\cdot rd/\epsilon^2$. Recall that we can assume without loss of generality that $\rho$ is diagonal with eigenvalues $\alpha_1, \dots, \alpha_r, 0, \dots, 0$. We define $R$ to be the subset of indices $i$ such that $\alpha_i \geq \frac{r}{n}$, and $L \coloneq [d] \setminus R$ to be the remaining indices that correspond to eigenvalues less than $\frac{r}{n}$. Given these two subsets, we define the projector $\Pi$ to project onto the eigenvectors of $\rho$ that lie in $R$. That is,
    \begin{equation*}
        \Pi = \sum_{i \in R} \ketbra{i}.
    \end{equation*}
    Since $L$ is the complement of $R$, the matrix $I-\Pi$ projects onto the eigenvalues that correspond to the subset~$L$.

    We will prove the theorem by using the triangle inequality to write
    \begin{equation}
        \label{eq:bures-triangle-ineq}
        \DBur(\widetilde{\brho}, \rho)
        \leq \DBur(\widetilde{\brho}, \widetilde{\brho}|_{\Pi})
        + \DBur(\widetilde{\brho}|_{\Pi}, \rho|_{\Pi}) + \DBur(\rho|_{\Pi}, \rho),
    \end{equation}
    and bounding each of the three terms by $O(\epsilon)$ with high probability. We bound the first and third terms using the Gentle Measurement Lemma (\Cref{lem:gentle-measurement}) on $\rho$ and $\widetilde{\brho}$. Recall that the Gentle Measurement Lemma relates the Bures distance between a state $\sigma$ and the post-measurement state $\sigma|_{\Pi}$ with the following trace:
    \begin{equation*}
        \big(1 - \frac{1}{2}\DBur(\sigma, \sigma|_\Pi)^2\big)^2
        = \tr(\Pi \cdot \sigma).
    \end{equation*}

    For the third term, we note that $L$ is the subset of the eigenvalues of $\rho$ smaller than $\frac{r}{n}$, and since there are at most $r$ nonzero eigenvalues in total, we have
    \begin{equation*}
        \tr(\Pi \cdot \rho) = 1 - \sum_{i \not\in R} \alpha_i \geq 1 - \frac{r^2}{n} \geq 1 - \frac{1}{400}\epsilon^2.
    \end{equation*}
    For the first term, we bound $\tr(\Pi \cdot \widetilde{\brho})$ by computing the expectation of the diagonal entries of $\widetilde{\brho}$:
    \begin{equation*}
        \E[\widetilde{\brho}_{ii}]
        = \frac{n}{n+rd}\left(\E [\widehat{\brho}_{ii}] + \frac{r}{n}\right)
        = \frac{n}{n+rd}\left(\alpha_i + \frac{r}{n}\right).
    \end{equation*}
    This means that
    \begin{equation*}
        \sum_{i \notin R}\E[\widetilde{\brho}_{i i}] = \sum_{i \notin R}\frac{n}{n+rd}\Big(\alpha_i + \frac{r}{n}\Big)
        \leq \sum_{i \notin R}\frac{n}{n+rd}\cdot \frac{2r}{n} 
        \leq \frac{n}{n+rd}\cdot \frac{2rd}{n}
        = \frac{2rd}{n+rd}
        \leq \frac{2rd}{n} \leq \frac{1}{200}\cdot \epsilon^2.
    \end{equation*}
    Since $\widetilde{\brho}$ is positive semidefinite, the random variable $\tr((I - \Pi)\cdot \widetilde{\brho}) = \sum_{i \notin R}\widetilde{\brho}_{ii}$ is nonnegative, and Markov's inequality implies that $\tr((I - \Pi)\cdot \widetilde{\brho}) \leq \frac{1}{2} \cdot \epsilon^2$ with probability at least $0.99$. This implies that $\tr(\Pi \cdot \widetilde{\brho}) \geq 1 - \frac{1}{2} \cdot \epsilon^2$ with high probability. Therefore, for the rest of this proof, we condition on the event that these inequalities hold with high probability. A straightforward application of the Gentle Measurement Lemma implies that
    \begin{equation}
        \label{eq:bures-3rd-term}
        \Big(1 - \frac{1}{2}\DBur^2(\rho|_{\Pi}, \rho)\Big)^2 = 1 - O(\epsilon^2) \implies \DBur^2(\rho|_{\Pi}, \rho) \leq O(\epsilon^2),
    \end{equation}
    and
    \begin{equation}
        \label{eq:bures-1st-term}
        \Big(1 - \frac{1}{2}\DBur^2(\widetilde{\brho}, \widetilde{\brho}|_{\Pi})\Big)^2 = 1 - O(\epsilon^2) \implies \DBur^2(\widetilde{\brho}, \widetilde{\brho}|_{\Pi}) \leq O(\epsilon^2).
    \end{equation}
    
    Our argument for bounding the second term of~\Cref{eq:bures-triangle-ineq} will contain several steps. First, we use the second moment expression of the debiased Keyl's estimator to bound the Bures $\chi^2$-divergence between the unnormalized states $\widetilde{\brho}[R]$ and $\rho[R]$. We then use this to deduce a bound on the Bures $\chi^2$-divergence between the normalized states $\widetilde{\brho}^{\mathrm{norm}}[R]$ and $\rho^{\mathrm{norm}}[R]$, which also implies a bound on their Bures distance, by~\Cref{lem:bures-leq-chi}. Finally, we observe that the same bound holds for the Bures distance between $\widetilde{\brho}|_{\Pi}$ and $\rho|_{\Pi}$, since $(\widetilde{\brho}^{\mathrm{norm}}[R], \rho^{\mathrm{norm}}[R])$ and $(\widetilde{\brho}|_{\Pi}, \rho|_{\Pi})$ correspond to the same pair of states, with the difference that the first pair lives in the space spanned by $R$, whereas the second pair lives in the entire ambient space $\C^{d \times d}$. 
    
    For the first step, we write
    \begin{equation}
        \label{eq:chi-restricted-1}
        \E \big[\dchi\big(\widehat{\brho}[R] \, \| \, \rho[R]\big)\big]
        = \sum_{i,j \in R} \frac{2}{\alpha_i + \alpha_j} \cdot \E \big[ |\widehat{\brho}_{ij} - \rho_{ij}|^2 \big]
        = \sum_{i,j\in R} \frac{2}{\alpha_i + \alpha_j} \cdot \left(\E \big[ |\widehat{\brho}_{ij}|^2\big]  - |\rho_{ij}|^2 \right).
    \end{equation}
    Let us consider the term from~\Cref{eq:chi-restricted-1} that corresponds to a fixed pair $(i,j)$:
    \begin{align*}
        \frac{2}{\alpha_i + \alpha_j} \cdot \left(\E \big[ | \widehat{\brho}_{ij} |^2 \big] - \left|\rho_{ij}\right|^2\right)
        &\leq \frac{2}{\alpha_i + \alpha_j}\cdot \left(\frac{ \alpha_i + \alpha_j}{n} + \frac{r}{n^2} \right) \tag{\Cref{lem:expectation-of-products-of-entries}} \\
        &\leq \frac{2}{n} + \frac{1}{n}
        = \frac{3}{n} \tag{$i,j\in R \implies \alpha_i + \alpha_j \geq \frac{2r}{n}$}.
    \end{align*}
    \Cref{eq:chi-restricted-1} has a total of $|R|^2 \leq r^2$ terms, each of which is at most $\frac{3}{n}$ in expectation. We deduce that
    \begin{equation}
        \label{eq:bures-exp-of-chi-is-small-unnormalized}
        \E \big[\dchi\big(\widehat{\brho}[R] \, \| \, \rho[R]\big)\big]
        = \sum_{i,j\in R} \frac{2}{\alpha_i + \alpha_j} \cdot \left(\E \big[ | \widehat{\brho}_{ij} |^2 \big] - \left|\rho_{ij}\right|^2\right) \leq \frac{3r^2}{n} \leq O(\epsilon^2).
    \end{equation}

    We now turn our attention to bounding $\E \big[\dchi\big(\widetilde{\brho}[R] \, \| \, \rho[R]\big)\big]$. From our definition of $\widetilde{\brho}$, it holds that
    \begin{equation*}
        \widetilde{\brho}_{ij} = \begin{cases}
            \frac{n}{n+rd}(\widehat{\brho}_{ij} + \frac{r}{n}) & i = j, \\
            \frac{n}{n+rd}\cdot \widehat{\brho}_{ij} & i \neq j.
        \end{cases}
    \end{equation*}
    Then we write
    \begin{align}
        \E \big[\dchi\big(\widetilde{\brho}[R] \, \| \, \rho[R]\big)\big]
        &= \sum_{i,j\in R} \frac{2}{\alpha_i + \alpha_j} \cdot \E\big[ | \widetilde{\brho}_{ij} - \rho_{ij}|^2 \big] \nonumber \\
        &=\sum_{\substack{i, j \in R\\ i \neq j}} \frac{2}{\alpha_i + \alpha_j}\cdot \E\big[ | \widetilde{\brho}_{ij}|^2 \big] + \sum_{i \in R} \frac{1}{\alpha_i} \cdot \E \big[ |\widetilde{\brho}_{ii} - \alpha_i|^2\big] \nonumber \\
        &=\sum_{\substack{i,j \in R\\ i \neq j}} \frac{2}{\alpha_i + \alpha_j} \E\Big[\Big|\frac{n}{n+rd}\cdot \widehat{\brho}_{ij}\Big|^2\Big] + \sum_{i \in R} \frac{1}{\alpha_i} \E\Big[\Big|\frac{n}{n+rd}\left(\widehat{\brho}_{ii} + \frac{r}{n}\right) - \alpha_i\Big|^2\Big]. \label{eq:chi2-dummy-3}
    \end{align}
    The first term can be bounded using $\E\big[\big|\frac{n}{n+rd}\cdot \widehat{\brho}_{ij}\big|^2\big] \leq \E[|\widehat{\brho}_{ij}|^2]$.
    We bound the second term as follows
    \begin{align*}
        \E\Big[\Big|\frac{n}{n+rd}\left(\widehat{\brho}_{ii} + \frac{r}{n}\right) - \alpha_i\Big|^2\Big]
        &= \E\Big[\Big|\frac{n}{n+rd}\cdot \widehat{\brho}_{ii} + \frac{r}{n+rd} - \alpha_i\Big|^2\Big] \\
        &= \E\Big[\Big|\widehat{\brho}_{ii} - \alpha_i + \frac{r}{n+rd} - \frac{rd}{n+rd}\cdot \widehat{\brho}_{ii}\Big|^2\Big] \\
        &\leq 3\E\big[\big|\widehat{\brho}_{ii} - \alpha_i\big|^2\big] + 3\left(\frac{r}{n+rd}\right)^2 + 3\E\Big[\Big|\frac{rd}{n+rd}\cdot \widehat{\brho}_{ii}\Big|^2\Big] \tag{Cauchy-Schwarz} \\
        &\leq 3\E\big[\big|\widehat{\brho}_{ii} - \alpha_i\big|^2\big] + 3\left(\frac{r}{n}\right)^2 + 3\left(\frac{rd}{n}\right)^2\cdot \E[|\widehat{\brho}_{ii}|^2].
    \end{align*}
    We plug these bounds back into~\Cref{eq:chi2-dummy-3} to obtain
    \begin{align*}
        &\E \big[\dchi\big(\widetilde{\brho}[R] \, \| \, \rho[R]\big)\big] \\
        \leq{}&\sum_{\substack{i,j \in R \\ i \neq j}} \frac{2}{\alpha_i + \alpha_j}\cdot \E\big[ |\widehat{\brho}_{ij}|^2\big] + 3\sum_{i \in R} \frac{1}{\alpha_i} \cdot \E\big[ |\widehat{\brho}_{ii} - \alpha_i|^2 \big] + 3\left(\frac{r}{n}\right)^2 \sum_{i\in R} \frac{1}{\alpha_i} + 3\left(\frac{rd}{n}\right)^2\sum_{i\in R} \frac{1}{\alpha_i}\cdot \E \big[ |\widehat{\brho}_{ii}|^2\big] \\
        \leq{}& 3 \E \big[\dchi\big(\widehat{\brho}[R] \, \| \, \rho[R]\big)\big] + \frac{3r^2}{n} + 3\left(\frac{rd}{n}\right)^2\sum_{i\in R} \frac{1}{\alpha_i}\cdot \E \big[ |\widehat{\brho}_{ii}|^2 \big].
    \end{align*}
    Here, the second inequality follows because all $\alpha_i$ in $R$ are at least $\frac{r}{n}$ and there are at most $r$ of them, so that $\sum_{i \in R} \frac{1}{\alpha_i} \leq n$. Now we bound the last term as follows:
    \begin{align*}
        \sum_{i \in R} \frac{1}{\alpha_i}\cdot \E\big[ |\widehat{\brho}_{ii}|^2 \big]
        &= \sum_{i\in R} \frac{1}{\alpha_i}\cdot \E\big[ |\widehat{\brho}_{ii} - \alpha_i + \alpha_i|^2 \big] \\
        &\leq 2\sum_{i\in R} \frac{1}{\alpha_i}\cdot \E \big[ |\widehat{\brho}_{ii} - \alpha_i|^2 \big] + 2\sum_{i\in R} \frac{1}{\alpha_i} \cdot\alpha_i^2 \tag{Cauchy-Schwarz}\\
        &\leq 2\E \big[\dchi\big(\widehat{\brho}[R] \, \| \, \rho[R]\big)\big] + 2\sum_{i\in R} \alpha_i \\
        &\leq 2\E \big[\dchi\big(\widehat{\brho}[R] \, \| \, \rho[R]\big)\big] + 2.
    \end{align*}
    Hence
    \begin{align*}
        \E \big[\dchi\big(\widetilde{\brho}[R] \, \| \, \rho[R]\big)\big]
        &\leq 3\E \big[\dchi\big(\widehat{\brho}[R] \, \| \, \rho[R]\big)\big] + \frac{3r^2}{n} + 3\left(\frac{rd}{n}\right)^2 \big(2\E \big[\dchi\big(\widehat{\brho}[R] \, \| \, \rho[R]\big)\big] + 2\big) \\
        &\leq O(1)\cdot \E \big[\dchi\big(\widehat{\brho}[R] \, \| \, \rho[R]\big)\big] + \frac{3r^2}{n} + \frac{6r^2d^2}{n^2} \leq O(\epsilon^2). \tag{\Cref{eq:bures-exp-of-chi-is-small-unnormalized}}
    \end{align*}
    Therefore, Markov's inequality implies that $\dchi\big(\widetilde{\brho}[R] \, \| \, \rho[R]\big) \leq O(\epsilon^2)$ with high probability. For the rest of this proof, we condition on the event that the above inequality holds.
    
    Now, we would like to translate our restricted Bures $\chi^2$-divergence bound on the (potentially) unnormalized states $\widetilde{\brho}[R]$ and $\rho[R]$ to the restricted Bures $\chi^2$-divergence between their normalized versions $\widetilde{\brho}^{\mathrm{norm}}[R]$ and $\rho^{\mathrm{norm}}[R]$. Recall that $\widetilde{\brho}^{\mathrm{norm}}[R], \rho^{\mathrm{norm}}[R] \in \C^{|R| \times |R|}$. In particular, for $i, j \in R$, we will need to rescale the entries of the two states as follows
    \begin{equation*}
        (\widetilde{\brho}^{\mathrm{norm}}[R])_{ij} = \frac{1}{\tr(\widetilde{\brho}[R])} \cdot \widetilde{\brho}_{ij} = (1 + c_1)\cdot \widetilde{\brho}_{ij},
        \quad (\rho^{\mathrm{norm}}[R])_{ij} = \frac{1}{\tr(\rho[R])} \cdot \rho_{ij} = (1 + c_2)\cdot \rho_{ij},
    \end{equation*}
    where $\tr(\widetilde{\brho}[R]) = \tr(\Pi\cdot \widetilde{\brho}) \geq 1 - \frac{1}{2}\cdot \epsilon^2$ implies that $c_1 = O(\epsilon^2)$, and similarly $c_2 = O(\epsilon^2)$. Therefore,
    \begin{equation}
        \label{eq:bures-low-rank-eq-3}
        \dchi(\widetilde{\brho}^{\mathrm{norm}}[R] \, \| \, \rho^{\mathrm{norm}}[R])
        = \sum_{\substack{i,j\in R \\ i \neq j}} \frac{2}{\alpha_i + \alpha_j} \cdot \left|(\widetilde{\brho}^{\mathrm{norm}}[R])_{ij}\right|^2
        + \sum_{i \in R} \frac{1}{\alpha_i} \cdot \left|(\widetilde{\brho}^{\mathrm{norm}}[R])_{ii} - (\rho^{\mathrm{norm}}[R])_{ii}\right|^2.
    \end{equation}
    We bound the first term using $\left|(\widetilde{\brho}^{\mathrm{norm}}[R])_{ij}\right|^2 \leq (1 + c_1)^2\cdot |\widetilde{\brho}_{ij}|^2 \leq O(1)\cdot |\widetilde{\brho}_{ij}|^2$, and the second term using
    \begin{align*}
        &\left|(\widetilde{\brho}^{\mathrm{norm}}[R])_{ii} - (\rho^{\mathrm{norm}}[R])_{ii}\right|^2 \\
        \leq{}& 3\left|(\widetilde{\brho}^{\mathrm{norm}}[R])_{ii} - \widetilde{\brho}_{ii}\right|^2
        +3\left|\widetilde{\brho}_{ii} - \rho_{ii}\right|^2
        +3\left| \rho_{ii} - (\rho^{\mathrm{norm}}[R])_{ii}\right|^2 \tag{Cauchy-Schwarz} \\
        ={}& 3c_1^2\left|\widetilde{\brho}_{ii}\right|^2
        +3\left|\widetilde{\brho}_{ii} - \rho_{ii}\right|^2
        +3c_2^2\left|\rho_{ii}\right|^2 \\
        \leq{}& 6c_1^2\left|\widetilde{\brho}_{ii} - \rho_{ii}\right|^2 + 6c_1^2\left|\rho_{ii}\right|^2
        +3\left|\widetilde{\brho}_{ii} - \rho_{ii}\right|^2
        +3c_2^2\left|\rho_{ii}\right|^2 \tag{Cauchy-Schwarz} \\
        ={}& O(1)\cdot \left|\widetilde{\brho}_{ii} - \rho_{ii}\right|^2 + O(\epsilon^4)\cdot \left|\rho_{ii}\right|^2 \\
        ={}& O(1)\cdot \left|\widetilde{\brho}_{ii} - \rho_{ii}\right|^2 + O(\epsilon^4)\cdot \alpha_i^2.
    \end{align*}
    Thus $\dchi(\widetilde{\brho}^{\mathrm{norm}}[R] \, \| \, \rho^{\mathrm{norm}}[R])$ is at most:
    \begin{align*}
        \dchi(\widetilde{\brho}^{\mathrm{norm}}[R] \, \| \, \rho^{\mathrm{norm}}[R])
        \leq{}& O(1)\cdot \sum_{\substack{i,j \in R \\ i \neq j}} \frac{2}{\alpha_i + \alpha_j} \cdot \left|\widetilde{\brho}_{ij}\right|^2
        + \sum_{i \in R} \frac{1}{\alpha_i} \cdot \big(O(1)\cdot \left|\widetilde{\brho}_{ii} - \rho_{ii}\right|^2 + O(\epsilon^4)\cdot\alpha_i^2 \big)\\
        ={}& O(1)\cdot \sum_{i,j \in R} \frac{2}{\alpha_i + \alpha_j} \cdot \left|\widetilde{\brho}_{ij} - \rho_{ij}\right|^2
        + O(\epsilon^4) \cdot \sum_{i \in R} \frac{1}{\alpha_i} \cdot \alpha_i^2\\
        ={}& O(1)\cdot \dchi(\widetilde{\brho}[R] \, \| \, \rho[R])
        + O(\epsilon^4) \cdot \sum_{i \in R} \alpha_i\\
        \leq{}& O(1)\cdot \dchi(\widetilde{\brho}[R] \, \| \, \rho[R])
        + O(\epsilon^4) \\
        \leq{}& O(\epsilon^2).
    \end{align*}
    The inequality between the Bures distance and the Bures $\chi^2$-divergence implies that
    \begin{equation*}
        \DBur(\widetilde{\brho}^{\mathrm{norm}}[R], \rho^{\mathrm{norm}}[R]) \leq \sqrt{\dchi(\widetilde{\brho}^{\mathrm{norm}}[R] \, \| \, \rho^{\mathrm{norm}}[R])} \leq O\left(\epsilon\right).
    \end{equation*}
    Moreover, we remark that since $\Pi$ is the projector onto the eigenvectors that lie in $R$, $\widetilde{\brho}|_{\Pi}$ and $\widetilde{\brho}^{\mathrm{norm}}[R]$ correspond to the same state, embedded in two ambient spaces of dimension $d$ and $|R|$ respectively. The same holds for $\rho|_{\Pi}$ and $\rho^{\mathrm{norm}}[R]$. Then
    \begin{equation*}
        \fidelity(\widetilde{\brho}|_{\Pi}, \rho|_{\Pi}) = \fidelity(\widetilde{\brho}^{\mathrm{norm}}[R], \rho^{\mathrm{norm}}[R]) \implies \DBur(\widetilde{\brho}|_{\Pi}, \rho|_{\Pi}) = \DBur(\widetilde{\brho}^{\mathrm{norm}}[R], \rho^{\mathrm{norm}}[R]) \leq O(\epsilon).
    \end{equation*}
    Recall that our goal is to use the triangle inequality
    $\DBur(\widetilde{\brho}, \rho) \leq \DBur(\widetilde{\brho}, \widetilde{\brho}|_{\Pi}) + \DBur(\widetilde{\brho}|_{\Pi}, \rho|_{\Pi}) + \DBur(\rho|_{\Pi}, \rho)$. We have showed that each of the terms on the right-hand side is $O(\epsilon)$ with high probability. Therefore, we conclude that with high probability,
    \begin{equation*}
        \DBur(\widetilde{\brho}, \rho) \leq O(\epsilon).
    \end{equation*}
    The theorem statement follows by reducing $\epsilon$ by a small constant factor. 
\end{proof}

\newcommand{\Fid}{\mathrm{F}}
\newcommand{\support}{\mathrm{supp}}
\newcommand{\bvarphi}{\boldsymbol{\varphi}}

\subsection{Learning bipartite pure states}
\label{sec:learning-bipartite}

As an application of our tomography algorithm for Bures distance, we consider the problem of learning bipartite pure states. 

Suppose $\ket{\psi}_{AB}$ is a bipartite state in a Hilbert space $\mathcal{H}_A \otimes \mathcal{H}_B$ with $\dim(\mathcal{H}_A) = d_A$ and $\dim(\mathcal{H}_B) = d_B$, and suppose $\ket{\psi}_{AB}$ has Schmidt rank $r$, i.e.\ there exist vectors $\{ \ket{u_i}_A \}_{i \in [r]}$ in $\mathcal{H}_A$ and $\{ \ket{v_i}_B \}_{i \in [r]}$ in $\mathcal{H}_B$, and nonnegative coefficients $\{\lambda_i\}_{i \in [r]}$ such that
\begin{equation*}
    \ket{\psi}_{AB} = \sum_{i=1}^r \lambda_i \cdot \ket{u_i}_A \otimes \ket{v_i}_B. 
\end{equation*}
What is the sample complexity of learning $\ket{\psi}_{AB}$ to $\epsilon$ error in Bures distance? One strategy is to use the rank-1 special case of the Bures learning algorithm directly, \cref{thm:bures-dist-tomography-main}, which uses $O( d_A d_B/\epsilon^2)$ samples. In this subsection, we give an algorithm with better sample complexity when $r \ll d_A, d_B$. This algorithm was inspired by the reduction from rank-$r$ state tomography to pure state tomography from \cite{Yue23}.

\begin{definition}[Algorithm for learning bipartite pure states]\label{bipartite_alg}
On input $\ket{\psi}_{AB}^{\otimes n}$, with $n = O(r(d_A+d_B)/\epsilon^2)$:
    \begin{enumerate}
        \item Trace out the $B$ register on $O(rd_A/\epsilon^2)$ copies of $\ket{\psi}_{AB}$, obtaining copies of the rank-$r$ mixed state $\psi_{A}$.
        \item Perform rank-$r$ tomography on the copies of $\psi_A$, using \cref{thm:bures-dist-tomography-main}, producing an estimate $\widehat{\bpsi}_{A}$ such that $\DBur(\widehat{\bpsi}_{A}, \psi_A) \leq \epsilon$ with high probability. Write $\bPi_A$ for the projector onto the support of $\widehat{\bpsi}_{A}$. By \Cref{rem:low-rank-tomography}, we can assume $\widehat{\bpsi}_A$ is a mixed state of rank at most $r$.
        \item Measure $O(rd_B/\epsilon^2)$ copies of $\ket{\psi}_{AB}$ with $\{ \bPi_A \otimes I_B, \overline{\bPi}_A \otimes I_B \}$. Write $$\ket{\bpi}_{AB} = \frac{\bPi_A \otimes I_B \ket{\psi}_{AB}}{ \norm{(\bPi_A \otimes I_B) \ket{\psi}_{AB}}}$$ for the pure state collapsed to upon obtaining the measurement outcome $\bPi_A \otimes I_B$. 
        \item Since $\ket{\bpi}_{AB}$ is a pure state in $\support(\bPi_A) \otimes \mathcal{H}_B$, which has dimension at most $r d_B$, use the copies of $\ket{\bpi}_{AB}$ as input to the rank-1 special case of \cref{thm:bures-dist-tomography-main}. We obtain a state $\widehat{\bpi}_{AB}$ such that $\DBur( \widehat{\bpi}_{AB}, {\bpi}_{AB} ) \leq \epsilon$ with high probability. 
        \item Output $\widehat{\bpi}_{AB}$. 
    \end{enumerate}
\end{definition}

\begin{remark} \label{rem:bipartite_output_pure}
The algorithm in \Cref{bipartite_alg} does not necessarily output a pure state. However, by \Cref{rem:low-rank-tomography}, there exists an algorithm that learns a pure state $\ket*{\widehat{\bpsi}}_{AB}$ approximating $\ket{\psi}_{AB}$, using the same number of samples. If the base algorithm learns to Bures distance $\epsilon$, then the modified algorithm learns to Bures distance $2\epsilon$. 
\end{remark}

\begin{proposition}
    The algorithm given in \cref{bipartite_alg} learns $\ket{\psi}_{AB}$ to Bures distance $O(\epsilon)$ using $n = O(r (d_A + d_B)/\epsilon^2)$ copies. As a consequence, if we modify the algorithm to output pure states as in \Cref{rem:bipartite_output_pure}, then it outputs a pure state $\ket*{\widehat{\bpsi}}_{AB}$ with overlap $| \braket*{\psi}{\widehat{\bpsi}}_{AB}| \geq 1 -  O(\epsilon^2)$. 
\end{proposition}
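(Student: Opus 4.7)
The plan is to chain together three approximations. First, by the Schmidt decomposition, the reduced state $\psi_A = \tr_B(\ketbra{\psi}_{AB})$ is a density matrix of rank exactly $r$, so applying the rank-$r$ Bures tomography algorithm of \Cref{thm:bures-dist-tomography-main} to $O(rd_A/\epsilon^2)$ copies of $\psi_A$ produces $\widehat{\bpsi}_A$ with $\DBur(\widehat{\bpsi}_A, \psi_A) \leq \epsilon$ with high probability; by \Cref{rem:low-rank-tomography} we may further assume $\widehat{\bpsi}_A$ itself has rank at most $r$, so its support projector $\bPi_A$ satisfies $\mathrm{rank}(\bPi_A) \leq r$.

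The key lemma, and the step I expect to be the main obstacle, is that this Bures-closeness forces $\psi_A$ to have most of its mass on the range of $\bPi_A$. Since $\widehat{\bpsi}_A$ is supported on the range of $\bPi_A$, we have $\sqrt{\widehat{\bpsi}_A}\,\psi_A\,\sqrt{\widehat{\bpsi}_A} = \sqrt{\widehat{\bpsi}_A}\,(\bPi_A \psi_A \bPi_A)\,\sqrt{\widehat{\bpsi}_A}$, and writing $\bPi_A \psi_A \bPi_A = \tr(\bPi_A \psi_A) \cdot \psi_A|_{\bPi_A}$ and pulling the scalar out of the fidelity yields
\begin{equation*}
    \Fid(\widehat{\bpsi}_A, \psi_A) = \Fid(\widehat{\bpsi}_A, \bPi_A \psi_A \bPi_A) = \sqrt{\tr(\bPi_A \psi_A)} \cdot \Fid(\widehat{\bpsi}_A, \psi_A|_{\bPi_A}).
\end{equation*}
Bounding $\Fid(\widehat{\bpsi}_A, \psi_A|_{\bPi_A}) \leq 1$ and using the Fuchs--van de Graaf inequality $\Fid(\widehat{\bpsi}_A, \psi_A) \geq 1 - \DBur(\widehat{\bpsi}_A,\psi_A)^2/2 \geq 1 - \epsilon^2/2$ gives $\tr(\bPi_A \psi_A) \geq (1-\epsilon^2/2)^2 \geq 1 - \epsilon^2$. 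Without the ``scalar-extraction'' identity $\Fid(\sigma, c\rho) = \sqrt{c}\,\Fid(\sigma,\rho)$ one would be tempted to prove this bound by a triangle-inequality-style argument that would lose factors of $r$ and destroy the final sample complexity.

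The remaining steps are quick. Each measurement in step~3 of \Cref{bipartite_alg} succeeds with probability $\tr(\bPi_A \psi_A) \geq 1-\epsilon^2$, so by a Chernoff bound $\Theta(rd_B/\epsilon^2)$ of the $O(rd_B/\epsilon^2)$ attempts collapse to the pure state $\ket{\bpi}_{AB}$ with high probability. By the Gentle Measurement Lemma (\Cref{lem:gentle-measurement}), $\Fid(\ket{\psi}_{AB}, \ket{\bpi}_{AB})^2 = \tr((\bPi_A \otimes I_B) \ketbra{\psi}_{AB}) = \tr(\bPi_A \psi_A) \geq 1 - \epsilon^2$, hence $\DBur(\ket{\psi}_{AB}, \ket{\bpi}_{AB}) \leq \sqrt{2}\,\epsilon$. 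Because $\ket{\bpi}_{AB}$ is a pure state living in the subspace $\support(\bPi_A) \otimes \calH_B$ of dimension at most $r d_B$, applying the rank-1 case of \Cref{thm:bures-dist-tomography-main} to $\Theta(rd_B/\epsilon^2)$ copies of it produces $\widehat{\bpi}_{AB}$ with $\DBur(\widehat{\bpi}_{AB}, \ket{\bpi}_{AB}) \leq \epsilon$ with high probability. A triangle inequality then yields $\DBur(\widehat{\bpi}_{AB}, \ket{\psi}_{AB}) = O(\epsilon)$, and a union bound over the constantly many high-probability events completes the proof with a total of $O(rd_A/\epsilon^2) + O(rd_B/\epsilon^2) = O(r(d_A+d_B)/\epsilon^2)$ copies. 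The consequence for the pure-state variant (\Cref{rem:bipartite_output_pure}) is immediate, since for pure states $|\braket{\psi}{\widehat{\bpsi}}|^2 = \Fid^2 \geq 1 - \DBur^2 = 1 - O(\epsilon^2)$.
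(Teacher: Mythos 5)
Your proposal is correct and follows essentially the same route as the paper: tomography on the reduced state, the scalar-extraction fidelity identity $\Fid(\sigma, c\rho) = \sqrt{c}\,\Fid(\sigma,\rho)$ (the paper packages this as the bound $\Fid(\widehat{\bpsi}_A, \psi_A) \leq \sqrt{\tr(\widehat{\bpsi}_A)\cdot\tr(\bPi_A\psi_A)}$ from Watrous, but it is the same computation) to get $\tr(\bPi_A\psi_A) \geq 1 - O(\epsilon^2)$, followed by pure-state tomography on the collapsed state and a triangle inequality. The one cosmetic difference is that you invoke the Gentle Measurement Lemma to bound $\DBur(\ket{\psi}_{AB},\ket{\bpi}_{AB})$, whereas the paper computes the overlap $|\braket{\bpi}{\psi}_{AB}| = \norm{(\bPi_A\otimes I_B)\ket{\psi}_{AB}}$ directly; for pure states these are the same calculation and yield the same $O(\epsilon)$ bound.
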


\begin{proof}
    
    First we prove that if all of the steps succeed, then $\DBur(\widehat{\bpi}_{AB}, {\psi}_{AB}) \leq 2\epsilon$. Since $\DBur(\widehat{\bpsi}_A, \psi_A) \leq \epsilon$, we have $\Fid(\widehat{\bpsi}_A, \psi_A) \geq 1 - \epsilon^2/2$. Thus, by \cite[Proposition 3.12, (4) and (6)]{Wat18},
    \begin{align*}
        1 - \frac{1}{2} \epsilon^2
        \leq F(\widehat{\bpsi}_A, \psi_A)
         = F(\widehat{\bpsi}_A, \bPi_A \psi_A \bPi_A)
         & \leq \sqrt{\tr\big(\widehat{\bpsi}_A\big) \cdot \tr(\bPi_A \psi_A \bPi_A)}\\
        & = \sqrt{\tr(\bPi_A \psi_A )}
        = \sqrt{\tr(( \bPi_A \otimes I_B) \ketbra{\psi}_{AB})} = \norm{ (\bPi_A \otimes I_B) \ket{\psi}_{AB}}. 
    \end{align*}
    Therefore, 
    \begin{equation*}
        \left| \braket{\bpi}{\psi}_{AB} \right| = \frac{\bra{\psi} (\bPi_A \otimes I_B) \ket{\psi}_{AB}}{\norm{ (\bPi_A \otimes I_B) \ket{\psi}_{AB}}} = \norm{ (\bPi_A \otimes I_B) \ket{\psi}_{AB}} \geq  1 - \frac{1}{2} \epsilon^2.
    \end{equation*}
    This implies
    \begin{equation*}
        \DBur({\bpi}_{AB}, {\psi}_{AB} ) = \sqrt{2 (1 - \left| \braket{\bpi}{\psi}_{AB} \right|)} \leq \epsilon. 
    \end{equation*}
    So, by the triangle inequality, we have 
    \begin{equation*}
        \DBur\big(\widehat{\bpi}_{AB}, {\psi}_{AB}\big)  \leq \DBur\big(\widehat{\bpi}_{AB}, {\bpi}_{AB}\big) + \DBur\big({\bpi}_{AB}, {\psi}_{AB} \big)  \leq \epsilon + \epsilon = 2\epsilon. 
    \end{equation*}

    Now we consider the success probabilities. First note that the second and fourth steps succeed with high probability provided they are given the number of samples specified, by \Cref{thm:bures-dist-tomography-main}. The only other step to consider is the third step. In this stage of the algorithm, conditioned on the success of the second step, we obtain outcome $\bPi_A \otimes I_B$ with probability 
    \begin{equation*}\norm{ (\bPi_A \otimes I_B) \ket{\psi}_{AB}}^2 \geq 1 - \frac{1}{2} \epsilon^2.
    \end{equation*}
    We can assume $\epsilon \leq 1$, since otherwise, we can learn to this higher, still constant, accuracy instead. The probability $\norm{ (\bPi_A \otimes I_B) \ket{\psi}_{AB}} \geq 1 - \frac{1}{2} \epsilon^2$ is at least $\frac{1}{2}$. Thus, with high probability, $O(rd_B/\epsilon^2)$ copies of $\ket{\psi}_{AB}$ suffice to produce the necessary $O(rd_B/\epsilon^2)$ samples of $\ket{\bpi}_{AB}$ which are required for the fourth step. Altogether, since each step succeeds with high probability, the overall algorithm does too. 
\end{proof}

\section{Tomography with limited entanglement}
In this section, we use our unbiased estimator to devise a tomography algorithm when we are limited to measurements spanning up to $k$ copies of the state $\rho$.

\begin{theorem}
    \label{thm:tom-with-ltd-entanglement}
    Let $n = n' \cdot k$.
    Let $\widehat{\brho}_1,\ldots, \widehat{\brho}_{n'}$ be the outputs of the debiased Keyl's algorithm when run in $n'$ independent copies of $\rho^{\otimes k}$.
    Set $\widehat{\brho} = (\widehat{\brho}_1 + \cdots + \widehat{\brho}_{n'})/{n'}$.
    Then $\mathrm{D}_{\mathrm{tr}}(\rho, \widehat{\brho})\leq \epsilon$ with high probability when
    \begin{equation*}
        n =O\left(\max \left(\frac{d^3}{\sqrt{k}\epsilon^2}, \frac{d^2}{\epsilon^2} \right) \right).
    \end{equation*}    
\end{theorem}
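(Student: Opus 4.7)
The plan is to exploit the unbiasedness of each block estimator from~\Cref{thm:unbiased} together with i.i.d.\ averaging, essentially lifting the proof of~\Cref{thm:trace-dist-tomography-main} to the limited-entanglement setting. First, by~\Cref{thm:unbiased} each $\widehat{\brho}_i$ is an unbiased estimator of $\rho$, so by linearity $\E[\widehat{\brho}] = \rho$, and i.i.d.\ independence gives $\E\|\widehat{\brho} - \rho\|_2^2 = \tfrac{1}{n'}\E\|\widehat{\brho}_1 - \rho\|_2^2$. Applying~\Cref{lem:var-brho} to a single block (with $n$ there replaced by $k$) and using that any Young diagram $\blambda \vdash k$ satisfies $\ell(\blambda) \leq \min(k,d)$ yields $\E\|\widehat{\brho}_1 - \rho\|_2^2 = O(d^2/k + d^3/k^2)$.

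For the ``easy'' regime $k \geq d^2$, I would then combine this variance bound with Jensen's inequality and $\|M\|_1 \leq \sqrt{d}\,\|M\|_2$ as in the proof of~\Cref{thm:trace-dist-tomography-main} to obtain $\E\,\mathrm{D}_{\mathrm{tr}}(\rho, \widehat{\brho}) \leq \tfrac{1}{2}\sqrt{d \cdot \E\|\widehat{\brho}-\rho\|_2^2} = O(\sqrt{d^2/n})$, which is $O(\epsilon)$ as soon as $n = \Omega(d^2/\epsilon^2)$; this recovers the second term in the claimed maximum.

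For the intermediate regime $1 \leq k < d^2$, the Frobenius-to-trace conversion alone is not tight and the plan is to supplement it with matrix concentration. Applying matrix Bernstein to the centered i.i.d.\ sum $\widehat{\brho} - \rho = \tfrac{1}{n'}\sum_i (\widehat{\brho}_i - \rho)$ requires an almost-sure operator-norm bound $\|\widehat{\brho}_i - \rho\|_\infty = O(1 + d/k)$, which is immediate from the fact that the eigenvalues of $\bU_i(\blambda_i^{\uparrow}/k)\bU_i^\dagger$ lie in $[-d/k,\, 1 + d/k]$, together with the operator-norm matrix variance bound $\|\E(\widehat{\brho}_i - \rho)^2\|_\infty = O(d/k)$, which I would derive by taking a partial trace of the second-moment formula of~\Cref{thm:var} against $\swap$ and discarding the nonpositive contribution of $\mathrm{Lower}_{\rho}$ in exactly the manner of~\Cref{lem:var-brho}. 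Combining the resulting operator-norm concentration with the Frobenius estimate from the first step---via an interpolation obtained by splitting the spectrum of $\widehat{\brho}-\rho$ at a well-chosen threshold and controlling the large eigenvalues through the operator-norm bound and the small ones through the Frobenius bound---then yields the claimed trace-distance bound.

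The hard part will be this last interpolation step: naive conversions (Frobenius-to-trace by itself, or $\|\cdot\|_1 \leq d\|\cdot\|_\infty$ combined with matrix Bernstein in isolation) each overshoot the claimed bound by up to a factor of $d/\sqrt{k}$ in the intermediate regime, and recovering the correct $\sqrt{k}$ scaling requires a careful quantitative balance of the two norm estimates. This is essentially the strategy carried out by~\cite{CLL24a} for the biased Keyl's estimator, where the non-vanishing bias forced the introduction of extra concentration machinery and a $\widetilde{O}(1)$ overhead. Since our estimator is genuinely unbiased, the centered differences $\widehat{\brho}_i - \rho$ admit a clean i.i.d.\ matrix concentration argument with no logarithmic slack, which is precisely what gives the theorem its tilde-free form.
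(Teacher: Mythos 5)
There is a genuine gap, and it is a conceptual one: you have missed the key lemma that makes the entire argument simple. The paper uses \Cref{lem:bounds-on-row-number}, which states that for $\blambda$ drawn from weak Schur sampling on $k$ copies, $\E[\ell(\blambda)] \leq 2\sqrt{k}$. This follows from Plancherel/RSK statistics (the height of a Plancherel-random Young diagram with $k$ boxes scales as $2\sqrt{k}$, by reduction from an arbitrary spectrum to the uniform one via the coupling argument of \cite[Theorem 1.5.4]{Wri16}), and it is precisely what produces the $1/\sqrt{k}$ improvement. Your substitution $\ell(\blambda) \leq \min(k,d)$ is much too weak in the interesting regime $k < d^2$, and recognizing this you try to salvage things with matrix concentration. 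But plugging $\E[\ell(\blambda)] \leq 2\sqrt{k}$ into \Cref{lem:var-brho} directly gives $\Var[\widehat{\brho}_1] \leq 2d/k + 2d^2/k^{3/2}$, hence
\begin{equation*}
\E\|\widehat{\brho}-\rho\|_2^2 \leq \frac{2d}{n} + \frac{2d^2}{n\sqrt{k}},
\end{equation*}
and a single application of $\|M\|_1 \leq \sqrt{d}\,\|M\|_2$ together with Jensen and Markov already gives $\dtr(\rho,\widehat{\brho}) \leq \epsilon$ whp once $n = O(\max(d^2/\epsilon^2,\, d^3/(\sqrt{k}\epsilon^2)))$. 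No interpolation and no matrix concentration are required.

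Your proposed fallback via matrix Bernstein is also unlikely to close the gap in a tilde-free way. Matrix Bernstein's tail bound carries a dimensional prefactor of $d$ in front of the exponential, which becomes a multiplicative $\log d$ in the deviation; this comes from the union bound over the spectrum in the proof of the inequality, not from the bias of the estimator, so unbiasedness does not remove it. This is exactly why \cite{CLL24a} incurs $\widetilde{O}$ overhead despite also needing concentration, and the whole point of the present theorem is to shave the tilde. The paper avoids this by never leaving the Frobenius norm until the very last Cauchy--Schwarz step and then applying plain Markov, which carries no logarithms. So your framing---unbiasedness, i.i.d.\ variance reduction, appeal to \Cref{lem:var-brho}---is exactly right, but the place to push harder is the bound on $\E[\ell(\blambda)]$ inside \Cref{lem:var-brho}, not the norm-conversion step afterwards.
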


\begin{proof}
    From linearity of expectation and the unbiasedness of our tomography algorithm, we know that
    \begin{equation*}
        \E\widehat{\brho} = \rho.
    \end{equation*}
    We proceed to bound the variance of our estimator. Since $\widehat{\brho}_1, \dots, \widehat{\brho}_{n'}$ are independent and identically distributed,
    \begin{equation*}
        \Var[\widehat{\brho}]
        = \frac{1}{(n')^2} \cdot \Var[\widehat{\brho}_1 + \cdots + \widehat{\brho}_{n'}]
        = \frac{1}{(n')^2} \cdot (\Var[\widehat{\brho}_1] + \cdots + \Var[\widehat{\brho}_1])
        = \frac{1}{n'} \cdot \Var[\widehat{\brho}_{1}].
    \end{equation*}
    Recall from~\Cref{lem:var-brho} that
    \begin{equation*}
        \Var[\widehat{\brho}_1] \leq \frac{2d}{k} + \frac{d^2 \cdot \E \ell(\blambda)}{k^2}.
    \end{equation*}
    We show in~\Cref{lem:bounds-on-row-number} that $\E \ell(\blambda) \leq 2\sqrt{k}$, which means that
    \begin{align*}
        \Var[\widehat{\brho}] = \frac{1}{n'} \cdot \Var[\widehat{\brho}_1] \leq \frac{2d}{n'k} + \frac{2d^2}{n'k\sqrt{k}} = \frac{2d}{n} + \frac{2d^2}{n\sqrt{k}}.
    \end{align*}
    We upper bound the trace distance using
    \begin{equation*}
        \E \lVert \widehat{\brho} - \rho\rVert_1
        \leq \sqrt{d} \sqrt{\E\lVert \widehat{\brho} - \rho\rVert_2^2}
        \leq \sqrt{\frac{2d^2}{n} + \frac{2d^3}{n\sqrt{k}}} .
    \end{equation*}
    Therefore, taking 
    \begin{equation*}
        n \geq \frac{4d^2}{\epsilon^2} + \frac{4d^3}{\sqrt{k} \epsilon^2} = O\left(\max \left(\frac{d^3}{\sqrt{k}\epsilon^2}, \frac{d^2}{\epsilon^2} \right) \right)
    \end{equation*}
    many samples suffices to produce $\widehat{\brho}$ with $\E \norm{\widehat{\brho} - \rho}_1 \leq \epsilon$. Markov's inequality then implies that $\norm{\widehat{\brho} - \rho}_1 \leq O(\epsilon)$ with high probability. The theorem statement follows by reducing $\epsilon$ by a small constant factor. 
\end{proof}

\paragraph{Alternative proof using Young diagram statistics.}
Similar to~\Cref{thm:trace-dist-tomography-main}, there is an alternative way of proving~\Cref{thm:tom-with-ltd-entanglement} without relying on our second moment calculation. Let $\widehat{\brho}' = \bU \cdot \widehat{\balpha}' \cdot \bU^{\dagger}$ be the member of the family of unbiased estimators that uses the rank-$\ell(\blambda)$ box donation transformation, where $\ell(\blambda) \coloneq \ell(\blambda)$ is the height of the Young tableau we obtain from weak Schur sampling. The spectrum of $\widehat{\brho}'$ satisfies
\begin{equation*}
    \widehat{\balpha}'_i = \begin{cases}
        \frac{\blambda_i + d - 2i + 1}{n} & i \leq \ell(\blambda) \\
        -\frac{\ell(\blambda)}{n} & i \in \{\ell(\blambda)+1, \dots, d\}.
    \end{cases}
\end{equation*}
We will obtain a slightly weaker bound for the variance of $\widehat{\brho}$ as follows: $\Var[\widehat{\brho}_1] \leq \Var[\widehat{\brho}'_1]\leq \frac{23d}{k} + \frac{2d^2}{k\sqrt{k}}$. We proved the first inequality in~\Cref{lem:og-estimator-has-min-variance}, and for the second inequality, we write
\begin{align}
    &\Var[\widehat{\brho}']
    = \E\mathrm{tr}(\widehat{\brho}'^2) - p_2(\alpha) \tag{\Cref{eq:variance-unbiased}} \nonumber \\
    &= \E\Big[\sum_{i=1}^{d} \widehat{\balpha}'^2_i\Big] - p_2(\alpha) \nonumber \\
    &= \E\Big[\sum_{i=1}^{\ell(\blambda)} \frac{(\blambda_i + (d-2i+1))^2}{k^2}\Big] + \E\sum_{i=\ell(\blambda)+1}^d \frac{(-\ell(\blambda))^2}{k^2} - p_2(\alpha). \label{eq:ltd-entanglement-dummy-1}
\end{align}
We will bound the first term again using Kerov's algebra of observables of diagrams, and in particular, the expectation of $p_2^*(\blambda)$ from~\Cref{eq:p2-star}.
\begin{align*}
    \eqref{eq:ltd-entanglement-dummy-1}&\leq\frac{1}{k^2} \E\bigg[ \sum_{i=1}^d\left(\blambda_i^2 - \blambda_i(2i - 1)\right) + \sum_{i=1}^{\ell(\blambda)} \blambda_i(2d - 2i + 1)+ \sum_{i=1}^{\ell(\blambda)} (d - 2i + 1)^2 \bigg] + \frac{d\cdot \E \ell(\blambda)^2}{k^2} - p_2(\alpha) \\
    &\leq\frac{1}{k^2} \E[p_2^*(\blambda)] + \frac{2d}{k^2} \E\Big[\sum_{i=1}^{\ell(\blambda)} \blambda_i\Big]
    + \frac{d^2\cdot \E \ell(\blambda)}{k^2} + \frac{d\cdot \E \ell(\blambda)^2}{k^2} - p_2(\alpha) \\
    &\leq\frac{2d}{k}
    + \frac{d^2\cdot \E \ell(\blambda)}{k^2} + \frac{d\cdot \E \ell(\blambda)^2}{k^2}. \tag{\Cref{eq:p2-star}}
\end{align*}

We conclude using~\Cref{lem:bounds-on-row-number} that $\Var[\widehat{\brho}'_1]\leq \frac{23d}{k} + \frac{2d^2}{k\sqrt{k}}$. Given this variance bound, the statement of~\Cref{thm:tom-with-ltd-entanglement} as in the original proof.

\begin{lemma}\label{lem:bounds-on-row-number}
    Let $\beta$ be a distribution over $[d]$, and let $\blambda$ be the Young diagram produced from the RSK algorithm with $k$ symbols drawn from the distribution $\beta$. Then the first and second moment of the height of the tableau $\blambda$ satisfy
    \begin{equation*}
        \E_{\blambda \sim \mathrm{SW}^d_k(\beta)} \ell(\blambda) \leq 2\sqrt{k},\quad \E_{\blambda \sim \mathrm{SW}^d_k(\beta)} \ell(\blambda)^2 \leq 20 k.  
    \end{equation*}
\end{lemma}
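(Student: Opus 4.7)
The plan is to bound the height $\ell(\blambda)$ via Schensted's theorem and a union bound. By Schensted's theorem for words, $\ell(\blambda)$ equals the length of the longest \emph{strictly decreasing} subsequence of the iid word $(\bx_1, \ldots, \bx_k) \sim \beta^{\otimes k}$, so the problem reduces to estimating the first two moments of this longest strictly decreasing subsequence.

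For any fixed ordered tuple of positions $i_1 < i_2 < \cdots < i_\ell$, independence gives
\begin{equation*}
    \Pr\bigl[\bx_{i_1} > \bx_{i_2} > \cdots > \bx_{i_\ell}\bigr] \;=\; e_\ell(\beta_1, \ldots, \beta_d) \;\leq\; \tfrac{1}{\ell!},
\end{equation*}
where $e_\ell$ is the elementary symmetric polynomial and the inequality $\ell!\, e_\ell(\beta) \leq (\sum_i \beta_i)^\ell = 1$ comes from expanding the right-hand side and noting that each ordered $\ell$-tuple with distinct coordinates is counted $\ell!$ times. Union-bounding over the $\binom{k}{\ell}$ position-tuples and applying Stirling's bound $\ell! \geq (\ell/e)^\ell$ then yields
\begin{equation*}
    \Pr[\ell(\blambda) \geq \ell] \;\leq\; \frac{k^\ell}{(\ell!)^2} \;\leq\; \Big(\frac{e^2 k}{\ell^2}\Big)^\ell,
\end{equation*}
a quantity which is trivially bounded by $1$ and decays super-exponentially once $\ell$ exceeds a constant multiple of $\sqrt{k}$.

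Substituting into the layer-cake identities
\begin{equation*}
    \E[\ell(\blambda)] = \sum_{\ell \geq 1}\Pr[\ell(\blambda) \geq \ell], \qquad \E[\ell(\blambda)^2] = \sum_{\ell \geq 1}(2\ell-1)\Pr[\ell(\blambda) \geq \ell],
\end{equation*}
and splitting each sum at a threshold of order $\sqrt{k}$ gives the correct order of magnitude: the head (where I use $\Pr \leq 1$) contributes $O(\sqrt{k})$ resp.\ $O(k)$, while the super-exponential tail contributes $O(1)$. The main obstacle is nailing the precise constants $2$ and $20$: the basic Markov-plus-Stirling argument produces leading constants near $e$ and $e^2$, which comfortably clears the slack in $20$ for the second moment (the small-$k$ regime being handled by the trivial $\ell(\blambda)^2 \leq k^2$), but sharpening the first-moment constant from roughly $e$ down to $2$ requires going beyond the plain union bound. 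The natural way to do so is via the monotone coupling $\bx_i = F_\beta^{-1}(\bU_i)$ with iid uniforms $\bU_i \in [0,1]$: monotonicity of $F_\beta^{-1}$ shows that any strictly decreasing subsequence of $\bx$ remains strictly decreasing in $\bU$, so $\ell(\blambda)$ is stochastically dominated by the longest decreasing subsequence of the uniformly random permutation induced by $\bU$, reducing the sharp first-moment bound to the Plancherel setting where the inequality $\E[\lambda_1] \leq 2\sqrt{k}$ can be extracted from the Vershik-Kerov--Logan-Shepp program.
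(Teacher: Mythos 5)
Your proposal is correct and takes a genuinely different route from the paper, so a brief comparison is in order.

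The paper proves the stochastic domination $\ell(\blambda_\beta) \leq_{\mathrm{st}} \ell(\blambda_{\mathrm{Planch}})$ by invoking a majorization theorem (Theorem 1.5.4 of~\cite{Wri16}): since $\beta$ majorizes the uniform distribution, there is a coupling of the Schur-Weyl shapes in which $\blambda \trianglerighteq \bmu$ in dominance order, and dominance forces $\ell(\blambda) \leq \ell(\bmu)$; letting $d \to \infty$ then passes from the uniform distribution on $[d]$ to Plancherel. You instead combine Schensted's theorem ($\ell(\blambda)$ equals the longest strictly decreasing subsequence of the word) with the explicit quantile coupling $\bx_i = F_\beta^{-1}(\bU_i)$, under which any strict descent in $\bx$ is a strict descent in $\bU$, giving the same pointwise domination directly. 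Your coupling is more elementary and bypasses the majorization machinery, at the cost of only controlling $\ell(\lambda)$ rather than the full dominance order, which is all this lemma needs. Once you have reduced to Plancherel, both approaches finish identically: $\E[\blambda_1] \leq 2\sqrt{k}$ from Vershik--Kerov/Pilpel and the variance bound $\Var[\blambda_1] \leq 16k$ from~\cite{OW17a} give the two claims.

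One suggestion to tighten the writeup: since your coupling already dominates $\ell(\blambda)$ pointwise, it dominates \emph{all} moments, so you should simply cite the Plancherel second-moment bound for $\E[\ell(\blambda)^2] \leq 20k$ too, rather than appealing to the union-bound tail. The union-bound argument you sketch is a nice elementary sanity check that the order of magnitude is right, but as you note its leading constant ($\approx e^2 k$ for the head term) plus the sub-leading $\Theta(\sqrt{k})$ corrections and the small-$k$ crossover would need to be tracked carefully to guarantee $20k$ for every $k$, and that bookkeeping is unnecessary once the coupling is in place.
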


\begin{proof}
    Theorem 1.5.4 of~\cite{Wri16} states that if $\beta, \upsilon$ are two distributions such that $\beta \succ \upsilon$, then there is a coupling $(\blambda, \bmu)$ of $\mathrm{SW}(\beta)$ and $\mathrm{SW}(\upsilon)$ such that $\blambda \trianglerighteq \bmu$. Here, $\beta \succ \upsilon$ means that $\sum_{i=1}^k \beta_i \geq \sum_{i=1}^k \upsilon_i$ for all $k \in [d]$, and for two partitions $\lambda$ and $\mu$, we have $\lambda \trianglerighteq \mu$ means, similarly, that $\sum_{i=1}^k \lambda_i \geq \sum_{i=1}^k \mu_i$ for all $k \in [d]$. 
    
    It is not hard to see that if $\blambda \trianglerighteq \bmu$, then $\bmu$ necessarily has at least as many rows as $\blambda$. In the opposite case, $\sum_{i=1}^{\ell(\mu)} \mu_i > \sum_{i=1}^{\ell(\mu)} \lambda_i$, thus contradicting the monotonicity. We will choose $\upsilon = (1/d, \dots, 1/d)$ and deduce that
    \begin{equation*}
        \E_{\blambda \sim \mathrm{SW}^d_k(\beta)} \ell(\blambda)^t \leq \E_{\bmu \sim \mathrm{SW}^d_k(\upsilon)} \ell(\bmu)^t
    \end{equation*}
    for all moments $t \in \N$.

    What is nice is that the distribution of $\mathrm{SW}^d_k(\upsilon)$ is the limit as $d \to \infty$ is well-studied and also known as the Plancherel distribution $\mathrm{Planch}_k$ (\cite[Corollary 3.3.4]{Wri16}). To make use of this limiting behavior, define $\beta_{D}$ for $D \geq d$ as the inclusion of $\beta$ as a probability distribution on $[D]$, i.e. $\beta_D = (\beta_1, \dots, \beta_d, 0, \dots, 0)$. Moreover, let $\upsilon_D$ be the uniform distribution on $[D]$.

    Then for all $D \geq d$ we have 
    \begin{equation*}
        \E_{\blambda \sim \SW^d_k(\beta)} \ell(\blambda)^t = \E_{\blambda \sim \SW^D_k(\beta_D)} \ell(\blambda)^t \leq \E_{\bmu \sim \SW^D_k(\upsilon_D)} \ell(\bmu)^t,
    \end{equation*}
    where we've used that $\SW_k^d(\beta)$ is identical to $\SW_k^D(\beta_D)$. And because this holds for arbitrary $D \geq d$, we can take the limit as $D \to \infty$ to get 
    \begin{equation*}
        \E_{\blambda \sim \SW^d_k(\beta)} \ell(\blambda)^t \leq \E_{\blambda \sim \mathrm{Planch}_k} \ell(\blambda)^t.
    \end{equation*}
    For the Plancherel distribution, it holds that $\ell(\blambda)$ and $\blambda_1$ are identically distributed. This is because drawing $\blambda$ from $\mathrm{Planch}_k$ is equivalent to sampling a uniformly random permutation $\bpi$ of $[k]$ and performing the RSK algorithm to produce a diagram. The reverse permutation $\bpi_{\mathrm{rev}}$ will produce the transpose diagram $\blambda'$ when we execute the RSK algorithm, which satisfies $\ell(\blambda') = \blambda_1$. Since the distribution of $\bpi$ and $\bpi_{\mathrm{rev}}$ are equivalent, we conclude that for all moments $t \in \N$
    \begin{equation*}
        \E_{\blambda \sim \mathrm{Planch}_k} \ell(\blambda)^t = \E_{\blambda \sim \mathrm{Planch}_k} \blambda_1^t.
    \end{equation*}
    We are now ready to bound the RHS for the first and second moments. When $t = 1$,~\cite{VK85} and~\cite{Pil90} showed that (see~\cite[Theorem 3.5.4]{Wri16} for an exposition of the proof)
    \begin{equation}
        \label{eq:exp-lambda1}
        \E_{\blambda \sim \mathrm{Planch}_k} \blambda_1 \leq 2 \sqrt{k}.
    \end{equation}
    When $t = 2$, we write
    \begin{equation*}
        \E_{\blambda \sim \mathrm{Planch}_k} \blambda_1^2 = \big(\E_{\blambda \sim \mathrm{Planch}_k} \blambda_1\big)^2 + \Var[\blambda_1] \leq 4k + 16k = 20k,
    \end{equation*}
    where \emph{the} inequality follows from~\Cref{eq:exp-lambda1} and~\cite[Proposition 4.8]{OW17a}.
\end{proof}

\section{Shadow tomography}
In this section, we prove the following theorem.

\begin{theorem}[\Cref{thm:classical_shadows}, restated] \label{thm:classical_shadows_applications}
    The classical shadows task for states of rank $r$ and observables $O_1, \ldots, O_m$ with $\tr(O_i^2) \leq F$ can be solved using
    \begin{equation*}
        n = O\Big(\log(m) \cdot \Big(\min\Big\{\frac{\sqrt{r F}}{\epsilon}, \frac{F^{2/3}}{\epsilon^{4/3}}\Big\} + \frac{1}{\epsilon^2}\Big)\Big)
    \end{equation*}
    copies of $\rho$.
\end{theorem}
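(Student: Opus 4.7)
My plan is to analyze the plug-in estimator built from the debiased Keyl's algorithm, amplified via median-of-means (MoM) so that a union bound over the $m$ observables only costs a $\log m$ factor. Concretely, split the $n$ samples into $k = \Theta(\log m)$ disjoint groups of size $s = n/k$; run the debiased Keyl's algorithm on each group to produce independent unbiased estimators $\widehat{\brho}^{(1)}, \ldots, \widehat{\brho}^{(k)}$ of $\rho$; for each observable $O_i$, compute $\widehat{\bo}_i^{(j)} = \tr(O_i \cdot \widehat{\brho}^{(j)})$ and output $\widehat{\bo}_i = \mathrm{median}_{j \in [k]}\,\widehat{\bo}_i^{(j)}$. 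Unbiasedness of the debiased Keyl's algorithm (Theorem~\ref{thm:unbiased}) immediately implies that each $\widehat{\bo}_i^{(j)}$ is an unbiased estimator of $\mu_i \coloneqq \tr(O_i \cdot \rho)$, so the whole argument reduces to bounding the variance of a single $\widehat{\bo}_i^{(j)}$ and invoking standard MoM concentration.

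The variance calculation is driven by Theorem~\ref{thm:var}. Expanding
\begin{equation*}
\E\bigl[(\widehat{\bo}_i^{(j)})^2\bigr] = \tr\bigl((O_i \otimes O_i)\cdot \E[\widehat{\brho}^{(j)}\otimes \widehat{\brho}^{(j)}]\bigr)
\end{equation*}
and using exactly the sign argument given right after Theorem~\ref{thm:var} to drop $-\tr((O_i \otimes O_i)\cdot \mathrm{Lower}_\rho) \leq 0$, I will obtain
\begin{equation*}
\Var\bigl[\widehat{\bo}_i^{(j)}\bigr] \leq \frac{2\tr(O_i^2 \rho) - \mu_i^2}{s} + \frac{\E[\ell(\blambda)] \cdot \tr(O_i^2)}{s^2}.
\end{equation*}
Chebyshev on each group plus a Chernoff bound on the number of ``bad'' groups then gives $|\widehat{\bo}_i - \mu_i| \leq \epsilon$ with failure probability $m^{-\Omega(1)}$ as soon as the right-hand side is at most $c\epsilon^2$ for a suitable absolute constant $c$, and a union bound over $i \in [m]$ finishes the shadow-tomography guarantee.

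It then remains to choose $s$ so that the variance bound holds. I will use $\tr(O_i^2) \leq F$, and I will bound $\E[\ell(\blambda)]$ in two complementary ways: (i) $\ell(\blambda) \leq r$ always, since $\rho$ is rank $r$ and so weak Schur sampling yields Young diagrams of height at most $r$ (Remark~\ref{WSS_rank_r_comment}); and (ii) $\E[\ell(\blambda)] \leq O(\sqrt{s})$ by the Young-diagram-statistics bound of Lemma~\ref{lem:bounds-on-row-number}. Plugging (i) into $\E[\ell(\blambda)] F/s^2 \leq \epsilon^2$ gives $s \gtrsim \sqrt{rF}/\epsilon$, while plugging (ii) in gives $\sqrt{s}\,F/s^2 \leq \epsilon^2$, i.e.\ $s \gtrsim F^{2/3}/\epsilon^{4/3}$. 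Taking the smaller of the two produces the $\min\{\sqrt{rF}/\epsilon, F^{2/3}/\epsilon^{4/3}\}$ contribution, and multiplying by $k = O(\log m)$ gives the corresponding contribution to $n$.

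The hard part is the additive $1/\epsilon^2$ floor. The first variance term gives the constraint $s \gtrsim (2\tr(O_i^2 \rho) - \mu_i^2)/\epsilon^2$, and the naive bound $\tr(O_i^2\rho) \leq F$ produces the wrong factor $F/\epsilon^2$ in place of the desired $1/\epsilon^2$. I expect the main obstacle to be showing that $2\tr(O_i^2\rho) - \mu_i^2$ is essentially $O(1)$ rather than $O(F)$ for the purposes of this bound: I plan to do this by retaining (rather than discarding) further cancellations from the ``lower order'' term $\mathrm{Lower}_\rho = \sum_k c_k (P_k \otimes P_k) \cdot \swap$, picking out the summand(s) best matched to $O_i$ to kill the leading $\tr(O_i^2\rho)/s$ piece, possibly together with an identity-shift preprocessing $O_i \mapsto O_i - \tfrac{\tr(O_i)}{d}I$ so that only the ``traceless part'' of $O_i$ contributes. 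This delicate cancellation, together with the straightforward MoM machinery above, should yield the full theorem.
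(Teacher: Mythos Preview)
Your overall architecture is exactly the paper's: median-of-means over $k=\Theta(\log m)$ groups, debiased Keyl's algorithm on each group, variance computed from Theorem~\ref{thm:var} with the $\mathrm{Lower}_\rho$ term dropped, and $\E[\ell(\blambda)]$ bounded by $\min\{r,\,2\sqrt{s}\}$ via Remark~\ref{WSS_rank_r_comment} and Lemma~\ref{lem:bounds-on-row-number}. All of that matches the paper's Lemma~\ref{lem:variance_O} and the ensuing proof verbatim.

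The genuine gap is your ``hard part.'' You are overthinking the $1/\epsilon^2$ floor. The paper simply uses the one-line bound
\[
\tr(O_i^2\rho)\;\le\;\|O_i^2\|_\infty\,\tr(\rho)\;=\;\|O_i\|_\infty^2\;\le\;1,
\]
since the classical shadows task (as set up in Section~\ref{sec:shadow-tomography}) is for bounded observables with $\|O_i\|_\infty\le 1$; the parameter $F$ bounds $\tr(O_i^2)$ \emph{in addition} to this. With that, the first variance term is $\le 2/s$, and $s=O(1/\epsilon^2)$ suffices. Your proposed route---retaining pieces of $\mathrm{Lower}_\rho$ to cancel $\tr(O_i^2\rho)/s$---is both unnecessary and infeasible: Theorem~\ref{thm:var} only gives the structural characterization of $\mathrm{Lower}_\rho$ as a positive combination of $(P\otimes P)\cdot\swap$ terms, with no explicit coefficients, so you cannot lower-bound $\tr((O_i\otimes O_i)\,\mathrm{Lower}_\rho)$ by anything useful. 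And the identity-shift preprocessing does not help either, since after shifting you still only control $\tr(\widetilde O_i^2)$, not $\|\widetilde O_i\|_\infty$, and $\tr(\widetilde O_i^2\rho)$ can still be as large as $F$ in the worst case. The fix is to invoke the $\|O_i\|_\infty\le 1$ assumption, after which your proof is complete and identical to the paper's.
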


We use a ``plug-in estimator'' approach based on the debiased Keyl's algorithm, and show that it achieves the above sample complexity.  


\begin{definition}[Algorithm for classical shadows using the debiased Keyl's algorithm]\label{alg:classical_shadows}
    Let $n'$ and $k$ be parameters, and set $n = n' \cdot k$. On input $\rho^{\otimes n}$, with $n = n' \cdot k$:
    \begin{enumerate}
        \item Use $n'$ samples of $\rho$ to obtain an estimate $\widehat{\brho}_1$ for $\rho$ using the debiased Keyl's algorithm.
        \item Set $\widehat{\bo}^{(1)}_{i} \coloneq \tr( O_i \cdot \widehat{\brho}_i)$, for each $i \in \{1, \dots, m\}$. 
        \item Repeat the first two steps $(k-1)$ times to obtain estimates $\widehat{\bo}^{(j)}_{i}$, for $j \in \{2, \dots, k\}$ and $i \in \{1, \dots, m\}$.
        \item For each $i$, output $\widehat{\bo}_i \coloneq \mathrm{median}(\widehat{\bo}_i^{(1)}, \dots, \widehat{\bo}_i^{(k)})$. 
    \end{enumerate}
\end{definition}

Specifically, we will show below that this algorithm achieves the sample complexity in \Cref{thm:classical_shadows_applications} with $k = O(\log m) $ and $n' = O( \min\big\{\frac{\sqrt{r F}}{\epsilon}, \frac{F^{2/3}}{\epsilon^{4/3}}\big\} + \frac{1}{\epsilon^2} )$. We begin our analysis with a lemma. 

\begin{lemma}\label{lem:variance_O}
    Let $O \in \C^{d \times d}$ be Hermitian. If $\widehat{\brho}$ is the output of the debiased Keyl's algorithm on input $\rho^{\otimes n}$, with $\rho$ rank-$r$, then 
    \begin{equation*}
        \Var[\tr(O \cdot \widehat{\brho})] \leq \frac{2}{n} \lVert O\rVert_{\infty}^2 + \frac{\min\{r, 2\sqrt{n}\}}{n^2}\tr(O^2).
    \end{equation*}
\end{lemma}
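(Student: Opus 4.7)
The plan is to derive this variance bound as a direct application of the second moment formula in \Cref{thm:var}. Since $\widehat{\brho}$ is unbiased by \Cref{thm:unbiased}, $\E[\tr(O\cdot \widehat{\brho})] = \tr(O\cdot \rho)$, so
\begin{equation*}
    \Var[\tr(O\cdot \widehat{\brho})] = \E[\tr(O\cdot \widehat{\brho})^2] - \tr(O\cdot \rho)^2
    = \tr\bigl((O\otimes O) \cdot \E[\widehat{\brho}\otimes\widehat{\brho}]\bigr) - \tr(O\cdot\rho)^2,
\end{equation*}
using the identity $\tr(O\cdot\widehat{\brho})^2 = \tr((O\otimes O)(\widehat{\brho}\otimes\widehat{\brho}))$ introduced in the discussion following \Cref{thm:var}.

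Next I would plug in the expression for $\E[\widehat{\brho}\otimes\widehat{\brho}]$ from \Cref{thm:var}, evaluating each of the four terms against $O\otimes O$. Using the standard identity $\tr((A\otimes B)\cdot\swap) = \tr(AB)$, the main term contributes $\frac{n-1}{n}\tr(O\rho)^2$, the two error terms $\frac{1}{n}(\rho\otimes I + I\otimes \rho)\cdot\swap$ each contribute $\frac{1}{n}\tr(O^2\rho)$, and the $\frac{\E[\ell(\blambda)]}{n^2}\swap$ term contributes $\frac{\E[\ell(\blambda)]}{n^2}\tr(O^2)$. For the lower-order term, the paper already showed (in the discussion immediately after \Cref{thm:var}) that $-\tr((O\otimes O)\cdot \mathrm{Lower}_\rho) \leq 0$, since $\mathrm{Lower}_\rho$ is a positive combination of terms of the form $(P\otimes P)\cdot\swap$ and $\tr((O\otimes O)(P\otimes P)\swap) = \tr((P^{1/2}OP^{1/2})^2)\geq 0$. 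Subtracting $\tr(O\rho)^2$ and dropping the resulting nonpositive $-\tfrac{1}{n}\tr(O\rho)^2$ term yields
\begin{equation*}
    \Var[\tr(O\cdot\widehat{\brho})] \leq \frac{2}{n}\tr(O^2\rho) + \frac{\E[\ell(\blambda)]}{n^2}\tr(O^2).
\end{equation*}

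Finally I would simplify the two remaining factors. For the first, Hölder's inequality gives $\tr(O^2\rho) \leq \|O^2\|_\infty \cdot \|\rho\|_1 = \|O\|_\infty^2$. For the second, I need $\E[\ell(\blambda)] \leq \min\{r, 2\sqrt n\}$: the bound $\ell(\blambda)\leq r$ is deterministic for rank-$r$ $\rho$ by \Cref{WSS_rank_r_comment}, and the bound $\E[\ell(\blambda)]\leq 2\sqrt n$ follows from \Cref{lem:bounds-on-row-number}. Combining gives precisely the claimed bound.

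There is no real obstacle here; the lemma is essentially a direct corollary of \Cref{thm:var} together with the two off-the-shelf bounds on $\E[\ell(\blambda)]$. The only thing that requires a little care is recognizing that the nonpositivity argument for $\mathrm{Lower}_\rho$ (given in the introduction for expressions of exactly this form $\tr(O\otimes O\cdot \mathrm{Lower}_\rho)$) applies verbatim and lets us safely discard that contribution.
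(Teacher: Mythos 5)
Your proposal matches the paper's proof essentially line for line: expand the variance via $\tr(O\otimes O\cdot\E[\widehat{\brho}\otimes\widehat{\brho}])-\tr(O\rho)^2$, substitute the second-moment formula, drop the nonpositive $\mathrm{Lower}_\rho$ contribution and the $-\tfrac{1}{n}\tr(O\rho)^2$ term, bound $\tr(O^2\rho)\leq\|O\|_\infty^2$, and apply the two bounds on $\E[\ell(\blambda)]$. The only cosmetic difference is that you invoke H\"older while the paper uses the PSD ordering $O^2\preceq\|O\|_\infty^2 I$; these are the same estimate.
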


\begin{proof}
We have
\begin{equation}\label{eq:variance_O}
    \Var[\tr(O \cdot \widehat{\brho})] = \E \tr( O \cdot \widehat{\brho})^2 - ( \E \tr ( O \cdot \widehat{\brho} ) )^2 = \tr( O \otimes O \cdot \E [ \widehat{\brho} \otimes \widehat{\brho}]) - \tr(O \cdot \rho)^2.
\end{equation}
By \Cref{thm:var}, 
\begin{align*}
    \tr( O \otimes O \cdot \E [ \widehat{\brho} \otimes \widehat{\brho}]) & = \frac{n-1}{n} \tr( O \otimes O \cdot \rho \otimes \rho) + \frac{1}{n} \tr( O \otimes O \cdot \rho \otimes I \cdot \swap) +  \frac{1}{n} \tr( O \otimes O \cdot I \otimes \rho \cdot \swap) \\
    & \qquad + \frac{\E[\ell(\blambda)]}{n^2} \tr( O \otimes O \cdot \swap) - \tr( O \otimes O \cdot \mathrm{Lower}_\rho) \\
    & = \frac{n-1}{n} \tr (O \cdot \rho)^2 + \frac{2}{n} \tr( O^2 \cdot \rho) + \frac{\E[\ell(\blambda)]}{n^2} \tr(O^2) - \tr( O \otimes O \cdot \mathrm{Lower}_\rho).
\end{align*}
First, we note that the last term is non-negative from our characterization of $\mathrm{Lower}_\rho$. In particular, we can expand $\tr( O \otimes O \cdot \mathrm{Lower}_\rho)$ as a positive linear combination of terms of the form
\begin{equation*}
    \tr( O \otimes O \cdot P \otimes P \cdot \swap) = \tr( O P O P ) = \tr( (\sqrt{P} O \sqrt{P})^2) \geq 0,
\end{equation*}
where the last equality follows because $P$ is positive semidefinite. Next, because $O^2 \preceq \norm{O^2}_\infty \cdot I_d = \norm{O}^2_{\infty} \cdot I_d$ in the PSD order, we have $\tr(O^2 \cdot \rho) \leq \norm{O}^2_\infty \cdot \tr(\rho) = \norm{O}^2_\infty$. Finally, we note two upper bounds on $\E[\ell(\blambda)]$: on the one hand, we always have $\ell(\blambda) \leq r$, so that $\E[\ell(\blambda)] \leq r$; on the other hand, $\E[\ell(\blambda)] \leq 2 \sqrt{n}$, by \Cref{lem:bounds-on-row-number}. Combining all of these observations gets us:
\begin{equation*}
    \tr( O \otimes O \cdot \E [ \widehat{\brho} \otimes \widehat{\brho}]) \leq \tr(O \cdot \rho)^2 + \frac{2}{n} \norm{O}^2_\infty + \frac{\min \{ r, 2\sqrt{n} \}}{n^2} \tr(O^2).
\end{equation*}
Substituting this back into \Cref{eq:variance_O} gives us the desired bound. 
\end{proof}

\begin{proof}[Proof of \Cref{thm:classical_shadows_applications}]

Fix any $i \in [m]$, and any $j \in [k]$. First note that the estimate $\widehat{\bo}^{(j)}_i$ is unbiased, since by \Cref{thm:unbiased}, we have
\begin{equation*}
    \E [ \widehat{\bo}^{(j)}_i] = \tr( O_i \cdot \E [\widehat{\brho}] ) = \tr( O_i \cdot \rho).
\end{equation*}
Moreover, \Cref{lem:variance_O} tells us
\begin{equation}\label{eq:CS_dummy_1}
    \Var [ \widehat{\bo}^{(j)}_i ] \leq \frac{2}{n'} + \frac{2}{(n')^2} \cdot \min\{r, \sqrt{n'} \} \cdot F. 
\end{equation}
Taking $n' = O(1/\epsilon^2)$ suffices to make the first term on the right-hand side of the above equation at most $O(\epsilon^2)$. 
Note that taking $n' \geq \sqrt{rF}/\epsilon$ suffices to make $2rF/(n')^2 \leq O(\epsilon^2)$, and taking $n' \geq F^{2/3}/\epsilon^{4/3}$ suffices to make $2F/(n')^{3/2} \leq O(\epsilon^2)$. Thus, taking $n' = O( \min \{\sqrt{rF}/\epsilon, F^{2/3}/\epsilon^{4/3}\}$ suffices to make at least one of the terms in the minimum $O(\epsilon^2)$, the second term $O(\epsilon^2)$. 

So, we have that 
\begin{equation*}
    n' = O\Big(\min\Big\{\frac{\sqrt{r F}}{\epsilon}, \frac{F^{2/3}}{\epsilon^{4/3}}\Big\} + \frac{1}{\epsilon^2}\Big)
\end{equation*}
suffices to ensure $\Var[ \widehat{\bo}^{(j)}_i] \leq O(\epsilon^2)$ for all $i \in [m]$. By Chebyshev's inequality, this sample complexity is also enough to ensure that $| \widehat{\bo}^{(j)}_i - \tr(O_i \rho) | \leq \epsilon$ with high probability, for any $i$, say $99\%$, by taking $C$ sufficiently small. 

We complete our proof with a standard median argument. Assume $k$ is odd. Fix $i$ and define a set $\bS_i$ by
\begin{equation*}\bS_i \coloneq \left| \big\{ j \, : \, | \widehat{\bo}^{(j)}_i - \tr(O_i \rho) | > \epsilon \big\}\right|.
\end{equation*} 
Then the median of $\widehat{\bo}_i^{(1)}, \dots, \widehat{\bo}_i^{(k)}$ is more than $\epsilon$ away from $\tr(O_i\rho)$ only when $\bS_i > k/2$. However, $\E[ \bS_i] \leq k/100$, so that 
\begin{align*}
    \Pr\Big[\big|\mathrm{median}(\bo^{(j)}_{i})_{j \in [k]} - \tr( O_i \cdot \rho) \big| > \epsilon \Big] & \leq \Pr\big[ \bS_i > k/2 \big] \\
    & \leq \Pr\big[ \bS_i > \E[\bS_i] + 49k/100\big] \leq \exp\big(- 2k \cdot (49/100)^2\big).
\end{align*}
Here we have used an additive Chernoff bound, and we conclude that
\begin{equation*}
    \Pr\Big[\big|\mathrm{median}(\bo^{(1)}_{i}, \dots, \bo^{(k)}_{i}) - \tr( O_i \cdot \rho) \big| > \epsilon \Big] \leq \exp \big( - O(k) \big). 
\end{equation*}
Taking $k = O(\log m)$ is enough to make this probability less than $1/(100m)$. So, by a union bound, the probability that \emph{all} $m$ of the $\widehat{\bo}_i$ are within $\epsilon$ of the correct value is at most $1/100$. 

We conclude that 
\begin{equation*}n = k \cdot n' = O\Big(\log(m) \cdot \Big(\min\Big\{\frac{\sqrt{r F}}{\epsilon}, \frac{F^{2/3}}{\epsilon^{4/3}}\Big\} + \frac{1}{\epsilon^2}\Big)\Big)\end{equation*}
samples suffices to solve classical shadows using the algorithm described in \Cref{alg:classical_shadows}.
\end{proof}

\section{Quantum metrology}

In this section, we construct a locally unbiased estimator achieving twice the QCRB asymptotically. Specifically, we show the following theorem, stated in the introduction. 

\begin{theorem}[\Cref{thm:metrology_intro}, restated] \label{thm:metrology_main}
    Let $\rho_{\theta}$ be a  parameterized state family with probe state \begin{equation*}
        \rho_{\theta^*} = \sum_{i=1}^d \alpha_i \cdot \ketbra{v_i},
    \end{equation*}
    where we assume without loss of generality that the eigenvalues are sorted so that $\alpha_1 \geq \cdots \geq \alpha_r > 0$ and $\alpha_{r+1} = \cdots = \alpha_d = 0$
    (so that $\rho_{\theta^*}$ is rank $r$).
    Then for every $n$, there is a locally unbiased estimator which takes as input $n$ copies of $\rho_{\theta}$
    and whose MSEM $V_n$ has the following property.
    When $n = 2r/(\epsilon \cdot \alpha_r)$,
    \begin{equation*}
        V_n \preceq (1 + \epsilon) \cdot \frac{2}{n \cdot \calF}.
    \end{equation*}
    Hence, as $n \rightarrow \infty$, we have $n \cdot V_n \rightarrow 2/\calF$,
    and so this estimator achieves twice the QCRB asymptotically.
\end{theorem}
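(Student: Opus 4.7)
The plan is to adapt Zhou and Chen's one-copy recipe by substituting the debiased Keyl's algorithm for their unbiased single-copy estimator, then use \Cref{thm:var} to control the resulting MSEM. Given $n$ copies of $\rho_\theta$, I would first run the debiased Keyl's algorithm to obtain $\widehat{\brho}$ with $\E[\widehat{\brho}\mid\rho_\theta]=\rho_\theta$ (\Cref{thm:unbiased}), and then take
\begin{equation*}
    \widehat{\btheta}_i \;\coloneq\; \theta_i^* \;+\; \sum_{j=1}^m (\calF^{-1})_{ij}\,\tr\!\bigl(L_j(\widehat{\brho}-\rho_{\theta^*})\bigr),
\end{equation*}
where $L_1,\dots,L_m$ are Hermitian SLDs at $\theta^*$, chosen to be supported on $\mathrm{supp}(\rho_{\theta^*})$. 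Local unbiasedness is then immediate: at $\theta^*$ the classical post-processing returns $\theta^*$ in expectation, and differentiating $\E[\tr(L_j\widehat{\brho})\mid\rho_\theta]=\tr(L_j\rho_\theta)$ at $\theta^*$ together with the defining identity $\tr(L_j\,\partial_{\theta_k}\rho|_{\theta^*})=\calF_{jk}$ gives the required Jacobian $\partial_{\theta_k}\E[\widehat{\btheta}_i\mid\rho_\theta]|_{\theta^*}=\delta_{ik}$. I will also repeatedly use the basic identity $\tr(L_j\rho_{\theta^*})=0$, obtained by differentiating $\tr(\rho_\theta)=1$.

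Next, I would expand the MSEM as
\begin{equation*}
    V_{ij} \;=\; \sum_{k,l}(\calF^{-1})_{ik}(\calF^{-1})_{jl}\,\tr\!\bigl((L_k\otimes L_l)\cdot(\E[\widehat{\brho}\otimes\widehat{\brho}] - \rho_{\theta^*}\otimes\rho_{\theta^*})\bigr),
\end{equation*}
and substitute the four-term expression from \Cref{thm:var}. The $-\tfrac{1}{n}\rho\otimes\rho$ residual vanishes via $\tr(L_k\rho_{\theta^*})=0$; the $\tfrac{1}{n}(\rho\otimes I+I\otimes\rho)\swap$ term evaluates by cyclicity to $\tfrac{1}{n}[\tr(L_k\rho L_l)+\tr(L_kL_l\rho)] = \tfrac{2}{n}\calF_{kl}$; the $\tfrac{\E[\ell(\blambda)]}{n^2}\swap$ term contributes $\tfrac{\E[\ell(\blambda)]}{n^2}\tr(L_kL_l)$; and the $\mathrm{Lower}_\rho$ term contributes the negative of a PSD matrix in $(k,l)$, since each summand $\tr((L_k\otimes L_l)(P\otimes P)\swap)=\tr\!\bigl((P^{1/2}L_kP^{1/2})(P^{1/2}L_lP^{1/2})\bigr)$ is a Gram-matrix entry. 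Discarding that negative contribution in the PSD order and writing $T_{kl}\coloneq\tr(L_kL_l)$, I obtain
\begin{equation*}
    V \;\preceq\; \frac{2}{n}\,\calF^{-1} \;+\; \frac{\E[\ell(\blambda)]}{n^2}\,\calF^{-1}\,T\,\calF^{-1}.
\end{equation*}

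The remaining step is the Loewner inequality $T\preceq\tfrac{1}{\alpha_r}\calF$. For any $v\in\R^m$ let $L_v\coloneq\sum_k v_kL_k$, so $v^\top Tv=\tr(L_v^2)$ and $v^\top\calF v=\tr(\rho_{\theta^*}L_v^2)$. Since each $L_k$ lives on $\mathrm{supp}(\rho_{\theta^*})$, so does the PSD operator $L_v^2$, and on that support $\rho_{\theta^*}\succeq\alpha_r\Pi$, yielding $\tr(\rho_{\theta^*}L_v^2)\geq\alpha_r\tr(L_v^2)$. Thus $T\preceq\calF/\alpha_r$, and hence $\calF^{-1}T\calF^{-1}\preceq\calF^{-1}/\alpha_r$. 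Using $\E[\ell(\blambda)]\leq r$ (weak Schur sampling on a rank-$r$ state always returns $\blambda$ with $\ell(\blambda)\leq r$; see \Cref{WSS_rank_r_comment}) then gives $V\preceq\tfrac{2}{n}\calF^{-1}\bigl(1+\tfrac{r}{2n\alpha_r}\bigr)$, and the choice $n=2r/(\epsilon\alpha_r)$ makes the bracketed factor equal to $1+\epsilon/4\leq 1+\epsilon$, proving the theorem. Letting $n\to\infty$ with $\epsilon\to 0$ accordingly yields $n\cdot V_n\to 2\calF^{-1}$, recovering the asymptotic statement.

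The main obstacle I anticipate is getting the support conditions on the SLDs exactly right in the rank-deficient setting: the argument above depends on choosing each $L_j$ to be supported on $\mathrm{supp}(\rho_{\theta^*})$, which is possible precisely when $\partial_{\theta_k}\rho|_{\theta^*}$ already lies in that support (equivalently, when $\calF$ is finite), and one must verify that the post-processing remains consistent when $\widehat{\brho}$ has stray mass outside the support. The latter is handled by noting $\tr(L_j\widehat{\brho})=\tr(L_j\,\Pi\widehat{\brho}\Pi)$, so only the projected part $\Pi\widehat{\brho}\Pi$ enters the estimator and every expression in the covariance computation goes through verbatim. Beyond this bookkeeping, the proof is a largely mechanical consequence of \Cref{thm:var} together with the one-line Loewner inequality sketched above.
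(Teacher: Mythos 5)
Your proposal is correct and follows essentially the same route as the paper: plug the debiased Keyl's estimator into Zhou--Chen's local shadow construction, substitute the second-moment formula from \Cref{thm:var} into the MSEM, discard the $\mathrm{Lower}_{\rho}$ contribution as a negative-semidefinite term, and bound the residual using $\E[\ell(\blambda)]\leq r$ together with a Loewner bound comparing $\tr(L_kL_l)$ to $\calF_{kl}$. The only stylistic difference is in that last step: the paper proves $K_{ij}=\tfrac12\tr(L_iL_j)\preceq \calF/(2\alpha_r)$ by expanding both quadratic forms in terms of the matrix entries $R_{\ell k}=\sum_i z_i(\partial_i\rho_\theta)_{\ell k}|_{\theta^*}$ and comparing termwise, whereas you argue directly from the operator inequality $\rho_{\theta^*}\succeq\alpha_r\Pi$ on $\mathrm{supp}(\rho_{\theta^*})$, which is a slightly cleaner phrasing of the same fact. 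You also keep the computation at the level of $V$ rather than conjugating by $\calF$ first as the paper does; both are equivalent since $A\preceq B$ implies $\calF^{-1}A\calF^{-1}\preceq\calF^{-1}B\calF^{-1}$ for $\calF$ symmetric positive definite.
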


In \cite{ZC25}, Zhou and Chen design a locally unbiased estimator for $\theta$, given an unbiased estimator for $\rho_\theta$. Since we prove the above theorem by using the debiased Keyl's algorithm as the basis for such a locally unbiased estimator, we will now give an overview of their construction. We start by reviewing some standard material. Throughout, we will write $\partial_i$ as shorthand for $\partial/\partial \theta_i$. We will also assume without loss of generality that that $\rho_{\theta^*}$ is diagonalized in the computational basis, with eigenvalues sorted decreasingly, i.e.\ that $\ket{v_i} = \ket{i}$. 

Recall that the symmetric logarithm derivative matrices $\{L_i\}_{i \in [m]}$ are defined implicitly as solutions to the equations 
\begin{equation} \label{eq:symmetric_logarithm_derivative}
    \partial_i \rho_{\theta} \big|_{\theta = \theta^*} = \frac{1}{2} L_i \rho_{\theta^*} + \frac{1}{2} \rho_{\theta^*} L_i.
\end{equation}
These equations do not uniquely determine $\{L_i\}$, since we can add to $L_i$ any matrix $\Delta L_i$ supported on the kernel of $\rho_{\theta^*}$ without affecting the right-hand side of \Cref{eq:symmetric_logarithm_derivative}. However, this is the only source of non-uniqueness, since 
\begin{equation*}
    (\partial_i \rho_{\theta})_{k\ell} \big|_{\theta = \theta^*} = \bra{k} \Big( \frac{1}{2} L_i \rho_{\theta^*} + \frac{1}{2} \rho_{\theta^*} L_i \Big) \ket{\ell} = \frac{1}{2} (\alpha_{k} + \alpha_{\ell}) (L_i)_{k\ell}
\end{equation*}
implies $(L_i)_{k\ell} = 2 (\partial_i \rho_{\theta})_{k\ell} \big|_{\theta = \theta^*}/(\alpha_{k} + \alpha_{\ell})$ whenever $\alpha_k + \alpha_{\ell} > 0$, i.e.\ whenever $\alpha_k \neq 0$ or $\alpha_{\ell} \neq 0$. Taking $L_i$ to be zero on the kernel of $\rho_{\theta^*}$ then gives rise to a canonical choice of the matrices $\{L_i\}$, with 
\begin{equation}\label{eq:canonical_Ls}
    L_{i} = 2 \sum_{\substack{k, \ell \\ \alpha_k + \alpha_\ell > 0}} \frac{(\partial_i \rho_{\theta})_{k\ell} \big|_{\theta = \theta^*}}{\alpha_k + \alpha_{\ell}} \cdot \ketbra{k}{\ell}. 
\end{equation}
Note that this choice is Hermitian, since $\partial_i \rho_{\theta} \big|_{\theta = \theta^*}$ is Hermitian. Note also that for any choice of $\{L_i\}$, 
\begin{equation}\label{eq:trace_L_rho}
    \tr(L_i  \rho_{\theta^*}) = \frac{1}{2} \tr(L_i \rho_{\theta^*}) + \frac{1}{2} \tr(\rho_{\theta^*}L_i) = \tr( \partial_i \rho_{\theta} \big|_{\theta = \theta^*}) = \partial_i \tr(\rho_{\theta}) \big|_{\theta = \theta^*} = 0,
\end{equation}
since $\tr(\rho_\theta) = 1$ for any $\theta$. 

Recall that the QFI matrix is defined as
\begin{equation}\label{eq:QFI}
    \mathcal{F}_{ij} \coloneq \frac{1}{2} \tr(\rho_{\theta^*} L_i L_j) + \frac{1}{2} \tr( \rho_{\theta^*} L_j L_i) = \tr(L_i \cdot \partial_j \rho_{\theta})\big|_{\theta = \theta^*}.
\end{equation}
This definition is independent of the choice of $\{L_i\}$, since if $L_i' = L_i + \Delta L_i$, and $\Delta L_i$ is supported on the kernel of $\rho_{\theta^*}$, then $\rho_{\theta^*} \Delta L_i = \Delta L_i \rho_{\theta^*} = 0$, so that $\tr(\rho_{\theta^*} L_i L_j) = \tr(\rho_{\theta^*} L_i' L_j) = \tr(\rho_{\theta^*} L_i' L_j')$, for all $i, j \in [m]$. 

Note that $\mathcal{F}_{ij}$ is symmetric. Moreover, since we can choose Hermitian $\{L_i\}$, $\mathcal{F}_{ij}$ is also real, as
\begin{equation*}\mathcal{F}_{ij}^* = \frac{1}{2} \tr( \rho_{\theta^*} L_j^\dagger L_i^\dagger) + \frac{1}{2}\tr(\rho_{\theta^*} L_i^\dagger L_j^\dagger) = \mathcal{F}_{ij}.\end{equation*}
We can also give an explicit expression for the entries of $\mathcal{F}$, using the matrices $\{L_i\}$ given in \Cref{eq:canonical_Ls}, and the expression for $\mathcal{F}_{ij}$ on the right-hand side of \Cref{eq:QFI}:
\begin{equation}\label{eq:explicit_F}
    \mathcal{F}_{ij} = 2\sum_{\substack{k,\ell \\ \alpha_k + \alpha_{\ell} > 0}} \frac{ \big((\partial_i \rho_{\theta})_{k\ell}  \cdot (\partial_j \rho_{\theta})_{\ell k} \big) \big|_{\theta = \theta^*}}{\alpha_k+\alpha_{\ell}}.
\end{equation}
It is assumed that $\mathcal{F}$ is invertible. We briefly explain why. If $\mathcal{F}$ is not invertible, it must have a real eigenvector which has eigenvalue zero. Let this vector be $\ket{\psi} = \sum_{i=1}^d z_i \ket{i}$, with $z_i \in \R$. Then $\bra{\psi} \calF \ket{\psi} = 0 = \tr(\rho_{\theta^*} Q^2)$, where $Q = \sum_{i=1}^m z_i L_i$. Since $Q$ is Hermitian, this implies that $Q$ is supported on the kernel of $\rho_{\theta^*}$. But since each $L_i$ is zero on this kernel, this means that $Q$ is also zero on the kernel, and hence $Q = 0$ as an operator. Thus, there exists a linear combination of derivatives which is zero at $\theta^*$, by \Cref{eq:canonical_Ls}, which means that $\rho_\theta$ is not well-parameterized at $\theta = \theta^*$. That is, the parameterization is redundant at this point. We assume this is not the case, and hence take $\calF$ invertible.

We now develop material specific to Zhou and Chen's construction. 

\begin{definition}[Deviation observables] \label{def:derivation_observables}
    Let $\{X_i\}_{i \in [m]}$ be a set of Hermitian operators satisfying, for all $i, j \in [m]$:
    \begin{equation*}
        (\text{i})~ \tr(X_i \cdot \rho_{\theta^*} ) = 0,
        \quad\text{and}\quad
        (\text{ii})~\tr( X_i \cdot \partial_j \rho_{\theta} ) \Big|_{\theta = \theta^*} = \delta_{ij}.
    \end{equation*}
    Then the $X_i$ are called \emph{deviation observables}. 
\end{definition}

\begin{lemma} \label{lem:choice_of_deviation_observables}
    Let $\{L_i\}_{i \in [m]}$ be a set of Hermitian symmetric logarithm derivative matrices. For each $i \in [m]$, define $X_i \coloneq \sum_{j=1}^m (\calF^{-1})_{ij} \cdot L_j$. Then $\{X_i\}_{i \in [m]}$ is a set of deviation observables. 
\end{lemma}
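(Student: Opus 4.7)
The plan is to verify the two defining conditions of deviation observables from \Cref{def:derivation_observables} directly, by substituting the definition $X_i = \sum_{j=1}^m (\calF^{-1})_{ij} \cdot L_j$ and invoking two facts already established in the excerpt: the identity $\tr(L_j \cdot \rho_{\theta^*}) = 0$ from \Cref{eq:trace_L_rho}, and the identity $\tr(L_i \cdot \partial_j \rho_\theta)\big|_{\theta = \theta^*} = \calF_{ij}$ which is precisely the right-hand characterization of the QFI matrix in \Cref{eq:QFI}.

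For condition (i), I would write
\begin{equation*}
    \tr(X_i \cdot \rho_{\theta^*}) = \sum_{j=1}^m (\calF^{-1})_{ij} \cdot \tr(L_j \cdot \rho_{\theta^*}) = 0,
\end{equation*}
by linearity of the trace together with \Cref{eq:trace_L_rho}. First I also need to observe that the $X_i$ are Hermitian, which is immediate since each $L_j$ is Hermitian and the entries of $\calF^{-1}$ are real (because $\calF$ is real and symmetric, as noted in the surrounding text).

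For condition (ii), the computation is
\begin{equation*}
    \tr(X_i \cdot \partial_j \rho_\theta)\big|_{\theta = \theta^*} = \sum_{k=1}^m (\calF^{-1})_{ik} \cdot \tr(L_k \cdot \partial_j \rho_\theta)\big|_{\theta = \theta^*} = \sum_{k=1}^m (\calF^{-1})_{ik} \cdot \calF_{kj} = \delta_{ij},
\end{equation*}
where the second equality uses the right-hand expression for $\calF_{kj}$ from \Cref{eq:QFI} and the third uses the definition of the matrix inverse. This completes the verification.

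There is no genuine obstacle here — the lemma is essentially a bookkeeping statement that the matrix $\calF^{-1}$ is exactly the right linear transformation to convert the symmetric logarithmic derivatives into the dual basis satisfying the normalization $\tr(X_i \cdot \partial_j \rho_\theta) = \delta_{ij}$. The only subtlety worth pausing on is that the construction is well-defined independently of the choice of $\{L_i\}$: if one replaces $L_i$ by $L_i + \Delta L_i$ with $\Delta L_i$ supported on $\ker(\rho_{\theta^*})$, both $\tr(\Delta L_i \cdot \rho_{\theta^*})$ and $\tr(\Delta L_i \cdot \partial_j \rho_\theta)|_{\theta = \theta^*}$ vanish (the latter because $\partial_j \rho_\theta|_{\theta^*}$ is supported on the range of $\rho_{\theta^*}$, which follows from \Cref{eq:symmetric_logarithm_derivative} by sandwiching with the kernel projector), so the deviation observable identities are preserved.
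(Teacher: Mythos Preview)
Your proof is correct and follows essentially the same approach as the paper's own proof: expand $X_i$ linearly, invoke \Cref{eq:trace_L_rho} for condition~(i) and \Cref{eq:QFI} for condition~(ii), and read off $\calF^{-1}\calF = I$. Your extra remark that the $X_i$ are Hermitian (because $\calF^{-1}$ has real entries) is a detail the paper leaves implicit, and your closing discussion about choice-independence of $\{L_i\}$ is not needed for the lemma as stated but is a correct side observation.
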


\begin{proof}
    We verify the two conditions in \Cref{def:derivation_observables}. First, 
    \begin{equation}\label{eq:trace_X_rho}
        \tr(X_i \cdot \rho_{\theta^*}) = \sum_{j=1}^{m}(\calF^{-1})_{ij} \cdot \tr(L_j \cdot \rho_{\theta^*}) = 0,
    \end{equation}
    since $\tr(L_j \cdot \rho_{\theta^*}) = 0$ using \Cref{eq:trace_L_rho}. Second,
    \begin{equation*}
        \tr( X_i \cdot \partial_j \rho_\theta) \Big|_{\theta  = \theta^*}  = \sum_{k=1}^m (\mathcal{F}^{-1})_{ik} \cdot \tr( L_k \cdot \partial_j \rho_{\theta}) \Big|_{\theta = \theta^*}  = \sum_{k=1}^m (\mathcal{F}^{-1})_{ik} \cdot \mathcal{F}_{kj}  = (\mathcal{F}^{-1} \cdot \mathcal{F})_{ij} = \delta_{ij}, 
    \end{equation*}
    since $\tr( L_k \cdot \partial_j \rho_{\theta})\Big|_{\theta = \theta^*} = \mathcal{F}_{kj}$ by \Cref{eq:QFI}.
\end{proof}

\begin{definition}[Local shadow estimators]\label{def:local_shadow_estimators}
    Let $\{X_i\}_{i \in [m]}$ be a set of deviation observables, and let $\widehat{\brho}$ be an estimator for $\rho_{\theta}$. Then the corresponding \emph{local shadow estimator} $\widehat{\btheta} \in \R^m$ is given by
    \begin{equation*}
        \widehat{\btheta}_i \coloneq \theta_i^* + \tr( X_i \cdot \widehat{\brho}), 
    \end{equation*}
    for all $i \in [m]$. 
\end{definition}

Recall that an estimator is locally unbiased if the following conditions are met for all $i \in [m]$: 
\begin{equation*}\label{eq:locally_unbiased_estimator_conditions}
(\text{i})~\E[\widehat{\btheta} \mid \rho_{\theta^*}] = \theta^*,
\quad\text{and}\quad
(\text{ii})~\frac{\partial}{\partial\theta_i}\E[\widehat{\btheta} \mid \rho_{\theta}]\Big|_{\theta = \theta^*} = 0.
\end{equation*}

\begin{lemma} \label{lem:local_shadow_estimator_unbiased}
    Let $\{X_i\}$ be a set of deviation observables, and let $\widehat{\brho}$ be the output of an unbiased tomography algorithm, given copies of $\rho_\theta$. Then the corresponding local shadow estimator is a locally unbiased estimator for $\theta$. 
\end{lemma}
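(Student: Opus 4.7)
The plan is to verify the two defining conditions of local unbiasedness directly, using only the unbiasedness hypothesis on $\widehat{\brho}$ together with the two properties of the deviation observables $\{X_i\}$ from \Cref{def:derivation_observables}. The key simplification is that by linearity of expectation and of the trace, the expected value of each coordinate of the local shadow estimator collapses to a deterministic function of $\theta$:
\[
\E[\widehat{\btheta}_i \mid \rho_\theta] = \theta_i^* + \tr(X_i \cdot \E[\widehat{\brho} \mid \rho_\theta]) = \theta_i^* + \tr(X_i \cdot \rho_\theta),
\]
where the second equality uses that $\widehat{\brho}$ is an unbiased estimator for $\rho_\theta$. Once this reduction is in place, the verification of both conditions becomes an immediate plug-in of the defining properties of $\{X_i\}$, so there is no further randomness to track.

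First I would establish condition (i) by specializing the identity above to $\theta = \theta^*$. Property (i) of deviation observables says that $\tr(X_i \cdot \rho_{\theta^*}) = 0$, so $\E[\widehat{\btheta}_i \mid \rho_{\theta^*}] = \theta_i^*$ for every $i$, which is exactly the first requirement. Next I would establish condition (ii) by differentiating the same identity with respect to $\theta_j$ and evaluating at $\theta^*$. Since $\theta_i^*$ and $X_i$ do not depend on $\theta$, only the trace term contributes, yielding
\[
\frac{\partial}{\partial \theta_j} \E[\widehat{\btheta}_i \mid \rho_\theta]\Big|_{\theta = \theta^*} = \tr(X_i \cdot \partial_j \rho_\theta)\Big|_{\theta = \theta^*} = \delta_{ij}
\]
by property (ii) of deviation observables. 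Equivalently, the Jacobian at $\theta^*$ of the map $\theta \mapsto \E[\widehat{\btheta} \mid \rho_\theta] - \theta$ vanishes, which is the standard Jacobian form of the second condition for local unbiasedness.

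There is essentially no technical obstacle here: the lemma is a short bookkeeping computation that strings together the unbiasedness hypothesis with the defining properties of $\{X_i\}$, both of which have already been set up. All of the genuinely nontrivial effort in proving \Cref{thm:metrology_main} will instead go into analyzing the MSEM of the resulting estimator when $\widehat{\brho}$ is instantiated by the debiased Keyl's algorithm, and showing via \Cref{thm:var} that this MSEM is bounded by $(2/n)\cdot \calF^{-1}$ up to corrections that vanish in the large-$n$ limit.
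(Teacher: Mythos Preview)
Your proposal is correct and follows essentially the same approach as the paper: both verify the two defining conditions of local unbiasedness by using linearity to reduce $\E[\widehat{\btheta}_i \mid \rho_\theta]$ to $\theta_i^* + \tr(X_i \cdot \rho_\theta)$ via the unbiasedness of $\widehat{\brho}$, and then plug in the two properties of deviation observables. Your added remark that obtaining $\delta_{ij}$ for the Jacobian is equivalent to the vanishing-bias-Jacobian form of condition (ii) is a helpful clarification.
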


\begin{proof}
    We verify the two conditions that define a locally unbiased estimator. Firstly, for all $i \in [m]$,
    \begin{equation*}
        \E[\widehat{\btheta}_i \mid \rho_{\theta^*}] = \theta_i^* + \tr(X_i \cdot \E[\widehat{\brho} \mid \rho_{\theta^*}]) = \theta_i^* + \tr(X_i \cdot \rho_{\theta^*}) = \theta_i^*,
    \end{equation*}
     so that $\E[ \widehat{\btheta} \mid \rho_{\theta^*}] = \theta^*$. Secondly, for all $i, j \in [m]$, 
    \begin{equation*}
        \partial_i \E[ \widehat{\btheta}_j \mid \rho_{\theta}] \Big|_{\theta = \theta^*} = \tr(X_j \cdot \partial_i \E[ \widehat{\brho} \mid \rho_{\theta}] )\Big|_{\theta = \theta^*} = \tr(X_j \cdot \partial_i \rho_\theta )\Big|_{\theta = \theta^*} = \delta_{ij}. \qedhere
    \end{equation*}
\end{proof}

Finally, Zhou and Chen's estimator is then the local shadow estimator, constructed from some fixed unbiased tomography algorithm and deviation observables $\{X_i\}_{i \in [m]}$. The $\{X_i\}$ are built from the symmetric logarithm derivative matrices $\{L_i\}_{i \in [m]}$ as in \Cref{lem:choice_of_deviation_observables}, with $L_i$ is given explicitly by \Cref{eq:canonical_Ls}. By \Cref{lem:local_shadow_estimator_unbiased}, this estimator is locally unbiased. In \cite{ZC25}, Zhou and Chen apply this using the bias-corrected single copy estimator of \Cref{sec:single-copy}. 

We are now ready to prove the main result of the section. Our argument is largely based on the proofs of Theorem 1 and Theorem 4 of \cite{ZC25}.

\begin{proof}[Proof of \Cref{thm:metrology_main}]
    We use Zhou and Chen's construction with the debiased Keyl's algorithm. Given access to $\rho_{\theta}^{\otimes n}$, the debiased Keyl's algorithm produces an unbiased estimator $\widehat{\brho}$ with 
    \begin{equation*}
        \E[\widehat{\brho} \otimes \widehat{\brho}] = \frac{n-1}{n} \rho_{\theta} \otimes \rho_{\theta} + \frac{1}{n} ( \rho_{\theta} \otimes I + I \otimes \rho_{\theta}) \cdot \swap + \frac{\E[\ell(\blambda)]}{n^2}\cdot \swap - \mathrm{Lower}_{\rho_{\theta}},
    \end{equation*}
    by \Cref{thm:unbiased,thm:var}. The MSEM is the matrix $V_n$ with entries:
    \begin{equation*}
        (V_n)_{ij}  = \E[(\widehat{\btheta}_i - \theta_i^*)\cdot(\widehat{\btheta}_j - \theta_j^*) \mid \rho_{\theta^*}] = \E [  \tr(X_i \cdot \widehat{\brho}) \cdot \tr( X_j \cdot \widehat{\brho}) \mid \rho_{\theta^*}] = \tr(X_i \otimes X_j \cdot \E[\widehat{\brho} \otimes \widehat{\brho} \mid \rho_{\theta^*}] ).
    \end{equation*}
    Plugging in our variance gives:
    \begin{align*}
        (V_n)_{ij} & = \frac{n-1}{n} \tr(X_i  \rho_{\theta^*}) \cdot \tr(X_j \rho_{\theta^*}) + \frac{1}{n} \big( \tr(\rho_{\theta^*} X_i X_j) + \tr(\rho_{\theta^*} X_j X_i) \big) + \frac{\E[\ell(\blambda)]}{n^2} \tr(X_iX_j) \\
        & \qquad - \tr(X_i \otimes X_j \cdot \mathrm{Lower}_{\rho_{\theta^*}}) \\
        & = \frac{1}{n} \big( \tr(\rho_{\theta^*} X_i X_j) + \tr(\rho_{\theta^*} X_j X_i) \big) + \frac{\E[\ell(\blambda)]}{n^2} \tr(X_iX_j) - \tr(X_i \otimes X_j \cdot \mathrm{Lower}_{\rho_{\theta^*}}).
    \end{align*}
    where we have used $\tr(X_i \rho_{\theta^*}) = 0$, as shown in \Cref{eq:trace_X_rho}. Since $\mathrm{Lower}_{\rho_{\theta^*}}$ is a positive combination of matrices of the form $P \otimes P \cdot \swap$, the last term is a positive combination of matrices of the form $\tr(X_i P X_j P)$, and hence symmetric in $i$ and $j$. Moreover, recall $P$ is positive semidefinite. 
    
    Consider the new matrix $V_n'$, obtained from $V_n$ by dropping the terms containing $\mathrm{Lower}_{\rho_{\theta^*}}$. That is, 
    \begin{equation}\label{eq:V_n'}
        (V_n')_{ij} = \frac{1}{n} \big( \tr(\rho_{\theta^*} X_i X_j) + \tr(\rho_{\theta^*} X_j X_i) \big) + \frac{\E[\ell(\blambda)]}{n^2} \tr(X_iX_j).
    \end{equation}
    Both $V_n$ and $V_{n}'$ are real and symmetric matrices, and hence Hermitian. We claim $V_n \preceq V_n'$. To show this, let $\ket{\psi} = \sum_{i=1}^m z_i \ket{i}$ be an arbitrary state. We have
    \begin{equation*}
        \bra{\psi} (V_n' - V_n) \ket{\psi}  = \sum_{i,j} \overline{z}_i z_j (V_n' - V_n)_{ij} = \sum_{i,j} \overline{z}_i z_j \tr( X_i \otimes X_j \cdot \mathrm{Lower}_{\rho_{{\theta}^*}}) = \tr( Q^\dagger \otimes Q \cdot \mathrm{Lower}_{\rho_{{\theta}^*}}),
    \end{equation*}
    where $Q \coloneq \sum_{i=1}^{m} z_i X_i$. However, this trace is a positive combination of terms of the form \begin{equation*}\tr(Q^\dagger \otimes Q \cdot P \otimes P \cdot \swap) = \tr(Q^\dagger P Q \cdot P).\end{equation*} Since $P$ is positive semidefinite, so too is $Q^\dagger P Q$. Thus, $\tr(Q^\dagger P Q \cdot P) \geq 0$, and $\bra{\psi} (V_n' - V_n) \ket{\psi} \geq 0$. Since $\ket{\psi}$ was arbitrary, $V_n' \succeq V_n$. So, it will suffice to bound $V_n'$ in the PSD order.

    Now consider the product $\mathcal{F} V_n' \mathcal{F}^T$. From \Cref{eq:V_n'}, we have
    \begin{align*}
        & (\mathcal{F} V_n' \mathcal{F}^T)_{ij}  = \sum_{k,\ell} \mathcal{F}_{ik} (V_n')_{k\ell} \mathcal{F}_{j\ell} \\
        & = \frac{1}{n} \tr(\rho_{\theta^*} \sum_{k=1}^m \mathcal{F}_{ik} X_k \sum_{\ell=1}^m \mathcal{F}_{j\ell} X_\ell) + \frac{1}{n}\tr(\rho_{\theta^*} \sum_{\ell=1}^m \mathcal{F}_{j\ell}X_\ell \sum_{k=1}^m \mathcal{F}_{ik}X_k) + \frac{\E[\ell(\blambda)]}{n^2} \tr(\sum_{k=1}^m \mathcal{F}_{ik}X_k\sum_{\ell=1}^m \mathcal{F}_{j\ell}X_\ell).
    \end{align*}
    However, we have
    \begin{equation*}
        \sum_{k=1}^m \mathcal{F}_{ik} X_k = \sum_{k,\ell} \mathcal{F}_{ik} (\mathcal{F}^{-1})_{k\ell} L_{\ell} = \sum_{\ell} \delta_{i\ell} L_\ell =  L_i,
    \end{equation*}
    where we have used $X_k \coloneq \sum_{\ell} (\mathcal{F}^{-1})_{k\ell} L_\ell$. Thus
    \begin{equation*}
        (\mathcal{F} V_n' \mathcal{F}^T)_{ij} = \frac{1}{n} \big(\tr(\rho_{\theta^*} L_i L_j) + \tr(\rho_{\theta^*} L_j L_i)\big) + \frac{\E[\ell(\blambda)]}{n^2} \tr(L_iL_j).
    \end{equation*}
    Now, note that the first term is equal to $2 \mathcal{F}_{ij}/n$, by definition of $\mathcal{F}_{ij}$. Thus
    \begin{equation}\label{eq:quantum_metrology_dummy_1}
        \mathcal{F} V'_n \mathcal{F}^T =\frac{2}{n} \mathcal{F} + \frac{2 \E[\ell(\blambda)]}{n^2} K,
    \end{equation}
    where $K$ is the matrix given by \begin{equation*}K_{ij} \coloneq \frac{1}{2} \tr(L_iL_j) = \sum_{\substack{k,\ell \\ \alpha_k + \alpha_\ell > 0}} \frac{ \big( (\partial_i \rho_{\theta})_{k\ell} \cdot (\partial_j \rho_{\theta})_{\ell k}\big)\big|_{\theta = \theta^*} }{(\alpha_k + \alpha_\ell)^2 }, 
    \end{equation*}
    where we have used the explicit expressions for $L_i$ in \Cref{eq:canonical_Ls}. We now bound $K$ in the PSD order. Let $\ket{\psi} = \sum_{i=1}^m z_i \ket{i}$ again be an arbitrary state. Define $R_{k\ell} \coloneq \sum_{i=1}^m z_i \cdot (\partial_i \rho_{\theta})_{k\ell} \big|_{\theta = \theta^*}$, and note that \begin{equation*}
        (R^\dagger)_{k\ell} = (R)_{\ell k}^* = \sum_{i=1}^m z_i^* \cdot (\partial_i \rho_\theta)^*_{\ell k} \big|_{\theta = \theta^*} = \sum_{i=1}^m z^*_i \cdot (\partial_i \rho_{\theta}^\dagger)_{k\ell} \big|_{\theta = \theta^*} = \sum_{i=1}^m z^*_i \cdot (\partial_i \rho_{\theta})_{k\ell} \big|_{\theta = \theta^*}.
    \end{equation*} So, we have
    \begin{equation*}
        \bra{\psi} K \ket{\psi} = \sum_{\substack{k,\ell \in [d]\\ \alpha_k + \alpha_\ell > 0}} \sum_{i,j \in [m]} z^*_i z_j \cdot \frac{ \big( (\partial_i \rho_{\theta})_{k\ell} \cdot (\partial_j \rho_{\theta})_{\ell k}\big)\big|_{\theta = \theta^*} }{(\alpha_k + \alpha_\ell)^2 } =  \sum_{\substack{k,\ell \\ \alpha_k + \alpha_\ell > 0}} \frac{(R^\dagger)_{k\ell} \cdot R_{\ell k} }{(\alpha_k+\alpha_\ell)^2} = \sum_{\substack{k,\ell \\ \alpha_k + \alpha_\ell > 0}} \frac{ |R_{\ell k}|^2 }{(\alpha_k + \alpha_\ell)^2},
    \end{equation*}
    and similarly, 
    \begin{equation*}
        \frac{1}{2} \bra{\psi} \mathcal{F} \ket{\psi} =  \sum_{\substack{k,\ell \\ \alpha_k + \alpha_\ell > 0}} \sum_{i,j} z^*_i z_j \cdot \frac{ \big( (\partial_i \rho_{\theta})_{k\ell} \cdot (\partial_j \rho_{\theta})_{\ell k}\big)\big|_{\theta = \theta^*} }{\alpha_k + \alpha_\ell } = \sum_{\substack{k,\ell \\ \alpha_k + \alpha_\ell > 0}} \frac{ |R_{\ell k}|^2 }{\alpha_k + \alpha_\ell}.
    \end{equation*}
    Now since 
    \begin{equation*}
        \bra{\psi} K \ket{\psi} = \sum_{\substack{k,\ell \\ \alpha_k + \alpha_\ell > 0}} \frac{ |R_{\ell k}|^2 }{(\alpha_k + \alpha_\ell)^2} \leq \frac{1}{\alpha_{r}} \sum_{\substack{k,\ell \\ \alpha_k + \alpha_\ell > 0}} \frac{ |R_{\ell k}|^2 }{\alpha_k + \alpha_\ell} = \frac{1}{2\alpha_r} \bra{\psi} \mathcal{F} \ket{\psi},
    \end{equation*}
    we have $K \preceq \mathcal{F}/2\alpha_r$. Hence, from \Cref{eq:quantum_metrology_dummy_1},
    \begin{equation*}
        \mathcal{F} V_n' \mathcal{F}^T \preceq \left(\frac{2}{n} + \frac{4 \E [\ell(\blambda)]}{n^2} \cdot \frac{1}{2\alpha_r}\right) \cdot \mathcal{F} \preceq \frac{2}{n} \cdot \left(1 + \frac{r}{n \alpha_r}\right) \cdot \mathcal{F}. 
    \end{equation*}
    Recall that $\mathcal{F}^T = \mathcal{F}$, because $\mathcal{F}$ is symmetric. So, by multiplying on the left and right by $\mathcal{F}^{-1}$, we have
    \begin{equation*}
        V_n' \preceq \frac{2}{n} \cdot \left( 1 + \frac{r}{n\alpha_r} \right) \cdot \mathcal{F}^{-1}.
    \end{equation*}
    Finally, since $V_n \preceq V_n'$, for $n \geq r/(\epsilon \alpha_r)$ we conclude
    \begin{equation*}
        V_n \preceq  \left( 1 + \epsilon\right) \cdot \frac{2}{n} \cdot \mathcal{F}^{-1}. \qedhere
    \end{equation*}
\end{proof}

\newpage
\part{The Clebsch-Gordan transform}
\label{part:cg}

\newcommand{\UCG}[1]{\calU^{(#1)}_{\mathrm{CG}}}
\newcommand{\UCGdagger}[1]{\calU^{(#1)\dagger}_{\mathrm{CG}}}
\newcommand{\UR}[1]{\calU^{(#1)}_{\mathrm{R}}}
\newcommand{\URdagger}[1]{\calU^{(#1)\dagger}_{\mathrm{R}}}
\newcommand{\hatUSW}[1]{\widehat{\calU}^{(#1)}_{\mathrm{SW}}}
\newcommand{\hatUSWdagger}[1]{\widehat{\calU}^{(#1)\dagger}_{\mathrm{SW}}}

\newcommand{\smalloneboxSSYT}[1]{\ytableausetup
        {smalltableaux, centertableaux,boxframe=normal}
        \begin{ytableau}
        #1
        \end{ytableau}\ytableausetup
        {nosmalltableaux}}

\section{The Clebsch-Gordan transform}

The tensor product of two irreps of $U(d)$ is generically reducible. That is, for $\lambda$ and $\mu$ partitions of length at most $d$, we have the decomposition
    \begin{equation*} \label{eq:ud-tensor-decomp}
        V^d_\lambda \otimes V^d_\mu \stackrel{U(d)}{\cong} \bigoplus_{\ell(\tau) \leq d} m_\tau \cdot V^d_\tau,
    \end{equation*}
for some integers $\{m_\tau\}$. It will suffice for us to consider only the special case where $\mu = (1)$. In this case, the decomposition is multiplicity-free \cite[Chapter 7.2]{Har05}: 
\begin{equation} \label{eq:CG_transform_branching_rule}
    V^d_\lambda \otimes V^d_{(1)} \stackrel{U(d)}{\cong} \bigoplus_{\substack{\lambda \nearrow \mu\\\ell(\mu) \leq d}} V^d_\mu. 
\end{equation}

Recall that for $\mu = (1)$, our choice of $V^d_\mu$ is the defining representation on $\C^d$, and our choice of GT basis is the computational basis $\{ \ket{i} \}_{i \in [d]}$, on which $\nu_\mu(U)$ acts as $U$ (see \Cref{ex:defining_rep}). Therefore, by \Cref{lem:unitary_isomorphism}, for each $\lambda$ there exists a unitary $\calU^{(\lambda)}_{\mathrm{CG}}: V^d_\lambda \otimes \C^d \to \bigoplus_{\mu} V^d_\mu$ that implements the isomorphism in \Cref{eq:CG_transform_branching_rule}. In other words, for all $U \in U(d)$,
\begin{equation} \label{eq:CG_unitary_fixed_lambda}
    \calU^{(\lambda)}_{\mathrm{CG}} \cdot \big(\nu_{\lambda}(U) \otimes U \big) \cdot \calU^{(\lambda)\dagger}_{\mathrm{CG}} = \sum_{\substack{\lambda \nearrow \mu\\\ell(\mu) \leq d}}  \ketbra{\mu} \otimes \nu_{\mu} (U). 
\end{equation}

With these unitaries fixed, we can collect them into a single operator.

\begin{definition}[The Clebsch-Gordan transform]
    The \emph{Clebsch-Gordan transform} is the unitary operator on $\bigoplus_{\lambda} \Specht_\lambda \otimes (V^d_\lambda \otimes \C^d) \cong (\C^d)^{\otimes (n+1)}$ given by
    \begin{equation} \label{eq:CG_unitary_def}
    \UCG{n+1} \coloneq  \sum_{\substack{\lambda \vdash n \\ \ell(\lambda) \leq d}} \ketbra{\lambda} \otimes I_{\dim(\lambda)} \otimes \calU^{(\lambda)}_{\mathrm{CG}}. 
\end{equation}
\end{definition}

The Clebsch-Gordan transform takes the state $\ket{\lambda} \otimes \ket{S} \otimes \ket{T} \otimes \ket{i}$, to a superposition of states of the form $\ket{\lambda} \otimes \ket{S} \otimes \ket{\mu} \otimes \ket{T'}$, where $\lambda \nearrow \mu$, and $T' \in \mathrm{SSYT}(\mu,d)$.

\section{Clebsch-Gordan coefficients}
\label{sec:CG-coefficients}
\usetikzlibrary{arrows.meta,positioning}
\newcommand{\SSYT}{\mathrm{SSYT}}

\begin{definition}[Clebsch-Gordan coefficients] \label{def:CG_coefficients}
    Let $\lambda \vdash n$, $T \in \mathrm{SSYT}(\lambda, d)$, and $k \in [d]$. We can write:
    \begin{equation*}
        \calU^{(\lambda)}_{\mathrm{CG}} \cdot \ket*{T} \otimes \ket*{k} = \sum_{\substack{\mu \vdash (n+1,d) \\ T' \in \mathrm{SSYT}(\mu)}} \bra{\mu, T'} \calU^{(\lambda)}_{\mathrm{CG}} \ket*{T, k} \cdot \ket*{\mu, T'}.
    \end{equation*}
    The coefficients $\bra{\mu, T'} \calU^{(\lambda)}_{\mathrm{CG}} \ket*{T, k}$ are known as the \emph{Clebsch-Gordan (CG) coefficients}. We will typically write $\braket*{T'}{T, k}$ as shorthand for $\bra{\mu, T'} \calU^{(\lambda)}_{\mathrm{CG}} \ket*{T, k}$, leaving $\mu = \shape(T')$ and $\calU^{(\lambda)}_{\mathrm{CG}}$ implicit. 
\end{definition}

Our proofs will rely on explicit formulas for the CG coefficients. To state these formulas, we will first need to introduce new notation which will allow us to describe how an SSYT $T$ changes when we restrict it to a specific subset of letters.

\begin{notation}
    Let $\lambda$ be a Young diagram. We define $\lambda_{\leq k} = (\lambda_1, \dots, \lambda_k)$ to be the first $k$ rows of $\lambda$.
\end{notation}

\begin{notation}[Restriction of SSYTs to alphabet ${[}p{]}$]
    Let $T$ be an SSYT of shape $\lambda$ and alphabet $[d]$. We use $T^{[p]}$ to denote the SSYT of alphabet $[p]$ that we obtain when we delete all the boxes in $T$ which contain letters strictly larger than $p$. 
    Since the entries of $T$ are weakly increasing along rows and strictly increasing along columns, $T^{[p]}$ is a valid SSYT of height at most $p$. We also define $T^{\leq p} \coloneq \shape(T^{[p]})$ to be its shape.
    \begin{figure}[H]
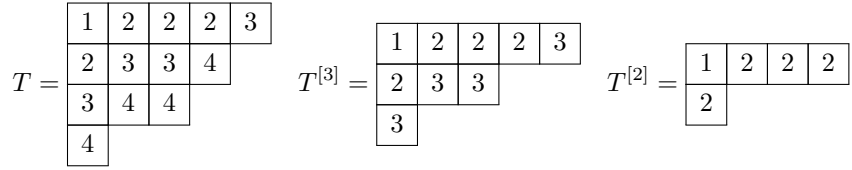
\centering
        \[T =
        \ytableausetup
            {nosmalltableaux, centertableaux,boxframe=normal}
        \begin{ytableau}
        1 & 2 & 2 & 2 & 3 \\
        2 & 3 & 3 & 4 \\
        3 & 4 & 4 \\
        4 \end{ytableau}\quad
        T^{[3]} = \begin{ytableau}
        1 & 2 & 2 & 2 & 3 \\
        2 & 3 & 3 \\
        3 \end{ytableau}\quad
        T^{[2]} = \begin{ytableau}
        1 & 2 & 2 & 2 \\
        2\end{ytableau}
        \]
        \caption{An example of the SSYT notation that we will use in this section. Given the SSYT $T$ over the alphabet $[4]$, $T^{[3]}$ and $T^{[2]}$ are the SSYTs we obtain when we delete all the boxes which are bigger than $3$ and $2$ respectively. Thus $T^{\leq 4} = (5, 4, 3, 1)$, $T^{\leq 3} = (5, 3, 1)$, and $T^{\leq 2} = (4, 1)$.}
        \label{fig:ssyt-notation-example}
    \end{figure}
\end{notation}

\begin{notation}[Horizontal strip]
    Let $T$ be an SSYT. We define
    \begin{equation*}
        T^{=p} \coloneq (T^{\leq p}, T^{\leq (p-1)}) = T^{\leq p} \setminus T^{\leq (p-1)}
    \end{equation*}
    to be the skew shape that contains all boxes with value $p$. Due to the properties of an SSYT, this skew shape is a \emph{horizontal strip}, that is, a union of continuous horizontal lines of boxes that do not overlap vertically.
    \begin{figure}[H]
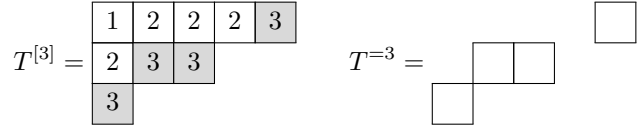

    \centering
        \[
        T^{[3]} = \begin{ytableau}
        1 & 2 & 2 & 2 & *(gray!30) 3 \\
        2 & *(gray!30) 3 & *(gray!30) 3 \\
        *(gray!30) 3 \end{ytableau}\qquad
        T^{=3} =
        \begin{ytableau}
        \none & \none & \none & \none &  \\
        \none &  & \\
        \\
        \end{ytableau}
        \]
        \caption{Let $T$ be the SSYT from~\Cref{fig:ssyt-notation-example}. Above, we have $T^{[3]}$ and $T^{=3}$. The cells of $T^{=3}$ are shaded in $T^{[3]}$ -- these are the cells containing the symbol $3$.}
        \label{fig:horizontal-strip-example}
    \end{figure}
\end{notation}

We introduce two additional pieces of notation that correspond to the shapes that we get when we add boxes to the horizontal strip $T^{=p} = (T^{\leq p}, T^{\leq p-1})$.

\begin{notation}[Adding boxes to a horizontal strip $T^{=p}$]
    Let $T$ be an SSYT. We define
    \begin{equation*}
        T^{=p}_{+i} \coloneqq (T^{\leq p} + e_i,\quad T^{\leq p-1})
    \end{equation*}
    to be the shape of the skew diagram that we get when we add a box to the end of the $i$-th row of $T^{=p}$. Similarly,
    \begin{equation*}
        T^{=p}_{+i,-j} \coloneqq (T^{\leq p} + e_i,\quad T^{\leq p-1} + e_j)
    \end{equation*}
    is the shape of the skew diagram we obtain when we add a box to the end of the $i$-th row and remove a box from the beginning of the $j$-th row of $T^{=p}$.
    \begin{figure}[H]
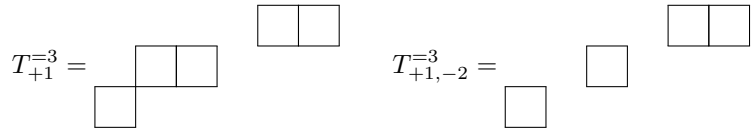
\centering
        \[T^{=3}_{+1} =
        \begin{ytableau}
        \none & \none & \none & \none & & \\
        \none &  &  \\
        \\
        \end{ytableau}\qquad
        T^{=3}_{+1, -2} =
        \begin{ytableau}
        \none & \none & \none & \none &  & \\
        \none & \none & & \none \\
        \\
        \end{ytableau}
        \]
        \caption{An example of the skew shapes that we get when we add and subtract cells from the skew shape $T^{=3}$ from~\Cref{fig:horizontal-strip-example}. On the left diagram, we have added a cell to the first row. On the right diagram, we have both added a cell to the first row and removed a cell from the second row.}
        \label{fig:example-add-remove-cells}
    \end{figure}
\end{notation}

It is well-known that the CG coefficients can be expressed as a product of numbers that only depend on the shapes of the skew diagrams $\{T^{=p}, \smalloneboxSSYT{k}^{\,=p}, (T')^{=p}\}_{p \in \{1, \dots, d\}}$. In particular, the following was shown by~\cite[Section 18.2.2, Equation 3]{VK92}.

\begin{theorem}[CG coefficients are products of scalar factors]
    \label{thm:cg-coeff-as-product}
    When~\Cref{eq:ud-tensor-decomp} is multiplicity-free, the Clebsch-Gordan coefficients can be written as a product:
    \begin{equation}
        \label{eq:cg-coeff-as-product}
        \braket*{T'}{T,k}
        = \prod_{p=1}^d \left(T^{=p}, \smalloneboxSSYT{k}^{\,=p} \mid (T')^{=p}\right).
    \end{equation}
    The factor $\left(T^{=p}, \smalloneboxSSYT{k}^{\,=p} \mid (T')^{=p}\right)$
    that appears on the right-hand side is called the \emph{$U(d-p)$-scalar factor of the Clebsch-Gordan coefficient}. These coefficients are also known as \emph{reduced Wigner operators}. Note that each $U(d-p)$ scalar factor only depends on the shapes of the skew diagrams $T^{=p}, \smalloneboxSSYT{k}^{\,=p}, (T')^{=p}$.
\end{theorem}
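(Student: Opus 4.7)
The plan is to establish this classical factorization by induction on $d$, exploiting the multiplicity-free structure of the subgroup chain $U(d) \supset U(d-1) \supset \cdots \supset U(1)$ that defines the Gelfand-Tsetlin basis. First I would recall that every GT basis vector $\ket{T} \in V^d_\lambda$ is determined (up to a fixed phase convention) by the chain of shapes $\varnothing \precsim T^{\leq 1} \precsim \cdots \precsim T^{\leq d} = \lambda$ coming from the iterated branching rule for $U(d) \downarrow U(d-1) \downarrow \cdots \downarrow U(1)$. Analogously, the computational basis vector $\ket{k} \in V^d_{(1)} = \C^d$ corresponds to the chain which is empty at levels $< k$ and the single box $(1)$ at levels $\geq k$, so that its $p$-th horizontal strip increment is exactly $\smalloneboxSSYT{k}^{\,=p}$. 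Thus the pair $\ket{T, k}$ and the target $\ket{T'}$ are both encoded by parallel chains whose $p$-th increments are precisely the three horizontal strips appearing in the theorem.

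Next I would peel off one level at a time using the Wigner-Eckart theorem in its multiplicity-free form. Restricting the CG isomorphism
\begin{equation*}
\calU^{(\lambda)}_{\mathrm{CG}} : V^d_\lambda \otimes V^d_{(1)} \;\longrightarrow\; \bigoplus_{\lambda \nearrow \mu} V^d_\mu
\end{equation*}
from $U(d)$ to $U(d-1)$, I would use $V^d_{(1)}\downarrow_{U(d-1)} \cong V^{d-1}_{(1)} \oplus V^{d-1}_{\varnothing}$ together with Theorem~2.30 to decompose both sides into $U(d-1)$-isotypic components. Because the combined decomposition of $V^d_\lambda \otimes V^d_{(1)}$ into $U(d-1)$-irreducibles is multiplicity-free, \Cref{lem:schur-lemma} forces $\calU^{(\lambda)}_{\mathrm{CG}}$ to act as a single scalar between each matched pair of $U(d-1)$-isotypic components. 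This scalar is the top-level factor $\bigl(T^{=d},\smalloneboxSSYT{k}^{\,=d} \mid (T')^{=d}\bigr)$: it depends only on the level-$d$ data, because that data alone labels the relevant $U(d-1)$-isotypic components inside $V^d_\lambda \otimes V^d_{(1)}$ and inside $V^d_{\shape(T')}$.

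After extracting this top-level scalar, what is left is exactly the CG coefficient for the $U(d-1)$ problem of coupling $T^{[d-1]} \in \mathrm{SSYT}(T^{\leq d-1}, d-1)$ with either a single box of $V^{d-1}_{(1)}$ (when $k < d$) or with the trivial representation (when $k = d$) to produce $(T')^{[d-1]}$. Applying the same Schur-lemma argument inductively to $U(d-1) \supset U(d-2)$, then $U(d-2)\supset U(d-3)$, and so on down to $U(1)$, yields one scalar factor at each level, and their product reconstructs $\braket*{T'}{T,k}$ as claimed. The main obstacle is verifying cleanly that each extracted scalar depends only on the three skew shapes at its level, rather than on the full SSYT data residing below; this reduces to a careful bookkeeping argument using the fact that the GT chain is canonical up to phase once the successive branchings are fixed, together with the explicit decomposition of $V^d_{(1)}$ under the subgroup chain. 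Once this is in hand, the recursion telescopes immediately into the stated product formula.
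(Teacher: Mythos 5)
The paper does not actually prove this theorem; it is cited directly from [VK92], so there is no internal proof to compare against. Your proposed approach --- peeling off one level of the Gelfand-Tsetlin tower $U(d) \supset U(d-1) \supset \cdots \supset U(1)$ and extracting one scalar per level --- is the classical Biedenharn--Louck / Vilenkin--Klimyk route to reduced Wigner coefficients, so the overall strategy is the standard one. However, the pivotal step as you have written it rests on a false claim. You assert that ``the combined decomposition of $V^d_\lambda \otimes V^d_{(1)}$ into $U(d-1)$-irreducibles is multiplicity-free,'' and this is not true. Already for $d=2$ and $\lambda=(2,0)$, restricting $V^2_{(2,0)} \otimes V^2_{(1)}$ to $U(1)$ gives $V^1_{(3)} \oplus 2\cdot V^1_{(2)} \oplus 2\cdot V^1_{(1)} \oplus V^1_{(0)}$. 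In general a given $U(d-1)$-irrep $V^{d-1}_\nu$ appears once for each $\mu$ with $\lambda \nearrow \mu$ and $\nu \precsim \mu$, which is typically more than one. Consequently Schur's lemma does \emph{not} immediately force a single scalar between $U(d-1)$-isotypic components.

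What actually makes the argument go through is a finer kind of multiplicity-freeness: each individual branching $V^d_\mu \downarrow U(d-1)$ and each coupling $V^{d-1}_\sigma \otimes V^{d-1}_{(1)}$ is multiplicity-free, so fixing the full chain of shapes on both source and target (i.e.\ the skew shapes at every level) pins down one specific channel through the multiplicity space. The $U(d)$-equivariance of $\calU^{(\lambda)}_{\mathrm{CG}}$ --- not merely its $U(d-1)$-equivariance --- is then what forces the intertwiner to respect this channel labeling, and only within a fixed channel does Schur's lemma yield a scalar. You flag exactly this ``bookkeeping'' as the main obstacle, and that diagnosis is correct, but it is not a routine verification: it is the actual mathematical content of the theorem, and your proposal assumes it away via an incorrect multiplicity statement. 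Since the paper sources the result from [VK92], citing it is a perfectly good resolution; but if you intend to \emph{prove} the factorization, the Schur-lemma step must be carried out channel by channel along the Gelfand-Tsetlin chains rather than wholesale across $U(d-1)$-isotypic components.
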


Here we use $\smalloneboxSSYT{k}$ to denote the SSYT that corresponds to the $\ket*{k}$ basis vector of $V^{d}_{(1)}$, which has a single box filled with the number $k$. The shape of the restricted SSYT $\smalloneboxSSYT{k}^{\,\leq p}$ has the following simple description:
\begin{equation*}
    \smalloneboxSSYT{k}^{\,\leq p} = \begin{cases}
        (1, 0^{p-1}) & p \geq k, \\
        (0, 0^{p-1}) & p < k.
    \end{cases}
\end{equation*}
Thus, the skew diagrams that can appear in the scalar factors of \Cref{eq:cg-coeff-as-product} are one of three possible~choices:
\begin{equation*}
    \smalloneboxSSYT{k}^{\,= p} = \begin{cases}
        e^{=p}_{11} \coloneqq ((1, 0^{p-1}), (1, 0^{p-2})) & p \geq k+1, \\
        e^{=p}_{10} \coloneqq ((1, 0^{p-1}), (0, 0^{p-2})) & p = k, \\
        e^{=p}_{00} \coloneqq ((0, 0^{p-1}), (0, 0^{p-2})) & p < k.
    \end{cases}
\end{equation*}
The scalar factors which appear in~\Cref{eq:cg-coeff-as-product} can be computed using the formulas in the following lemma.

\begin{lemma}[Formulas for the $U(d)$-scalar factors]
    \label{lem:cg-rules}
    Let $c_s(T^{\leq k}) = T^{\leq k}_s - s$ be the content of the last box in the $s$-th row. Define $S(i, j) = 1$ if $i \leq j$, and $S(i, j) = -1$ if $i > j$. Then, for some choice of phases of the GT basis for $\nu_\lambda$, the following three equations hold.
    \begin{enumerate}
        \item[(i)] For any SSYT $T$ and any $p$,
        \begin{align}
            \label{eq:equation-e00}
            \left(T^{=p}, e^{=p}_{00} \mid T^{=p}\right) = 1,
        \end{align}
        and $\left(T^{=p}, e^{=p}_{00} \mid (T')^{=p}\right) = 0$ otherwise.
    
        \item[(ii)] Section 18.2.10, equation 3 in~\cite{VK92}
        \begin{align}
        \label{eq:equation-3}
        \left(T^{=p}, e^{=p}_{10} \mid T^{=p}_{+i}\right)
        =
        \left|\frac{\prod_{j=1}^{p-1} \left(c_j(T^{\leq p-1}) - c_i(T^{\leq p}) - 1\right)}{\prod_{j=1,j\neq i}^p \left(c_j(T^{\leq p}) - c_i(T^{\leq p})\right)}\right|^{1/2},
        \end{align}
        and $\left(T^{=p}, e^{=p}_{10} \mid (T')^{=p}\right) = 0$ otherwise.

        \item[(iii)] Section 18.2.10, equation 4 in~\cite{VK92}
        \begin{align}
        \label{eq:equation-4}
        &\left(T^{=p}, e^{=p}_{11} \mid T^{=p}_{+i,-j}\right) \nonumber \\
        &=
        S(i, j)\left|\frac{\prod_{k \neq j}^{p-1} \left(c_k(T^{\leq p-1})- c_i(T^{\leq p})-1\right) \prod_{k \neq i}^{p} \left(c_k(T^{\leq p}) - c_j(T^{\leq p-1})\right)}{\prod_{k\neq i}^{p} \left(c_k(T^{\leq p}) - c_i(T^{\leq p})\right)\prod_{k \neq j}^{p-1}\left(c_k(T^{\leq p-1}) - c_j(T^{\leq p-1})-1\right)}\right|^{1/2},
    \end{align}
    and $\left(T^{=p}, e^{=p}_{11} \mid (T')^{=p}\right) = 0$ otherwise.
    
    \end{enumerate}
\end{lemma}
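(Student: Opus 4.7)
The plan is to treat this as essentially a bookkeeping statement: equations (ii) and (iii) are precisely the reduced Wigner coefficient formulas recorded in Vilenkin–Klimyk~\cite{VK92}, while (i) is forced by the choice of normalization. My job in the proof is therefore to (a) fix the phase convention on the GT basis so that the right-hand sides appear without sign ambiguity beyond $S(i,j)$, (b) dispatch (i) directly, and (c) verify that our setup matches the one in which \cite{VK92} derives (ii) and (iii).

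First, I would set up the phase convention. Using the multiplicity-free branching rule of \Cref{thm:Ud_branching_rule} and the recursive decomposition from \Cref{sec:irreps_of_Ud}, each GT basis vector $\ket{T} \in V^d_\lambda$ is determined up to a scalar, and the CG isometry $\calU^{(\lambda)}_{\mathrm{CG}}$ factors coordinate-wise in $p$ by the $U(d{-}p)$ scalar factorization of \Cref{thm:cg-coeff-as-product}. I would fix phases by declaring every GT basis vector $\ket{T^\lambda}$ built along the canonical chain $\lambda^{(1)} \precsim \dots \precsim \lambda^{(d)}$ to be the positive-real multiple of the iterated image of the unique vector in $V^1_{\lambda^{(1)}}$ under a fixed sequence of $U(d-p)$-highest-weight embeddings. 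This is the Gelfand–Tsetlin phase convention used by~\cite{VK92}, and with it all $U(d-p)$ scalar factors are real, with signs determined by the standard Racah phase $S(i,j)$ appearing in (iii).

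Next, I handle equation (i). The skew shape $e^{=p}_{00}$ is empty, so at step $p$ we are tensoring trivially and the contribution of the $U(d-p)$ scalar factor must be the identity on the shape $T^{=p}$; any factor to a different $(T')^{=p}$ vanishes because $T$ and $T'$ would then disagree on which rows received no boxes. This immediately gives $\left(T^{=p}, e^{=p}_{00} \mid T^{=p}\right) = 1$ and zero otherwise.

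For equations (ii) and (iii), I would note that once the factorization of \Cref{thm:cg-coeff-as-product} is in hand, each scalar factor is the CG coefficient for coupling the one-row pattern $e^{=p}_{10}$ or $e^{=p}_{11}$ (i.e.\ a box added, or a box added and one removed) into the restriction at level $p$. These are the classical $U(d-p)$ reduced Wigner coefficients for coupling the defining representation to an irrep whose highest weight is $T^{\leq p}$; the stated expressions, with the contents $c_s = T^{\leq k}_s - s$ playing the role of shifted row lengths, are exactly \cite[Sec.~18.2.10, Eqs.~(3)–(4)]{VK92}. I would verify matching of conventions (our $\nearrow$ interlacing matches theirs, and our $c_s$ matches their shifted coordinate), and then invoke their derivation rather than reprove it. The main obstacle — and the reason the lemma is worth isolating rather than just cited in passing — is aligning phase conventions and handling the absolute-value signs so that the sign $S(i,j)$ in (iii) is the only overall sign that survives; this is a routine but error-prone check that I would do by comparing the action of $\nu_\lambda$ on raising and lowering operators in both conventions.
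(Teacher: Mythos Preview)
Your proposal is correct and matches the paper's treatment: the paper does not supply a proof of this lemma at all, instead stating the formulas as facts and pointing to \cite[Section~18.2.10, Equations~(3) and~(4)]{VK92} for (ii) and (iii), exactly as you suggest. Your plan to handle (i) directly (the empty skew shape forces the scalar factor to be $1$) and to cite~\cite{VK92} for the remaining two after aligning conventions is precisely the intended approach; if anything you are being more careful than the paper by spelling out the phase-convention check.
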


\Cref{eq:cg-coeff-as-product} expresses the Clebsch-Gordan coefficients as products of $d$ scalar factors. For the tableaux that we are dealing with in this paper (that are close to highest-weight tableaux), it will be the case that most of these scalar factors are trivial. The following lemma gives a condition for a scalar factor to be trivial that will be sufficient for our purposes.

\begin{lemma}
    \label{eq:eq-4-becomes-trivial}
    Let $T$ be an SSYT, and $i \in [p]$ be a row such that $T$ has no cells with the value $p$ on the $i$-th row. Then
    \begin{equation*}
        \left(T^{=p}, e^{=p}_{11} \mid T^{=p}_{+i,-i}\right) = 1.
    \end{equation*}
\end{lemma}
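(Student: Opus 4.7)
The plan is to prove this by direct substitution into the formula for the scalar factor given in \Cref{eq:equation-4} of \Cref{lem:cg-rules}, exploiting a content identity that is forced by the hypothesis that row $i$ of $T$ contains no cell with value $p$.

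First, observe that the hypothesis translates directly into a statement about shapes: if row $i$ has no cells with value $p$ in $T$, then the $i$-th row of $T^{\leq p}$ and of $T^{\leq p-1}$ have the same length, i.e. $T^{\leq p}_i = T^{\leq p-1}_i$. Writing $c_k \coloneq c_k(T^{\leq p}) = T^{\leq p}_k - k$ and $c_k' \coloneq c_k(T^{\leq p-1}) = T^{\leq p-1}_k - k$, this says precisely that $c_i = c_i'$. This is the single identity doing all the work.

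Next, specialize \Cref{eq:equation-4} with $j = i$. Since $i \leq i$ we have $S(i,i) = +1$, and the substitution $c_j(T^{\leq p-1}) = c_i' = c_i$ yields
\begin{equation*}
\left(T^{=p}, e^{=p}_{11} \mid T^{=p}_{+i,-i}\right)
= \left|\frac{\prod_{k \neq i}^{p-1} \bigl(c_k' - c_i - 1\bigr) \cdot \prod_{k \neq i}^{p} \bigl(c_k - c_i\bigr)}{\prod_{k\neq i}^{p} \bigl(c_k - c_i\bigr) \cdot \prod_{k \neq i}^{p-1}\bigl(c_k' - c_i - 1\bigr)}\right|^{1/2}.
\end{equation*}
The numerator and denominator are now literally the same product, so the ratio is $1$, its square root is $1$, and the sign $S(i,i) = +1$ gives the claimed value. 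The only thing to double-check is that $T^{=p}_{+i,-i}$ is a legal shape (otherwise the formula does not apply); this holds because removing a box from row $i$ of $T^{\leq p-1}$ and adding a box to row $i$ of $T^{\leq p}$ is well-defined precisely when $T^{\leq p-1}_i \geq 1$ or $T^{\leq p-1}_i = 0$ with the corresponding modification, and the resulting pair still interlaces correctly since we have only translated a box along a single row.

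There is no genuine obstacle here: the entire content of the lemma is the contents-coincidence $c_i = c_i'$, and once this is observed the cancellation in \Cref{eq:equation-4} is immediate. The only mild care needed is to verify legality of the skew shape $T^{=p}_{+i,-i}$ and to note that the products in the formula are over the appropriate ranges so that the factors indexed by $k \neq i$ match up on the two sides — a brief bookkeeping check but not a real step.
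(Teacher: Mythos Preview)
Your proof is correct and follows essentially the same approach as the paper: both observe that the hypothesis forces $c_i(T^{\leq p}) = c_i(T^{\leq p-1})$, substitute $j=i$ into \Cref{eq:equation-4}, note $S(i,i)=+1$, and cancel identical numerator and denominator. Your additional remark about the legality of $T^{=p}_{+i,-i}$ is not needed (the paper simply applies the formula), but it does no harm.
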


\begin{proof}
    Since $T$ has no $p$ values in the $i$-th row, it holds that $c_i(T^{\leq p}) = c_i(T^{\leq p-1})$.
    Then~\Cref{eq:equation-4} implies that
    \begin{align*}
        \left(T^{=p}, e^{=p}_{11} \mid T^{=p}_{+i,-i}\right)
        &= S(i, i)\left|\frac{\prod_{k \neq i}^{p-1} \left(c_k(T^{\leq p-1})- c_i(T^{\leq p})-1\right) \prod_{k \neq i}^{p} \left(c_k(T^{\leq p}) - c_i(T^{\leq p-1})\right)}{\prod_{k\neq i}^{p} \left(c_k(T^{\leq p}) - c_i(T^{\leq p})\right)\prod_{k \neq i}^{p-1}\left(c_k(T^{\leq p-1}) - c_i(T^{\leq p-1})-1\right)}\right|^{1/2} \\
        &= \left|\frac{\prod_{k \neq i}^{p-1} \left(c_k(T^{\leq p-1})- c_i(T^{\leq p-1})-1\right) \prod_{k \neq i}^{p} \left(c_k(T^{\leq p}) - c_i(T^{\leq p-1})\right)}{\prod_{k\neq i}^{p} \left(c_k(T^{\leq p}) - c_i(T^{\leq p-1})\right)\prod_{k \neq i}^{p-1}\left(c_k(T^{\leq p-1}) - c_i(T^{\leq p-1})-1\right)}\right|^{1/2} =1. \qedhere
    \end{align*}
\end{proof}

\subsection{Clebsch-Gordan insertion}

In this subsection, we describe a simple algorithm for inserting a new box containing a symbol $k \in [d]$ into an SSYT $T$. The output is a set of SSYTs, each with a matching coefficient, $\{(T', c_{T'})\}$. The SSYTs in this set turn out to be exactly the SSYTs with nonzero Clebsch-Gordan coefficients, and $c_{T'}$ is the Clebsch-Gordan coefficient $\braket*{T'}{T,k}$. This insertion algorithm, \emph{Clebsch-Gordan insertion}, thus offers a more intuitive perspective on the Clebsch-Gordan transform and coefficients. Clebsch-Gordan insertion is, in fact, a special case of a continuous family of ``quantum'' insertion rules generalizing RSK insertion (see, for example, ~\cite[Chapter 3]{Wri16}), originally introduced by Sonya Berg in her thesis \cite{Ber12}.

\begin{definition}[Clebsch-Gordan insertion] \label{def:CG_insertion}
    Given an SSYT $T$ with alphabet $[d]$, we insert a letter $k \in [d]$ into $T$ to obtain a \emph{set} of output SSYTs as follows:
    \begin{enumerate}
        \item Initialize $T' \leftarrow T$, $c_{T'} \leftarrow 1$, and $j \leftarrow \mathrm{None}$. Here, $T'$ is a tentative output SSYT, $c_{T'}$ its tentative coefficient, and $j$ stores a previous row index. 
        \item In all possible ways, choose a row $i \leq k$ such that $T^{\leq k} + e_i$ is a valid Young diagram. 
        \item Find the leftmost cell of $T$ in row $i$ containing a letter $\ell$ such that $\ell > k$, if such a cell exists. Let $T''$ be obtained from $T'$ by replacing $\ell$ with $k$ in this cell. Set
        \begin{equation*}
            T' \leftarrow T'', \qquad c_{T'} \leftarrow c_{T'} \cdot \mathrm{SF}(T, k; i, j), \qquad k \leftarrow \ell, \qquad j \leftarrow i,
        \end{equation*}
        and return to Step 2. We will say $\ell$ has been \emph{bumped} by $k$.
        \item If no such cell exists, then append a new cell containing $k$ to the end of row $i$ to obtain $T''$. Set
        \begin{equation*}
            T' \leftarrow T'', \qquad c_{T'} \leftarrow c_{T'} \cdot \mathrm{SF}(T, k; i, j),
        \end{equation*}
        append $(T', c_{T'})$ to the output, and terminate. 
    \end{enumerate}
    Here, $\mathrm{SF}(T, k; i, j)$ is a scalar factor, given by
    \begin{equation*}
        \mathrm{SF}(T, k; i, j) = \begin{cases}
             \left(T^{=k}, e^{=k}_{10} \mid T^{=k}_{+i}\right) & j = \mathrm{None},\\ 
             \left(T^{=k}, e^{=k}_{11} \mid T^{=k}_{+i,-j}\right) & \text{otherwise}.
         \end{cases}
    \end{equation*}
\end{definition}

See \Cref{fig:CG_insertion} for an example of Clebsch-Gordan insertion. 

\begin{figure} 
    \centering
    \begin{tikzpicture}[>=Latex, node distance=3cm]

\node (top) at (0, 0) {
    $\begin{ytableau}
    1 & 2 \\
    3
    \end{ytableau}$
    };

    \node[right=0.5cm of top] {\color{blue}$\begin{aligned} k &= 2 \\ j &= \mathrm{None} \end{aligned}$ \color{black}};

\node (left) at (-3, -3) {
    $\begin{ytableau}
    1 & 2 & *(gray!30)2\\
    3
    \end{ytableau}$
    };;

\node (right) at (3, -3) {
$\begin{ytableau}
1 & 2 & \none \\
*(gray!30)2
\end{ytableau}$
};

\node[right=0.5cm of right] {\color{blue}$\begin{aligned} k &= 3 \\ j &= 2 \end{aligned}$\color{black}};

\node (rightleft) at (0, -6) {
$\begin{ytableau}
1 & 2 & *(gray!30)3 \\
2
\end{ytableau}$
};

\node (rightmiddle) at (3, -6) {
$\begin{ytableau}
1 & 2 \\
2 & *(gray!30)3
\end{ytableau}$
};

\node (rightright) at (6, -6) {
$\begin{ytableau}
1 & 2 & \none \\
2 \\
*(gray!30)3
\end{ytableau}$
};


\draw[->, thick] (top.south west) 
            to[out=240, in=90] 
            node[pos=0.4, above left] {$i=1$} 
            (left.north);

\draw[->, thick] (top.south east) 
            to[out=300, in=90] 
            node[pos=0.35, above right] {$i=2$} 
            (right.north);

\draw[->, thick] (right.south west) 
            to[out=240, in=90] 
            node[pos=0.4, above left] {$i=1$} 
            (rightleft.north);

\draw[->, thick] (right.south) 
            to[out=270, in=90] 
            node[pos=0.4, above left] {$i=2$} 
            (rightmiddle.north);

\draw[->, thick] (right.south east) 
            to[out=300, in=90] 
            node[pos=0.4, above right] {$i=3$} 
            (rightright.north);

\end{tikzpicture}
    \caption{Here we depict Clebsch-Gordan insertion of $k = 2$ into the example SSYT
        $$\scalebox{0.7}{%
        \begin{ytableau}
        1 & 2 \\
        3
        \end{ytableau}
        }.$$
    At the start of the insertion process, $k = 2$ is to be inserted into either of the first two rows. On the left branch, $2$ is inserted at the end of the first row, and the insertion terminates. On the right branch, $2$ bumps $3$ from the second row, and we must now re-insert $3$ into the tableau. Hence, $k \leftarrow 3$. We also update $j \leftarrow 1$ at this stage, since we had inserted the $2$ into the second row. The $3$ may now be inserted at the end of any of the first, second, or third rows, and the insertion terminates.
    }
    \label{fig:CG_insertion}
\end{figure}

\begin{lemma}
    Suppose we insert a letter $k \in [d]$ into an SSYT $T$ with alphabet $[d]$ via Clebsch-Gordan insertion, obtaining as output a set $\{ (T', c_{T'})\}$. Then the SSYTs $\{T'\}$ are exactly those with nonzero Clebsch-Gordan coefficients, and $c_{T'} = \braket*{T'}{T,k}$. 
\end{lemma}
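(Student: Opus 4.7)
My plan is to prove the lemma by aligning the Clebsch--Gordan insertion algorithm with the scalar-factor decomposition from \Cref{thm:cg-coeff-as-product}, which states $\braket*{T'}{T,k} = \prod_{p=1}^{d} \big(T^{=p}, \smalloneboxSSYT{k}^{\,=p} \mid (T')^{=p}\big)$. The argument breaks naturally into two pieces: identifying which $T'$ carry nonzero coefficient, and checking that the product of scalar factors matches the coefficient $c_{T'}$ accumulated by the algorithm.

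First, I would characterize the support. Consulting \Cref{lem:cg-rules}, nonvanishing of the level-$p$ factor forces (i) $(T')^{\leq p} = T^{\leq p}$ whenever $p < k$; (ii) $(T')^{\leq k} = T^{\leq k} + e_{i_0}$ for some row $i_0 \leq k$; and (iii) for $p > k$, $\big((T')^{\leq p},\,(T')^{\leq p-1}\big) = \big(T^{\leq p} + e_{i_p},\, T^{\leq p-1} + e_{j_p}\big)$ for some rows $i_p,j_p$. Setting $\delta_p \coloneq (T')^{\leq p} - T^{\leq p}$, consistency between successive levels---both level-$p$ and level-$(p-1)$ partitions must come from the \emph{same} SSYT $T'$---forces $\delta_p \in \{0\} \cup \{e_r : r \in [d]\}$, with $j_p = r_{p-1}$ whenever $\delta_{p-1} = e_{r_{p-1}}$. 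Thus a nonzero-coefficient $T'$ is encoded by a ``trajectory'' of active rows $(r_k, r_{k+1}, \dots, r_d)$ via $(T')^{\leq p} = T^{\leq p} + e_{r_p}$.

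Second, I would establish the bijection with insertion runs. Grouping the trajectory into maximal constant intervals $r_p = i_m$ for $p \in [k_m, k_{m+1})$, with $k = k_0 < k_1 < \cdots < k_M$ the jump times, I would verify that (a) a jump from row $i_{m-1}$ at time $k_m$ is possible only if $T$ has a $k_m$-cell in row $i_{m-1}$, since otherwise $T^{\leq k_m - 1} + e_{i_{m-1}}$ fails to interlace $T^{\leq k_m} + e_{i_m}$; and (b) the smallest $p > k_{m-1}$ for which $T$ has a $p$-cell in row $i_{m-1}$ must indeed be the next jump time, because at smaller candidate levels the scalar factor $(T^{=p}, e^{=p}_{11} \mid T^{=p}_{+i_{m-1},-i_{m-1}})$ is trivial by \Cref{eq:eq-4-becomes-trivial}. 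This matches the bumping rule of \Cref{def:CG_insertion}: $k_{m+1}$ is the value of the leftmost cell $>k_m$ in row $i_m$ of $T$, and $i_{m+1}\leq k_{m+1}$ is a free choice subject to partition validity, matching the nondeterminism in Step~2 of the algorithm.

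Third, I would verify the coefficient value. At each jump level $p = k_m$, the scalar factor is $(T^{=k_0}, e^{=k_0}_{10} \mid T^{=k_0}_{+i_0})$ for $m=0$ and $(T^{=k_m}, e^{=k_m}_{11} \mid T^{=k_m}_{+i_m, -i_{m-1}})$ for $m \geq 1$, which are exactly $\mathrm{SF}(T, k_m; i_m, i_{m-1})$ (with $i_{-1}=\mathrm{None}$) as accumulated by the algorithm. At each non-jump level $p \in (k_m, k_{m+1})$, the shape is $(T')^{=p} = T^{=p}_{+i_m, -i_m}$, and since row $i_m$ of $T$ has no $p$-cells by the minimality of $k_{m+1}$, \Cref{eq:eq-4-becomes-trivial} gives scalar factor $1$. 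The product therefore reduces to $\prod_{m=0}^{M} \mathrm{SF}(T, k_m; i_m, i_{m-1}) = c_{T'}$. The main obstacle is the rigidity argument in the second step---forcing $j_p = r_{p-1}$ from partition validity and showing that the first available jump level is the unique correct one; once this combinatorial correspondence is pinned down, the coefficient matching is a direct unwinding of \Cref{lem:cg-rules}.
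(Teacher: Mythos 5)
Your overall strategy — using the scalar-factor decomposition from \Cref{thm:cg-coeff-as-product} and tracking the per-level shape difference $\delta_p = (T')^{\leq p} - T^{\leq p}$ — is sound and is essentially the approach the paper takes. However, there is a genuine gap in your second step. You group the trajectory into \emph{maximal constant intervals} of the active row $r_p$ and call the interval endpoints ``jump times.'' But the natural milestones of the insertion algorithm are the \emph{bump levels}: the values $k = \ell_0 < \ell_1 < \cdots < \ell_t$ of the letters that are successively displaced. These two notions do not coincide, because a bumped letter can be re-inserted into the \emph{same} row from which it was bumped, in which case $r_p$ does not change at the bump level. In that situation your claim (b) — that the trajectory must jump at the first $p > k_{m-1}$ for which $T$ has a $p$-cell in row $i_{m-1}$ — is false, since the trajectory can legitimately stay in row $i_{m-1}$ past that level. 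More seriously, your third-step claim that the scalar factor at non-jump levels is trivial via \Cref{eq:eq-4-becomes-trivial} then fails, because that lemma requires row $i_m$ of $T$ to contain no $p$-cell, and at a same-row re-insertion it does.

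Concretely, take $d = 3$ and $T$ of shape $(2,1)$ with first row $1,2$ and second row $3$, and insert $k = 2$ into row $2$. This bumps the $3$, which may be re-inserted into row $2$, giving $T'$ of shape $(2,2)$ with first row $1,2$ and second row $2,3$. The trajectory is $r_2 = r_3 = 2$, a single constant interval, so under your grouping $p = 3$ is not a jump time and you would declare its scalar factor trivial. But row $2$ of $T$ contains a $3$, so \Cref{eq:eq-4-becomes-trivial} does not apply; a direct computation from \Cref{lem:cg-rules} gives $(T^{=3}, e^{=3}_{11} \mid T^{=3}_{+2,-2}) = \sqrt{3/8} \neq 1$. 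Your proposal would thus output $c_{T'} = \sqrt{1/3}$, while the actual Clebsch--Gordan coefficient, and the value accumulated by \Cref{def:CG_insertion} (which explicitly multiplies by $\mathrm{SF}(T, 3; 2, 2)$), is $\sqrt{1/3}\cdot\sqrt{3/8} = \sqrt{1/8}$. The fix is to index the intervals by the bump levels $\ell_m$ rather than by constancy of $r_p$: on each open interval $(\ell_m, \ell_{m+1})$ the active row contains no $p$-cell, because $\ell_{m+1}$ is by definition the \emph{smallest} value greater than $\ell_m$ in that row, and \emph{that} is what makes \Cref{eq:eq-4-becomes-trivial} apply at the intermediate levels.
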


\begin{proof}
    We prove the two directions of the lemma separately.

    \paragraph{Clebsch-Gordan insertion outputs are valid SSYTs with nonzero Clebsch-Gordan coefficients:} Say that the Clebsch-Gordan insertion algorithm output includes the pair $(T', c_{T'})$. Moreover, assume that $T'$ was constructed by adding $k$ to row $j_k$, and then sequentially bumping the letters $\ell_1 < \ell_2 < \dots <\ell_t$, which were inserted into rows $j_{\ell_1}, \dots, j_{\ell_t}$ respectively. The case of $t=0$ corresponds to $k$ not bumping any letter. For convenience, let us define $\ell_0 = k$ and $\ell_{t+1} = d+1$.
    
    We show that $c_{T'} = \braket*{T'}{T,k}$. In particular, let us define the sequence $i_k, i_{k+1}, \dots, i_d$, such that
    \begin{equation} \label{def:CG_insertion_proof_indices}
        j_k = i_k = i_{k+1} = \dots = i_{\ell_1-1},
        \qquad
        j_{\ell_m} = i_{\ell_{m}} = \dots = i_{\ell_{m+1}-1},
    \end{equation}
    for all $m \in [t]$.
    Then we claim that
    \begin{equation*}
        (T')^{\leq p} = \begin{cases}
            T^{\leq p} & p < k, \\
            T^{\leq p} + e_{i_p} & p \geq k. 
        \end{cases}
    \end{equation*}
    To see this, note that inserting an $x$ into a tableau can only change the locations of symbols labeled $x$ or greater.
    Thus, the expression above holds for $p < k$ because $k$ is the smallest symbol inserted into the tableau.
    For $p \geq k$, suppose that $\ell_m$ is the final letter that is inserted which satisfies $\ell_m \leq p$. 
    Then inserting the letters $\ell_{m+1}, \ldots, \ell_t$ does not change $T^{\leq p}$ since they are all larger than $p$.
    Similarly, the letters $\ell_0 < \ell_1 < \cdots < \ell_{m-1}$ which are inserted prior to $\ell_m$ do not change the shape of $T^{\leq p}$ either,
because they replace letters which are also $\leq p$.
Thus, only $\ell_m$ will change the shape $T^{\leq p}$, either by adding a new box or by bumping a letter larger than $p$ in row $i_{\ell_m}$,
and this will update the shape to $T^{\leq p} + e_{i_{\ell_m}} = T^{\leq p} + e_{i_p}$, as claimed.
Therefore,
    \begin{equation}\label{eq:bumpagge}
        (T')^{= p} = \begin{cases}
            T^{=p} & p < k, \\
            T^{=p}_{+i_p} & p = k, \\
            T^{=p}_{+i_p,-i_{p-1}} & p > k.
        \end{cases}
    \end{equation}
    Thus, the CG coefficient is
    \begin{align*}
        \braket*{T'}{T,k} & = \prod_{p=1}^d \left(T^{=p}, \smalloneboxSSYT{k}^{\,=p} \mid (T')^{=p}\right) \\
        & = \prod_{p \in \{\ell_0, \ell_1, \dots, \ell_{t}\}} \left(T^{=p}, \smalloneboxSSYT{k}^{\,=p} \mid (T')^{=p}\right) \\
        & = (T^{=k},e^{=k}_{10} \mid T^{=k}_{+j_k} )\cdot \prod_{m=1}^t ( T^{=\ell_{m}}, e^{=\ell_m}_{11} \mid T^{=\ell_m}_{j_{\ell_m},-j_{\ell_m-1}}) \\
        & = \mathrm{SF}(T, k; j_k, \mathrm{None}) \cdot  \prod_{m=1}^t \mathrm{SF}(T, \ell_m; j_{\ell_m}, j_{\ell_{m-1}})  = c_{T'}.
    \end{align*}In the first step, we have used \Cref{thm:cg-coeff-as-product}. In the second step, we have set all scalar products with $p \notin \{\ell_0, \ell_1, \dots, \ell_t\}$ equal to $1$. For $p < k = \ell_0$, this holds by item (i) of~\Cref{lem:cg-rules}. For $p > k$ and $p \not \in \{\ell_1, \dots, \ell_t\}$, this follows by~\Cref{eq:eq-4-becomes-trivial}, since $i_p = i_{p-1}$ from our definition of the indices $i$ (\Cref{def:CG_insertion_proof_indices}), and the fact that $T$ contains no letter $p$ in this row. To see this, let $\ell_{m}$ be the largest letter that was bumped which is smaller than $p$. Then $\ell_m$ was added to row $j_{\ell_m} = i_{p-1} = i_p$ and bumped the letter $\ell_{m+1}$, which is larger than $p$. Thus, there is no letter $p$ in the row $i_p$, as otherwise that would have been bumped instead.

    \paragraph{Valid SSYTs with nonzero CG coefficients are output by the Clebsch-Gordan insertion algorithm:}
    
    Let $T'$ be a valid SSYT with nonzero CG coefficient $\braket*{T'}{T,k}$. Our first goal is to show that it must satisfy \Cref{eq:bumpagge}, i.e.\ there exist integers $i_k, i_{k+1}, \ldots, i_d$ such that
    \begin{equation}\label{eq:bumpagge-2}
        (T')^{= p} = \begin{cases}
            T^{=p} & p < k, \\
            T^{=p}_{+i_p} & p = k, \\
            T^{=p}_{+i_p,-i_{p-1}} & p > k.
        \end{cases}
    \end{equation}
    To see this, note that \Cref{eq:cg-coeff-as-product} implies that for all $1 \leq p \leq d$,
    \begin{equation}
        \label{eq:cg-coeff-as-product-partial}
        \braket*{T'}{T,k} = \braket*{T'^{[p]}}{T^{[p]},\smalloneboxSSYT{k}^{\,[p]}} \cdot \prod_{i=p+1}^d \left(T^{=i}, \smalloneboxSSYT{k}^{\,=i} \mid T'^{=i}\right).
    \end{equation}
    Hence, for $\braket*{T'}{T, k}$ to be nonzero, we must have
    \begin{equation*}
        \braket*{T'^{[p]}}{T^{[p]},\smalloneboxSSYT{k}^{\,[p]}} \neq 0.
    \end{equation*}
    When $p < k$, $\smalloneboxSSYT{k}^{\,[p]}$ is just an empty tableau. Then we must have $T'^{[p]} = T^{[p]}$. On the other hand, when $p \geq k$, $\smalloneboxSSYT{k}^{\,[p]}$ is a single box with $k$, and thus
    \begin{equation*}
        \braket*{T^{[p]},\smalloneboxSSYT{k}^{\,[p]}}{T'^{[p]}} \neq 0 \implies \shape(T'^{[p]}) = \shape(T^{[p]}) + e_{i_p}.
    \end{equation*}
    for some integer $i_p$, by \Cref{eq:CG_transform_branching_rule}. Putting these together, we have that for $p < k$,
    \begin{equation*}
        T'^{=p} = (T'^{\leq p}, T'^{\leq p-1})
        = (T^{\leq p}, T^{\leq p-1})
        = T^{=p}.
    \end{equation*}
    Similarly, for $p = k$, we have $T'^{=k} = T^{=p}_{+i_k}$,
    and for $p > k$, we have $T'^{=p} = T^{=p}_{+i_p, -i_{p-1}}$. This proves \Cref{eq:bumpagge-2}.
    
    Next, we show that \Cref{eq:bumpagge-2} implies that $T'$ can be viewed as the result of inserting a letter $k$ into $T$ and performing a sequence of bumps, as in the Clebsch-Gordan insertion algorithm. We do this by explicitly determining the sequence of bumped letters that produces this SSYT $T'$. The first letter added was $k$, and it was inserted at row $i_k$. Thus, we let $j_k = i_k$. The leftmost cell of $T$ in row $j_k$ that contains a letter greater than $k$ determines the first bumped letter $\ell_1$. This is because for any $p \in \{k, \dots, \ell_1-1\}$, $T'^{\leq p} = T^{\leq p} + e_{i_p}$, and since there is no letter greater than $k$ and at most $p$ on row $i_k$, it must hold that $i_p = i_k$. Thus $i_k = i_{k+1} = \dots = i_{\ell_1 - 1}$, and we let $j_{\ell_1} = i_{\ell_1}$.
    
    Continuing the above process, the leftmost cell of $T$ in row $j_{\ell_{m-1}}$ that contains a letter greater than $\ell_{m-1}$ determines the bumped letter $\ell_{m}$ (and if no such letter exists, the Clebsch-Gordan insertion would terminate). We let $j_{\ell_m} = i_{\ell_m}$ and observe that $i_{\ell_{m-1}} = \dots = i_{\ell_m - 1}$.

    Having defined the bumped letters $\ell_1 < \dots < \ell_t$ and their respective rows $j_k, j_{\ell_1}, \dots, j_{\ell_t}$, the argument from the first part of this proof shows that the tableau output by the Clebsch-Gordan insertion procedure is the same as~$T'$. This is because by our choice of the $j$ indices, the row indices $i_k, \dots, i_d$ satisfy:
    \begin{equation*}
        j_k = i_k = i_{k+1} = \dots = i_{\ell_1-1},
        \qquad
        j_{\ell_m} = i_{\ell_{m}} = \dots = i_{\ell_{m+1}-1},
    \end{equation*}
    for all $m \in [t]$ (where we again assume $\ell_{t+1} = d+1$).
\end{proof}

\subsection{One-step coefficients}
In this section, we obtain a closed-form expression for the Clebsch-Gordan coefficients $\braket*{T'}{T^{\lambda},k}$, where we recall that $\ket*{T^{\lambda}}$ is the highest weight vector of $V_{\lambda}^d$. We first show that the SSYTs $T'$ for which this coefficient is nonzero have a very specific form.

\begin{definition}[One-step semistandard Young tableaux]
    Let $\lambda$ be a Young diagram of height at most $d$, and $i \leq k$ be positive integers at most $d$ such that $\lambda + e_i$ is a valid Young diagram. We use $T^{\lambda}_{k \to i}$ to denote the \emph{one-step semistandard Young tableau}, which is the SSYT we obtain by starting with the highest weight SSYT $T^{\lambda}$, and adding a new box containing the value $k$ to the end of the $i$-th row.
\end{definition}

\begin{figure}[h]
    \centering
    \[
    T^{\lambda} = \begin{ytableau}
    1 & 1 & 1 & 1 & 1  \\
    2 & 2 & 2 & 2 \\
    3 & 3 & 3 \\
    4 & 4 \\
    \end{ytableau}
    \quad
    T^{\lambda}_{4 \to 4} = \begin{ytableau}
    1 & 1 & 1 & 1 & 1 \\
    2 & 2 & 2 & 2 \\
    3 & 3 & 3 \\
    4 & 4 & *(gray!30)4 \\
    \end{ytableau}
    \quad 
    T^{\lambda}_{3 \to 2} = \begin{ytableau}
    1 & 1 & 1 & 1 & 1  \\
    2 & 2 & 2 & 2 & *(gray!30)3\\
    3 & 3 & 3 \\
    4 & 4 \\
    \end{ytableau}
    \]
\caption{Some examples of one-step SSYTs when $\lambda = (5, 4, 3, 2)$.}
\label{fig:one-step-ssyt-examples}
\end{figure}



\begin{definition}[One-step Clebsch-Gordan coefficient]
    Let $T^{\lambda}_{k \to i}$ be a one-step SSYT. We define the \emph{one-step Clebsch-Gordan coefficient} to be
    \begin{equation*}
        c^{\lambda}_{k \to i} \coloneq\braket*{T^{\lambda}_{k \to i}}{T^{\lambda},k},
    \end{equation*}
    when $T^{\lambda}_{k \to i}$ is well-defined (i.e.\ a valid SSYT), and $0$ otherwise.
\end{definition}


\begin{lemma}
    \label{lem:one-step-k-to-i}
    The only basis vectors $T'$ for which the coefficient $\braket*{T'}{T^{\lambda},k}$ is nonzero correspond to the one-step SSYTs $T^{\lambda}_{k \to i}$, for any $i \leq k$ such that $\lambda + e_i$ is a valid Young diagram. Moreover, the one-step Clebsch-Gordan coefficient satisfies the following expression:
    \begin{equation}
        \label{eq:one-step-cg-formula}
        c^{\lambda}_{k \to i} =\left|\prod_{j=1, j\neq i}^{k-1} \left(\frac{(\lambda_j - \lambda_i) + (i - j) - 1}{(\lambda_j - \lambda_i) + (i - j)}\right)
        \cdot \frac{1}{(\lambda_i - \lambda_k) + (k - i) + \delta_{ik}}\right|^{1/2}.
    \end{equation}
\end{lemma}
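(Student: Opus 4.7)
The plan is to first determine the support of the coefficient via Clebsch-Gordan insertion and then evaluate it using the scalar-factor product formula of \Cref{thm:cg-coeff-as-product}. For the support, I apply the insertion algorithm of \Cref{def:CG_insertion} to $T^\lambda$ with the letter $k$. Because row $i$ of $T^\lambda$ contains only copies of $i$, for every admissible $i \leq k$ (those for which $\lambda + e_i$ is valid) no letter in row $i$ exceeds $k$, so Step 3 never triggers a bumping and Step 4 directly appends the new cell at the end of row $i$. The unique resulting tableau is $T^\lambda_{k \to i}$, which establishes the claimed support.

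For the value, I expand $\braket*{T^\lambda_{k \to i}}{T^\lambda, k}$ via \Cref{thm:cg-coeff-as-product} as a product of $U(d-p)$ scalar factors and show that all but the $p = k$ factor equal $1$. For $p < k$, the insertion leaves the alphabet-$[p]$ restrictions unchanged, so $(T^\lambda_{k \to i})^{=p} = (T^\lambda)^{=p}$ and rule (i) of \Cref{lem:cg-rules} gives a factor of $1$. For $p > k$, the absence of bumping yields $(T^\lambda_{k \to i})^{=p} = (T^\lambda)^{=p}_{+i, -i}$; since row $i$ of $T^\lambda$ contains no letter equal to $p > i$, \Cref{eq:eq-4-becomes-trivial} gives a factor of $1$. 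Only the $p = k$ factor survives, and substituting $c_j(T^{\leq k}) = c_j(T^{\leq k-1}) = \lambda_j - j$ (for $j \leq k-1$) and $c_i(T^{\leq k}) = \lambda_i - i$ into \Cref{eq:equation-3} produces
\begin{equation*}
c^\lambda_{k \to i} = \left|\frac{\prod_{j=1}^{k-1} \bigl((\lambda_j - \lambda_i) + (i - j) - 1\bigr)}{\prod_{j=1,\, j \neq i}^{k} \bigl((\lambda_j - \lambda_i) + (i - j)\bigr)}\right|^{1/2}.
\end{equation*}

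It remains to reconcile this with the target formula, which I will do by case analysis on $i < k$ versus $i = k$. When $i < k$, the $j = i$ term in the numerator equals $-1$ and the $j = k$ term in the denominator equals $-((\lambda_i - \lambda_k) + (k - i))$; after cancelling the common subproduct $\prod_{j \neq i,\, j=1}^{k-1}((\lambda_j - \lambda_i) + (i - j))$ and extracting the surviving denominator term, the expression matches the target with $\delta_{ik} = 0$. When $i = k$, the range $j = 1, \dots, k-1$ contains no $j = i$ term and the denominator range $j \neq i$ within $\{1, \dots, k\}$ omits $j = k$, so both products collapse to $\prod_{j=1}^{k-1}$ and the target formula matches with $\delta_{ik} = 1$ acting as a placeholder in place of the (absent) $j = k$ denominator factor, avoiding the would-be $0/0$. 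I expect this $i = k$ bookkeeping to be the only subtle point; everything else is routine substitution into the scalar-factor formulas of \Cref{lem:cg-rules}.
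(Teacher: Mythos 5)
Your proof is correct and follows essentially the same route as the paper: Clebsch-Gordan insertion to identify the support (no bumping since row $i$ of $T^\lambda$ contains only $i$'s), then evaluation of the $p=k$ scalar factor via \Cref{eq:equation-3} with $T^\lambda_j = \lambda_j$, and the same $i<k$ versus $i=k$ bookkeeping to absorb the sign and arrive at the $\delta_{ik}$ form. The one cosmetic difference is that where the paper simply reads off $c^\lambda_{k\to i} = \mathrm{SF}(T^\lambda, k; i, \mathrm{None})$ from the insertion algorithm (whose correctness was already proved), you re-derive that fact by expanding the full product of $U(d-p)$ scalar factors from \Cref{thm:cg-coeff-as-product} and checking the $p\neq k$ factors are $1$ via \Cref{lem:cg-rules}(i) and \Cref{eq:eq-4-becomes-trivial} — an unrolled but equivalent version of the same argument.
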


\begin{proof}
    Consider the Clebsch-Gordan insertion procedure from \Cref{def:CG_insertion}. When inserting $k$ into row $i \leq k$ during Step 2, the highest weight SSYT $T^{\lambda}$ only contains the letter $i$, which is not greater than $k$.
    Thus, the insertion procedure does not reach Step 3, but always jumps to Step 4. See \Cref{fig:ssyt-in-lemma-1133}.
    \begin{figure}[b]
        \centering
        \[
        T^{\lambda} = \begin{ytableau}
        1 & 1 & 1 & 1 & 1 \\
        2 & 2 & 2 & 2 \\
        3 & 3 & 3 \\
        4 & 4 \\
        \end{ytableau}
        \quad \xrightarrow{\mathrm{Step\,4}} \quad
        T^{\lambda}_{k \to i} = \begin{ytableau}
        1 & 1 & 1 & 1 & 1 \\
        2 & 2 & 2 & 2 \\
        3 & 3 & 3 & *(gray!30)4 \\
        4 & 4 \\
        \end{ytableau}
        \]
        \caption{An example of how $T'$ evolves during the Clebsch-Gordan insertion when $k = 4$, and $i = 3$, for $\lambda = (5, 4, 3, 2)$.}
        \label{fig:ssyt-in-lemma-1133}
    \end{figure}
    Step 4 will now append a new cell containing $k$ to the end of row $i$ to obtain $T^{\lambda}_{k \to i}$, and will set $c^{\lambda}_{k \to i}$ to be
    \begin{equation*}
        c^{\lambda}_{k \to i} \leftarrow \mathrm{SF}(T^{\lambda}, k; i, \mathrm{None}) = \left((T^{\lambda})^{=k}, e^{=k}_{10} \mid (T^{\lambda})^{=k}_{+i}\right).
    \end{equation*}
    Finally, the procedure appends $(T^{\lambda}_{k \to i}, c^{\lambda}_{k \to i})$ to the output and terminates. We will now use~\Cref{eq:equation-3} to compute an expression for the one-step Clebsch-Gordan coefficient. 
    \begin{align*}
        c^{\lambda}_{k \to i}&=\left(
        (T^{\lambda})^{=k}, e^{=k}_{10} \mid (T^{\lambda})^{=k}_{+i}\right) \\
        &=\left|\frac{\prod_{j=1}^{k-1} ((\lambda_j - \lambda_i) + (i - j) - 1)}{\prod_{j\neq i}^k((\lambda_j - \lambda_i) + (i - j))}\right|^{1/2} \tag{$T^{\lambda}$ is highest weight $\implies T^{\lambda}_i = \lambda_i$} \\
        &=\left|\prod_{j=1, j\neq i}^{k-1}\frac{(\lambda_j - \lambda_i) + (i - j) - 1}{(\lambda_j - \lambda_i) + (i - j)} \cdot \frac{-1}{(\lambda_k - \lambda_i) + (i - k) + \delta_{ik}}\right|^{1/2}.
    \end{align*}
    The last equality holds because if $i = k$, then the denominator of the additional factor is equal to $1$, and otherwise equal to $(\lambda_k - \lambda_i) + (i - k)$.
\end{proof}

Before we proceed, let us build some intuition about these coefficients. We first observe that all possible one-step SSYTs have a nonzero coefficient.
\begin{observation}
    \label{obs:cg-zero-invalid-ssyt}
    Every valid one-step SSYT $T^{\lambda}_{k \to i}$ has a nonzero one-step Clebsch-Gordan coefficient~$c^{\lambda}_{k \to i}$.
\end{observation}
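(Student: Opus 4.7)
The plan is to simply verify that each factor appearing in the explicit formula for $c^{\lambda}_{k \to i}$ from \Cref{lem:one-step-k-to-i} is nonzero whenever $T^{\lambda}_{k \to i}$ is a valid one-step SSYT. Recall that
\begin{equation*}
    c^{\lambda}_{k \to i} = \left|\prod_{j=1, j\neq i}^{k-1} \frac{(\lambda_j - \lambda_i) + (i - j) - 1}{(\lambda_j - \lambda_i) + (i - j)} \cdot \frac{1}{(\lambda_i - \lambda_k) + (k - i) + \delta_{ik}}\right|^{1/2},
\end{equation*}
so it suffices to show that every denominator is nonzero and no numerator vanishes.

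First I would split the index range for $j$ in the product into $j < i$ and $j > i$. For $j < i$, one has $\lambda_j \geq \lambda_i$ and $i - j \geq 1$, so the denominator $(\lambda_j - \lambda_i) + (i-j) \geq 1 > 0$. The numerator $(\lambda_j - \lambda_i) + (i-j) - 1 \geq 0$, and it equals zero \emph{if and only if} $j = i-1$ and $\lambda_{i-1} = \lambda_i$. For $j > i$ (so $i+1 \leq j \leq k-1$), we have $\lambda_j \leq \lambda_i$ and $i - j \leq -1$, whence the denominator is $\leq -1$ and the numerator is $\leq -2$, so both are nonzero. Finally, the trailing factor $(\lambda_i - \lambda_k) + (k-i) + \delta_{ik}$ equals $1$ when $i = k$, and when $i < k$ is bounded below by $k - i \geq 1$, so it is always nonzero.

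The only potential obstruction is therefore the case $j = i-1$, $\lambda_{i-1} = \lambda_i$ identified above. I would dispatch this by observing that $T^{\lambda}_{k \to i}$ being a valid SSYT requires $\lambda + e_i$ to be a Young diagram, which forces either $i = 1$ (so the index $j = i-1 = 0$ does not appear in the product) or $\lambda_{i-1} > \lambda_i$. In either case the problematic factor does not occur, so every numerator in the product is strictly positive, and hence $c^{\lambda}_{k \to i} \neq 0$. This completes the proof; no step is a substantive obstacle — it is essentially a bookkeeping check that the validity of the one-step SSYT precisely matches the non-vanishing of the Clebsch-Gordan factors.
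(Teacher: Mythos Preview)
Your proof is correct and follows essentially the same approach as the paper's: both arguments directly inspect the factors in the explicit formula from \Cref{lem:one-step-k-to-i}, split into the cases $j<i$ and $j>i$, and identify that the only vanishing can occur when $\lambda_{i-1}=\lambda_i$, which is ruled out by validity of $\lambda+e_i$. Your treatment is in fact slightly more explicit in verifying the denominators and the trailing factor are nonzero, but the substance is identical.
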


Looking at~\Cref{eq:one-step-cg-formula}, we can see that the only way for $c^{\lambda}_{k \to i}$ to be zero is when the expression in the numerator $(\lambda_j - \lambda_i) + (i-j) - 1$ is equal to zero for some $j$ in $\{1, \dots, k-1\}\setminus\{i\}$. Rewriting this, it means that $\lambda_i = \lambda_j + (i - j - 1)$. We consider the following two cases:
\begin{enumerate}
    \item If $j < i$, this means that $i - j - 1 \geq 0$ and thus $\lambda_i \geq \lambda_j$. Since the $j$-th row is above the $i$-th row, we conclude that the two row lengths must be equal. Then we cannot add another box to row $i$, since that would make this row longer than row $j$, which is above it.
    \item If $j > i$, this means that $i - j - 1 \leq -2$ and thus $\lambda_i \leq \lambda_j - 2$. This is impossible, since row $i$ is above row $j$, and for $\lambda$ to be a valid Young diagram, it must hold that $\lambda_i \geq \lambda_j$.
\end{enumerate}
We conclude that the only way for the one-step Clebsch-Gordan coefficients to vanish is by adding a box to a row $i$ that satisfies $\lambda_{i-1} = \lambda_i$, which means that $\lambda + e_i$ is not a valid Young diagram.

\subsection{Two-step coefficients}

In the previous section, we introduced the one-step Clebsch-Gordan coefficients to understand the state we obtain when we tensor a qudit $\ket*{k}$ to a highest weight vector. This suffices to study the first moment of our estimator $\widehat{\brho}$. For the second moment of the estimator, we will need to understand the state we obtain when we tensor two qudits, i.e.\ $\ket*{k} \otimes \ket*{\ell}$, to a highest weight vector. We have

\begin{align}
    \ket*{T^\lambda} \otimes \ket*{k} \otimes \ket*{\ell}
    &= \Big( \sum_{i \in [d]} c^{\lambda}_{k \to i} \cdot \ket*{\lambda+e_i} \otimes \ket*{T^\lambda_{k\to i}} \Big) \otimes \ket*{\ell} \nonumber \\
    &= \sum_{i,j \in [d]} \sum_{T' } c^{\lambda}_{k \to i} \cdot \braket*{T'}{T^{\lambda}_{k \to i}, \ell}  \cdot \ket*{\lambda+e_i} \otimes \ket*{\lambda+e_i+e_j} \otimes \ket*{T'}, \label{eq:two-step-cg-intro}
\end{align}
where, in the inner sum, $T' \in \SSYT(\lambda+e_i+e_j,d)$. 
Let us first try to understand which SSYTs $T'$ appear in this superposition with nonnegative amplitude.
From~\Cref{obs:cg-zero-invalid-ssyt}, the coefficient $c^{\lambda}_{k \to i}$ in~\Cref{eq:two-step-cg-intro} is always nonzero if $T^{\lambda}_{k \to i}$ is a valid SSYT. Thus, it suffices to determine for which $T'$ the coefficient $\braket*{T'}{T^{\lambda}_{k \to i}, \ell}$ is nonzero. We will need the following definition.



\begin{definition}[Two-step semistandard Young tableaux]
    Let $\lambda$ be a Young diagram of length at most $d$, and $i \leq k$ and $j \leq \ell$ be positive integers at most $d$ such that $\lambda + e_i + e_j$ is a valid Young diagram.

    There is at most one valid SSYT that can be obtained from $T^\lambda$ by adding a $k$ to row $i$ and an $\ell$ to row $j$. If such an SSYT exists, we denote it $T^\lambda_{k\ell \to ij}$ and refer to it as a \emph{two-step semistandard Young tableau}.
\end{definition}

Note that the ordering of the value-row pairs $(k, i)$ and $(\ell, j)$ does not matter. In particular, $T^{\lambda}_{k\ell \to ij} = T^{\lambda}_{\ell k \to ji}$. Moreover, we can obtain $T^{\lambda}_{k \ell \to ij}$ from $T^\lambda$ by first appending a $k$ into row $i$, then appending an $\ell$ into row $j$. However, in the case where $k > \ell$, and $i = j$, in which case we must also swap the boxes containing $k$ and $\ell$, so that the SSYT is valid. See~\Cref{fig:ssyt-in-lemma-5} for some examples of two-step SSYTs.  

\begin{figure}[h]
    \centering
    \[
    T^{\lambda} = \begin{ytableau}
    1 & 1 & 1 & 1 & 1  \\
    2 & 2 & 2 & 2 \\
    3 & 3 & 3 \\
    4 & 4 \\
    \end{ytableau}
    \quad
    T^{\lambda}_{3,4 \to 2,3} = \begin{ytableau}
    1 & 1 & 1 & 1 & 1 \\
    2 & 2 & 2 & 2 & *(gray!30)3 \\
    3 & 3 & 3 & *(gray!30)4 \\
    4 & 4 \\
    \end{ytableau}
    \quad 
    T^{\lambda}_{2,1 \to 1,1} = \begin{ytableau}
    1 & 1 & 1 & 1 & 1 & *(gray!30)1 & *(gray!30)2\\
    2 & 2 & 2 & 2  \\
    3 & 3 & 3 \\
    4 & 4
    \end{ytableau}
    \]
    \caption{A partition $\lambda = (5,4,3,2)$ and two examples of two-step SSYTs $T^\lambda_{k\ell \to ij}$. The added boxes are shaded.}
    \label{fig:ssyt-in-lemma-5}
\end{figure}

\begin{lemma}
    \label{lem:two-step-kell-to-ij}
    The coefficient $\braket*{T'}{T^{\lambda}_{k \to i},\ell}$ is nonzero \emph{only} when $T'$ satisfies one of the following two cases:
    \begin{enumerate}
        \item[(i)] $T' = T^{\lambda}_{k\ell \to ij}$, or
        \item[(ii)] $T' = T^{\lambda}_{\ell k \to ij}$ and $k > \ell$,
    \end{enumerate}
    where $j$ is any valid row. 
\end{lemma}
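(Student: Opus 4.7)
The plan is to apply the Clebsch-Gordan insertion algorithm (\Cref{def:CG_insertion}) to $T = T^{\lambda}_{k \to i}$ with the letter $\ell$; this algorithm enumerates exactly those $T'$ for which $\braket*{T'}{T,\ell}$ is nonzero. The key structural feature that makes the insertion terminate quickly is that $T^{\lambda}_{k\to i}$ differs from the highest-weight tableau $T^{\lambda}$ in only one cell: every row $m \neq i$ still contains only the letter $m$, and row $i$ contains $\lambda_i$ copies of $i$ followed by a single entry $k$. As a result, bumping can only ever be triggered by the single $k$ cell in row $i$.

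I will case-split on the row $r \leq \ell$ chosen at Step 2 of the insertion. If $r \neq i$, row $r$ of $T$ consists entirely of the letter $r \leq \ell$, so no cell in that row exceeds $\ell$; the algorithm bypasses Step 3 and appends $\ell$ at Step 4, producing $T^{\lambda}_{k\ell \to i r}$, which falls under case (i) with $j=r$. If instead $r = i$, the only cell in row $i$ that can exceed $\ell$ is the $k$ cell, and this happens precisely when $\ell < k$. When $\ell \geq k$, no bump occurs and the output is $T^{\lambda}_{k\ell \to ii}$, again case (i). When $\ell < k$, the $k$ is replaced by $\ell$ in row $i$, and the algorithm must re-insert $k$ into the updated tableau.

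For this second round, Step 2 picks a row $r' \leq k$. Every entry of row $r'$ in the current state is bounded by $k$: it equals $r'$ if $r' \neq i$, and it is at most $\ell < k$ if $r' = i$ (since row $i$ now reads $i, \ldots, i, \ell$). Therefore no cell exceeds $k$, no second bump is triggered, and the algorithm terminates at Step 4 by appending $k$ to row $r'$. The resulting tableau is obtained from $T^{\lambda}$ by adding $\ell$ to row $i$ and $k$ to row $r'$, i.e.\ $T^{\lambda}_{\ell k \to i r'}$, matching case (ii) with $j=r'$ and the hypothesis $k > \ell$. Letting $r$ and $r'$ range over all admissible choices at Step 2 (exactly those for which the resulting shape is a valid Young diagram) will exhaust all possible outputs and yield the claimed ``only when'' characterization.

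The argument is a mechanical two-round case analysis whose brevity is entirely due to the near-highest-weight form of $T^{\lambda}_{k\to i}$, so I do not foresee any real technical obstacle. The only subtlety worth flagging is the coincidence $T^{\lambda}_{k\ell \to ii} = T^{\lambda}_{\ell k \to ii}$: when $j=i$, cases (i) and (ii) of the lemma can describe the same SSYT, so a given filling of $\lambda + 2 e_i$ may be reached via more than one branch of the insertion. This does not affect the ``only if'' statement proved here, but it is worth bookkeeping for the subsequent computation of the two-step scalar factors.
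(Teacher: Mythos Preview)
Your proposal is correct and follows essentially the same approach as the paper: run the Clebsch--Gordan insertion on $T^{\lambda}_{k\to i}$ with the letter $\ell$, case-split on whether the chosen row is $i$ or not, and in the one subcase where a bump occurs ($r=i$ and $k>\ell$) observe that the re-inserted $k$ cannot trigger a second bump. Your extra remark on the overlap $T^{\lambda}_{k\ell\to ii}=T^{\lambda}_{\ell k\to ii}$ is a nice bit of bookkeeping that the paper does not make explicit.
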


\begin{proof}
    Recall the Clebsch-Gordan insertion procedure from~\Cref{def:CG_insertion}, during which we attempt to insert $\ell$ into a row of $T^{\lambda}_{k \to i}$ during Step 2. We consider the following two cases:
    \begin{enumerate}
        \item If we insert $\ell$ into row $j \leq \ell$ that is different than $i$, the one-step SSYT $T^{\lambda}_{k \to i}$ only contains the letter $j$ in that row. Since $j$ is at most $\ell$, the insertion procedure does not reach Step 3, but always jumps to Step 4. Step 4 will now append a new cell containing $\ell$ to the end of row $j$ to obtain $T^{\lambda}_{k \ell \to ij}$, append it (along with the respective coefficient) to the output, and terminate.

        \begin{figure}[h]
            \centering
            \[
            T^{\lambda}_{k \to i} = \begin{ytableau}
            1 & 1 & 1 & 1 & 1 \\
            2 & 2 & 2 & 2 & *(gray!30)3 \\
            3 & 3 & 3 \\
            4 & 4 \\
            \end{ytableau}
            \quad \xrightarrow{\mathrm{Step\,4}} \quad
            T^{\lambda}_{k\ell \to ij} = \begin{ytableau}
            1 & 1 & 1 & 1 & 1 \\
            2 & 2 & 2 & 2 & *(gray!30)3 \\
            3 & 3 & 3 & *(gray!30)4 \\
            4 & 4 \\
            \end{ytableau}
            \]
            \caption{An example of how $T'$ evolves during the Clebsch-Gordan insertion in the first case from the proof of~\Cref{lem:two-step-kell-to-ij} when $(k, \ell) = (3, 4)$, and $(i, j) = (2, 3)$, for $\lambda = (5, 4, 3, 2)$.}
            \label{fig:ssyt-in-lemma-6}
        \end{figure}

        \item If we insert $\ell$ into row $i$ (assuming $i \leq \ell$), the one-step SSYT $T^{\lambda}_{k \to i}$ contains the letter $i$ in the first $\lambda_i$ cells, and the letter $k$ in the final cell of the row. If $k \leq \ell$, then the insertion procedure does not reach Step 3, but always jumps to Step 4 and will append a new cell containing $\ell$ to the end of row $i$ to obtain $T^{\lambda}_{k \ell \to ii}$.
        
        If, on the other hand, $k > \ell$, then $\ell$ will replace $k$ at the end of row $i$. The current tableau becomes $T^{\lambda}_{\ell \to i}$, and the procedure will repeat Step 2 with the bumped letter $k$, which will be added in a valid row $j \leq k$. The current SSYT will contain in row $j$ the letters $j$, and possibly $\ell$, which are not greater than $k$. Thus, we proceed to Step 4, during which a new cell with the letter $k$ is appended at the end of row $j$. The final SSYT is thus $T^{\lambda}_{\ell k \to ij}$.
        \begin{figure}[h]
            \centering
            \[
            T^{\lambda}_{k \to i} = \begin{ytableau}
            1 & 1 & 1 & 1 & 1 \\
            2 & 2 & 2 & 2 & *(gray!30)4 \\
            3 & 3 & 3 \\
            4 & 4 \\
            \end{ytableau}
            \quad \xrightarrow{\mathrm{Step\,3}} \quad 
            T^{\lambda}_{\ell \to i} = \begin{ytableau}
            1 & 1 & 1 & 1 & 1 \\
            2 & 2 & 2 & 2 & *(gray!30)3 \\
            3 & 3 & 3 \\
            4 & 4 \\
            \end{ytableau}
            \quad \xrightarrow{\mathrm{Step\,4}} \quad
            T^{\lambda}_{\ell k \to ij} = \begin{ytableau}
            1 & 1 & 1 & 1 & 1 \\
            2 & 2 & 2 & 2 & *(gray!30)3 \\
            3 & 3 & 3 \\
            4 & 4 & *(gray!30)4 \\
            \end{ytableau}
            \]
            \caption{An example of how $T'$ evolves during the Clebsch-Gordan insertion in the second case from the proof of~\Cref{lem:two-step-kell-to-ij} when $(k, \ell) = (4, 3)$, and $(i, j) = (2, 4)$, for $\lambda = (5, 4, 3, 2)$.}
            \label{fig:ssyt-in-lemma-7}
        \end{figure}
    \end{enumerate}
    We conclude that the only SSYTs $T'$ with a nonzero coefficient are either $T^{\lambda}_{k\ell \to ij}$, or $T^{\lambda}_{\ell k \to ij}$ if $k > \ell$.
\end{proof}

In light of~\Cref{lem:two-step-kell-to-ij}, we define the two-step Clebsch-Gordan coefficients as follows.
\begin{definition}[Two-step Clebsch-Gordan coefficients]
    Let $\lambda$ be a Young diagram. The \emph{two-step Clebsch-Gordan coefficients} are:
    \begin{equation*}
        a^{\lambda}_{k\ell \to ij} \coloneq \braket*{T^{\lambda}_{k \to i}}{T^{\lambda}, k} \cdot \braket*{T^{\lambda}_{k\ell \to ij}}{T^{\lambda}_{k \to i}, \ell},
        \qquad
        b^{\lambda}_{k\ell \to ij} \coloneq \braket*{T^{\lambda}_{k \to i}}{T^{\lambda}, k} \cdot \braket*{T^{\lambda}_{k\ell \to ji}}{T^{\lambda}_{k \to i}, \ell}.
    \end{equation*}
    Whenever a one-step or two-step SSYT is not valid, we define the corresponding coefficient to be $0$.
\end{definition}

\begin{remark}
    \label{rem:a-two-step-is-nonnegative}
    The scalar factors that do not involve a bumped letter are always nonnegative, since they come from items (i) and (ii) of~\Cref{lem:cg-rules}. Therefore, the two-step Clebsch-Gordan coefficients $a^{\lambda}_{k\ell \to ij}$ are always nonnegative.
\end{remark}

\newcommand{\UdCG}[1]{\calU^{(#1)}_{\mathrm{dCG}}}
\newcommand{\UdCGdagger}[1]{\calU^{(#1)\dagger}_{\mathrm{dCG}}}

\section{Constructing the Schur transform} \label{sec:constructing_Schur_transform}

The Clebsch-Gordan transform can be used to give a recursive construction of the Schur transform. That is, we can construct a unitary $\USW{n}$ that acts as in \Cref{eq:Schur_transform}: for all $\pi \in S_n$ and $U \in U(d)$,
\begin{equation*} 
    \USW{n} \cdot \Big(\mathcal{P}^{(n)}(\pi)  \mathcal{Q}^{(n)}(U)\Big) \cdot \USWdagger{n} = \sum_{\substack{\lambda \vdash n \\ \ell(\lambda) \leq d}} \ketbra*{\lambda} \otimes \kappa_\lambda(\pi) \otimes \nu_\lambda(U).
\end{equation*} 
This construction is originally due to  \cite{Har05,BCH05}. Note that in these works, the Schur transform was only required to compute a Young-Yamanouchi basis in the permutation register. We will prove that the construction can be used to prepare Young's orthogonal form as well.

For intuition, we now sketch how we will do a recursive step using the Clebsch-Gordan transform. Assume we have constructed $\USW{n}$ already, so that we can rewrite vectors in $(\C^d)^{\otimes n}$ in the Schur basis. We want to give a construction of $\USW{n+1}$. 

Start with a tensor product $\ket*{\lambda} \otimes \ket*{S} \otimes \ket*{T} \otimes \ket*{i}$, obtained by using $\USW{n}$ on the first $n$ qudits. By applying the Clebsch-Gordan transform, $\UCG{n+1}$, we take this state to a superposition of vectors of the form $\ket*{\lambda} \otimes \ket*{S} \otimes \ket*{\mu} \otimes \ket*{T'}$, where $T'$ is of shape $\mu$. If we now rearrange factors, we can obtain $\ket*{\mu} \otimes (\ket*{\lambda} \otimes \ket*{S}) \otimes \ket*{T'}$. In this product, $\ket*{\lambda} \otimes \ket*{S}$ is exactly the labeling of the Young-Yamanouchi basis vector corresponding to $\ket*{S'}$, where $S' \in \mathrm{SYT}(\mu)$ is obtained from $S$ by appending a box labeled $(n+1)$ at the location of the single box of $\mu \setminus \lambda$. Formally, the rearrangement is performed by the following unitary.

\begin{definition}[Rearrangement unitary] \label{def:rearrangement_unitary}
 Let $\UR{n+1}: \bigoplus_{\lambda} \Specht_\lambda \otimes \big( \bigoplus_{\mu} V^d_\mu \big)\to \bigoplus_\mu \big( \bigoplus_\lambda \Specht_\lambda \otimes V^d_\mu\big)$ mapping 
 \begin{equation*}\ket*{\lambda} \otimes \ket*{S} \otimes \ket*{\mu} \otimes \ket*{T'} \rightarrow \ket*{\mu} \otimes \ket*{\lambda} \otimes \ket*{S} \otimes \ket*{T'},
\end{equation*}
for all $\lambda \vdash n$, $S \in \mathrm{SYT}(\lambda)$, $\mu$ with $\lambda \nearrow \mu$ and $T' \in \mathrm{SSYT}(\mu, d)$. 
\end{definition}

The structure of the new basis vectors suggests that the product $\hatUSW{n+1} \coloneq \UR{n+1} \cdot \UCG{n+1} \cdot (\USW{n} \otimes I)$ may be performing the Schur transform on $(\C^d)^{\otimes (n+1)}$, i.e.\ 
\begin{equation} \label{eq:CG_gives_Schur_transform?}
    \hatUSW{n+1} \cdot \Big(\mathcal{P}^{(n+1)}(\pi)  \mathcal{Q}^{(n+1)}(U)\Big) \cdot \hatUSWdagger{n+1} \stackrel{?}{=} \sum_{\substack{\mu \vdash n+1 \\ \ell(\mu) \leq d}} \ketbra*{\mu} \otimes \kappa_\mu(\pi) \otimes \nu_\mu(U).
\end{equation} 
Indeed, loosely speaking, the ``unitary register'' (the register containing $\ket*{T'}$) transforms according to the $\nu_\mu$ representation, by the definition of the Clebsch-Gordan transform. Moreover, by the uniqueness of the decomposition into irreps by Schur-Weyl duality, the ``permutation register'' (the register containing $\ket*{S'} = \ket{\lambda} \otimes \ket{S}$) transforms according to some representation of $S_{n+1}$ isomorphic to $\kappa_\mu$, which we recall is Young's orthogonal representation. Moreover, this representation's restriction to $S_n$ is $\kappa_\lambda$ (as we will show in \Cref{lem:Schur_transform_legit_partial}), so that the permutation register's basis vectors constitute a Young-Yamanouchi basis.

It turns out that \Cref{eq:CG_gives_Schur_transform?} \emph{does} hold. Based on this, we make the following definition. 

\begin{definition}[Schur transform] \label{def:Schur_transform_recursive}
    We define the \emph{Schur transform}, $\USW{n}: (\C^d)^{\otimes n} \to \bigoplus_{\lambda \vdash (n,d)} \Specht_\lambda \otimes V^d_\lambda$, by the following recursive construction. We explicitly define $\USW{1}$ as the unitary such that 
    \begin{equation*}
        \USW{1} \ket*{i} \coloneq \ket*{\ytableausetup
        {smalltableaux, centertableaux,boxframe=normal}
        \begin{ytableau}
        ~ 
        \end{ytableau}} \otimes \ket*{\ytableausetup
        {smalltableaux, centertableaux,boxframe=normal}
        \begin{ytableau}
        1
        \end{ytableau}} \otimes \ket*{i}, 
    \end{equation*}
    for all $i \in [d]$. Then, for $n > 1$, 
        \begin{equation*}
            \USW{n} \coloneq \UR{n} \cdot \UCG{n} \cdot (\USW{n-1} \otimes I).
        \end{equation*}
\end{definition}

\begin{theorem} \label{thm:Schur_transform_legit}
    The unitary in \Cref{def:Schur_transform_recursive} is a Schur transform. That is, it satisfies 
    \begin{equation} \label{eq:CG_gives_Schur_transform!}
        \USW{n} \cdot \Big(\mathcal{P}^{(n)}(\pi)  \mathcal{Q}^{(n)}(U)\Big) \cdot \USWdagger{n} = \sum_{\substack{\lambda \vdash n \\ \ell(\lambda) \leq d}} \ketbra*{\lambda} \otimes \kappa_\lambda(\pi) \otimes \nu_\lambda(U),
    \end{equation} 
    for all $\pi \in S_{n}$ and $U \in U(d)$.
\end{theorem}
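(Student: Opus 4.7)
The plan is to prove the identity by induction on $n$. The base case $n=1$ follows by direct inspection of \Cref{def:Schur_transform_recursive}: $\USW{1}$ sends $\ket{i}$ to $\ket{\square} \otimes \ket{\smalloneboxSSYT{1}} \otimes \ket{i}$, which is consistent with the claim since $\kappa_{\square}$ is the trivial representation of $S_1$, $\nu_{\square}$ is the defining representation sending $U$ to $U$ itself, and $S_1$ has only the identity element.

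For the inductive step, I will assume the identity holds for $\USW{n-1}$ and, because $\mathcal{P}^{(n)}(\pi)$ and $\mathcal{Q}^{(n)}(U)$ commute, I will verify the conjugation rule for each factor separately. For the unitary part, I write $\mathcal{Q}^{(n)}(U) = \mathcal{Q}^{(n-1)}(U) \otimes U$. Applying $\USW{n-1} \otimes I$ transforms this, by the inductive hypothesis, into $\sum_{\lambda \vdash n-1} \ketbra*{\lambda} \otimes I_{\dim(\lambda)} \otimes \nu_\lambda(U) \otimes U$. The Clebsch-Gordan transform is by definition (see \Cref{eq:CG_unitary_fixed_lambda}) the unitary which sends $\nu_\lambda(U) \otimes U$ to $\sum_{\lambda \nearrow \mu} \ketbra*{\mu} \otimes \nu_\mu(U)$, and after the rearrangement unitary $\UR{n}$ from \Cref{def:rearrangement_unitary} relocates the new $\ket{\mu}$ label to the outermost register, the result is exactly $\sum_{\mu \vdash n} \ketbra*{\mu} \otimes I \otimes \nu_\mu(U)$, as required.

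For the permutation part, I will use that $S_n$ is generated by the subgroup $S_{n-1}$ (embedded as permutations fixing $n$) together with the adjacent transposition $(n-1, n)$, and check the identity on these two classes of generators. For $\pi \in S_{n-1}$, I have $\mathcal{P}^{(n)}(\pi) = \mathcal{P}^{(n-1)}(\pi) \otimes I$, so applying $\USW{n-1} \otimes I$ produces $\sum_\lambda \ketbra*{\lambda} \otimes \kappa_\lambda(\pi) \otimes I \otimes I$. Since $\UCG{n}$ and $\UR{n}$ act trivially on the permutation register, the final form is $\sum_\mu \ketbra*{\mu} \otimes \big(\bigoplus_{\lambda \nearrow \mu} \kappa_\lambda(\pi)\big) \otimes \nu_\mu(U)$. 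The symmetric group branching rule $\kappa_\mu{\downarrow}^{S_n}_{S_{n-1}} \cong \bigoplus_{\lambda \nearrow \mu} \kappa_\lambda$ from \Cref{thm:branching_rule_Sn}, combined with the identification of Young-Yamanouchi basis vectors $\ket{S'}$ for $S' \in \mathrm{SYT}(\mu)$ with pairs $\ket{\lambda, S}$ where $\lambda = \mu - e_i$ (with $i$ the row containing $n$ in $S'$) and $S \in \mathrm{SYT}(\lambda)$, immediately shows that this restriction matches the restriction of Young's orthogonal form of $\kappa_\mu(\pi)$ to $S_{n-1}$. The inductive hypothesis then guarantees that each sub-block is itself Young's orthogonal form.

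The main obstacle will be verifying the identity on the generator $\pi = (n-1, n)$, which is exactly the content of \Cref{part:appendix}. Here, swapping the last two qudits mixes contributions from different intermediate shapes $\lambda \vdash n-1$, because the final qudit is inserted by a different CG step than the second-to-last qudit. My approach will be to expand both sides in the basis produced by $\USW{n}$ and to compute matrix elements explicitly using the two-step Clebsch-Gordan coefficients $a^\lambda_{k\ell \to ij}$ and $b^\lambda_{k\ell \to ij}$ developed in \Cref{sec:CG-coefficients}. The goal is to match exactly the formulas of \Cref{def:YOB}: eigenvalue $+1$ when $n-1$ and $n$ lie in the same row of the resulting SYT, $-1$ when they lie in the same column, and the $\tfrac{1}{\Delta}\,,\,\sqrt{1 - 1/\Delta^2}$ mixing otherwise, with $\Delta = \content(n) - \content(n-1)$. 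While the BCH05 construction is known to produce \emph{some} Young-Yamanouchi basis, pinning down that the phases and mixing coefficients coincide with Young's orthogonal form reduces to a concrete identity among two-step CG coefficients, and carrying out that identity is the technical bulk of the argument, deferred to the appendix.
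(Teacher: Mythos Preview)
Your outline for the unitary action and for $\pi\in S_{n-1}$ is correct and matches the paper. The gap is in your plan for the transposition $(n-1,n)$.

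You propose to compute the relevant matrix elements ``using the two-step Clebsch--Gordan coefficients $a^\lambda_{k\ell\to ij}$ and $b^\lambda_{k\ell\to ij}$''. But those coefficients are defined only for insertion into the \emph{highest-weight} vector $\ket{T^\lambda}$. To analyze $(n-1,n)$ you must undo two Clebsch--Gordan steps, expressing $\ket{\mu,S',T}$ as a superposition over $\ket{\lambda,S,T''}\otimes\ket{\ell}\otimes\ket{k}$ with $\lambda\vdash n-2$; the intermediate tableaux $T''$ are generic SSYTs, not highest-weight. So the identity you need is not among the $a,b$ coefficients as written, and your deferral to the appendix hides a real missing idea rather than just a calculation.

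The paper closes this gap with two moves you do not mention. First, a Schur's-lemma argument (\Cref{lem:Schur_transform_legit_partial}) shows that the permutation register is automatically correct up to conjugation by a \emph{diagonal} unitary $W_\mu$; this means the diagonal entries of each $2\times 2$ block already match Young's orthogonal form, and one only needs to verify that the off-diagonal entry $\bra{S_{ij}}\widetilde{\kappa}_\mu(n-1,n)\ket{S_{ji}}$ is real and \emph{positive}. Second, to check that sign, the paper chooses $T$ to be the \emph{lowest-weight} SSYT of shape $\mu$ and introduces a complementation operation on tableaux (\Cref{def:complementary_SSYTs}, \Cref{lem:clebsch-gordan-of-complement}) under which lowest-weight maps to highest-weight and the generic CG coefficients $\braket{T'',\ell}{T'}$ transform into highest-weight ones up to an explicit sign and dimension ratio. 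Only after this reduction do the $a,b$ coefficients enter, and the positivity then follows from the sign structure established in \Cref{rem:a-two-step-is-nonnegative}. Without these two ingredients your proposed direct computation does not go through.
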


This theorem is useful for us, because we will eventually want to make use of explicit formulas for matrix elements of Young's orthogonal form. If the permutation registers computed a generic Young-Yamanouchi basis only, then the corresponding matrix elements may contain unknown phases, making them harder to use. 

Proving \Cref{thm:Schur_transform_legit} is a little involved, and we defer its proof to \Cref{part:appendix}.

\section{Dual Clebsch-Gordan transform} \label{sec:dual_CG_transform}

It will be convenient for us to use the \emph{dual Clebsch-Gordan transform} in our proofs as well. We now give a minimal introduction to what we will need. 


\begin{definition}[Rational representations]
    A finite-dimensional representation $(\mu, V)$ of a matrix group $G$ is \emph{rational}, if, expressed in some basis of $V$, every matrix element $\mu(g)_{ij}$ is a rational function in the entries of $g \in G$. 
\end{definition}

A representation being rational is a basis-independent notion, since if the entries of $\mu(g)$ are rational in one basis, then they are also rational in any other basis. Rational representations generalize polynomial representations. The rational irreps can be classified as follows. 

\begin{theorem}
    The rational irreps of $U(d)$ are indexed by tuples $\lambda \in \Z^d$, such that $\lambda_1 \geq \lambda_2 \geq \dots \geq \lambda_d$. The irrep corresponding to $\lambda$ is denoted $(\nu_{\lambda}, V^d_{\lambda})$.
\end{theorem}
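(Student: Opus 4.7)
The plan is to reduce the classification of rational irreps to the (already stated) classification of polynomial irreps by twisting with a power of the determinant representation $\det: U \mapsto \det(U)$. First, I would observe that $U(d)$ is Zariski-dense in $GL(d,\C)$, so any rational representation of $U(d)$ extends uniquely to a rational representation of $GL(d,\C)$; hence it suffices to classify rational irreps of $GL(d,\C)$. The coordinate ring of $GL(d,\C)$ is $\C[U_{ij}][\det(U)^{-1}]$, i.e.\ the localization of the polynomial ring at $\det(U)$, so every matrix entry $\nu(U)_{ij}$ of a rational representation can be written as $p_{ij}(U)/\det(U)^{k_{ij}}$ for some polynomial $p_{ij}$ and integer $k_{ij} \geq 0$. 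Taking $k = \max_{ij} k_{ij}$, we conclude that $\det^{k} \otimes \nu$ is a \emph{polynomial} representation of $U(d)$.

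Next, if $\nu$ is irreducible then so is $\det^{k}\otimes \nu$ (since $\det^k$ is one-dimensional), so by the classification of polynomial irreps there is a partition $\mu=(\mu_1,\ldots,\mu_d)$ with $\mu_1 \geq \cdots \geq \mu_d \geq 0$ such that $\det^{k}\otimes \nu \cong V^d_\mu$. Setting $\lambda_i \coloneq \mu_i - k$ yields a decreasing tuple $\lambda \in \Z^d$, and we define
\begin{equation*}
    V^d_\lambda \;\coloneq\; \det^{-k} \otimes V^d_\mu,
\end{equation*}
so that $\nu \cong V^d_\lambda$. This assignment is independent of the choice of $k$: increasing $k$ to $k'$ just shifts $\mu$ by $(k'-k)\cdot \mathbf{1}$, which is still a partition, and recovers the same rational irrep up to isomorphism.

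Finally, I would verify that the map $\lambda \mapsto V^d_\lambda$ is injective on isomorphism classes. Suppose $V^d_\lambda \cong V^d_{\lambda'}$ for two decreasing signatures in $\Z^d$; choosing $k$ large enough that $\lambda + k\mathbf{1}$ and $\lambda' + k\mathbf{1}$ are both partitions, and tensoring both sides by $\det^{k}$, we obtain an isomorphism of polynomial irreps $V^d_{\lambda+k\mathbf{1}} \cong V^d_{\lambda'+k\mathbf{1}}$. The classification of polynomial irreps then forces $\lambda+k\mathbf{1} = \lambda' + k\mathbf{1}$, i.e.\ $\lambda=\lambda'$. Conversely, every decreasing tuple $\lambda \in \Z^d$ arises from some $V^d_\mu$ twisted by $\det^{-k}$ (just pick $k \geq -\lambda_d$), so the map is surjective.

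The main obstacle is the ``common denominator'' step: one needs the algebro-geometric fact that any rational function on $GL(d,\C)$ regular on all of $GL(d,\C)$ lies in $\C[U_{ij}][\det(U)^{-1}]$, which is what justifies extracting a single $\det^k$ denominator that works uniformly for all matrix entries of $\nu$. Once this is in place, the rest of the argument is bookkeeping with the polynomial classification. An alternative route that avoids the algebraic geometry altogether would be to work infinitesimally, classifying irreps of the complexified Lie algebra $\mathfrak{gl}_d(\C)$ via highest-weight theory and observing that the integral weights of $U(d)$ are precisely the decreasing tuples $\lambda \in \Z^d$; I would mention this as a parallel approach but carry out the $\det$-twisting version, since it directly exploits the polynomial classification already in hand.
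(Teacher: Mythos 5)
The paper does not prove this theorem; it states it as a standard background fact (the rational irreducible representations of $U(d)$, or equivalently of $GL(d,\C)$, are classified by decreasing integer signatures; see e.g.\ Goodman--Wallach, cited elsewhere in the preliminaries). So there is no ``paper proof'' to compare against.

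Your argument is the standard one and is essentially correct: twist by a power of $\det$ to land in the polynomial theory, invoke the polynomial classification, then check the resulting bijection is well-defined, injective, and surjective on signatures. The only step I would ask you to spell out more carefully is the one you lean on implicitly before the ``common denominator'' fact: you need to know not merely that each matrix entry of $\nu$ is a rational function of the entries of $U$, but that the extension of $\nu$ to $GL(d,\C)$ (via Zariski density of $U(d)$) is \emph{regular everywhere} on $GL(d,\C)$. A rational function $p/q$ defined on $U(d)$ could a priori have poles elsewhere in $GL(d,\C)$. The standard fix is: the locus $V \subseteq GL(d,\C)$ where all entries of $\nu$ are regular is a nonempty Zariski-open set, and using the (densely-valid) cocycle identity $\nu(gh)=\nu(g)\nu(h)$ together with the fact that $V\cdot V = GL(d,\C)$ for any dense open $V$, one extends $\nu$ regularly to all of $GL(d,\C)$. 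Only then does $\Gamma(GL(d,\C),\calO)=\C[U_{ij}][\det^{-1}]$ give you the single power of $\det$ you want. With that gap filled, the $\det^k$-twisting, the injectivity check by re-twisting, and the surjectivity by choosing $k\geq -\lambda_d$ are all fine, and your mention of the Lie-algebraic highest-weight route as an alternative is apt.
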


Unlike Young diagrams, the tuples indexing rational irreps do not necessarily have nonnegative row lengths. These are sometimes called \emph{staircases} or \emph{rational tableaux} in the literature \cite{Ngu24}. One new rational representation is particularly important.

\begin{definition}[Antifundamental representation]
    The \emph{antifundamental representation} of $U(d)$ is the representation $(\nu_{\minusBox}, V^d_{\minusBox})$ defined by $V^d_{\minusBox} \coloneq \C^d$ and $\nu_{\minusBox}(U) = U^*$. This is the irrep corresponding to $\lambda = (0^{d-1}, -1)$. The computational basis elements are denoted $\{ \ket*{ \overline{k} } \}_{k \in [d]}$ in this context.
\end{definition}

It turns out (see for example~\cite[Equation (49)]{Ngu24}) that the tensor product of a polynomial irrep $V^d_{\lambda}$ of $U(d)$ with $V^d_{\minusBox}$ decomposes into rational irreps in a way similar to~\Cref{eq:CG_transform_branching_rule}.
\begin{equation} \label{eq:dCG_transform_branching_rule}
    V^d_\lambda \otimes V^d_{\minusBox} \stackrel{U(d)}{\cong} \bigoplus_{\substack{\mu = \lambda - \Box \\\ell(\mu) \leq d}} V^d_\mu. 
\end{equation}
Here by $\lambda - \Box$ we mean a staircase tableau of the form $\lambda - e_i$, for any $i \in [d]$. So the right-hand side of the above expression iterates over all valid tuples that can be obtained from $\lambda$ by removing a box. Note that with tuples, we may also remove boxes from rows with zero or fewer boxes.

Therefore, there exists a unitary $\calU_{\mathrm{dCG}}^{(\lambda)} : V^d_{\lambda} \otimes \C^d \to \oplus_{\mu} V_{\mu}^d$ that implements the isomorphism in~\Cref{eq:dCG_transform_branching_rule}. In other words, for all $U \in U(d)$,
\begin{equation}
    \label{eq:dCG_transform_branching_rule_with_unitary}
    \calU_{\mathrm{dCG}}^{(\lambda)} \cdot \big(\nu_{\lambda}(U) \otimes U^* \big)\cdot \calU_{\mathrm{dCG}}^{(\lambda)\dagger} = \sum_{\substack{\mu = \lambda - \Box \\ \ell(\mu) \leq d}} \ketbra*{\mu} \otimes \nu_{\mu}(U).
\end{equation}
We observe that the choice of $\calU_{\mathrm{dCG}}^{(\lambda)}$ is not unique. Indeed, one can see from Schur's Lemma that any unitary of the form
\begin{equation*}
     \Big(\sum_{\mu = \lambda - \Box} \alpha^{\mu} \cdot \ketbra*{\mu} \otimes I_{V^d_{\mu}}\Big) \cdot \calU_{\mathrm{dCG}}^{(\lambda)} 
\end{equation*}
where $\alpha^{\mu}$ is a complex scalar of unit norm will also satisfy~\Cref{eq:dCG_transform_branching_rule_with_unitary}. The above transformation defines the dual Clebsch-Gordan coefficients.
\begin{definition}[Dual Clebsch-Gordan coefficients]
    Let $\lambda \in \Z^d$, $T \in \mathrm{SSYT}(\lambda, d)$, and $k \in [d]$. We can write
    \begin{equation*}
        \calU_{\mathrm{dCG}}^{(\lambda)} \cdot \ket*{T} \otimes \ket*{\overline{k}} = \sum_{\mu = \lambda - \Box} \sum_{T'} \bra*{\mu, T'} \calU_{\mathrm{dCG}}^{(\lambda)} \ket*{T, \overline{k}} \cdot \ket*{\mu, T'}.
    \end{equation*}
    Here, $T'$ indexes a basis of the irrep corresponding to $\mu$.\footnote{We have already seen that for polynomial representations, $T'$ can be thought of as an SSYT. For rational representations, there is also a connection to tableaux. However, we will only make use of this in the case when $\lambda$ and $\mu$ are polynomial representations, and so we will not need the more general case. See, e.g.\ \cite{Ngu24}, for more details.}.
    Then coefficients $\bra*{\mu, T'} \calU_{\mathrm{dCG}}^{(\lambda)} \ket*{T, \overline{k}}$ are known as the \emph{dual Clebsch-Gordan coefficients}. We will typically write $\braket*{T'}{T, \overline{k}}$ as shorthand for $\bra*{\mu, T'} \calU_{\mathrm{dCG}}^{(\lambda)} \ket*{T, \overline{k}}$, leaving $\mu = \shape(T')$ and $\calU_{\mathrm{dCG}}^{(\lambda)}$ implicit.
\end{definition}

We are interested in the relationship between the Clebsch-Gordan and the dual Clebsch-Gordan coefficients when both $\lambda$ and $\mu$ correspond to polynomial irreps. In particular, we use the following result from Equation (13) of Section 18.2.1 of~\cite{VK92}.
\begin{theorem}[\cite{VK92}]
    \label{thm:normal-to-dual-cg}
    Let $\lambda, \mu \in \Z^d$ be Young diagrams with nonnegative row lengths such that $\mu = \lambda - \Box$. Then there exists a unitary $\calU_{\mathrm{dCG}}^{(\lambda)}$ that performs the dual Clebsch-Gordan transform such that:
    \begin{equation*}
        \bra*{\mu, T'} \calU_{\mathrm{dCG}}^{(\lambda)} \ket*{T, \overline{k}} = \sqrt{\frac{\dim(V^d_{\mu})}{\dim(V^d_{\lambda})}} \cdot \bra*{\lambda, T} \calU_{\mathrm{CG}}^{(\mu)} \ket*{T', k}.
    \end{equation*}
\end{theorem}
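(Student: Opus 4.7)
The plan is to construct a specific dual Clebsch-Gordan unitary that implements the stated formula, using the $U(d)$-invariant vector $\ket{\omega} \coloneqq \frac{1}{\sqrt{d}}\sum_{k=1}^d \ket{k}\otimes\ket{\overline{k}} \in V^d_{(1)} \otimes V^d_{\minusBox}$ together with the regular CG transform, and then to read off its matrix elements. For each Young diagram $\mu$ with $\lambda = \mu + \Box$, let $W^{(\mu\to\lambda)}: V^d_\mu \otimes V^d_{(1)} \to V^d_\lambda$ denote the projection of $\calU_{\mathrm{CG}}^{(\mu)}$ onto the $V^d_\lambda$-component, with matrix elements $W^{(\mu\to\lambda)}_{T,(T',k)} = \bra{\lambda,T}\calU_{\mathrm{CG}}^{(\mu)}\ket{T',k}$. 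By the CG intertwining property, $W^{(\mu\to\lambda)}$ is a $U(d)$-equivariant partial isometry whose image is all of $V^d_\lambda$.

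I would then consider the composition
\begin{equation*}
\widetilde{h}^{(\mu)}: V^d_\mu \xrightarrow{\mathrm{id}\otimes\ket{\omega}} V^d_\mu \otimes V^d_{(1)} \otimes V^d_{\minusBox} \xrightarrow{W^{(\mu\to\lambda)}\otimes\mathrm{id}} V^d_\lambda \otimes V^d_{\minusBox},
\end{equation*}
rescaled by $\sqrt{d\cdot\dim(V^d_\mu)/\dim(V^d_\lambda)}$. Since $\ket{\omega}$ is $U(d)$-invariant and $W^{(\mu\to\lambda)}$ is equivariant, $\widetilde{h}^{(\mu)}$ is $U(d)$-equivariant, and a direct computation yields $\bra{\lambda,T;\overline{k}}\widetilde{h}^{(\mu)}\ket{T'} = \sqrt{\dim(V^d_\mu)/\dim(V^d_\lambda)}\cdot W^{(\mu\to\lambda)}_{T,(T',k)}$. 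The key technical step is to verify that $\widetilde{h}^{(\mu)}$ is an isometry, which reduces to the normalization identity
\begin{equation*}
\sum_{k,T} \left|W^{(\mu\to\lambda)}_{T,(T',k)}\right|^2 = \frac{\dim(V^d_\lambda)}{\dim(V^d_\mu)}.
\end{equation*}
I would prove this by applying Schur's lemma to the $U(d)$-equivariant operator $\mathrm{Tr}_{\C^d}(W^{(\mu\to\lambda)\dagger}W^{(\mu\to\lambda)})$ on the irreducible space $V^d_\mu$: it must be a scalar multiple of the identity, and its trace equals $\mathrm{Tr}(W^{(\mu\to\lambda)\dagger}W^{(\mu\to\lambda)}) = \dim(V^d_\lambda)$ since $W^{(\mu\to\lambda)}$ is a partial isometry with image $V^d_\lambda$, pinning the scalar down to $\dim(V^d_\lambda)/\dim(V^d_\mu)$.

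Finally, for distinct Young diagrams $\mu$ and $\mu'$ with $\lambda = \mu + \Box = \mu' + \Box$, the images $\widetilde{h}^{(\mu)}(V^d_\mu)$ and $\widetilde{h}^{(\mu')}(V^d_{\mu'})$ are irreducible, pairwise non-isomorphic $U(d)$-subrepresentations of $V^d_\lambda \otimes V^d_{\minusBox}$, hence mutually orthogonal by Schur's lemma. I would define $\calU_{\mathrm{dCG}}^{(\lambda)}$ to act as $(\widetilde{h}^{(\mu)})^{\dagger}$ on the image of each $\widetilde{h}^{(\mu)}$, and completed on the remaining (non-polynomial, rational) components of the decomposition in \Cref{eq:dCG_transform_branching_rule} by any equivariant isometric choice. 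This produces a unitary implementing the dual Clebsch-Gordan transform whose matrix elements on the polynomial components are $\bra{\mu,T'}\calU_{\mathrm{dCG}}^{(\lambda)}\ket{T,\overline{k}} = \sqrt{\dim(V^d_\mu)/\dim(V^d_\lambda)} \cdot \overline{W^{(\mu\to\lambda)}_{T,(T',k)}}$. Since the CG coefficients produced by \Cref{lem:cg-rules} are all real (each is a signed square root of a ratio of positive integers), the complex conjugate is redundant and the stated identity follows. The main obstacle is the Schur's lemma normalization computation; everything else is a direct assembly of equivariant maps.
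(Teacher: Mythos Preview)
Your argument is correct. The paper does not prove this theorem; it cites it as Equation (13) of Section 18.2.1 of \cite{VK92}, so there is no ``paper's proof'' to compare against. Your construction via the $U(d)$-invariant vector $\ket{\omega}$ and the equivariant partial isometries $W^{(\mu\to\lambda)}$ is a clean, self-contained derivation: the equivariance of $\widetilde{h}^{(\mu)}$ follows from that of its constituents, Schur's lemma forces $(\widetilde{h}^{(\mu)})^\dagger\widetilde{h}^{(\mu)}$ to be scalar on the irreducible $V^d_\mu$, and the trace computation $\mathrm{Tr}(W^{(\mu\to\lambda)\dagger}W^{(\mu\to\lambda)}) = \mathrm{Tr}(W^{(\mu\to\lambda)}W^{(\mu\to\lambda)\dagger}) = \dim(V^d_\lambda)$ (since $W^{(\mu\to\lambda)}W^{(\mu\to\lambda)\dagger} = I_{V^d_\lambda}$) pins the scalar to $1$. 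The only minor remark is that the equivariance of $\mathrm{Tr}_{\C^d}(W^\dagger W)$ is not completely immediate from the equivariance of $W^\dagger W$ (which intertwines $\nu_\mu \otimes U$, not $\nu_\mu \otimes I$), but it follows either by a short partial-trace cyclicity calculation or, more directly, by observing that $(\widetilde{h}^{(\mu)})^\dagger\widetilde{h}^{(\mu)}$ itself is an intertwiner on $V^d_\mu$. Your appeal to the reality of the CG coefficients is justified by the phase conventions fixed in \Cref{lem:cg-rules}.
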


\newpage
\part{The moments of the debiased Keyl's estimator} 
\label{part:moments}

\section{The first moment}\label{sec:first_moment}
In this section, we formally prove~\Cref{thm:unbiased} from the introduction. That is, we show that the debiased Keyl's algorithm, given in~\Cref{def:debiased-keyl}, is an unbiased estimator. We start by developing some tools related to the one-step Clebsch-Gordan coefficients.

\subsection{Technical lemmas concerning one-step Clebsch-Gordan coefficients}
\label{sec:cg-lemmas}

In this subsection, we collect some identities about the one-step Clebsch-Gordan coefficients that we will use in our proofs of the first and second moments of our estimator. Recall that for a Young diagram $\lambda = (\lambda_1, \dots, \lambda_d)$, we use $\lambda_{\leq j} = (\lambda_1, \dots, \lambda_j)$ to denote the truncation of this tableau to its first $j$ rows.

\begin{notation} \label{not:D_symbols}
    Let $\lambda$ be a Young diagram, and let $i\in [d]$. We write $D^{\lambda}_{k \to i}$ as shorthand for the ratio of dimensions 
    \begin{equation*}
        D^{\lambda}_{k \to i} \coloneq  \frac{\dim(V^k_{\lambda_{\leq k}+e_i})}{\dim(V^k_{\lambda_{\leq k}})},
    \end{equation*}
    when $i \leq k \leq d$, and define $D^{\lambda}_{k \to i} = 0$ when $0 \leq k < i$. 
\end{notation}

\begin{notation}
    Let $\lambda$ be a Young diagram, and let $i \in [d]$. We write $C^\lambda_i$ as shorthand for the quantity $C^\lambda_i \coloneq \lambda_i - i$, which is also the content of the rightmost box in row $i$ in $\lambda$. We will omit $\lambda$ when clear from context. 
\end{notation}

The first lemma relates the one-step Clebsch-Gordan coefficients to the dimensions of truncated tableaux.

\begin{lemma}
    \label{lem:clebsch-gordan-to-truncated-tableau}
    Let $\lambda$ be a Young diagram, and $k, i \in [d]$. Then
    \begin{equation} \label{eq:CG_equal_to_diff_of_Ds}
        |c_{k \rightarrow i}^{\lambda}|^2 = D^{\lambda}_{k \rightarrow i} - D^{\lambda}_{k - 1\rightarrow i},
    \end{equation}
    and
\begin{equation}\label{eq:diff_contents_times_CG_equals_D}
        (C_i - C_k) \cdot |c_{k \to i}^{\lambda}|^2 = D^{\lambda}_{k-1 \to i}.
    \end{equation}
    These equations together immediately imply:
    \begin{equation} \label{eq:relation_between_D's}
        (C_i - C_k) \cdot D^\lambda_{k \to i} = (C_i - C_k + 1) \cdot D^\lambda_{k-1 \to i}.
    \end{equation}
\end{lemma}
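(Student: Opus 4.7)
The plan is to verify both identities by direct computation, using the Weyl dimension formula (\ref{eq:weyl-dim-formula}) to obtain an explicit product formula for $D^{\lambda}_{k \to i}$, and then comparing to the explicit formula for $|c^{\lambda}_{k\to i}|^2$ given in \Cref{lem:one-step-k-to-i}. The third identity (\ref{eq:relation_between_D's}) will then follow immediately by adding the first two together.

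The first step is to unfold $D^{\lambda}_{k \to i}$. Writing $\mu = \lambda_{\leq k}$ and $\mu' = \mu + e_i$, the Weyl formula gives
\begin{equation*}
D^{\lambda}_{k \to i}
= \prod_{1 \leq a < b \leq k} \frac{(\mu'_a - \mu'_b) + (b-a)}{(\mu_a - \mu_b) + (b-a)}.
\end{equation*}
Since $\mu'$ and $\mu$ differ only in row $i$, all factors with neither $a = i$ nor $b = i$ cancel, and a short calculation (absorbing the sign from $b > i$ via the identity $-M+1 = -(M-1)$) collapses what remains into the single product
\begin{equation*}
D^{\lambda}_{k \to i} = \prod_{\substack{j = 1 \\ j \neq i}}^{k} \frac{M_{j,i} - 1}{M_{j,i}}, \qquad \text{where } M_{j,i} \coloneq (\lambda_j - \lambda_i) + (i-j).
\end{equation*}
I should check that each factor is positive (so the absolute-value bars in the formula for $|c_{k \to i}^\lambda|^2$ can be dropped), which follows from a short case-split on whether $j < i$ or $j > i$.

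With this in hand the first identity is a telescoping computation. Assuming $i \leq k - 1$, pulling out the $j = k$ factor gives
\begin{equation*}
D^{\lambda}_{k\to i} = D^{\lambda}_{k-1 \to i} \cdot \frac{M_{k,i} - 1}{M_{k,i}}, \qquad \text{hence} \qquad D^{\lambda}_{k\to i} - D^{\lambda}_{k-1 \to i} = -\frac{D^{\lambda}_{k-1 \to i}}{M_{k,i}}.
\end{equation*}
Since $M_{k,i} < 0$ when $i < k$, this matches precisely the formula for $|c^{\lambda}_{k \to i}|^2$ in \Cref{lem:one-step-k-to-i}. Identity (\ref{eq:diff_contents_times_CG_equals_D}) is then immediate from the observation $C_i - C_k = -M_{k,i}$. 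The edge cases $i = k$ and $i > k$ must be handled separately but are routine: when $i = k$, the formula for $|c^{\lambda}_{k \to k}|^2$ equals $D^{\lambda}_{k \to k}$ and $D^{\lambda}_{k-1 \to k} = 0$ by convention, while both sides of each identity vanish trivially when $i > k$.

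Finally, the third identity is just algebra: substituting (\ref{eq:CG_equal_to_diff_of_Ds}) into (\ref{eq:diff_contents_times_CG_equals_D}) yields
\begin{equation*}
(C_i - C_k) \bigl(D^{\lambda}_{k \to i} - D^{\lambda}_{k-1 \to i}\bigr) = D^{\lambda}_{k-1 \to i},
\end{equation*}
which rearranges to (\ref{eq:relation_between_D's}). There is no real obstacle here; the only thing that requires care is matching signs in the product formula for $D^{\lambda}_{k \to i}$ (since for $j > i$ the numerator and denominator are both negative), and confirming the boundary conventions $D^{\lambda}_{k \to i} = 0$ for $k < i$.
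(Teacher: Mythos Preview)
The proposal is correct and takes essentially the same approach as the paper: both compute $D^{\lambda}_{k\to i}$ explicitly via the Weyl dimension formula, observe that it factors as $D^{\lambda}_{k-1\to i}$ times the single $j=k$ factor, and match the resulting difference against the explicit formula for $|c^{\lambda}_{k\to i}|^2$ from \Cref{lem:one-step-k-to-i}, handling the boundary cases $k=i$ and $k<i$ separately. Your notation $M_{j,i}$ gives a slightly more uniform product than the paper's split into $j<i$ and $j>i$, but the computation is the same.
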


\begin{proof}
    The proof follows by direct calculation. If $k < i$, each side of both \Cref{eq:CG_equal_to_diff_of_Ds,eq:diff_contents_times_CG_equals_D} is equal to zero, so the equations hold. When $k \geq i$, recall the Clebsch-Gordan coefficient expression from~\Cref{lem:one-step-k-to-i}:
    \begin{equation*}
        |c_{k \to i}^{\lambda}|^2
        = \left(\prod_{j < i} \frac{(\lambda_j - \lambda_i) + (i - j) - 1}{(\lambda_j - \lambda_i) + (i - j)}\right) \cdot \left(\prod_{j = i+1}^{k-1} \frac{(\lambda_i - \lambda_j) + (j - i) + 1}{(\lambda_i - \lambda_j) + (j - i)}\right) \cdot \frac{1}{(\lambda_i - \lambda_k) + (k - i) + \delta_{ik}}.
    \end{equation*}
    Now suppose $k > i$. We use the Weyl dimension formula from~\Cref{eq:weyl-dim-formula} to write
    \begin{align*}
        \frac{\dim(V_{\lambda_{\leq k}+e_i}^k)}{\dim(V_{\lambda_{\leq k}}^k)}
        &= \left(\prod_{j < i} \frac{(\lambda_j - \lambda_i) + (i - j) - 1}{(\lambda_j - \lambda_i) + (i - j)}\right)
        \cdot \left(\prod_{j = i+1}^k \frac{(\lambda_i - \lambda_j) + (j - i) + 1}{(\lambda_i - \lambda_j) + (j - i)}\right) \\
        &= \left(\prod_{j < i} \frac{(\lambda_j - \lambda_i) + (i - j) - 1}{(\lambda_j - \lambda_i) + (i - j)}
        \right)\cdot \left(\prod_{j = i+1}^{k-1} \frac{(\lambda_i - \lambda_j) + (j - i) + 1}{(\lambda_i - \lambda_j) + (j - i)}\right) \cdot \left(\frac{1}{(\lambda_i - \lambda_k) + (k - i)} + 1\right) \\
        &= |c_{k \to i}^\lambda|^2 + \frac{\dim(V^{k-1}_{\lambda_{\leq k-1}+e_i})}{\dim(V^{k-1}_{\lambda_{\leq k-1}})}.
    \end{align*}
    Thus, $D^{\lambda}_{k \rightarrow i} = |c_{k \rightarrow i}^{\lambda}|^2 + D^{\lambda}_{k - 1\rightarrow i}$.
    Moreover, the second equality above implies that for $k > i$,
    \begin{equation*}
        D^{\lambda}_{k \to i} = D^{\lambda}_{k-1 \to i} \cdot \left(\frac{1}{C_i - C_k} + 1\right) \implies (C_i - C_k) \cdot (D^{\lambda}_{k \to i} - D^{\lambda}_{k-1 \to i}) = D^{\lambda}_{k-1 \to i}.
    \end{equation*}
    Lastly, suppose $k = i$. Then
    \begin{align*}
        \frac{\dim(V_{\lambda_{\leq k}+e_i}^k)}{\dim(V_{\lambda_{\leq k}}^k)}
        &= \prod_{j<i} \frac{(\lambda_j - \lambda_i) + (i - j) - 1}{(\lambda_j - \lambda_i) + (i - j)} \\
        &= \left(\prod_{j<i} \frac{(\lambda_j - \lambda_i) + (i - j) - 1}{(\lambda_j - \lambda_i) + (i - j)}\right) \cdot \frac{1}{(\lambda_i - \lambda_k) + (k - i) + \delta_{ik}} = |c_{k \to i}^\lambda|^2.
    \end{align*}
    Since $D^{\lambda}_{i-1 \to i}$ is defined to be $0$, we conclude that $D^{\lambda}_{i \to i} = |c_{i \to i}^\lambda|^2 + D^{\lambda}_{i-1 \to i}$. Finally, both sides of~\Cref{eq:diff_contents_times_CG_equals_D} are equal to zero when $i = k$.
\end{proof}
Telescoping the above sum in \Cref{eq:CG_equal_to_diff_of_Ds} gives the following corollary.
\begin{corollary} \label{cor:sum_of_CGs_1}
    The one-step Clebsch-Gordan coefficients satisfy the following relation:
    \begin{equation*}
         \sum_{k = 1}^{s} |c_{k \to i}^{\lambda}|^2 = D^{\lambda}_{s \to i}.
    \end{equation*}
\end{corollary}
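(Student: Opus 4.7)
The plan is to obtain the identity by a direct telescoping argument using equation \eqref{eq:CG_equal_to_diff_of_Ds} from Lemma \ref{lem:clebsch-gordan-to-truncated-tableau}, which is already at our disposal. Specifically, applying that lemma termwise expresses each $|c_{k \to i}^\lambda|^2$ as a consecutive difference $D^{\lambda}_{k \to i} - D^{\lambda}_{k-1 \to i}$, so summing from $k = 1$ to $s$ collapses the sum to $D^{\lambda}_{s \to i} - D^{\lambda}_{0 \to i}$.

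The only remaining step is to check that the lower boundary term vanishes. By Notation \ref{not:D_symbols}, we set $D^{\lambda}_{k \to i} = 0$ whenever $0 \le k < i$. Since $i \in [d]$ satisfies $i \ge 1$, we have $0 < i$, hence $D^{\lambda}_{0 \to i} = 0$. This gives the claimed identity. In display:
\begin{equation*}
\sum_{k=1}^{s} |c_{k \to i}^{\lambda}|^2
= \sum_{k=1}^{s} \bigl(D^{\lambda}_{k \to i} - D^{\lambda}_{k-1 \to i}\bigr)
= D^{\lambda}_{s \to i} - D^{\lambda}_{0 \to i}
= D^{\lambda}_{s \to i}.
\end{equation*}

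There is essentially no obstacle here; the work was done in Lemma \ref{lem:clebsch-gordan-to-truncated-tableau}, and the corollary is a one-line telescoping consequence. The only subtlety to flag is the boundary convention $D^{\lambda}_{0 \to i} = 0$, which is why the telescoping produces a clean right-hand side rather than one involving a correction term at $k=0$. Note also that for $s < i$ both sides vanish automatically (each summand is zero by the convention, and $D^{\lambda}_{s \to i} = 0$), so the identity holds in that range as well without needing separate treatment.
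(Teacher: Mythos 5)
Your proof is correct and is exactly the argument the paper intends: the corollary is stated immediately after Lemma \ref{lem:clebsch-gordan-to-truncated-tableau} with the remark that ``telescoping the above sum in \eqref{eq:CG_equal_to_diff_of_Ds} gives the following corollary.'' Your careful note about the boundary convention $D^{\lambda}_{0 \to i} = 0$ is the right detail to flag, and the observation that both sides vanish for $s < i$ is a sensible sanity check.
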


The next lemma shows that the one-step CG coefficient $c^{\lambda}_{k \to i}$, regarded as a function of $k$ alone (i.e.\ for fixed $\lambda$, $d$, and $i$), is constant on the blocks of $\lambda$. 

\begin{lemma} \label{lem:CG_coeffs_on_block_equal}
    Let $\lambda$ be a Young diagram, and let $\{k, k+1, \dots, \ell\}$ be a block of $\lambda$, i.e.\ $\lambda_k = \lambda_{k+1} = \dots = \lambda_{\ell}$. Then, for each $i \leq d$, the one-step Clebsch-Gordan coefficients that correspond to adding any value in $\{k, \dots, \ell\}$ satisfy:
    \begin{equation} \label{eq:CG_constant_on_blocks}
        c^{\lambda}_{k \to i} = c^{\lambda}_{k+1 \to i} = \dots = c^{\lambda}_{\ell \to i}.
    \end{equation}
\end{lemma}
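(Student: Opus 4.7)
My plan is to prove~\Cref{eq:CG_constant_on_blocks} by leveraging~\Cref{lem:clebsch-gordan-to-truncated-tableau}, which rewrites the squared one-step coefficient as the forward difference $|c^\lambda_{m \to i}|^2 = D^\lambda_{m \to i} - D^\lambda_{m-1 \to i}$, and then showing that the map $m \mapsto D^\lambda_{m \to i}$ is affine in $m$ throughout $\{k-1, k, \ldots, \ell\}$.

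First I would dispose of the degenerate cases in which every coefficient in~\Cref{eq:CG_constant_on_blocks} is zero. If $i > \ell$, every $m \in \{k, \ldots, \ell\}$ satisfies $m < i$, so $c^\lambda_{m \to i} = 0$ by definition. If $k < i \leq \ell$, then $\lambda_{i-1} = \lambda_i$ (both rows lie in the same block), which forces the factor at $j = i - 1$ in the product of~\Cref{lem:one-step-k-to-i} to vanish, and the coefficients again vanish. The substantive case is therefore $i \leq k$.

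For $i \leq k$, I would apply the Weyl dimension formula~\Cref{eq:weyl-dim-formula} to both $V^m_{\lambda_{\leq m}}$ and $V^m_{\lambda_{\leq m} + e_i}$ and cancel the factors not involving row $i$, obtaining
\begin{equation*}
    D^\lambda_{m \to i} = \left( \prod_{s < i} \frac{(\lambda_s - \lambda_i - 1) + (i - s)}{(\lambda_s - \lambda_i) + (i - s)} \right) \cdot \prod_{t = i+1}^{m} \frac{(\lambda_i - \lambda_t + 1) + (t - i)}{(\lambda_i - \lambda_t) + (t - i)}.
\end{equation*}
For $t$ in the block $\{k, k+1, \ldots, \ell\}$, $\lambda_t$ equals $\lambda_k$, so the factors for $t \in \{k, \ldots, m\}$ telescope to $\frac{(\lambda_i - \lambda_k) + (m - i + 1)}{(\lambda_i - \lambda_k) + (k - i)}$ (in the borderline subcase $i = k$, this ratio simplifies to $m - k + 1$). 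Thus $D^\lambda_{m \to i}$ is affine in $m$ throughout $\{k-1, k, \ldots, \ell\}$, so its consecutive differences are a constant independent of $m$. Since the one-step CG coefficients are non-negative reals (\Cref{lem:one-step-k-to-i} presents them as absolute values of square roots), taking positive square roots yields~\Cref{eq:CG_constant_on_blocks}.

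The only mild subtlety is verifying that the affine expression for $D^\lambda_{m \to i}$ remains valid at the left endpoint $m = k-1$, where the derivation above formally refers to rows outside the block. Both subcases, namely $i < k$ (where $D^\lambda_{k-1 \to i}$ is computed purely from factors unrelated to the block) and $i = k$ (where $D^\lambda_{k-1 \to i} = 0$ by convention), can be checked by direct substitution into the affine formula. Beyond this bookkeeping, I anticipate no real obstacle; the proof is essentially a telescoping Weyl-dimension computation.
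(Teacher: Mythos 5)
Your proof is correct. The one genuinely different ingredient from the paper's argument is your framing: rather than directly showing $c^\lambda_{m+1\to i} = c^\lambda_{m\to i}$ by manipulating the explicit product formula of \Cref{lem:one-step-k-to-i} (which is what the paper does, case-splitting on $i = m$ vs.\ $i < m$), you invoke \Cref{lem:clebsch-gordan-to-truncated-tableau} to write $|c^\lambda_{m\to i}|^2 = D^\lambda_{m\to i} - D^\lambda_{m-1\to i}$, then show that $m \mapsto D^\lambda_{m\to i}$ is affine on $\{k-1,\dots,\ell\}$ via the telescoping product $\prod_{t=k}^{m} \frac{a+t+1}{a+t} = \frac{a+m+1}{a+k}$ with $a = \lambda_i - \lambda_k - i$. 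This is the same underlying Weyl-dimension calculation but packaged more structurally, and it buys you a one-shot argument (``constant second differences'') instead of the paper's element-by-element comparison. I'd call it a modest improvement in elegance rather than a fundamentally different route.

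One minor imprecision: in the subcase $k < i \leq \ell$, you say ``the factor at $j=i-1$ in the product of \Cref{lem:one-step-k-to-i} vanishes,'' but that factor only appears when $m \geq i$; for $m < i$ the coefficient is zero because $T^\lambda_{m\to i}$ is not even a valid SSYT. Cleaner is to observe that $\lambda + e_i$ is not a valid Young diagram when $\lambda_{i-1} = \lambda_i$, so $c^\lambda_{m\to i} = 0$ for every $m$ by the convention adopted immediately after the definition of one-step coefficients (cf.\ \Cref{obs:cg-zero-invalid-ssyt}). This is a bookkeeping fix, not a gap. Your verification of the affine formula at the boundary $m=k-1$, including the $i=k$ subcase where $D^\lambda_{k-1\to i}=0$ by convention, is correct as stated.
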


\begin{proof}

    We first observe that if $i \in \{k, \dots, \ell\}$, then we must have $i = k$ if $\lambda+e_i$ is a valid Young diagram. If $k + 1 \leq i \leq \ell$, then \Cref{eq:CG_constant_on_blocks} holds trivially as all coefficients are zero, since $\lambda+e_i$ is invalid. Furthermore, if $i > \ell$, even in the case where $\lambda+e_i$ is a valid Young diagram, $T^{\lambda}_{m \to i}$ is an invalid SSYT for all $m \in \{k, \dots, \ell\}$, since $m < i$. In this case too, \Cref{eq:CG_constant_on_blocks} holds trivially.

    It remains to show that $c^{\lambda}_{m+1 \to i} = c^{\lambda}_{m \to i}$ for all $m \in \{k, \dots, \ell-1\}$ whenever $i \leq k$. We consider the case when $i = m$ (which happens when they are both equal to $k$) separately.
    \begin{enumerate}
        \item If $i = m = k$, then, from \Cref{lem:one-step-k-to-i}, we have
        \begin{align*}
            c^{\lambda}_{k+1 \to i}
            &=\left|\frac{1}{(\lambda_i - \lambda_{k+1}) + (k + 1 - i) + \delta_{i,k+1}} \cdot \prod_{j=1, j\neq i}^{k} \frac{(\lambda_j - \lambda_i) + (i - j) - 1}{(\lambda_j - \lambda_i) + (i - j)}
            \right|^{1/2} \\
            & = \left|\frac{1}{(\lambda_i - \lambda_{k}) + (k +1 - i)} \cdot \prod_{j=1, j\neq k}^{k} \frac{(\lambda_j - \lambda_i) + (i - j) - 1}{(\lambda_j - \lambda_i) + (i - j)}
            \right|^{1/2} \tag{$\lambda_k = \lambda_{k+1}$} \\
            & = \left|\frac{1}{(\lambda_i - \lambda_{k}) + (k - i) + \delta_{ik}} \cdot \prod_{j=1, j\neq i}^{k-1} \frac{(\lambda_j - \lambda_i) + (i - j) - 1}{(\lambda_j - \lambda_i) + (i - j)}
            \right|^{1/2} = c^{\lambda}_{k \to i}.
        \end{align*}
        \item If $i < m$, then 
        \begin{align*}
            c^{\lambda}_{m+1 \to i}
            &=\left|\frac{1}{(\lambda_i - \lambda_{m+1}) + (m + 1 - i) + \delta_{i,m+1}} \cdot \prod_{j=1, j\neq i}^{m} \frac{(\lambda_j - \lambda_i) + (i - j) - 1}{(\lambda_j - \lambda_i) + (i - j)}
             \right|^{1/2}\\
             &=\left|\frac{1}{(\lambda_i - \lambda_{m+1}) + (m + 1 - i)} \cdot  \frac{(\lambda_{m} - \lambda_i) + (i - m) - 1}{(\lambda_m - \lambda_i) + (i - m)} \cdot \prod_{j=1, j\neq i}^{m-1} \frac{(\lambda_j - \lambda_i) + (i - j) - 1}{(\lambda_j - \lambda_i) + (i - j)}
             \right|^{1/2}\\
             &=\left|\frac{1}{(\lambda_i - \lambda_{m}) + (m - i)+1} \cdot  \frac{(\lambda_{i} - \lambda_m) + (m - i) + 1}{(\lambda_i - \lambda_m) + (m - i)} \cdot \prod_{j=1, j\neq i}^{m-1} \frac{(\lambda_j - \lambda_i) + (i - j) - 1}{(\lambda_j - \lambda_i) + (i - j)} \right|^{1/2} \\
             &=\left|  \frac{1}{(\lambda_i - \lambda_m) + (m - i) + \delta_{im}} \cdot \prod_{j=1, j\neq i}^{m-1} \frac{(\lambda_j - \lambda_i) + (i - j) - 1}{(\lambda_j - \lambda_i) + (i - j)}
             \right|^{1/2} = c^{\lambda}_{m \to i},
        \end{align*}
        where we used the fact that $\lambda_m = \lambda_{m+1}$ in the third step. \qedhere
    \end{enumerate}    
\end{proof}

\subsection{Proof of the first moment}

\begin{theorem}[\Cref{thm:unbiased}, restated]
    Let $\widehat{\brho}$ be the output of the debiased Keyl's algorithm when run on $\rho^{\otimes n}$. Then $\E[\widehat{\brho}] = \rho$.
\end{theorem}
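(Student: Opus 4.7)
The plan is to reduce the claim $\E[\widehat{\brho}] = \rho$ to the operator identity
\begin{equation*}
M_{\mathrm{avg}} \coloneq \sum_{\widehat{\rho}} M_{\widehat{\rho}} \otimes \widehat{\rho} \;=\; \tfrac{1}{n}\,\mathcal{P}^{(n+1)}(X_{n+1}) \;=\; \tfrac{1}{n}\bigl(\swap_{1,n+1} + \cdots + \swap_{n,n+1}\bigr)
\end{equation*}
as an operator on $(\C^d)^{\otimes(n+1)}$, where the sum defining $M_{\mathrm{avg}}$ ranges over outcomes of Keyl's POVM paired with the debiased output rule. This reduction suffices because $\E[\widehat{\brho}] = \tr_{[n]}(M_{\mathrm{avg}} \cdot \rho^{\otimes n} \otimes I)$ by the defining property of the average POVM, and because $\tr_{[n]}(\swap_{k,n+1} \cdot \rho^{\otimes n} \otimes I) = \rho$ for every $k \in [n]$, averaging to $\rho$. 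By \Cref{thm:jucys-murphy-diag} together with the recursive construction of the Schur transform (\Cref{def:Schur_transform_recursive}), the Jucys--Murphy element acts diagonally on the $n{+}1$-copy Schur basis with eigenvalue $\mathrm{cont}_{S'}(n{+}1) = \mu_i - i$ on $\ket{\mu,S',T'}$, where $i$ is the row into which box $n{+}1$ was added. So the reduced goal is that $M_{\mathrm{avg}}$ in the Schur basis is diagonal with these eigenvalues (divided by $n$).

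To compute $M_{\mathrm{avg}}$, plug in Keyl's POVM element from \Cref{eq:Keyl's_algorithm_Ms} and the spectral form $\widehat{\brho} = \bU \cdot (\blambda^{\uparrow}/n) \cdot \bU^{\dagger}$, yielding
\begin{equation*}
M_{\mathrm{avg}} \;=\; \sum_{\lambda \vdash n}\dim(V^d_\lambda)\sum_{k=1}^d \frac{\lambda^{\uparrow}_k}{n}\int_U (\nu_\lambda(U)\otimes U)\,\ketbra{T^\lambda,k}\,(\nu_\lambda(U)^{\dagger}\otimes U^{\dagger})\,\dU,
\end{equation*}
where I have suppressed the surrounding conjugation by $\USW{n}\otimes I$. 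Applying the Clebsch--Gordan transform to $\ket{T^\lambda}\otimes\ket{k}$ decomposes it as $\sum_i c^{\lambda}_{k\to i}\,\ket{\lambda + e_i,T^\lambda_{k\to i}}$, and then Schur's lemma (\Cref{lem:schur-lemma}) kills the off-diagonal $i \neq i'$ cross-terms and collapses each diagonal summand to $I_{V^d_{\lambda+e_i}}/\dim(V^d_{\lambda+e_i})$. Passing through the rearrangement unitary of \Cref{def:rearrangement_unitary} to re-express the result in the $n{+}1$-copy Schur basis, the claim $M_{\mathrm{avg}} = \mathcal{P}^{(n+1)}(X_{n+1})/n$ reduces to the purely combinatorial identity
\begin{equation*}
    \frac{\dim(V^d_\lambda)}{\dim(V^d_{\lambda+e_i})}\,\sum_{k=1}^d \lambda^{\uparrow}_k \,\lvert c^{\lambda}_{k\to i}\rvert^2 \;=\; \lambda_i + 1 - i
\end{equation*}
for every Young diagram $\lambda \vdash n$ and every $i$ such that $\lambda + e_i$ is a Young diagram. (One can sanity-check this identity in the $n{=}1$ case of \Cref{sec:single-copy-overview}, where it evaluates both sides to the content of the new box.)

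To prove the identity, I would first exploit the fact, from \Cref{lem:CG_coeffs_on_block_equal}, that $c^{\lambda}_{k\to i}$ is constant as $k$ varies over a block of $\lambda$. Therefore $\sum_k \lambda^{\uparrow}_k \lvert c^{\lambda}_{k\to i}\rvert^2$ depends only on the block-sums of $\lambda^{\uparrow}$, and every legal Young-diagram transformation (\Cref{def:unbiased-family}) preserves these block-sums by construction. Hence it suffices to verify the identity with $\lambda^{\uparrow}$ replaced by any one convenient legal transformation, and I would pick the \emph{staircase} choice $\lambda^{\uparrow\uparrow}_k = \lambda_k + (d - 2k + 1)$ from \Cref{def:staircase-box-donation}. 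Once established for the staircase estimator, the block-sum reasoning will immediately promote the unbiasedness conclusion to all members of the family in \Cref{def:unbiased-family}, including the original $\widehat{\brho}$. For the staircase case itself, split $\lambda^{\uparrow\uparrow}_k$ into the ``$\lambda_k$'' piece and the ``$d - 2k+1$'' piece. The second piece telescopes cleanly: combining \Cref{eq:CG_equal_to_diff_of_Ds} and \Cref{cor:sum_of_CGs_1}, the sum $\sum_k (d-2k+1)\lvert c^\lambda_{k\to i}\rvert^2$ becomes a summation-by-parts on the truncated-dimension ratios $D^{\lambda}_{k \to i}$ of \Cref{not:D_symbols}. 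The first piece is handled by rewriting $\lambda_k \lvert c^{\lambda}_{k\to i}\rvert^2$ via \Cref{eq:diff_contents_times_CG_equals_D} and then using the Weyl dimension formula (\Cref{eq:weyl-dim-formula}) to evaluate $\dim(V^d_\lambda)/\dim(V^d_{\lambda+e_i})$ in closed form. The main obstacle is this final algebraic step: while each ingredient is elementary, matching the telescoped ``staircase'' contribution against the ``$\lambda_k$'' contribution to produce precisely the clean right-hand side $\lambda_i + 1 - i$ requires a carefully chosen regrouping of indices and a cancellation between the two pieces that is not transparent from the individual formulas; in effect, this is where the ``miraculous cancellations'' alluded to in the paper's introduction come into play for the first-moment computation.
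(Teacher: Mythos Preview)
Your proposal follows essentially the same approach as the paper: reduce $\E[\widehat{\brho}] = \rho$ to the operator identity $M_{\mathrm{avg}} = \tfrac{1}{n}X_{n+1}$, expand via the Clebsch--Gordan transform and Schur's lemma, and reduce to the combinatorial identity $\sum_k f(\lambda)_k|c^\lambda_{k\to i}|^2 = (\lambda_i+1-i)\,D^\lambda_{d\to i}$, which you then prove first for the staircase transformation and extend to all legal estimators via block-constancy of the one-step CG coefficients.

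The one substantive divergence is in your plan for the staircase identity itself. You propose to split $\lambda^{\uparrow\uparrow}_k = \lambda_k + (d-2k+1)$ into two pieces and handle each separately, anticipating a delicate recombination and invoking ``miraculous cancellations.'' The paper instead proves the staircase identity by a short induction on $d$: the key observation is that $\lambda^{\uparrow\uparrow}_k(d) = \lambda^{\uparrow\uparrow}_k(d-1) + 1$, so the sum for $d$ equals the sum for $d-1$ plus a correction consisting of $\sum_{k\leq d-1}|c^\lambda_{k\to i}|^2 = D^\lambda_{d-1\to i}$ and the $k=d$ boundary term, and these combine cleanly via the relation $(C_i-C_d+1)D^\lambda_{d-1\to i} = (C_i-C_d)D^\lambda_{d\to i}$ from \Cref{lem:clebsch-gordan-to-truncated-tableau}. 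This induction is just a few lines and involves no miracles; the ``miraculous cancellations'' remark in the introduction refers to the second-moment computation, not this one. Your split-and-telescope approach would likely also work, but the inductive route is both shorter and more transparent.
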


Let us interpret what it means for a tomography algorithm to give an unbiased estimator of a density matrix $\rho$. A tomography algorithm is determined by a POVM $M = \{ M_i \}$ on $(\C^d)^{\otimes n}$ and a corresponding set of possible outputs $\{ \widehat{\rho}_i\}$: the algorithm measures $\rho^{\otimes n}$ using $M$, and outputs $\widehat{\rho}_i$ upon obtaining outcome $M_i$. If the tomography algorithm is unbiased, then 
\begin{equation*}
    \sum_{i} \widehat{\rho}_i \cdot \tr(M_{i} \cdot \rho^{\otimes n}) = \rho.
\end{equation*}
Some elementary manipulations of the left-hand side give
\begin{equation*}
    \sum_{i} \widehat{\rho}_i \cdot \tr(M_{i} \cdot \rho^{\otimes n})
    = \sum_{i} \tr_{[n]}(M_{i} \otimes \widehat{\rho}_i \cdot \rho^{\otimes n} \otimes I)
    = \tr_{[n]}\Big(\Big(\sum_{i} M_{i} \otimes \widehat{\rho}_i\Big) \cdot \rho^{\otimes n} \otimes I\Big)
    = \tr_{[n]}(M_{\mathrm{avg}}^{(1)} \cdot \rho^{\otimes n} \otimes I),
\end{equation*}
where we define
\begin{equation*}
    M_{\mathrm{avg}}^{(1)} \coloneq \sum_i M_i \otimes \widehat{\rho}_i.
\end{equation*}
So our goal is to satisfy $\tr_{[n]}(M_{\mathrm{avg}}^{(1)} \cdot \rho^{\otimes n} \otimes I) = \rho$.
This would happen, for example,  if $M_{\mathrm{avg}}^{(1)} = (1, n+1)$, where $(1, n+1) \coloneqq \mathcal{P}(1, n+1)$ and  $\mathcal{P}(\cdot)$ is the representation of the symmetric group from \Cref{reps_P_and_Q}, since
\begin{equation*}
    \tr_{[n]}( (1,n+1) \cdot \rho^{\otimes n} \otimes I )
    = \tr_{[2]}( (1, 2) \cdot \rho \otimes I ) \cdot \tr(\rho)^{n-1}
    = \rho \cdot \tr(\rho)^{n-1} = \rho. 
\end{equation*}
Similarly, it would also be true if $M_{\mathrm{avg}}^{(1)} =(k, n+1)$ for any $k \in [n]$, and also, more generally, if
\begin{equation*}
    M_{\mathrm{avg}}^{(1)} = \sum_{k=1}^n \alpha_i \cdot (k, n+1),
\end{equation*}
for any constants with $\alpha_1 + \cdots + \alpha_n = 1$. One might then expect that any unbiased algorithm that treats its $n$ registers symmetrically should satisfy
\begin{equation}
    \label{eq:suffices-for-first-moment}
    M_{\mathrm{avg}}^{(1)} = \frac{1}{n} \cdot \Big((1, n+1) + \cdots + (n, n+1)\Big) = \frac{1}{n}\cdot X_{n+1},
\end{equation}
where $X_{n+1}$ is the Jucys-Murphy element as in \Cref{def:Jucys_Murphy}. We show the debiased Keyl's algorithm satisfies \Cref{eq:suffices-for-first-moment} below in \Cref{lem:m-avg-1-is-jucys-murphy-full-estimator}, from which \Cref{thm:unbiased} follows as an immediate corollary. 

\begin{lemma}
    \label{lem:m-avg-1-is-jucys-murphy-full-estimator}
    Let $M = \{M_{\lambda, U}\}$ be the POVM that corresponds to Keyl's algorithm. Consider a legal estimator, which outputs $\widehat{\rho}_{\lambda,U} \coloneq U \cdot f(\lambda) \cdot U^{\dagger} / n$ when the POVM element indexed by $\lambda$ and $U$ is obtained. Let $M_{\mathrm{avg}}^{(1)} = \sum_{\lambda,U} M_{\lambda,U} \otimes \widehat{\rho}_{\lambda,U}$, where $\widehat{\rho}$. Then, for any legal estimator, 
    \begin{equation*}
        M_{\mathrm{avg}}^{(1)} = \frac{1}{n}\cdot \Big((1, n+1) + \dots + (n, n+1)\Big) = \frac{1}{n}\cdot X_{n+1}.
    \end{equation*}
\end{lemma}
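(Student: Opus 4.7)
The plan is to move to the Schur basis $\USW{n+1}$ on $(\C^d)^{\otimes(n+1)}$, where $X_{n+1}$ is diagonal with known eigenvalues, and compare it to the Schur-basis expression for $M^{(1)}_{\mathrm{avg}}$. By \Cref{eq:JM_Schur_basis}, $\USW{n+1}X_{n+1}\USWdagger{n+1}$ acts as multiplication by $\content_{S'}(n+1)$ on the basis vector $\ket{\mu}\otimes\ket{S'}\otimes\ket{T'}$. Under the recursive construction of $\USW{n+1}$ (\Cref{def:Schur_transform_recursive}), the label $\ket{S'}$ factors as $\ket{\lambda}\otimes\ket{S}$ with $\lambda\nearrow \mu$, and the eigenvalue becomes $C^{\lambda}_{j}+1$ where $\mu=\lambda+e_j$. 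So the target diagonal operator $\frac{1}{n}\USW{n+1}X_{n+1}\USWdagger{n+1}$ reads
\begin{equation*}
\tfrac{1}{n}\sum_{\lambda,j}(C^\lambda_j+1)\,\ketbra{\lambda+e_j}\otimes\ketbra{\lambda}\otimes I_{\dim(\lambda)}\otimes I_{V^d_{\lambda+e_j}},
\end{equation*}
where the sum runs over pairs $(\lambda,j)$ with $\lambda+e_j$ a valid Young diagram.

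To compute $\USW{n+1}M^{(1)}_{\mathrm{avg}}\USWdagger{n+1}$, I first conjugate $M^{(1)}_{\mathrm{avg}}$ by $\USW{n}\otimes I$, which block-diagonalizes Keyl's POVM element via Schur-Weyl duality and leaves the last register as a tensor factor for the extra copy. On a given $\lambda$-block this produces $\dim(V^d_\lambda)\cdot\ketbra{\lambda}\otimes I_{\dim(\lambda)}\otimes\frac{1}{n}\sum_i f(\lambda)_i\int_U \nu_\lambda(U)\ketbra{T^\lambda}\nu_\lambda(U)^\dagger\otimes U\ketbra{i}U^\dagger\,\mathrm{d}U$. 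The Clebsch-Gordan transform $\mathcal{U}^{(\lambda)}_{\mathrm{CG}}$ block-diagonalizes $\nu_\lambda(U)\otimes U$ into the irreps $\{\nu_\mu:\mu=\lambda+e_j\}$ and sends $\ket{T^\lambda}\otimes\ket{i}$ to $\sum_j c^\lambda_{i\to j}\ket{\lambda+e_j,T^\lambda_{i\to j}}$. Applying Schur orthogonality to the resulting Haar integral kills all cross-$\mu$ terms and collapses each diagonal integral to $I_{V^d_{\lambda+e_j}}/\dim(V^d_{\lambda+e_j})$. Applying the rearrangement unitary $\UR{n+1}$, I obtain
\begin{equation*}
\USW{n+1}M^{(1)}_{\mathrm{avg}}\USWdagger{n+1} = \sum_{\lambda,j}\frac{\dim(V^d_\lambda)}{n\,\dim(V^d_{\lambda+e_j})}\left(\sum_i f(\lambda)_i\,|c^\lambda_{i\to j}|^2\right)\ketbra{\lambda+e_j}\otimes\ketbra{\lambda}\otimes I_{\dim(\lambda)}\otimes I_{V^d_{\lambda+e_j}}.
\end{equation*}

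Matching this against the target reduces the lemma to the following pointwise identity for each valid $(\lambda,j)$:
\begin{equation*}
\sum_{i=1}^d f(\lambda)_i\,|c^\lambda_{i\to j}|^2 \;=\; (C^\lambda_j+1)\,D^\lambda_{d\to j}.
\end{equation*}
This is where the legality hypothesis enters. By \Cref{lem:CG_coeffs_on_block_equal}, $|c^\lambda_{i\to j}|^2$ is constant on each block of $\lambda$ as a function of $i$; hence any two legal $f$'s produce the same left-hand side, and I may replace $f(\lambda)_i$ by the staircase transform $\lambda^{\uparrow\uparrow}_i=C^\lambda_i + d + 1 - i$ from \Cref{def:staircase-box-donation}, which has a clean closed form.

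The remaining work is the verification of the identity for $f=\mathrm{staircase}$, which is an elementary calculation using \Cref{lem:clebsch-gordan-to-truncated-tableau}. The telescoping relation $|c^\lambda_{i\to j}|^2=D^\lambda_{i\to j}-D^\lambda_{i-1\to j}$ gives $\sum_i |c^\lambda_{i\to j}|^2 = D^\lambda_{d\to j}$ and, by Abel summation, $\sum_i i\,|c^\lambda_{i\to j}|^2 = d\,D^\lambda_{d\to j}-\sum_{i=1}^{d-1} D^\lambda_{i\to j}$; the content rule $(C^\lambda_j-C^\lambda_i)|c^\lambda_{i\to j}|^2=D^\lambda_{i-1\to j}$ gives $\sum_i C^\lambda_i |c^\lambda_{i\to j}|^2 = C^\lambda_j D^\lambda_{d\to j}-\sum_{i=0}^{d-1}D^\lambda_{i\to j}$. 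Adding the three contributions, the interior sums telescope and cancel, leaving exactly $(C^\lambda_j+1)D^\lambda_{d\to j}$. I expect the main obstacle to be bookkeeping, not substance: correctly threading the recursive identification $\ket{S'}\leftrightarrow \ket{\lambda}\otimes\ket{S}$ through the CG transform and rearrangement unitary so that the content $C^\lambda_j+1$ on the right-hand side really is the eigenvalue of $X_{n+1}$ on the corresponding Young-Yamanouchi basis vector.
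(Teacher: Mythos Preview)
Your proposal is correct and follows the same overall architecture as the paper: pass to the Schur basis on $(\C^d)^{\otimes(n+1)}$, apply the Clebsch--Gordan transform to $\ket{T^\lambda}\otimes\ket{i}$, kill the cross-terms with Schur's lemma, and reduce to the pointwise identity $\sum_i f(\lambda)_i\,|c^\lambda_{i\to j}|^2=(C^\lambda_j+1)\,D^\lambda_{d\to j}$, then invoke block-constancy (\Cref{lem:CG_coeffs_on_block_equal}) to replace $f$ by the staircase estimator.

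The one place you diverge is in verifying the staircase identity itself. The paper proves it by induction on $d$ (\Cref{lem:the_combinatorial_identity_staircase}), peeling off the last row and using \Cref{eq:relation_between_D's} to close the recursion. Your Abel-summation argument is more direct: you split $\lambda^{\uparrow\uparrow}_i=C^\lambda_i+(d+1-i)$ into three pieces, evaluate each sum separately using the telescoping relation \Cref{eq:CG_equal_to_diff_of_Ds} and the content relation \Cref{eq:diff_contents_times_CG_equals_D}, and observe that the interior $\sum D^\lambda_{i\to j}$ terms cancel. Both arguments rest on exactly the same ingredients from \Cref{lem:clebsch-gordan-to-truncated-tableau}; yours avoids the induction and is marginally cleaner, while the paper's inductive framing perhaps generalizes more naturally to the two-step setting in \Cref{sec:second_moment}.
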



\begin{proof}
    The POVM used by any legal estimator is the same as in the original Keyl's algorithm. The POVM elements are indexed by $\lambda \vdash n$ and $U \in U(d)$, and, from \Cref{eq:Keyl's_algorithm_Ms}, is given by:
    \begin{equation*}
        M_{\lambda, U}  = \USWdagger{n} \cdot  \Big( \dim(V_{\lambda}^d) \cdot \ketbra*{\lambda} \otimes I_{\dim(\lambda)} \otimes \nu_{\lambda}(U) \cdot  \ketbra*{T^{\lambda}} \cdot \nu_{\lambda}(U^{\dagger}) \cdot dU \Big) \cdot \USW{n}.
    \end{equation*}
    Upon receiving $\lambda$ and $U$, the state we output is $\widehat{\rho}_{\lambda,U} = U \cdot f(\lambda) \cdot U^{\dagger} / n$. We can therefore rewrite $M_{\mathrm{avg}}^{(1)}$ as:
    \begin{align}
        & M_{\mathrm{avg}}^{(1)} \nonumber \\
        &=\frac{1}{n} \sum_{\lambda} \int_U M_{\lambda, U} \otimes \widehat{\rho}_{\lambda,U} \nonumber \\
        & = \frac{1}{n} \sum_{\lambda} \dim(V_{\lambda}^d)\int_U \USWdagger{n} \cdot  \Big(   \ketbra*{\lambda} \otimes I_{\dim(\lambda)} \otimes \nu_{\lambda}(U) \cdot  \ketbra*{T^{\lambda}} \cdot \nu_{\lambda}(U^{\dagger})  \Big) \cdot \USW{n} \otimes U f(\lambda) U^{\dagger} \cdot \dU \nonumber \\
        & = \frac{1}{n} \sum_{\lambda} \dim(V_{\lambda}^d) \sum_{S} \sum_{k=1}^d f(\lambda)_k \int_U \USWdagger{n} \cdot  \Big(   \ketbra*{\lambda} \otimes \ketbra*{S} \otimes \nu_{\lambda}(U) \cdot  \ketbra*{T^{\lambda}} \cdot \nu_{\lambda}(U^{\dagger})  \Big) \cdot \USW{n} \otimes U \ketbra*{k} U^{\dagger} \cdot \dU. \label{eq:Mavg_proof_of_first_moment_1}
    \end{align}
    Here, $S \in \mathrm{SYT}(\lambda)$, and $k \in [d]$. We can manipulate the integrand as follows:
    \begin{align}
        & \USWdagger{n} \cdot  \Big(   \ketbra*{\lambda} \otimes \ketbra*{S} \otimes \nu_{\lambda}(U) \cdot  \ketbra*{T^{\lambda}} \cdot \nu_{\lambda}(U^{\dagger})  \Big) \cdot \USW{n} \otimes U \ketbra*{k} U^{\dagger} \nonumber \\
        & = U^{\otimes n} \cdot \USWdagger{n} \cdot \Big( \ketbra*{\lambda}\otimes \ketbra*{S} \otimes \ketbra*{T^\lambda}\Big) \cdot \USW{n} \cdot U^{\dagger, \otimes n} \otimes U \ketbra*{k} U^\dagger \nonumber \\
        & =  U^{\otimes(n+1)} \cdot \calU^\dagger \cdot \Big( \ketbra*{\lambda} \otimes \ketbra*{S} \otimes \ketbra*{T^\lambda} \otimes \ketbra*{k} \Big) \cdot \calU \cdot U^{\dagger, \otimes (n+1)}. \nonumber
    \end{align}
    Here, we are using the notation
    \begin{equation*}
        \calU \coloneq \USW{n} \otimes I,
    \end{equation*}
    as shorthand. We have also used \begin{equation*}
        U^{\otimes n} \cdot \USWdagger{n} = \USWdagger{n} \cdot \sum_{\substack{\lambda \vdash n \\ \ell(\lambda) \leq d}} \ketbra{\lambda} \otimes I_{\dim(\lambda)} \otimes \nu_\lambda(U).
    \end{equation*}
    We proceed by using the Clebsch-Gordan transform to rewrite the state inside parentheses using:
    \begin{align*}
        \UR{n+1} \cdot \UCG{n+1} \cdot \ket{\lambda} \otimes \ket{S} \otimes \ket{T^{\lambda}} \otimes \ket{k}
        & =  \UR{n+1} \cdot \sum_{i=1}^d c^{\lambda}_{k \to i} \ket{\lambda} \otimes \ket{S} \otimes \ket{\lambda+e_i} \otimes \ket{T_{k \to i}^{\lambda}} \\
        &= \sum_{i=1}^d c^{\lambda}_{k \to i} \ket{\lambda+e_i} \otimes \ket{S_i} \otimes \ket{T_{k \to i}^{\lambda}}.
    \end{align*}
    Here, $S_i$ is the SYT obtained from $S$, by adding a box labeled $n+1$ to the end of the $i$-th row. This is well defined for all $\lambda$ such that $\lambda+e_i$ is a valid Young diagram. Recalling that the Clebsch-Gordan coefficients are real, and our definition
    \begin{equation*}
    \USW{n+1} \coloneq \UR{n+1} \cdot \UCG{n+1} \cdot (\USW{n} \otimes I) = \UR{n+1} \cdot \UCG{n+1} \cdot \calU,
    \end{equation*}
    we have
    \begin{align*}
        & U^{\otimes (n+1)} \cdot \calU^\dagger \cdot \Big( \ketbra*{\lambda} \otimes \ketbra*{S} \otimes \ketbra*{T_\lambda} \otimes \ketbra*{k}\Big)\cdot \calU \cdot U^{\dagger, \otimes (n+1)} \\
        & = U^{\otimes (n+1)} \cdot \USWdagger{n+1} \cdot \Big(\sum_{i,j=1}^d c^{\lambda}_{k \to i} c^\lambda_{k \to j} \ketbra*{\lambda+e_i}{\lambda+e_j} \otimes \ketbra*{S_i}{S_j} \otimes \ketbra*{T^\lambda_{k \to i}}{T^\lambda_{k \to j}}\Big)\cdot \USW{n+1} \cdot U^{\dagger, \otimes (n+1)} \\
        & = \USWdagger{n+1} \cdot \Big(\sum_{i,j=1}^d c^{\lambda}_{k \to i} c^\lambda_{k \to j} \ketbra*{\lambda+e_i}{\lambda+e_j} \otimes \ketbra*{S_i}{S_j} \otimes \nu_{\lambda+e_i}(U) \ketbra*{T^\lambda_{k \to i}}{T^\lambda_{k \to j}} \nu_{\lambda+e_j}(U)^\dagger \Big) \cdot \USW{n+1}.
    \end{align*}
    After substituting this back into \Cref{eq:Mavg_proof_of_first_moment_1}, the integral over unitaries in $M^{(1)}_{\mathrm{avg}}$ can then be pushed under the sum and onto the unitary register. The resulting integral can be evaluated using Schur's lemma:
    \begin{equation*}
        \int_U \nu_{\lambda+e_i}(U) \ketbra*{T^\lambda_{k \to i}}{T^\lambda_{k \to j}} \nu_{\lambda+e_j}(U)^\dagger \cdot \dU = \frac{1}{\dim(V^d_{\lambda+e_i})} \cdot I_{\dim(V^d_{\lambda+e_i})} \cdot \delta_{ij},
    \end{equation*}
    since the integral is an intertwining operator between the irreducible representations $\nu_{\lambda+e_i}$ and $\nu_{\lambda+e_j}$, which are non-isomorphic unless $i = j$. Therefore,
    \begin{align*}
        & \int_U U^{\otimes (n+1)} \cdot \calU^\dagger \cdot \Big( \ketbra*{\lambda} \otimes \ketbra*{S} \otimes \ketbra*{T_\lambda} \otimes \ketbra*{k}\Big) \cdot \calU \cdot U^{\dagger, \otimes (n+1)} \dU \\
        & = \USWdagger{n+1} \cdot \Big( \sum_{i=1}^d \frac{|c^{\lambda}_{k \to i}|^2}{\dim(V^d_{\lambda+e_i})} \cdot \ketbra*{\lambda+e_i} \otimes \ketbra*{S_i} \otimes I_{\dim(V^d_{\lambda+e_i})} \Big) \cdot \USW{n+1}.
    \end{align*}
    Our expression for $M^{(1)}_{\mathrm{avg}}$ thus becomes:
    \begin{align*}
        M^{(1)}_{\mathrm{avg}} & =  \USWdagger{n+1} \cdot \Big( \frac{1}{n} \sum_{\lambda} \dim(V_{\lambda}^d) \sum_{k=1}^d f(\lambda)_k \sum_S \sum_{i=1}^d \frac{|c^{\lambda}_{k \to i}|^2}{\dim(V_{\lambda+e_i}^d)} \cdot \ketbra*{\lambda+e_i} \otimes \ketbra*{S_i} \otimes I_{\dim(V_{\lambda+e_i}^d)} \Big) \cdot \USW{n+1} \\
        & = \USWdagger{n+1} \cdot \Big(\frac{1}{n} \sum_{\lambda} \sum_{S} \sum_{i=1}^d \frac{\dim(V^d_\lambda)}{\dim(V^d_{\lambda+e_i})} \cdot \Big( \sum_{k=1}^d f(\lambda)_k \cdot |c^{\lambda}_{k \to i}|^2 \Big) \cdot \ketbra*{\lambda+e_i} \otimes \ketbra*{S_i} \otimes I_{\dim(V^d_{\lambda+e_i})} \Big) \cdot \USW{n+1}.
    \end{align*}
    We have shown so far that $M^{(1)}_{\mathrm{avg}}$ is diagonal in the $(n+1)$-copy Schur basis, and the expression below gives each diagonal entry:
    \begin{equation*}
        \frac{1}{n} \cdot \frac{\dim(V^d_\lambda)}{\dim(V^d_{\lambda+e_i})} \cdot \Big( \sum_{k=1}^d f(\lambda)_k \cdot |c^{\lambda}_{k \to i}|^2 \Big).
    \end{equation*}
    We prove in~\Cref{lem:the_combinatorial_identity} below, using our lemmas for the one-step Clebsch-Gordan coefficients, that
    \begin{equation}
        \label{eq:using-comb-identity}
        \frac{1}{n} \cdot \frac{\dim(V^d_\lambda)}{\dim(V^d_{\lambda+e_i})} \cdot \Big( \sum_{k=1}^d f(\lambda)_k \cdot |c^{\lambda}_{k \to i}|^2 \Big)
        = \frac{1}{n} \cdot \frac{\dim(V^d_\lambda)}{\dim(V^d_{\lambda+e_i})} \cdot (\lambda_i + 1 - i) \cdot \frac{\dim(V^d_{\lambda+e_i})}{\dim(V^d_\lambda)} = \frac{1}{n} \cdot (\lambda_i + 1 - i).
    \end{equation}
    Finally,
    \begin{align*}
        M^{(1)}_{\mathrm{avg}} & = \USWdagger{n+1} \cdot \Big( \frac{1}{n} \sum_{\lambda} \sum_{S} \sum_{i=1}^d (\lambda_i + 1 - i) \cdot  \ketbra*{\lambda+e_i} \otimes \ketbra*{S_i} \otimes I_{\dim(V^d_{\lambda+e_i})}\Big) \cdot \USW{n+1} \\
        & = \USWdagger{n+1} \cdot \Big( \frac{1}{n} \sum_{\substack{\mu \vdash n+1\\ \ell(\mu) \leq d}} \sum_{S' \in \mathrm{SYT}(\mu)} \content_{S'}(n+1) \cdot \ketbra*{\mu} \otimes \ketbra*{S'} \otimes I_{\dim(V_{\mu}^d)} \Big) \cdot \USW{n+1}.
    \end{align*}
    Comparing with \Cref{eq:JM_Schur_basis}, we conclude:
    \begin{equation*}
        M^{(1)}_{\mathrm{avg}} = \frac{1}{n} \cdot X_{n+1}. \qedhere
    \end{equation*}
\end{proof}

To complete the proof of our first moment, it suffices to prove that the following identity from~\Cref{eq:using-comb-identity} holds, for all legal Young diagram transformations $f$:
\begin{equation*}
    \sum_{k=1}^d f(\lambda)_k \cdot |c^\lambda_{k \to i}|^2 = (\lambda_i + 1 - i) \cdot D^{\lambda}_{d \to i}.
\end{equation*}
We first show in~\Cref{lem:the_combinatorial_identity_staircase} below that this identity holds in the special case when $f$ corresponds to the staircase transformation. Then we use the property that the one-step Clebsch-Gordan coefficients are constant on the blocks of a diagram to extend this result to all $f$'s in~\Cref{lem:the_combinatorial_identity}.

\begin{lemma}\label{lem:the_combinatorial_identity_staircase}
    Let $\lambda$ be a Young diagram, and $\lambda^{\uparrow\uparrow}$ be the outcome of the staircase transformation on $\lambda$. Then 
    \begin{equation}\label{eq:full_sum_estimator_times_CG_staircase}
        \sum_{k=1}^d \lambda^{\uparrow\uparrow}_k \cdot |c^\lambda_{k \to i}|^2 = (\lambda_i + 1 - i) \cdot D^{\lambda}_{d \to i}. 
    \end{equation}
\end{lemma}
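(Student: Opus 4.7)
The plan is to reduce this identity to two applications of the one-step Clebsch-Gordan machinery from \Cref{sec:cg-lemmas}, by first rewriting the staircase weights in terms of contents. Recalling $C_k = \lambda_k - k$, observe that
\begin{equation*}
    \lambda^{\uparrow\uparrow}_k \;=\; \lambda_k + d - 2k + 1 \;=\; C_k + (d-k+1),
\end{equation*}
so the left-hand side splits as
\begin{equation*}
    \sum_{k=1}^d \lambda^{\uparrow\uparrow}_k \,|c^\lambda_{k\to i}|^2 \;=\; \underbrace{\sum_{k=1}^d C_k\,|c^\lambda_{k\to i}|^2}_{(\mathrm{I})} \;+\; \underbrace{\sum_{k=1}^d (d-k+1)\,|c^\lambda_{k\to i}|^2}_{(\mathrm{II})}.
\end{equation*}

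For $(\mathrm{I})$, I will write $C_k = C_i - (C_i - C_k)$ and apply \Cref{cor:sum_of_CGs_1} to the $C_i$ piece and~\Cref{eq:diff_contents_times_CG_equals_D} to the $(C_i-C_k)$ piece, giving
\begin{equation*}
    (\mathrm{I}) \;=\; C_i\cdot D^\lambda_{d\to i} \;-\; \sum_{k=1}^d D^\lambda_{k-1\to i} \;=\; C_i\cdot D^\lambda_{d\to i} \;-\; \sum_{k=i}^{d-1} D^\lambda_{k\to i},
\end{equation*}
where the last equality uses $D^\lambda_{k\to i} = 0$ for $k < i$ (\Cref{not:D_symbols}). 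For $(\mathrm{II})$, I will substitute the telescoping identity $|c^\lambda_{k\to i}|^2 = D^\lambda_{k\to i} - D^\lambda_{k-1\to i}$ from~\Cref{eq:CG_equal_to_diff_of_Ds} and then perform a standard Abel summation, reindexing the second resulting sum by $j = k-1$. All coefficients $(d-k+1)-(d-k) = 1$ line up, leaving
\begin{equation*}
    (\mathrm{II}) \;=\; D^\lambda_{d\to i} + \sum_{k=1}^{d-1} D^\lambda_{k\to i} \;=\; \sum_{k=i}^d D^\lambda_{k\to i}.
\end{equation*}

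Adding the two pieces produces a clean cancellation: the sums $\sum_{k=i}^{d-1} D^\lambda_{k\to i}$ appearing in $(\mathrm{I})$ and $(\mathrm{II})$ annihilate each other, and only the $k=d$ term of $(\mathrm{II})$ survives, yielding
\begin{equation*}
    (\mathrm{I}) + (\mathrm{II}) \;=\; C_i\cdot D^\lambda_{d\to i} + D^\lambda_{d\to i} \;=\; (\lambda_i - i + 1)\cdot D^\lambda_{d\to i},
\end{equation*}
which is exactly~\Cref{eq:full_sum_estimator_times_CG_staircase}. The whole argument is essentially bookkeeping once the decomposition $\lambda^{\uparrow\uparrow}_k = C_k + (d-k+1)$ is in hand; I do not foresee any real obstacle, as both identities~\Cref{eq:CG_equal_to_diff_of_Ds} and~\Cref{eq:diff_contents_times_CG_equals_D} have already been proven, and the miracle that the staircase donation exactly makes these two telescoping sums cancel against each other is precisely the reason the staircase estimator is the ``right'' correction, which gives some retrospective motivation for \Cref{def:staircase-box-donation}.
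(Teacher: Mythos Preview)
Your proof is correct and takes a genuinely different route from the paper. The paper proceeds by induction on $d$: it uses $\lambda^{\uparrow\uparrow}_k(d) = \lambda^{\uparrow\uparrow}_k(d-1) + 1$, splits off the $k=d$ term, applies the inductive hypothesis and \Cref{cor:sum_of_CGs_1}, and then closes with~\Cref{eq:relation_between_D's}. You instead give a direct computation via the decomposition $\lambda^{\uparrow\uparrow}_k = C_k + (d-k+1)$ followed by Abel summation, using~\Cref{eq:diff_contents_times_CG_equals_D} and~\Cref{eq:CG_equal_to_diff_of_Ds} once each. Both arguments rest on the same pair of identities from \Cref{lem:clebsch-gordan-to-truncated-tableau}, so neither is ``deeper,'' but yours is shorter and avoids the bookkeeping of tracking the implicit $d$-dependence of $\lambda^{\uparrow\uparrow}$. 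The paper's inductive structure is likely chosen because it parallels the (much harder) induction in the second-moment analogue, \Cref{lem:master-equation-staircase}, where a direct Abel-summation argument does not obviously work; so their approach buys reusability of the proof template, while yours buys brevity here.
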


\begin{proof}
    Fix $\lambda$ and $i$. We prove the lemma by induction on $d$. Note first though that $\lambda^{\uparrow \uparrow}_k = \lambda_k + d+1-2k$ implicitly depends on $d$. When we would like to emphasize $d$, we will write $\lambda^{\uparrow \uparrow}_k(d)$. We are also taking the convention that $\lambda_k = 0$ for all $k > \ell(\lambda)$. 
    
    The base case is $d=1$. If $i = 1$, then
    \begin{equation*}
        \sum_{k=1}^d \lambda_k^{\uparrow \uparrow}(d=1) \cdot |c^{\lambda}_{k \to i}|^2 = \lambda_1^{\uparrow \uparrow}(d=1) \cdot |c^{\lambda}_{1 \to 1}|^2 = \lambda_1 \cdot |c^\lambda_{1 \to 1}|^2 = \lambda_1 \cdot D^{\lambda}_{1 \to i} = (\lambda_i + 1 - i) \cdot D^{\lambda}_{d \to i},
    \end{equation*}
    where we have used \Cref{cor:sum_of_CGs_1} in the second-last step and the fact that $d = i = 1$ in the final step. If instead $i > 1$, then both sides of \Cref{eq:full_sum_estimator_times_CG_staircase} vanish. 

    Now assume the lemma holds for $d-1$. We use the fact that $\lambda_k^{\uparrow \uparrow}(d) = \lambda_k^{\uparrow \uparrow}(d-1) + 1$ and the inductive hypothesis, to get:
    \begin{align} \label{eq:dummy_1}
        \sum_{k=1}^d \lambda^{\uparrow \uparrow}_k(d) \cdot |c_{k\to i}|^2 & = \Big(\sum_{k=1}^{d-1} \big( \lambda_k^{\uparrow} (d-1) + 1 \big) \cdot |c^\lambda_{k\to i}|^2\Big) + \lambda^{\uparrow \uparrow}_d(d) \cdot |c^\lambda_{d \to i}|^2 \nonumber \\
        &  = (\lambda_i + 1 - i) \cdot D^\lambda_{d-1 \to i} + \Big(\sum_{k=1}^{d-1} |c^\lambda_{k \to i}|^2\Big) + (\lambda_d - d + 1) \cdot |c^\lambda_{d\to i}|^2.
    \end{align}
    We now rewrite everything in terms of $C$'s and $D$'s using \Cref{cor:sum_of_CGs_1}:
    \begin{align}\label{eq:dummy_2}
    \eqref{eq:dummy_1} & = (C_i +1) \cdot D_{d-1 \to i}^{\lambda} + D_{d-1 \to i}^{\lambda} + (C_d +1) \cdot (D_{d \to i}^{\lambda} - D^{\lambda}_{d-1 \to i}) \nonumber \\
        & = (C_i - C_d + 1) \cdot D^{\lambda}_{d-1 \to i} + (C_d + 1) \cdot D^{\lambda}_{d \to i}.
    \end{align}
    Finally, from \Cref{lem:clebsch-gordan-to-truncated-tableau}, we have \begin{equation*}
    D^{\lambda}_{d-1 \to i} = (C_i - C_d) \cdot |c^{\lambda}_{d \to i}|^2 = (C_i - C_d) \cdot(D^{\lambda}_{d \to i} - D^{\lambda}_{d-1 \to i}) \quad \implies \quad (C_i - C_d + 1) \cdot D^\lambda_{d-1 \to i} = (C_i - C_d) \cdot D^\lambda_{d \to i},
    \end{equation*}
    and substituting this gives us
    \begin{equation*} 
       \eqref{eq:dummy_2} = (C_i - C_d) \cdot D^\lambda_{d \to i} + (C_d + 1) \cdot D^\lambda_{d \to i}  = (C_i + 1) \cdot D^\lambda_{d \to i}. \qedhere
    \end{equation*}
\end{proof}
We are now ready to extend the above identity to all legal transformations $f$.
\begin{lemma}\label{lem:the_combinatorial_identity}
    Let $\lambda$ be a Young diagram, and $f$ be a legal transformation of Young diagrams. Then 
    \begin{equation}\label{eq:full_sum_estimator_times_CG}
        \sum_{k=1}^d f(\lambda)_k \cdot |c^\lambda_{k \to i}|^2 = (\lambda_i + 1 - i) \cdot D^{\lambda}_{d \to i}. 
    \end{equation}
\end{lemma}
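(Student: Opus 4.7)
The plan is to reduce the general legal-transformation case to the staircase case already handled by \Cref{lem:the_combinatorial_identity_staircase}, using the fact that the one-step Clebsch-Gordan coefficients are constant on blocks of $\lambda$ (\Cref{lem:CG_coeffs_on_block_equal}). The crucial structural observation is that the two pieces match: the coefficients $|c^\lambda_{k \to i}|^2$ only see the block that $k$ belongs to, while the definition of a legal transformation only constrains the total weight $f(\lambda)_k$ within each block. So any redistribution of weight within a block is invisible to the left-hand side of \Cref{eq:full_sum_estimator_times_CG}, and the identity for $\lambda^{\uparrow\uparrow}$ automatically propagates to every other legal $f$.

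Concretely, I would partition $\{1,\dots,d\}$ into the blocks $B_1,\dots,B_t$ of $\lambda$, and for each block $B$ pick a representative $k_B \in B$. By \Cref{lem:CG_coeffs_on_block_equal}, $|c^\lambda_{k \to i}|^2 = |c^\lambda_{k_B \to i}|^2$ for every $k \in B$, so
\begin{align*}
\sum_{k=1}^d f(\lambda)_k \cdot |c^\lambda_{k \to i}|^2
&= \sum_{B} |c^\lambda_{k_B \to i}|^2 \cdot \sum_{k \in B} f(\lambda)_k.
\end{align*}
Next, since the staircase transformation is itself legal (this is noted in the discussion following \Cref{def:staircase-box-donation}), the defining property of a legal transformation gives $\sum_{k \in B} f(\lambda)_k = \sum_{k \in B} \lambda^{\uparrow\uparrow}_k$ for every block $B$. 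Substituting this back and then reassembling the sum over all $k$ (again using that $|c^\lambda_{k \to i}|^2$ is constant on each block) yields
\begin{align*}
\sum_{k=1}^d f(\lambda)_k \cdot |c^\lambda_{k \to i}|^2
&= \sum_{B} |c^\lambda_{k_B \to i}|^2 \cdot \sum_{k \in B} \lambda^{\uparrow\uparrow}_k
= \sum_{k=1}^d \lambda^{\uparrow\uparrow}_k \cdot |c^\lambda_{k \to i}|^2.
\end{align*}
The right-hand side equals $(\lambda_i + 1 - i) \cdot D^\lambda_{d \to i}$ by \Cref{lem:the_combinatorial_identity_staircase}, which completes the proof.

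There is no real obstacle here once \Cref{lem:CG_coeffs_on_block_equal} and \Cref{lem:the_combinatorial_identity_staircase} are in hand: the whole argument is a one-line block-wise rearrangement. The only thing one needs to double-check is the edge case where a block $B$ intersects the ``hidden tail'' $\{\ell(\lambda)+1, \dots, d\}$ (i.e.\ the block of zero-length rows at the bottom), but this is handled uniformly since $|c^\lambda_{k\to i}|^2$ is constant on that block as well and the legality condition still controls the total $f$-weight there. All the analytic and combinatorial content lives in the staircase lemma; this lemma is purely a symmetry/invariance observation.
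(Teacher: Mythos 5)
Your proof is correct and takes essentially the same approach as the paper: both use Lemma~\ref{lem:CG_coeffs_on_block_equal} to group the sum by blocks, appeal to the defining property of legal transformations to replace the block sums of $f(\lambda)$ by those of the staircase transformation, and then invoke Lemma~\ref{lem:the_combinatorial_identity_staircase}. (The paper routes the block-sum comparison through $\lambda^{\uparrow}$ and you route it directly through $\lambda^{\uparrow\uparrow}$, but these are equivalent since both are legal.)
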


\begin{proof}
    For a generic legal transformation $f$, we use \Cref{lem:CG_coeffs_on_block_equal}, which states that the one-step Clebsch-Gordan coefficients $c^{\lambda}_{k \to i}$ are constant (as functions of $k$) on indices in the same blocks of $\lambda$. Write $c^{\lambda}_{k \to i} = c^{\lambda}_{B \to i}$, for all $k \in B$, for $B$ a block of $\lambda$. Then, for any estimator, we have
    \begin{equation*}
        \sum_{k=1}^d f(\lambda)_k \cdot |c^\lambda_{k \to i}|^2 = \sum_{\text{blocks} \, B} |c^\lambda_{B \to i}|^2 \sum_{k \in B} f(\lambda)_k =\sum_{\text{blocks} \, B} |c^\lambda_{B \to i}|^2 \sum_{k \in B} \lambda^{\uparrow}_k,
    \end{equation*}
    where the second step is by the definition of a legal estimator, and recall that $\lambda^{\uparrow}$ refers to the box donation transformation from~\Cref{def:yd-box-donation}. Thus, the left-hand side of \Cref{eq:full_sum_estimator_times_CG} is the same for \emph{any} legal transformation, and so
    \begin{equation*}
        \sum_{k=1}^d f(\lambda)_k \cdot |c^\lambda_{k \to i}|^2 = \sum_{k=1}^d \lambda^{\uparrow \uparrow}_k \cdot |c^\lambda_{k \to i}|^2 = (\lambda_i + 1 - i) \cdot D^{\lambda}_{d \to i}. \qedhere
    \end{equation*} 
\end{proof}

\section{The second moment} \label{sec:second_moment}
In this section, we formally prove \Cref{thm:var} from the introduction, which obtains an expression for the second moment of the output $\widehat{\brho}$ of the debiased Keyl's algorithm.

\subsection{Strategy overview}

Since our proof for the second moment is more involved than the first moment, we start with a high-level overview of the main ingredients.

A central part of the first moment proof was the matrix $M_{\mathrm{avg}}^{(1)}$ and its relation to the expected output of the debiased Keyl's algorithm on $n$ copies of $\rho$:
\begin{equation*}
    \E[\widehat{\brho}] = \tr_{[n]}(M_{\mathrm{avg}}^{(1)} \cdot \rho^{\otimes n} \otimes I). 
\end{equation*}
Using Schur-Weyl duality and the Clebsch-Gordan transform, we showed that $M_{\mathrm{avg}}^{(1)}$ is diagonal in the Schur basis, and that it is equal to the Jucys-Murphy element $\frac{1}{n} \cdot X_{n+1}$, which implies that our estimator is unbiased. That is, $\E[\widehat{\brho}] = \rho$.

For the second moment, we define the matrix $M_{\mathrm{avg}}^{(2)}$, which captures the second moment of the debiased Keyl's estimator on $n$ copies of the state $\rho$. We define, for an estimator that measures with POVM $\{M_i\}$ and outputs $\widehat{\rho}_i$ upon obtaining $M_i$:
\begin{equation*}
    M^{(2)}_{\mathrm{avg}} \coloneq \sum_{i} M_i \otimes \widehat{\rho}_i \otimes \widehat{\rho}_i.
\end{equation*}
Then
\begin{equation*}
    \E[\widehat{\brho} \otimes \widehat{\brho}] = \tr_{[n]}(M_{\mathrm{avg}}^{(2)} \cdot \rho^{\otimes n} \otimes I \otimes I). 
\end{equation*}
We show in~\Cref{sec:main-lemma-second-moment} the Main Lemma (\Cref{lem:m-avg2-plus-corr-is-xn1xn2}) for the second moment, which says that the matrix $M_{\mathrm{avg}}^{(2)}$ satisfies the equation
\begin{equation}
    \label{eq:main-equation-strat-overview}
    M_{\mathrm{avg}}^{(2)} + M_{\mathrm{corr}}^{(2)}\cdot\swap = \frac{1}{n^2} \cdot X_{n+1}X_{n+2}.
\end{equation}
Throughout this section we use $\swap$ for $\swap_{n+1, n+2}$, and $M_{\mathrm{corr}}^{(2)}$ is a ``correction'' term for which we have an exact formula. Later in this subsection, we use~\Cref{eq:main-equation-strat-overview} and the exact form of $M_{\mathrm{corr}}^{(2)}$ to prove~\Cref{thm:var}, which is restated below. 

The remainder of this section is dedicated to proving the Main Lemma. In~\Cref{sec:block-diagonal-expressions-in-schur-basis}, we use the Clebsch-Gordan transform to deduce that $M_{\mathrm{avg}}^{(2)}$ and $M_{\mathrm{corr}}^{(2)}$ have the same block-diagonal structure in the Schur basis as $\swap$, with submatrices of size $2 \times 2$ and $1 \times 1$ on their main diagonal. The entries of these submatrices are given by expressions involving sums of the two-step Clebsch-Gordan coefficients. The Jucys-Murphy element $X_{n+1}X_{n+2}$ is a diagonal matrix in the Schur basis whose entries are related to the contents of boxes of standard Young tableaux. Therefore, to prove~\Cref{eq:main-equation-strat-overview}, it suffices to compute the exact value of $M_{\mathrm{avg}}^{(2)} + M_{\mathrm{corr}}^{(2)}\cdot\swap$ for each submatrix in the Schur basis, and show that it matches the values of $\frac{1}{n^2} \cdot X_{n+1}X_{n+2}$. This reduces to proving an expression for the diagonal terms in each submatrix, the Diagonal Expression Lemma (\Cref{lem:master-equation}), and this we do in \Cref{sec:diagonal-expression-lemma}.

For this purpose, we develop several tools for the two-step Clebsch-Gordan coefficients in~\Cref{sec:technical-two-step}. The two most important ones are~\Cref{thm:partial_sums_complete}, which gives a closed-form formula for the partial sums of the two-step Clebsch-Gordan coefficients, and~\Cref{lem:two-step-equal-blocks}, which relates the two-step Clebsch-Gordan coefficients and the blocks of the underlying Young diagram. Both of these technical lemmas are extensions of~\Cref{cor:sum_of_CGs_1,lem:CG_coeffs_on_block_equal} from the one-step to the two-step coefficients, respectively.

\begin{theorem}[\Cref{thm:var}, restated]
    Let $\widehat{\brho}$ be the output of the debiased Keyl's algorithm when run on $\rho^{\otimes n}$. Then
    \begin{equation*}
        \E[\widehat{\brho}\otimes \widehat{\brho}] = \frac{n-1}{n} \cdot \rho \otimes \rho + \frac{1}{n} \left( \rho \otimes I + I \otimes \rho \right) \cdot \swap + \frac{\E[\ell(\blambda)]}{n^2} \cdot \swap
    - \mathrm{Lower}_{\rho}.
    \end{equation*}
    Here, $\mathrm{Lower}_{\rho}$ refers to the ``lower order terms' and has the following characterization: it can be written as a positive linear combination of terms of the form $(P \otimes P) \cdot \swap$, where each $P \in \C^{d \times d}$ is a Hermitian, positive semidefinite matrix.
\end{theorem}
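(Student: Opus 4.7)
The plan is to lift the first moment argument from Section \ref{sec:first_moment} one level higher. I would define
\begin{equation*}
    M^{(2)}_{\mathrm{avg}} \coloneq \sum_{\lambda, U} M_{\lambda,U} \otimes \widehat{\rho}_{\lambda,U} \otimes \widehat{\rho}_{\lambda,U},
\end{equation*}
so that $\E[\widehat{\brho} \otimes \widehat{\brho}] = \tr_{[n]}(M^{(2)}_{\mathrm{avg}} \cdot \rho^{\otimes n} \otimes I \otimes I)$. The analogue of the identity $M^{(1)}_{\mathrm{avg}} = \tfrac{1}{n} X_{n+1}$ from the first-moment proof is the Main Lemma
\begin{equation*}
    M^{(2)}_{\mathrm{avg}} + M^{(2)}_{\mathrm{corr}} \cdot \swap_{n+1,n+2} = \tfrac{1}{n^2}\, X_{n+1} X_{n+2},
\end{equation*}
for an explicit correction operator $M^{(2)}_{\mathrm{corr}}$ built as a positive combination of tensor squares. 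Assuming the Main Lemma, the theorem follows by partial tracing over the first $n$ registers: expanding the product of Jucys-Murphy elements into a sum of products of transpositions and computing each partial trace in the standard way recovers the three explicit terms $\tfrac{n-1}{n}\rho \otimes \rho$ (from pairs of transpositions with disjoint support in $[n]$), $\tfrac{1}{n}(\rho \otimes I + I \otimes \rho)\cdot \swap$ (from pairs sharing a common index in $[n]$), and $\tfrac{\E[\ell(\blambda)]}{n^2}\swap$ (from the $j=n+1$ terms, where the trace of $X_{n+2}$ against the Schur-decomposed $\rho^{\otimes n}$ surfaces the expected number of nonempty rows of $\blambda$). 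The partial trace of $M^{(2)}_{\mathrm{corr}} \cdot \swap$ against $\rho^{\otimes n} \otimes I \otimes I$ then coincides with $\mathrm{Lower}_\rho$, and the required $(P \otimes P) \cdot \swap$ positivity structure is inherited from $M^{(2)}_{\mathrm{corr}}$ being a sum of outer products of vectors of the form $\ket{\cdots} \otimes \ket{v} \otimes \ket{v}$.

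To compute $M^{(2)}_{\mathrm{avg}}$ explicitly I would perform two back-to-back Clebsch-Gordan transforms on vectors of the form $\ket{T^\lambda} \otimes \ket{k} \otimes \ket{\ell}$, using the two-step decomposition of Section \ref{sec:CG-coefficients}: the result is a superposition of $\ket{T^\lambda_{k\ell \to ij}}$ (with coefficients $a^\lambda_{k\ell \to ij}$) and, when $k > \ell$, $\ket{T^\lambda_{\ell k \to ij}}$ (with coefficients $b^\lambda_{k\ell \to ij}$), by Lemma \ref{lem:two-step-kell-to-ij}. Integrating over $U$ via Schur's lemma collapses the unitary register to identity matrices divided by dimensions, leaving an operator that is block diagonal in the $(n+2)$-copy Schur basis with at most $2 \times 2$ blocks indexed by pairs of SYTs of a common shape differing by a swap of the boxes labelled $n+1$ and $n+2$ --- precisely the block structure of $\swap_{n+1,n+2}$ in Young's orthogonal form. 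Matching both sides of the Main Lemma block by block against the diagonal action of $X_{n+1}X_{n+2}$ (which multiplies $\ket{S''}$ by $\content_{S''}(n+1)\,\content_{S''}(n+2)$) reduces the Main Lemma to a single Diagonal Expression Lemma, giving a closed form, in terms of contents and Weyl-dimension ratios of $\lambda$, $\lambda + e_i$, and $\lambda + e_i + e_j$, for the weighted sums of $|a^\lambda_{k\ell \to ij}|^2$ and cross terms $a^\lambda_{k\ell \to ij}\, b^\lambda_{k\ell \to ij}$ associated to each legal estimator $f$.

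The hard part will be this Diagonal Expression Lemma, and my plan is to attack it by extending the one-step tools of Section \ref{sec:cg-lemmas} to two steps. First, I would prove a two-step partial-sum identity analogous to Corollary \ref{cor:sum_of_CGs_1}, expressing cumulative sums $\sum_{k \leq s,\,\ell \leq t} |a^\lambda_{k\ell \to ij}|^2$ as products and differences of the ratios $D^\lambda_{\cdot \to i}$ and $D^{\lambda+e_i}_{\cdot \to j}$; this would follow by iterated application of \Cref{lem:clebsch-gordan-to-truncated-tableau} at each of the two Clebsch-Gordan steps. Second, I would prove a two-step block-invariance lemma extending \Cref{lem:CG_coeffs_on_block_equal}, namely that $a^\lambda_{k\ell \to ij}$ is constant as $k$ and $\ell$ each range independently over blocks of $\lambda$. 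Together these imply that the weighted sums in the Diagonal Expression Lemma depend on the legal estimator $f$ only through the block totals of $f(\lambda)$, which by definition of ``legal'' agree with those of $\lambda^{\uparrow}$. Hence it suffices to verify the identity on the staircase estimator, where $f(\lambda)_k = \lambda_k + d - 2k + 1$ has a closed form and the partial-sum identity reduces the calculation to a telescoping sum of $D^\lambda$-ratios, mirroring the inductive argument in \Cref{lem:the_combinatorial_identity_staircase}. The ``miraculous cancellations'' mentioned in the introduction are the mechanism by which, after telescoping, the $\tfrac{\E[\ell(\blambda)]}{n^2}$ coefficient cleanly separates from a nonnegative residue that becomes $M^{(2)}_{\mathrm{corr}}$; reverse-engineering the precise form of $M^{(2)}_{\mathrm{corr}}$ from this residue, and verifying that its partial trace against $\rho^{\otimes n} \otimes I \otimes I$ assembles into a positive combination of $(P \otimes P)\cdot \swap$ terms with $P \succeq 0$, will be the most delicate bookkeeping step of the whole proof.
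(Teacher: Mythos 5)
Your high-level blueprint---$M^{(2)}_{\mathrm{avg}} + M^{(2)}_{\mathrm{corr}}\cdot\swap = \tfrac{1}{n^2}X_{n+1}X_{n+2}$, block-diagonalization in the $(n+2)$-copy Schur basis via two Clebsch-Gordan steps, reduction to a single diagonal identity, two-step partial-sum and block-invariance lemmas, and verification on the staircase transform---is exactly the paper's architecture. However, two of the steps you treat as routine would in fact fail if executed as you describe.

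First, your assembly of the final expression misattributes where $\tfrac{\E[\ell(\blambda)]}{n^2}\cdot\swap$ comes from. Tracing $\tfrac{1}{n^2}X_{n+1}X_{n+2}$ against $\rho^{\otimes n}\otimes I\otimes I$ yields only $\tfrac{n-1}{n}\rho\otimes\rho + \tfrac{1}{n}(\rho\otimes I + I\otimes\rho)\cdot\swap$: the ``$j=n+1$'' terms you invoke are precisely the three-cycles $(i,n+1,n+2)$, and their partial trace contributes $\tfrac{1}{n}(\rho\otimes I)\cdot\swap$, which is already part of the second term. Nothing in the Jucys-Murphy product produces $\E[\ell(\blambda)]$. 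That term comes from $M^{(2)}_{\mathrm{corr}}$. Relatedly, your claim that $M^{(2)}_{\mathrm{corr}}$ is a positive combination of outer products of vectors of the form $\ket{\cdots}\otimes\ket{v}\otimes\ket{v}$ is false: its coefficients are $\lambda^{\uparrow}_{\max(k,\ell)}$, which can be negative. One must first prove a combinatorial decomposition
\begin{equation*}
\sum_{k,\ell}\lambda^{\uparrow}_{\max(k,\ell)}\cdot\ketbra{k}\otimes\ketbra{\ell}
= \Big(\sum_{j} m^{\lambda}_j\cdot\Pi_{\leq j}\otimes\Pi_{\leq j}\Big) - \ell(\lambda)\cdot I\otimes I,
\end{equation*}
with $m^\lambda_j\geq 0$, and it is the $\ell(\lambda)\cdot I\otimes I$ piece that supplies $\tfrac{\E[\ell(\blambda)]}{n^2}\cdot\swap$ upon tracing, while the nonnegative $\Pi_{\leq j}\otimes\Pi_{\leq j}$ piece supplies $\mathrm{Lower}_\rho$. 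Without this split you get neither the $\ell(\blambda)$ term nor the positivity structure of $\mathrm{Lower}_\rho$.

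Second, your reduction from the staircase estimator to $\lambda^{\uparrow}$ is not valid as written. You argue that since the two-step Clebsch-Gordan coefficients are block-constant and the block totals of any legal $f(\lambda)$ agree with those of $\lambda^{\uparrow}$, the weighted sums must agree. But the Diagonal Expression Lemma involves \emph{quadratic} weights $f(\lambda)_k f(\lambda)_\ell$, and quadratic sums are not determined by block totals: two vectors with the same sum on a block can have different sums of squares. Indeed the identity \emph{fails} for generic legal estimators. What makes the staircase-to-$\lambda^{\uparrow}$ averaging work is a delicate cancellation between the quadratic term and the $\tfrac{1}{\Delta_{ji}}\lambda^{\uparrow\uparrow}_{\max(k,\ell)}$ term, specifically that $N_B(k) + \lambda^{\uparrow\uparrow}_k$ (where $N_B(k)$ counts pairs in $B\times B$ with maximum $k$) is constant on each block $B$; it also requires treating the diagonal entries $k=\ell$ separately, because $|c^{\lambda}_{kk\to ij}|^2 = (1+\tfrac{1}{\Delta_{ji}})|c^{\lambda}_{BB\to ij}|^2$ differs from the off-diagonal block value. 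Your block-invariance lemma as stated (``$a^\lambda_{k\ell\to ij}$ constant as $k,\ell$ range over blocks'') also misses this $k=\ell$ exception, which is essential to the averaging.
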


\begin{proof}
    Recall from the proof of unbiasedness that we interpreted our tomography algorithm as measuring $\rho^{\otimes n}$ using the POVM $M = \{M_{i}\}$. Upon obtaining outcome $M_{i}$, the algorithm outputs $\widehat{\brho}_i$. It holds that
    \begin{align*}
        \E[\widehat{\brho} \otimes \widehat{\brho}]
        &= \sum_{i} \widehat{\rho}_i \otimes \widehat{\rho}_i \cdot \tr(M_{i} \cdot \rho^{\otimes n}) \\
        &= \sum_{i} \tr_{[n]} \Big(M_{i} \otimes \widehat{\rho}_i \otimes \widehat{\rho}_i \cdot \rho^{\otimes n} \otimes I \otimes I\Big) \\
        &= \tr_{[n]} \Big(\sum_{i} M_{i} \otimes \widehat{\rho}_i \otimes \widehat{\rho}_i \cdot \rho^{\otimes n} \otimes I \otimes I\Big) \\
        &= \tr_{[n]} \Big(M_{\mathrm{avg}}^{(2)} \cdot \rho^{\otimes n} \otimes I \otimes I\Big),
    \end{align*}
    where we define
    \begin{equation*}
        M_{\mathrm{avg}}^{(2)} \coloneq \sum_{i} M_{i} \otimes \widehat{\rho}_i \otimes \widehat{\rho}_i.    
    \end{equation*}
    
    For the debiased Keyl's algorithm, the POVM elements are labeled by $\lambda \vdash n$ and $U \in U(d)$. See \Cref{eq:Keyl's_algorithm_Ms}. Similar calculations as the first moment section provide the following expression for $M_{\mathrm{avg}}^{(2)}$:
    \begin{align*}
        M_{\mathrm{avg}}^{(2)} & =\sum_{\lambda} \int_U M_{\lambda, U} \otimes U \widehat{\alpha} U^{\dagger} \otimes U \widehat{\alpha} U^{\dagger}  \\
        & = \sum_{\lambda} \dim(V_{\lambda}^d) \int_U  \USWdagger{n} \cdot \Big(\ketbra{\lambda} \otimes I_{\dim(\lambda)} \otimes \nu_\lambda(U) \ketbra{T^{\lambda}} \nu_{\lambda}(U)^\dagger \Big) \cdot \USW{n}  \otimes U \widehat{\alpha} U^{\dagger} \otimes U \widehat{\alpha} U^{\dagger} \cdot \dU\\
        & = \sum_{\lambda} \dim(V_{\lambda}^d) \int_U U^{\otimes n}\cdot \USWdagger{n} \cdot \Big(\ketbra{\lambda} \otimes I_{\dim(\lambda)} \otimes \ketbra{T^{\lambda}}\Big) \cdot \USW{n} \cdot U^{\dagger, \otimes n} \otimes U \widehat{\alpha} U^{\dagger} \otimes U \widehat{\alpha} U^{\dagger} \cdot \dU\\
        & = \sum_{\lambda} \dim(V_{\lambda}^d) \int_U U^{\otimes (n+2)} \cdot \calU^\dagger \cdot \Big(\ketbra{\lambda} \otimes I_{\dim(\lambda)} \otimes \ketbra{T^{\lambda}} \otimes \widehat{\alpha} \otimes \widehat{\alpha}\Big) \cdot \calU \cdot U^{\dagger, \otimes (n+2)} \cdot \dU \\
        &=\sum_{\lambda} \dim(V_{\lambda}^d) \sum_{k,\ell=1}^d \widehat{\alpha}_k \widehat{\alpha}_{\ell} \int_U U^{\otimes (n+2)} \cdot \calU^\dagger \cdot \Big(\ketbra{\lambda} \otimes I_{\dim(\lambda)} \otimes \ketbra{T^{\lambda}} \otimes \ketbra{k} \otimes \ketbra{\ell}\Big) \cdot \calU \cdot U^{\dagger, \otimes (n+2)} \cdot \dU \\
        & = \frac{1}{n^2 }\sum_{\lambda} \dim(V_{\lambda}^d) \sum_{k, \ell} \lambda^{\uparrow}_{k} \lambda^{\uparrow}_{\ell} \int_U U^{\otimes (n+2)}\cdot \calU^\dagger \cdot \Big(\ketbra{\lambda} \otimes I_{\dim(\lambda)} \otimes \ketbra{T^{\lambda}} \otimes \ketbra{k} \otimes \ketbra{\ell}\Big) \cdot \calU \cdot U^{\dagger, \otimes (n+2)} \cdot \dU,       
    \end{align*}
    where here $\calU \coloneq \USW{n} \otimes I \otimes I$. As mentioned above (\Cref{eq:main-equation-strat-overview}), we show in~\Cref{lem:m-avg2-plus-corr-is-xn1xn2} that 
    \begin{equation*}
        M_{\mathrm{avg}}^{(2)} + M_{\mathrm{corr}}^{(2)}\cdot\swap= \frac{1}{n^2}\cdot X_{n+1}X_{n+2},
    \end{equation*}
    where $M_{\mathrm{corr}}^{(2)}$ is a ``correction'' term that is given by
    \begin{align*}
         M_{\mathrm{corr}}^{(2)} &\coloneq \frac{1}{n}\sum_{\lambda} \dim(V_{\lambda}^d) \sum_{k, \ell} \widehat{\alpha}_{\max(k, \ell)} \int_U U^{\otimes (n+2)}\cdot \calU^\dagger \cdot \Big(\ketbra{\lambda} \otimes I_{\dim(\lambda)} \otimes \ketbra{T^{\lambda}} \otimes \ketbra{k} \otimes \ketbra{\ell}\Big) \cdot \calU \cdot U^{\dagger, \otimes (n+2)} \cdot \dU \\
         & = \frac{1}{n^2 }\sum_{\lambda} \dim(V_{\lambda}^d) \sum_{k, \ell} \lambda^\uparrow_{\max(k, \ell)} \int_U U^{\otimes (n+2)}\cdot \calU^\dagger \cdot \Big(\ketbra{\lambda} \otimes I_{\dim(\lambda)} \otimes \ketbra{T^{\lambda}} \otimes \ketbra{k} \otimes \ketbra{\ell}\Big) \cdot \calU \cdot U^{\dagger, \otimes (n+2)} \cdot \dU.
    \end{align*}
    Note that this is identical to our expression for $M_{\mathrm{avg}}^{(2)}$ above except with the $\lambda^{\uparrow}_{k} \lambda^{\uparrow}_{\ell}$ part replaced by a $\lambda^{\uparrow}_{\max(k, \ell)}$. From~\Cref{lem:decomposition_of_Mcorr} below, we can write $M_{\mathrm{corr}}^{(2)}$ in the following way, with $\{m^\lambda_j\}_{j \in [d]}$ a set of nonnegative integers, and $\Pi_{\leq j} \coloneq \sum_{i \leq j} \ketbra{i}$ the projector onto the first $j$ computational basis vectors:
    \begin{align*}
        M_{\mathrm{corr}}^{(2)} 
        &= \frac{1}{n^2}\sum_{\lambda} \dim(V_{\lambda}^d) \sum_{j=1}^d m^{\lambda}_j \int_U U^{\otimes (n+2)}\cdot \calU^\dagger \cdot \Big(\ketbra{\lambda} \otimes I_{\dim(\lambda)} \otimes \ketbra{T^{\lambda}} \otimes \Pi_{\leq j} \otimes \Pi_{\leq j} \Big) \cdot \calU \cdot U^{\dagger, \otimes (n+2)} \cdot \dU \\
        &- \frac{1}{n^2}\sum_{\lambda} \dim(V_{\lambda}^d) \cdot\ell(\lambda) \int_U U^{\otimes (n+2)}\cdot \calU^\dagger \cdot \Big(\ketbra{\lambda} \otimes I_{\dim(\lambda)} \otimes \ketbra{T^{\lambda}} \otimes I \otimes I \Big) \cdot \calU \cdot U^{\dagger, \otimes (n+2)} \cdot \dU.
    \end{align*}
    We will call the first term above the negative correction $M_{\mathrm{corr},-}^{(2)}$ and the second term as the positive correction $M_{\mathrm{corr},+}^{(2)}$, from the sign of the term they contribute to $\E[\widehat{\brho} \otimes \widehat{\brho}]$. Therefore
    \begin{align*}
        \E[\widehat{\brho} \otimes \widehat{\brho}]
        &= \tr_{[n]} (M_{\mathrm{avg}}^{(2)} \cdot \rho^{\otimes n} \otimes I \otimes I) \\
        &= \frac{1}{n^2}\tr_{[n]} (X_{n+1}X_{n+2} \cdot \rho^{\otimes n} \otimes I \otimes I)
        + \tr_{[n]} (M_{\mathrm{corr},+}^{(2)}\cdot \swap \cdot \rho^{\otimes n} \otimes I \otimes I) \\
        &\qquad - \tr_{[n]} (M_{\mathrm{corr},-}^{(2)}\cdot \swap \cdot \rho^{\otimes n} \otimes I \otimes I).
    \end{align*}
    Let us compute each term separately. For the first term, we expand the product of Jucys-Murphy elements as
    \begin{align*}
        X_{n+1}X_{n+2} &= \bigg(\sum_{i=1}^n (i, n+1)\bigg)\cdot \bigg( \sum_{j=1}^{n+1} (j, n+2)\bigg) \\
        &= \sum_{i \neq j} (i, n+1)(j, n+2) + \sum_{i=1}^n (i, n+1, n+2) + \sum_{i=1}^n (i, n+2, n+1).
    \end{align*}
    Thus
    \begin{align*}
        &\tr_{[n]} \big(X_{n+1}X_{n+2} \cdot \rho^{\otimes n} \otimes I \otimes I\big) \\
        ={}& \sum_{i \neq j} \tr_{[n]} \big((i, n+1)(j, n+2) \cdot \rho^{\otimes n} \otimes I \otimes I\big)
        + \sum_{i=1}^n \tr_{[n]} \big((i, n+1, n+2) \cdot \rho^{\otimes n} \otimes I \otimes I\big) \\
        & \qquad + \sum_{i=1}^n \tr_{[n]} \big((i, n+2, n+1) \cdot \rho^{\otimes n} \otimes I \otimes I\big) \\
        ={}& n(n-1) \cdot (\rho \otimes \rho)
        + n\cdot (\rho \otimes I + I \otimes \rho) \cdot \swap.
    \end{align*}
    Now we compute the second term. First, we simplify $M_{\mathrm{corr},+}^{(2)}$ as follows
    \begin{align*}
         M_{\mathrm{corr},+}^{(2)}
         &= \frac{1}{n^2}\sum_{\lambda} \dim(V_{\lambda}^d) \cdot\ell(\lambda) \int_U U^{\otimes (n+2)}\cdot \calU^\dagger \cdot \Big(\ketbra{\lambda} \otimes I_{\dim(\lambda)} \otimes \ketbra{T^{\lambda}} \otimes I \otimes I \Big) \cdot \calU \cdot U^{\dagger, \otimes (n+2)} \cdot \dU \\
         &= \frac{1}{n^2}\sum_{\lambda} \dim(V_{\lambda}^d) \cdot\ell(\lambda) \cdot \Big(\int_U U^{\otimes n} \cdot \USWdagger{n} \cdot \Big(\ketbra{\lambda} \otimes I_{\dim(\lambda)} \otimes \ketbra{T^{\lambda}}\Big)U^{\dagger, \otimes n}\cdot \USW{n} \cdot \dU \Big)\otimes I \otimes I \\
         &= \frac{1}{n^2}\sum_{\lambda} \dim(V_{\lambda}^d) \cdot\ell(\lambda) \cdot \Big( \USWdagger{n} \cdot \Big(  \ketbra{\lambda} \otimes I_{\dim(\lambda)} \otimes \int_U \nu_{\lambda}(U) \ketbra{T^{\lambda}}\nu_{\lambda}(U)^{\dagger} \cdot \dU \Big) \cdot \USW{n}\Big) \otimes I \otimes I \\
         &= \frac{1}{n^2}\sum_{\lambda} \dim(V_{\lambda}^d) \cdot\ell(\lambda) \cdot \Big(\USWdagger{n} \cdot \Big(\ketbra{\lambda} \otimes I_{\dim(\lambda)} \otimes \frac{I_{\dim(V_{\lambda}^d)}}{\dim(V_{\lambda}^d)}\Big) \cdot \USW{n} \Big)\otimes I \otimes I \tag{Schur's Lemma} \\
         & = \frac{1}{n^2}\sum_{\lambda} \ell(\lambda) \cdot \Pi_\lambda \otimes I \otimes I. \tag{\Cref{def:Pi_lambda}}
    \end{align*}
    Hence
    \begin{align*}
        \tr_{[n]} \big(M_{\mathrm{corr},+}^{(2)}\cdot \swap \cdot \rho^{\otimes n} \otimes I \otimes I\big) ={}&\frac{1}{n^2} \sum_{\lambda} \ell(\lambda) \cdot \tr_{[n]} \bigg(\Pi_\lambda \otimes I \otimes I\cdot \swap \cdot \rho^{\otimes n} \otimes I \otimes I\bigg) \\
        ={}& \frac{1}{n^2} \bigg( \sum_{\lambda} \ell(\lambda) \cdot \tr\big( \Pi_\lambda \cdot \rho^{\otimes n}\big) \bigg) \cdot \swap \\ 
        ={}&\frac{\E [\ell(\blambda)]}{n^2} \cdot \swap. 
    \end{align*}
    Finally, we prove that the last term corresponds to $\mathrm{Lower}_{\rho}$. In particular, $M_{\mathrm{corr},-}^{(2)}$ can be written as a positive linear combination of the following terms:
    \begin{equation*}
        \int_U U^{\otimes (n+2)}\cdot \calU^\dagger \cdot \Big(\ketbra{\lambda} \otimes I_{\dim(\lambda)} \otimes \ketbra{T^{\lambda}} \otimes \Pi_{\leq j} \otimes \Pi_{\leq j}\Big) \cdot \calU \cdot U^{\dagger, \otimes (n+2)} \cdot \dU.
    \end{equation*}
    Hence $\tr_{[n]} \big(M_{\mathrm{corr},-}^{(2)}\cdot \swap \cdot \rho^{\otimes n} \otimes I \otimes I\big)$ can also be written as a positive linear combination of the terms:
    \begin{align*}
        &\tr_{[n]} \bigg(\int_U U^{\otimes (n+2)} \cdot \calU^\dagger \cdot \Big(\ketbra{\lambda} \otimes I_{\dim(\lambda)} \otimes \ketbra{T^{\lambda}} \otimes \Pi_{\leq j} \otimes \Pi_{\leq j}\Big)\cdot \calU \cdot U^{\dagger, \otimes (n+2)}\cdot \swap \cdot \rho^{\otimes n} \otimes I \otimes I \cdot \dU\bigg) \\
        ={}&\int_U \tr_{[n]} \bigg(U^{\otimes (n+2)}\cdot \calU^\dagger \cdot \Big(\ketbra{\lambda} \otimes I_{\dim(\lambda)} \otimes \ketbra{T^{\lambda}} \otimes \Pi_{\leq j} \otimes \Pi_{\leq j}\Big)\cdot \calU \cdot U^{\dagger, \otimes (n+2)}\cdot \rho^{\otimes n} \otimes I \otimes I\bigg) \cdot \swap \cdot \dU  \\
        ={}&\int_U \tr \bigg(U^{\otimes n} \cdot \USWdagger{n} \cdot \Big(\ketbra{\lambda} \otimes I_{\dim(\lambda)} \otimes \ketbra{T^{\lambda}}\Big)\cdot \USWdagger{n} \cdot U^{\dagger, \otimes n}\cdot \rho^{\otimes n}\bigg) \cdot \big(U\Pi_{\leq j}U^{\dagger} \otimes U\Pi_{\leq j}U^{\dagger}\big) \cdot \swap \cdot \dU .
    \end{align*}
    The trace inside the integrand is nonnegative since $\tr(A B) \geq 0$ if $A$ and $B$ are positive semidefinite. We conclude that each term itself can be further written as a positive linear combination of terms $(P \otimes P)\cdot \swap$, where each $P$ is of the form $U \Pi_{\leq j}U^{\dagger}$, and is thus a Hermitian, positive semidefinite matrix. 
\end{proof}

\begin{lemma} \label{lem:decomposition_of_Mcorr}
    Let $n \geq 1$ and $\lambda \vdash n$. It holds that
    \begin{equation*}
        \sum_{k,\ell} \lambda^\uparrow_{\max(k,\ell)} \cdot \ketbra{k} \otimes \ketbra{\ell} = \Big(\sum_{j=1}^{d} m^{\lambda}_j \cdot \Pi_{\leq j} \otimes \Pi_{\leq j}\Big) - \ell(\lambda) \cdot I \otimes I,
    \end{equation*}
    where $\Pi_{\leq j} \coloneq \sum_{i \leq j} \ketbra{i}$ is the projector onto the first $j$ computational basis vectors, and $\{m^{\lambda}_j\}_{j \in [d]}$ is a set of nonnegative integers defined as follows. Regard $\lambda$ as a tuple of length $d$, and let $B_a = \{i_a, \dots, j_a\}$ be the $a$-th block of $\lambda$, for $a \in [A]$, where $A$ is the total number of blocks. The coefficients $\{m^\lambda_j\}_{j \in [d]}$ are given by: 
    \begin{equation*}
        m^{\lambda}_j = \begin{cases} \lambda_{B_a} - \lambda_{B_{a+1}} + |B_{a}| + |B_{a+1}|, & j = j_a \text{ for some } a \in [A-1], \\ 
        \lambda_{B_A} + |B_A|, & j = j_A \text{ and } \lambda_A > 0\\
        0, & \mathrm{otherwise}.\end{cases}
    \end{equation*}
\end{lemma}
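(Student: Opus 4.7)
The plan is to verify the identity entry-by-entry in the standard basis of $\C^d \otimes \C^d$, since both sides are diagonal operators. On the right-hand side, $(\Pi_{\leq j} \otimes \Pi_{\leq j})_{(k,\ell),(k,\ell)} = 1$ precisely when $\max(k,\ell) \leq j$, so the $(k,\ell)$-diagonal entry equals $\sum_{j \geq \max(k,\ell)} m^\lambda_j - \ell(\lambda)$. Thus, writing $r = \max(k,\ell)$, the lemma reduces to the scalar identity
\begin{equation*}
\lambda^\uparrow_r \;=\; \sum_{j=r}^d m^\lambda_j \;-\; \ell(\lambda) \qquad \text{for every } r \in [d].
\end{equation*}

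First I would derive a clean closed form for $\lambda^\uparrow_r$ in terms of the blocks. If $r \in B_a = \{i_a, \ldots, j_a\}$, the two indicator sums in \Cref{def:yd-box-donation_main} count rows strictly above and strictly below $B_a$ whose row-length differs from $\lambda_{B_a}$. The rows above $B_a$ (there are exactly $i_a - 1$ of them) are all strictly longer; the rows below $B_a$ are strictly shorter unless $\lambda_{B_a}=0$, in which case every row below has length $0$ as well and contributes nothing. This yields
\begin{equation*}
\lambda^\uparrow_r \;=\; \lambda_{B_a} \;-\; \sum_{b<a} |B_b| \;+\; \bigl[\lambda_{B_a} > 0\bigr]\cdot \sum_{b>a} |B_b|.
\end{equation*}

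Next, I would evaluate $\sum_{j=r}^d m^\lambda_j$ by telescoping. Since $m^\lambda_j$ is supported on the block endpoints $\{j_1, \ldots, j_A\}$, the sum equals $\sum_{b=a}^{A} m^\lambda_{j_b}$. The differences $\lambda_{B_b} - \lambda_{B_{b+1}}$ telescope to $\lambda_{B_a} - \lambda_{B_A}$, and the size terms $|B_b| + |B_{b+1}|$ combine into $|B_a| + 2\bigl(|B_{a+1}| + \cdots + |B_{A-1}|\bigr) + |B_A|$. In the case $\lambda_{B_A} > 0$ one further adds the contribution $\lambda_{B_A} + |B_A|$ from $m^\lambda_{j_A}$ (which exactly cancels the $-\lambda_{B_A}$ and adds an extra $|B_A|$), whereas when $\lambda_{B_A} = 0$ the terms $m^\lambda_{j_A}$ contributes nothing and the $-\lambda_{B_A}$ is already zero.

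Finally, I would subtract $\ell(\lambda)$, using $\ell(\lambda) = d - |B_A|$ when $\lambda_{B_A} = 0$ and $\ell(\lambda) = d$ otherwise, and check algebraically that the result matches the closed form for $\lambda^\uparrow_r$ above in both subcases; the boundary case $a = A$ is handled the same way. The argument is ultimately straightforward telescoping plus a short case split, and the only real bookkeeping obstacle is tracking the sign-flip in the $(d - j_a)$ contribution to $\lambda^\uparrow_r$ caused by the trailing-zero situation, which is precisely what forces the definition of $m^\lambda_{j_A}$ to vanish when $\lambda_{B_A} = 0$ and to equal $\lambda_{B_A} + |B_A|$ otherwise.
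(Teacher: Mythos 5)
Your proposal is correct and follows essentially the same route as the paper: both reduce to the scalar identity $\sum_{j=r}^d m^\lambda_j = \lambda^\uparrow_r + \ell(\lambda)$ for $r = \max(k,\ell)$, then verify it via the telescoping structure of $m^\lambda_j$ on block endpoints, split into the cases $\lambda_{B_A}=0$ and $\lambda_{B_A}>0$. The only cosmetic difference is that the paper handles the $\lambda_{B_A}>0$ case by formally appending an $(A+1)$-th empty block to reduce it to the other case, whereas you carry both cases through directly; the bookkeeping is equivalent.
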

\begin{proof}
    We have
    \begin{align*}
        \sum_{j=1}^{d} m^{\lambda}_j \cdot \Pi_{\leq j} \otimes \Pi_{\leq j} & = \sum_{j=1}^d  m^{\lambda}_j \cdot \Big(\sum_{k = 1}^j \ketbra{k}\Big) \otimes \Big(\sum_{\ell = 1}^j \ketbra{\ell}\Big) = \sum_{k, \ell = 1}^d \Big(\sum_{j = \max(k,\ell)}^d m^{\lambda}_j\Big)  \cdot \ketbra{k} \otimes \ketbra{\ell}.
    \end{align*}
    First, we consider the case where $\lambda_A =0$. In this case, $\ell(\lambda) = \sum_{a'=1}^{A-1} |B_{a'}|$. As a subcase, suppose $\max(k,\ell) \in B_{a}$, with $a < A$, then 
    \begin{align*}
    \sum_{j = \max(k,\ell)}^d m^{\lambda}_j & = \lambda_{B_a} + |B_a| + 2 \Big(\sum_{a' = a+1}^{A-1} |B_{a'}|\Big) + |B_{A}| \\
    & = \lambda_{B_a} + |B_a| + 2 \Big(\sum_{a' = a+1}^{A-1} |B_{a'}|\Big) + |B_{A}| + \Big(\ell(\lambda) - \sum_{a'=1}^{A-1} |B_{a'}| \Big) \\
    & = \Big(\lambda_{B_a} - \sum_{a' < a} |B_{a'}| + \sum_{a' > a} |B_{a'}|\Big) + \ell(\lambda) \\
    & = \lambda^\uparrow_{B_a} + \ell(\lambda). 
    \end{align*}
    In the other subcase, with $\max(k, \ell) \in B_A$, we have $\lambda_{B_A} = 0$ and
    \begin{equation*}
        \sum_{j = \max(k,\ell)}^d m^{\lambda}_j = 0 = \lambda_{B_A} + \Big(\ell(\lambda) - \sum_{a'=1}^{A-1} |B_{a'}| \Big) = \Big( \lambda_{B_A} - \sum_{a'=1}^{A-1} |B_{a'}| \Big) + \ell(\lambda) = \lambda_{B_A}^\uparrow + \ell(\lambda). 
    \end{equation*}
    Thus, when $\lambda_A = 0$, we have
    \begin{align*}
        \Big(\sum_{j=1}^{d} m^{\lambda}_j \cdot \Pi_{\leq j} \otimes \Pi_{\leq j}\Big) - \ell(\lambda) \cdot I \otimes I & = \sum_{k,\ell} \Big( \sum_{j=\max(k,\ell)}^d m^\lambda_j - \ell(\lambda) \Big) \cdot \ketbra{k} \otimes \ketbra{\ell} \\
        & = \sum_{k,\ell} \lambda^\uparrow_{\max(k,\ell)} \cdot \ketbra{k} \otimes \ketbra{\ell}. 
    \end{align*}

    The second case, when $\lambda_A > 0$, is similar, but now $\ell(\lambda) = \sum_{a'=1}^A |B_{a'}|$. If we add a formal $(A+1)$-th block with $|B_{A+1}| = 0$ and $\lambda_{A+1} = 0$, then this case reduces to the first subcase of the $\lambda_A = 0$ case above. 
\end{proof}

\subsection{Block-diagonal expressions in the Schur basis}
\label{sec:block-diagonal-expressions-in-schur-basis}
Having defined the expressions $M_{\mathrm{avg}}^{(2)}$ and $M^{(2)}_{\mathrm{corr}}$, in this section, we show that they can be written in the Schur basis as block-diagonal matrices, using the Clebsch-Gordan transform.

\begin{notation}
    For convenience, we will adopt the following notation and terminology.
    \begin{itemize}
        \item We will abbreviate $\lambda+e_i+e_j$ as $\lambda_{ij}$, the Young diagram obtained when we add boxes to $\lambda$ in rows $i$ and $j$. Note that $\lambda_{ij} = \lambda_{ji}$. 
        \item If $\lambda+e_i$ and $\lambda+e_i+e_j$ are valid Young diagrams, and $S$ is an SYT of shape $\lambda$, we let $S_{ij}$ denote the SYT obtained by starting with $S$, and adding $n+1$ to the end of the $i$-th row, and $n+2$ to the end of the $j$-th row. Note that $S_{ij} \neq S_{ji}$, unless $i = j$. It can also be the case that $S_{ij}$ is valid, whereas $S_{ji}$ is not, or vice versa. For example, if $\lambda = (1,1)$, then we may insert into the first row, and then the second, but not in the reverse order.
        \item For $S$ an SYT of shape $\lambda$, we can use the pair $(S, \{i, j\})$ to index the subspace of $\Specht_{\lambda_{ij}}$ spanned by the valid SYTs among $\ket{S_{ij}}$ and $\ket{S_{ji}}$. 
        \item We refer to a block with both $S_{ij}$ and $S_{ji}$ valid as \emph{swappable}, since the new boxes in rows $i$ and $j$ may be inserted in either order. Otherwise, we call a block \emph{non-swappable}. There are two non-swappable subcases: if $j = i$, then we refer to this as the \emph{horizontal} case; and if $j = i+1$, we refer to this as the \emph{vertical} case. See \Cref{fig:swappable_non_swappable_illustration} for examples.
    \end{itemize}
\end{notation}

\begin{figure}[h]
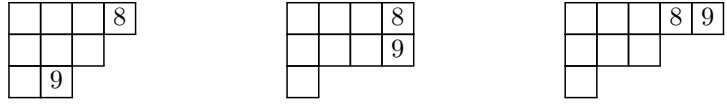

    \centering
    \Large
    \[
    \begin{ytableau}
    ~ & ~ & ~ & 8 \\
    ~ & ~ & ~ \\
    ~ & 9
    \end{ytableau}
    \hspace{2cm}
    \begin{ytableau}
    ~ & ~ & ~ & 8 \\
    ~ & ~ & ~ & 9\\
    ~
    \end{ytableau}
    \hspace{2cm}
    \begin{ytableau}
    ~ & ~ & ~ & 8 & 9 \\
    ~ & ~ & ~ \\
    ~
    \end{ytableau}
    \]
    \caption{In the above examples, we have an SYT of shape $\lambda = (3, 3, 1)$ (not all of its values are visible because they are irrelevant), and we add two new boxes in rows $i$ and $j$ with values $8$ and $9$ respectively. The three resulting SYTs correspond to the cases when $i, j$ are swappable, vertical, and horizontal, respectively.} \label{fig:swappable_non_swappable_illustration}
\end{figure}

\begin{lemma}
    \label{lem:expressions-are-block-diag}
    Any expression $M$ of the form
    \begin{equation*} 
         M = \sum_{\lambda} \dim(V_{\lambda}^d) \int_U U^{\otimes (n+2)}\cdot \calU^\dagger \cdot \Big(\ketbra{\lambda} \otimes I_{\dim(\lambda)} \otimes \ketbra{T^{\lambda}} \otimes \ketbra{k} \otimes \ketbra{\ell}\Big)\cdot \calU \cdot U^{\dagger, \otimes (n+2)} \cdot \dU,
    \end{equation*}
    can be written in the Schur basis as a block-diagonal matrix:
    \begin{equation*}
        \USW{n+2} \cdot M \cdot \USWdagger{n+2} = \sum_{\lambda \vdash n} \sum_{i \leq j} \ketbra{\lambda_{ij}} \otimes \sum_{S \in \lambda} \ketbra{S, \{i, j\}} \otimes  M^{S}_{\{i,j\}} \otimes I_{\dim(V^d_{\lambda_{ij}})}.
    \end{equation*}
    Here, the vectors $\ket{S, \{i,j\} }$ index into the blocks (swappable, horizontal and vertical), and each $M^{S}_{\{i,j\}}$ is a $2 \times 2$ matrix if $S_{ij}$ and $S_{ji}$ are valid and distinct SYTs supported on the block $\mathrm{span}(\ket{S_{ij}}, \ket{S_{ji}})$ (i.e.\ the swappable case), and is a $1 \times 1$ matrix otherwise supported on the block $\mathrm{span}(\ket{S_{ij}})$ (i.e.\ the non-swappable case). The diagonal entries of $M^{S}_{\{i, j\}}$ are equal to
    \begin{equation*}
        M^S_{ij,ij} \coloneq \bra{S_{ij}} M^S_{\{i,j\}} \ket{S_{ij}} = \frac{\dim(V^{d}_{\lambda})}{\dim(V^{d}_{\lambda_{ij}})} \cdot \big(|a^{\lambda}_{k\ell \to ij}|^2 + |b^{\lambda}_{k\ell \to ij}|^2 \big).
    \end{equation*} 
    Whenever $M^{S}_{\{i, j\}}$ is a $2 \times 2$ matrix, its off-diagonal entries are equal to
    \begin{equation*}
        M^S_{ij, ji} \coloneq \bra{S_{ij}} M^S_{\{i,j\}} \ket{S_{ji}} = \frac{\dim(V^{d}_{\lambda})}{\dim(V^{d}_{\lambda_{ij}})} \cdot \big(a^{\lambda}_{k\ell \to ij} b^{\lambda}_{k\ell \to ji} + a^{\lambda}_{k\ell \to ji} b^{\lambda}_{k\ell \to ij} + |a^{\lambda}_{kk \to ij}|^{2} \cdot \delta_{k\ell}\big).
    \end{equation*}
\end{lemma}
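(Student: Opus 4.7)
My plan is to apply the Clebsch-Gordan transform twice to the integrand of $M$, and then evaluate the Haar integral using Schur's lemma. The starting point is the recursive decomposition of the Schur transform from \Cref{def:Schur_transform_recursive}: since $\USW{n+2} = \UR{n+2}\UCG{n+2}(\USW{n+1} \otimes I) = \UR{n+2}\UCG{n+2}(\UR{n+1}\UCG{n+1}\otimes I)(\USW{n}\otimes I\otimes I)$, we have $\USW{n+2}\cdot \calU^{\dagger} = \UR{n+2}\UCG{n+2}(\UR{n+1}\UCG{n+1}\otimes I)$, where $\calU = \USW{n}\otimes I\otimes I$. This lets me rewrite $M$ in the Schur basis on $n+2$ copies via two successive CG expansions of $\ket{T^\lambda}\otimes \ket{k}\otimes \ket{\ell}$.

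Concretely, I will compute $\USW{n+2}\cdot \calU^{\dagger}\cdot \ket{\lambda, S, T^\lambda, k, \ell}$ step by step. The first CG transform (applied to the $T^\lambda$ and $k$ registers) maps $\ket{T^\lambda}\otimes \ket{k}$ to $\sum_i c^\lambda_{k\to i}\ket{\lambda+e_i}\otimes \ket{T^\lambda_{k\to i}}$, and the subsequent rearrangement $\UR{n+1}$ converts the pair of shape labels into a Young's orthogonal basis vector $\ket{S_i}$ by \Cref{thm:Schur_transform_legit}. The second CG transform expands $\ket{T^\lambda_{k\to i}}\otimes \ket{\ell}$; \Cref{lem:two-step-kell-to-ij} tells us the only nonzero outputs are $T^\lambda_{k\ell \to ij}$ (always) and $T^\lambda_{\ell k \to ij}$ (when $k > \ell$), and multiplying the resulting scalar factors by $c^\lambda_{k\to i}$ recovers exactly the two-step coefficients $a^\lambda_{k\ell \to ij}$ and $b^\lambda_{k\ell \to ij}$. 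After the final rearrangement $\UR{n+2}$, the shape label $\lambda_{ij}$ moves to the front and the permutation register holds $\ket{S_{ij}}$, yielding the expansion
\begin{equation*}
\USW{n+2}\cdot \calU^{\dagger}\cdot \ket{\lambda, S, T^\lambda, k, \ell} = \sum_{(i,j):\, S_{ij}\text{ valid}}\ket{\lambda_{ij}}\otimes \ket{S_{ij}}\otimes v^{S,k,\ell}_{ij},
\end{equation*}
where $v^{S,k,\ell}_{ij} \coloneq a^\lambda_{k\ell \to ij}\ket{T^\lambda_{k\ell \to ij}} + b^\lambda_{k\ell \to ij}\ket{T^\lambda_{\ell k \to ij}}$.

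With this expansion in hand, I can evaluate the Haar integral. Conjugation by $U^{\otimes(n+2)}$ acts in the Schur basis as $\sum_\mu \ketbra{\mu}\otimes I\otimes \nu_\mu(U)$, so it leaves the shape and permutation registers invariant and acts on the unitary register via $\nu_\mu$. Schur's lemma then gives
\begin{equation*}
\int_U \nu_{\lambda_{ij}}(U)\ket{T_1}\bra{T_2}\nu_{\lambda_{i'j'}}(U)^{\dagger}\, \dU = \frac{\braket{T_2}{T_1}}{\dim(V^d_{\lambda_{ij}})}\cdot I_{V^d_{\lambda_{ij}}}\cdot \delta_{\lambda_{ij},\lambda_{i'j'}}.
\end{equation*}
Only pairs $(i,j)$ and $(i',j')$ with the same unordered $\{i,j\}$ survive, immediately producing the block-diagonal structure: each block $M^S_{\{i,j\}}$ is a $2\times 2$ matrix in the swappable case (both $S_{ij}$ and $S_{ji}$ valid) or a $1\times 1$ matrix in the horizontal ($j=i$) and vertical ($j=i+1$) non-swappable cases, and its matrix elements are Gram-like inner products $\langle v^{S,k,\ell}_{i'j'}, v^{S,k,\ell}_{ij}\rangle$ scaled by $\dim(V^d_\lambda)/\dim(V^d_{\lambda_{ij}})$.

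The last step is to compute these inner products using the orthonormality of the Gelfand-Tsetlin basis and the symmetry $T^\lambda_{\ell k\to ij} = T^\lambda_{k\ell\to ji}$. For the diagonal entry with $i\neq j$ and $k\neq \ell$, the two SSYTs $T^\lambda_{k\ell \to ij}$ and $T^\lambda_{k\ell \to ji}$ are distinct, giving $|a^\lambda_{k\ell \to ij}|^2 + |b^\lambda_{k\ell \to ij}|^2$. For the off-diagonal $M^S_{ij, ji}$ in the swappable case, the symmetry identifies the $b$-term of $v_{ij}$ with the $a$-term of $v_{ji}$ (and vice versa), producing the cross terms $a^\lambda_{k\ell \to ij}b^\lambda_{k\ell \to ji} + a^\lambda_{k\ell \to ji}b^\lambda_{k\ell \to ij}$. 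The main obstacle is the careful case analysis required in the degenerate situation $k = \ell$: here $b$ vanishes by \Cref{lem:two-step-kell-to-ij}, but $T^\lambda_{kk\to ij} = T^\lambda_{kk\to ji}$ coincide as SSYTs, so the $a$-terms from $v_{ij}$ and $v_{ji}$ overlap and generate the extra $\delta_{k\ell}$ correction. An analogous overlap analysis governs the horizontal and vertical blocks, where the reduction to a single valid SYT forces the $a$ and $b$ contributions (when both appear) to combine into a single scalar amplitude. Once these cases are handled, all the stated matrix entries fall out.
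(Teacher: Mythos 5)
Your proposal is correct and follows essentially the same route as the paper: apply the Clebsch--Gordan transform twice (using \Cref{thm:Schur_transform_legit} to keep the permutation register in Young's orthogonal basis), obtain the expansion of $\USW{n+2}\calU^\dagger\ket{\lambda,S,T^\lambda,k,\ell}$ in terms of the two-step coefficients, and evaluate the Haar integral by Schur's lemma to read off the block entries. One remark worth making: if you carry out the Gram computation $\langle v^{S,k,\ell}_{ji}, v^{S,k,\ell}_{ij}\rangle$ literally in the degenerate case $k=\ell$, the coinciding SSYT $T^\lambda_{kk\to ij} = T^\lambda_{kk\to ji}$ contributes $a^\lambda_{kk\to ij}\,a^\lambda_{kk\to ji}$, \emph{not} $|a^\lambda_{kk\to ij}|^2$ as the lemma statement (and the paper's own derivation) write. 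These two quantities generally differ: for instance, taking $\lambda = (1)$, $d=2$, $k=\ell=2$, $i=1$, $j=2$, a direct evaluation of the scalar factors gives $a^\lambda_{22\to12} = 1/\sqrt{6}$ and $a^\lambda_{22\to21} = 1/\sqrt{2}$ (consistent with the $+1$ eigenvector of $\swap^S_{\{1,2\}}$, since $\ket{k,k}$ lies in the symmetric subspace), so $a^\lambda_{kk\to ij}\neq a^\lambda_{kk\to ji}$. This appears to be a typo in the paper's formula; it is harmless downstream because the proof of \Cref{thm:partial_sums_complete} derives the vanishing of the relevant sum from SWAP-invariance rather than from the explicit form of the $\delta_{k\ell}$ term, and both $a$ factors are real and nonnegative, so the sign structure of the argument is unaffected. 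You should state the off-diagonal entry as $a^\lambda_{kk\to ij}\,a^\lambda_{kk\to ji}\cdot\delta_{k\ell}$ to be precise.
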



\begin{proof}
    We will prove the first part of the lemma by applying the Clebsch-Gordan transform to $M$. For brevity, we will use $\lambda_{ij} = \lambda + e_i + e_j$ to denote the Young diagram when we add boxes in rows $i$ and $j$ to $\lambda$, and analogously, $S_{ij}$ (though note $S_{ij} \neq S_{ji}$). The Clebsch-Gordan transform gives
    \begin{align*}
        &\ket{\lambda} \otimes \ket{S} \otimes \ket*{T^{\lambda}} \otimes \ket{k} \otimes \ket{\ell} \\
        &= \Big(\UR{n+2} \cdot \UCG{n+2} \cdot \UR{n+1} \otimes I \cdot \UCG{n+1} \otimes I\Big)^\dagger \cdot \sum_{i,j} \ket{\lambda_{ij}} \otimes \ket{S_{ij}} \otimes \Big(a^{\lambda}_{k\ell \to ij} \ket*{T^{\lambda}_{k\ell \to ij}} +b^{\lambda}_{k\ell \to ij}\ket*{T^{\lambda}_{k\ell \to ji}}\Big) \\
        & = \Big( \USW{n+2} \cdot \calU^\dagger \Big)^\dagger \cdot \sum_{i,j} \ket{\lambda_{ij}} \otimes \ket{S_{ij}} \otimes \Big(a^{\lambda}_{k\ell \to ij} \ket*{T^{\lambda}_{k\ell \to ij}} +b^{\lambda}_{k\ell \to ij}\ket*{T^{\lambda}_{k\ell \to ji}}\Big).
    \end{align*}  
    Thus  
    \begin{align}
        &\ketbra{\lambda} \otimes \ketbra{S} \otimes \ketbra*{T^{\lambda}} \otimes \ketbra{k} \otimes \ketbra{\ell} \nonumber \\
        &\cong \sum_{i,j}  \sum_{i', j'} \ketbra{\lambda_{ij}}{\lambda_{i'j'}} \otimes \ketbra{S_{ij}}{S_{i'j'}} \otimes \Big(a^{\lambda}_{k\ell \to ij} \ket*{T^{\lambda}_{k\ell \to ij}} +b^{\lambda}_{k\ell \to ij}\ket*{T^{\lambda}_{k\ell  \to ji}}\Big)\Big(a^{\lambda}_{k\ell \to i'j'} \bra*{T^{\lambda}_{k\ell \to i'j'}} +b^{\lambda}_{k\ell \to i'j'}\bra*{T^{\lambda}_{k\ell  \to j'i'}}\Big),\label{block_diagonal_CG_eq}
    \end{align}
    where $\cong$ indicates equality up to conjugation by $\USW{n+2} \cdot \calU^\dagger$.
    When we compute the integral of ~\Cref{block_diagonal_CG_eq}, due to Schur-Weyl duality~(\Cref{schur_weyl_duality}), we will get a linear combination of expressions of the form:
    \begin{align*}
        & \int_U U^{\otimes (n+2)} \cdot \USWdagger{n+2} \cdot \Big( \ketbra{\lambda_{ij}}{\lambda_{i'j'}} \otimes \ketbra{S_{ij}}{S_{i'j'}} \otimes \ketbra*{T^\lambda_{k\ell \to pq}}{T^\lambda_{k\ell\to p'q'}}\Big)\cdot \USW{n+2} \cdot U^{\dagger, \otimes (n+2)} \cdot \dU \\
        = {}& \USWdagger{n+2} \cdot \Big(\ketbra{\lambda_{ij}}{\lambda_{i'j'}} \otimes \ketbra{S_{ij}}{S_{i'j'}} \otimes \int_U \nu_{\lambda_{pq}}(U) \ketbra*{T^\lambda_{k\ell \to pq}}{T^\lambda_{k\ell \to p'q'}} \nu_{\lambda_{p'q'}}(U)^\dagger \cdot \dU \Big) \cdot \USW{n+2},
    \end{align*}
    where $p, q$ and $p', q'$ are such that $\{p, q\} = \{i, j\}$ and $\{p', q'\} = \{i', j'\}$.
    The above integral defines an intertwining operator between the irreps $\nu_{\lambda_{pq}}$ and $\nu_{\lambda_{p'q'}}$, and so can be computed using Schur's lemma. In particular, if the Young diagrams $\lambda_{pq}$ and $\lambda_{p'q'}$ are different, then the integral is zero. This happens if and only if $\{p,q\} \neq \{p',q'\}$. Otherwise, when $\{p,q\} = \{p',q'\}$ (so that $\{i,j\} = \{p,q\} = \{p',q'\} = \{i',j'\}$) the integral evaluates to a multiple of the identity $c\cdot I_{\dim(V^d_{\lambda_{ij}})}$, where the constant $c$ is given by~\Cref{lem:schur-lemma}:
    \begin{equation*}
        c = \frac{1}{\dim(V_{\lambda_{ij}}^d)}\cdot \tr(\int_U \nu_{\lambda_{ij}}(U) \ketbra*{T^\lambda_{k\ell \to pq}}{T^\lambda_{k\ell\to p'q'}} \nu_{\lambda_{ij}}(U)^\dagger \cdot \dU)  = \frac{\braket*{T^\lambda_{k\ell\to p'q'}}{T^\lambda_{k\ell \to pq}}}{\dim(V^d_{\lambda_{ij}})}.
    \end{equation*}
    Thus, in the Schur basis, $M$ is a linear combination of terms of the form 
    \begin{equation*}
        \ketbra{\lambda_{ij}} \otimes \ketbra{S_{ij}}{S_{i'j'}} \otimes I_{\dim(V^d_{\lambda_{ij}})}.
    \end{equation*}
    Since $\{i,j\} = \{i',j'\}$, we see that each standard Young tableau $\ket*{S_{ij}}$ can only possibly be paired with itself and the tableau $\ket*{S_{ji}}$ that we obtain if we swap $i$ and $j$, provided this is a valid tableau. Thus, we obtain the block structure claimed in the first part of the lemma.

    Let us now compute the diagonal entries of $M^S_{\{i, j\}}$. A term with $\ketbra{S_{ij}}$ in the second register arises from terms as in \Cref{block_diagonal_CG_eq} with $(i,j) = (i',j')$:
    \begin{equation*}
    \ketbra{\lambda_{ij}}{\lambda_{ij}} \otimes \ketbra{S_{ij}}{S_{ij}} \otimes \Big(a^{\lambda}_{k\ell \to ij} \ket*{T^{\lambda}_{k\ell \to ij}} +b^{\lambda}_{k\ell \to ij}\ket*{T^{\lambda}_{k\ell  \to ji}}\Big)\Big(a^{\lambda}_{k\ell \to ij} \bra*{T^{\lambda}_{k\ell \to ij}} +b^{\lambda}_{k\ell \to ij}\bra*{T^{\lambda}_{k\ell  \to ji}}\Big).
    \end{equation*}
    With such a term, after integrating over the unitary group and using Schur's lemma, we are left with terms of the form
    \begin{equation*}
         \frac{1}{\dim(V^d_{\lambda_{ij}})} \cdot \ketbra{\lambda_{ij}} \otimes \ketbra{S_{ij}} \otimes \Big(|a^{\lambda}_{k\ell \to ij}|^2 + |b^{\lambda}_{k\ell \to ij}|^2\Big) \cdot I_{\dim(V_{\lambda_{ij}}^d)}.
    \end{equation*}
    This is because $T^\lambda_{k\ell \to ij}$ and $T^{\lambda}_{k \ell \to ji}$ are distinct SSYTs when $k \neq \ell$, and when $k = \ell$, we have that $b^{\lambda}_{k \ell \to ij} = 0$.

    We proceed to the off-diagonal entries of $M^S_{\{i, j\}}$, which occur only in the swappable case, when necessarily $i \neq j$. A term with $\ketbra{S_{ji}}$ in the second register arises from terms as in \Cref{block_diagonal_CG_eq} with $(i,j) = (j',i')$:
    \begin{equation*}
         \ketbra{\lambda_{ij}}{\lambda_{ij}} \otimes \ketbra{S_{ij}}{S_{ji}} \otimes \Big(a^{\lambda}_{k\ell \to ij} \ket*{T^{\lambda}_{k\ell \to ij}} +b^{\lambda}_{k\ell \to ij}\ket*{T^{\lambda}_{k\ell  \to ji}}\Big)\Big(a^{\lambda}_{k\ell \to ji} \bra*{T^{\lambda}_{k\ell \to ji}} +b^{\lambda}_{k\ell \to ji}\bra*{T^{\lambda}_{k\ell \to ij}}\Big).
    \end{equation*}
    With such a term, after integrating over the unitary group and using Schur's lemma, we are left with
    \begin{equation*}
         \frac{1}{\dim(V^d_{\lambda_{ij}})} \cdot \ketbra{\lambda_{ij}} \otimes \ketbra{S_{ij}} \otimes \big( a^{\lambda}_{k\ell \to ij} b^{\lambda}_{k\ell \to ji} + a^{\lambda}_{k\ell \to ji} b^{\lambda}_{k\ell \to ij} + |a^{\lambda}_{kk \to ij}|^{2} \cdot \delta_{k\ell}\big) \cdot I_{\dim(V_{\lambda_{ij}}^d)}.
    \end{equation*}
    This is because $T^\lambda_{k\ell \to ij}$ and $T^{\lambda}_{k \ell \to ji}$ are distinct SSYTs when $k \neq \ell$, but not when $k = \ell$.
\end{proof}

In addition to the block-diagonal structure of the expressions from~\Cref{lem:expressions-are-block-diag}, we show below that each $2 \times 2$ block matrix is a linear combination of the identity and $\swap$ when the expression is real and symmetric in $k$ and $\ell$.
\begin{notation}
    Following \Cref{def:YOB}, $\swap$ is block-diagonal in the Schur basis, and preserves horizontal, vertical, and swappable blocks. In particular, we can write:
    \begin{align*}
        \swap & = \mathcal{P}((n+1, n+2)) \\
        & = \USWdagger{n+2} \cdot \Big( \sum_{\lambda \vdash n} \sum_{i \leq j} \ketbra{\lambda_{ij}} \otimes \kappa_{\lambda_{ij}}(n+1, n+2) \otimes I_{\dim(V^d_{\lambda_{ij}})}\Big) \cdot \USWdagger{n+2} \\
        & =  \USWdagger{n+2} \cdot \Big(\sum_{\lambda \vdash n} \sum_{i \leq j}  \ketbra{\lambda_{ij}} \otimes \sum_{S \in \mathrm{SYT}(\lambda)} \ketbra{S, \{i, j\}} \otimes  \swap^S_{\{i,j\}} \otimes I_{\dim(V^d_{\lambda_{ij}})}\Big) \cdot \USWdagger{n+2}.
    \end{align*} 
    Moreover, $\swap^{S}_{\{i,j\}}$ is $1$ in the horizontal case, $-1$ in the vertical case, and in the swappable case,  
    \begin{equation*}
        \swap^S_{\{i,j\}}= \begin{pmatrix}
            \frac{1}{\Delta_{ji}} & \sqrt{1 - \frac{1}{\Delta^2_{ji}}} \\
            \sqrt{1 - \frac{1}{\Delta^2_{ji}}} & -\frac{1}{\Delta_{ji}}
        \end{pmatrix}.
    \end{equation*}
    Here, $\Delta_{ji} \coloneq \Delta_S(n+1) = \content_S(n+2) - \content_S(n+1)$, is the difference of contents between the two new cells in rows $j$ and $i$. We will also use $\Delta_{ji}$ in the horizontal and vertical cases, where $\Delta_{ji} = +1$ and $\Delta_{ji} = -1$ respectively. Note that in the swappable case, $|\Delta_{ji}| \geq 2$.
\end{notation}

\begin{lemma}
    \label{lem:expressions-are-lin-comb-I-SWAP}
    Any expression $M$ of the form
    \begin{equation}
         M = \sum_{\lambda} \dim(V_{\lambda}^d) \sum_{k, \ell=1}^d f(k, \ell) \int_U U^{\otimes (n+2)}\cdot \calU^\dagger \cdot \Big(\ketbra{\lambda} \otimes I_{\dim(\lambda)} \otimes \ketbra{T^{\lambda}} \otimes \ketbra{k} \otimes \ketbra{\ell}\Big)\cdot \calU \cdot U^{\dagger, \otimes (n+2)} \cdot \dU,
    \end{equation}
    with $f$ is a real symmetric function in $k, \ell$,
    can be written in the Schur basis as a block-diagonal matrix
    \begin{equation*}
        \USW{n+2} \cdot M \cdot \USWdagger{n+2} = \sum_{\lambda \vdash n} \ketbra{\lambda_{ij}} \otimes \sum_{S \in \lambda} \sum_{i \leq j} \ketbra{S, \{i, j\}} \otimes M^{S}_{\{i,j\}} \otimes I_{\dim(V^d_{\lambda+e_i+e_j})},
    \end{equation*}
    where each $M^{S}_{\{i,j\}}$ of dimension $2$ must have the form $x^{S}_{\{i,j\}} \cdot I + y^{S}_{\{i,j\}} \cdot \swap^{S}_{\{i, j\}}$, where $x^{S}_{\{i,j\}}$ and $y^{S}_{\{i,j\}}$ are scalars.
\end{lemma}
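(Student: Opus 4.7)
\medskip

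\textbf{Proof proposal.} The plan is to exhibit the commutation of $M$ with $\swap_{n+1,n+2}$ as an operator on $(\C^d)^{\otimes(n+2)}$, and then transport this commutation relation to the Schur basis, where it forces each $2\times 2$ block to lie in the commutant of $\swap^S_{\{i,j\}}$. Since $\swap^S_{\{i,j\}}$ has two distinct eigenvalues in the swappable case, its commutant is two-dimensional and spanned by $\{I,\swap^S_{\{i,j\}}\}$, which is exactly what we want.

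In more detail, first I would set $P \coloneq \sum_{k,\ell} f(k,\ell)\,\ketbra{k}\otimes \ketbra{\ell}$ acting on the last two tensor factors, and rewrite
\begin{equation*}
M = \sum_{\lambda}\dim(V^d_\lambda)\int_U U^{\otimes(n+2)}\cdot \calU^\dagger\cdot\bigl(\ketbra{\lambda}\otimes I_{\dim(\lambda)}\otimes \ketbra*{T^\lambda}\otimes P\bigr)\cdot \calU\cdot U^{\dagger,\otimes(n+2)}\,\dU.
\end{equation*}
The symmetry $f(k,\ell)=f(\ell,k)$ is precisely the statement $\swap_{n+1,n+2}\cdot P\cdot \swap_{n+1,n+2}^\dagger=P$. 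I would then observe that $\swap_{n+1,n+2}$ commutes with $U^{\otimes(n+2)}$ (both are operators built from tensor structure) and with $\calU=\USW{n}\otimes I\otimes I$ (since $\calU$ acts trivially on the last two registers while $\swap_{n+1,n+2}$ acts only on them). Conjugating the entire integrand by $\swap_{n+1,n+2}$ therefore only conjugates $P$, which is invariant, so $\swap_{n+1,n+2}\cdot M\cdot \swap_{n+1,n+2}^\dagger = M$.

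Next I would pass to the Schur basis via $\USW{n+2}$. By the block decomposition of \Cref{lem:expressions-are-block-diag}, $M$ is a direct sum of blocks $M^S_{\{i,j\}}\otimes I_{\dim(V^d_{\lambda_{ij}})}$ indexed by $(\lambda_{ij}, S, \{i,j\})$, while $\swap_{n+1,n+2}=\mathcal{P}((n+1,n+2))$ acts in the same basis as the direct sum of blocks $\swap^S_{\{i,j\}}\otimes I_{\dim(V^d_{\lambda_{ij}})}$ (using Young's orthogonal form, as recorded above). The identity $\swap_{n+1,n+2}\cdot M = M\cdot \swap_{n+1,n+2}$ then holds blockwise, yielding
\begin{equation*}
\swap^S_{\{i,j\}}\cdot M^S_{\{i,j\}} = M^S_{\{i,j\}}\cdot \swap^S_{\{i,j\}}
\end{equation*}
for every block.

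Finally, in the swappable case we have $|\Delta_{ji}|\geq 2$, so $\swap^S_{\{i,j\}}$ is a $2\times 2$ matrix with the two distinct eigenvalues $\pm 1$; in particular it is not a scalar multiple of the identity. A standard linear algebra fact then shows that its commutant in $M_2(\C)$ is exactly the two-dimensional subspace $\mathrm{span}\{I,\swap^S_{\{i,j\}}\}$, forcing $M^S_{\{i,j\}} = x^S_{\{i,j\}}\cdot I + y^S_{\{i,j\}}\cdot \swap^S_{\{i,j\}}$ for some scalars, as desired. I do not expect any serious obstacle here; the only point that needs care is verifying the commutation relations between $\swap_{n+1,n+2}$ and the operators $U^{\otimes(n+2)}$ and $\calU$, which is purely a matter of tracking which tensor factors each operator acts on.
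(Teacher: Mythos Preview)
Your proposal is correct and follows essentially the same approach as the paper: establish that $M$ commutes with $\swap_{n+1,n+2}$ using the symmetry of $f$, transport this to the Schur basis to get blockwise commutation, and then conclude via the structure of the commutant of $\swap^S_{\{i,j\}}$. Your argument for the commutation step (observing that $\swap_{n+1,n+2}$ commutes with both $U^{\otimes(n+2)}$ and $\calU$) is slightly more direct than the paper's explicit computation of $\swap\cdot M\cdot\swap$, and your commutant argument avoids the paper's appeal to simultaneous diagonalizability via Hermiticity, but these are cosmetic differences.
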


\begin{proof}
    Observe that if $f(k, \ell) = f(\ell, k)$, then $M$ commutes with $\swap$, since $M$ is equal to
    \begin{align*}
        & \sum_{\lambda} \dim(V_{\lambda}^d) \sum_{k, \ell} f(k, \ell) \int_U U^{\otimes (n+2)}\Big(\ketbra{\lambda} \otimes I_{\dim(\lambda)} \otimes \ketbra{T^{\lambda}} \otimes \ketbra{k} \otimes \ketbra{\ell}\Big)U^{\dagger, \otimes (n+2)} \cdot \dU \\
        & = \sum_{\lambda} \dim(V_{\lambda}^d) \sum_{k, \ell} f(k, \ell)  \int_U \Big( U^{\otimes n} \cdot \USWdagger{n} \cdot \Big(\ketbra{\lambda} \otimes I_{\dim(\lambda)} \otimes \ketbra{T^{\lambda}} \Big)\cdot \USW{n}\cdot U^{\dagger, \otimes n}\Big) \otimes \Big( U \ketbra{k} U^\dagger \otimes U \ketbra{\ell} U^\dagger \Big) \cdot \dU.
    \end{align*}
    Then $\swap \cdot M \cdot \swap$ becomes
    \begin{align*}
        & \sum_{\lambda} \dim(V_{\lambda}^d) \sum_{k, \ell} f(k, \ell)  \int_U \Big( U^{\otimes n} \cdot \USWdagger{n} \cdot \Big(\ketbra{\lambda} \otimes I_{\dim(\lambda)} \otimes \ketbra{T^{\lambda}} \Big)\cdot \USW{n}\cdot U^{\dagger, \otimes n}\Big) \otimes \Big( U \ketbra{\ell} U^\dagger \otimes U \ketbra{k} U^\dagger \Big) \cdot \dU \\
        & = \sum_{\lambda} \dim(V_{\lambda}^d) \sum_{k, \ell} f(k, \ell) \int_U U^{\otimes (n+2)} \cdot \Big(\ketbra{\lambda} \otimes I_{\dim(\lambda)} \otimes \ketbra{T^{\lambda}} \otimes \ketbra{\ell} \otimes \ketbra{k}\Big) \cdot U^{\dagger, \otimes (n+2)} \cdot \dU \\ 
        & = \sum_{\lambda} \dim(V_{\lambda}^d) \sum_{k, \ell} f(\ell, k) \int_U U^{\otimes (n+2)} \cdot \Big(\ketbra{\lambda} \otimes I_{\dim(\lambda)} \otimes \ketbra{T^{\lambda}} \otimes \ketbra{k} \otimes \ketbra{\ell}\Big) \cdot U^{\dagger, \otimes (n+2)} \cdot \dU,
    \end{align*}
    which is equal to $M$ if $f(k,\ell) = f(\ell, k)$. So, if $f$ is a real symmetric function, then $M$ and $\swap$ commute. In this case, $M$ and $\swap$ are simultaneously diagonalizable as they are both Hermitian. By writing both of them in the Schur basis, where they have the same block diagonal structure, we conclude that their principal submatrices that correspond to each $(S, \{i, j\})$ block must also commute.

    For a swappable block $(S, \{i, j\})$, the projectors on the two eigenspaces of $\swap^S_{\{i, j\}}$ are $\frac{1}{2}\big(I \pm \swap^S_{\{i, j\}}\big)$. Thus, we conclude that any $M^{S}_{\{i, j\}}$ can be written as a linear combination of $\frac{1}{2}\big(I \pm \swap^S_{\{i, j\}}\big)$, or equivalently, as a linear combination of $I$ and $\swap^S_{\{i, j\}}$.
\end{proof}

\subsection{Technical lemmas concerning two-step Clebsch-Gordan coefficients}
\label{sec:technical-two-step}

In this subsection, we collect some identities about the two-step Clebsch-Gordan coefficients that we will use in our proof for the second moment of our estimator. Recall from~\Cref{not:D_symbols} that we use $D^{\lambda}_{k \to i}$ to refer to the ratio of dimensions $\dim(V^{k}_{\lambda_{\leq k} + e_i}) / \dim(V^{k}_{\lambda_{\leq k}})$ when $i \leq k \leq d$, and $D^{\lambda}_{k \to i} = 0$ otherwise.

\begin{notation}
    We will often consider expressions of the form $|a^{\lambda}_{k\ell \to ij}|^2 + |b^{\lambda}_{k \ell \to ij}|^2$. To simplify notation, we introduce the following shorthand:
    \begin{equation*}
        |c^{\lambda}_{k\ell \to ij}|^2 \coloneq  |a^{\lambda}_{k\ell \to ij}|^2 + |b^{\lambda}_{k \ell \to ij}|^2.
    \end{equation*}
\end{notation}
\begin{notation}
    We will use the following shorthand for the partial sum of two-step CG coefficients:
    \begin{equation*}
        F(s,t) \coloneq \sum_{k = 1}^s \sum_{\ell = 1}^t |c^{\lambda}_{k \ell \to ij}|^2.
    \end{equation*}
\end{notation}

\begin{notation}
    \label{not:D_symbols_two-step}
    We will also use the shorthand
    \begin{equation*}
        D^\lambda_{s \to ij} \coloneq D^\lambda_{s \to i} \cdot D^{\lambda+e_i}_{s \to j} = \frac{\dim\big(V^s_{\lambda^{\leq s}+e_i}\big)}{\dim\big(V^s_{\lambda^{\leq s}}\big)} \cdot \frac{\dim\big(V^s_{\lambda^{\leq s}+e_i+e_j}\big)}{\dim\big(V^s_{\lambda^{\leq s}+e_i}\big)} = \frac{\dim\big(V^s_{\lambda^{\leq s}+e_i+e_j}\big)}{\dim\big(V^s_{\lambda^{\leq s}}\big)}.
    \end{equation*}
    Note that $D^{\lambda}_{s \to ij} = D^{\lambda}_{s \to ji}$. 
\end{notation}

\begin{lemma}[Partial sums of two-step Clebsch-Gordan coefficients]
    \label{thm:partial_sums_complete}
    Let $\lambda$ be a Young diagram, and $i, j, k, \ell \in [d]$. Then
    \begin{equation*}
        F(s, t) = \sum_{k=1}^s \sum_{\ell = 1}^t |c^{\lambda}_{k \ell \to ij}|^2 = \begin{cases}
            D_{s \to i}^{\lambda} \cdot D_{t \to j}^{\lambda+e_i} & s \leq t, \\
            D^{\lambda}_{t \to j} \cdot D^{\lambda+e_j}_{s \to i} + \frac{1}{\Delta_{ji}^2} \big( D^{\lambda}_{t \to i} \cdot D^{\lambda+e_i}_{s \to j} - D^{\lambda}_{t \to j} \cdot D^{\lambda+e_j}_{s \to i} \big) & s \geq t.
        \end{cases}
    \end{equation*}
\end{lemma}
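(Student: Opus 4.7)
The plan is to prove the identity by induction on $s$, using the one-step partial-sum identity from \Cref{lem:clebsch-gordan-to-truncated-tableau} (and its corollary \Cref{cor:sum_of_CGs_1}) as the main building block. First I would factor
\[
|c^{\lambda}_{k\ell \to ij}|^2 = |a^{\lambda}_{k\ell \to ij}|^2 + |b^{\lambda}_{k\ell \to ij}|^2 = |c^\lambda_{k\to i}|^2 \cdot \Bigl( |\langle T^\lambda_{k\ell\to ij} | T^\lambda_{k\to i}, \ell\rangle|^2 + |\langle T^\lambda_{k\ell\to ji} | T^\lambda_{k\to i}, \ell\rangle|^2\Bigr),
\]
so that the first insertion always contributes the familiar one-step factor $|c^\lambda_{k\to i}|^2$, while the bracketed quantity captures the amplitudes of the second insertion applied to the (nearly highest-weight) tableau $T^\lambda_{k\to i}$. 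Recall from \Cref{lem:two-step-kell-to-ij} that the $b$-term only contributes when $k>\ell$, which is where the asymmetry between the cases $s\leq t$ and $s\geq t$ enters.

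For the case $s\leq t$, the natural inductive step computes $F(s,t)-F(s-1,t) = \sum_{\ell=1}^{t}|c^\lambda_{s\ell\to ij}|^2$ and compares to $(D^\lambda_{s\to i}-D^\lambda_{s-1\to i})\cdot D^{\lambda+e_i}_{t\to j} = |c^\lambda_{s\to i}|^2\cdot D^{\lambda+e_i}_{t\to j}$, where the last equality is \Cref{lem:clebsch-gordan-to-truncated-tableau}. Thus the target reduces to
\[
\sum_{\ell=1}^{t}\Bigl(|\langle T^\lambda_{s\ell\to ij} | T^\lambda_{s\to i}, \ell\rangle|^2 + |\langle T^\lambda_{s\ell\to ji} | T^\lambda_{s\to i}, \ell\rangle|^2\Bigr) \;=\; D^{\lambda+e_i}_{t\to j}\;=\; \sum_{\ell=1}^{t}|c^{\lambda+e_i}_{\ell\to j}|^2.
\]
The right-hand side is the one-step partial sum applied to the true highest-weight tableau $T^{\lambda+e_i}$, whereas the left-hand side uses $T^\lambda_{s\to i}$, which differs only in the final box of row $i$. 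I would expand both sides using the scalar-factor product formula in \Cref{thm:cg-coeff-as-product}: for most values of $p$ the scalar factors of \Cref{lem:cg-rules} either agree or can be made trivial via \Cref{eq:eq-4-becomes-trivial}. The surviving factors, indexed by $p\in\{s,\ell\}$, should match after splitting $\ell\geq s$ (where only the $a$-term survives) from $\ell<s$ (where the $a$- and $b$-contributions combine). A companion lemma about how one-step coefficients are constant on blocks (\Cref{lem:CG_coeffs_on_block_equal}) handles degenerate cases, and the base case $s=0$ is immediate since $F(0,t)=0 = D^\lambda_{0\to i}\cdot D^{\lambda+e_i}_{t\to j}$.

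For the case $s\geq t$, the same inductive scheme on $s$ applies, but now the induction hypothesis lives in the second branch of the formula, and the incremental sum $\sum_{\ell=1}^{t}|c^\lambda_{s\ell\to ij}|^2$ always includes the full range of $\ell$'s where $\ell< s$, so the $b$-contributions are fully engaged. This is where the factor $\Delta_{ji}^{-2}$ enters: the two orderings $T^\lambda_{k\ell\to ij}$ and $T^\lambda_{k\ell\to ji}$ lie in the same swappable block in $\Specht_{\lambda+e_i+e_j}$, and their amplitudes mix through the same matrix elements of $\swap^S_{\{i,j\}}$ (which carry $1/\Delta_{ji}$ and $\sqrt{1-\Delta_{ji}^{-2}}$ by Young's orthogonal form, \Cref{def:YOB}). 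I expect the clean bookkeeping trick to be that the quantity $P(s,t)-Q(t,s) \coloneq D^\lambda_{t\to i}D^{\lambda+e_i}_{s\to j} - D^\lambda_{t\to j}D^{\lambda+e_j}_{s\to i}$ vanishes when $s=t$ (since both sides equal $D^\lambda_{s\to ij}$), and in general measures the discrepancy between inserting into row $i$ first versus row $j$ first; rescaling by $\Delta_{ji}^{-2}$ lands exactly the correction needed. Consistency at $s=t$ between the two branches of the piecewise formula then serves as a useful sanity check: both reduce to $D^\lambda_{s\to ij}$.

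The main obstacle I anticipate is the case $s\geq t$: even granting the one-step identities and the product structure of the CG coefficients, untangling the $a/b$ split with the correct $\Delta_{ji}^{-2}$ weighting requires carefully matching scalar factors between $T^\lambda_{k\to i}$ and $T^{\lambda+e_i}$ and invoking the explicit forms in \Cref{eq:equation-3,eq:equation-4}. I expect the cleanest presentation will be to prove an auxiliary identity of the form
\[
|\langle T^\lambda_{k\ell\to ij}|T^\lambda_{k\to i},\ell\rangle|^2 + |\langle T^\lambda_{k\ell\to ji}|T^\lambda_{k\to i},\ell\rangle|^2 = \text{(explicit expression in $D$'s and $\Delta_{ji}$)}
\]
valid for all $k,\ell$, and then sum it over the desired range, with the induction handling the telescoping. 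The $\Delta_{ji}^{-2}$ in the $s\geq t$ formula should fall out directly from evaluating this auxiliary identity using the formulas in \Cref{lem:cg-rules} and recognizing the dimension ratios as $D^{\bullet}_{\bullet}$'s via \Cref{lem:clebsch-gordan-to-truncated-tableau} and its companion relations \Cref{eq:diff_contents_times_CG_equals_D,eq:relation_between_D's}.
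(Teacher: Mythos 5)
Your high-level reductions are the right ones — factoring off $|c^\lambda_{k\to i}|^2$, telescoping in $s$, and reducing the $s\leq t$ case to showing $\sum_{\ell=1}^t\sum_{T'}|\braket*{T^\lambda_{s\to i},\ell}{T'}|^2 = D^{\lambda+e_i}_{t\to j}$ — and your instinct that $\Delta_{ji}$ enters through the $2\times 2$ $\swap$ block is correct. But the two technical devices you would need to fill in are not what you propose, and the route you sketch is unlikely to go through cleanly.

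For the inner-sum identity in the $s\leq t$ case, the paper does \emph{not} match scalar factors between $T^\lambda_{s\to i}$ and $T^{\lambda+e_i}$; such a term-by-term comparison is not available, since there is no reason for $\sum_{T'}|\braket*{T^\lambda_{s\to i},\ell}{T'}|^2$ to equal $|c^{\lambda+e_i}_{\ell\to j}|^2$ for each fixed $\ell$. Instead, the paper observes that for $k,\ell\leq t$ no letter larger than $t$ is ever bumped, so the CG coefficients only see the restrictions to $[t]$; then it invokes \Cref{thm:normal-to-dual-cg} to flip $\braket*{T,\ell}{T'}$ into (a dimension factor times) $\braket*{T}{T',\overline{\ell}}$, and sums over $\ell\in[t]$ and $T'$ of shape $(\lambda_{ij})_{\leq t}$ to get a resolution of the identity, which collapses to $1$ because $(T^\lambda_{k\to i})^{[t]}$ is a unit vector in the unique $V^t_{(\lambda+e_i)_{\leq t}}$ summand. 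This is a completeness argument, not a scalar-factor computation, and it is what makes the answer manifestly independent of $k$ (for $k\leq t$), which is otherwise far from obvious.

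For the $s\geq t$ case, the paper again does not prove any auxiliary term-wise identity. It first shows (using the symmetric case $s=t$, plus the observation that the cross-term $g(k,\ell)=a^\lambda_{k\ell\to ij}b^\lambda_{k\ell\to ji}+a^\lambda_{k\ell\to ji}b^\lambda_{k\ell\to ij}+|a^\lambda_{kk\to ij}|^2\delta_{k\ell}$ vanishes for $k<\ell$) that the off-diagonal entries of the block-diagonal operator $M_{st}$ are zero whenever $s\leq t$; then it uses the operator identity $M_{st}=\swap\cdot M_{ts}\cdot\swap$ and conjugates the diagonal $2\times 2$ block found in the $t\leq s$ case by $\swap^S_{\{i,j\}}=\begin{pmatrix}\Delta_{ji}^{-1}&\sqrt{1-\Delta_{ji}^{-2}}\\ \sqrt{1-\Delta_{ji}^{-2}}&-\Delta_{ji}^{-1}\end{pmatrix}$. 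The $\Delta_{ji}^{-2}$ correction falls out of this conjugation directly, with no need to track how $a$- and $b$-contributions combine entry by entry. Your proposal gestures at these $\swap$-matrix elements as the source of $\Delta_{ji}^{-2}$, which is right, but the proof strategy of proving an explicit auxiliary identity and summing it would still leave you to verify the off-diagonal vanishing and to control the $a/b$ interaction for each $(k,\ell)$, both of which the conjugation trick avoids entirely. In short, you need the operator-level argument (block structure, vanishing off-diagonals, conjugation by $\swap$), not a refined combinatorial identity.
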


\begin{proof}
    We prove the lemma in three steps.

    \paragraph{Step 1:} First, we prove the result in the case when $s \leq t$, using the dual Clebsch-Gordan transform. We observe that $|c^{\lambda}_{k\ell \to ij}|^2$ can be written as:
    \begin{align*}
        |c^{\lambda}_{k\ell \to ij}|^2
        &= |a^{\lambda}_{k\ell \to ij}|^2 + |b^{\lambda}_{k\ell \to ij}|^2 \\
        &= |\braket*{T^{\lambda}, k}{T^{\lambda}_{k \to i}} \cdot \braket*{T^{\lambda}_{k \to i}, \ell}{T^{\lambda}_{k\ell \to ij}}|^2 + |\braket*{T^{\lambda}, k}{T^{\lambda}_{k \to i}} \cdot \braket*{T^{\lambda}_{k \to i}, \ell}{T^{\lambda}_{k\ell \to ji}}|^2 \\
        &= |\braket*{T^{\lambda}, k}{T^{\lambda}_{k \to i}}|^2 \cdot \big(|\braket*{T^{\lambda}_{k \to i}, \ell}{T^{\lambda}_{k\ell \to ij}}|^2 + |\braket*{T^{\lambda}_{k \to i}, \ell}{T^{\lambda}_{k \ell \to ji}}|^2 \big) \\
        &= |c^{\lambda}_{k \to i}|^2 \sum_{T'}|\braket*{T^{\lambda}_{k \to i}, \ell}{T'}|^2,
    \end{align*}
    where $T' \in \SSYT(\lambda_{ij}, d)$. 
    The last equality follows because the only SSYTs $T'$ for which $\braket*{T^{\lambda}_{k \to i}, \ell}{T'}$ is nonzero are $T^{\lambda}_{k\ell \to ij}$ and $T^{\lambda}_{k\ell \to ji}$. Now $F(s, t)$ can be written as
    \begin{equation*}
        \sum_{k=1}^s \sum_{\ell=1}^t |c^{\lambda}_{k\ell \to ij}|^2
        = \sum_{k=1}^s |c^{\lambda}_{k \to i}|^2 \sum_{\ell=1}^t  \sum_{T'} |\braket*{T^{\lambda}_{k \to i}, \ell}{T'}|^2.
    \end{equation*}
    We proceed by showing that for $k \leq t$, the value of the inner summation does not depend on $k$, and has a quite simple expression:
    \begin{equation}
        \label{eq:inner-summation-two-step}
        \sum_{\ell = 1}^{t} \sum_{T'}|\braket*{T^{\lambda}_{k \to i}, \ell}{T'}|^2 = D^{\lambda+e_i}_{t \to j}.
    \end{equation}
    From the description of the Clebsch-Gordan insertion algorithm in~\Cref{def:CG_insertion}, since $k, \ell \leq t$, no letter greater than $t$ is bumped during the insertion of the letter $\ell$ in $T^{\lambda}_{k \to i}$ to obtain $T'$. Thus, the Clebsch-Gordan coefficient $\braket*{T^{\lambda}_{k \to i}, \ell}{T'}$ is a product of scalar factors that only depend on the restrictions of $T^{\lambda}_{k \to i}$ and $T'$ to the alphabet $[t]$.  We conclude that in the case of $k, \ell \leq t$, it holds that:
    \begin{equation*}
        \braket*{T^{\lambda}_{k \to i}, \ell}{T'}
        = \braket*{(T^{\lambda}_{k \to i})^{[t]}, \ell}{(T')^{[t]}}.
    \end{equation*}
    This gives an alternative expression for the left-hand side of~\Cref{eq:inner-summation-two-step}:
    \begin{equation*}
        \sum_{\ell = 1}^{t} \sum_{T'}|\braket*{T^{\lambda}_{k \to i}, \ell}{T'}|^2
        = \sum_{\ell = 1}^{t} \sum_{T''}|\braket*{(T^{\lambda}_{k \to i})^{[t]}, \ell}{T''}|^2.
    \end{equation*}
    The variable $T''$ on the right-hand side is iterating over the SSYTs of shape $(\lambda_{ij})_{\leq t}$, and thus has height at most $t$.
    From~\Cref{thm:normal-to-dual-cg}, we can rewrite the above coefficient using the dual Clebsch-Gordan coefficients:
    \begin{equation*}
        \sum_{\ell = 1}^{t} \sum_{T''}|\braket*{(T^{\lambda}_{k \to i})^{[t]}, \ell}{T''}|^2
        = \frac{\dim(V^{t}_{\lambda_{ij}})}{\dim(V^{t}_{\lambda+e_i})} \sum_{\ell = 1}^{t} \sum_{T''}|\braket*{(T^{\lambda}_{k \to i})^{[t]}}{T'', \overline{\ell}}|^2.
    \end{equation*}
    We observe that since $\ell$ iterates over $[t]$, then $\overline{\ell}$ iterates over all basis elements of the space $V^{t}_{\minusBox}$. Hence
    \begin{align*}
        \frac{\dim(V^{t}_{\lambda_{ij}})}{\dim(V^{t}_{\lambda+e_i})} \sum_{\ell = 1}^{t} \sum_{T''}|\braket*{(T^{\lambda}_{k \to i})^{[t]}}{T'', \overline{\ell}}|^2 ={}& D^{\lambda+e_i}_{t \to j} \sum_{\ell = 1}^{t} \sum_{T''} \braket*{(T^{\lambda}_{k \to i})^{[t]}}{T'', \overline{\ell}} \cdot \braket*{T'', \overline{\ell}}{(T^{\lambda}_{k \to i})^{[t]}} \tag{\Cref{not:D_symbols}} \\
        ={}&D^{\lambda+e_i}_{t \to j} \bra*{(T^{\lambda}_{k \to i})^{[t]}} \Big( \sum_{\ell=1}^t \sum_{T''} \ketbra*{T'', \overline{\ell}} \Big) \ket*{(T^{\lambda}_{k \to i})^{[t]}} \\
        ={}&D^{\lambda+e_i}_{t \to j} \cdot \bra*{(T^{\lambda}_{k \to i})^{[t]}} \big( I_{V^t_{\lambda_{ij}}} \otimes I_{V^t_{\minusBox}}\big) \ket*{(T^{\lambda}_{k \to i})^{[t]}} \\
        ={}&D^{\lambda+e_i}_{t \to j} \cdot \bra*{(T^{\lambda}_{k \to i})^{[t]}} \Big(\bigoplus_{\mu = \lambda_{ij} - \Box} I_{V^t_{\mu}}\Big) \ket*{(T^{\lambda}_{k \to i})^{[t]}} \tag{\Cref{eq:dCG_transform_branching_rule}} \\
        ={}& D^{\lambda+e_i}_{t \to j}.
    \end{align*}
    The last equation follows because the space $V^t_{\lambda+e_i}$ appears with unit multiplicity in the direct sum, and $\ket*{(T^{\lambda}_{k \to i})^{[t]}}$ is a vector of unit norm in this space. We conclude that when $s \leq t$:
    \begin{align*}
        \sum_{k=1}^s \sum_{\ell = 1}^t |c^{\lambda}_{k \ell \to ij}|^2
        &= \sum_{k=1}^s |c^{\lambda}_{k \to i}|^2 \sum_{\ell=1}^t  \sum_{T''} |\braket*{T^{\lambda}_{k \to i}, \ell}{T''}|^2 \\
        &= \sum_{k=1}^s |c^{\lambda}_{k \to i}|^2 \cdot D^{\lambda+e_i}_{t \to j} \tag{\Cref{eq:inner-summation-two-step}}\\
        &= D^{\lambda}_{s \to i} \cdot D^{\lambda+e_i}_{t \to j}. \tag{\Cref{cor:sum_of_CGs_1}}
    \end{align*}
    
    \paragraph{Step 2:} Next, we note that $F(s,t)$ is related to certain entries of a particular operator in the Schur basis. Specifically, consider the operator $M$ given by
    \begin{equation*}
        M_{st} \coloneq \sum_{\lambda} \dim(V_{\lambda}^d) \sum_{k, \ell=1}^d f_{st}(k, \ell) \int_U U^{\otimes (n+2)}\cdot \calU^\dagger \cdot \Big(\ketbra{\lambda} \otimes I_{\dim(\lambda)} \otimes \ketbra{T^{\lambda}} \otimes \ketbra{k} \otimes \ketbra{\ell}\Big)\cdot \calU \cdot U^{\dagger, \otimes (n+2)} \cdot \dU,
    \end{equation*}
    for $f_{st}$ given by
    \begin{equation*}
        f_{st}(k,\ell) \coloneq \begin{cases}
            1 & k \leq s, \text{ and } \ell \leq t, \\
            0 & \text{otherwise.}
        \end{cases}
    \end{equation*}
    Previous results, \Cref{lem:expressions-are-block-diag,lem:expressions-are-lin-comb-I-SWAP}, have characterized such operators in the Schur basis. Specifically, $M_{st}$ is block-diagonal in the blocks of $\swap$. Fix $S$ and a pair of indices $\{i,j\}$ with $i \leq j$. By \Cref{lem:expressions-are-block-diag} and linearity, the $i \leq j$ entry is given by:
    \begin{equation}
        (M_{st})^S_{ij,ij} = \frac{\dim(V^d_\lambda)}{\dim(V^d_{\lambda_{ij}})} \sum_{k =1}^s \sum_{\ell = 1}^t \big(|a^{\lambda}_{k\ell \to ij}|^2+|b^{\lambda}_{k\ell \to ij}|^2\big) = \frac{\dim(V^d_\lambda)}{\dim(V^d_{\lambda_{ij}})} \cdot F(s,t). \label{eq:partial_sums_dummy_21}
    \end{equation}
    The other diagonal entry, if it exists, is given by swapping $i$ and $j$ in the above expression (noting that $F$ itself generally also has implicit dependence on $i$ and $j$). The off-diagonal entries, if they exist, are given by
    \begin{equation}
        (M_{st})^S_{ij,ji} = \frac{\dim(V^{d}_{\lambda})}{\dim(V^{d}_{\lambda_{ij}})} \sum_{k=1}^s \sum_{\ell=1}^t \big(a^{\lambda}_{k\ell \to ij} b^{\lambda}_{k\ell \to ji} + a^{\lambda}_{k\ell \to ji} b^{\lambda}_{k\ell \to ij} + |a^{\lambda}_{kk \to ij}|^{2} \cdot \delta_{k\ell}\big). \label{eq:partial_sums_off_diag}
    \end{equation}
    
    We will now restrict to the case we already understand from the previous step: $s \leq t$. We have two cases:

    \begin{enumerate}
        \item[(i)] In the non-swappable case, \Cref{eq:partial_sums_dummy_21} is the sole entry of the $1 \times 1$ matrix, and, by the previous step, is equal to
    \begin{equation*}
        (M_{st})^S_{ij,ij} = \frac{\dim(V^d_\lambda)}{\dim(V^d_{\lambda_{ij}})} \cdot D_{s \to i}^{\lambda} \cdot D_{t \to j}^{\lambda+e_i}. \label{eq:partial_sums_dummy_22}
    \end{equation*}

        \item[(ii)] In the swappable case, we have a $2 \times 2$ matrix. The diagonal entries are likewise given by 
        \begin{equation}
        (M_{st})^S_{ij,ij} = \frac{\dim(V^d_\lambda)}{\dim(V^d_{\lambda_{ij}})} \cdot D_{s \to i}^{\lambda} \cdot D_{t \to j}^{\lambda+e_i}, \qquad (M_{st})^S_{ji,ji} = \frac{\dim(V^d_\lambda)}{\dim(V^d_{\lambda_{ij}})} \cdot D_{s \to j}^{\lambda} \cdot D_{t \to i}^{\lambda+e_j}. \label{eq:partial_sums_dummy_23}
    \end{equation}
        The off-diagonal entries, we claim, are zero. 

        We show this first for the case when $s = t$. In this case, $f_{st}$ is symmetric, and hence $M_{st}$ commutes with $\swap$. Thus $(M_{st})^S_{\{i,j\}} = x \cdot I + y \cdot \swap^{S}_{\{i,j\}}$, for some coefficients $x, y \in \C$. However, when $s = t$, we also have that the two diagonal entries given in \Cref{eq:partial_sums_dummy_23} are equal, and hence $y = 0$. Thus, the off-diagonal entries are zero in this case.

        We now extend this to the case $s < t$. Abbreviate
        \begin{equation*}
            g(k,\ell) \coloneq a^{\lambda}_{k \ell \to ij} b^{\lambda}_{k\ell \to ji} + a^{\lambda}_{k\ell \to ji} b^{\lambda}_{k\ell \to ij} + |a^{\lambda}_{kk \to ij}|^{2} \cdot \delta_{k\ell}.
        \end{equation*}
        Note that $g(k,\ell) = 0$ for $k < \ell$, because $\delta_{k\ell} = 0$, and because no bumping can occur when CG inserting $\ell$ after $k$, so that $b^{\lambda}_{k\ell \to ij} = b^{\lambda}_{k\ell \to ji} = 0$. From the $s = t$ case, we also know that, for all $s$,
        \begin{equation*}
            \sum_{k=1}^s \sum_{\ell = 1}^s g(k,\ell) = 0,
        \end{equation*}
        since this sum is proportional to the off-diagonal entries, by \Cref{eq:partial_sums_off_diag}.
        Thus, the off-diagonal entries are
        \begin{equation*}
            (M_{st})^S_{ij,ji} = \frac{\dim(V^d_\lambda)}{\dim(V^d_{\lambda_{ij}})} \sum_{k=1}^s \sum_{\ell = 1}^t g(k,\ell) = \frac{\dim(V^d_\lambda)}{\dim(V^d_{\lambda_{ij}})} \Big( \sum_{k=1}^s \sum_{\ell = 1}^s g(k,\ell) + \sum_{k=1}^s \sum_{\ell = s+1}^{t} g(k,\ell) \Big) = 0,
        \end{equation*}
        since the first sum vanishes, and each entry of the second sum vanishes. Thus, the off-diagonal entries are zero for $s < t$ too. 

        Thus, in the case where $s \leq t$ and $i$ and $j$ are swappable, we have
        \begin{equation}
            (M_{st})^S_{\{i,j\}} = \frac{\dim(V^d_\lambda)}{\dim(V^d_{\lambda_{ij}})} \begin{pmatrix} D^\lambda_{s\to i} \cdot D^{\lambda+e_i}_{t \to j} & 0 \\ 0 & D^\lambda_{s\to j} \cdot D^{\lambda+e_j}_{t \to i} \end{pmatrix}. \label{eq:partial_sums_dummy_25}
        \end{equation}
    \end{enumerate}
    
    \paragraph{Step 3:} In the last step, we show the result for $s \geq t$. We begin by noting that since
    \begin{equation*}
        \swap \cdot \Big( \sum_{k = 1}^t \sum_{\ell = 1}^s \ketbra{k} \otimes \ketbra{\ell}\Big) \cdot \swap = \sum_{k = 1}^t \sum_{\ell = 1}^s \ketbra{\ell} \otimes \ketbra{k} = \sum_{k = 1}^s \sum_{\ell = 1}^t \ketbra{k} \otimes \ketbra{\ell},
    \end{equation*}
    we have 
        $M_{st} = \swap \cdot M_{ts} \cdot \swap$.
    Thus, in the block determined by $S$ and $\{i,j\}$, we have the identity:
    \begin{equation}
        (M_{st})^S_{\{i,j\}} = \swap^S_{\{i,j\}} \cdot (M_{ts})^S_{\{i,j\}} \cdot \swap^S_{\{i,j\}}. \label{eq:partial_sums_dummy_31}
    \end{equation}
    Now, take $t \leq s$. We consider two cases:
    \begin{enumerate}
        \item[(i)] In the non-swappable case, the matrices in \Cref{eq:partial_sums_dummy_31} are $1 \times 1$, and hence all commute. So,
        \begin{equation*}
            (M_{st})^S_{\{i,j\}}  = (M_{ts})^S_{\{i,j\}} \qquad \implies \qquad \frac{\dim(V^d_\lambda)}{\dim(V^d_{\lambda_{ij}})} \cdot F(s,t) = \frac{\dim(V^d_\lambda)}{\dim(V^d_{\lambda_{ij}})} \cdot F(t,s). \tag{\Cref{eq:partial_sums_dummy_21}}
        \end{equation*}
        Therefore,
        \begin{equation*}
            F(s,t) = F(t,s) = D^{\lambda}_{t \to i} \cdot D^{\lambda+e_i}_{s \to j} = D^{\lambda}_{t \to j} \cdot D^{\lambda+e_j}_{s \to i} + \frac{1}{\Delta_{ji}^2} \big( D^{\lambda}_{t \to i} \cdot D^{\lambda+e_i}_{s \to j} - D^{\lambda}_{t \to j} \cdot D^{\lambda+e_j}_{s \to i} \big),
        \end{equation*}
        where the second step holds since $t \leq s$, and the third step holds since $\Delta_{ji}^2 = 1$. 
        \item[(ii)] In the swappable case, we can explicitly compute the right-hand side of \Cref{eq:partial_sums_dummy_31} in the $s \geq t$ case using \Cref{eq:partial_sums_dummy_25}. We get:
        \begin{align*}
            & (M_{st})^S_{\{i,j\}} \\
            & = \frac{\dim(V^d_\lambda)}{\dim(V^d_{\lambda_{ij}})} \begin{pmatrix} \frac{1}{\Delta_{ji}} & \sqrt{1 - \frac{1}{\Delta_{ji}^2}} \\ \sqrt{1 - \frac{1}{\Delta_{ji}^2}} & - \frac{1}{\Delta_{ji}} \end{pmatrix} \cdot  \begin{pmatrix} D^\lambda_{t\to i} \cdot D^{\lambda+e_i}_{s \to j} & 0 \\ 0 & D^\lambda_{t\to j} \cdot D^{\lambda+e_j}_{s \to i} \end{pmatrix}\cdot \begin{pmatrix} \frac{1}{\Delta_{ji}} & \sqrt{1 - \frac{1}{\Delta_{ji}^2}} \\ \sqrt{1 - \frac{1}{\Delta_{ji}^2}} & - \frac{1}{\Delta_{ji}} \end{pmatrix}\\
            & = \frac{\dim(V^d_\lambda)}{\dim(V^d_{\lambda_{ij}})} \begin{pmatrix} D^{\lambda}_{t \to j} \cdot D^{\lambda+e_j}_{s \to i} + \frac{1}{\Delta_{ji}^2} \big(D^{\lambda}_{t \to i} \cdot D^{\lambda+e_i}_{s \to j} - D^{\lambda}_{t \to j} \cdot D^{\lambda+e_j}_{s \to i} \big)  & * \\ * & * \end{pmatrix},
        \end{align*}
        where the asterisks represent entries we will not need. Finally, comparing this computation with \Cref{eq:partial_sums_dummy_21} gives:
        \begin{equation*}
            F(s,t) = D^{\lambda}_{t \to j} \cdot D^{\lambda+e_j}_{s \to i} + \frac{1}{\Delta_{ji}^2} \big(D^{\lambda}_{t \to i} \cdot D^{\lambda+e_i}_{s \to j} - D^{\lambda}_{t \to j} \cdot D^{\lambda+e_j}_{s \to i} \big),
        \end{equation*}
        as claimed. \qedhere
    \end{enumerate}
\end{proof}

There are many useful corollaries of \Cref{thm:partial_sums_complete}. We use the following corollaries in our proof of the Diagonal Expression Lemma, \Cref{lem:master-equation} below, which itself is important for proving the Main Lemma, \Cref{lem:m-avg2-plus-corr-is-xn1xn2}.




\begin{corollary}
    $M^{(2)}_{\mathrm{corr}}$ is diagonal in the Schur basis. 
\end{corollary}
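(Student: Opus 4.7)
The plan is to combine the decomposition of $M^{(2)}_{\mathrm{corr}}$ from \Cref{lem:decomposition_of_Mcorr} with the observation, already essentially contained in the proof of \Cref{thm:partial_sums_complete}, that the ``square'' partial sum operators $M_{ss}$ are diagonal in the Schur basis.

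First I will observe that \Cref{lem:decomposition_of_Mcorr} lets us write
\begin{equation*}
    M_{\mathrm{corr}}^{(2)} = \frac{1}{n^{2}}\sum_{\lambda}\sum_{j=1}^{d} m^{\lambda}_{j} \cdot M_{jj}^{(\lambda)} \;-\; \frac{1}{n^{2}}\sum_{\lambda}\ell(\lambda)\cdot \Pi_{\lambda}\otimes I\otimes I,
\end{equation*}
where $M_{jj}^{(\lambda)}$ is the operator defined in step~2 of the proof of \Cref{thm:partial_sums_complete} with $s=t=j$, restricted to the $\lambda$-sector. The second term is manifestly diagonal (in fact central) in the Schur basis, since $\Pi_{\lambda}$ is a projector onto a Schur block and $I\otimes I$ acts trivially on the additional two qudits. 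So it suffices to show that each $M_{jj}^{(\lambda)}$ is diagonal in the Schur basis.

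Next I would invoke \Cref{lem:expressions-are-block-diag,lem:expressions-are-lin-comb-I-SWAP} applied to $f_{jj}(k,\ell)=\mathbf 1[k\le j]\mathbf 1[\ell\le j]$. Since this function is symmetric in $k$ and $\ell$, each submatrix $(M_{jj})^{S}_{\{i,i'\}}$ in a swappable block has the form $x\cdot I+y\cdot\swap^{S}_{\{i,i'\}}$, so the only way it fails to be diagonal is if $y\neq 0$. The diagonal entries of $\swap^{S}_{\{i,i'\}}$ are $\pm 1/\Delta_{i'i}$, so $y=0$ is equivalent to the two diagonal entries $(M_{jj})^{S}_{ii',ii'}$ and $(M_{jj})^{S}_{i'i,i'i}$ being equal. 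By \Cref{lem:expressions-are-block-diag}, this reduces to the identity
\begin{equation*}
    F_{ii'}(j,j) = F_{i'i}(j,j),
\end{equation*}
where I have momentarily made the dependence of $F$ on the row pair explicit. This equality was already established in step~2(ii) of the proof of \Cref{thm:partial_sums_complete}: for $s=t$ the corresponding operator commutes with $\swap$ and the formula from step~1 gives the same value, $D^{\lambda}_{j\to i}\cdot D^{\lambda+e_i}_{j\to i'} = D^{\lambda}_{j\to ii'}$, regardless of the order of $i,i'$ (cf.\ \Cref{not:D_symbols_two-step}). Hence $y=0$ and each $M_{jj}^{(\lambda)}$ is diagonal.

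Putting the two pieces together, $M^{(2)}_{\mathrm{corr}}$ is a linear combination of Schur-basis-diagonal operators and is therefore itself diagonal in the Schur basis. I do not anticipate any real obstacle here: essentially the only nontrivial input is the $s=t$ case of \Cref{thm:partial_sums_complete}, which has already been proved, and the bookkeeping provided by \Cref{lem:decomposition_of_Mcorr}. The only thing I would be careful about is non-swappable blocks, but these are $1\times 1$ and automatically diagonal, so they require no additional argument.
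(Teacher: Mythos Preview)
Your proposal is correct. The approach differs slightly from the paper's, though both hinge on the same underlying fact, namely that $F(s,s)=D^{\lambda}_{s\to ij}$ is symmetric under $i\leftrightarrow j$.

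The paper does not pass through \Cref{lem:decomposition_of_Mcorr}. Instead it works directly with the $\lambda^{\uparrow}_{\max(k,\ell)}$-weighted sum: using summation by parts it rewrites
\[
\sum_{k,\ell}\lambda^{\uparrow}_{\max(k,\ell)}\,|c^{\lambda}_{k\ell\to ij}|^{2}
=\sum_{k}\lambda^{\uparrow}_{k}\bigl(D^{\lambda}_{k\to ij}-D^{\lambda}_{k-1\to ij}\bigr),
\]
which is manifestly symmetric in $i$ and $j$ by \Cref{not:D_symbols_two-step}, and then concludes via \Cref{lem:expressions-are-lin-comb-I-SWAP}. Your route instead uses \Cref{lem:decomposition_of_Mcorr} to express $M^{(2)}_{\mathrm{corr}}$ as a nonnegative combination of the square operators $M_{jj}$ plus a central piece, and then quotes the $s=t$ case of step~2(ii) in the proof of \Cref{thm:partial_sums_complete} to get that each $M_{jj}$ is diagonal. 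Your version is more modular and reuses existing statements wholesale; the paper's is a one-line self-contained computation. Either is fine.
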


\begin{proof}
    Fix a $2 \times 2$ block indexed by $S$ and $\{i,j\}$, with $i < j$. From \Cref{lem:expressions-are-lin-comb-I-SWAP}, it suffices to show the two diagonal entries of $M^{(2)}_{\mathrm{corr}}$ in this block are equal. We start by considering the following sum:
    \begin{align*}
        \sum_{k = 1}^d \sum_{\ell=1}^d \lambda^{\uparrow}_{\max(k,\ell)} \cdot |c^{\lambda}_{k\ell \to ij}|^2  & = \sum_{k=1}^d \lambda^{\uparrow}_k \cdot \Big( \sum_{k',\ell=1}^k |c^{\lambda}_{k'\ell \to ij}|^2 - \sum_{k',\ell=1}^{k-1} |c^{\lambda}_{k'\ell \to ij}|^2 \Big) \\
        & = \sum_{k=1}^{d} \lambda^{\uparrow}_k \cdot \Big(D^{\lambda}_{k \to i} \cdot D^{\lambda+e_i}_{k \to j} -D^{\lambda}_{k-1 \to i} \cdot D^{\lambda+e_i}_{k-1 \to j}\Big) \tag{\Cref{thm:partial_sums_complete}}\\
        & = \sum_{k=1}^{d} \lambda^{\uparrow}_k \cdot \Big(D^{\lambda}_{k \to ij} - D^{\lambda}_{k-1 \to ij}\Big). \tag{\Cref{not:D_symbols_two-step}}
    \end{align*}
    Note that the remaining sum is symmetric in exchanging $i \leftrightarrow j$, by~\Cref{not:D_symbols_two-step}. That is, 
    \begin{align*}
        \sum_{k = 1}^d \sum_{\ell=1}^d \lambda^{\uparrow}_{\max(k,\ell)} \cdot |c^{\lambda}_{k\ell \to ij}|^2 & = \sum_{k=1}^{d} \lambda^{\uparrow}_k \cdot \Big(D^{\lambda}_{k \to ij} - D^{\lambda}_{k-1 \to ij}\Big) \\
        & = \sum_{k=1}^{d} \lambda^{\uparrow}_k \cdot \Big(D^{\lambda}_{k \to ji} - D^{\lambda}_{k-1 \to ji}\Big) =  \sum_{k = 1}^d \sum_{\ell=1}^d \lambda^{\uparrow}_{\max(k,\ell)} \cdot |c^{\lambda}_{k\ell \to ji}|^2.
    \end{align*}
    Finally, from the expression for the diagonal entries given in \Cref{lem:expressions-are-block-diag},
    \begin{equation*}
     M^{S}_{ij,ij}  = \frac{1}{n} \cdot \frac{\dim(V^d_\lambda)}{\dim(V^d_{\lambda_{ij}})}\sum_{k,\ell=1}^d \lambda^{\uparrow}_{\max(k,\ell)} \cdot |c^{\lambda}_{k\ell \to ij}|^2 = \frac{1}{n} \cdot \frac{\dim(V^d_\lambda)}{\dim(V^d_{\lambda_{ji}})}\sum_{k,\ell=1}^d \lambda^{\uparrow}_{\max(k,\ell)} \cdot |c^{\lambda}_{k\ell \to ji}|^2 = M^S_{ji,ji}. \qedhere
     \end{equation*}
\end{proof}

\begin{corollary} \label{cor:lambda_k_and_lambda_ell_sums}
    We have the following equations:
    \begin{enumerate}
        \item[(i)] $\sum_{k,\ell=1}^d \lambda_k^{\uparrow \uparrow} \cdot |c^{\lambda}_{k\ell \to ij}|^2 = (C^\lambda_{i} + 1) \cdot D^\lambda_{d \to ij}$. 
        \item[(ii)] $\sum_{k,\ell=1}^d \big( \lambda^{\uparrow \uparrow}_{\ell} + \frac{1}{\Delta_{ji}} \big) \cdot |c^{\lambda}_{k\ell \to ij}|^2 = (C^{\lambda+e_i}_{j} + 1) \cdot D^\lambda_{d \to ij}$. 
    \end{enumerate}
\end{corollary}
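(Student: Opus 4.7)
\medskip

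\noindent\textbf{Proof proposal.} Both parts follow by reducing the double sum to a one-dimensional sum of one-step Clebsch-Gordan coefficients, where we can then apply the closed-form evaluation from~\Cref{lem:the_combinatorial_identity_staircase}. The key inputs are the partial sum formula of~\Cref{thm:partial_sums_complete} and the content identity $\Delta_{ji} = C^{\lambda+e_i}_j - C^\lambda_i$ (which equals $C^\lambda_j - C^\lambda_i$ when $j \ne i$, and equals $1$ in the horizontal case $j=i$).

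For part (i), I would fix $k$ and compute the inner sum $\sum_{\ell=1}^d |c^\lambda_{k\ell\to ij}|^2 = F(k,d) - F(k-1,d)$. Since $k \le d$, the $s \le t$ branch of~\Cref{thm:partial_sums_complete} gives $F(k,d) = D^\lambda_{k\to i}\cdot D^{\lambda+e_i}_{d\to j}$, so the inner sum equals $(D^\lambda_{k\to i} - D^\lambda_{k-1\to i})\cdot D^{\lambda+e_i}_{d\to j} = |c^\lambda_{k\to i}|^2 \cdot D^{\lambda+e_i}_{d\to j}$ by~\Cref{lem:clebsch-gordan-to-truncated-tableau}. Summing over $k$ with the weight $\lambda^{\uparrow\uparrow}_k$ and applying~\Cref{lem:the_combinatorial_identity_staircase} yields $(C^\lambda_i + 1) \cdot D^\lambda_{d\to i} \cdot D^{\lambda+e_i}_{d\to j} = (C^\lambda_i+1)\cdot D^\lambda_{d\to ij}$, as desired.

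For part (ii), I now fix $\ell$ and compute $\sum_{k=1}^d |c^\lambda_{k\ell\to ij}|^2 = F(d,\ell) - F(d,\ell-1)$, which now falls under the $s \ge t$ branch and hence picks up both $|c^\lambda_{\ell\to j}|^2$ and $|c^\lambda_{\ell\to i}|^2$ terms, weighted by $D^{\lambda+e_j}_{d\to i}$, $D^{\lambda+e_i}_{d\to j}$, and $1/\Delta_{ji}^2$. Multiplying by $\lambda^{\uparrow\uparrow}_\ell$, summing over $\ell$, invoking~\Cref{lem:the_combinatorial_identity_staircase} for each one-step sum, and using $D^{\lambda+e_i}_{d\to j}\cdot D^\lambda_{d\to i} = D^{\lambda+e_j}_{d\to i}\cdot D^\lambda_{d\to j} = D^\lambda_{d\to ij}$ collapses the expression to
\begin{equation*}
D^\lambda_{d\to ij}\cdot\Big[(C^\lambda_j+1) + \tfrac{C^\lambda_i - C^\lambda_j}{\Delta_{ji}^2}\Big].
\end{equation*}
Finally, the extra $\frac{1}{\Delta_{ji}}$ term on the left-hand side of (ii) contributes $\frac{1}{\Delta_{ji}}\cdot D^\lambda_{d\to ij}$ (using $F(d,d) = D^\lambda_{d\to ij}$), so I need to verify that
\begin{equation*}
(C^\lambda_j+1) + \tfrac{C^\lambda_i - C^\lambda_j}{\Delta_{ji}^2} + \tfrac{1}{\Delta_{ji}} = C^{\lambda+e_i}_j + 1.
\end{equation*}
This is a short case check: when $i \ne j$, the identity $\Delta_{ji} = C^\lambda_j - C^\lambda_i$ makes the correction terms cancel, leaving $C^\lambda_j + 1 = C^{\lambda+e_i}_j + 1$; when $i=j$ (the horizontal case, with $\Delta_{ii}=1$), the middle term vanishes and $\frac{1}{\Delta_{ii}} = 1$ exactly accounts for $C^{\lambda+e_i}_i = C^\lambda_i + 1$.

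I do not anticipate any real obstacle: once~\Cref{thm:partial_sums_complete} and~\Cref{lem:the_combinatorial_identity_staircase} are in hand, both identities reduce to bookkeeping. The only subtlety is that part (ii) sums over $\ell$ (the second inserted letter), which forces use of the $s \ge t$ branch and introduces the $1/\Delta_{ji}^2$ term; this term is precisely what the $1/\Delta_{ji}$ offset in the statement is designed to absorb, and recognizing this algebraic identity via the content formula $\Delta_{ji} = C^{\lambda+e_i}_j - C^\lambda_i$ is the one conceptual step in the argument.
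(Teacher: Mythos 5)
Your proof is correct and follows essentially the same route as the paper: for both parts you telescope the inner sum via $F(k,d)-F(k-1,d)$ (resp.\ $F(d,\ell)-F(d,\ell-1)$), invoke \Cref{thm:partial_sums_complete} in the appropriate branch, reduce to one-step coefficients via \Cref{lem:clebsch-gordan-to-truncated-tableau}, apply \Cref{lem:the_combinatorial_identity_staircase}, and close with the same case split on $i=j$ versus $i\neq j$ using $\Delta_{ji}=C^{\lambda+e_i}_j-C^\lambda_i$. The only cosmetic difference is that you regroup the final algebra as $(C^\lambda_j+1)+\tfrac{C^\lambda_i-C^\lambda_j}{\Delta_{ji}^2}+\tfrac{1}{\Delta_{ji}}$, whereas the paper keeps it as $(1-\tfrac{1}{\Delta_{ji}^2})(C^\lambda_j+1)+\tfrac{1}{\Delta_{ji}^2}(C^{\lambda+e_i}_j+1)$; these are the same identity verified by the same two cases.
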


\begin{proof}
    Our general strategy for showing both equations will be to rewrite it in terms of partial sums of two-step CG coefficients, and then to use \Cref{thm:partial_sums_complete} and \Cref{lem:the_combinatorial_identity_staircase}. We start with the first equation. We have:
    \begin{align*}
        \sum_{k = 1}^d \sum_{\ell=1}^d \lambda_k^{\uparrow \uparrow} \cdot |c^{\lambda}_{k\ell \to ij}|^2 & = \sum_{k=1}^d \lambda^{\uparrow \uparrow}_{k} \cdot \Big(\sum_{k'=1}^k \sum_{\ell=1}^d |c^{\lambda}_{k'\ell \to ij}|^2 - \sum_{k'=1}^{k-1} \sum_{\ell=1}^d |c^{\lambda}_{k'\ell \to ij}|^2 \Big) \\
        & = \sum_{k=1}^d \lambda^{\uparrow \uparrow}_k \cdot \big( D^{\lambda}_{k \to i} \cdot D^{\lambda+e_i}_{d \to j} - D^{\lambda}_{k-1 \to i} \cdot D^{\lambda+e_i}_{d \to j} \big) \tag{\Cref{thm:partial_sums_complete}}\\
        & = \Big(\sum_{k=1}^d \lambda_k^{\uparrow \uparrow} \cdot \big( D^{\lambda}_{k \to i} - D^{\lambda}_{k-1 \to i}\big) \Big) \cdot D^{\lambda+e_i}_{d \to j} \\
        & = \Big(\sum_{k=1}^d \lambda_k^{\uparrow \uparrow} \cdot |c^\lambda_{k \to i}|^2 \Big) \cdot D^{\lambda+e_i}_{d \to j} \tag{\Cref{lem:clebsch-gordan-to-truncated-tableau}}\\
        & = (C^\lambda_i+1) \cdot D^{\lambda}_{d \to i} \cdot D^{\lambda+e_i}_{d \to j} \tag{\Cref{lem:the_combinatorial_identity_staircase}} \\
        & = (C^\lambda_i+1) \cdot D^{\lambda}_{d \to ij}. \tag{\Cref{not:D_symbols_two-step}}
    \end{align*}
    This proves the first equation. 
    
    The second is similar, but slightly more complicated, since when we appeal to \Cref{thm:partial_sums_complete}, we will be using the more complicated case for when $s \geq t$. We have:
    \begin{align*}
        & \sum_{k =1}^d \sum_{ \ell=1}^d \lambda_{\ell}^{\uparrow \uparrow} \cdot |c^{\lambda}_{k \ell \to ij}|^2 \\
        & = \sum_{\ell=1}^d \lambda_{\ell}^{\uparrow \uparrow} \cdot \Big( \sum_{k=1}^d \sum_{\ell' = 1}^\ell |c^{\lambda}_{k\ell' \to ij}|^2 - \sum_{k=1}^d \sum_{\ell'=1}^{\ell-1} |c^{\lambda}_{k\ell' \to ij}|^2 \Big) \\
        & = \sum_{\ell=1}^d \lambda^{\uparrow\uparrow}_{\ell} \cdot \Big(\big(1 - \frac{1}{\Delta_{ji}^2}\big) (D^{\lambda}_{\ell \to j} - D^{\lambda}_{\ell-1 \to j}) \cdot D^{\lambda+e_j}_{d \to i} + \frac{1}{\Delta_{ji}^2} \cdot (D^{\lambda}_{\ell \to i} - D^{\lambda}_{\ell-1 \to i}) \cdot D^{\lambda+e_i}_{d \to j} \Big) \tag{\Cref{thm:partial_sums_complete}}\\
        &  = \sum_{\ell=1}^d \lambda^{\uparrow\uparrow}_{\ell} \cdot \Big(\big(1 - \frac{1}{\Delta_{ji}^2}\big) \cdot |c^\lambda_{\ell \to j}|^2 \cdot D^{\lambda+e_j}_{d \to i} + \frac{1}{\Delta_{ji}^2} \cdot |c^{\lambda}_{\ell \to i}|^2 \cdot D^{\lambda+e_i}_{d \to j} \Big) \tag{\Cref{lem:clebsch-gordan-to-truncated-tableau}} \\
        & = (C^\lambda_j + 1) \cdot \big(1 - \frac{1}{\Delta_{ji}^2}\big) \cdot D^\lambda_{d \to j} \cdot D^{\lambda+e_j}_{d \to i} + (C^\lambda_i+1) \cdot \frac{1}{\Delta_{ji}^2} \cdot D^\lambda_{d \to i}  \cdot D^{\lambda+e_i}_{d \to j} \tag{\Cref{lem:the_combinatorial_identity_staircase}} \\
        & = (C^\lambda_j + 1) \cdot \big(1 - \frac{1}{\Delta_{ji}^2}\big) \cdot D^\lambda_{d \to ij} + (C^\lambda_i+1) \cdot \frac{1}{\Delta_{ji}^2} \cdot D^\lambda_{d \to ij}. \tag{\Cref{not:D_symbols_two-step}}
    \end{align*}
    We also have:
    \begin{equation*}
        \sum_{k=1}^d \sum_{\ell =1}^d \frac{1}{\Delta_{ji}} \cdot |c^{\lambda}_{k \ell \to ij}|^2 = \frac{1}{\Delta_{ji}} \cdot D^{\lambda}_{d \to ij} = \frac{(C_j^{\lambda+e_i}+1) - (C_i^\lambda+1)}{\Delta_{ji}^2} \cdot D^{\lambda}_{d \to ij}.
    \end{equation*}
    Combining these gives:
    \begin{equation*}
        \sum_{k=1}^d \sum_{\ell=1}^d \big( \lambda^{\uparrow \uparrow}_{\ell} + \frac{1}{\Delta_{ji}}\big) \cdot |c^{\lambda}_{k\ell \to ij}|^2 = \Big( \big( 1 - \frac{1}{\Delta_{ji}^2} \big) \cdot (C^\lambda_j+1) + \frac{1}{\Delta_{ji}^2} \cdot (C^{\lambda+e_i}_j+1) \Big) \cdot D^\lambda_{d \to ij}.
    \end{equation*}
    In the horizontal subcase, we have $\Delta_{ji}^2 = 1$. In the non-horizontal case, $C_j^{\lambda} = C_j^{\lambda+e_i}$. In both cases, the sum resolves to the second equation:
    \begin{equation*}
        \sum_{k=1}^d \sum_{\ell=1}^d \big( \lambda^{\uparrow \uparrow}_{\ell} + \frac{1}{\Delta_{ji}}\big) \cdot |c^{\lambda}_{k\ell \to ij}|^2 = (C_j^{\lambda+e_i}+1) \cdot D^{\lambda}_{d \to ij}. \qedhere
    \end{equation*}
\end{proof}

\begin{lemma} \label{lem:two_step_induction_boundary}
    We have the following equation:
    \begin{equation*}
        \sum_{\substack{k, \ell \\ \max(k, \ell) = d}} (\lambda^{\uparrow\uparrow}_k\lambda^{\uparrow\uparrow}_{\ell} + \frac{1}{\Delta_{ji}} \lambda^{\uparrow\uparrow}_d) \cdot |c^{\lambda}_{k\ell \to ij}|^2 = (C^{\lambda}_i + 1)(C^{\lambda+e_i}_j + 1) \cdot D^{\lambda}_{d \to ij} - (C^{\lambda}_i + 2)(C_j^{\lambda+e_i} + 2) \cdot D^{\lambda}_{d-1 \to ij}.
    \end{equation*}
\end{lemma}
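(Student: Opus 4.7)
The plan is to rewrite the boundary sum as a difference of two ``full'' sums and to compute each of them via summation by parts, leveraging the partial-sum formula \Cref{thm:partial_sums_complete} and the identities in \Cref{lem:clebsch-gordan-to-truncated-tableau,cor:sum_of_CGs_1,lem:the_combinatorial_identity_staircase,cor:lambda_k_and_lambda_ell_sums}. Concretely, I would first write
\begin{equation*}
\text{LHS} = \sum_{k,\ell=1}^d\! \bigl(L_k L_\ell + \tfrac{L_d}{\Delta_{ji}}\bigr) g(k,\ell) - \sum_{k,\ell=1}^{d-1}\! \bigl(L_k L_\ell + \tfrac{L_d}{\Delta_{ji}}\bigr) g(k,\ell),
\end{equation*}
where I abbreviate $L_k = \lambda^{\uparrow\uparrow}_k = \lambda_k + d + 1 - 2k$ and $g(k,\ell) = |c^\lambda_{k\ell\to ij}|^2$.

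For the first full sum, I would use Abel summation twice. Since $L_k - L_{k+1} = -(\delta_k + 2)$ with $\delta_k = \lambda_k - \lambda_{k+1}$, summation by parts applied first to $\ell$ and then to $k$ converts the double sum into expressions involving the partial sums $F(s,t) = \sum_{k' \leq s, \ell' \leq t} g(k',\ell')$, whose closed forms come from \Cref{thm:partial_sums_complete}. These formulas involve $D^\lambda_{s\to i}$, $D^{\lambda+e_i}_{t\to j}$, $D^\lambda_{s\to j}$, and $D^{\lambda+e_j}_{t\to i}$. The leftover one-variable sums $\sum_{k=1}^{d-1}(\delta_k+2) D^\lambda_{k\to i}$ then collapse to $(C^\lambda_i - C^\lambda_d) D^\lambda_{d\to i}$ via \Cref{lem:the_combinatorial_identity_staircase} (which gives $\sum_k L_k |c^\lambda_{k\to i}|^2 = (C_i+1)D^\lambda_{d\to i}$) together with the standard Abel rewriting of this identity.

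For the second sum (up to $d-1$) I would use the shift $L_k(d) = L_k(d-1) + 1$ to expand
\begin{equation*}
\sum_{k,\ell=1}^{d-1} L_k(d) L_\ell(d)\, g = \sum L_k(d-1)L_\ell(d-1)\, g + \sum L_k(d-1)\, g + \sum L_\ell(d-1)\, g + \sum g,
\end{equation*}
and handle the linear and constant pieces directly with \Cref{cor:lambda_k_and_lambda_ell_sums} at dimension $d-1$; the genuinely quadratic piece is accessible either by repeating the Abel argument above at level $d-1$, or by organizing the whole proof as an induction on $d$ whose hypothesis supplies exactly this quadratic term.

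The main obstacle I anticipate is the asymmetry in \Cref{thm:partial_sums_complete} between the cases $s \leq t$ and $s \geq t$: the $s \geq t$ formula contributes pieces weighted by $\tfrac{1}{\Delta_{ji}^2}$, and these must interact with the $\tfrac{L_d}{\Delta_{ji}}$ correction in the LHS to produce the clean coefficient $(X+1)(Y+1)$ on $D^\lambda_{d-1\to ij}$ in the RHS (in particular, with \emph{no} residual $\Delta_{ji}$ dependence). Making this cancellation work out should rely on two key facts: the identity $\Delta_{ji} = Y - X = C^{\lambda+e_i}_j - C^\lambda_i$, verifiable by inspection in each of the swappable, horizontal, and vertical block types, and the telescoping recursion $(C^\lambda_i - C^\lambda_k) D^\lambda_{k\to i} = (C^\lambda_i - C^\lambda_k + 1) D^\lambda_{k-1\to i}$ from \Cref{lem:clebsch-gordan-to-truncated-tableau}, which is precisely what is needed to eliminate intermediate $C^\lambda_d$ dependence and convert the output of the Abel-summation step into the symmetric $(X+1)(Y+1)$ form predicted by the right-hand side.
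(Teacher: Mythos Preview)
Your decomposition $\text{LHS}=A(d)-A(d-1)$ with $A(s)=\sum_{k,\ell\le s}(L_kL_\ell+L_d/\Delta_{ji})g(k,\ell)$ is correct, but the route you propose differs from the paper's and contains a gap in one branch.

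The paper's proof exploits a simplification you miss: since $\max(k,\ell)=d$ forces one index to equal $d$, every term $\lambda^{\uparrow\uparrow}_k\lambda^{\uparrow\uparrow}_\ell$ carries a factor $\lambda^{\uparrow\uparrow}_d=C^\lambda_d+1$. Factoring this out reduces the quadratic boundary sum to four linear pieces --- $\sum_k L_k|c^\lambda_{kd\to ij}|^2$, $\sum_\ell L_\ell|c^\lambda_{d\ell\to ij}|^2$, the single $(d,d)$ term, and $\sum_{\max=d}|c^\lambda_{k\ell\to ij}|^2$ --- each handled by a \emph{single} Abel summation against the partial-sum formula. The combination is then massaged into the claimed form by adding zero in the guise of the product identity $(C^\lambda_i-C^\lambda_d)(C^{\lambda+e_i}_j-C^{\lambda+e_i}_d)D^\lambda_{d\to ij}=(C^\lambda_i-C^\lambda_d+1)(C^{\lambda+e_i}_j-C^{\lambda+e_i}_d+1)D^\lambda_{d-1\to ij}$, which follows from applying \Cref{eq:relation_between_D's} to each factor of $D^\lambda_{d\to ij}=D^\lambda_{d\to i}D^{\lambda+e_i}_{d\to j}$.

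Your double-Abel option for $A(d)$ is workable in principle but requires tracking the $s\le t$ versus $s\ge t$ case split throughout a two-variable summation, which is exactly the complexity the paper's factorization avoids. Your induction option, however, has a genuine gap: the inductive hypothesis of \emph{this} lemma at level $d-1$ supplies the boundary sum $\sum_{\max(k,\ell)=d-1}(\cdots)$, not the full quadratic piece $\sum_{k,\ell\le d-1}L_k(d-1)L_\ell(d-1)g$ that you need. To obtain that full sum inductively you would have to strengthen the statement to one about full sums --- essentially proving \Cref{lem:master-equation-staircase} first --- but in the paper's logical order that lemma is the \emph{consumer} of this one, not its supplier. (Also, a minor sign: $L_k-L_{k+1}=\delta_k+2$, not $-(\delta_k+2)$.)
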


The proof of \Cref{lem:two_step_induction_boundary} is more involved, but requires no new ideas, and we defer it to \Cref{sec:deferred_bdy_proof}. We also prove a two-step analog to~\Cref{lem:CG_coeffs_on_block_equal}, which shows how the two-step Clebsch-Gordan coefficients $|c^{\lambda}_{k\ell \to ij}|^2$, regarded as a function of $k$ and $\ell$, are related with respect to the blocks of $\lambda$.













\usetikzlibrary{patterns}

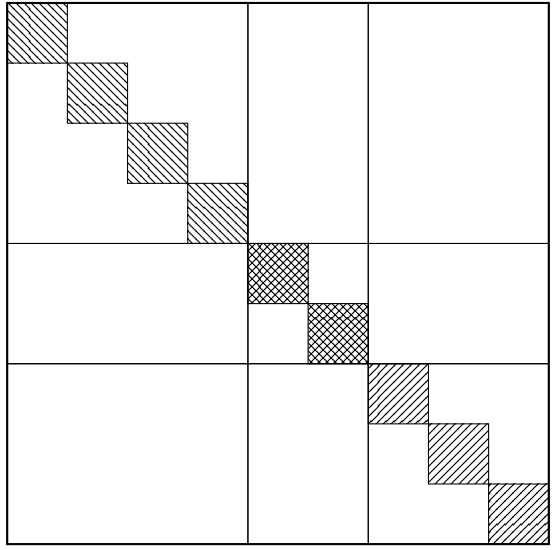
\begin{figure}
    \centering
    \begin{tikzpicture}[scale=0.8]
    
        \draw[thick] (0,0) rectangle (9,9);
    
    
        \draw (0,9) rectangle (4, 5); 
        
        \draw (4,9) rectangle (6, 5); 
    
        \draw (6,9) rectangle (9, 5); 
    
        %
    
        \draw (0,5) rectangle (4, 3); 
        
        \draw (4,5) rectangle (6, 3); 
    
        \draw (6,5) rectangle (9, 3); 
    
        %
    
        \draw (0,3) rectangle (4, 0); 
        
        \draw (4,3) rectangle (6, 0); 
    
        \draw (6,3) rectangle (9, 0); 

        \fill[gray!5, pattern=north west lines] (0,9) rectangle (1, 8);
        \draw (0,9) rectangle (1, 8); 
    
        \fill[gray!5, pattern=north west lines] (1,8) rectangle (2, 7);
        \draw (1,8) rectangle (2, 7); 
    
        \fill[gray!5, pattern=north west lines] (2,7) rectangle (3, 6);
        \draw (2,7) rectangle (3, 6); 
    
        \fill[gray!5, pattern=north west lines] (3,6) rectangle (4, 5);
        \draw (3,6) rectangle (4, 5); 
    
        \fill[pattern=crosshatch] (4,5) rectangle (5, 4);
        \draw (4,5) rectangle (5, 4); 
    
        \fill[pattern=crosshatch] (5,4) rectangle (6, 3);
        \draw (5,4) rectangle (6, 3); 
    
        \fill[gray!5, pattern=north east lines] (6,3) rectangle (7, 2);
        \draw (6,3) rectangle (7, 2); 
    
        \fill[gray!5, pattern=north east lines] (7,2) rectangle (8, 1);
        \draw (7,2) rectangle (8, 1); 
    
        \fill[gray!5, pattern=north east lines] (8,1) rectangle (9, 0);
        \draw (8,1) rectangle (9, 0); 
    
    \end{tikzpicture}

    \caption{An example of the relation of the two-step Clebsch-Gordan coefficients $|c^{\lambda}_{k\ell \to ij}|^2$ with respect to the blocks of $\lambda = (3, 3, 3, 2, 2, 1, 1, 1)$ for some fixed $i, j \in [d]$. The horizontal axis corresponds to the value of $k \in [9]$, and the vertical axis to the value of $\ell \in [9]$. The Young diagram has 3 blocks, consisting of the indices $\{1, 2, 3, 4\}$, $\{5, 6\}$, and $\{7, 8, 9\}$. The blocks define $9$ regions in which the two-step coefficients are equal. The only exceptions are the diagonal entries $|c^{\lambda}_{kk \to ij}|^2$ in a block, which are equal to a distinct constant.}
    \label{fig:two-step-coeff-block-equalities}
\end{figure}

\begin{lemma}
    \label{lem:two-step-equal-blocks}
    Let $\lambda$ be a Young diagram, and $i, j \in [d]$. The two-step Clebsch-Gordan coefficients $|c^{\lambda}_{k\ell \to ij}|^2$ satisfy:
    \begin{itemize}
        \item[(i)] Let $B$ be a block of $\lambda$. Then for each pair of $k, \ell \in B$, with $k \neq \ell$,
        \begin{equation*}
            |c^{\lambda}_{k\ell \to ij}|^2 = |c^{\lambda}_{BB \to ij}|^2.
        \end{equation*}
        In addition, when $k = \ell$,
        \begin{equation*}
            |c^{\lambda}_{kk \to ij}|^2 = \big(1 + \frac{1}{\Delta_{ji}}\big) \cdot |c^{\lambda}_{BB \to ij}|^2.
        \end{equation*}
        \item[(ii)] Let $B_1, B_2$ be two distinct blocks of $\lambda$. Then the two-step Clebsch-Gordan coefficient $|c^{\lambda}_{k\ell \to ij}|^2$ is equal for all $k \in B_1$ and $\ell \in B_2$, that is:
        \begin{equation*}
            |c^{\lambda}_{k\ell \to ij}|^2 = |c^{\lambda}_{B_1B_2 \to ij}|^2.
        \end{equation*}
    \end{itemize}
\end{lemma}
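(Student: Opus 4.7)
To prove the lemma, my plan is to derive closed-form expressions for $|c^{\lambda}_{k\ell \to ij}|^2$ via the partial sum identity
\[ |c^{\lambda}_{k\ell \to ij}|^2 = F(k,\ell) - F(k-1,\ell) - F(k,\ell-1) + F(k-1,\ell-1), \]
combined with the two-case formula for $F$ from~\Cref{thm:partial_sums_complete}. The result splits into three regimes. For $k < \ell$ all four partial sums lie in the $s \leq t$ case, and expanding the differences yields the clean factorization $|c^{\lambda}_{k\ell \to ij}|^2 = |c^{\lambda}_{k \to i}|^2 \cdot |c^{\lambda+e_i}_{\ell \to j}|^2$. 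For $k > \ell$ all four lie in the $s \geq t$ case, and one obtains the convex combination $(1 - 1/\Delta_{ji}^2)\,|c^{\lambda}_{\ell \to j}|^2\,|c^{\lambda+e_j}_{k \to i}|^2 + (1/\Delta_{ji}^2)\,|c^{\lambda}_{\ell \to i}|^2\,|c^{\lambda+e_i}_{k \to j}|^2$. For $k = \ell$, the four partial sums straddle the two regimes and produce a hybrid expression.

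For part (ii), with $k \in B_1$ and $\ell \in B_2$ in distinct blocks, I would relabel so $B_1$ lies strictly above $B_2$, forcing $k < \ell$, and apply the $k < \ell$ factorization. Block-constancy of one-step Clebsch--Gordan coefficients (\Cref{lem:CG_coeffs_on_block_equal}) applied to $\lambda$ makes the first factor constant as $k$ ranges over $B_1$; applied to $\lambda+e_i$ it makes the second factor constant as $\ell$ ranges over $B_2$, once one checks that $B_2$ remains a block of $\lambda+e_i$. The latter follows because $c^{\lambda}_{k \to i}$ is nonzero only for $i \leq k$, so $i \leq k < \min B_2$ and the row-$i$ modification does not touch $B_2$.

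For part (i) with $k \neq \ell$, both regimes are active inside the single block $B$, and the key step is to match the two constants they produce. After applying block-constancy within $\lambda$, $\lambda+e_i$, and $\lambda+e_j$, this matching reduces to the identity $|c^{\lambda}_{k \to i}|^2\,|c^{\lambda+e_i}_{\ell \to j}|^2 = |c^{\lambda}_{\ell \to j}|^2\,|c^{\lambda+e_j}_{k \to i}|^2$ evaluated on $B$, which forces the two products inside the convex combination to coincide with the $k < \ell$ value. This identity follows from the symmetry $D^{\lambda}_{s \to ij} = D^{\lambda}_{s \to ji}$ noted in~\Cref{not:D_symbols_two-step} together with the recurrence of~\Cref{eq:relation_between_D's} and the fact that contents $C^{\lambda}_s$ decrease by exactly one per row within $B$. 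The diagonal case $k = \ell \in B$ is handled by evaluating the hybrid expression directly; after using the same recurrence and the $ij \leftrightarrow ji$ symmetry to collapse terms, the four partial sums combine into $(1 + 1/\Delta_{ji})$ times the off-diagonal value $|c^{\lambda}_{BB \to ij}|^2$. The main obstacle is book-keeping the block structure of $\lambda+e_i$ and $\lambda+e_j$ when $i$ or $j$ itself lies in $B$ (forcing $i = \min B$ or $j = \min B$ and splitting $B$ into two sub-blocks), which splinters the argument into a few cases but leaves the compact final formulas unchanged.
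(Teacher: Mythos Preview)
Your approach is essentially the one the paper takes: express $|c^{\lambda}_{k\ell \to ij}|^2$ as a second-order discrete difference of $F(s,t)$, use \Cref{thm:partial_sums_complete} to obtain the clean product $|c^{\lambda}_{k \to i}|^2\cdot|c^{\lambda+e_i}_{\ell \to j}|^2$ when $k<\ell$ and the convex combination when $k>\ell$, and then invoke block-constancy of one-step coefficients. The paper organizes this into four sub-lemmas (\Cref{lem:two-step-blocks-unequal,lem:two-step-blocks-around-diag,lem:two-steps-block-avg-diag,lem:two-steps-block-avg-diag-top-of-block}), and your factorizations and overall plan match.

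Where your write-up is too thin is precisely where the paper's proof gets long. For the cross-regime matching in part (i), you reduce to the identity $|c^{\lambda}_{k \to i}|^2\,|c^{\lambda+e_i}_{\ell \to j}|^2 = |c^{\lambda}_{\ell \to j}|^2\,|c^{\lambda+e_j}_{k \to i}|^2$ on $B$ and assert it follows from the $ij\leftrightarrow ji$ symmetry of $D^{\lambda}_{s\to ij}$ plus the recurrence. These are indeed the right tools, but turning them into a proof is not a one-liner: the paper's \Cref{lem:two-step-blocks-around-diag} needs an explicit case split on whether $i$ or $j$ equals $k+1$ (the ``vertical'' case with $\Delta_{ji}^2=1$ is handled separately), and in the generic case reduces to an algebraic identity among the $D$'s that is verified by expanding each via the recurrence and tracking $\delta_{ij}$, $\delta_{i,k+1}$, $\delta_{j,k+1}$ terms. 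Your assertion that ``the two products inside the convex combination coincide'' does not follow from the single identity you wrote; you also need the $i\leftrightarrow j$ swapped version, and even then the equality of the convex combination with the $k<\ell$ value is what is proved, not that both products individually equal it. For the diagonal case, the paper needs two separate lemmas: one for $k$ not the top of its block (\Cref{lem:two-steps-block-avg-diag}, which reduces to an auxiliary identity \Cref{lem:two-block-avg-aux-expression}), and one for $k=\min B$ (\Cref{lem:two-steps-block-avg-diag-top-of-block}), the latter splitting further into four sub-cases depending on which of $i,j$ equal $k$. Your ``main obstacle'' paragraph is right to flag this, but the work is more than book-keeping. Also, in part (ii) your ``relabel so $B_1$ lies above $B_2$'' is not a valid WLOG, since $|c^{\lambda}_{k\ell\to ij}|^2\neq |c^{\lambda}_{\ell k\to ij}|^2$ in general; you should instead argue the two cases ($B_1$ above vs.\ below $B_2$) separately using your two factorizations.
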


The proof of this lemma is also deferred; see \Cref{sec:two_step_CG_coeff_lemma}.  

\subsection{Main Lemma}
\label{sec:main-lemma-second-moment}

\begin{lemma}[Main Lemma]
    \label{lem:m-avg2-plus-corr-is-xn1xn2}
    It holds that
    \begin{equation}
        \label{eq:m-avg2-plus-corr-is-xn1xn2}
        M_{\mathrm{avg}}^{(2)} + M_{\mathrm{corr}}^{(2)}\cdot\swap = \frac{1}{n^2} \cdot X_{n+1}X_{n+2},
    \end{equation}
    where
    \begin{align*}
         M_{\mathrm{corr}}^{(2)}= \frac{1}{n} \sum_{\lambda} \dim(V_{\lambda}^d) \sum_{k, \ell} \lambda^{\uparrow}_{\max(k, \ell)} \int_U U^{\otimes (n+2)}\cdot \calU^\dagger \cdot \Big(\ketbra{\lambda} \otimes I_{\dim(\lambda)} \otimes \ketbra{T^{\lambda}} \otimes \ketbra{k} \otimes \ketbra{\ell}\Big)\cdot \calU \cdot U^{\dagger, \otimes (n+2)} \cdot \dU.
    \end{align*}
\end{lemma}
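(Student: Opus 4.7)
The plan is to prove this identity by working block-by-block in the Schur basis. By \Cref{lem:expressions-are-block-diag} both $M^{(2)}_{\mathrm{avg}}$ and $M^{(2)}_{\mathrm{corr}}$ are block-diagonal with blocks indexed by $(\lambda, S, \{i,j\})$ and of size $1{\times}1$ (non-swappable) or $2{\times}2$ (swappable). Because their defining coefficient functions $(k,\ell) \mapsto \lambda^{\uparrow}_k \lambda^{\uparrow}_\ell$ and $(k,\ell) \mapsto \lambda^{\uparrow}_{\max(k,\ell)}$ are both symmetric in $k,\ell$, \Cref{lem:expressions-are-lin-comb-I-SWAP} gives $M^{(2)}_{\mathrm{avg}} = x \cdot I + y \cdot \swap^S_{\{i,j\}}$ on each swappable block. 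Moreover, the corollary proved just above shows $M^{(2)}_{\mathrm{corr}}$ is diagonal with equal diagonal entries, hence $M^{(2)}_{\mathrm{corr}} = m \cdot I$ on each block, so $M^{(2)}_{\mathrm{corr}}\cdot \swap = m \cdot \swap^{S}_{\{i,j\}}$. Therefore the left-hand side equals $x \cdot I + (y+m)\cdot \swap^{S}_{\{i,j\}}$ on every block.

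For the right-hand side, the Jucys--Murphy elements $X_{n+1}$ and $X_{n+2}$ commute and act diagonally on Young's orthogonal basis of $\Specht_{\lambda_{ij}}$ (by \Cref{thm:jucys-murphy-diag}), with eigenvalues $\content_{S'}(n+1)$ and $\content_{S'}(n+2)$ respectively. A direct computation shows that for both $S' = S_{ij}$ and $S' = S_{ji}$ (when valid), the product $\content_{S'}(n+1) \cdot \content_{S'}(n+2)$ equals $(C_i^{\lambda}+1)(C_j^{\lambda+e_i}+1)$: in the swappable case $j > i+1$, the contents $C_j^{\lambda+e_i}$ and $C_i^{\lambda+e_j}$ are unaffected by the prior addition, and in the vertical and horizontal cases there is only one valid SYT. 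Hence the right-hand side equals $\frac{1}{n^2}(C_i^{\lambda}+1)(C_j^{\lambda+e_i}+1) \cdot I$ on every block.

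It therefore suffices to prove $x = \frac{1}{n^2}(C_i^{\lambda}+1)(C_j^{\lambda+e_i}+1)$ and $y+m = 0$, and these will both follow from the Diagonal Expression Lemma (\Cref{lem:master-equation}, proved in the next subsection), which evaluates the $(ij, ij)$ diagonal entry of the left-hand side:
\begin{equation*}
    (M^{(2)}_{\mathrm{avg}})^{S}_{ij,ij} + (M^{(2)}_{\mathrm{corr}} \cdot \swap)^{S}_{ij,ij} = \tfrac{1}{n^2} \cdot (C_i^{\lambda}+1)\,(C_j^{\lambda+e_i}+1).
\end{equation*}
This directly settles the non-swappable blocks. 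In the swappable case, applying the same lemma with $i$ and $j$ exchanged handles the $(ji,ji)$ entry, and the two entries match because $(C_i+1)(C_j^{\lambda+e_i}+1) = (C_j+1)(C_i^{\lambda+e_j}+1)$ when $j > i+1$. Since $(\swap^{S}_{\{i,j\}})_{ij,ij} - (\swap^{S}_{\{i,j\}})_{ji,ji} = 2/\Delta_{ji} \neq 0$ in the swappable case, equality of the two diagonal entries of $x \cdot I + (y+m)\cdot \swap^{S}_{\{i,j\}}$ forces $y+m = 0$, and the common value $x$ then matches the right-hand side.

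The main obstacle is proving the Diagonal Expression Lemma, which requires evaluating
\begin{equation*}
    \sum_{k,\ell=1}^d \Big( \lambda^{\uparrow}_k \lambda^{\uparrow}_\ell + \tfrac{1}{\Delta_{ji}} \lambda^{\uparrow}_{\max(k,\ell)} \Big) \cdot |c^{\lambda}_{k\ell \to ij}|^2
\end{equation*}
in closed form. Mirroring the first-moment argument, the plan is to first reduce to the staircase transformation using \Cref{lem:two-step-equal-blocks} (which shows the two-step CG coefficients $|c^{\lambda}_{k\ell \to ij}|^2$ are constant on products of blocks of $\lambda$, with a single correction on the within-block diagonal), and then evaluate the staircase version using the partial-sum identity of \Cref{thm:partial_sums_complete} together with the single-variable sums of \Cref{cor:lambda_k_and_lambda_ell_sums} and the boundary identity of \Cref{lem:two_step_induction_boundary}. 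Showing that the combination of these ingredients collapses to the clean product $(C_i+1)(C_j^{\lambda+e_i}+1) \cdot D^{\lambda}_{d \to ij}$, is precisely the ``miraculous cancellation'' alluded to in the introduction and will be the hardest part of the argument.
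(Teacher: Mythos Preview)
Your proposal is correct and follows essentially the same route as the paper: work in the Schur-basis blocks, decompose $M^{(2)}_{\mathrm{avg}}$ and $M^{(2)}_{\mathrm{corr}}$ as $xI + y\,\swap^{S}_{\{i,j\}}$ via \Cref{lem:expressions-are-lin-comb-I-SWAP}, identify the right-hand side as a multiple of the identity on each block (since $X_{n+1}X_{n+2}$ has equal eigenvalues at $S_{ij}$ and $S_{ji}$ whenever $i\neq j$), and reduce everything to the Diagonal Expression Lemma (\Cref{lem:master-equation}). Your use of the diagonality of $M^{(2)}_{\mathrm{corr}}$ upfront to write the left-hand side as $xI + (y+m)\,\swap$ is a mild streamlining of the paper's bookkeeping (which instead tracks $x_{\mathrm{avg}}+y_{\mathrm{corr}}$ and $y_{\mathrm{avg}}+x_{\mathrm{corr}}$ separately), but the substance is identical.

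One small imprecision: ``swappable'' does not mean $j>i+1$; one can have $j=i+1$ with $\lambda_i>\lambda_{i+1}$ and still be swappable. This does not affect your argument, since the key fact you use---$C^{\lambda+e_i}_j=C^\lambda_j$ and $C^{\lambda+e_j}_i=C^\lambda_i$---holds for all $i\neq j$.
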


\begin{proof}

    \Cref{lem:expressions-are-block-diag} above shows that $M^{(2)}_{\mathrm{avg}}$ and $M^{(2)}_{\mathrm{corr}}$ have the same block structure in the Schur basis as $\swap$. In particular, their blocks are indexed by an SYT $S$ with $n$ boxes, and a set of indices $\{i,j\} \in [d]^2$. Let $S_{ij}$ be the SYT obtained from $S$ by inserting a box containing $(n+1)$ into the $i$-th row, and then a box containing $(n+2)$ into the $j$-th row. However, we note that $S_{ij}$ may not be a valid SYT. For example, if $\lambda = (1,1)$, and $S$ is the SYT of shape $\lambda$, then $S_{12}$ is valid, but $S_{21}$ is not. Necessarily, however, at least one of $S_{ij}$ and $S_{ji}$ must be valid. Whenever both $S_{ij}$ and $S_{ji}$ are valid SYTs, the block corresponding to $(S, \{i,j\})$ is $2\times 2$. Otherwise, if one of $S_{ij}$ or $S_{ji}$ is not a valid SYT, the corresponding block is $1 \times 1$, a scalar.
    
    We know from \Cref{thm:jucys-murphy-diag} that $X_{n+1}X_{n+2}$ is a diagonal matrix in the Schur basis with the entry $(C^{\lambda}_i + 1)(C^{\lambda+e_i}_j + 1)$ at $\ketbra{S_{ij}}$. Thus it suffices to prove~\Cref{eq:m-avg2-plus-corr-is-xn1xn2} for each block $(S, \{i,j\})$:
    \begin{equation}
        \label{eq:goal-in-each-block}
        [M_{\mathrm{avg}}^{(2)}]^S_{\{i,j\}} + [M_{\mathrm{corr}}^{(2)}\cdot\swap]^S_{\{i,j\}} = \frac{1}{n^2} \cdot [X_{n+1}X_{n+2}]^S_{\{i,j\}}.
    \end{equation}

    To show the above equality, we will use the following identity:
    \begin{equation}
        \label{eq:master-eq}
        \sum_{k, \ell} (\lambda^{\uparrow}_k \lambda^{\uparrow}_{\ell} + 1/\Delta_{ji} \cdot \lambda^{\uparrow}_{\max(k, \ell)}) \cdot (|a^{\lambda}_{k\ell \to ij}|^2 + |b^{\lambda}_{k\ell \to ij}|^2) = (C^{\lambda}_i + 1)(C^{\lambda+e_i}_j + 1)\cdot \frac{\dim(V^d_{\lambda_{ij}})}{\dim(V^d_{\lambda})},
    \end{equation}
    which we prove in~\Cref{lem:master-equation} below. For the rest of this proof, we fix any such block $(S, \{i, j\})$ and consider the case it corresponds to a $2 \times 2$ block, or a $1 \times 1$ block separately. Moreover, for notational simplicity, we will omit the superscript $S$ and subscript $\{i, j\}$ for these cases.
    \begin{enumerate}
        \item If both $S_{ij}$ and $S_{ji}$ are distinct, valid SYTs, then $i \neq j$, and we are dealing with a $2 \times 2$ block.~\Cref{lem:expressions-are-lin-comb-I-SWAP} below implies that on each block $(S, \{i, j\})$, our matrices have the following form:
        \begin{equation*}
             M^{(2)}_{\mathrm{avg}} = x_{\mathrm{avg}}\cdot I + y_{\mathrm{avg}}\cdot \swap,
        \end{equation*}
        \begin{equation*}
             M^{(2)}_{\mathrm{corr}} = x_{\mathrm{corr}}\cdot I + y_{\mathrm{corr}} \cdot \swap.
        \end{equation*}
        Hence
        \begin{equation*}
            M^{(2)}_{\mathrm{avg}} + M^{(2)}_{\mathrm{corr}} \cdot \swap = \big(x_{\mathrm{avg}} + y_{\mathrm{corr}}\big) \cdot I + \big(y_{\mathrm{avg}} + x_{\mathrm{corr}}\big)\cdot \swap.
        \end{equation*}
        So we would like to show that
        \begin{align}
            &x_{\mathrm{avg}} + y_{\mathrm{corr}} = \frac{1}{n^2} \cdot (C^{\lambda}_i + 1)(C^{\lambda+e_i}_j + 1), \label{eq:swappable-diag} \\
            &y_{\mathrm{avg}} + x_{\mathrm{corr}} = 0. \label{eq:swappable-off-diag}
        \end{align}
        From the structure of $\swap$, we can compute $x_{\mathrm{avg}}$ and $y_{\mathrm{avg}}$ as follows:
        \begin{equation*}
            x_{\mathrm{avg}} = \frac{1}{2} \big((M^{(2)}_{\mathrm{avg}})_{ij,ij} + (M^{(2)}_{\mathrm{avg}})_{ji,ji}\big),
            \qquad
            y_{\mathrm{avg}} = \frac{\Delta_{ji}}{2} \big((M^{(2)}_{\mathrm{avg}})_{ij,ij} - (M^{(2)}_{\mathrm{avg}})_{ji,ji}\big),
        \end{equation*}
        and similarly
        \begin{equation*}
            x_{\mathrm{corr}} = \frac{1}{2} \big((M^{(2)}_{\mathrm{corr}})_{ij,ij} + (M^{(2)}_{\mathrm{corr}})_{ji,ji}\big),
            \qquad
            y_{\mathrm{corr}} = \frac{\Delta_{ji}}{2} \big((M^{(2)}_{\mathrm{corr}})_{ij,ij} - (M^{(2)}_{\mathrm{corr}})_{ji,ji}\big).
        \end{equation*}

        The left-hand side of~\Cref{eq:swappable-diag} becomes
        \begin{align*}
            & x_{\mathrm{avg}} + y_{\mathrm{corr}} \\
            ={}& \frac{1}{2} \cdot (M^{(2)}_{\mathrm{avg}})_{ij,ij} + \frac{1}{2}  \cdot (M^{(2)}_{\mathrm{avg}})_{ji,ji} + \frac{\Delta_{ji}}{2} \big( (M^{(2)}_{\mathrm{corr}})_{ij,ij} - (M^{(2)}_{\mathrm{corr}})_{ji,ji}\big) \\
            ={}& \frac{1}{2n^2} \cdot \frac{\dim(V^d_{\lambda})}{\dim(V^d_{\lambda_{ij}})} \sum_{k, \ell} \lambda^{\uparrow}_k \lambda^{\uparrow}_{\ell} \big(|a^{\lambda}_{k\ell \to ij}|^2 + |b^{\lambda}_{k\ell \to ij}|^2\big)
            \\
            &+ \frac{1}{2n^2} \cdot \frac{\dim(V^d_{\lambda})}{\dim(V^d_{\lambda_{ij}})}\sum_{k, \ell} \lambda^{\uparrow}_k \lambda^{\uparrow}_{\ell} \big(|a^{\lambda}_{k\ell \to ji}|^2 + |b^{\lambda}_{k\ell \to ji}|^2\big) \\
            &+ \frac{\Delta_{ji}}{2n^2} \cdot \frac{\dim(V^d_{\lambda})}{\dim(V^d_{\lambda_{ij}})} \sum_{k, \ell} \lambda^{\uparrow}_{\max(k, \ell)}  \big(|a^{\lambda}_{k\ell \to ij}|^2 + |b^{\lambda}_{k\ell \to ij}|^2 - |a^{\lambda}_{k\ell \to ji}|^2 - |b^{\lambda}_{k\ell \to ji}|^2\big) \tag{\Cref{lem:expressions-are-block-diag}} \\
            ={}& \frac{1}{2n^2} \cdot \frac{\dim(V^d_{\lambda})}{\dim(V^d_{\lambda_{ij}})}\sum_{k, \ell} \Big(\lambda^{\uparrow}_k \lambda^{\uparrow}_{\ell} + \frac{\lambda^{\uparrow}_{\max(k, \ell)}}{\Delta_{ji}} \Big) \big(|a^{\lambda}_{k\ell \to ij}|^2 + |b^{\lambda}_{k\ell \to ij}|^2\big) \\
            &+ \frac{1}{2n^2} \cdot \frac{\dim(V^d_{\lambda})}{\dim(V^d_{\lambda_{ij}})}\sum_{k, \ell} \Big(\lambda^{\uparrow}_k \lambda^{\uparrow}_{\ell} + \frac{\lambda^{\uparrow}_{\max(k, \ell)}}{\Delta_{ij}} \Big) \big(|a^{\lambda}_{k\ell \to ji}|^2 + |b^{\lambda}_{k\ell \to ji}|^2\big) \\
            &+ \frac{1}{2n^2} \cdot \frac{\dim(V^d_{\lambda})}{\dim(V^d_{\lambda_{ij}})} \cdot \Big(\Delta_{ji} - \frac{1}{\Delta_{ji}}\Big) \sum_{k, \ell} \lambda^{\uparrow}_{\max(k, \ell)}  \big(|a^{\lambda}_{k\ell \to ij}|^2 + |b^{\lambda}_{k\ell \to ij}|^2 - |a^{\lambda}_{k\ell \to ji}|^2 - |b^{\lambda}_{k\ell \to ji}|^2\big).
        \end{align*}
        From~\Cref{eq:master-eq}, the first two summations are equal to 
        \begin{equation*}
            \frac{1}{2n^2}(C^{\lambda}_{i} + 1) (C^{\lambda+e_i}_{j} + 1) + \frac{1}{2n^2} (C^{\lambda}_{j} + 1) (C^{\lambda+e_j}_{i} + 1) = \frac{1}{n^2} (C^{\lambda}_i + 1)(C^{\lambda+e_i}_j + 1),
        \end{equation*}
        where we used that the fact that $C^{\lambda + e_i}_j = C^{\lambda}_j$ and $C^{\lambda+e_j}_i = C^{\lambda}_i$ when $i \neq j$.
        It remains to show that the last summation is zero. We will prove that for any $s \in [d]$,
        \begin{equation*}
            \sum_{\substack{k, \ell \\ \max(k, \ell) = s}} \big(|a^{\lambda}_{k\ell \to ij}|^2 + |b^{\lambda}_{k\ell \to ij}|^2 - |a^{\lambda}_{k\ell \to ij}|^2 - |b^{\lambda}_{k\ell \to ij}|^2\big) = 0.
        \end{equation*}
        This follows from the partial sums of the two-step Clebsch-Gordan coefficients. In particular,~\Cref{thm:partial_sums_complete} implies that for any $s \in [d]$:
        \begin{equation*}
            \sum_{k=1}^s \sum_{\ell=1}^s (|a^{\lambda}_{k\ell \to ij}|^2 + |b^{\lambda}_{k\ell \to ij}|^2) = D^{\lambda}_{s \to i} D^{\lambda+e_i}_{s \to j} = \frac{\dim(V^s_{\lambda_{\leq s}+e_i+e_j})}{\dim(V^s_{\lambda_{\leq s}})}.
        \end{equation*}
        Exchanging $i$ and $j$ in the above equation does not change the right-hand side, and thus
        \begin{align}
            &\sum_{k=1}^s \sum_{\ell=1}^s (|a^{\lambda}_{k\ell \to ij}|^2 + |b^{\lambda}_{k\ell \to ij}|^2) = \sum_{k=1}^s \sum_{\ell=1}^s (|a^{\lambda}_{k\ell \to ji}|^2 + |b^{\lambda}_{k\ell \to ji}|^2) \nonumber \\
            \implies{}& \sum_{k=1}^s \sum_{\ell=1}^s (|a^{\lambda}_{k\ell \to ij}|^2 + |b^{\lambda}_{k\ell \to ij}|^2 -|a^{\lambda}_{k\ell \to ji}|^2 - |b^{\lambda}_{k\ell \to ji}|^2) = 0. \label{eq:mcorr-diag-partial}
        \end{align}
        Subtracting~\Cref{eq:mcorr-diag-partial} for $s-1$ from the expression above indeed gives:
        \begin{equation*}
            \sum_{\substack{k, \ell \\ \max(k, \ell) = s}} \big(|a^{\lambda}_{k\ell \to ij}|^2 + |b^{\lambda}_{k\ell \to ij}|^2 - |a^{\lambda}_{k\ell \to ij}|^2 - |b^{\lambda}_{k\ell \to ij}|^2\big) = 0.
        \end{equation*}

        The left-hand side of~\Cref{eq:swappable-off-diag} becomes
        \begin{align*}
            & y_{\mathrm{avg}} + x_{\mathrm{corr}} \\
            ={}& \frac{1}{2} \cdot \big((M^{(2)}_{\mathrm{corr}})_{ij,ij} + \Delta_{ji} \cdot (M^{(2)}_{\mathrm{avg}})_{ij,ij}\big)
            + \frac{1}{2} \cdot \big((M^{(2)}_{\mathrm{corr}})_{ji,ji} - \Delta_{ji} \cdot (M^{(2)}_{\mathrm{avg}})_{ji,ji}\big) \\
            ={}& \frac{1}{2} \cdot \big((M^{(2)}_{\mathrm{corr}})_{ij,ij} + \Delta_{ji} \cdot (M^{(2)}_{\mathrm{avg}})_{ij,ij}\big)
            + \frac{1}{2} \cdot \big((M^{(2)}_{\mathrm{corr}})_{ji,ji} + \Delta_{ij} \cdot (M^{(2)}_{\mathrm{avg}})_{ji,ji}\big) \\
            ={}& \frac{1}{2n^2} \cdot \frac{\dim(V^d_{\lambda})}{\dim(V^d_{\lambda_{ij}})} \cdot \sum_{k, \ell=1}^d (\lambda^{\uparrow}_{\max(k, \ell)} + \Delta_{ji} \cdot \lambda^{\uparrow}_{k} \lambda^{\uparrow}_{\ell}) (|a^{\lambda}_{k\ell \to ij}|^2 + |b^{\lambda}_{k\ell \to ij}|^2) \\
            &+ \frac{1}{2n^2} \cdot \frac{\dim(V^d_{\lambda})}{\dim(V^d_{\lambda_{ij}})} \cdot \sum_{k, \ell=1}^d (\lambda^{\uparrow}_{\max(k, \ell)} + \Delta_{ij} \cdot \lambda^{\uparrow}_{k} \lambda^{\uparrow}_{\ell}) (|a^{\lambda}_{k\ell \to ji}|^2 + |b^{\lambda}_{k\ell \to ji}|^2) \\
            ={}& \frac{\Delta_{ji}}{2n^2} \cdot \frac{\dim(V^d_{\lambda})}{\dim(V^d_{\lambda_{ij}})} \cdot \sum_{k, \ell=1}^d (\frac{1}{\Delta_{ji}} \cdot \lambda^{\uparrow}_{\max(k, \ell)} +\lambda^{\uparrow}_{k} \lambda^{\uparrow}_{\ell}) (|a^{\lambda}_{k\ell \to ij}|^2 + |b^{\lambda}_{k\ell \to ij}|^2) \\
            &+ \frac{\Delta_{ij}}{2n^2} \cdot \frac{\dim(V^d_{\lambda})}{\dim(V^d_{\lambda_{ij}})} \cdot \sum_{k, \ell=1}^d (\frac{1}{\Delta_{ij}} \cdot \lambda^{\uparrow}_{\max(k, \ell)} + \lambda^{\uparrow}_{k} \lambda^{\uparrow}_{\ell}) (|a^{\lambda}_{k\ell \to ji}|^2 + |b^{\lambda}_{k\ell \to ji}|^2) \\
            ={}& \frac{\Delta_{ji}}{2n^2} \cdot (C_i^{\lambda} + 1) (C^{\lambda+e_i}_j + 1)
            + \frac{\Delta_{ij}}{2n^2} \cdot (C_j^{\lambda} + 1) (C_i^{\lambda+e_j} + 1) = 0,
        \end{align*}
        where we used the fact that $C^{\lambda + e_i}_j = C^{\lambda}_j$ and $C^{\lambda+e_j}_i = C^{\lambda}_i$ when $i \neq j$.
        
        \item If $S_{ij}$ is a valid SYT, but $S_{ji}$ is either not valid, or not a distinct SYT, then we are dealing with a $1 \times 1$ block. First, we obtain the following expressions for the diagonal entries of $M_{\mathrm{avg}}^{(2)}$ and $M_{\mathrm{corr}}^{(2)}$ from~\Cref{lem:expressions-are-block-diag}:
        \begin{equation*}
            M_{\mathrm{avg}}^{(2)} = \frac{1}{n^2}\cdot \frac{\dim(V^d_{\lambda})}{\dim(V^d_{\lambda_{ij}})} \cdot \sum_{k, \ell} \lambda^{\uparrow}_k \lambda^{\uparrow}_{\ell} \cdot (|a^{\lambda}_{k\ell \to ij}|^2 + |b^{\lambda}_{k\ell \to ij}|^2),
        \end{equation*}
        \begin{equation*}
            M_{\mathrm{corr}}^{(2)}= \frac{1}{n^2}\cdot \frac{\dim(V^d_{\lambda})}{\dim(V^d_{\lambda_{ij}})} \cdot  \sum_{k, \ell} \lambda^{\uparrow}_{\max(k, \ell)} \cdot (|a^{\lambda}_{k\ell \to ij}|^2 + |b^{\lambda}_{k\ell \to ij}|^2).
        \end{equation*}
        The $\swap$ matrix has the scalar $\frac{1}{\Delta_{ji}}$ in the $(S, \{i, j\})$ block. Hence
        \begin{align*}
            M_{\mathrm{avg}}^{(2)} + M_{\mathrm{corr}}^{(2)}\cdot\swap
            &= \frac{1}{n^2} \cdot \frac{\dim(V^d_{\lambda})}{\dim(V^d_{\lambda_{ij}})} \cdot \sum_{k, \ell} \big(\lambda^{\uparrow}_k \lambda^{\uparrow}_{\ell} + \frac{1}{\Delta_{ji}} \cdot \lambda^{\uparrow}_{\max(k, \ell)}\big) \cdot (|a^{\lambda}_{k\ell \to ij}|^2 + |b^{\lambda}_{k\ell \to ij}|^2) \\
            &= \frac{1}{n^2} \cdot (C_i^{\lambda} + 1)(C_j^{\lambda+e_i} + 1)\\
            &= \frac{1}{n^2} \cdot X_{n+1}X_{n+2}. \qedhere
        \end{align*}
    \end{enumerate}
\end{proof}

\subsection{Proof of the Diagonal Expression Lemma}
\label{sec:diagonal-expression-lemma}
In this section, we prove the Diagonal Expression Lemma, which states that:
\begin{equation*}
    \sum_{k, \ell} (\lambda^{\uparrow}_k \lambda^{\uparrow}_{\ell} + 1/\Delta_{ji} \cdot \lambda^{\uparrow}_{\max(k, \ell)}) \cdot |c^{\lambda}_{k\ell \to ij}|^2 = (C_i^{\lambda} + 1)(C_j^{\lambda+e_i} + 1) \cdot \frac{\dim(V^d_{\lambda_{ij}})}{\dim(V^d_\lambda)}.
\end{equation*}
This identity can be thought of as the second moment analog of the identity
\begin{equation*}
    \sum_{k, \ell} \lambda^{\uparrow}_k \cdot |c^{\lambda}_{k \to i}|^2 = (C_i^{\lambda} + 1) \cdot \frac{\dim(V^d_{\lambda + e_i})}{\dim(V^d_\lambda)}
\end{equation*}
from the first moment. Similar to the first moment, we will first show this identity for the staircase transformation $\lambda^{\uparrow\uparrow}$ in~\Cref{eq:master-equation-staircase}, and then ``average'' the $\lambda^{\uparrow\uparrow}$'s over the blocks of $\lambda$ to obtain the desired statement in~\Cref{lem:master-equation}. In contrast to the first moment, this identity does not hold for all legal estimators.

\begin{lemma}[Diagonal Expression Lemma for the staircase estimator]
    \label{lem:master-equation-staircase}
    Let $\lambda$ be a Young diagram. Then the Diagonal Expression Lemma holds for the staircase estimator:
    \begin{equation} \label{eq:master-equation-staircase}
        \sum_{k=1}^d \sum_{\ell=1}^d \big(\lambda^{\uparrow\uparrow}_k \lambda^{\uparrow\uparrow}_{\ell} + \frac{1}{\Delta_{ji}} \cdot \lambda^{\uparrow\uparrow}_{\max(k, \ell)}\big) \cdot |c^{\lambda}_{k\ell \to ij}|^2 = (C_i^{\lambda} + 1)(C_j^{\lambda+e_i} + 1) \cdot \frac{\dim(V^d_{\lambda_{ij}})}{\dim(V^d_\lambda)}.
    \end{equation}
\end{lemma}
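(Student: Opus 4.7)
The plan is to prove \Cref{eq:master-equation-staircase} by induction on the dimension $d$, using the boundary lemma \Cref{lem:two_step_induction_boundary} and the one-step staircase identity \Cref{lem:the_combinatorial_identity_staircase} (applied at dimension $d-1$) as the two workhorses. Denote the left-hand side by $S(d)$, and, for brevity, set $a \coloneq C^\lambda_i + 1$ and $b \coloneq C^{\lambda+e_i}_j + 1$, so the right-hand side is $ab \cdot D^\lambda_{d \to ij}$ by \Cref{not:D_symbols_two-step}. In the base case $d = 1$, the only allowed indices are $i = j = 1$ (the horizontal case with $\Delta_{11} = 1$), and the only summand corresponds to $(k,\ell) = (1,1)$, which already satisfies $\max(k,\ell) = d$; the identity then follows directly from \Cref{lem:two_step_induction_boundary} together with the convention $D^\lambda_{0 \to ij} = 0$.

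For the inductive step, I would split $S(d) = S_{\mathrm{int}}(d) + S_{\mathrm{bd}}(d)$ according to whether $\max(k,\ell) < d$ or $\max(k,\ell) = d$. The boundary contribution is handled by \Cref{lem:two_step_induction_boundary}, giving
\begin{equation*}
    S_{\mathrm{bd}}(d) = ab \cdot D^\lambda_{d \to ij} \;-\; (a+1)(b+1) \cdot D^\lambda_{d-1 \to ij}.
\end{equation*}
For the interior part, the key observation is the shift $\lambda^{\uparrow\uparrow}_k(d) = \lambda^{\uparrow\uparrow}_k(d-1) + 1$ for every $k \leq d-1$ (mirroring what is done in the one-step proof of \Cref{lem:the_combinatorial_identity_staircase}). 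Substituting this into the summand and expanding splits $S_{\mathrm{int}}(d)$ into $S(d-1)$ plus three residual sums:
\begin{equation*}
    R_1 \coloneq \sum_{k,\ell=1}^{d-1} \lambda^{\uparrow\uparrow}_k(d-1)\,|c^\lambda_{k\ell \to ij}|^2, \quad R_2 \coloneq \sum_{k,\ell=1}^{d-1} \Big(\lambda^{\uparrow\uparrow}_\ell(d-1) + \tfrac{1}{\Delta_{ji}}\Big) |c^\lambda_{k\ell \to ij}|^2, \quad R_3 \coloneq \sum_{k,\ell=1}^{d-1}|c^\lambda_{k\ell \to ij}|^2.
\end{equation*}
The inductive hypothesis supplies $S(d-1) = ab \cdot D^\lambda_{d-1 \to ij}$.

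Each residual is evaluated by Abel summation combined with \Cref{thm:partial_sums_complete}. Applying the $s \leq t$ branch to the inner $\ell$-sum, $R_1$ reduces to $D^{\lambda+e_i}_{d-1 \to j}\cdot\sum_{k=1}^{d-1}\lambda^{\uparrow\uparrow}_k(d-1)\,|c^\lambda_{k \to i}|^2$, which the one-step identity \Cref{lem:the_combinatorial_identity_staircase} at dimension $d-1$ turns into $a \cdot D^\lambda_{d-1 \to ij}$. The third residual is $R_3 = F(d-1, d-1) = D^\lambda_{d-1 \to ij}$ directly from the $s = t$ case of \Cref{thm:partial_sums_complete}. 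The second residual $R_2$ is the delicate one: using the $s \geq t$ branch of \Cref{thm:partial_sums_complete} and then applying the one-step identity to each of the two pieces produces
\begin{equation*}
    R_2 = \Big[\Big(1 - \tfrac{1}{\Delta_{ji}^2}\Big)(C^\lambda_j + 1) + \tfrac{1}{\Delta_{ji}^2}(C^\lambda_i + 1) + \tfrac{1}{\Delta_{ji}}\Big] \cdot D^\lambda_{d-1 \to ij},
\end{equation*}
and one checks that the bracketed quantity collapses to $b = C^{\lambda+e_i}_j + 1$ by treating the horizontal ($i = j$, $\Delta_{ji} = 1$, $C^{\lambda+e_i}_j = C^\lambda_i + 1$) and non-horizontal ($i \neq j$, $\Delta_{ji} = C^\lambda_j - C^\lambda_i$, $C^{\lambda+e_i}_j = C^\lambda_j$) cases separately. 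This is precisely the step where the specific factor $\frac{1}{\Delta_{ji}}$ appearing in the statement earns its keep.

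Assembling the pieces, the coefficient of $D^\lambda_{d-1 \to ij}$ in $S(d)$ is $ab + a + b + 1 - (a+1)(b+1) = 0$ by the elementary identity $(a+1)(b+1) = ab + a + b + 1$, while the coefficient of $D^\lambda_{d \to ij}$ is simply $ab$ coming from $S_{\mathrm{bd}}$. Thus $S(d) = ab \cdot D^\lambda_{d \to ij}$, closing the induction. The main obstacle in this plan is the algebraic collapse in $R_2$ across the horizontal/non-horizontal dichotomy, together with the boundary computation packaged in \Cref{lem:two_step_induction_boundary} --- which is itself deferred and is where most of the genuine technical work with two-step Clebsch-Gordan coefficients is concentrated. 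The final disappearance of the $D^\lambda_{d-1 \to ij}$ coefficient is exactly the kind of ``miraculous cancellation'' that the authors flag in the introduction.
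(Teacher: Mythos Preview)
Your proposal is correct and follows essentially the same approach as the paper's proof: induction on $d$, the same interior/boundary split, the same shift $\lambda^{\uparrow\uparrow}_k(d) = \lambda^{\uparrow\uparrow}_k(d-1) + 1$, and the same appeal to \Cref{lem:two_step_induction_boundary} for the boundary. The only cosmetic difference is that the paper packages your computations of $R_1$, $R_2$, $R_3$ into a standalone corollary (\Cref{cor:lambda_k_and_lambda_ell_sums}, whose proof is exactly your argument, including the horizontal/non-horizontal case split for $R_2$), and the paper verifies the base case $d=1$ by a direct check rather than invoking \Cref{lem:two_step_induction_boundary} with $D^\lambda_{0 \to ij} = 0$.
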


\begin{proof}
    Fix $\lambda$, $i$ and $j$. We prove the lemma by induction on $d$.
    Our base case is $d = 1$. If either $i$ or $j$ are greater than $1$, both sides of the desired equation vanish, since $|c^{\lambda}_{k\ell \to ij}|^2 = 0$ and $\dim(V^d_{\lambda_{ij}}) / \dim(V^d_{\lambda}) = D^{\lambda}_{d \to ij} = 0$ in cases where $d < i, j$. So, assume $i = j = 1$. Note that in this case, $\Delta_{ji} = 1$, $C^{\lambda}_i = \lambda_1 -1$, and $C^{\lambda+e_i}_j = \lambda_1$. The only nonzero CG coefficient is $c^{\lambda}_{11 \to 11}$, which must therefore have unit norm. Since $d =1$, we have $\dim(V^d_{\lambda_{ij}}) = \dim(V^d_{\lambda}) = 1$, as the only valid SSYT is filled with $1$'s. Lastly, we have $\lambda_1^{\uparrow \uparrow} = \lambda_1$. Thus,
    \begin{equation*}
        \big( (\lambda_1^{\uparrow \uparrow})^2 + \lambda_1^{\uparrow\uparrow} \big) \cdot |c^{\lambda}_{11 \to 11} |^2 = \lambda_1^{\uparrow \uparrow} ( \lambda_1^{\uparrow \uparrow}+1) = \lambda_1 ( \lambda_1 + 1) = (C_i^{\lambda}+1) ( C_j^{\lambda+e_i} + 1) = (C_i^{\lambda}+1) ( C_j^{\lambda+e_i} + 1) \cdot \frac{\dim(V^d_{\lambda_{ij}})}{\dim(V^d_\lambda)},
    \end{equation*}
    which proves the base case.

    Now assume the result holds for $d-1$. We split the left-hand side of \Cref{eq:master-equation-staircase} up into the following terms:
    \begin{align}
        & \sum_{k=1}^d \sum_{\ell=1}^d \big(\lambda^{\uparrow\uparrow}_k \lambda^{\uparrow\uparrow}_{\ell} + \frac{1}{\Delta_{ji}} \cdot \lambda^{\uparrow\uparrow}_{\max(k, \ell)}\big) \cdot |c^{\lambda}_{k\ell \to ij}|^2 \nonumber \\
        & = \sum_{k=1}^{d-1} \sum_{\ell=1}^{d-1} \big(\lambda^{\uparrow\uparrow}_k \lambda^{\uparrow\uparrow}_{\ell} + \frac{1}{\Delta_{ji}} \cdot \lambda^{\uparrow\uparrow}_{\max(k, \ell)}\big) \cdot |c^{\lambda}_{k\ell \to ij}|^2 + \sum_{\substack{k,\ell \\ \max(k,\ell) = d}} \big(\lambda^{\uparrow\uparrow}_k \lambda^{\uparrow\uparrow}_{\ell} + \frac{1}{\Delta_{ji}} \cdot \lambda^{\uparrow\uparrow}_{\max(k, \ell)}\big) \cdot |c^{\lambda}_{k\ell \to ij}|^2. \label{eq:master-equation-staircase-dummy-1}
    \end{align}
    We now consider the first term of \Cref{eq:master-equation-staircase-dummy-1}. Recall that $\lambda_k^{\uparrow \uparrow}(d) = \lambda_k^{\uparrow \uparrow}(d-1) + 1$. We have
    \begin{align*}
        & \sum_{k,\ell=1}^{d-1} \Big(\lambda^{\uparrow\uparrow}_k(d) \lambda^{\uparrow\uparrow}_{\ell}(d) + \frac{1}{\Delta_{ji}} \cdot \lambda^{\uparrow\uparrow}_{\max(k, \ell)}(d)\Big) \cdot |c^{\lambda}_{k\ell \to ij}|^2 \\
        & = \sum_{k,\ell=1}^{d-1}  \Big((\lambda^{\uparrow\uparrow}_k(d-1) + 1)( \lambda^{\uparrow\uparrow}_{\ell}(d-1)+1) + \frac{1}{\Delta_{ji}} \cdot (\lambda^{\uparrow\uparrow}_{\max(k, \ell)}(d-1)+1)\Big) \cdot |c^{\lambda}_{k\ell \to ij}|^2 \\
        & = \sum_{k,\ell=1}^{d-1}  \Big(\lambda^{\uparrow\uparrow}_k(d-1) \lambda^{\uparrow\uparrow}_{\ell}(d-1) + \frac{1}{\Delta_{ji}} \cdot \lambda^{\uparrow\uparrow}_{\max(k, \ell)}(d-1)\Big) \cdot |c^{\lambda}_{k\ell \to ij}|^2 + \sum_{k,\ell=1}^{d-1}  \Big( \lambda_k^{\uparrow \uparrow}(d-1) + \lambda_\ell^{\uparrow \uparrow}(d-1) +1 + \frac{1}{\Delta_{ji}} \Big) \cdot |c^{\lambda}_{k \ell \to ij}|^2.
    \end{align*}
    We can use the inductive hypothesis on the first sum here, and \Cref{cor:lambda_k_and_lambda_ell_sums} and \Cref{thm:partial_sums_complete} for the second. We get:
    \begin{align}
        \sum_{k,\ell=1}^{d-1} \Big(\lambda^{\uparrow\uparrow}_k(d) \lambda^{\uparrow\uparrow}_{\ell}(d) + \frac{1}{\Delta_{ji}} \cdot \lambda^{\uparrow\uparrow}_{\max(k, \ell)}(d)\Big) \cdot |c^{\lambda}_{k\ell \to ij}|^2 & = (C^\lambda_i+1)(C^{\lambda+e_i}_j + 1) \cdot D^\lambda_{d-1 \to ij} + (C_i^{\lambda} + C_j^{\lambda+e_i} + 3) \cdot D^{\lambda}_{d-1 \to ij} \nonumber \\
        & = (C^\lambda_i+2)(C^{\lambda+e_i}_j + 2) \cdot D^\lambda_{d-1 \to ij}.\label{eq:master-equation-staircase-dummy-2}
    \end{align}
    We now turn to the second term in \Cref{eq:master-equation-staircase-dummy-1}. This term was calculated in \Cref{lem:two_step_induction_boundary}, and we have:
    \begin{equation}\sum_{\substack{k,\ell \\ \max(k,\ell) = d}} \big(\lambda^{\uparrow\uparrow}_k \lambda^{\uparrow\uparrow}_{\ell} + \frac{1}{\Delta_{ji}} \cdot \lambda^{\uparrow\uparrow}_{\max(k, \ell)}\big) \cdot |c^{\lambda}_{k\ell \to ij}|^2 = (C^{\lambda}_i + 1)(C^{\lambda+e_i}_j + 1) \cdot D^{\lambda}_{d \to ij} - (C^{\lambda}_i + 2)(C_j^{\lambda+e_i} + 2) \cdot D^{\lambda}_{d-1 \to ij}. \label{eq:master-equation-staircase-dummy-3}
    \end{equation}
    Finally, substituting \Cref{eq:master-equation-staircase-dummy-2,eq:master-equation-staircase-dummy-3} back into \Cref{eq:master-equation-staircase-dummy-1} we obtain the claimed expression:
    \begin{equation*}
        \sum_{k=1}^d \sum_{\ell=1}^d \big(\lambda^{\uparrow\uparrow}_k \lambda^{\uparrow\uparrow}_{\ell} + \frac{1}{\Delta_{ji}} \cdot \lambda^{\uparrow\uparrow}_{\max(k, \ell)}\big) \cdot |c^{\lambda}_{k\ell \to ij}|^2 = (C^{\lambda}_i + 1)(C^{\lambda+e_i}_j + 1) \cdot D^{\lambda}_{d \to ij}. \qedhere
    \end{equation*}
\end{proof}

We next show that the Diagonal Expression Lemma holds for the debiased Keyl's estimator as well. We remind the reader that our general approach is to ``average'' the analogous result for the staircase estimator. 

\begin{lemma}[Diagonal Expression Lemma for the debiased Keyl's estimator]
    \label{lem:master-equation}
    Let $\lambda$ be a Young diagram. Then the Diagonal Expression Lemma holds for the debiased Keyl's estimator:
    \begin{equation*}
        \sum_{k=1}^d \sum_{\ell=1}^{d} \big(\lambda^{\uparrow}_k \lambda^{\uparrow}_{\ell} + \frac{1}{\Delta_{ji}} \cdot \lambda^{\uparrow}_{\max(k, \ell)}\big) \cdot |c^{\lambda}_{k\ell \to ij}|^2 = (C_i^{\lambda} + 1)(C_j^{\lambda+e_i} + 1) \cdot \frac{\dim(V^d_{\lambda_{ij}})}{\dim(V^d_\lambda)}.
    \end{equation*}
\end{lemma}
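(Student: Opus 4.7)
The plan is to reduce \Cref{lem:master-equation} to \Cref{lem:master-equation-staircase} by proving that the quantity
\begin{equation*}
Q(f) \coloneq \sum_{k,\ell=1}^d \Big(f_k f_\ell + \tfrac{1}{\Delta_{ji}} f_{\max(k,\ell)}\Big) \cdot |c^\lambda_{k\ell \to ij}|^2
\end{equation*}
satisfies $Q(\lambda^\uparrow) = Q(\lambda^{\uparrow\uparrow})$. The two main tools will be \Cref{lem:two-step-equal-blocks}, which asserts that for any blocks $B_1, B_2$ of $\lambda$ the two-step coefficient $|c^\lambda_{k\ell \to ij}|^2$ is constant equal to $|c^\lambda_{B_1 B_2 \to ij}|^2$ on the off-diagonal part of $B_1 \times B_2$ and is boosted by a factor of $(1 + 1/\Delta_{ji})$ on the diagonal within a single block, together with the two defining features of donation: $\lambda^\uparrow$ is constant on every block $B$ of $\lambda$, and $\sum_{k \in B}\lambda^\uparrow_k = \sum_{k \in B}\lambda^{\uparrow\uparrow}_k$.

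I would split $Q(\cdot)$ into a cross-block piece and a within-block piece. For distinct blocks $B_1 \neq B_2$, say with $B_1$ preceding $B_2$, the factor $|c^\lambda_{B_1 B_2 \to ij}|^2$ pulls out of the sum over $(k,\ell) \in B_1 \times B_2$, the product term factors as the product of two block sums, and the max term factors as $|B_1| \cdot \sum_{\ell \in B_2} f_\ell$ since $\max(k,\ell) = \ell$ throughout. Both reductions depend only on block sums of $f$, which are the same for $\lambda^\uparrow$ and $\lambda^{\uparrow\uparrow}$, so the cross-block contributions are identical.

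The within-block contribution for $B = \{b_1 < \cdots < b_m\}$ factors as $|c^\lambda_{BB \to ij}|^2$ times
\begin{equation*}
\sum_{k,\ell \in B}\Big(f_k f_\ell + \tfrac{1}{\Delta_{ji}} f_{\max(k,\ell)}\Big) + \tfrac{1}{\Delta_{ji}}\sum_{k \in B}\Big(f_k^2 + \tfrac{1}{\Delta_{ji}} f_k\Big),
\end{equation*}
where the second group of terms accounts for the $(1 + 1/\Delta_{ji})$ diagonal boost. Using $\sum_{k \in B} \lambda^\uparrow_k = \sum_{k \in B} \lambda^{\uparrow\uparrow}_k$, every piece that depends only on the block sum $\sum_{k \in B} f_k$ agrees between the two estimators, and the difference $Q(\lambda^\uparrow) - Q(\lambda^{\uparrow\uparrow})$ reduces (with an overall $1/\Delta_{ji}$ factoring cleanly) to the identity
\begin{equation*}
\sum_{k,\ell \in B}\bigl(\lambda^{\uparrow\uparrow}_{\max(k,\ell)} - \mu_B\bigr) + \sum_{k \in B}\bigl((\lambda^{\uparrow\uparrow}_k)^2 - \mu_B^2\bigr) = 0,
\end{equation*}
where $\mu_B$ denotes the common value of $\lambda^\uparrow$ on $B$. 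Since $b_r = b_1 + r - 1$ are consecutive, a direct calculation gives $\lambda^{\uparrow\uparrow}_{b_r} - \mu_B = m+1-2r$, and the two elementary evaluations $\sum_{r=1}^m (2r-1)(m+1-2r) = -m(m^2-1)/3$ and $\sum_{r=1}^m (m+1-2r)^2 = m(m^2-1)/3$ deliver the cancellation.

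The main obstacle I anticipate is bookkeeping: tracking the several occurrences of $1/\Delta_{ji}$ (in both the off-diagonal and diagonal parts of each block's contribution) and verifying that the $\Delta_{ji}$-dependent and $\Delta_{ji}$-independent pieces separately collapse to zero so that the whole thing reduces to the two elementary sum identities above. Once these are in hand, $Q(\lambda^\uparrow) = Q(\lambda^{\uparrow\uparrow})$, and \Cref{lem:master-equation-staircase} immediately supplies the stated right-hand side.
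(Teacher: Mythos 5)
Your proposal is correct and follows essentially the same route as the paper: split the double sum into cross-block and within-block pieces via \Cref{lem:two-step-equal-blocks}, use the block-constancy of $\lambda^{\uparrow}$ and the equality of block sums $\sum_{k \in B}\lambda^{\uparrow}_k = \sum_{k \in B}\lambda^{\uparrow\uparrow}_k$ to handle every term depending only on block sums, and then verify the residual within-block cancellation. Your final step phrases the cancellation as the pair of elementary identities $\sum_r (2r-1)(m+1-2r) = -m(m^2-1)/3$ and $\sum_r(m+1-2r)^2 = m(m^2-1)/3$, whereas the paper observes instead that $N_B(k) + \lambda^{\uparrow\uparrow}_k$ is constant on each block (with $N_B(k) = 2r-1 - 2\sum_{B'<B}|B'|$) and averages once more; these are the same computation presented in different clothing, and both are fine.
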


\begin{proof}
    To prove the lemma, we will make use of the results in \Cref{lem:two-step-equal-blocks}. Our starting point is the Diagonal Expression Lemma for the staircase estimator (\Cref{lem:master-equation-staircase}), which we rewrite in terms of the blocks $\{B\}$ of $\lambda$:
    \begin{align}
        & \sum_{k=1}^d \sum_{\ell=1}^d \big( \lambda_k^{\uparrow \uparrow} \lambda_\ell^{\uparrow \uparrow} + \frac{1}{\Delta_{ji}} \cdot \lambda^{\uparrow \uparrow}_{\max(k,\ell)} \big) \cdot |c^{\lambda}_{k \ell \to ij}|^2 \nonumber \\
        & = \sum_{B \neq B'} \sum_{k \in B} \sum_{\ell \in B'} \big( \lambda_k^{\uparrow \uparrow} \lambda_\ell^{\uparrow \uparrow} + \frac{1}{\Delta_{ji}} \cdot \lambda^{\uparrow \uparrow}_{\max(k,\ell)} \big)\cdot |c^{\lambda}_{ B B' \to ij}|^2 + \sum_{B} \sum_{\substack{k,\ell \in B \\ k \neq \ell}}  \big( \lambda_k^{\uparrow \uparrow} \lambda_\ell^{\uparrow \uparrow} + \frac{1}{\Delta_{ji}} \cdot \lambda^{\uparrow \uparrow}_{\max(k,\ell)} \big)\cdot |c^{\lambda}_{ B B \to ij}|^2 \nonumber \\
        & \qquad + \sum_{B} \sum_{k \in B }  \big( (\lambda_k^{\uparrow \uparrow})^2  + \frac{1}{\Delta_{ji}} \cdot \lambda^{\uparrow \uparrow}_k \big)\cdot |c^{\lambda}_{ B B \to ij}|^2 \cdot \big( 1 + \frac{1}{\Delta_{ji}}\big). \label{eq:averaging-dummy-1}
    \end{align}
    We aim to show that we can replace $\lambda_k^{\uparrow \uparrow}$ to $\lambda_k^{\uparrow}$ without changing the value of the combined expression. We informally say that an expression where we can make such a replacement is \emph{averageable}. We begin by showing the first sum by itself is averageable. Assume $B < B'$, meaning that all rows of $B$ occur above all rows of $B'$. Then
    \begin{align*}
        \sum_{k \in B} \sum_{\ell \in B'} \big( \lambda_k^{\uparrow \uparrow} \lambda_\ell^{\uparrow \uparrow} + \frac{1}{\Delta_{ji}} \cdot \lambda^{\uparrow \uparrow}_{\max(k,\ell)} \big)\cdot |c^{\lambda}_{ B B' \to ij}|^2 & = \Big(\sum_{k \in B} \sum_{\ell \in B'}  \big(\lambda_k^{\uparrow \uparrow} \lambda_\ell^{\uparrow \uparrow} + \frac{1}{\Delta_{ji}} \cdot \lambda^{\uparrow \uparrow}_{\ell} \big) \Big)\cdot |c^{\lambda}_{ B B' \to ij}|^2 \\
        & = \Big(\sum_{k \in B} \big( \lambda_k^{\uparrow \uparrow} + \frac{1}{\Delta_{ji}}\big)\Big) \cdot \Big( \sum_{\ell \in B'} \lambda_\ell^{\uparrow \uparrow} \Big)\cdot |c^{\lambda}_{ B B' \to ij}|^2 \\
        & = \Big(\sum_{k \in B} \big( \lambda_k^{ \uparrow} + \frac{1}{\Delta_{ji}}\big)\Big) \cdot \Big( \sum_{\ell \in B'} \lambda_\ell^{\uparrow} \Big)\cdot |c^{\lambda}_{ B B' \to ij}|^2.
    \end{align*}
    The last step holds since the average value on any block of a legal estimator is the same. Undoing the factorizations thus gives:
    \begin{equation*}\sum_{k \in B} \sum_{\ell \in B'} \big( \lambda_k^{\uparrow \uparrow} \lambda_\ell^{\uparrow \uparrow} + \frac{1}{\Delta_{ji}} \cdot \lambda^{\uparrow \uparrow}_{\max(k,\ell)} \big)\cdot |c^{\lambda}_{ B B' \to ij}|^2 = \sum_{k \in B} \sum_{\ell \in B'} \big( \lambda_k^{\uparrow} \lambda_\ell^{ \uparrow} + \frac{1}{\Delta_{ji}} \cdot \lambda^{ \uparrow}_{\max(k,\ell)} \big)\cdot |c^{\lambda}_{ B B' \to ij}|^2.
    \end{equation*}
    The result also holds if $B > B'$, by swapping $k \leftrightarrow \ell$ in the above reasoning. So, 
    \begin{equation} \label{eq:averaging-dummy-2}
        \sum_{B \neq B'} \sum_{k \in B} \sum_{\ell \in B'} \big( \lambda_k^{\uparrow \uparrow} \lambda_\ell^{\uparrow \uparrow} + \frac{1}{\Delta_{ji}} \cdot \lambda^{\uparrow \uparrow}_{\max(k,\ell)} \big)\cdot |c^{\lambda}_{ B B' \to ij}|^2 = \sum_{B \neq B'} \sum_{k \in B} \sum_{\ell \in B'} \big( \lambda_k^{\uparrow } \lambda_\ell^{\uparrow } + \frac{1}{\Delta_{ji}} \cdot \lambda^{\uparrow }_{\max(k,\ell)} \big)\cdot |c^{\lambda}_{ B B' \to ij}|^2.
    \end{equation}
    We now rewrite the remaining two sums of \Cref{eq:averaging-dummy-1} in the following way:
    \begin{equation}
        \sum_{B} \sum_{k,\ell \in B} \big( \lambda_k^{\uparrow \uparrow} \lambda_\ell^{\uparrow \uparrow} + \frac{1}{\Delta_{ji}} \cdot \lambda^{\uparrow \uparrow}_{\max(k,\ell)} \big) \cdot |c^{\lambda}_{BB \to ij}|^2 + \sum_{B} \sum_{k \in B} \frac{1}{\Delta_{ji}} \cdot \big( (\lambda_k^{\uparrow \uparrow})^2  + \frac{1}{\Delta_{ji}} \cdot \lambda^{\uparrow \uparrow}_k \big)\cdot |c^{\lambda}_{ B B \to ij}|^2. \label{eq:averaging-dummy-3}
    \end{equation}
    The first part of the first sum in \Cref{eq:averaging-dummy-3} is individually averageable:
    \begin{align*}
        \sum_{B} \sum_{k,\ell \in B} \lambda_k^{\uparrow \uparrow} \lambda_\ell^{\uparrow \uparrow} \cdot |c^{\lambda}_{BB \to ij}|^2 & = \sum_B \Big( \sum_{k \in B} \lambda_{k}^{\uparrow \uparrow} \Big)\Big( \sum_{\ell \in B} \lambda_\ell^{\uparrow \uparrow} \Big) \cdot |c^{\lambda}_{BB \to ij}|^2 \\
        & =  \sum_B \Big( \sum_{k \in B} \lambda_{k}^{\uparrow} \Big)\Big( \sum_{\ell \in B} \lambda_\ell^{\uparrow} \Big) \cdot |c^{\lambda}_{BB \to ij}|^2 = \sum_{B} \sum_{k,\ell \in B} \lambda_k^{\uparrow } \lambda_\ell^{\uparrow } \cdot |c^{\lambda}_{BB \to ij}|^2.
    \end{align*}
    Similarly the second part of the second sum:
    \begin{equation*}
        \frac{1}{\Delta_{ji}^2} \sum_{k \in B} \lambda_k^{\uparrow \uparrow} \cdot |c^{\lambda}_{BB \to ij}|^2 = \frac{1}{\Delta_{ji}^2} \sum_{k \in B} \lambda_k^{\uparrow} \cdot |c^{\lambda}_{BB \to ij}|^2.
    \end{equation*}
    The remaining two pieces of \Cref{eq:averaging-dummy-3} combine to give an averageable expression. We have
    \begin{align} 
        & \frac{1}{\Delta_{ji}} \sum_{B} \sum_{k, \ell \in B} \lambda^{\uparrow \uparrow}_{\max(k,\ell)} \cdot |c^{\lambda}_{BB \to ij}|^2 + \frac{1}{\Delta_{ji}}\sum_{B} \sum_{k \in B} (\lambda_k^{\uparrow \uparrow})^2 \cdot |c^{\lambda}_{BB \to ij}|^2 \nonumber \\
        & = \frac{1}{\Delta_{ji}} \sum_{B} \Big(\sum_{k \in B} \big(N_B(k) \cdot \lambda_{k}^{\uparrow \uparrow } + (\lambda^{\uparrow \uparrow}_k)^2 \big) \Big) \cdot |c^{\lambda}_{BB \to ij}|^2 \nonumber \\
        & = \frac{1}{\Delta_{ji}} \sum_{B} \Big( \sum_{k \in B} \lambda^{\uparrow \uparrow}_k \cdot \big(N_B(k) + \lambda_{k}^{\uparrow \uparrow} \big)\Big) \cdot |c^{\lambda}_{BB \to ij}|^2, \label{eq:averaging-dummy-4}
    \end{align}
    where $N_B(k) = \big| \{ (k_1,k_2) \in B \times B \, | \, \max(k_1,k_2) = k \} \big|$. That is, $N_B(k)$ is the number of times $k$ appears in the maximum of a pair of indices in $B$. We have $N_B(k) = 2k-1 - 2\sum_{B' < B} |B'|$, for all $k \in B$. Thus,
    \begin{equation*}
        N_B(k) + \lambda_{k}^{\uparrow \uparrow} = \Big( 2k-1 - 2\sum_{B' < B} |B'| \Big) + \Big( \lambda_B + d + 1 - 2k \Big) = \lambda_B + d - 2 \sum_{B' < B} |B'| \eqcolon N'_B,
    \end{equation*}
    where $N'_B$ is independent of $k \in B$. Therefore, using the fact that $\lambda_k^{\uparrow} = \lambda_B^{\uparrow}$ for all $k \in B$, we have:
    \begin{align*}
        \eqref{eq:averaging-dummy-4} & = \frac{1}{\Delta_{ji}} \sum_{B} N_B' \cdot \Big( \sum_{k \in B} \lambda_k^{\uparrow \uparrow} \Big) \cdot |c^{\lambda}_{BB \to ij}|^2  \\
        & = \frac{1}{\Delta_{ji}} \sum_{B} N_B' \cdot \Big( \sum_{k \in B} \lambda_k^{\uparrow} \Big) \cdot |c^{\lambda}_{BB \to ij}|^2 \\
        & = \frac{1}{\Delta_{ji}} \sum_{B} \lambda_B^{\uparrow} \cdot \Big(\sum_{k \in B} N_B(k) + \lambda_k^{\uparrow \uparrow}\Big) \cdot |c^{\lambda}_{BB \to ij}|^2 \\
        & = \frac{1}{\Delta_{ji}} \sum_{B} \lambda_B^{\uparrow} \cdot \Big(\sum_{k \in B} N_B(k) + \lambda_k^{\uparrow}\Big) \cdot |c^{\lambda}_{BB \to ij}|^2 \\
        & = \frac{1}{\Delta_{ji}} \sum_{B} \Big(\sum_{k \in B} \big( N_B(k) \cdot \lambda_k^{\uparrow} + (\lambda_k^{\uparrow})^2 \big) \Big) \cdot |c^{\lambda}_{BB \to ij}|^2
    \end{align*}
    That is, 
    \begin{align*}
        &\frac{1}{\Delta_{ji}} \sum_{B} \sum_{k, \ell \in B} \lambda^{\uparrow \uparrow}_{\max(k,\ell)} \cdot |c^{\lambda}_{BB \to ij}|^2 + \frac{1}{\Delta_{ji}}\sum_{B} \sum_{k \in B} (\lambda_k^{\uparrow \uparrow})^2 \cdot |c^{\lambda}_{BB \to ij}|^2 \\
        & =\frac{1}{\Delta_{ji}} \sum_{B} \sum_{k, \ell \in B} \lambda^{ \uparrow}_{\max(k,\ell)} \cdot |c^{\lambda}_{BB \to ij}|^2 + \frac{1}{\Delta_{ji}}\sum_{B} \sum_{k \in B} (\lambda_k^{\uparrow })^2 \cdot |c^{\lambda}_{BB \to ij}|^2.
    \end{align*}
    Thus, \Cref{eq:averaging-dummy-3} is averageable, and hence \Cref{eq:averaging-dummy-1} is averageable too. 
    \end{proof}

\subsection{Deferred proofs}


\subsubsection{Proof of \Cref{lem:two_step_induction_boundary}} \label{sec:deferred_bdy_proof}

The proofs in this section involve elementary, but lengthy calculations. We begin with a few helper~lemmas.

\begin{lemma} \label{lem:bdy_lem_1}
    Let $\lambda$ be a Young diagram of length at most $d$, and let $i \in [d]$. We have
    \begin{equation*}
        \sum_{k=1}^{d} \big( \lambda_k^{\uparrow \uparrow} - \lambda_{k+1}^{\uparrow \uparrow} \big) \cdot D^{\lambda}_{k \to i} = (C_i^\lambda +1 ) \cdot D^{\lambda}_{d \to i},
    \end{equation*}
    where we take $\lambda_{d+1}^{\uparrow \uparrow} = 0$. 
\end{lemma}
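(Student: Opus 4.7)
The plan is to reduce the claim to Lemma~\ref{lem:the_combinatorial_identity_staircase}, which states that $\sum_{k=1}^d \lambda_k^{\uparrow\uparrow} \cdot |c^\lambda_{k \to i}|^2 = (C_i^\lambda + 1) \cdot D^\lambda_{d \to i}$. The bridge between that identity and the one we want is Lemma~\ref{lem:clebsch-gordan-to-truncated-tableau}, which says $|c^\lambda_{k \to i}|^2 = D^\lambda_{k \to i} - D^\lambda_{k-1 \to i}$. So if I can rewrite the left-hand side of the target identity as a sum involving the differences $D^\lambda_{k \to i} - D^\lambda_{k-1 \to i}$, I will be done.

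The natural move is Abel summation (summation by parts). Concretely, I would split
\begin{equation*}
\sum_{k=1}^d (\lambda_k^{\uparrow\uparrow} - \lambda_{k+1}^{\uparrow\uparrow}) \cdot D^\lambda_{k \to i}
= \sum_{k=1}^d \lambda_k^{\uparrow\uparrow} \cdot D^\lambda_{k \to i} \;-\; \sum_{k=1}^d \lambda_{k+1}^{\uparrow\uparrow} \cdot D^\lambda_{k \to i}.
\end{equation*}
Reindexing the second sum by $j = k+1$ and using the boundary convention $\lambda_{d+1}^{\uparrow\uparrow} = 0$ turns it into $\sum_{k=2}^d \lambda_k^{\uparrow\uparrow} \cdot D^\lambda_{k-1 \to i}$. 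Combining this with the first sum (and using $D^\lambda_{0 \to i} = 0$, which holds from Notation~\ref{not:D_symbols} since $i \geq 1$) yields
\begin{equation*}
\sum_{k=1}^d (\lambda_k^{\uparrow\uparrow} - \lambda_{k+1}^{\uparrow\uparrow}) \cdot D^\lambda_{k \to i}
= \sum_{k=1}^d \lambda_k^{\uparrow\uparrow} \cdot \bigl(D^\lambda_{k \to i} - D^\lambda_{k-1 \to i}\bigr).
\end{equation*}

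At this point I would apply Lemma~\ref{lem:clebsch-gordan-to-truncated-tableau} to rewrite each difference $D^\lambda_{k \to i} - D^\lambda_{k-1 \to i}$ as $|c^\lambda_{k \to i}|^2$, and then invoke Lemma~\ref{lem:the_combinatorial_identity_staircase} to conclude that the resulting sum equals $(C_i^\lambda + 1) \cdot D^\lambda_{d \to i}$, which is exactly the right-hand side. Since both of the cited lemmas are already established and the manipulation is a one-line Abel summation plus boundary bookkeeping, I do not foresee any real obstacle here; the only thing to be careful about is the boundary term at $k = 0$, which vanishes automatically from the definition $D^\lambda_{0 \to i} = 0$, and at $k = d+1$, which vanishes by the convention $\lambda_{d+1}^{\uparrow\uparrow} = 0$ stipulated in the lemma.
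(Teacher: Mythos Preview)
Your proposal is correct and matches the paper's proof essentially line for line: the paper also performs the Abel summation to obtain $\sum_{k=1}^d \lambda_k^{\uparrow\uparrow}\,(D^\lambda_{k\to i}-D^\lambda_{k-1\to i})$, invokes Lemma~\ref{lem:clebsch-gordan-to-truncated-tableau} to replace the difference by $|c^\lambda_{k\to i}|^2$, and then applies Lemma~\ref{lem:the_combinatorial_identity_staircase}. Your treatment of the boundary terms is likewise identical to the paper's.
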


\begin{proof}
    We have
     \begin{equation*}
        \sum_{k=1}^{d} \big( \lambda_k^{\uparrow \uparrow} - \lambda_{k+1}^{\uparrow \uparrow} \big) \cdot D^{\lambda}_{k \to i} = \sum_{k=1}^{d} \lambda_k^{\uparrow \uparrow} \cdot \big( D^{\lambda}_{k \to i} - D^{\lambda}_{k-1 \to i} \big) = \sum_{k=1}^{d} \lambda_k^{\uparrow \uparrow} \cdot |c^{\lambda}_{k \to i}|^2 = (C_i^\lambda + 1) \cdot D^\lambda_{d \to i}. 
    \end{equation*}
    In the second sum, $D^{\lambda}_{0 \to i} = 0$ by convention. The second equality is by \Cref{lem:clebsch-gordan-to-truncated-tableau}; the third is by \Cref{lem:the_combinatorial_identity_staircase}. 
\end{proof}

\begin{notation}
    In the following lemmas and their proofs, it will be useful to abbreviate the following quantity:
    \begin{equation*}
        B^\lambda_{ij} \coloneq (C^\lambda_i - C^\lambda_d)  \cdot D^{\lambda}_{d \to ij} - (C^\lambda_i - C^\lambda_d + 1)  \cdot D^{\lambda}_{d-1 \to ij}. 
    \end{equation*}
    We will drop the superscript when $\lambda$ is clear from context.
\end{notation}

\begin{lemma}\label{lem:bdy_lem_k}
    Let $\lambda$ be a Young diagram of length at most $d$, and let $i, j \in [d]$. We have
    \begin{equation*}
        \sum_{k=1}^{d} \lambda^{\uparrow\uparrow}_k \cdot |c^{\lambda}_{kd \to ij}|^2 = B^\lambda_{ij}  + (C^\lambda_d+1) \cdot \big(F(d, d) - F(d, d-1)\big).
    \end{equation*}
\end{lemma}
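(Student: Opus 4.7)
The plan is to reduce the left-hand side to two pieces, one coming from the ``bulk'' $k \leq d-1$ and one from the boundary term $k = d$, each of which becomes tractable once we apply a summation-by-parts trick. Specifically, introduce the partial column sum
\begin{equation*}
G(k) \coloneq F(k,d) - F(k,d-1) = \sum_{k'=1}^{k} |c^\lambda_{k'd \to ij}|^2,
\end{equation*}
so that $|c^\lambda_{kd \to ij}|^2 = G(k) - G(k-1)$. Summation by parts (with $\lambda^{\uparrow\uparrow}_{d+1} \coloneq 0$) then gives
\begin{equation*}
\sum_{k=1}^d \lambda^{\uparrow\uparrow}_k \cdot |c^\lambda_{kd \to ij}|^2 = \sum_{k=1}^{d-1}\bigl(\lambda^{\uparrow\uparrow}_k - \lambda^{\uparrow\uparrow}_{k+1}\bigr)\cdot G(k) \;+\; \lambda^{\uparrow\uparrow}_d\cdot G(d).
\end{equation*}
The boundary term is already in the desired form, since $\lambda^{\uparrow\uparrow}_d = \lambda_d + d + 1 - 2d = C^\lambda_d + 1$, so it contributes exactly $(C^\lambda_d+1)(F(d,d) - F(d,d-1))$.

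Next I would handle the bulk sum over $k \leq d-1$. In this regime $k \leq d$ and $k \leq d-1$, so the ``easy'' case of \Cref{thm:partial_sums_complete} (the $s \leq t$ case) applies and yields
\begin{equation*}
G(k) = D^\lambda_{k\to i}\cdot \bigl(D^{\lambda+e_i}_{d\to j} - D^{\lambda+e_i}_{d-1\to j}\bigr) = D^\lambda_{k\to i}\cdot |c^{\lambda+e_i}_{d\to j}|^2,
\end{equation*}
where the second equality uses \Cref{eq:CG_equal_to_diff_of_Ds}. Factoring $|c^{\lambda+e_i}_{d\to j}|^2$ out, the bulk sum becomes $|c^{\lambda+e_i}_{d\to j}|^2 \cdot \sum_{k=1}^{d-1}(\lambda^{\uparrow\uparrow}_k - \lambda^{\uparrow\uparrow}_{k+1})\cdot D^\lambda_{k\to i}$, and this last sum can be evaluated by \Cref{lem:bdy_lem_1}, after subtracting off the missing $k=d$ term. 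Using $\lambda^{\uparrow\uparrow}_d = C^\lambda_d + 1$, this reduces the bulk sum to $(C^\lambda_i - C^\lambda_d)\cdot D^\lambda_{d\to i}\cdot |c^{\lambda+e_i}_{d\to j}|^2$.

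It remains to verify that this last expression equals $B^\lambda_{ij}$. Expanding $|c^{\lambda+e_i}_{d\to j}|^2 = D^{\lambda+e_i}_{d\to j} - D^{\lambda+e_i}_{d-1\to j}$ and using $D^\lambda_{s\to ij} = D^\lambda_{s\to i}\cdot D^{\lambda+e_i}_{s\to j}$, the first piece gives $(C^\lambda_i - C^\lambda_d)\cdot D^\lambda_{d\to ij}$. The second piece is $-(C^\lambda_i - C^\lambda_d)\cdot D^\lambda_{d\to i}\cdot D^{\lambda+e_i}_{d-1\to j}$, which by the ``shift'' identity $(C^\lambda_i - C^\lambda_d)\cdot D^\lambda_{d\to i} = (C^\lambda_i - C^\lambda_d + 1)\cdot D^\lambda_{d-1\to i}$ from \Cref{eq:relation_between_D's} equals $-(C^\lambda_i - C^\lambda_d + 1)\cdot D^\lambda_{d-1\to ij}$. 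Summing the two pieces reproduces $B^\lambda_{ij}$ exactly.

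I do not expect any serious obstacle here: the only slightly delicate point is bookkeeping the ``missing'' $k=d$ term when converting the partial sum $\sum_{k=1}^{d-1}$ into the full sum handled by \Cref{lem:bdy_lem_1}. The combinatorial cancellations all fall out of the two key one-step identities of \Cref{lem:clebsch-gordan-to-truncated-tableau} together with the $s \leq t$ case of \Cref{thm:partial_sums_complete}, so the main work is organizing the algebra so that the $(C^\lambda_d+1)\cdot G(d)$ term is isolated first and the rest is reduced to a one-step calculation in $\lambda+e_i$.
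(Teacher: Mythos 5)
Your proof is correct and follows essentially the same route as the paper's: summation by parts with $G(k) = F(k,d)-F(k,d-1)$, isolation of the boundary term via $\lambda_d^{\uparrow\uparrow} = C_d^\lambda+1$, reduction of the bulk via the $s\le t$ case of \Cref{thm:partial_sums_complete} and \Cref{lem:bdy_lem_1}, and the final conversion to $B_{ij}^\lambda$ using \Cref{eq:relation_between_D's}. The only cosmetic difference is that you recognize $D^{\lambda+e_i}_{d\to j}-D^{\lambda+e_i}_{d-1\to j}$ as $|c^{\lambda+e_i}_{d\to j}|^2$ before factoring it out, which is a harmless relabeling.
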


\begin{proof}
    Write $G_1(k) \coloneq F(k,d) - F(k,d-1)$, and note that $|c^{\lambda}_{kd \to ij}|^2 = G_1(k) - G_1(k-1)$, where we take $G_1(0) = 0$. We have
    \begin{equation} \label{eq:bdy_dummy_1}
        \sum_{k=1}^d \lambda_k^{\uparrow \uparrow} \cdot |c^{\lambda}_{kd \to ij}|^2  = \sum_{k=1}^d \lambda_k^{\uparrow \uparrow} \cdot \big(G_1(k) - G_1(k-1) \big)  = \Big(\sum_{k=1}^{d-1} (\lambda_k^{\uparrow \uparrow} - \lambda_{k+1}^{\uparrow \uparrow}) \cdot G_1(k)\Big) + \lambda_d^{\uparrow \uparrow} \cdot G_1(d).
    \end{equation}
    The last term can be rewritten as
    \begin{equation} \label{eq:bdy_dummy_2}
        \lambda_d^{\uparrow \uparrow} \cdot G_1(d) = (C_d^{\lambda} + 1) \cdot \big(F(d, d) - F(d, d-1)\big).
    \end{equation}
    In the remaining sum, we have $k \leq d-1$, so that from \Cref{thm:partial_sums_complete}, we have
    \begin{equation*}
        G_1(k) = D^{\lambda}_{k \to i} \cdot \big( D^{\lambda+e_i}_{d \to j} - D^{\lambda+e_i}_{d-1 \to j} \big).
    \end{equation*}
    Then, by \Cref{lem:bdy_lem_1}, we have
    \begin{align*}
        \sum_{k=1}^{d-1} (\lambda_k^{\uparrow \uparrow} - \lambda_{k+1}^{\uparrow \uparrow}) \cdot D^{\lambda}_{k \to i} & = \Big(\sum_{k=1}^{d} (\lambda_k^{\uparrow \uparrow} - \lambda_{k+1}^{\uparrow \uparrow}) \cdot D^{\lambda}_{k \to i} \Big) - \lambda_d^{\uparrow \uparrow} \cdot D^{\lambda}_{d \to i} \\
        & = (C^\lambda_i + 1) \cdot D^{\lambda}_{d \to i} - (C^\lambda_d+1) \cdot D^{\lambda}_{d \to i} \\
        & = (C_i^\lambda - C_d^{\lambda}) \cdot D^\lambda_{d \to i}.
    \end{align*}
    So, 
    \begin{align} 
        \sum_{k=1}^{d-1} (\lambda_k^{\uparrow \uparrow} - \lambda_{k+1}^{\uparrow \uparrow}) \cdot G_1(k)  & = \Big(\sum_{k=1}^{d-1} (\lambda_k^{\uparrow \uparrow} - \lambda_{k+1}^{\uparrow \uparrow}) \cdot D^{\lambda}_{k \to i}\Big) \cdot \big( D^{\lambda+e_i}_{d \to j} - D^{\lambda+e_i}_{d-1 \to j} \big) \nonumber \\
        & = (C_i^\lambda - C_d^\lambda) \cdot D^\lambda_{d \to i} \cdot \big( D^{\lambda+e_i}_{d \to j} - D^{\lambda+e_i}_{d-1 \to j} \big) \nonumber \\
        & = (C_i^\lambda - C_d^\lambda) \cdot D^\lambda_{d \to ij} - (C_i^\lambda - C_d^\lambda) \cdot D^{\lambda}_{d \to i} \cdot D^{\lambda+e_i}_{d-1 \to j} \nonumber \\
        & = (C_i^\lambda - C_d^\lambda) \cdot D^\lambda_{d \to ij} - (C_i^\lambda - C_d^\lambda +1) \cdot D^\lambda_{d-1 \to ij} \tag{\Cref{lem:clebsch-gordan-to-truncated-tableau}} \\
        & = B_{ij}. \label{eq:bdy_dummy_3}
    \end{align}
    The lemma follows by substituting \Cref{eq:bdy_dummy_2,eq:bdy_dummy_3} into \Cref{eq:bdy_dummy_1}. 
    
\end{proof}

\begin{lemma}\label{lem:bdy_lem_ell}
    Let $\lambda$ be a Young diagram of length at most $d$, and let $i, j \in [d]$. We have
    \begin{align*}
        \sum_{\ell=1}^{d} \lambda^{\uparrow\uparrow}_\ell \cdot |c^{\lambda}_{d\ell \to ij}|^2 & = \big( 1 - \frac{1}{\Delta_{ji}^2} \big) \cdot B_{ji}  + \frac{1}{\Delta_{ji}^2} \cdot B_{ij} + (C_d^{\lambda} + 1) \cdot \big(F(d, d) - F(d-1, d)\big).
    \end{align*}
\end{lemma}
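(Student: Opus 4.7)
The plan is to mirror the argument used for \Cref{lem:bdy_lem_k}, with the key additional ingredient being the more complicated branch of \Cref{thm:partial_sums_complete} (the $s \geq t$ case, which carries the $1/\Delta_{ji}^2$ factors). Define
\begin{equation*}
    G_2(\ell) \coloneq F(d,\ell) - F(d-1,\ell),
\end{equation*}
so that $|c^{\lambda}_{d\ell \to ij}|^2 = G_2(\ell) - G_2(\ell-1)$ (with the convention $G_2(0) = 0$). Then summation by parts gives
\begin{equation*}
    \sum_{\ell=1}^d \lambda_\ell^{\uparrow\uparrow} \cdot |c^{\lambda}_{d\ell \to ij}|^2
    = \sum_{\ell=1}^{d-1} (\lambda_\ell^{\uparrow\uparrow} - \lambda_{\ell+1}^{\uparrow\uparrow}) \cdot G_2(\ell) + \lambda_d^{\uparrow\uparrow} \cdot G_2(d).
\end{equation*}
The second term is immediate: $\lambda_d^{\uparrow\uparrow} = C_d^{\lambda} + 1$, and $G_2(d) = F(d,d) - F(d-1,d)$, so it produces precisely the $(C_d^{\lambda}+1)(F(d,d)-F(d-1,d))$ contribution in the claimed formula.

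For the remaining sum, I would apply \Cref{thm:partial_sums_complete} in the $s \geq t$ regime (since $\ell \leq d-1 < d$, both $s = d$ and $s = d-1$ dominate $t = \ell$):
\begin{equation*}
    F(s,\ell) = \Big(1 - \tfrac{1}{\Delta_{ji}^2}\Big) D^{\lambda}_{\ell \to j} \cdot D^{\lambda+e_j}_{s \to i} + \tfrac{1}{\Delta_{ji}^2} D^{\lambda}_{\ell \to i} \cdot D^{\lambda+e_i}_{s \to j}.
\end{equation*}
Subtracting the $s = d-1$ expression from the $s = d$ expression, the dependence on $\ell$ factors cleanly, yielding
\begin{equation*}
    G_2(\ell) = \Big(1 - \tfrac{1}{\Delta_{ji}^2}\Big) D^{\lambda}_{\ell \to j} \cdot (D^{\lambda+e_j}_{d \to i} - D^{\lambda+e_j}_{d-1 \to i}) + \tfrac{1}{\Delta_{ji}^2} D^{\lambda}_{\ell \to i} \cdot (D^{\lambda+e_i}_{d \to j} - D^{\lambda+e_i}_{d-1 \to j}).
\end{equation*}

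Next I would pull the $\Delta_{ji}$-coefficients out of the $\ell$-sum, leaving two applications of \Cref{lem:bdy_lem_1}:
\begin{equation*}
    \sum_{\ell=1}^{d-1} (\lambda_\ell^{\uparrow\uparrow} - \lambda_{\ell+1}^{\uparrow\uparrow}) \cdot D^{\lambda}_{\ell \to j} = (C_j^{\lambda} - C_d^{\lambda}) \cdot D^{\lambda}_{d \to j},
\end{equation*}
with the analogous identity for $j$ replaced by $i$ (these follow from \Cref{lem:bdy_lem_1} by subtracting off the $k = d$ boundary term, exactly as in the proof of \Cref{lem:bdy_lem_k}). Substituting, the remaining sum equals
\begin{equation*}
    \Big(1 - \tfrac{1}{\Delta_{ji}^2}\Big) (C_j^{\lambda} - C_d^{\lambda}) D^{\lambda}_{d \to j} (D^{\lambda+e_j}_{d \to i} - D^{\lambda+e_j}_{d-1 \to i}) + \tfrac{1}{\Delta_{ji}^2} (C_i^{\lambda} - C_d^{\lambda}) D^{\lambda}_{d \to i} (D^{\lambda+e_i}_{d \to j} - D^{\lambda+e_i}_{d-1 \to j}).
\end{equation*}

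The final step is to recognize each of the two parenthesized expressions as the quantity that, in the proof of \Cref{lem:bdy_lem_k}, was shown via \Cref{lem:clebsch-gordan-to-truncated-tableau} to collapse into $B_{ji}$ and $B_{ij}$ respectively. Concretely, the identity
\begin{equation*}
    (C_i^{\lambda} - C_d^{\lambda}) \cdot D^{\lambda}_{d \to i} \cdot (D^{\lambda+e_i}_{d \to j} - D^{\lambda+e_i}_{d-1 \to j}) = B_{ij}
\end{equation*}
(which was established in equation~(56) of the \Cref{lem:bdy_lem_k} proof) has an obvious $i \leftrightarrow j$ counterpart giving $B_{ji}$; combining gives exactly the claimed expression. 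The only step that requires real care is verifying that the $i \leftrightarrow j$-swapped version of that identity really does hold term-by-term, since in principle the passage from $D^{\lambda+e_j}_{d-1 \to i}$ to $D^{\lambda+e_j}_{d \to i}$ involves the contents $C_i^{\lambda+e_j}$ and $C_d^{\lambda+e_j}$ rather than $C_i^{\lambda}$ and $C_d^{\lambda}$. This is the main (and essentially only) obstacle: one must check that the case analysis ($i = j$, $j = d$, etc.) does not spoil the clean collapse to $B_{ji}$. In the generic case $i \neq j$ and $j \neq d$ these contents agree, and the remaining boundary cases either make $\Delta_{ji}^2 = 1$ (reducing to the horizontal/vertical subcases already handled in \Cref{lem:bdy_lem_k}) or force $D^{\lambda+e_j}_{d-1 \to i}$ to vanish, so the identity goes through unchanged.
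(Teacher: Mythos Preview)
Your proposal is correct and follows essentially the same route as the paper: define $G_2(\ell) = F(d,\ell) - F(d-1,\ell)$, sum by parts, isolate the boundary term $(C_d^\lambda+1)(F(d,d)-F(d-1,d))$, expand $G_2(\ell)$ via the $s \geq t$ case of \Cref{thm:partial_sums_complete}, and then reduce each piece to $B_{ij}$ or $B_{ji}$ exactly as in \Cref{eq:bdy_dummy_3}. Your closing worry about the $i \leftrightarrow j$ swap is unnecessary, since the relation from \Cref{lem:clebsch-gordan-to-truncated-tableau} is applied to the $D^{\lambda}_{d \to j}$ factor (with contents in $\lambda$), not to $D^{\lambda+e_j}_{d \to i}$, so no case analysis on $\lambda+e_j$ is needed.
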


This lemma's proof is almost the same as the previous lemma's, only slightly more complicated by having to use the $s \geq t$ case of \Cref{thm:partial_sums_complete}. 

\begin{proof}
    Write $G_2(\ell) \coloneq F(d,\ell) - F(d-1,\ell)$, and note that $|c^{\lambda}_{d\ell \to ij}|^2 = G_2(\ell) - G_2(\ell-1)$, where we take $G_2(0) = 0$. We have
    \begin{equation} \label{eq:bdy_ell_dummy_1}
        \sum_{\ell=1}^d \lambda_\ell^{\uparrow \uparrow} \cdot |c^{\lambda}_{d\ell \to ij}|^2  = \sum_{\ell=1}^d \lambda_\ell^{\uparrow \uparrow} \cdot \big(G_2(\ell) - G_2(\ell-1) \big)  = \Big(\sum_{\ell=1}^{d-1} (\lambda_\ell^{\uparrow \uparrow} - \lambda_{\ell+1}^{\uparrow \uparrow}) \cdot G_2(\ell)\Big) + \lambda_d^{\uparrow \uparrow} \cdot G_2(d).
    \end{equation}
    The last term can be rewritten as
    \begin{equation} \label{eq:bdy_ell_dummy_2}
        \lambda_d^{\uparrow \uparrow} \cdot G_2(d) = (C_d^{\lambda} + 1) \cdot \big(F(d, d) - F(d-1, d)\big).
    \end{equation}
     In the remaining sum, we have $\ell \leq d-1$, so that from \Cref{thm:partial_sums_complete}, we have
    \begin{equation*}
        G_2(\ell) = \big(1 - \frac{1}{\Delta_{ji}^2}\big) \cdot D^{\lambda}_{\ell \to j} \cdot \big( D^{\lambda+e_j}_{d \to i} - D^{\lambda+e_j}_{d-1 \to i} \big) + \frac{1}{\Delta_{ji}^2}\cdot D^{\lambda}_{\ell \to i} \cdot \big( D^{\lambda+e_i}_{d \to j} - D^{\lambda+e_i}_{d-1 \to j} \big). 
    \end{equation*}
    Then, by calculations similar to those in \Cref{eq:bdy_dummy_3} in the previous lemma's proof, we have
    \begin{align} 
        \sum_{\ell=1}^{d-1} ( \lambda_{\ell}^{\uparrow \uparrow} - \lambda_{\ell+1}^{\uparrow \uparrow} ) \cdot G_2 (\ell) & = \big( 1 - \frac{1}{\Delta_{ji}^2} \big) \cdot \Big( (C_j^\lambda - C_d^\lambda) \cdot D^\lambda_{d \to ij} - (C_j^\lambda - C_d^\lambda +1) \cdot D^\lambda_{d-1 \to ij} \Big) \nonumber \\
        & \qquad + \frac{1}{\Delta_{ji}^2} \cdot \Big((C_i^\lambda - C_d^\lambda) \cdot D^\lambda_{d \to ij} - (C_i^\lambda - C_d^\lambda +1) \cdot D^\lambda_{d-1 \to ij}\Big) \nonumber \\
        & = \big( 1 - \frac{1}{\Delta_{ji}^2} \big) \cdot B_{ji}  + \frac{1}{\Delta_{ji}^2} \cdot B_{ij}. \label{eq:bdy_ell_dummy_3}
    \end{align}
    The lemma follows by substituting \Cref{eq:bdy_ell_dummy_2,eq:bdy_ell_dummy_3} into \Cref{eq:bdy_ell_dummy_1}. 
\end{proof}

We are now ready to prove \Cref{lem:two_step_induction_boundary}.

\begin{lemma}[\Cref{lem:two_step_induction_boundary}, restated] 
    We have the following equation:
    \begin{equation*}
        \sum_{\substack{k, \ell \\ \max(k, \ell) = d}} (\lambda^{\uparrow\uparrow}_k\lambda^{\uparrow\uparrow}_{\ell} + \frac{1}{\Delta_{ji}} \lambda^{\uparrow\uparrow}_d) \cdot |c^{\lambda}_{k\ell \to ij}|^2 = (C^{\lambda}_i + 1)(C^{\lambda+e_i}_j + 1) \cdot D^{\lambda}_{d \to ij} - (C^{\lambda}_i + 2)(C_j^{\lambda+e_i} + 2) \cdot D^{\lambda}_{d-1 \to ij}.
    \end{equation*}
\end{lemma}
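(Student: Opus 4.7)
The plan is to reduce the sum over $\{(k,\ell):\max(k,\ell)=d\}$ to quantities already computed in \Cref{lem:bdy_lem_k,lem:bdy_lem_ell}. Concretely, I would write $\{(k,\ell):\max(k,\ell)=d\} = \{(k,d):k\in[d]\} \cup \{(d,\ell):\ell\in[d]\}$ with intersection $\{(d,d)\}$. Applying inclusion--exclusion, the first term of the LHS becomes
\begin{equation*}
\lambda_d^{\uparrow\uparrow}\Bigl(\sum_{k=1}^d \lambda_k^{\uparrow\uparrow}\,|c^{\lambda}_{kd\to ij}|^2 \;+\; \sum_{\ell=1}^d \lambda_\ell^{\uparrow\uparrow}\,|c^{\lambda}_{d\ell\to ij}|^2\Bigr) \;-\; (\lambda_d^{\uparrow\uparrow})^2\,|c^{\lambda}_{dd\to ij}|^2,
\end{equation*}
and the two inner sums are given verbatim by \Cref{lem:bdy_lem_k,lem:bdy_lem_ell}. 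For the corner weight $|c^{\lambda}_{dd\to ij}|^2$, I would write it as the second difference
\begin{equation*}
|c^{\lambda}_{dd\to ij}|^2 \;=\; F(d,d)-F(d,d-1)-F(d-1,d)+F(d-1,d-1),
\end{equation*}
and evaluate each of the four partial sums via \Cref{thm:partial_sums_complete}, using the $s\le t$ formula where applicable and the $s\ge t$ formula otherwise.

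For the half of the LHS carrying the factor $\tfrac{1}{\Delta_{ji}}\lambda_d^{\uparrow\uparrow}$, only the raw residual sum $\sum_{\max(k,\ell)=d}|c^{\lambda}_{k\ell\to ij}|^2$ is needed. This equals $F(d,d)-F(d-1,d-1)$, which collapses via \Cref{thm:partial_sums_complete} (both evaluations being diagonal, $s=t$) to $D^{\lambda}_{d\to ij}-D^{\lambda}_{d-1\to ij}$, using the shorthand $D^{\lambda}_{s\to ij}=D^{\lambda}_{s\to i}\cdot D^{\lambda+e_i}_{s\to j}$ of \Cref{not:D_symbols_two-step}.

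The final step is to assemble these pieces, substitute $\lambda_d^{\uparrow\uparrow}=C_d^\lambda+1$, and simplify. The main obstacle will be the algebraic bookkeeping at this stage: the contribution $\bigl(1-\tfrac{1}{\Delta_{ji}^2}\bigr)B_{ji} + \tfrac{1}{\Delta_{ji}^2}B_{ij}$ from \Cref{lem:bdy_lem_ell} must combine with the $B_{ij}$ term from \Cref{lem:bdy_lem_k}, the corner subtraction, and the $\tfrac{1}{\Delta_{ji}}(C_d^\lambda+1)\bigl(D^{\lambda}_{d\to ij}-D^{\lambda}_{d-1\to ij}\bigr)$ piece, so as to yield precisely $(C_i^{\lambda}+1)(C_j^{\lambda+e_i}+1)D^{\lambda}_{d\to ij} - (C_i^{\lambda}+2)(C_j^{\lambda+e_i}+2)D^{\lambda}_{d-1\to ij}$. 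I expect this cancellation to go through using two identities: first, $C_j^{\lambda+e_i}=C_j^\lambda+\delta_{ij}$, which is what makes the $\Delta_{ji}^{-2}$-weighted combination of $B_{ij}$ and $B_{ji}$ collapse in the non-horizontal subcase (where $C_j^{\lambda+e_i}=C_j^\lambda$); and second, the one-step relation $(C_i-C_k)D^{\lambda}_{k\to i}=(C_i-C_k+1)D^{\lambda}_{k-1\to i}$ from \Cref{lem:clebsch-gordan-to-truncated-tableau}, which lets one rewrite each $D^{\lambda+e_i}_{d\to j}$ or $D^{\lambda+e_j}_{d\to i}$ factor appearing in the $F$-evaluations in terms of $D^\lambda_{d\to ij}$ and $D^\lambda_{d-1\to ij}$ alone. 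In the horizontal subcase $i=j$ all $\Delta_{ji}^{-2}$ factors are trivially $1$ and one must additionally use $C_j^{\lambda+e_i}=C_i^\lambda+1$; here the target coefficient of $D^\lambda_{d\to ij}$ becomes $(C_i+1)(C_i+2)$, and a separate direct check is warranted. Once these reductions are carried out, everything reduces to verifying that two rational functions of the contents agree, which is a finite, mechanical computation.
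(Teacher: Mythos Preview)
Your proposal is correct and follows essentially the same approach as the paper: the same inclusion--exclusion decomposition into the row sum, column sum, corner correction, and $\tfrac{1}{\Delta_{ji}}$ residual; the same invocation of \Cref{lem:bdy_lem_k,lem:bdy_lem_ell}; and the same use of the one-step relation from \Cref{lem:clebsch-gordan-to-truncated-tableau} to close the algebra. The paper organizes the final simplification slightly differently---it factors out $\lambda_d^{\uparrow\uparrow}=C_d^\lambda+1$ up front, then adds zero in the form of the product identity $(C_i^\lambda-C_d^\lambda)(C_j^{\lambda+e_i}-C_d^{\lambda+e_i})D_{d\to ij}^\lambda=(C_i^\lambda-C_d^\lambda+1)(C_j^{\lambda+e_i}-C_d^{\lambda+e_i}+1)D_{d-1\to ij}^\lambda$ (obtained by applying the one-step relation to each factor), which converts the linear-in-contents expression directly into the desired product form and handles the $\delta_{ij}$ case uniformly---but this is just a cleaner packaging of the same mechanical computation you describe.
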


\begin{proof}
    We start by splitting the left-hand side into four terms:
    \begin{align}
        \sum_{\substack{k, \ell \\ \max(k, \ell) = d}} (\lambda^{\uparrow\uparrow}_k\lambda^{\uparrow\uparrow}_{\ell} + \frac{1}{\Delta_{ji}} \lambda^{\uparrow\uparrow}_d) \cdot |c^{\lambda}_{k\ell \to ij}|^2 & = \sum_{k=1}^{d} \lambda^{\uparrow\uparrow}_k\lambda^{\uparrow\uparrow}_{d} \cdot |c^{\lambda}_{kd \to ij}|^2 + \sum_{\ell=1}^{d} \lambda^{\uparrow\uparrow}_d\lambda^{\uparrow\uparrow}_{\ell} \cdot |c^{\lambda}_{d\ell \to ij}|^2 \nonumber \\
        & \qquad - (\lambda_d^{\uparrow \uparrow})^2 \cdot |c^{\lambda}_{dd \to ij}|^2 + \sum_{\substack{k, \ell \\ \max(k,\ell) = d}} \frac{1}{\Delta_{ji}} \lambda_d^{\uparrow \uparrow} \cdot |c^{\lambda}_{k\ell \to ij}|^2 \label{eq:bdy_lem_dummy_1}.
    \end{align}
    Note that every term on the right is proportional to $\lambda_d^{\uparrow \uparrow} = C^\lambda_d+1$. Thus, we start by considering instead the right-hand side of this equation, divided by $\lambda_d^{\uparrow \uparrow}$:
    \begin{equation}
        \sum_{k=1}^{d} \lambda^{\uparrow\uparrow}_k \cdot |c^{\lambda}_{kd \to ij}|^2 + \sum_{\ell=1}^{d} \lambda^{\uparrow\uparrow}_{\ell} \cdot |c^{\lambda}_{d\ell \to ij}|^2  - \lambda_d^{\uparrow \uparrow} \cdot |c^{\lambda}_{dd \to ij}|^2 + \sum_{\substack{k, \ell \\ \max(k,\ell) = d}} \frac{1}{\Delta_{ji}} \cdot |c^{\lambda}_{k\ell \to ij}|^2 \label{eq:bdy_lem_dummy_2}.
    \end{equation}
    The first term of \Cref{eq:bdy_lem_dummy_2} is given by~\Cref{lem:bdy_lem_k} to be: 
    \begin{equation*}
        \sum_{k=1}^{d} \lambda^{\uparrow\uparrow}_k \cdot |c^{\lambda}_{kd \to ij}|^2 =  B_{ij}  + (C^\lambda_d+1) \cdot \big(F(d, d) - F(d, d-1)\big).
    \end{equation*}
    The second term of \Cref{eq:bdy_lem_dummy_2}, by \Cref{lem:bdy_lem_ell}, is:
    \begin{align*}
        \sum_{\ell=1}^{d} \lambda^{\uparrow\uparrow}_\ell \cdot |c^{\lambda}_{d\ell \to ij}|^2 & = \big( 1 - \frac{1}{\Delta_{ji}^2}\big) \cdot B_{ji} + \frac{1}{\Delta_{ji}^2} \cdot B_{ij}+ (C^\lambda_d+1) \cdot \big(F(d, d) - F(d-1, d)\big) \\
        & =  B_{ji} + \frac{1}{\Delta_{ji}^2} \cdot (B_{ij} - B_{ji}) + (C^\lambda_d+1) \cdot \big(F(d, d) - F(d-1, d)\big).
    \end{align*}
    Before pressing on, let us rewrite the difference $B_{ij} - B_{ji}$ as:
    \begin{align*}
        B_{ij} - B_{ji} & = (C^\lambda_i - C^\lambda_j) \cdot \big(D^{\lambda}_{d \to ij} - D^{\lambda}_{d-1 \to ij} \big) \\
        & = ( \delta_{ij} - \Delta_{ij} ) \cdot \big(D^{\lambda}_{d \to ij} - D^{\lambda}_{d-1 \to ij} \big),
    \end{align*}
    where we have used $\Delta_{ji} = C^{\lambda+e_i}_j - C^\lambda_i = C^\lambda_j -C^\lambda_i + \delta_{ij}$. Lastly, notice that $\delta_{ij} \neq 0$ implies $i = j$, which implies $\Delta_{ji} = 1$. Thus, $\delta_{ij}/\Delta_{ji}^2 = \delta_{ij}$, and 
    \begin{equation*}
        \sum_{\ell=1}^{d} \lambda^{\uparrow\uparrow}_\ell \cdot |c^{\lambda}_{d\ell \to ij}|^2 = B_{ji} + \big( \delta_{ij} -  \frac{1}{\Delta_{ji}}\big) \cdot \big(D^{\lambda}_{d \to ij} - D^{\lambda}_{d-1 \to ij} \big) + (C^\lambda_d+1)\cdot \big(F(d, d) - F(d-1, d)\big).
    \end{equation*}
     The third term of \Cref{eq:bdy_lem_dummy_2}, by \Cref{thm:partial_sums_complete}, is:
    \begin{equation*}
        - \lambda_d^{\uparrow \uparrow} \cdot |c^{\lambda}_{dd \to ij}|^2 = - (C^\lambda_d+1) \cdot \big(F(d,d) - F(d,d-1) - F(d-1,d) + F(d-1,d-1) \big).
    \end{equation*}
      The fourth term of \Cref{eq:bdy_lem_dummy_2}, by \Cref{thm:partial_sums_complete}, is:
      \begin{equation*}
          \sum_{\substack{k,\ell \\ \max(k,\ell) = d}} \frac{1}{\Delta_{ji}} \cdot |c^{\lambda}_{k \ell \to ij}|^2 = \frac{1}{\Delta_{ji}} \cdot \Big(\sum_{k,\ell = 1}^d |c^{\lambda}_{k \ell \to ij}|^2 - \sum_{k,\ell = 1}^{d-1} |c^{\lambda}_{k \ell \to ij}|^2 \Big) = \frac{1}{\Delta_{ji}} \cdot \big( D^{\lambda}_{d \to ij} - D^{\lambda}_{d-1 \to ij} \big).
      \end{equation*}
      Recombining the terms gives:
      \begin{align}
          \eqref{eq:bdy_lem_dummy_2} & = (B_{ij} + B_{ji} ) + \delta_{ij} \cdot \big( D^{\lambda}_{d \to ij} - D^{\lambda}_{d-1 \to ij} \big) + (C^\lambda_d+1) \cdot \big( F(d,d)-F(d-1,d-1) \big) \nonumber \\
          & = ( B_{ij} + B_{ji} ) +  \big(C^\lambda_d+1 + \delta_{ij}\big) \cdot \big( D^{\lambda}_{d \to ij} - D^{\lambda}_{d-1 \to ij} \big), \label{eq:bdy_lem_dummy_3}
      \end{align}
      here using \Cref{thm:partial_sums_complete} to substitute $F(s,s) = D^\lambda_{s \to ij}$. 
      Now, we also have
      \begin{equation*}
          B_{ij} + B_{ji}  = (C^\lambda_i + C^\lambda_j - 2C^\lambda_d) \cdot D^\lambda_{d \to ij} - (C^\lambda_i + C^\lambda_j - 2C^\lambda_d + 2) \cdot D^{\lambda}_{d -1 \to ij},
      \end{equation*}
      so that
      \begin{align*}
          \eqref{eq:bdy_lem_dummy_3} & = ( C_i^\lambda + C^\lambda_j -C^\lambda_d + 1 + \delta_{ij} ) \cdot D^\lambda_{d \to ij} - ( C_i^\lambda + C^\lambda_j -C^\lambda_d + 3 + \delta_{ij} ) \cdot D^\lambda_{d-1 \to ij} \\
          & = ( C_i^\lambda + C^{\lambda+e_i}_j -C^\lambda_d + 1) \cdot D^\lambda_{d \to ij} - ( C_i^\lambda + C^{\lambda+e_i}_j -C^\lambda_d + 3) \cdot D^\lambda_{d-1 \to ij}.
      \end{align*}
      Thus, 
      \begin{equation}
          \eqref{eq:bdy_lem_dummy_1} = (C^\lambda_d+1) \cdot ( C_i^\lambda + C^{\lambda+e_i}_j -C^\lambda_d + 1) \cdot D^\lambda_{d \to ij} - (C^\lambda_d+1) \cdot ( C_i^\lambda + C^{\lambda+e_i}_j -C^\lambda_d + 3 ) \cdot D^\lambda_{d-1 \to ij}. \label{eq:bdy_lem_dummy_4}
      \end{equation}
      Our final ingredient comes from the following identity:
      \begin{align*}
          (C^\lambda_i - C^\lambda_d) \cdot (C^{\lambda+e_i}_j - C^{\lambda+e_i}_d) \cdot D^\lambda_{d \to ij} & = \Big((C^\lambda_i - C^\lambda_d) \cdot D^\lambda_{d \to i} \Big) \cdot \Big( (C^{\lambda+e_i}_j - C^{\lambda+e_i}_d) \cdot D^{\lambda+e_i}_{d \to j}\Big) \\
          & = \Big((C^\lambda_i - C^\lambda_d +1) \cdot D^\lambda_{d-1 \to i} \Big) \cdot \Big( (C^{\lambda+e_i}_j - C^{\lambda+e_i}_d+1) \cdot D^{\lambda+e_i}_{d-1 \to j}\Big) \\
          & = (C^\lambda_i - C^\lambda_d+1) \cdot (C^{\lambda+e_i}_j - C^{\lambda+e_i}_d+1) \cdot D^\lambda_{d-1 \to ij},
      \end{align*}
      using \Cref{lem:clebsch-gordan-to-truncated-tableau}. In particular, we will add zero to \Cref{eq:bdy_lem_dummy_4} as:
      \begin{equation*}
          0 = (C^\lambda_i - C^\lambda_d) \cdot (C^{\lambda+e_i}_j - C^{\lambda+e_i}_d) \cdot D^\lambda_{d \to ij} - (C^\lambda_i - C^\lambda_d+1) \cdot (C^{\lambda+e_i}_j - C^{\lambda+e_i}_d+1) \cdot D^\lambda_{d-1 \to ij}. 
      \end{equation*}
      However, the coefficient of $D^\lambda_{d \to ij}$ then becomes:
      \begin{align*}
          & (C^\lambda_d+1) \cdot ( C_i^\lambda + C^{\lambda+e_i}_j -C^\lambda_d + 1 ) + (C^\lambda_i - C^\lambda_d) \cdot (C^{\lambda+e_i}_j - C^{\lambda+e_i}_d) \\
          & = (C^\lambda_i + 1)(C^{\lambda+e_i}_j+1) + (C^\lambda_d - C_d^{\lambda+e_i}) \cdot (C^\lambda_i - C^\lambda_d) \\
          & = (C^\lambda_i + 1)(C^{\lambda+e_i}_j+1),
      \end{align*}
      where the last step holds since the term $C^\lambda_d - C^{\lambda+e_i}_{d} = - \delta_{id}$ is nonzero if and only if $C_i^\lambda = C^\lambda_d$. Similarly, the coefficient of $D^\lambda_{d-1 \to ij}$ becomes:
      \begin{align*}
          & (C^\lambda_d+1) \cdot ( C_i^\lambda + C^{\lambda+e_i}_j -C^\lambda_d + 3 ) + (C^\lambda_i - C^\lambda_d+1) \cdot (C^{\lambda+e_i}_j - C^{\lambda+e_i}_d+1) \\
          & = (C^\lambda_i + 2)(C^{\lambda+e_i}_j+2) + (C^\lambda_d - C_d^{\lambda+e_i}) \cdot (C^\lambda_i - C^\lambda_d) \\
          & = (C^\lambda_i + 2)(C^{\lambda+e_i}_j+2).
      \end{align*}
      Thus, 
      \begin{align*}
          \eqref{eq:bdy_lem_dummy_4} & =  (C^\lambda_i + 1)(C^{\lambda+e_i}_j+1) \cdot D^\lambda_{d \to ij} - (C^\lambda_i + 2)(C^{\lambda+e_i}_j+2) \cdot D^\lambda_{d-1 \to ij},
      \end{align*}
      which completes the proof of the lemma. 
\end{proof}

\subsubsection{Proof of \Cref{lem:two-step-equal-blocks}}\label{sec:two_step_CG_coeff_lemma}

In this subsection, we prove the relations between the two-step Clebsch-Gordan coefficients with respect to the blocks of the Young diagram $\lambda$.

\begin{lemma}[\Cref{lem:two-step-equal-blocks}, restated]
    \label{lem:two-step-equal-blocks-deferred}
    Let $\lambda$ be a Young diagram, and $i, j \in [d]$. There exist constants $\{c^{\lambda}_{B_1B_2 \to ij} \}_{B_1,B_2}$, where $B_1, B_2$ are blocks of $\lambda$, such that the two-step Clebsch-Gordan coefficients $|c^{\lambda}_{k\ell \to ij}|^2$ satisfy: 
    \begin{itemize}
        \item[(i)] Let $B$ be a block of $\lambda$. Then the two-step Clebsch-Gordan coefficients $|c^{\lambda}_{k\ell \to ij}|^2$ are equal for all pairs of distinct $k, \ell \in B$, that is:
        \begin{equation*}
            |c^{\lambda}_{k\ell \to ij}|^2 = |c^{\lambda}_{BB \to ij}|^2.
        \end{equation*}
        In addition, when $k = \ell$,
        \begin{equation*}
            |c^{\lambda}_{kk \to ij}|^2 = \big(1 + \frac{1}{\Delta_{ji}}\big) \cdot |c^{\lambda}_{BB \to ij}|^2.
        \end{equation*}
        \item[(ii)] Let $B_1, B_2$ be two distinct blocks of $\lambda$. Then the two-step Clebsch-Gordan coefficients $|c^{\lambda}_{k\ell \to ij}|^2$ are equal for all $k \in B_1$ and $\ell \in B_2$, that is:
        \begin{equation*}
            |c^{\lambda}_{k\ell \to ij}|^2 = |c^{\lambda}_{B_1B_2 \to ij}|^2.
        \end{equation*}
    \end{itemize}
\end{lemma}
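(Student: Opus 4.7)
My plan is to extract closed-form expressions for the individual coefficients $|c^{\lambda}_{k\ell \to ij}|^2$ by finite-differencing the partial sums of \Cref{thm:partial_sums_complete}, and then to invoke the one-step block invariance (\Cref{lem:CG_coeffs_on_block_equal}) factor-by-factor. Writing
\begin{equation*}
|c^{\lambda}_{k\ell \to ij}|^2 = F(k,\ell) - F(k-1,\ell) - F(k,\ell-1) + F(k-1,\ell-1),
\end{equation*}
and using \Cref{lem:clebsch-gordan-to-truncated-tableau} to identify each telescoped difference $D^\mu_{s \to r} - D^\mu_{s-1 \to r}$ with $|c^\mu_{s \to r}|^2$, a short calculation yields
\begin{equation*}
|c^{\lambda}_{k\ell \to ij}|^2 =
\begin{cases}
|c^{\lambda}_{k \to i}|^2 \cdot |c^{\lambda+e_i}_{\ell \to j}|^2 & k < \ell, \\[4pt]
\bigl(1 - \tfrac{1}{\Delta_{ji}^2}\bigr) |c^{\lambda}_{\ell \to j}|^2 \cdot |c^{\lambda+e_j}_{k \to i}|^2 + \tfrac{1}{\Delta_{ji}^2} \, |c^{\lambda}_{\ell \to i}|^2 \cdot |c^{\lambda+e_i}_{k \to j}|^2 & k > \ell,
\end{cases}
\end{equation*}
together with a hybrid expression obtained by mixing the $s \leq t$ and $s \geq t$ branches of \Cref{thm:partial_sums_complete} at $k = \ell$.

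To prove part (ii), I fix blocks $B_1 \neq B_2$ and, without loss of generality, assume $B_1 < B_2$, so every pair $(k,\ell) \in B_1 \times B_2$ has $k < \ell$ and the first formula applies. The factor $|c^{\lambda}_{k \to i}|^2$ is constant on $B_1$ by \Cref{lem:CG_coeffs_on_block_equal}. For $|c^{\lambda+e_i}_{\ell \to j}|^2$: in the generic case $i \notin B_2$, the set $B_2$ remains a block of $\lambda + e_i$ and \Cref{lem:CG_coeffs_on_block_equal} applies again; in the exceptional case $i \in B_2$, validity of $\lambda + e_i$ forces $i$ to be the top of $B_2$, so $k \in B_1 < B_2$ implies $k < i$, and the support constraint $c^{\lambda}_{k \to i} = 0$ unless $k \geq i$ kills the coefficient identically on $B_1 \times B_2$. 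The case $B_1 > B_2$ is symmetric using the $k > \ell$ formula, with each of its two summands analyzed in exactly the same way.

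For part (i) and $k \neq \ell$ in $B$, I run the same argument with $B_1 = B_2 = B$: each pair reduces to one of the two formulas, and whenever $i$ or $j$ lies in $B$ the validity constraint combined with the highest-weight support conditions either forces the coefficient to vanish or routes the relevant index into a preserved sub-block of $\lambda + e_i$ (respectively $\lambda + e_j$) on which \Cref{lem:CG_coeffs_on_block_equal} gives constancy. For the diagonal $k = \ell \in B$, I evaluate the hybrid expression using \Cref{lem:clebsch-gordan-to-truncated-tableau} to rewrite each $D^{\lambda}_{k-1 \to r}$ in terms of $|c^{\lambda}_{k \to r}|^2$ and then collect terms; after simplification, the cross-term generated by mixing the two branches of \Cref{thm:partial_sums_complete} at the diagonal cleanly produces the claimed $(1 + 1/\Delta_{ji})$ prefactor relative to the off-diagonal value $|c^{\lambda}_{BB \to ij}|^2$.

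The principal obstacle is the case analysis around when $i$ or $j$ sits in the block under consideration, because $\lambda + e_i$ or $\lambda + e_j$ then has a different block structure than $\lambda$ and one cannot directly quote the one-step block invariance. The uniform mechanism that resolves these situations is that validity of $\lambda + e_i$ forces $i$ to sit at the top of its block in $\lambda$, and this positional constraint together with the support condition $c^{\lambda}_{k \to i} = 0$ for $k < i$ either confines the relevant indices to a preserved sub-block or makes the whole two-step coefficient vanish. The diagonal $k = \ell$ computation is the most algebraically delicate step, and it is precisely there that the $\Delta_{ji}$-dependent correction factor specific to the diagonal enters.
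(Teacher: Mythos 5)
Your overall strategy — finite-difference the partial sums of \Cref{thm:partial_sums_complete}, factorize $|c^{\lambda}_{k\ell\to ij}|^2$ into products of one-step coefficients, and then apply the one-step block constancy of \Cref{lem:CG_coeffs_on_block_equal} — is exactly the paper's strategy, and your closed-form expressions for the $k<\ell$ and $k>\ell$ regions are correct. Your treatment of part (ii), including the splitting into the generic case and the degenerate case where $i$ or $j$ sits at the top of a block, is sound.

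However, there is a genuine gap in your treatment of part (i) for distinct $k,\ell\in B$. You write that "each pair reduces to one of the two formulas," and you establish constancy separately on the region $k<\ell$ and on the region $k>\ell$ — but part (i) requires these two constants to be \emph{equal}, i.e.\ you need $|c^{\lambda}_{k,k+1\to ij}|^2 = |c^{\lambda}_{k+1,k\to ij}|^2$ for adjacent rows in the same block. The $k<\ell$ formula gives $|c^\lambda_{k\to i}|^2\,|c^{\lambda+e_i}_{\ell\to j}|^2$, while the $k>\ell$ formula gives a two-term sum
\begin{equation*}
\Big(1-\tfrac{1}{\Delta_{ji}^2}\Big)|c^\lambda_{\ell\to j}|^2\,|c^{\lambda+e_j}_{k\to i}|^2 + \tfrac{1}{\Delta_{ji}^2}\,|c^\lambda_{\ell\to i}|^2\,|c^{\lambda+e_i}_{k\to j}|^2,
\end{equation*}
and these are not manifestly equal; nothing in your argument bridges them. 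The paper devotes a separate lemma (\Cref{lem:two-step-blocks-around-diag}) to this, and proving it requires the symmetry $D^{\lambda}_{k\to ij} = D^{\lambda}_{k\to ji}$, the recurrence of \Cref{lem:clebsch-gordan-to-truncated-tableau} translating dimension ratios into content-based factors, and a nontrivial rearrangement producing a Kronecker-delta identity. Your diagonal $(k=\ell)$ computation is also considerably more hand-wavy than what the claim demands: the paper requires a further auxiliary identity (\Cref{lem:two-block-avg-aux-expression}) and a four-way case split on whether $i$ or $j$ equals $k$, which is not subsumed by "the cross-term cleanly produces the claimed prefactor." As stated, your proposal captures the right mechanism for the off-diagonal block-constancy within a single ordering of $k,\ell$, but it does not close the argument for part (i).
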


\begin{proof}
    The proof follows by combining~\Cref{lem:two-step-blocks-unequal,lem:two-steps-block-avg-diag,lem:two-step-blocks-around-diag,lem:two-steps-block-avg-diag-top-of-block}. These lemmas are proven below.

    We first prove item (ii). Since blocks contain consecutive indices, and $B_1, B_2$ are distinct blocks of $\lambda$, one of the blocks has strictly smaller row indices than the other block. Without loss of generality, let $B_1$ be the block with the smaller indices. We would like to show that for all $k, k' \in B_1$ and $\ell, \ell' \in B_2$:
    \begin{equation*}
        |c^{\lambda}_{k\ell \to ij}|^2 = |c^{\lambda}_{k'\ell' \to ij}|^2.
    \end{equation*}
    Since $k, k' < \ell, \ell'$~\Cref{lem:two-step-blocks-unequal} implies the equality above for all such $k, k'$ and $\ell, \ell'$.

    Let us now prove item (i). For a block $B$ of $\lambda$, let $k, k', \ell, \ell'$ be elements of $B$. If $|B|=1$, then the statement trivially holds; thus, we assume that $|B| \geq 2$, and use $m, m+1$ to denote the two smallest indices of the block.
    
    We first consider the case of distinct $k \neq \ell$. If $k < \ell$ and $k' < \ell'$, or $k > \ell$ and $k' > \ell'$, then~\Cref{lem:two-step-blocks-unequal} again implies that $|c^{\lambda}_{k\ell \to ij}|^2 = |c^{\lambda}_{k'\ell' \to ij}|^2$. Otherwise, we assume without loss of generality that $k < \ell$ and $k' > \ell'$ (the proof of the $k > \ell$ and $k' < \ell'$ case is identical to this one).~\Cref{lem:two-step-blocks-unequal} implies that
    \begin{equation*}
        |c^{\lambda}_{k\ell \to ij}|^2 = |c^{\lambda}_{m,m+1 \to ij}|^2,
        \qquad |c^{\lambda}_{k'\ell' \to ij}|^2 = |c^{\lambda}_{m+1,m \to ij}|^2.
    \end{equation*}
    We show in~\Cref{lem:two-step-blocks-around-diag} that the right-hand sides of the two equations above are equal, which completes the $k \neq \ell$ case of item (i).

    The case of $k = \ell$ remains. If $k$ is not the first row in the block $B$, then~\Cref{lem:two-steps-block-avg-diag} implies that $|c^{\lambda}_{kk \to ij}|^2 = (1 + 1/\Delta_{ji}) \cdot |c^{\lambda}_{k-1,k \to ij}|^2$. Since $k-1, k \in B$, we have argued earlier that $|c^{\lambda}_{k-1,k \to ij}|^2 = |c^{\lambda}_{BB \to ij}|^2$, and the desired statement holds. Note that the above argument does not work if $k$ is the first row of block $B$, since $k-1$ is not in block $B$ in that case. This case is proved separately in~\Cref{lem:two-steps-block-avg-diag-top-of-block}.
\end{proof}

Before we proceed with the proofs of the remaining lemmas, we introduce the following piece of notation that will simplify our calculations.

\begin{notation}
    Let $F_1(s, t)$ and $F_2(s, t)$ denote the two expressions that arise in the partial sums of the two-step Clebsch-Gordan coefficients:
    \begin{equation*}
        F_1(s, t) \coloneq D^{\lambda}_{s \to i} \cdot D^{\lambda+e_i}_{t \to j},
        \qquad
        F_2(s, t) \coloneq \big(1 - \frac{1}{\Delta_{ji}^2}\big) \cdot D^{\lambda}_{t \to j} \cdot D^{\lambda+e_j}_{s \to i} + \frac{1}{\Delta_{ji}^2} \cdot D^{\lambda}_{t \to i} \cdot D^{\lambda+e_i}_{s \to j}.
    \end{equation*}
    We extend the definition of $F_1(s, t)$ and $F_2(s, t)$ to all values of $s, t$, even though they have a more restricted domain in~\Cref{thm:partial_sums_complete}.
\end{notation}
In particular, the statement of~\Cref{thm:partial_sums_complete} simplifies to
\begin{equation*}
    F(s, t) = \sum_{k=1}^s \sum_{\ell = 1}^t |c^{\lambda}_{k \ell \to ij}|^2 = \begin{cases}
        F_1(s, t) & s \leq t, \\
        F_2(s, t) & s > t.
    \end{cases}
\end{equation*}

\begin{lemma}
    \label{lem:two-step-blocks-unequal}
    Let $\lambda$ be a Young diagram, and $B_1, B_2$ be two blocks of $\lambda$ (not necessarily distinct), such that $k, k' \in B_1$, and $\ell, \ell' \in B_2$. If either
    \begin{enumerate}
        \item[(i)] $k < \ell$ and $k' < \ell'$, or
        \item[(ii)] $k > \ell$ and $k' > \ell'$,
    \end{enumerate}
    then
    \begin{equation*}
        |c^{\lambda}_{k\ell \to ij}|^2 = |c^{\lambda}_{k'\ell' \to ij}|^2.
    \end{equation*}
\end{lemma}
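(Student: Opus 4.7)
The plan is to reduce both cases to a factorization identity obtained by writing the two-step Clebsch-Gordan coefficient as the two-dimensional discrete second difference of the partial sum $F$, and then invoking Lemma \ref{lem:CG_coeffs_on_block_equal} (constancy of one-step coefficients on blocks) on each resulting one-step factor. Concretely, from the definition of $F$, one has the identity
\begin{equation*}
|c^{\lambda}_{k\ell \to ij}|^2 \;=\; F(k,\ell) \;-\; F(k-1,\ell) \;-\; F(k,\ell-1) \;+\; F(k-1,\ell-1).
\end{equation*}
The two cases in the statement are precisely those in which all four evaluation points fall in the same branch of the piecewise formula from Theorem \ref{thm:partial_sums_complete}, which is what will allow the difference to factorize.

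In case (i), $k<\ell$ forces all four points $(k,\ell), (k-1,\ell), (k,\ell-1), (k-1,\ell-1)$ into the regime $s\le t$ (with the only equality $s=t$ occurring harmlessly, since $F_1$ and $F_2$ agree on the diagonal). Using $F=F_1(s,t)=D^{\lambda}_{s\to i}\cdot D^{\lambda+e_i}_{t\to j}$ throughout, the second difference factorizes, and Lemma \ref{lem:clebsch-gordan-to-truncated-tableau} (which converts successive $D$-differences into squared one-step CG coefficients) gives
\begin{equation*}
|c^{\lambda}_{k\ell\to ij}|^2 \;=\; |c^{\lambda}_{k\to i}|^2\cdot |c^{\lambda+e_i}_{\ell\to j}|^2.
\end{equation*}
In case (ii), $k>\ell$ places all four points in the regime $s\ge t$, so $F=F_2$ throughout; since $F_2$ is itself a sum of two terms, each of which factorizes as a product of a function of $s$ and a function of $t$, the same four-point difference yields
\begin{equation*}
|c^{\lambda}_{k\ell\to ij}|^2 \;=\; \bigl(1-\tfrac{1}{\Delta_{ji}^2}\bigr)\cdot |c^{\lambda}_{\ell\to j}|^2\cdot |c^{\lambda+e_j}_{k\to i}|^2 \;+\; \tfrac{1}{\Delta_{ji}^2}\cdot |c^{\lambda}_{\ell\to i}|^2\cdot |c^{\lambda+e_i}_{k\to j}|^2.
\end{equation*}

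To conclude, I would apply Lemma \ref{lem:CG_coeffs_on_block_equal} term by term. The factors $|c^{\lambda}_{k\to i}|^2$, $|c^{\lambda}_{\ell\to j}|^2$, $|c^{\lambda}_{\ell\to i}|^2$ are constant on blocks of $\lambda$, so they do not change as $k,k'\in B_1$ and $\ell,\ell'\in B_2$ vary. The factors $|c^{\lambda+e_i}_{\ell\to j}|^2$, $|c^{\lambda+e_j}_{k\to i}|^2$, $|c^{\lambda+e_i}_{k\to j}|^2$ are constant on blocks of the \emph{modified} diagrams $\lambda+e_i$ and $\lambda+e_j$, which almost always contain $B_1$ or $B_2$ as a sub-block, giving constancy immediately.

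The only real obstacle is the edge behavior: adding a single box to $\lambda$ can split a block of $\lambda$ if the row receiving the box is the top of that block (for instance, $i=\min B_2$ splits $B_2$ into $\{i\}$ and $B_2\setminus\{i\}$ inside $\lambda+e_i$), and this is the one way a one-step factor could fail to be constant on $B_2$. I would handle this by showing that whenever such a split is possible, the companion factor forces the whole two-step coefficient to vanish on all of $B_1\times B_2$. The mechanism is that the one-step coefficient $|c^{\lambda}_{k\to i}|^2$ is nonzero only when $k\ge i$ and $\lambda+e_i$ is a valid Young diagram; combined with the strict inequality defining the case ($k<\ell$ or $k>\ell$), the only rows $\ell\in B_2$ that could witness the split are inconsistent with that inequality, so every such configuration has vanishing two-step coefficient and both sides of the claimed equality are zero. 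This bookkeeping — checking that the $\ell=\min B_2$ (and symmetrically the $k=\min B_1$) exception is always killed by a vanishing one-step factor — is the only subtle step; the rest is an application of the partial-sums formula and constancy on blocks.
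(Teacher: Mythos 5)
Your proposal is correct and takes essentially the same approach as the paper's proof: express $|c^{\lambda}_{k\ell\to ij}|^2$ as the second difference of $F$, factorize via \Cref{thm:partial_sums_complete}, rewrite the $D$-differences as squared one-step coefficients using \Cref{lem:clebsch-gordan-to-truncated-tableau}, and apply block constancy from \Cref{lem:CG_coeffs_on_block_equal} to each factor, treating the block-split case by showing the companion factor forces both sides to vanish. The one stylistic divergence is in case (ii): the paper simply rewrites $F_2$'s second difference as $\bigl(1-\tfrac{1}{\Delta_{ji}^2}\bigr)\,|c^{\lambda}_{\ell k\to ji}|^2 + \tfrac{1}{\Delta_{ji}^2}\,|c^{\lambda}_{\ell k\to ij}|^2$ and invokes case (i) with the roles of $(B_1,k,i)$ and $(B_2,\ell,j)$ interchanged, so the edge-case bookkeeping is done once; you unfold one more step to four one-step factors and redo the split analysis directly, which is equivalent but slightly longer. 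Your edge-case mechanism matches the paper's — the split in $\lambda+e_i$ can separate $\ell$ from $\ell'$ only if $\ell=i$ or $\ell'=i$, which combined with $k<\ell$ kills $|c^{\lambda}_{k\to i}|^2$ — though be careful with the phrase ``vanish on all of $B_1\times B_2$'': that holds only in the $B_1<B_2$ sub-case (the $B_1=B_2$ sub-case forces $\ell,\ell'>i$, so the split never separates them and no vanishing is needed).
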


\begin{proof}
    Let us consider the case when $k < \ell$ and $k' < \ell'$ first. We write
     \begin{align}
        |c^{\lambda}_{k\ell \to ij}|^2
        &= F(k, \ell) - F(k-1, \ell) - F(k, \ell-1) + F(k-1, \ell-1) \nonumber \\
        &= D^{\lambda}_{k \to i}D^{\lambda+e_i}_{\ell \to j} - D^{\lambda}_{k-1 \to i}D^{\lambda+e_i}_{\ell \to j} - D^{\lambda}_{k \to i}D^{\lambda+e_i}_{\ell-1 \to j} + D^{\lambda}_{k-1 \to i}D^{\lambda+e_i}_{\ell-1 \to j} \tag{\Cref{thm:partial_sums_complete}} \nonumber\\
        &= (D^{\lambda}_{k \to i} - D^{\lambda}_{k-1 \to i})\cdot (D^{\lambda+e_i}_{\ell \to j} - D^{\lambda+e_i}_{\ell-1 \to j}). \label{eq:two-step-block-aux2}
    \end{align}
    Similarly,
    \begin{equation}
        \label{eq:two-step-block-aux3}
        |c^{\lambda}_{k'\ell' \to ij}|^2 = (D^{\lambda}_{k' \to i} - D^{\lambda}_{k'-1 \to i}) \cdot (D^{\lambda+e_i}_{\ell' \to j} - D^{\lambda+e_i}_{\ell'-1 \to j}).
    \end{equation}
    Since $k$ and $k'$ are in the same block of $\lambda$, it holds from~\Cref{lem:CG_coeffs_on_block_equal} that
    \begin{equation*}
        D^{\lambda}_{k \to i} - D^{\lambda}_{k-1 \to i} = |c^{\lambda}_{k \to i}|^2 = |c^{\lambda}_{k' \to i}|^2 = D^{\lambda}_{k' \to i} - D^{\lambda}_{k'-1 \to i}.
    \end{equation*}
    Moreover, if $\ell$ and $\ell'$ are in the same block of $\lambda+e_i$, then it also follows that
    \begin{equation*}
        D^{\lambda+e_i}_{\ell \to j} - D^{\lambda+e_i}_{\ell-1 \to j} = |c^{\lambda+e_i}_{\ell \to j}|^2 = |c^{\lambda+e_i}_{\ell' \to j}|^2 = D^{\lambda+e_i}_{\ell' \to j} - D^{\lambda+e_i}_{\ell'-1 \to j},
    \end{equation*}
    which implies that the right-hand sides of~\Cref{eq:two-step-block-aux2} and~\Cref{eq:two-step-block-aux3} are equal. 

    If $\ell$ and $\ell'$ are not in the same block of $\lambda+e_i$, then, because $\ell$ and $\ell'$ are in the same block of $\lambda$, we must have $i = \ell$ or $i = \ell'$. Assume $i = \ell$; the $i = \ell'$ case is similar.  Since $k < \ell$, this means that $k < i$, and thus the respective one-step Clebsch-Gordan coefficient is equal to zero:
    \begin{equation*}
        0 = |c^{\lambda}_{k \to i}|^2 = D^{\lambda}_{k \to i} - D^{\lambda}_{k-1 \to i} = D^{\lambda}_{k' \to i} - D^{\lambda}_{k'-1 \to i}.
    \end{equation*}
    We conclude that in all cases the right-hand sides of~\Cref{eq:two-step-block-aux2} and~\Cref{eq:two-step-block-aux3} are equal, and so are the two-step Clebsch-Gordan coefficients.

    We now turn our attention to the case when $k > \ell$ and $k' > \ell'$. Using~\Cref{thm:partial_sums_complete} we write:
    \begin{align}
        |c^{\lambda}_{k\ell \to ij}|^2
        &= F(k, \ell) - F(k-1, \ell) - F(k, \ell-1) + F(k-1, \ell-1) \nonumber \\
        &= \big(1 - \frac{1}{\Delta_{ji}^2}\big)  \cdot (D^{\lambda}_{\ell \to j} - D^{\lambda}_{\ell-1 \to j})\cdot (D^{\lambda+e_j}_{k \to i} - D^{\lambda+e_j}_{k-1 \to i}) + \frac{1}{\Delta_{ji}^2} \cdot (D^{\lambda}_{\ell \to i} - D^{\lambda}_{\ell-1 \to i})\cdot (D^{\lambda+e_i}_{k \to j} - D^{\lambda+e_i}_{k-1 \to j}) \nonumber \\
        &= \big(1 - \frac{1}{\Delta_{ji}^2}\big) \cdot |c^{\lambda}_{\ell k \to ji}|^2 + \frac{1}{\Delta_{ji}^2} \cdot | c^{\lambda}_{\ell k \to ij}|^2, \label{eq:two-step-block-aux4}
    \end{align}
    where the last equality follows from~\Cref{eq:two-step-block-aux2}. Similarly,
    \begin{equation}
        \label{eq:two-step-block-aux5}
        |c^{\lambda}_{k'\ell' \to ij}|^2 = \big(1 - \frac{1}{\Delta_{ji}^2}\big) \cdot  |c^{\lambda}_{\ell' k' \to ji}|^2 + \frac{1}{\Delta_{ji}^2} \cdot |c^{\lambda}_{\ell' k' \to ij}|^2.
    \end{equation}
    Since $\ell < k$ and $\ell' < k'$, the first part of this proof implies that
    \begin{equation*}
        |c^{\lambda}_{\ell k \to ji}|^2 = |c^{\lambda}_{\ell' k' \to ji}|^2,
        \qquad
        |c^{\lambda}_{\ell k \to ij}|^2 = |c^{\lambda}_{\ell' k' \to ij}|^2,
    \end{equation*}
    and we conclude that the right-hand sides of~\Cref{eq:two-step-block-aux4} and~\Cref{eq:two-step-block-aux5} are equal, and so are the corresponding two-step Clebsch-Gordan coefficients.
\end{proof}

\begin{lemma}
    \label{lem:two-step-blocks-around-diag}
    Let $\lambda$ be a Young diagram, and $B$ be a block of $\lambda$, such that $k, k+1 \in B$. Then
    \begin{equation*}
        |c_{k,k+1 \to ij}^{\lambda}|^2 = |c_{k+1,k \to ij}^{\lambda}|^2.
    \end{equation*}
\end{lemma}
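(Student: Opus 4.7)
My plan is to combine the partial-sum formula of \Cref{thm:partial_sums_complete} with the content symmetries of the one-step Clebsch-Gordan coefficients. Writing $|c^\lambda_{k,k+1 \to ij}|^2 = F(k,k+1) - F(k-1,k+1) - F(k,k) + F(k-1,k)$ and noting that every summand here lies in the $s \leq t$ regime of \Cref{thm:partial_sums_complete}, a telescoping calculation that invokes \Cref{lem:clebsch-gordan-to-truncated-tableau} yields
\begin{equation*}
|c^\lambda_{k,k+1 \to ij}|^2 = |c^\lambda_{k \to i}|^2 \cdot |c^{\lambda+e_i}_{k+1 \to j}|^2.
\end{equation*}
The analogous expansion of $|c^\lambda_{k+1,k \to ij}|^2$ lies entirely in the $s \geq t$ regime, and the same telescoping gives
\begin{equation*}
|c^\lambda_{k+1,k \to ij}|^2 = \Big(1 - \tfrac{1}{\Delta_{ji}^2}\Big) |c^\lambda_{k \to j}|^2 \, |c^{\lambda+e_j}_{k+1 \to i}|^2 + \tfrac{1}{\Delta_{ji}^2} |c^\lambda_{k \to i}|^2 \, |c^{\lambda+e_i}_{k+1 \to j}|^2.
\end{equation*}
The desired equality therefore reduces to showing that $(1 - 1/\Delta_{ji}^2)\bigl(|c^\lambda_{k \to i}|^2 |c^{\lambda+e_i}_{k+1 \to j}|^2 - |c^\lambda_{k \to j}|^2 |c^{\lambda+e_j}_{k+1 \to i}|^2\bigr) = 0$. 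In the horizontal case $i = j$ and the vertical case $j = i+1$ we have $\Delta_{ji}^2 = 1$ and the coefficient vanishes automatically, so only the ``swappable'' case $\Delta_{ji}^2 \neq 1$ requires a nontrivial argument.

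In the swappable case I need to prove the symmetry $|c^\lambda_{k \to i}|^2 |c^{\lambda+e_i}_{k+1 \to j}|^2 = |c^\lambda_{k \to j}|^2 |c^{\lambda+e_j}_{k+1 \to i}|^2$. Because $\lambda_k = \lambda_{k+1}$, for any $r \neq k$ the rows $k$ and $k+1$ of $\lambda + e_r$ still have equal length, so \Cref{lem:CG_coeffs_on_block_equal} applied to $\lambda + e_r$ gives $|c^{\lambda+e_r}_{k+1 \to \cdot}|^2 = |c^{\lambda+e_r}_{k \to \cdot}|^2$. Assuming the generic case $i, j \neq k$, substituting the content formula $|c^\mu_{k \to r}|^2 = D^\mu_{k-1 \to r}/(C_r^\mu - C_k^\mu)$ from \Cref{lem:clebsch-gordan-to-truncated-tableau} and using $C_k^{\lambda+e_r} = C_k^\lambda$ (valid for $r \neq k$) lets me rewrite the two sides as
\begin{equation*}
\frac{D^\lambda_{k-1 \to ij}}{(C_i^\lambda - C_k^\lambda)(C_j^{\lambda+e_i} - C_k^\lambda)} \quad\text{and}\quad \frac{D^\lambda_{k-1 \to ji}}{(C_j^\lambda - C_k^\lambda)(C_i^{\lambda+e_j} - C_k^\lambda)}
\end{equation*}
respectively. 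The two-step symmetry $D^\lambda_{k-1 \to ij} = D^\lambda_{k-1 \to ji}$ of \Cref{not:D_symbols_two-step} and the content shift $C_r^{\lambda+e_s} = C_r^\lambda + \delta_{rs}$ then make the two denominators agree in both the $i = j$ and $i \neq j$ subcases.

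The main obstacle will be the boundary case $i = k$ (or symmetrically $j = k$), in which \Cref{lem:CG_coeffs_on_block_equal} fails to relate $|c^{\lambda+e_k}_{k+1 \to j}|^2$ to $|c^{\lambda+e_k}_{k \to j}|^2$ since $\lambda + e_k$ destroys the block containing rows $k$ and $k+1$, and the content ratio $D^\lambda_{k-1 \to i}/(C_i^\lambda - C_k^\lambda)$ becomes $0/0$. I plan to handle this case by instead using the identity $|c^\lambda_{k \to k}|^2 = D^\lambda_{k \to k}$ (which follows from the $i = k$ specialization of \Cref{eq:CG_equal_to_diff_of_Ds} together with $D^\lambda_{k-1 \to k} = 0$) and the ``one-step'' ratio $D^\mu_{k \to r} = D^\mu_{k-1 \to r}(C_r^\mu - C_k^\mu + 1)/(C_r^\mu - C_k^\mu)$ of \Cref{eq:relation_between_D's}. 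These rewrite both sides into the common form $D^\lambda_{k \to kj}/(C_j^\lambda - C_k^\lambda + 1)$, and equality then follows from $D^\lambda_{k \to kj} = D^\lambda_{k \to jk}$ of \Cref{not:D_symbols_two-step}.
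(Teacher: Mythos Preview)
Your proof is correct and takes essentially the same approach as the paper's: both expand via the partial-sum formula of \Cref{thm:partial_sums_complete}, reduce to a symmetry identity that is vacuous when $\Delta_{ji}^2 = 1$, and verify the remaining case using $D^\lambda_{s \to ij} = D^\lambda_{s \to ji}$ together with the content ratios of \Cref{lem:clebsch-gordan-to-truncated-tableau}. The only difference is organizational: you split off the boundary $i = k$ as a separate subcase (using \Cref{lem:CG_coeffs_on_block_equal} on $\lambda+e_r$ to shift $k+1$ to $k$), whereas the paper handles all $i,j \neq k+1$ in one pass by working with $D_k, D_{k+1}$ rather than $D_{k-1}, D_k$ and absorbing the boundary via Kronecker-delta bookkeeping.
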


\begin{proof}
    First, we observe that since $k, k+1$ are in the same block, when $i = k+1$, both two-step coefficients are equal to zero, since $\lambda+e_{k+1}$ is an invalid Young diagram. Similarly, both coefficients are zero when $j = k+1$ and $i \neq k$. Thus, for the remainder of this proof, we will consider the case when $i = k, j = k+1$, or when both $i, j \neq k+1$. 
    
    Using~\Cref{thm:partial_sums_complete} we write:
    \begin{align*}
        |c^{\lambda}_{k, k+1 \to ij}|^2 &= F(k, k+1) - F(k-1, k+1) - F(k, k) + F(k-1, k) \\
        &= (D^{\lambda}_{k \to i} - D^{\lambda}_{k-1 \to i}) \cdot (D^{\lambda+e_i}_{k+1 \to j} - D^{\lambda+e_i}_{k \to j}) \\
        &= (D^{\lambda}_{k+1 \to i} - D^{\lambda}_{k \to i}) \cdot (D^{\lambda+e_i}_{k+1 \to j} - D^{\lambda+e_i}_{k \to j}), \tag{$k, k+1 \in B$}
    \end{align*}
    and similarly
    \begin{align*}
        |c^{\lambda}_{k+1, k \to ij}|^2 &= F(k+1, k) - F(k, k) - F(k+1, k-1) + F(k, k-1) \\
        &= \big(1 - \frac{1}{\Delta_{ji}^2}\big) \cdot (D^{\lambda}_{k \to j} - D^{\lambda}_{k-1 \to j})\cdot (D^{\lambda+e_j}_{k+1 \to i} - D^{\lambda+e_j}_{k \to i}) + \frac{1}{\Delta_{ji}^2} \cdot (D^{\lambda}_{k \to i} - D^{\lambda}_{k-1 \to i})\cdot (D^{\lambda+e_i}_{k+1 \to j} - D^{\lambda+e_i}_{k \to j}) \\
        &= \big(1 - \frac{1}{\Delta_{ji}^2}\big) \cdot (D^{\lambda}_{k+1 \to j} - D^{\lambda}_{k \to j})\cdot (D^{\lambda+e_j}_{k+1 \to i} - D^{\lambda+e_j}_{k \to i}) + \frac{1}{\Delta_{ji}^2} \cdot (D^{\lambda}_{k+1 \to i} - D^{\lambda}_{k \to i})\cdot (D^{\lambda+e_i}_{k+1 \to j} - D^{\lambda+e_i}_{k \to j}),
    \end{align*}
    where the last equation follows because $k$ and $k+1$ are in the same block of $\lambda$. Equating the two previous expressions, it suffices to show that:
    \begin{equation*}
        \big(1 - \frac{1}{\Delta_{ji}^2}\big)\cdot (D^{\lambda}_{k+1 \to j} - D^{\lambda}_{k \to j})\cdot(D^{\lambda+e_j}_{k+1 \to i} - D^{\lambda+e_j}_{k \to i})
        = \big(1 - \frac{1}{\Delta_{ji}^2}\big)\cdot (D^{\lambda}_{k+1 \to i} - D^{\lambda}_{k \to i})\cdot(D^{\lambda+e_i}_{k+1 \to j} - D^{\lambda+e_i}_{k \to j}).
    \end{equation*}
    When $i = k$ and $j = k+1$, then $\Delta_{ji}^2 = 1$, and thus the equation above is trivially satisfied. Let us now restrict our attention to the case when $\Delta_{ji}^2 \neq 1$ and both $i, j$ are not equal to $k+1$. We observe that
    \begin{equation}
        \label{eq:can-swap-ij-when-k-eq-ell}
        D^{\lambda}_{k \to i}D^{\lambda+e_i}_{k \to j} = \frac{\dim(V^k_{\lambda_{\leq k} + e_i+e_j})}{\dim(V^k_{\lambda_{\leq k}})} = D^{\lambda}_{k \to j}D^{\lambda+e_j}_{k \to i} = D^{\lambda}_{k \to ij},
    \end{equation}
    hence, what we want to show above simplifies to
    \begin{equation}
        D^{\lambda}_{k+1 \to j}D^{\lambda+e_j}_{k \to i} + D^{\lambda}_{k \to j}D^{\lambda+e_j}_{k+1 \to i}
        = D^{\lambda}_{k+1 \to i}D^{\lambda+e_i}_{k \to j} + D^{\lambda}_{k \to i}D^{\lambda+e_i}_{k+1 \to j}. \label{eq:two-step-block-aux6}
    \end{equation}
    Towards using~\Cref{eq:can-swap-ij-when-k-eq-ell} again, we would like to use~\Cref{lem:clebsch-gordan-to-truncated-tableau} to change the ``$D_{k+1 \to \cdot}$'' factors to ``$D_{k \to \cdot}$'' factors. Since $i, j \neq k+1$, we can write:
    \begin{align*}
        &D^{\lambda}_{k+1 \to j} = \Big(1 + \frac{1}{C^{\lambda}_j - C^{\lambda}_{k+1}}\Big) \cdot D^{\lambda}_{k \to j},
        \qquad
        D^{\lambda}_{k+1 \to i} = \Big(1 + \frac{1}{C^{\lambda}_i - C^{\lambda}_{k+1}}\Big) \cdot D^{\lambda}_{k \to j}, \\
        &D^{\lambda+e_j}_{k+1 \to i} = \Big(1 + \frac{1}{C^{\lambda+e_j}_i - C^{\lambda+e_j}_{k+1}}\Big) \cdot D^{\lambda+e_j}_{k \to i},
        \qquad
        D^{\lambda+e_i}_{k+1 \to j} = \Big(1 + \frac{1}{C^{\lambda+e_i}_j - C^{\lambda+e_i}_{k+1}}\Big) \cdot D^{\lambda+e_i}_{k \to j}.
    \end{align*}
    To show~\Cref{eq:two-step-block-aux6}, it suffices to show that
    \begin{align}
        \iff{}& \Big(1 + \frac{1}{C_j^{\lambda} - C_{k+1}^{\lambda}} + 1 + \frac{1}{C_i^{\lambda+e_j} - C^{\lambda+e_j}_{k+1}}\Big) D^{\lambda}_{k \to ij} = \Big(1 + \frac{1}{C_i^{\lambda} - C_{k+1}^{\lambda}} + 1 + \frac{1}{C_j^{\lambda+e_i} - C_{k+1}^{\lambda+e_i}}\Big)D^{\lambda}_{k \to ij} \nonumber \\
        \iff{}& \Big(\frac{1}{C_j^{\lambda} - C_{k+1}^{\lambda}} + \frac{1}{C_i^{\lambda+e_j} - C^{\lambda+e_j}_{k+1}}\Big) D^{\lambda}_{k \to ij} = \Big(\frac{1}{C_i^{\lambda} - C_{k+1}^{\lambda}} + \frac{1}{C_j^{\lambda+e_i} - C_{k+1}^{\lambda+e_i}}\Big)D^{\lambda}_{k \to ij}. \label{eq:two-step-block-aux7}
    \end{align}
    \Cref{lem:clebsch-gordan-to-truncated-tableau} also implies the following:
    \begin{align*}
        D^{\lambda}_{k+1 \to ij}
        &= D^{\lambda}_{k+1 \to i} D^{\lambda+e_i}_{k+1 \to j} = \Big(1 + \frac{1}{C_i^{\lambda} - C_{k+1}^{\lambda}}\Big)D^{\lambda}_{k \to i} \Big(1 + \frac{1}{C_j^{\lambda+e_i} - C_{k+1}^{\lambda+e_i}}\Big) D^{\lambda+e_i}_{k+1 \to j} \\
        &= \Big(1 + \frac{1}{C_i^{\lambda} - C_{k+1}^{\lambda}}\Big)\cdot \Big(1 + \frac{1}{C_j^{\lambda+e_i} - C_{k+1}^{\lambda+e_i}}\Big)\cdot  D^{\lambda}_{k \to ij},
    \end{align*}
    and
    \begin{align*}
        D^{\lambda}_{k+1 \to ij}
        &= D^{\lambda}_{k+1 \to j} D^{\lambda+e_j}_{k+1 \to i} = \Big(1 + \frac{1}{C_j^{\lambda} - C_{k+1}^{\lambda}}\Big)D^{\lambda}_{k \to j} \Big(1 + \frac{1}{C_i^{\lambda+e_j} - C_{k+1}^{\lambda+e_j}}\Big) D^{\lambda+e_j}_{k+1 \to i} \\
        &= \Big(1 + \frac{1}{C_j^{\lambda} - C_{k+1}^{\lambda}}\Big) \cdot \Big(1 + \frac{1}{C_i^{\lambda+e_j} - C_{k+1}^{\lambda+e_j}}\Big) \cdot D^{\lambda}_{k \to ij}.
    \end{align*}
    Hence
    \begin{equation}
        \label{eq:two-step-block-aux8}
        \Big(1 + \frac{1}{C_i^{\lambda} - C_{k+1}^{\lambda}}\Big)\cdot \Big(1 + \frac{1}{C_j^{\lambda+e_i} - C_{k+1}^{\lambda+e_i}}\Big) \cdot D^{\lambda}_{k \to ij} = \Big(1 + \frac{1}{C_j^{\lambda} - C_{k+1}^{\lambda}}\Big)\cdot \Big(1 + \frac{1}{C_i^{\lambda+e_j} - C_{k+1}^{\lambda+e_j}}\Big) \cdot D^{\lambda}_{k \to ij}.
    \end{equation}
    Subtracting~\Cref{eq:two-step-block-aux8} from~\Cref{eq:two-step-block-aux7}, we only need to show that
    \begin{equation*}
        \frac{1}{C_i^{\lambda} - C_{k+1}^{\lambda}} \cdot \frac{1}{C_j^{\lambda+e_i} - C_{k+1}^{\lambda+e_i}} \cdot D^{\lambda}_{k \to ij} = \frac{1}{C_j^{\lambda} - C_{k+1}^{\lambda}} \cdot \frac{1}{C_i^{\lambda+e_j} - C_{k+1}^{\lambda+e_j}} \cdot D^{\lambda}_{k \to ij}.
    \end{equation*}
    We remark that $C^{\lambda+e_x}_y = C^{\lambda}_y + \delta_{xy}$, and thus we rewrite the above expression as
    \begin{equation*}
        (C_i^{\lambda} - C_{k+1}^{\lambda}) \cdot (C_j^{\lambda} - C_{k+1}^{\lambda} + \delta_{ij} - \delta_{i(k+1)}) \cdot D^{\lambda}_{k \to ij} = (C_j^{\lambda} - C_{k+1}^{\lambda})\cdot (C_i^{\lambda} - C_{k+1}^{\lambda} - \delta_{ij} - \delta_{j(k+1)}) \cdot D^{\lambda}_{k \to ij}.
    \end{equation*}
    Collecting terms, it suffices to show that
    \begin{align*}
        & (C^{\lambda}_j - C^{\lambda}_{k+1})\cdot (\delta_{ji} - \delta_{j(k+1)}) = (C^{\lambda}_i - C^{\lambda}_{k+1})\cdot (\delta_{ij} - \delta_{i(k+1)}) \\
        \iff{}& (C^{\lambda}_i - C^{\lambda}_{k+1})\cdot \delta_{i(k+1)} - (C^{\lambda}_j - C^{\lambda}_{k+1})\cdot \delta_{j(k+1)} +  (C^{\lambda}_j - C^{\lambda}_i) \cdot \delta_{ij}= 0.
    \end{align*}
    The last equation holds because the expression $(C^{\lambda}_x - C^{\lambda}_y)\cdot \delta_{xy}$ is zero for all $x, y$.
\end{proof}

\begin{lemma}
    \label{lem:two-steps-block-avg-diag}
    Let $\lambda$ be a Young diagram, and $B$ be a block of $\lambda$, such that $k, k+1 \in B$. Then
    \begin{equation*}
        |c^{\lambda}_{k+1,k+1 \to ij}|^2 = \Big(1 + \frac{1}{\Delta_{ji}}\Big) \cdot |c^{\lambda}_{k,k+1 \to ij}|^2.
    \end{equation*}
\end{lemma}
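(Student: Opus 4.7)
The strategy is to mirror the direct partial-sum computation used to prove Lemma~\ref{lem:two-step-blocks-around-diag}: expand both sides of the target identity as second differences of the partial sum $F(s,t)$ from Theorem~\ref{thm:partial_sums_complete}, and then grind through the algebra. Concretely, I would begin by writing
\begin{align*}
|c^{\lambda}_{k,k+1 \to ij}|^2 &= F(k,k+1) - F(k-1,k+1) - F(k,k) + F(k-1,k),\\
|c^{\lambda}_{k+1,k+1 \to ij}|^2 &= F(k+1,k+1) - F(k,k+1) - F(k+1,k) + F(k,k).
\end{align*}
In the first line every $F(s,t)$ has $s \leq t$, so applying the $F_1$ formula collapses the expression to $(D^{\lambda}_{k \to i} - D^{\lambda}_{k-1 \to i})\cdot(D^{\lambda+e_i}_{k+1 \to j} - D^{\lambda+e_i}_{k \to j})$, exactly as in the computation of $|c^{\lambda}_{k, k+1 \to ij}|^2$ in Lemma~\ref{lem:two-step-blocks-around-diag}. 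In the second line, the entries $F(k{+}1,k{+}1)$ and $F(k,k)$ are diagonal and equal $D^{\lambda}_{k+1 \to ij}$ and $D^{\lambda}_{k \to ij}$ respectively, while $F(k{+}1,k)$ must be evaluated with the more complicated $F_2$ formula.

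The key cancellations come from two ingredients. First, because $k, k+1 \in B$ so $\lambda_k = \lambda_{k+1}$, Lemma~\ref{lem:CG_coeffs_on_block_equal} (together with Lemma~\ref{lem:clebsch-gordan-to-truncated-tableau}) yields $D^{\lambda}_{k+1 \to i} - D^{\lambda}_{k \to i} = D^{\lambda}_{k \to i} - D^{\lambda}_{k-1 \to i}$, along with analogous identities for $\lambda + e_i$ when $i \notin \{k,k+1\}$ and for $\lambda + e_j$ when $j \notin \{k,k+1\}$. Second, the two ways of factoring $D^{\lambda}_{k \to ij} = D^{\lambda}_{k \to i}\cdot D^{\lambda+e_i}_{k \to j} = D^{\lambda}_{k \to j}\cdot D^{\lambda+e_j}_{k \to i}$ from Notation~\ref{not:D_symbols_two-step} let me rewrite
\begin{equation*}
F_2(k{+}1,k) - D^{\lambda}_{k \to ij} = \bigl(1 - 1/\Delta_{ji}^2\bigr)\, D^{\lambda}_{k \to j}\,(D^{\lambda+e_j}_{k+1 \to i} - D^{\lambda+e_j}_{k \to i}) + (1/\Delta_{ji}^2)\, D^{\lambda}_{k \to i}\,(D^{\lambda+e_i}_{k+1 \to j} - D^{\lambda+e_i}_{k \to j}).
\end{equation*}
Substituting this into the expanded expression for $|c^{\lambda}_{k+1,k+1 \to ij}|^2$, using the block-level equalities from the first ingredient to factor out the common term $(D^{\lambda}_{k \to i} - D^{\lambda}_{k-1 \to i})(D^{\lambda+e_i}_{k+1 \to j} - D^{\lambda+e_i}_{k \to j})$, and invoking $1 - 1/\Delta_{ji}^2 = (1 - 1/\Delta_{ji})(1 + 1/\Delta_{ji})$ together with the content identity $(C^{\lambda}_i - C^{\lambda}_k)\cdot D^{\lambda}_{k \to i} = (C^{\lambda}_i - C^{\lambda}_k + 1)\cdot D^{\lambda}_{k-1 \to i}$ of Lemma~\ref{lem:clebsch-gordan-to-truncated-tableau}, the leftover scalar collapses to exactly $1 + 1/\Delta_{ji}$.

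The main obstacle I expect is the bookkeeping around edge cases where the block hypotheses break down: namely when $i$ or $j$ equals $k$ or $k+1$. For most such sub-cases (e.g.\ $i = k+1$ or $j = k+1$ with $i \neq k$), the Young diagram $\lambda + e_{k+1}$ is invalid and both sides vanish trivially. The one genuinely borderline sub-case is $i = k, j = k+1$, where $\Delta_{ji} = -1$ so $1 + 1/\Delta_{ji} = 0$ and the RHS vanishes; the LHS then vanishes because $T^{\lambda}_{k+1,k+1 \to k, k+1}$ would place two $k+1$'s in the same column. These cases are handled by a short case analysis appended at the end of the computation, exactly in the spirit of how Lemma~\ref{lem:two-step-blocks-around-diag} treats its degenerate sub-cases. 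The conceptual reason for the factor $(1 + 1/\Delta_{ji})$ is that $\ket{k} \otimes \ket{k}$ lives entirely in the symmetric subspace where $\swap$ acts as $+1$, which is precisely the representation-theoretic content that the combinatorial identity is encoding.
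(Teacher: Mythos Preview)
Your approach is essentially the same as the paper's: both expand the two coefficients as second differences of $F(s,t)$ via Theorem~\ref{thm:partial_sums_complete}, use the block hypothesis $\lambda_k=\lambda_{k+1}$ to equate successive differences of $D$-symbols, and dispatch the degenerate configurations of $(i,j)$ versus $(k,k{+}1)$ by a short case analysis up front. Your edge-case discussion matches the paper's exactly.

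The one place where your write-up is too glib is the algebraic core. The paper organizes it by adding and subtracting $F_1(k{+}1,k)$ rather than $D^{\lambda}_{k\to ij}=F_1(k,k)$, so that
\[
|c^{\lambda}_{k+1,k+1\to ij}|^2
= (D^{\lambda}_{k+1\to i}-D^{\lambda}_{k\to i})(D^{\lambda+e_i}_{k+1\to j}-D^{\lambda+e_i}_{k\to j})
\;+\;\bigl(F_1(k{+}1,k)-F_2(k{+}1,k)\bigr),
\]
and then the entire lemma reduces to the standalone identity
\[
F_1(k{+}1,k)-F_2(k{+}1,k)=\frac{1}{\Delta_{ji}}\,(D^{\lambda}_{k+1\to i}-D^{\lambda}_{k\to i})(D^{\lambda+e_i}_{k+1\to j}-D^{\lambda+e_i}_{k\to j}),
\]
which the paper isolates as Lemma~\ref{lem:two-block-avg-aux-expression} and proves by a case split on $i=j$ versus $i\neq j$ (using $\Delta_{ji}=1$ in the first case and $\Delta_{ji}=C_j-C_i$ in the second). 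Your sentence ``the leftover scalar collapses to exactly $1+1/\Delta_{ji}$'' is hiding precisely this computation; in particular, the factorization $1-1/\Delta_{ji}^2=(1-1/\Delta_{ji})(1+1/\Delta_{ji})$ that you highlight does not by itself do the job, because your expression for $F_2(k{+}1,k)-D^{\lambda}_{k\to ij}$ mixes $D^{\lambda}_{k\to j}\,(D^{\lambda+e_j}_{k+1\to i}-D^{\lambda+e_j}_{k\to i})$ with $D^{\lambda}_{k\to i}\,(D^{\lambda+e_i}_{k+1\to j}-D^{\lambda+e_i}_{k\to j})$, and relating these two products requires the content identities of Lemma~\ref{lem:clebsch-gordan-to-truncated-tableau} in a way that naturally forces the $i=j$ versus $i\neq j$ split. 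Once you unwind that, you will have reproved Lemma~\ref{lem:two-block-avg-aux-expression}; note also that the paper reuses that auxiliary lemma in the proof of Lemma~\ref{lem:two-steps-block-avg-diag-top-of-block}, which is a further reason to isolate it.
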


\begin{proof}
    We first observe that if $i > k+1$, then the Clebsch-Gordan coefficients on both sides are equal to zero. Moreover, both coefficients vanish when $i = k+1$. This is because the first insertion of the left coefficient corresponds to the Young diagram of shape $\lambda + e_{k+1}$, and this is not a valid shape since $k, k+1$ are in the same block. Similarly, the first insertion of the right coefficient corresponds to adding $k$ to row $i > k$, which results in an invalid SSYT. Hence, we can restrict our attention to when $i \leq k$.
    
    Similarly, $j$ has to be at most $k+1$ since we add a letter that is at most $k+1$ to row $j$. And since $k, k+1$ are in the same block, we can only have $j = k+1$ when $i = k$, in which case the new boxes are added in a vertical strip. Then $\Delta_{ji} = -1$ and the right-hand side vanishes. In that ``vertical'' case, the left-hand side also vanishes, because the coefficient corresponds to the SSYT $T^{\lambda}_{k+1,k+1 \to ij}$, which is an invalid tableau, since it contains the letter $k+1$ in the same column as the same letter. For the rest of this proof, we assume that $i, j \leq k$.
    
    On the left-hand side, we have
    \begin{align*}
        &|c^{\lambda}_{k+1,k+1 \to ij}|^2 \\
        ={}& F_1(k+1,k+1) - F_1(k,k+1) - F_2(k+1,k) + F_1(k,k) \\
        ={}& \big( F_1(k+1,k+1) - F_1(k,k+1) - F_1(k+1,k) + F_1(k,k) \big) + \big( F_1(k+1,k) - F_2(k+1,k) \big) \\
        ={}& (D_{k+1 \to i}^{\lambda} - D_{k \to i}^{\lambda}) \cdot (D_{k+1 \to j}^{\lambda+e_i} - D_{k \to j}^{\lambda+e_i}) + \big( F_1(k+1,k) - F_2(k+1,k) \big).
    \end{align*}
    On the other hand, 
    \begin{align*}
        |c^{\lambda}_{k,k+1 \to ij}|^2
        & = F_1(k,k+1) - F_1(k-1,k+1) - F_1(k,k) + F_1(k-1,k) \\
        &= (D_{k \to i}^{\lambda} - D_{k-1 \to i}^{\lambda}) \cdot (D_{k+1 \to j}^{\lambda+e_i} - D_{k \to j}^{\lambda+e_i}) \\
        &= (D_{k+1 \to i}^{\lambda} - D_{k \to i}^{\lambda}) \cdot (D_{k+1 \to j}^{\lambda+e_i} - D_{k \to j}^{\lambda+e_i}) \tag{$k, k+1 \in B$}
    \end{align*}
    Comparing the two expressions, it suffices to show that
    \begin{equation}
        \label{eq:two-step-block-avg-aux-10}
        F_1(k+1,k) - F_2(k+1,k) = \frac{1}{\Delta_{ji}} \cdot (D_{k+1 \to i}^{\lambda} - D_{k \to i}^{\lambda}) (D_{k+1 \to j}^{\lambda+e_i} - D_{k \to j}^{\lambda+e_i}).
    \end{equation}
    We prove this equation holds in~\Cref{lem:two-block-avg-aux-expression} below. This proof is deferred to a separate lemma because it is also used in the proof of~\Cref{lem:two-steps-block-avg-diag-top-of-block}, and because proving~\Cref{eq:two-step-block-avg-aux-10} does not require that $k$ and $k+1$ are in the same block.
\end{proof}

\begin{lemma}
    \label{lem:two-block-avg-aux-expression}
    Let $\lambda$ be a Young diagram, and let $i, j, k \in [d]$ such that $k+1 > i, j$. Then
    \begin{equation*}
        F_1(k+1,k) - F_2(k+1,k) = \frac{1}{\Delta_{ji}} \cdot (D_{k+1 \to i}^{\lambda} - D_{k \to i}^{\lambda}) \cdot (D_{k+1 \to j}^{\lambda+e_i} - D_{k \to j}^{\lambda+e_i}).
    \end{equation*}
\end{lemma}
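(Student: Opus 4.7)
The plan is to reduce the claimed identity to a scalar identity in the contents of the relevant boxes, which then follows by direct algebra. Since $k + 1 > i, j$, the row $k+1$ is distinct from both $i$ and $j$, so~\Cref{eq:relation_between_D's} of~\Cref{lem:clebsch-gordan-to-truncated-tableau} applies and gives each ``$D^\mu_{k+1 \to r}$'' as a rational multiple of ``$D^\mu_{k \to r}$''. Writing $a = C^\lambda_i$, $b = C^\lambda_j$, $c = C^\lambda_{k+1}$ and using $C^{\lambda + e_s}_r = C^\lambda_r + \delta_{sr}$, we introduce
\begin{equation*}
\eta \coloneqq \frac{a - c + 1}{a - c}, \qquad \xi \coloneqq \frac{b + \delta_{ij} - c + 1}{b + \delta_{ij} - c},
\end{equation*}
so that $D^\lambda_{k+1 \to i} = \eta \cdot D^\lambda_{k \to i}$, $D^{\lambda + e_i}_{k+1 \to j} = \xi \cdot D^{\lambda + e_i}_{k \to j}$, and analogous ratios hold for $D^\lambda_{k+1 \to j}$ and $D^{\lambda + e_j}_{k+1 \to i}$.

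Next, I substitute these expressions into $F_1(k+1, k) - F_2(k+1, k)$ and use the symmetry $D^\lambda_{k \to i} \cdot D^{\lambda + e_i}_{k \to j} = D^\lambda_{k \to j} \cdot D^{\lambda + e_j}_{k \to i} = D^\lambda_{k \to ij}$ from~\Cref{not:D_symbols_two-step} to factor $D^\lambda_{k \to ij}$ out of every term. A short computation collapses the coefficient to $(\eta - \xi)/\Delta_{ji}^2$, using the fact that either $\delta_{ij} = 0$ (so that the ``$\eta'$'' arising from $D^{\lambda + e_j}_{k+1 \to i}/D^{\lambda + e_j}_{k \to i}$ coincides with $\eta$) or $\Delta_{ji}^2 = 1$ (so that the weight $1 - 1/\Delta_{ji}^2$ in front of that term vanishes). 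On the right-hand side, the factor $(D^\lambda_{k+1 \to i} - D^\lambda_{k \to i})(D^{\lambda+e_i}_{k+1 \to j} - D^{\lambda+e_i}_{k \to j})$ becomes $(\eta - 1)(\xi - 1) \cdot D^\lambda_{k \to ij}$. The lemma therefore reduces to the scalar identity
\begin{equation*}
\eta - \xi = \Delta_{ji} \cdot (\eta - 1)(\xi - 1).
\end{equation*}

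Finally, I verify this scalar identity directly. We have $\eta - 1 = 1/(a - c)$ and $\xi - 1 = 1/(b + \delta_{ij} - c)$, so $(\eta - 1)(\xi - 1) = 1/\bigl((a-c)(b+\delta_{ij}-c)\bigr)$. Putting $\eta - \xi$ over the common denominator $(a-c)(b+\delta_{ij}-c)$ and expanding collapses the numerator to $b - a + \delta_{ij} = \Delta_{ji}$, which matches $\Delta_{ji}(\eta - 1)(\xi - 1)$. The main obstacle is the bookkeeping: carrying the $\delta_{ij}$ correction through consistently, and handling degenerate cases (where $\lambda + e_i$, $\lambda + e_j$, or $\lambda + e_i + e_j$ is not a valid Young diagram, so certain $D$'s vanish) by observing that in each such case both sides of the target identity are zero, which is straightforward from the definitions.
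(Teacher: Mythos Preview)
Your proof is correct and follows essentially the same approach as the paper's: both use \Cref{lem:clebsch-gordan-to-truncated-tableau} to express each $D^\mu_{k+1 \to r}$ as a rational multiple of $D^\mu_{k \to r}$, factor out $D^\lambda_{k \to ij}$ via the symmetry in \Cref{not:D_symbols_two-step}, and then verify a scalar identity in the contents, handling the $i = j$ and $i \neq j$ cases separately. Your introduction of $\eta$ and $\xi$ packages the case split slightly more cleanly than the paper's explicit case-by-case computation, but the substance is the same.
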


\begin{proof}
    We use~\Cref{lem:clebsch-gordan-to-truncated-tableau} to rewrite the right-hand side of the desired statement:
    \begin{align}
        \frac{1}{\Delta_{ji}} \cdot (D_{k+1 \to i}^{\lambda} - D_{k \to i}^{\lambda}) \cdot (D_{k+1 \to j}^{\lambda+e_i} - D_{k \to j}^{\lambda+e_i}) ={}& \frac{1}{\Delta_{ji}} \cdot \frac{1}{C^{\lambda}_i - C^{\lambda}_{k+1}} \cdot D_{k \to i}^{\lambda} \cdot \frac{1}{C^{\lambda+e_i}_j - C^{\lambda+e_i}_{k+1}} \cdot D_{k \to j}^{\lambda+e_i} \nonumber \\
        ={}& \frac{1}{\Delta_{ji}} \cdot \frac{1}{C^{\lambda}_i - C^{\lambda}_{k+1}} \cdot \frac{1}{C^{\lambda+e_i}_j - C^{\lambda+e_i}_{k+1}} \cdot D_{k \to ij}^{\lambda}. \label{eq:two-step-block-avg-aux-11}
    \end{align}
    The fact that $i, j < k+1$ ensures that the denominators of the above expressions are never zero.
    Similarly, we write the left-hand side of the lemma statement as:
    \begin{align}
        &F_1(k+1,k) - F_2(k+1,k) \nonumber \\
        ={}& D^{\lambda}_{k+1 \to i} \cdot D^{\lambda+e_i}_{k \to j} - \Big(1 - \frac{1}{\Delta_{ji}^2} \Big) \cdot D^\lambda_{k \to j} \cdot D^{\lambda+e_j}_{k+1 \to i} - \frac{1}{\Delta_{ji}^2} \cdot D^{\lambda}_{k \to i} \cdot D^{\lambda+e_i}_{k+1 \to j} \nonumber \\
        ={}& \Big( 1 + \frac{1}{C^{\lambda}_i - C^{\lambda}_{k+1}} \Big) \cdot D^{\lambda}_{k \to i} \cdot D^{\lambda+e_i}_{k \to j} - \Big(1 - \frac{1}{\Delta_{ji}^2} \Big) \cdot D^\lambda_{k \to j} \cdot \Big( 1 + \frac{1}{C^{\lambda+e_j}_i - C^{\lambda+e_j}_{k+1}}\Big) \cdot D^{\lambda+e_j}_{k \to i} \nonumber  \\
        &- \frac{1}{\Delta_{ji}^2} \cdot D^{\lambda}_{k \to i} \cdot \Big( 1 + \frac{1}{C^{\lambda+e_i}_j - C^{\lambda+e_i}_{k+1}}\Big)\cdot D^{\lambda+e_i}_{k \to j} \nonumber  \\
        ={}& \Big[ \Big( 1 + \frac{1}{C^{\lambda}_i - C^{\lambda}_{k+1}} \Big)  - \Big(1 - \frac{1}{\Delta_{ji}^2} \Big) \Big(1 + \frac{1}{C_i^{\lambda+e_j} - C_{k+1}^{\lambda+e_j}} \Big) - \frac{1}{\Delta_{ji}^2} \Big(1 + \frac{1}{C^{\lambda+e_i}_j - C^{\lambda+e_i}_{k+1}}\Big)\Big] \cdot D^{\lambda}_{k \to ij} \nonumber  \\
        ={}& \Big[\frac{1}{C^{\lambda}_i - C^{\lambda}_{k+1}} - \Big(1 - \frac{1}{\Delta_{ji}^2} \Big) \cdot \frac{1}{C_i^{\lambda+e_j} - C_{k+1}^{\lambda+e_j}} - \frac{1}{\Delta_{ji}^2} \cdot \frac{1}{C^{\lambda+e_i}_j - C^{\lambda+e_i}_{k+1}}\Big] \cdot D^{\lambda}_{k \to ij}. \label{eq:two-step-block-avg-aux12}
    \end{align}
    We show that the two sides are equal by considering the following two cases:
    
    \paragraph{Case 1: $i = j$.} In this case, $\Delta_{ji} = C^{\lambda+e_i}_j - C^{\lambda}_i = 1$, so that we have:
    \begin{align*}
        \eqref{eq:two-step-block-avg-aux12}
        &= \Big[ \frac{1}{C^{\lambda}_i-C^{\lambda}_{k+1}} - \frac{1}{C^{\lambda+e_i}_j - C^{\lambda+e_i}_{k+1}} \Big] \cdot D^{\lambda}_{k \to ij} \\
        &= \frac{C^{\lambda+e_i}_j - C^{\lambda+e_i}_{k+1} - C^{\lambda}_i + C^{\lambda}_{k+1}}{(C^{\lambda}_i-C^{\lambda}_{k+1})\cdot (C^{\lambda+e_i}_j - C^{\lambda+e_i}_{k+1})} \cdot D^{\lambda}_{k \to ij} \\
        &= \frac{1}{(C^{\lambda}_i-C^{\lambda}_{k+1})\cdot (C^{\lambda+e_i}_j - C^{\lambda+e_i}_{k+1})} \cdot D^{\lambda}_{k \to ij} \tag{$i \neq k+1 \implies C^{\lambda+e_i}_{k+1} = C^{\lambda}_{k+1}$} = \eqref{eq:two-step-block-avg-aux-11}
    \end{align*}
    
    \paragraph{Case 2: $i \neq j$.} In this case, $C^{\lambda+e_j}_i = C^{\lambda}_i$ and $C^{\lambda+e_i}_j = C^{\lambda}_j$. Let us use $C_i, C_j$ to refer to these quantities. Moreover, $\Delta_{ji} = (C^{\lambda+e_i}_j + 1) - (C^{\lambda}_i + 1) = C_j - C_i$. Then, we get:
    \begin{align*}
        \eqref{eq:two-step-block-avg-aux12}
        &= \Big[\frac{1}{C_i - C_{k+1}} - \Big(1 - \frac{1}{\Delta_{ji}^2} \Big) \cdot \frac{1}{C_i - C_{k+1}} - \frac{1}{\Delta_{ji}^2} \cdot \frac{1}{C_j - C_{k+1}}\Big] \cdot D^{\lambda}_{k \to ij} \\
        &= \frac{1}{\Delta_{ji}^2} \cdot \frac{C_j - C_i}{(C_i - C_{k+1})(C_j - C_{k+1})} \cdot D^{\lambda}_{k \to ij} \\
        &= \frac{1}{\Delta_{ji}^2} \cdot \frac{\Delta_{ji}}{(C_i - C_{k+1})(C_j - C_{k+1})} \cdot D^{\lambda}_{k \to ij} = \eqref{eq:two-step-block-avg-aux-11}. \qedhere
    \end{align*}
\end{proof}

\begin{lemma}
    \label{lem:two-steps-block-avg-diag-top-of-block}
    Let $\lambda$ be a Young diagram, and $B$ be a block of $\lambda$, such that $k, k+1 \in B$. Then
    \begin{equation*}
        |c^{\lambda}_{kk \to ij}|^2 = \Big(1 + \frac{1}{\Delta_{ji}}\Big) \cdot |c^{\lambda}_{k,k+1 \to ij}|^2.
    \end{equation*}
\end{lemma}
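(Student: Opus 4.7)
The proof strategy mirrors that of \Cref{lem:two-steps-block-avg-diag}: apply the partial-sum formula of \Cref{thm:partial_sums_complete} to both coefficients, split the LHS into a ``main'' piece and a ``correction'' $F_1(k,k-1) - F_2(k,k-1)$, evaluate the correction via \Cref{lem:two-block-avg-aux-expression}, and use the block condition $\lambda_k = \lambda_{k+1}$ to match against the RHS. The main obstacle is that \Cref{lem:two-block-avg-aux-expression} requires $k > i, j$, which fails when $i = k$, forcing a separate argument in that case.

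I first dispose of the trivial cases. Both sides vanish when $i > k$ or $j > k$. When $j = k+1$, the only candidate with nonvanishing RHS is $i = k$; but then $\Delta_{ji} = -1$ kills the prefactor $1 + 1/\Delta_{ji}$, and the LHS also vanishes since $T^\lambda_{kk \to k,k+1}$ violates strict column-increase. For $i < k$ and $j = k+1$, the shape $\lambda + e_i + e_{k+1}$ is invalid (using $\lambda_k = \lambda_{k+1}$), so both sides vanish. Thus I may assume $i, j \leq k$.

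By \Cref{thm:partial_sums_complete}, $|c^\lambda_{k,k+1 \to ij}|^2 = (D^\lambda_{k \to i} - D^\lambda_{k-1 \to i})(D^{\lambda+e_i}_{k+1 \to j} - D^{\lambda+e_i}_{k \to j})$, and splitting the LHS as in \Cref{lem:two-steps-block-avg-diag} gives
\begin{equation*}
|c^\lambda_{kk \to ij}|^2 = (D^\lambda_{k \to i} - D^\lambda_{k-1 \to i})(D^{\lambda+e_i}_{k \to j} - D^{\lambda+e_i}_{k-1 \to j}) + \left[F_1(k,k-1) - F_2(k,k-1)\right].
\end{equation*}
When $i \neq k$, the hypothesis $k > i, j$ of \Cref{lem:two-block-avg-aux-expression} holds, so the correction equals $1/\Delta_{ji}$ times the main piece, yielding $|c^\lambda_{kk \to ij}|^2 = (1 + 1/\Delta_{ji})\, |c^\lambda_{k \to i}|^2\, |c^{\lambda+e_i}_{k \to j}|^2$. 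Since $i \notin \{k, k+1\}$ (the latter being excluded because $\lambda + e_{k+1}$ is invalid), the rows $k$ and $k+1$ belong to the same block of $\lambda + e_i$, and \Cref{lem:CG_coeffs_on_block_equal} gives $|c^{\lambda+e_i}_{k \to j}|^2 = |c^{\lambda+e_i}_{k+1 \to j}|^2$, matching the RHS.

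The remaining case $i = k$ is the main obstacle. Here I plan to show directly that $F_1(k,k-1) - F_2(k,k-1) = 0$. Using $D^\lambda_{k-1 \to k} = 0$, the term $F_2(k,k-1)$ collapses to $(1 - 1/\Delta_{ji}^2)\, D^\lambda_{k-1 \to j}\, D^{\lambda+e_j}_{k \to k}$. Substituting $D^{\lambda+e_j}_{k \to k} = D^\lambda_{k \to k}\, D^{\lambda+e_k}_{k \to j}/D^\lambda_{k \to j}$ (which follows from expanding $D^\lambda_{k \to kj}$ in two orders), and applying the recurrences of \Cref{lem:clebsch-gordan-to-truncated-tableau} to both $\lambda$ and $\lambda + e_k$, a direct calculation with $\Delta_{ji} = C^\lambda_j - C^\lambda_k$ shows both $F_1(k,k-1)$ and $F_2(k,k-1)$ equal $D^\lambda_{k \to k}\, D^{\lambda+e_k}_{k \to j} \cdot (C^\lambda_j - C^\lambda_k - 1)/(C^\lambda_j - C^\lambda_k)$. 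This reduces the claim to the one-step identity $|c^{\lambda+e_k}_{k \to j}|^2 = (1 + 1/\Delta_{ji})\, |c^{\lambda+e_k}_{k+1 \to j}|^2$, which I verify using \Cref{lem:clebsch-gordan-to-truncated-tableau} applied to $\mu = \lambda + e_k$: both coefficients are proportional to $D^\mu_{k \to j}$, and their ratio simplifies (using $\mu_k = \mu_{k+1} + 1$) to $(C^\lambda_j - C^\lambda_k + 1)/(C^\lambda_j - C^\lambda_k) = 1 + 1/\Delta_{ji}$. The degenerate subcase $i = j = k$ requires a short separate check with $\Delta_{ji} = 1$, in which $F_2$ reduces directly to $F_1$ and an explicit dimension-ratio computation yields the prefactor $2 = 1 + 1/1$.
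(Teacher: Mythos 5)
Your strategy is close to the paper's — both split the left side into a ``main'' piece plus the correction $F_1(k,k-1)-F_2(k,k-1)$ and evaluate the correction — but there is a concrete gap in the branch $i \neq k$.

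After your trivial reductions you have $i, j \leq k$. You then assert: ``When $i \neq k$, the hypothesis $k > i, j$ of \Cref{lem:two-block-avg-aux-expression} holds.'' This is false when $j = k$: $i < k$ gives $k > i$, but $j = k$ violates $k > j$. The subcase $i < k,\ j = k$ therefore falls outside the hypothesis of the lemma you invoke, and you have no argument for it. This is not a cosmetic matter — the proof of \Cref{lem:two-block-avg-aux-expression} manipulates $1/(C^{\lambda+e_i}_j - C^{\lambda+e_i}_{k+1})$, and after your substitution $k \mapsto k-1$ that denominator becomes $C^{\lambda+e_i}_k - C^{\lambda+e_i}_k = 0$ when $j = k$, so the lemma's own proof would divide by zero if pushed to $j=k$. (It happens that the lemma's \emph{conclusion} still holds for $j=k$: with $F_1(k,k-1)=0$ and $F_2(k,k-1)=\Delta_{ji}^{-2}\,D^\lambda_{k-1\to i}\,D^{\lambda+e_i}_{k\to k}$, the identity reduces via \Cref{lem:clebsch-gordan-to-truncated-tableau} to $(C_i - C_k)\,D^\lambda_{k\to i} = (C_i - C_k + 1)\,D^\lambda_{k-1\to i}$, using $\Delta_{ji} = C_k - C_i$. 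But you neither verify this nor flag the issue.) The paper avoids the problem by treating $i < k,\ j = k$ as its own case with a direct computation rather than routing it through the auxiliary lemma.

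Apart from that hole, your treatment of $i = k$ is correct and is a genuine alternative to the paper's: the paper directly computes the left side to $\frac{1}{\Delta_{ji}}D^\lambda_{k\to ij}$ and matches the right side, whereas you first show $F_1(k,k-1)=F_2(k,k-1)$ (your formula $D^\lambda_{k\to k}\,D^{\lambda+e_k}_{k\to j}(\Delta-1)/\Delta$ for both is right, and the factorization $D^{\lambda+e_j}_{k\to k} = D^\lambda_{k\to k}D^{\lambda+e_k}_{k\to j}/D^\lambda_{k\to j}$ follows from $D^\lambda_{k\to kj}$ being order-independent), then reduce cleanly to the one-step ratio $|c^{\lambda+e_k}_{k\to j}|^2 = (1+1/\Delta_{ji})|c^{\lambda+e_k}_{k+1\to j}|^2$, which holds because in $\mu = \lambda + e_k$ one has $C^\mu_j - C^\mu_k + 1 = \Delta$ and $C^\mu_j - C^\mu_{k+1} = \Delta + 1$. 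The degenerate check for $i=j=k$ is also fine. So the overall architecture is salvageable, but as written the case $i < k,\ j = k$ must be argued — either by a direct computation in the style of the paper's Case 4, or by first strengthening the auxiliary lemma with a separate argument for $j = k$ (the hook-content recurrence gives it immediately).
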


\begin{proof}
    We first observe that if $i > k$, then the Clebsch-Gordan coefficients on both sides are equal to zero. Hence, we can restrict our attention to when $i \leq k$.
    
    Similarly, if $j > k+1$, then both coefficients are zero. The same happens when $i < k, j = k+1$, since the left coefficient has an invalid SSYT, and the right coefficient corresponds to the invalid shape $\lambda+e_i+e_{k+1}$. When $i = k, j = k+1$, then the left-hand coefficient is zero, and $\Delta_{ji} = -1$ (since $k, k+1$ are in the same block), thus the right-hand side also vanishes. We conclude that we can further restrict our attention to when $i, j \leq k$.

    On the left-hand side, we have
    \begin{align}
        |c^{\lambda}_{kk \to ij}|^2
        &= F_1(k,k) - F_1(k-1,k) - F_2(k,k-1) + F_1(k-1,k-1) \nonumber \\
        &= \big(F_1(k,k) - F_1(k-1,k) - F_1(k,k-1) + F_1(k-1,k-1) \big) + \big( F_1(k,k-1) - F_2(k,k-1) \big) \nonumber \\
        &= (D_{k \to i}^{\lambda} - D_{k-1 \to i}^{\lambda}) \cdot (D_{k \to j}^{\lambda+e_i} - D_{k-1 \to j}^{\lambda+e_i}) + \big( F_1(k,k-1) - F_2(k,k-1) \big). \label{eq:two-block-avg-aux-114}
    \end{align}
    The right-hand side is equal to
    \begin{align}
        \big(1 + \frac{1}{\Delta_{ji}}\big)\cdot |c^{\lambda}_{k,k+1 \to ij}|^2
        & = \big(1 + \frac{1}{\Delta_{ji}}\big) \cdot \big(F_1(k,k+1) - F_1(k-1,k+1) - F_1(k,k) + F_1(k-1,k)\big) \nonumber \\
        &= \big(1 + \frac{1}{\Delta_{ji}}\big) \cdot (D_{k \to i}^{\lambda} - D_{k-1 \to i}^{\lambda})\cdot (D_{k+1 \to j}^{\lambda+e_i} - D_{k \to j}^{\lambda+e_i}). \label{eq:two-block-avg-aux-113}
    \end{align}
    Let us now consider the following four cases, depending on whether $i, j$ are equal to $k$ or strictly less than $k$.

    \paragraph{Case 1: $i, j < k$.} Since $i < k$, $k$ and $k+1$ are in the same block of $\lambda+e_i$, which allows us to write:
    \begin{equation*}
        \eqref{eq:two-block-avg-aux-113} = \big(1 + \frac{1}{\Delta_{ji}}\big) \cdot (D_{k \to i}^{\lambda} - D_{k-1 \to i}^{\lambda}) \cdot (D_{k \to j}^{\lambda+e_i} - D_{k-1 \to j}^{\lambda+e_i}).
    \end{equation*}
    Comparing the expressions for the left and right-hand sides, it suffices to show that
    \begin{equation}
        \label{eq:two-step-block-avg-aux-110}
        F_1(k,k-1) - F_2(k,k-1) = \frac{1}{\Delta_{ji}} \cdot (D_{k \to i}^{\lambda} - D_{k-1 \to i}^{\lambda}) \cdot (D_{k \to j}^{\lambda+e_i} - D_{k-1 \to j}^{\lambda+e_i}).
    \end{equation}
    The above equation follows directly from~\Cref{lem:two-block-avg-aux-expression} when we use $k-1$ in the place of $k$, and since $i, j < k$.

    \paragraph{Case 2: $i = j = k$.} In that case, $\Delta_{ji} = 1$, and recall that the term $D^{\lambda}_{k-1 \to i}$ is zero when $i > k-1$. The left-hand side can be simplified to
    \begin{equation*}
        \eqref{eq:two-block-avg-aux-114} = D_{k \to i}^{\lambda} D_{k \to j}^{\lambda+e_i} + \big( F_1(k,k-1) - F_2(k,k-1) \big) = D_{k \to i}^{\lambda} D_{k \to j}^{\lambda+e_i},
    \end{equation*}
    where the last equality holds because both $F_1(k, k-1)$ and $F_2(k, k-1)$ have terms of the form $D^{\mu}_{k-1 \to k}$, which are zero for any diagram $\mu$. The right-hand side can be simplified to
    \begin{align*}
        \eqref{eq:two-block-avg-aux-113} &= \big(1 + \frac{1}{\Delta_{ji}}\big) \cdot D_{k \to i}^{\lambda} \cdot (D_{k+1 \to j}^{\lambda+e_i} - D_{k \to j}^{\lambda+e_i}) \\
        &= 2\cdot D_{k \to i}^{\lambda} \cdot \frac{1}{C^{\lambda+e_i}_j - C^{\lambda+e_i}_{k+1}} \cdot D_{k \to j}^{\lambda+e_i} \tag{\Cref{lem:clebsch-gordan-to-truncated-tableau}} \\
        &= 2 \cdot D_{k \to i}^{\lambda} \cdot \frac{1}{(\lambda_k + 1 - k) - (\lambda_{k+1} - k - 1)} \cdot D_{k \to j}^{\lambda+e_i} = D_{k \to i}^{\lambda} D_{k \to j}^{\lambda+e_i} \tag{$i=j=k$ and $\lambda_k = \lambda_{k+1}$}.
    \end{align*}
    Hence, the two sides are equal.

    \paragraph{Case 3: $i = k, j < k$.} In that case, $\Delta_{ji} = C^{\lambda+e_i}_j - C^{\lambda}_i = C^{\lambda}_j - C^{\lambda}_k$. The left-hand side can be simplified to
    \begin{align*}
        \eqref{eq:two-block-avg-aux-114}
        &= D_{k \to i}^{\lambda} \cdot (D_{k \to j}^{\lambda+e_i} - D_{k-1 \to j}^{\lambda+e_i}) + \big( F_1(k,k-1) - F_2(k,k-1) \big) \\
        &= D_{k \to i}^{\lambda} \cdot (D_{k \to j}^{\lambda+e_i} - D_{k-1 \to j}^{\lambda+e_i}) + D_{k \to i}^{\lambda} \cdot D_{k-1 \to j}^{\lambda+e_i} - \Big(1 - \frac{1}{\Delta_{ji}^2}\Big)\cdot D^{\lambda}_{k-1 \to j}\cdot  D^{\lambda+e_j}_{k \to i} \\
        &= D_{k \to i}^{\lambda} \cdot D_{k \to j}^{\lambda+e_i} - \Big(1 - \frac{1}{\Delta_{ji}^2}\Big) \cdot \Big(1 - \frac{1}{C^{\lambda}_j - C^{\lambda}_k+1}\Big) \cdot D^{\lambda}_{k \to j} \cdot D^{\lambda+e_j}_{k \to i} \tag{\Cref{lem:clebsch-gordan-to-truncated-tableau}} \\
        &= D_{k \to ij}^{\lambda} - \frac{(\Delta_{ji}-1)(\Delta_{ji}+1)}{\Delta_{ji}^2}\cdot \frac{\Delta_{ji}}{\Delta_{ji}+1} \cdot D^{\lambda}_{k \to ij} = \frac{1}{\Delta_{ji}} \cdot D^{\lambda}_{k \to ij}.
    \end{align*}
    The right-hand side is equal to
    \begin{align*}
        \eqref{eq:two-block-avg-aux-113}
        &= \Big(1 + \frac{1}{\Delta_{ji}}\Big) \cdot D_{k \to i}^{\lambda} \cdot \frac{1}{C^{\lambda+e_i}_j - C^{\lambda+e_i}_{k+1}} \cdot D_{k \to j}^{\lambda+e_i} \tag{\Cref{lem:clebsch-gordan-to-truncated-tableau}} \\
        &= \frac{\Delta_{ji}+1}{\Delta_{ji}} \cdot \frac{1}{C^{\lambda}_j - C^{\lambda}_{k} + 1}\cdot  D_{k \to ij}^{\lambda} \tag{$C^{\lambda+e_i}_{k+1} = \lambda_{k+1} - k - 1 = \lambda_{k} - k - 1 = C^{\lambda}_k - 1$} \\
        &= \frac{\Delta_{ji}+1}{\Delta_{ji}} \cdot \frac{1}{\Delta_{ji} + 1} D_{k \to ij}^{\lambda} =\frac{1}{\Delta_{ji}} \cdot D^{\lambda}_{k \to ij}.
    \end{align*}
    Hence, the two sides are equal.

    \paragraph{Case 4: $i < k, j = k$.} In that case, $\Delta_{ji} = C^{\lambda+e_i}_j - C^{\lambda}_i = C^{\lambda}_k - C^{\lambda}_i$. The left-hand side can be simplified to
    \begin{align*}
        \eqref{eq:two-block-avg-aux-114}
        &= (D_{k \to i}^{\lambda} - D_{k-1 \to i}^{\lambda})\cdot (D_{k \to j}^{\lambda+e_i} - D_{k-1 \to j}^{\lambda+e_i}) - \frac{1}{\Delta_{ji}^2} \cdot D^{\lambda}_{k-1 \to i} \cdot D^{\lambda+e_i}_{k \to j} \\
        &= (D_{k \to i}^{\lambda} - D_{k-1 \to i}^{\lambda})\cdot (D_{k \to j}^{\lambda+e_i} - D_{k-1 \to j}^{\lambda+e_i}) - \frac{1}{\Delta_{ji}^2} \cdot \Big(1 - \frac{1}{C^{\lambda}_i - C^{\lambda}_k + 1}\Big) \cdot D^{\lambda}_{k \to i} D^{\lambda+e_i}_{k \to j} \tag{\Cref{lem:clebsch-gordan-to-truncated-tableau}} \\
        &= (D_{k \to i}^{\lambda} - D_{k-1 \to i}^{\lambda})\cdot (D_{k \to j}^{\lambda+e_i} - D_{k-1 \to j}^{\lambda+e_i}) + \frac{1}{\Delta_{ji}^2} \cdot \frac{\Delta_{ji}}{1 - \Delta_{ji}} \cdot D^{\lambda}_{k \to ij}.
    \end{align*}
    Since $i < k$, $k$ and $k+1$ are in the same block of $\lambda+e_i$, which simplifies the right-hand side to
    \begin{align*}
        \eqref{eq:two-block-avg-aux-113} &= \big(1 + \frac{1}{\Delta_{ji}}\big) \cdot (D_{k \to i}^{\lambda} - D_{k-1 \to i}^{\lambda}) \cdot (D_{k \to j}^{\lambda+e_i} - D_{k-1 \to j}^{\lambda+e_i}) \\
        &= (D_{k \to i}^{\lambda} - D_{k-1 \to i}^{\lambda}) \cdot (D_{k \to j}^{\lambda+e_i} - D_{k-1 \to j}^{\lambda+e_i}) + \frac{1}{\Delta_{ji}}\cdot (D_{k \to i}^{\lambda} - D_{k-1 \to i}^{\lambda}) \cdot D_{k \to j}^{\lambda+e_i} \\
        &= (D_{k \to i}^{\lambda} - D_{k-1 \to i}^{\lambda}) \cdot (D_{k \to j}^{\lambda+e_i} - D_{k-1 \to j}^{\lambda+e_i}) + \frac{1}{\Delta_{ji}} \cdot \frac{1}{C^{\lambda}_i - C^{\lambda}_k + 1} \cdot D_{k \to i}^{\lambda} D_{k \to j}^{\lambda+e_i} \tag{\Cref{lem:clebsch-gordan-to-truncated-tableau}} \\
        &= (D_{k \to i}^{\lambda} - D_{k-1 \to i}^{\lambda}) \cdot (D_{k \to j}^{\lambda+e_i} - D_{k-1 \to j}^{\lambda+e_i}) + \frac{1}{\Delta_{ji}} \cdot \frac{1}{1 - \Delta_{ji}} \cdot D_{k \to ij}^{\lambda}.
    \end{align*}
    Hence, the two sides are equal. 
\end{proof}

\section{Large-$d$ expansion of the second moment} 
\label{sec:taylor}

Recall our expression for the second moment from~\Cref{thm:var}:
\begin{equation*}
    \E[\widehat{\brho} \otimes \widehat{\brho}] = \frac{n-1}{n}\cdot \rho \otimes \rho + \frac{1}{n} \cdot (I \otimes \rho + \rho \otimes I) \cdot \swap + \frac{\E[\ell(\blambda)]}{n^2} \cdot \swap - \mathrm{Lower}_\rho.
\end{equation*}
We have previously argued that $\mathrm{Lower}_{\rho}$ can be written as a positive linear combination of terms of the form $(P \otimes P) \cdot \swap$, for some PSD matrices $P$. This nice structure allowed us to drop the contribution of $\mathrm{Lower}_\rho$ in our applications. 

In this section, we argue that $\mathrm{Lower}_\rho$ consists of terms which are \emph{lower-order} in $d$. That is, for fixed $n$ and $\lambda \vdash n$, we have 
\begin{equation*}
    \lim_{d \to \infty} \mathrm{Lower}_\rho = 0.
\end{equation*}
In particular, this allows one to obtain a clean expression for $\E[\widehat{\brho} \otimes \widehat{\brho}]$, by embedding the input state copies $\rho^{\otimes n}$ from a $d$-dimensional Hilbert space to one of dimension $D \gg d$, such that the lower-order terms vanish:
\begin{equation}
    \label{eq:high-order-terms}
    \E[\widehat{\brho} \otimes \widehat{\brho}] \approx \frac{n-1}{n}\cdot \rho \otimes \rho + \frac{1}{n} \cdot (I \otimes \rho + \rho \otimes I) \cdot \swap + \frac{\E[\ell(\blambda)]}{n^2} \cdot \swap.
\end{equation}
We show that $\mathrm{Lower}_{\rho}$ only consists of lower-order terms in $d$ by computing the constant-in-$d$ terms of $M_{\mathrm{avg}}^{(2)}$, and demonstrating that they correspond exactly to the terms in~\Cref{eq:high-order-terms}. The remaining terms are subleading, and must be $\mathrm{Lower}_\rho$. 

Following the proof of~\Cref{lem:expressions-are-block-diag}, we can write $M_{\mathrm{avg}}^{(2)}$ as a block-diagonal matrix in the Schur basis, with each block $[M_{\mathrm{avg}}^{(2)}]^S_{\{i,j\}}$ corresponding to an SYT and an unordered pair of indices $S, \{i, j\}$,
\begin{equation*}
    M_{\mathrm{avg}}^{(2)} = \sum_{\lambda \vdash n} \sum_{i \leq j} \ketbra{\lambda_{ij}} \otimes \sum_{S \in \lambda}  \ketbra{S, \{i,j\}} \otimes  (M^{(2)}_{\mathrm{avg}})^S_{\{i,j\}} \otimes I_{\dim(V^d_{\lambda + e_i + e_j})},
\end{equation*}
with diagonal entry
\begin{equation*}
    [M_{\mathrm{avg}}^{(2)}]^{S}_{ij,ij}
    = \frac{1}{n^2}\cdot \frac{\dim(V^d_{\lambda})}{\dim(V^d_{\lambda_{ij}})} \sum_{k,\ell} \lambda^{\uparrow}_k \lambda^{\uparrow}_{\ell} \cdot |c^{\lambda}_{k\ell \to ij}|^2.
\end{equation*}
We use the following formula for the dimension of the space $V_{\lambda}^d$ (\Cref{eq:stanley's-hook-content-formula}):
\begin{equation*}
    \dim(V^d_{\lambda}) = \prod_{\Box \in \lambda}\frac{d + \content_{\lambda}(\Box)}{h_{\lambda}(\Box)},
\end{equation*}
hence
\begin{align*}
    \frac{\dim(V^d_{\lambda})}{\dim(V^d_{\lambda_{ij}})}
    &= \prod_{\Box \in \lambda}\frac{h_{\lambda_{ij}}(\Box)}{h_{\lambda}(\Box)} \cdot \frac{d + \content_{\lambda}(\Box)}{d + \content_{\lambda_{ij}}(\Box)} \cdot \prod_{\Box \in \lambda_{ij} \setminus \lambda} \frac{h_{\lambda_{ij}}(\Box)}{d + \content_{\lambda_{ij}}(\Box)} \\
    &= \prod_{\Box \in \lambda}\frac{h_{\lambda_{ij}}(\Box)}{h_{\lambda}(\Box)} \cdot \prod_{\Box \in \lambda_{ij} \setminus \lambda} \frac{h_{\lambda_{ij}}(\Box)}{d + \content_{\lambda_{ij}}(\Box)} \\
    &= O_n(1) \cdot \frac{O(1)}{\prod_{\Box \in \lambda_{ij} \setminus \lambda} (d + \content_{\lambda_{ij}}(\Box))} \\
    &= \frac{O_n(1)}{\prod_{\Box \in \lambda_{ij} \setminus \lambda} (d + \content_{\lambda_{ij}}(\Box))}.
\end{align*}
where $O_n(1)$ is a factor that only depends on $n$, and not on $d$. Moreover, we observe that the contents of the two new boxes on rows $i$ and $j$, $\{\content_{\lambda_{ij}}(\Box)\}_{\Box \in \lambda_{ij} \backslash \lambda}$ have magnitude at most $n$, and thus the denominator is $d^2 + O(d)$. Thus, the only constant-in-$d$ terms that survive in each $(S, \{i,j\})$ block are the $\Theta(d^2)$ terms in the numerator
\begin{align*}
    &\sum_{k,\ell} \lambda^{\uparrow}_k \lambda^{\uparrow}_{\ell} \cdot |c^{\lambda}_{k\ell \to ij}|^2 \\
    &= \sum_{k,\ell=1}^{\ell(\lambda)} \lambda^{\uparrow}_k \lambda^{\uparrow}_{\ell} \cdot |c^{\lambda}_{k\ell \to ij}|^2
    + \sum_{k=1}^{\ell(\lambda)}\sum_{\ell=\ell(\lambda)+1}^{d} \lambda^{\uparrow}_k \lambda^{\uparrow}_{\ell} \cdot |c^{\lambda}_{k\ell \to ij}|^2 + \sum_{k=\ell(\lambda)+1}^{d} \sum_{\ell=1}^{\ell(\lambda)} \lambda^{\uparrow}_k \lambda^{\uparrow}_{\ell} \cdot |c^{\lambda}_{k\ell \to ij}|^2
    +\sum_{k,\ell=\ell(\lambda)+1}^{d} \lambda^{\uparrow}_k \lambda^{\uparrow}_{\ell} \cdot |c^{\lambda}_{k\ell \to ij}|^2.
\end{align*}
In the above equation, we broke down the summation into the non-zero and zero rows of $\lambda$, since $\lambda_i = 0$ if and only if $i > \ell(\lambda)$.
Since the Clebsch-Gordan coefficients capture overlaps between unit vectors, they have magnitude at most $1$. On the other hand, we can write
\begin{equation*}
    \lambda_i^{\uparrow} =
    \begin{cases}
        \lambda_i - \sum_{\lambda_j > \lambda_i} 1 + \sum_{\lambda_i > \lambda_j > 0} 1 + d - \ell(\lambda), & \lambda_i > 0 \\
        - \ell (\lambda), & \lambda_i = 0
    \end{cases}.
\end{equation*}
In the first case, only the fourth term depends on $d$, whereas the remaining terms have magnitude at most $n$. Moreover, there are at most $n$ rows with $\lambda_i > 0$. On the other hand, there are $d - \ell(\lambda)$ rows with zero length. Hence, the $\Theta(d^2)$ terms of the numerator are included in the following expression:
\begin{align*}
    &d^2 \cdot \Big(\sum_{k,\ell=1}^{\ell(\lambda)}  |c^{\lambda}_{k\ell \to ij}|^2\Big)
    - d\ell(\lambda) \cdot \Big(\sum_{k=1}^{\ell(\lambda)} \sum_{\ell=\ell(\lambda)+1}^{d} |c^{\lambda}_{k\ell \to ij}|^2 + \sum_{k=\ell(\lambda)+1}^{d} \sum_{\ell=1}^{\ell(\lambda)} |c^{\lambda}_{k\ell \to ij}|^2\Big)
    + \ell(\lambda)^2 \cdot \Big(\sum_{k,\ell=\ell(\lambda)+1}^{d}  |c^{\lambda}_{k\ell \to ij}|^2\Big).
\end{align*}
Using the formulas for the partial sums in \Cref{thm:partial_sums_complete}, we can verify that the expression above is equal to $d^2(X_{n+1}X_{n+2} + \ell(\lambda)/\Delta_{ji}) + O(d)$, which gives rise to the terms in~\Cref{eq:high-order-terms}.





\bibliographystyle{alpha}
\bibliography{wright}

\newcommand{\etalchar}[1]{$^{#1}$}
\begin{thebibliography}{PTTW25}

\bibitem[Aar18]{Aar18}
Scott Aaronson.
\newblock Shadow tomography of quantum states.
\newblock In {\em Proceedings of the 50th Annual ACM Symposium on Theory of Computing}, pages 325--338, 2018.

\bibitem[ARS88]{ARS88}
Robert Alicki, S{\l}awomir Rudnicki, and S{\l}awomir Sadowski.
\newblock Symmetry properties of product states for the system of {$N$} $n$-level atoms.
\newblock {\em Journal of mathematical physics}, 29(5):1158--1162, 1988.

\bibitem[ATD19]{ATD19}
Francesco Albarelli, Mankei Tsang, and Animesh Datta.
\newblock Upper bounds on the holevo cram$\backslash$'er-rao bound for multiparameter quantum parametric and semiparametric estimation.
\newblock Technical report, arXiv:1911.11036, 2019.

\bibitem[BC94]{BC94}
Samuel Braunstein and Carlton Caves.
\newblock Statistical distance and the geometry of quantum states.
\newblock {\em Physical Review Letters}, 72(22):3439, 1994.

\bibitem[BCH05]{BCH05}
Dave Bacon, Isaac Chuang, and Aram Harrow.
\newblock The quantum {S}chur transform: {I}. efficient qudit circuits.
\newblock In {\em Proceedings of the 18th Annual ACM-SIAM Symposium on Discrete Algorithms}, 2005.

\bibitem[Ber12]{Ber12}
Sonya Berg.
\newblock {\em A quantum algorithm for the quantum {S}chur-{W}eyl transform}.
\newblock PhD thesis, University of California, Davis, 2012.

\bibitem[BLM{\etalchar{+}}23]{BLM+23}
Harry Buhrman, Noah Linden, Laura Man{\v{c}}inska, Ashley Montanaro, and Maris Ozols.
\newblock Quantum majority vote.
\newblock In {\em Proceedings of the 14th Innovations in Theoretical Computer Science}, 2023.

\bibitem[BO24]{BO24}
Costin B{\u{a}}descu and Ryan O'Donnell.
\newblock Improved quantum data analysis.
\newblock {\em TheoretiCS}, 3, 2024.

\bibitem[BOW19]{BOW19}
Costin B{\u{a}}descu, Ryan O'Donnell, and John Wright.
\newblock Quantum state certification.
\newblock In {\em Proceedings of the 51st Annual ACM Symposium on Theory of Computing}, pages 503--514, 2019.

\bibitem[BWB24]{BB24}
Adam Bene~Watts and John Bostanci.
\newblock Quantum event learning and gentle random measurements.
\newblock In {\em Proceedings of the 15th Innovations in Theoretical Computer Science}, pages 97--1, 2024.

\bibitem[Can20]{Can20}
Cl{\'e}ment Canonne.
\newblock A short note on learning discrete distributions, 2020.

\bibitem[CHL{\etalchar{+}}23]{CHL+23}
Sitan Chen, Brice Huang, Jerry Li, Allen Liu, and Mark Sellke.
\newblock When does adaptivity help for quantum state learning?
\newblock In {\em Proceedings of the 64th Annual IEEE Symposium on Foundations of Computer Science}, pages 391--404, 2023.

\bibitem[CL24]{CL24}
Sitan Chen and Jerry Li.
\newblock Personal communication, 2024.

\bibitem[CLL24a]{CLL24b}
Sitan Chen, Jerry Li, and Allen Liu.
\newblock Optimal high-precision shadow estimation.
\newblock In {\em 27th Conference on Quantum Information Processing}, 2024.

\bibitem[CLL24b]{CLL24a}
Sitan Chen, Jerry Li, and Allen Liu.
\newblock An optimal tradeoff between entanglement and copy complexity for state tomography.
\newblock In {\em Proceedings of the 56th Annual ACM Symposium on Theory of Computing}, pages 1331--1342, 2024.

\bibitem[CM06]{CM06}
Matthias Christandl and Graeme Mitchison.
\newblock The spectra of quantum states and the {K}ronecker coefficients of the symmetric group.
\newblock {\em Communications in mathematical physics}, 261(3):789--797, 2006.

\bibitem[CSDV19]{CSDV19}
Angelo Carollo, Bernardo Spagnolo, Alexander Dubkov, and Davide Valenti.
\newblock On quantumness in multi-parameter quantum estimation.
\newblock {\em Journal of Statistical Mechanics: Theory and Experiment}, 2019(9):094010, 2019.

\bibitem[CYZ25]{CYZ25}
Kean Chen, Nengkun Yu, and Zhicheng Zhang.
\newblock Tight bound for quantum unitary time-reversal.
\newblock Technical report, arXiv:2507.05736, 2025.

\bibitem[DDGG20]{DGG20}
Rafa{\l} Demkowicz-Dobrza{\'n}ski, Wojciech G{\'o}recki, and M{\u{a}}d{\u{a}}lin Gu{\c{t}}{\u{a}}.
\newblock Multi-parameter estimation beyond quantum fisher information.
\newblock {\em Journal of Physics A: Mathematical and Theoretical}, 53(36):363001, 2020.

\bibitem[Fis05]{Fis05}
Steve Fisk.
\newblock A very short proof of {C}auchy's interlace theorem for eigenvalues of {H}ermitian matrices.
\newblock {\em Technical report, arXiv:0502408}, 2005.

\bibitem[FO24]{FO24}
Steven Flammia and Ryan O'Donnell.
\newblock Quantum chi-squared tomography and mutual information testing.
\newblock {\em Quantum}, 8:1381, 2024.

\bibitem[GBO24]{GBO24}
Dmitry Grinko, Adam Burchardt, and Maris Ozols.
\newblock Gelfand-{T}setlin basis for partially transposed permutations, with applications to quantum information.
\newblock {\em 27th Conference on Quantum Information Processing}, 2024.

\bibitem[GKKT20]{GKKT20}
Madalin Gu{\c{t}}{\u{a}}, Jonas Kahn, Richard Kueng, and Joel~A Tropp.
\newblock Fast state tomography with optimal error bounds.
\newblock {\em Journal of Physics A: Mathematical and Theoretical}, 53(20):204001, 2020.

\bibitem[GLM24]{GLM24}
Daniel Grier, Sihan Liu, and Gaurav Mahajan.
\newblock Improved classical shadows from local symmetries in the {S}chur basis.
\newblock Technical report, arXiv:2405.09525, 2024.

\bibitem[GPS24]{GPS24}
Daniel Grier, Hakop Pashayan, and Luke Schaeffer.
\newblock Sample-optimal classical shadows for pure states.
\newblock {\em Quantum}, 8:1373, 2024.

\bibitem[GW09]{GW09}
Roe Goodman and Nolan Wallach.
\newblock {\em Symmetry, representations, and invariants}.
\newblock Springer, 2009.

\bibitem[Har05]{Har05}
Aram Harrow.
\newblock {\em Applications of coherent classical communication and the {S}chur transform to quantum information theory}.
\newblock PhD thesis, Massachusetts Institute of Technology, 2005.

\bibitem[Har16]{Har16}
Aram Harrow.
\newblock Personal communication, 2016.

\bibitem[Hay98]{Hay98}
Masahito Hayashi.
\newblock Asymptotic estimation theory for a finite-dimensional pure state model.
\newblock {\em Journal of Physics A: Mathematical and General}, 31(20):4633, 1998.

\bibitem[Hel67]{Hel67b}
Carl Helstrom.
\newblock Minimum mean-squared error of estimates in quantum statistics.
\newblock {\em Physics letters A}, 25(2):101--102, 1967.

\bibitem[HHJ{\etalchar{+}}16]{HHJ+16}
Jeongwan Haah, Aram Harrow, Zhengfeng Ji, Xiaodi Wu, and Nengkun Yu.
\newblock Sample-optimal tomography of quantum states.
\newblock In {\em Proceedings of the 48th Annual ACM Symposium on Theory of Computing}, August 2016.
\newblock Preprint.

\bibitem[HJW18]{HJW18}
Yanjun Han, Jiantao Jiao, and Tsachy Weissman.
\newblock Local moment matching: a unified methodology for symmetric functional estimation and distribution estimation under {W}asserstein distance.
\newblock In {\em Proceedings of the 31st Annual Conference on Learning Theory}, pages 3189--3221, 2018.

\bibitem[HKOT23]{HKOT23}
Jeongwan Haah, Robin Kothari, Ryan O'Donnell, and Ewin Tang.
\newblock Query-optimal estimation of unitary channels in diamond distance.
\newblock In {\em Proceedings of the 64th Annual IEEE Symposium on Foundations of Computer Science}, pages 363--390, 2023.

\bibitem[HKP20]{HKP20}
Hsin-Yuan Huang, Richard Kueng, and John Preskill.
\newblock Predicting many properties of a quantum system from very few measurements.
\newblock {\em Nature Physics}, 16(10):1050--1057, 2020.

\bibitem[HM02]{HM02}
Masahito Hayashi and Keiji Matsumoto.
\newblock Quantum universal variable-length source coding.
\newblock {\em Physical Review A}, 66(2):022311, 2002.

\bibitem[Hol11]{Hol11}
Alexander Holevo.
\newblock {\em Probabilistic and statistical aspects of quantum theory}.
\newblock Springer Science \& Business Media, 2011.

\bibitem[HW23]{HW23}
Jonas Helsen and Michael Walter.
\newblock Thrifty shadow estimation: Reusing quantum circuits and bounding tails.
\newblock {\em Physical Review Letters}, 131(24):240602, 2023.

\bibitem[Key06]{Key06}
Michael Keyl.
\newblock Quantum state estimation and large deviations.
\newblock {\em Reviews in Mathematical Physics}, 18(01):19--60, 2006.

\bibitem[KG09]{KG09}
Jonas Kahn and M{\u{a}}d{\u{a}}lin Gu{\c{t}}{\u{a}}.
\newblock Local asymptotic normality for finite dimensional quantum systems.
\newblock {\em Communications in Mathematical Physics}, 289(2):597--652, 2009.

\bibitem[KW01]{KW01}
Michael Keyl and Reinhard Werner.
\newblock Estimating the spectrum of a density operator.
\newblock {\em Physical Review A}, 64(5):052311, 2001.

\bibitem[LFIC24]{LFIC24}
Zhaoyi Li, Honghao Fu, Takuya Isogawa, and Isaac Chuang.
\newblock Optimal quantum purity amplification.
\newblock Technical report, arXiv:2409.18167, 2024.

\bibitem[Mat02]{Mat02}
Keiji Matsumoto.
\newblock A new approach to the {C}ram{\'e}r-{R}ao-type bound of the pure-state model.
\newblock {\em Journal of Physics A: Mathematical and General}, 35(13):3111, 2002.

\bibitem[NC10]{NC10}
Michael Nielsen and Isaac Chuang.
\newblock {\em Quantum computation and quantum information}.
\newblock Cambridge University Press, 2010.

\bibitem[Ngu24]{Ngu24}
Quynh Nguyen.
\newblock The mixed {S}chur transform: efficient quantum circuit and applications.
\newblock In {\em 27th Conference on Quantum Information Processing}, 2024.

\bibitem[OW15a]{OW15b}
Ryan O'Donnell and John Wright.
\newblock A note on the {H}aah et al.\ tomography algorithm, 2015.
\newblock \url{https://people.eecs.berkeley.edu/~jswright/papers/tomography-note.pdf}.

\bibitem[OW15b]{OW15}
Ryan O'Donnell and John Wright.
\newblock Quantum spectrum testing.
\newblock In {\em Proceedings of the 47th Annual ACM Symposium on Theory of Computing}, 2015.

\bibitem[OW16]{OW16}
Ryan O'Donnell and John Wright.
\newblock Efficient quantum tomography.
\newblock In {\em Proceedings of the 48th Annual ACM Symposium on Theory of Computing}, 2016.

\bibitem[OW17]{OW17a}
Ryan O'Donnell and John Wright.
\newblock Efficient quantum tomography {II}.
\newblock In {\em Proceedings of the 49th Annual ACM Symposium on Theory of Computing}, 2017.

\bibitem[OW25]{OW25}
Ryan O'Donnell and Chirag Wadhwa.
\newblock Instance-optimal quantum state certification with entangled measurements.
\newblock Technical report, arXiv:2507.06010, 2025.

\bibitem[Pil90]{Pil90}
Shaiy Pilpel.
\newblock Descending subsequences of random permutations.
\newblock {\em Journal of Combinatorial Theory, Series A}, 53(1):96--116, 1990.

\bibitem[PTTW25]{PTTW25}
Angelos Pelecanos, Xinyu Tan, Ewin Tang, and John Wright.
\newblock Beating full state tomography for unentangled spectrum estimation.
\newblock {\em Technical report, arXiv:2504.02785}, 2025.

\bibitem[Sag01]{Sag01}
Bruce~E Sagan.
\newblock {\em The symmetric group: representations, combinatorial algorithms, and symmetric functions}.
\newblock Springer, 2001.

\bibitem[SSW25]{SSW25}
Thilo Scharnhorst, Jack Spilecki, and John Wright.
\newblock Optimal lower bounds for quantum state tomography.
\newblock Manuscript, 2025.

\bibitem[Sta99]{Sta99}
Richard~P Stanley.
\newblock {\em Enumerative combinatorics {V}olume 2}.
\newblock Cambridge University Press, Cambridge, 1999.

\bibitem[VK85]{VK85}
Anatoly Vershik and Sergei Kerov.
\newblock Asymptotic of the largest and the typical dimensions of irreducible representations of a symmetric group.
\newblock {\em Functional Analysis and its Applications}, 19(1):21--31, 1985.

\bibitem[VK92]{VK92}
Naum Vilenkin and Anatoli Klimy.
\newblock {\em Representation of {L}ie Groups and Special Functions. Volume 3: Classical and Quantum Groups and Special Functions}.
\newblock Springer, 1992.

\bibitem[VN96]{VN96}
Vasilii Voinov and Mikhail Nikulin.
\newblock {\em Unbiased estimators and their applications: volume 2: multivariate case}.
\newblock Springer Science \& Business Media, 1996.

\bibitem[VN12]{VN12}
Vasilii Voinov and Mikhail Nikulin.
\newblock {\em Unbiased estimators and their applications: volume 1: univariate case}.
\newblock Springer Science \& Business Media, 2012.

\bibitem[VO05]{VK05}
Anatoli Vershik and Andrei Okounkov.
\newblock A new approach to the representation theory of the symmetric groups. {II}.
\newblock {\em Journal of Mathematical Sciences}, 131:5471--5494, 2005.

\bibitem[VV11]{VV11a}
Gregory Valiant and Paul Valiant.
\newblock Estimating the unseen: an $n/\log(n)$-sample estimator for entropy and support size, shown optimal via new {CLT}s.
\newblock In {\em Proceedings of the 43rd Annual ACM Symposium on Theory of Computing}, pages 685--694, 2011.

\bibitem[VV17]{VV17}
Gregory Valiant and Paul Valiant.
\newblock Estimating the unseen: improved estimators for entropy and other properties.
\newblock {\em Journal of the ACM}, 64(6):1--41, 2017.

\bibitem[Wat18]{Wat18}
John Watrous.
\newblock {\em The theory of quantum information}.
\newblock Cambridge University Press, 2018.

\bibitem[Win02]{Win02}
Andreas Winter.
\newblock Coding theorem and strong converse for quantum channels.
\newblock {\em IEEE Transactions on information theory}, 45(7):2481--2485, 2002.

\bibitem[Wri16]{Wri16}
John Wright.
\newblock {\em How to learn a quantum state}.
\newblock PhD thesis, Carnegie Mellon University, 2016.

\bibitem[Wri24a]{Wri24a}
John Wright.
\newblock Lecture {10} from {CS 294}:\ {Q}uantum learning theory.
\newblock Found at \url{https://people.eecs.berkeley.edu/~jswright/quantumlearningtheory24/scribe%20notes/lecture10.pdf}, 2024.

\bibitem[Wri24b]{Wri24b}
John Wright.
\newblock Lecture {12} from {CS 294}:\ {Q}uantum learning theory.
\newblock Found at \url{https://people.eecs.berkeley.edu/~jswright/quantumlearningtheory24/scribe%20notes/lecture12.pdf}, 2024.

\bibitem[YCH19]{YCH19}
Yuxiang Yang, Giulio Chiribella, and Masahito Hayashi.
\newblock Attaining the ultimate precision limit in quantum state estimation.
\newblock {\em Communications in Mathematical Physics}, 368(1):223--293, 2019.

\bibitem[YFG13]{YFG13}
Koichi Yamagata, Akio Fujiwara, and Richard Gill.
\newblock Quantum local asymptotic normality based on a new quantum likelihood ratio.
\newblock {\em The Annals of Statistics}, 41(4):2197--2217, 2013.

\bibitem[Yue23]{Yue23}
Henry Yuen.
\newblock An improved sample complexity lower bound for (fidelity) quantum state tomography.
\newblock {\em Quantum}, 7:890, 2023.

\bibitem[ZC25]{ZC25}
Sisi Zhou and Senrui Chen.
\newblock Randomized measurements for multi-parameter quantum metrology.
\newblock Technical report, arXiv:2502.03536, 2025.

\end{thebibliography}

\appendix

\part{Appendix}
\label{part:appendix}

\section{Obtaining Young's orthogonal basis}
\newcommand{\compl}{\mathrm{compl}}

The purpose of this section is to show the following theorem.

\begin{theorem}[\Cref{thm:Schur_transform_legit}, restated] \label{thm:Schur_transform_legit_appendix}
    The unitary in \Cref{def:Schur_transform_recursive} is a Schur transform. That is, it satisfies 
    \begin{equation}
        \USW{n} \cdot \Big(\mathcal{P}^{(n)}(\pi)  \mathcal{Q}^{(n)}(U)\Big) \cdot \USWdagger{n} = \sum_{\substack{\lambda \vdash n \\ \ell(\lambda) \leq d}} \ketbra*{\lambda} \otimes \kappa_\lambda(\pi) \otimes \nu_\lambda(U),
    \end{equation} 
    for all $\pi \in S_{n}$ and $U \in U(d)$.
\end{theorem}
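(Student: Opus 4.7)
The plan is to proceed by induction on $n$, with the base case $n=1$ being immediate: $\Specht_\square$ is one-dimensional, $V^d_\square \cong \C^d$ is the defining representation, and $\USW{1}$ as defined implements the (trivial) isomorphism correctly. For the inductive step, it suffices to verify the claim on a generating set of $S_n \times U(d)$, which I will take to be $\{(e, U) : U \in U(d)\}$ together with $\{(s_i, e) : s_i = (i,i{+}1), \ i = 1, \ldots, n-1\}$.

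The unitary case $(e, U)$ is a direct consequence of the inductive hypothesis together with the defining intertwining property of $\UCG{n}$ in \Cref{eq:CG_unitary_fixed_lambda}: applying $(\USW{n-1} \otimes I)$ conjugates $\mathcal{Q}^{(n)}(U) = U^{\otimes n}$ into $\sum_\mu \ketbra{\mu} \otimes I_{\dim \mu} \otimes \nu_\mu(U) \otimes U$; then $\UCG{n}$ resolves the tensor product $\nu_\mu(U) \otimes U$ into $\bigoplus_{\mu \nearrow \lambda}\nu_\lambda(U)$; and $\UR{n}$ rearranges the resulting decomposition into the claimed block-diagonal form. The adjacent transposition $s_i$ with $i < n-1$ lies in $S_{n-1} \subset S_n$, so $\mathcal{P}^{(n)}(s_i) = \mathcal{P}^{(n-1)}(s_i) \otimes I$ and commutes with both $\UCG{n}$ and $\UR{n}$, which act as the identity on the permutation register. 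The inductive hypothesis then yields $\kappa_\mu(s_i)$ in Young's orthogonal form on each $\mu$-summand. Identifying the YY basis vector $\ket{S} \in \Specht_\lambda$ with its pre-rearrangement partner $\ket{\mu} \otimes \ket{S \setminus \{n\}}$ (where $\mu$ is $\lambda$ with the box labeled $n$ removed), the action of $s_i$ is the same in either labeling since $s_i$ does not touch the box labeled $n$, and so it matches YOB on $\kappa_\lambda$.

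The hard case, and the main obstacle, is $s_{n-1} = (n-1, n)$. My plan is to invoke the Okounkov--Vershik characterization of YOB: a basis of $\Specht_\lambda$ is Young's orthogonal basis if and only if (i) it is a joint eigenbasis of the Jucys--Murphy elements $X_1, \ldots, X_n$ with $X_k\ket{S} = \content_S(k)\ket{S}$, and (ii) for each transposition $s_k$ and each pair of SYTs $S, S'$ related by swapping $k$ and $k{+}1$, the off-diagonal matrix element $\bra{S'}\kappa_\lambda(s_k)\ket{S}$ is positive real. Condition (i) for $k < n$ follows from the inductive hypothesis. For $k = n$, I observe that $X_n = \sum_{j<n}(j,n)$ commutes with every $\sigma \in S_{n-1}$ (by a one-line conjugation argument), so by Schur's lemma it acts as a scalar on each $\Specht_\mu$-isotypic component of $\Specht_\lambda{\downarrow}^{S_n}_{S_{n-1}}$, which in our constructed basis corresponds exactly to fixing the location of the box labeled $n$; centrality of $X_1 + \cdots + X_n$ and its evaluation $\sum_{\Box \in \lambda}\content(\Box)$ on $\Specht_\lambda$ then pins the scalar on the $\mu$-component to $\content(\lambda \setminus \mu) = \content_S(n)$.

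Condition (ii) is the genuinely technical step, because it depends on the specific phase convention of the GT basis fixed in \Cref{lem:cg-rules}. The plan is to compute $\bra{S'}\kappa_\lambda(s_{n-1})\ket{S}$ explicitly by conjugating $\swap_{n-1,n}$ through the two outermost layers of $\USW{n}$: commuting it past $\UR{n}$ is trivial, and commuting it past $\UCG{n} \cdot (\USW{n-1} \otimes I)$ introduces exactly the kind of bilinear expression in one- and two-step Clebsch--Gordan coefficients that was already analyzed in \Cref{sec:technical-two-step}. Using the sign conventions recorded in \Cref{eq:equation-3,eq:equation-4}, this bilinear expression will evaluate to a positive real number of magnitude $\sqrt{1 - 1/\Delta_S(n-1)^2}$, matching the off-diagonal entry in \Cref{def:YOB}; the diagonal entry $1/\Delta_S(n-1)$ then follows from the constraint $\kappa_\lambda(s_{n-1})^2 = I$ together with the known trace of $s_{n-1}$ on the two-dimensional block spanned by $\ket{S}, \ket{S'}$. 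The horizontal and vertical (non-swappable) subcases, in which $s_{n-1}$ acts by $+1$ or $-1$, will follow by the same computation specialized to the one-dimensional block. This last positivity verification is where I expect essentially all of the work to lie; I would manage it by reusing the scalar-factor identities that power the first- and second-moment calculations rather than opening up fresh Clebsch--Gordan manipulations.
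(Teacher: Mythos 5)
Your overall architecture --- induction, reduce to a generating set, show the off-diagonal matrix elements of the ``last'' adjacent transposition are real and positive --- tracks the paper closely, even though your framing via the Okounkov--Vershik characterization (joint Jucys--Murphy eigenbasis + positivity of $\bra{S'}\kappa_\lambda(s_k)\ket{S}$) is different in surface appearance from the paper's route, which establishes a ``partial result'' (\Cref{lem:Schur_transform_legit_partial}) showing the constructed basis is Young--Yamanouchi up to a diagonal unitary, and then does a two-step-back induction (assume $n$ and $n+1$, prove $n+2$) so that only the single transposition $(n+1,n+2)$ needs checking. Your peeling back of two recursive layers is the same move as the paper's two-step-back induction, and both land on exactly the same summation: a bilinear expression in Clebsch--Gordan coefficients whose positivity must be established.

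The genuine gap is in your plan for evaluating that bilinear expression. You write that unwinding two layers ``introduces exactly the kind of bilinear expression in one- and two-step Clebsch--Gordan coefficients that was already analyzed in \Cref{sec:technical-two-step},'' and that you would manage it ``by reusing the scalar-factor identities that power the first- and second-moment calculations rather than opening up fresh Clebsch--Gordan manipulations.'' That is not what happens. The scalar-factor machinery of \Cref{sec:cg-lemmas,sec:technical-two-step} is built entirely around CG coefficients of the form $\braket*{T'}{T^\lambda, k}$, i.e.\ insertions into a \emph{highest-weight} SSYT. When you unwind $\swap_{n-1,n}$ through $\UR{n}\UCG{n}\UR{n-1}\UCG{n-1}$, the intermediate tableaux that appear (the $T', T'', Y', Y''$ in \Cref{eq:correct-basis-eq2}) are generic SSYTs summed over, not highest-weight ones, because the only tableau you are free to pin down is the one at the top shape $\lambda_{ij}$, and even choosing that one to be highest-weight puts it on the wrong side of the CG bracket. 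The paper resolves this with a new technical device you do not anticipate: it picks the \emph{lowest}-weight SSYT at shape $\lambda_{ij}$ and then passes to \emph{complementary} Young diagrams and SSYTs (\Cref{def:complementary_SSYTs}), showing via \Cref{lem:clebsch-gordan-of-complement} that each CG coefficient with a lowest-weight ``target'' equals, up to a sign $(-1)^{d-k}$ and an explicit dimension ratio, a CG coefficient with a highest-weight ``source.'' Only after this reflection do the expressions become the two-step coefficients $a^\mu_{k\ell\to i'j'}$, $b^\mu_{k\ell\to i'j'}$ for which the positivity and vanishing lemmas (\Cref{rem:a-two-step-is-nonnegative,lem:two-step-kell-to-ij}) apply, and only then does the positivity of the total sum follow by the case analysis on bumps. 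So the step you flag as the bulk of the work really is, and it needs an idea (complementation, or some equivalent reversal) that is nowhere in your plan; without it, the identities you intend to reuse simply do not apply to the objects in the sum.
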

We also restate the recursive construction of the Schur transform, for the reader's convenience. 
\begin{definition}[Schur transform;  \Cref{def:Schur_transform_recursive}, restated] \label{def:Schur_transform_recursive_appendix}
    We define the \emph{Schur transform}, $\USW{n}: (\C^d)^{\otimes n} \to \bigoplus_{\lambda \vdash (n,d)} \Specht_\lambda \otimes V^d_\lambda$, by the following recursive construction. We explicitly define $\USW{1}$ as the unitary such that 
    \begin{equation*}
        \USW{1} \ket*{i} \coloneq \ket*{\ytableausetup
        {smalltableaux, centertableaux,boxframe=normal}
        \begin{ytableau}
        ~ 
        \end{ytableau}} \otimes \ket*{\ytableausetup
        {smalltableaux, centertableaux,boxframe=normal}
        \begin{ytableau}
        1
        \end{ytableau}} \otimes \ket*{i}, 
    \end{equation*}
    for all $i \in [d]$. Then, for $n > 1$, 
        \begin{equation*}
            \USW{n} \coloneq \UR{n} \cdot \UCG{n} \cdot (\USW{n-1} \otimes I).
        \end{equation*}
\end{definition}

We start by proving the base case, $n=1$. 

\begin{lemma} \label{lem:Schur_transform_legit_partial_n=1}
    \Cref{thm:Schur_transform_legit_appendix} holds for $n = 1$. 
\end{lemma}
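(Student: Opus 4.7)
The plan is to verify the theorem directly by unpacking definitions; the $n=1$ case is essentially a tautology once all relevant conventions are spelled out, so no nontrivial obstacle is expected.

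First, I would observe that the case $n=1$ is degenerate in several convenient ways. The symmetric group $S_1 = \{e\}$ is trivial, so $\mathcal{P}^{(1)}(e) = I_{\C^d}$; similarly there is only one partition of $1$, namely $\square = (1)$, with $\ell(\square) = 1 \leq d$. The associated irreducible representations are one-dimensional (in the permutation register) and $d$-dimensional (in the unitary register): $\Specht_\square$ is the trivial $S_1$-representation with $\kappa_\square(e) = 1$ acting on the single Young–Yamanouchi basis vector $\ket{\ytableausetup{smalltableaux, centertableaux,boxframe=normal}\begin{ytableau} 1 \end{ytableau}}$, while $V^d_\square$ is, by Example~\ref{ex:defining_rep}, isomorphic to the defining representation with $\nu_\square(U) = U$ and Gelfand–Tsetlin basis equal to the computational basis $\{\ket{i}\}_{i \in [d]}$. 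Thus the right-hand side of the target identity collapses to the single block $\ketbra{\square} \otimes 1 \otimes U$.

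Next, I would compute the left-hand side on a standard basis vector. By the explicit formula in Definition~\ref{def:Schur_transform_recursive_appendix}, $\USW{1}\ket{i} = \ket{\square} \otimes \ket{\ytableausetup{smalltableaux, centertableaux,boxframe=normal}\begin{ytableau} 1 \end{ytableau}} \otimes \ket{i}$, so $\USWdagger{1}$ sends the Schur-basis vector $\ket{\square} \otimes \ket{\ytableausetup{smalltableaux, centertableaux,boxframe=normal}\begin{ytableau} 1 \end{ytableau}} \otimes \ket{i}$ back to $\ket{i}$. Applying $\mathcal{P}^{(1)}(e)\mathcal{Q}^{(1)}(U) = U$ and then $\USW{1}$ gives
\begin{equation*}
    \USW{1}\cdot U\ket{i} = \sum_j U_{ji}\cdot \ket{\square}\otimes \ket{\ytableausetup{smalltableaux, centertableaux,boxframe=normal}\begin{ytableau} 1 \end{ytableau}}\otimes \ket{j} = \ket{\square}\otimes \ket{\ytableausetup{smalltableaux, centertableaux,boxframe=normal}\begin{ytableau} 1 \end{ytableau}}\otimes U\ket{i},
\end{equation*}
which matches the action of the right-hand side on the same vector. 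Since the Schur basis spans the entire space and the identity holds on it, the operator identity follows.

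Because the argument is a direct definitional check, the only ``obstacle'' is bookkeeping: making sure the chosen phases in $\USW{1}$ and the labeling of the unique SYT and highest-weight vector are consistent with the conventions of Sections~\ref{irreps_of_Sn} and~\ref{sec:irreps_of_Ud}. This is immediate from Definition~\ref{def:Schur_transform_recursive_appendix} and Example~\ref{ex:defining_rep}, which is why the lemma serves as a clean base case for the induction that will handle general~$n$.
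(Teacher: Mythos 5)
Your proof is correct and takes essentially the same approach as the paper: both are direct definitional checks verifying that $\USW{1}$ conjugates $U = \mathcal{P}^{(1)}(e)\mathcal{Q}^{(1)}(U)$ into the claimed block form, using the triviality of $S_1$ and the identification of $V^d_{(1)}$ with the defining representation.
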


\begin{proof}
Note that for $n=1$, $S_n = \{e\}$ so that we just need to check that unitaries are correctly transformed. Moreover, there is only one Young diagram of size $1$, $\lambda = \smalloneboxSSYT{\,}$, and one SYT of that shape, $S = \smalloneboxSSYT{1}$. We have:
\begin{equation*}
    \USW{1} \cdot U \cdot \USWdagger{1} = \sum_{i,j \in [d]} U_{ij} \cdot \Big(\USW{1} \cdot \ketbra{i}{j} \cdot \USWdagger{1}\Big) = \ketbra{\smalloneboxSSYT{\,}} \otimes \ketbra{\smalloneboxSSYT{1}} \otimes \sum_{i,j \in [d]} U_{ij} \cdot \ketbra{i}{j} = \ketbra{\smalloneboxSSYT{\,}} \otimes \ketbra{\smalloneboxSSYT{1}} \otimes U.
\end{equation*}
Since $\nu_{(1)}(U) = U$, \Cref{thm:Schur_transform_legit} holds for $n=1$. 
\end{proof}

We now prove the following partial result, which nevertheless will be a useful stepping stone towards the complete result.

\begin{lemma} \label{lem:Schur_transform_legit_partial}
    Suppose \Cref{thm:Schur_transform_legit_appendix} holds for some $n \in \N$. Then $\USW{n+1}$ transforms correctly, up to phase. More precisely:
     \begin{equation*}
    \USW{n+1} \cdot \Big( \calP^{(n+1)}(\sigma) \cdot \calQ^{(n+1,d)}(U)\Big)  \cdot \USWdagger{n+1} = \sum_{\substack{\mu \vdash n+1 \\ \ell(\mu) \leq d}} \ketbra{\mu} \otimes \widetilde{\kappa}_\mu(\pi) \otimes \nu_\mu(U), 
    \end{equation*}
    where $\widetilde{\kappa}_\mu$ is a representation of $S_{n+1}$ isomorphic to $\kappa_\mu$, such that $\widetilde{\kappa}_\mu{\downarrow}_{S_n} = \kappa_\mu {\downarrow}_{S_n}$, and $\widetilde{\kappa}_\mu = W_\mu \cdot \kappa_\mu \cdot W_\mu^\dagger$, for some diagonal unitary $W_\mu$.
\end{lemma}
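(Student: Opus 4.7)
The plan is to unpack the recursion $\USW{n+1} = \UR{n+1} \cdot \UCG{n+1} \cdot (\USW{n} \otimes I)$ and verify the desired conjugation identity in three steps: one for $U(d)$, one for the subgroup $S_n \subset S_{n+1}$, and a final step invoking Schur-Weyl duality to pin down the residual freedom coming from the extra transposition $(n, n+1)$.

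For the unitary part, the identity $\calQ^{(n+1)}(U) = \calQ^{(n)}(U) \otimes U$ combined with the inductive hypothesis yields $\sum_{\lambda \vdash n} \ketbra{\lambda} \otimes I_{\dim(\lambda)} \otimes \nu_\lambda(U) \otimes U$ after conjugation by $\USW{n} \otimes I$. The defining relation \Cref{eq:CG_unitary_fixed_lambda} of the Clebsch-Gordan transform converts each $\nu_\lambda(U) \otimes U$ block into $\bigoplus_{\lambda \nearrow \mu} \ketbra{\mu} \otimes \nu_\mu(U)$, and $\UR{n+1}$ then reshuffles the registers to produce $\sum_\mu \ketbra{\mu} \otimes I \otimes \nu_\mu(U)$, with the permutation register of each $\mu$-block canonically indexed by pairs $(\lambda, S)$ with $\lambda \nearrow \mu$ and $S \in \mathrm{SYT}(\lambda)$ (equivalently, by SYTs of shape $\mu$).

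For $\sigma \in S_n$, use $\calP^{(n+1)}(\sigma) = \calP^{(n)}(\sigma) \otimes I$. The inductive hypothesis turns this into $\sum_{\lambda \vdash n} \ketbra{\lambda} \otimes \kappa_\lambda(\sigma) \otimes I \otimes I$; since both $\UCG{n+1}$ and $\UR{n+1}$ act only on the unitary register and the newly inserted qudit register, they commute with this operator, so the resulting action on the permutation register of each $\mu$-block is $\bigoplus_{\lambda \nearrow \mu} \kappa_\lambda(\sigma)$. By the branching rule (\Cref{thm:branching_rule_Sn}), this direct sum is precisely $\kappa_\mu{\downarrow}_{S_n}$ written in the subgroup-adapted basis, so the emerging $\widetilde{\kappa}_\mu$ satisfies $\widetilde{\kappa}_\mu{\downarrow}_{S_n} = \kappa_\mu{\downarrow}_{S_n}$ literally, not merely up to isomorphism.

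To extract $W_\mu$: since the construction already matches the $U(d)$-isotypic decomposition on the unitary register and $\calP^{(n+1)}(S_{n+1})$ lies in the commutant of $\calQ^{(n+1)}(U(d))$, the abstract Schur-Weyl decomposition forces the $S_{n+1}$-action $\widetilde{\kappa}_\mu$ on the permutation register of each $\mu$-block to be isomorphic to $\kappa_\mu$. Schur's lemma then supplies a unitary intertwiner $W_\mu$ with $\widetilde{\kappa}_\mu = W_\mu \kappa_\mu W_\mu^\dagger$, and the restriction identity from the previous step forces $W_\mu$ to commute with every $\kappa_\lambda$ appearing in the multiplicity-free branching; a second application of Schur's lemma then gives $W_\mu = \sum_{\lambda \nearrow \mu} c_\lambda \cdot \Pi_\lambda$, where $\Pi_\lambda$ projects onto the $\Specht_\lambda$-subspace of $\Specht_\mu$ and $|c_\lambda| = 1$, which is exactly diagonality in the Young-Yamanouchi basis indexed by SYTs. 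The main obstacle is this final step: invoking abstract Schur-Weyl duality (or, equivalently, the double-commutant theorem applied to the irreducibility of $\nu_\mu$) to conclude $\widetilde{\kappa}_\mu \cong \kappa_\mu$, and then verifying that the indexing of the permutation register produced by our recursive construction genuinely coincides with the subgroup-adapted basis of $\kappa_\mu$ — without this identification, the conclusion that $W_\mu$ is literally diagonal (rather than merely block-scalar in some other, unrelated basis) would not follow.
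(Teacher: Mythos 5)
Your proposal is correct and follows essentially the same route as the paper: track the conjugation of $\calQ^{(n+1)}(U)$ and $\calP^{(n+1)}(\sigma)$, $\sigma\in S_n$, through the Clebsch--Gordan and rearrangement unitaries to get the branching decomposition literally in the permutation register, then use the commutant of $\{\nu_\mu(U)\}$ together with Schur--Weyl uniqueness to identify $\widetilde{\kappa}_\mu\cong\kappa_\mu$, and finally apply Schur's lemma to the intertwiner $W_\mu$ restricted to the multiplicity-free sum $\bigoplus_{\lambda\nearrow\mu}\kappa_\lambda$ to conclude $W_\mu$ is block-scalar, hence diagonal in the Young--Yamanouchi basis. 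The one imprecision is the claim that $\UR{n+1}$ ``commutes'' with the $S_n$-action --- it does not literally commute but rather conjugates it to the reindexed form $\sum_\mu\ketbra{\mu}\otimes\big(\bigoplus_{\lambda\nearrow\mu}\kappa_\lambda(\sigma)\big)\otimes I$, which is what you actually use; the conclusion is unaffected.
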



\begin{proof}
     First, for all $\sigma \in S_n$ and $U \in U(d)$
    \begin{align*}
        \big(\USW{n} \otimes I \big) \cdot \Big( \calP^{(n+1)}(\sigma) \cdot \calQ^{(n+1,d)}(U)\Big)  \cdot \big(\USWdagger{n} \otimes I\big) & = \Big(\USW{n} \cdot \Big( \calP^{(n)}(\sigma) \cdot \calQ^{(n,d)}(U)\Big)  \cdot \USWdagger{n}\Big) \otimes U \nonumber \\
        & = \Big(\sum_{\substack{\lambda \vdash n \\ \ell(\lambda) \leq d}} \ketbra{\lambda} \otimes \kappa_\lambda(\sigma) \otimes \nu_{\lambda}(U)\Big) \otimes U.
    \end{align*}
Here, we have used that $\mathcal{P}^{(n+1)}(\sigma) = \mathcal{P}^{(n)}(\sigma) \otimes I$, and $\mathcal{Q}^{(n+1)}(U) = \mathcal{Q}^{(n)}(U) \otimes U$. We can then apply the Clebsch-Gordan transform, obtaining the following, where we have defined $\calV \coloneq \UCG{n+1} \cdot (\USW{n} \otimes I)$:
\begin{align}
    \calV \cdot \Big( \calP^{(n+1)}(\sigma) \cdot \calQ^{(n+1,d)}(U)\Big)  \cdot \calV^\dagger & = \sum_{\substack{\lambda \vdash n \\ \ell(\lambda) \leq d}} \ketbra{\lambda} \otimes \kappa_\lambda(\sigma) \otimes \Big(\calU^{(\lambda)}_{\mathrm{CG}} \cdot \big(\nu_{\lambda}(U) \otimes U \big) \cdot \calU^{(\lambda)\dagger}_{\mathrm{CG}} \Big) \nonumber \\
    & = \sum_{\substack{\lambda \vdash n \\ \ell(\lambda) \leq d}} \ketbra{\lambda} \otimes \kappa_\lambda(\sigma) \otimes \Big( \sum_{\substack{\lambda \nearrow \mu\\\ell(\mu) \leq d}}  \ketbra{\mu} \otimes \nu_{\mu} (U)\Big).  \label{eq:CG_Schur_n_CG_R}
\end{align}
Finally, applying the rearranging unitary $\UR{n+1}$, we have:
\begin{equation}
    \USW{n+1} \cdot \Big( \calP^{(n+1)}(\sigma) \cdot \calQ^{(n+1,d)}(U)\Big)  \cdot \USWdagger{n+1} = \sum_{\substack{\mu \vdash n+1 \\ \ell(\mu) \leq d}} \ketbra{\mu} \otimes \Big( \sum_{\lambda \nearrow \mu} \ketbra{\lambda} \otimes \kappa_\lambda(\sigma) \Big) \otimes \nu_\mu(U). 
\end{equation}
Recall that for $\sigma \in S_n$, we have $\kappa_\mu(\sigma) = \sum_{\lambda \nearrow \mu} \ketbra{\lambda} \otimes \kappa_\lambda(\sigma)$, the symmetric group branching rule and the construction of the Young-Yamanouchi basis (see \Cref{irreps_of_Sn}). Thus, $\USW{n+1}$ correctly implements the Schur transform for $\sigma \in S_{n}$. What about $\pi \in S_{n+1}$, more generally? Firstly, we claim 
\begin{equation} \label{eq:CG_reasoning_dummy_1}
    \USW{n+1} \cdot \Big( \calP^{(n+1)}(\pi) \cdot \calQ^{(n+1,d)}(U)\Big)  \cdot \USWdagger{n+1} = \sum_{\substack{\mu \vdash n+1 \\ \ell(\mu) \leq d}} \ketbra{\mu} \otimes X_\mu(\pi) \otimes \nu_\mu(U),
\end{equation}
for some matrices $\{X_\mu(\pi)\}$. To see this, note that because $\calP^{(n+1)}(\pi)$ and $\calQ^{(n+1,d)}(U)$ commute for all $U \in U(d)$, the matrix $\USW{n+1} \cdot  \calP^{(n+1)}(\pi) \cdot \USWdagger{n+1}$ commutes with all matrices of the form
\begin{equation} \label{eq:CG_reasoning_dummy_2}
    \USW{n+1} \cdot \calQ^{(n+1,d)}(U) \cdot \USWdagger{n+1} = \USW{n+1} \cdot \Big( \calP^{(n+1)}(e) \cdot \calQ^{(n+1,d)}(U) \Big) \cdot \USWdagger{n+1}  = \sum_{\substack{\mu \vdash n+1 \\ \ell(\mu) \leq d}} \ketbra{\mu} \otimes I_{\dim(\mu)} \otimes \nu_\mu(U).
\end{equation}
Since the $\{\nu_\mu\}$ are non-isomorphic, irreducible representations, we must have that \cite[Theorem 1.7.8, Item (2)]{Sag01}
\begin{equation}\label{eq:CG_reasoning_dummy_3}
    \USW{n+1} \cdot  \calP^{(n+1)}(\pi) \cdot \USWdagger{n+1} =  \sum_{\substack{\mu \vdash n+1 \\ \ell(\mu) \leq d}} \ketbra{\mu} \otimes X_\mu(\pi) \otimes I_{\dim(V^d_\mu)},
\end{equation}
for some matrices $\{X_\mu(\pi)\}$. However, these matrices form a representation of $S_{n+1}$, and by the uniqueness of the decomposition in Schur-Weyl duality, $X_\mu$ is isomorphic as a representation to $\kappa_\mu$. We relabel $X_\mu \to \widetilde{\kappa}_\mu$. Combining \Cref{eq:CG_reasoning_dummy_2,eq:CG_reasoning_dummy_3} gives us \Cref{eq:CG_reasoning_dummy_1}.

By \Cref{lem:unitary_isomorphism}, for each $\mu$ there exists a fixed unitary $W_\mu$ such that $W_\mu \cdot \widetilde{\kappa}_\mu(\pi) \cdot W_\mu^\dagger = \kappa_\mu(\pi)$ for all $\pi \in S_{n+1}$. Restricting to $\sigma \in S_{n}$, we obtain
\begin{equation*}
    W_\mu \cdot \Big(\sum_{ \lambda \nearrow \mu} \ketbra{\lambda} \otimes \kappa_{\lambda}(\sigma)\Big) \cdot W_\mu^\dagger = \sum_{ \lambda \nearrow \mu} \ketbra{\lambda} \otimes \kappa_{\lambda}(\sigma).
\end{equation*}
Since the $\{\kappa_\lambda\}$ are non-isomorphic and irreducible, $W_\mu$, lying in the commutant of the sum, is necessarily constant on each $\lambda$-subspace (as in \Cref{eq:CG_reasoning_dummy_3}, except now the multiplicity space is one-dimensional). Each constant must be a phase, since each $\lambda$-subspace is orthogonal, and $W_\mu$ is unitary. That is, $W_\mu$ is of the form
\begin{equation*}
    W_\mu = \sum_{\lambda \nearrow \mu} e^{i \theta_{\lambda \mu}} \cdot \ketbra{\lambda} \otimes I_{\dim(\lambda)}, 
\end{equation*}
for some angles $\{\theta_{\lambda \mu}\}_{\lambda \nearrow \mu}$. Therefore, 
\begin{equation*}
    \USW{n+1} \cdot \Big( \calP^{(n+1)}(\pi) \cdot \calQ^{(n+1,d)}(U)\Big)  \cdot \USWdagger{n+1} = \sum_{\substack{\mu \vdash n+1 \\ \ell(\mu) \leq d}} \ketbra{\mu} \otimes \widetilde{\kappa}_\mu(\pi) \otimes \nu_\mu(U),
\end{equation*}
with $\widetilde{\kappa}_\mu$ isomorphic to $\kappa_\mu$, such that $\widetilde{\kappa}_\mu{\downarrow}_{S_n} = \kappa_\mu {\downarrow}_{S_n}$, and $\widetilde{\kappa}_\mu = W_\mu \cdot \kappa_\mu \cdot W_\mu^\dagger$, for some diagonal unitary ~$W_\mu$.
\end{proof}

This partial result implies the $n=2$ case as well.

\begin{corollary}\label{lem:Schur_transform_legit_partial_n=2}
    \Cref{thm:Schur_transform_legit_appendix} holds for $n = 2$. 
\end{corollary}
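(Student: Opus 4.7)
The plan is to derive \Cref{lem:Schur_transform_legit_partial_n=2} directly from the two results already in hand: the base case \Cref{lem:Schur_transform_legit_partial_n=1} (which verifies \Cref{thm:Schur_transform_legit_appendix} for $n=1$) and the inductive step \Cref{lem:Schur_transform_legit_partial} (which promotes a valid Schur transform at level $n$ to one at level $n+1$, up to conjugation by some diagonal unitary $W_\mu$ on each permutation register). Applying the inductive step with $n = 1$, we obtain
\begin{equation*}
    \USW{2} \cdot \bigl(\calP^{(2)}(\pi) \cdot \calQ^{(2,d)}(U)\bigr) \cdot \USWdagger{2}
    = \sum_{\substack{\mu \vdash 2 \\ \ell(\mu) \leq d}} \ketbra{\mu} \otimes \widetilde{\kappa}_\mu(\pi) \otimes \nu_\mu(U),
\end{equation*}
where $\widetilde{\kappa}_\mu = W_\mu \cdot \kappa_\mu \cdot W_\mu^\dagger$ for some diagonal unitary $W_\mu$.

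The remaining task is to show that for $n=2$ the ``twist'' $W_\mu$ acts trivially. The key observation is that the only partitions of $2$ are $\mu = (2)$ and $\mu = (1,1)$, and each admits exactly one standard Young tableau. Consequently $\dim(\kappa_{(2)}) = \dim(\kappa_{(1,1)}) = 1$, so that $\kappa_\mu(\pi)$ is a scalar for every $\pi \in S_2$, and likewise $W_\mu$ is a scalar (a single phase). Conjugation of a $1 \times 1$ matrix by a phase has no effect, hence $\widetilde{\kappa}_\mu = \kappa_\mu$ identically for both $\mu$. This upgrades the partial statement into the exact identity required by \Cref{thm:Schur_transform_legit_appendix}, proving the corollary.

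The ``hard part'' here is essentially absent, since the dimension count kills the ambiguity immediately. The real difficulty lies in the analogous argument for $n > 2$, where Specht modules can have dimension greater than $1$ and the diagonal unitaries $W_\mu$ need not be scalar; fixing the phases to recover Young's orthogonal form in that regime is what \Cref{thm:Schur_transform_legit_appendix} must ultimately grapple with in the general induction, but for $n=2$ no such work is needed.
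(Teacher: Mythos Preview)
Your proof is correct and follows essentially the same approach as the paper: invoke the $n=1$ base case together with the partial inductive step to obtain $\widetilde{\kappa}_\mu = W_\mu \kappa_\mu W_\mu^\dagger$, and then observe that both partitions of $2$ have one-dimensional Specht modules so the conjugation is trivial.
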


\begin{proof}
    By \Cref{lem:Schur_transform_legit_partial_n=1} and \Cref{lem:Schur_transform_legit_partial}, we know that for all $\pi \in S_2$ and $U \in U(d)$:
    \begin{equation*}
        \USW{2} \cdot \Big( \calP^{(2)}(\pi) \cdot \calQ^{(2)}(U) \Big) \cdot \USWdagger{2} = \sum_{\substack{\mu \vdash 2 \\ \ell(\mu) \leq d}} \ketbra{\mu} \otimes \widetilde{\kappa}(\pi) \otimes \nu_\mu(U),
    \end{equation*}
    where $\widetilde{\kappa}_\mu(\pi) = W_\mu \cdot \kappa_\mu(\pi) \cdot W_\mu^\dagger$, for some unitaries $W_\mu$. However, for $|\mu| = 2$, we either have $\mu = (2)$ or $\mu = (1,1)$. In either case, $\dim(\mu) = 1$, as there is a unique SSYT for any Young diagram with a single row, or with a single column. Since all matrices of size $1$ commute:
    \begin{equation*}
        \widetilde{\kappa}_\mu(\pi) = W_\mu \cdot \kappa_\mu(\pi) \cdot W^\dagger_\mu = \kappa_\mu(\pi). \qedhere
    \end{equation*}
\end{proof}

We are now ready to prove~\Cref{thm:Schur_transform_legit}.

\begin{proof}[Proof of~\Cref{thm:Schur_transform_legit}]

The proof will follow by induction on the number of qudits $n$. In particular, we show that if~\Cref{thm:Schur_transform_legit_appendix} holds for $n$ and $n+1$ qudits, then it also holds for $n+2$. Given the base cases of $n=1$ (\Cref{lem:Schur_transform_legit_partial_n=1}) and $n=2$ (\Cref{lem:Schur_transform_legit_partial_n=2}), this will show the result for all $n \geq 1$.

Given the inductive hypothesis, \Cref{lem:Schur_transform_legit_partial}, states that for $\mu \vdash n+2$,
\begin{equation*}
    \widetilde{\kappa}_{\mu}{\downarrow}_{S_{n+1}} = \kappa_{\mu}{\downarrow}_{S_{n+1}}.
\end{equation*}
This implies that for any permutation $\pi \in S_{n+2}$ that satisfies $\pi(n+2) = n+2$, it holds that $\widetilde{\kappa}_{\mu}(\pi) = \kappa_{\mu}(\pi)$.
Thus, since the set of permutations $\{(n+1, n+2)\} \cup \{\sigma \in S_{n+2} \mid \sigma(n+2) = n+2\}$ generates the entirety of $S_{n+2}$, it suffices to prove that
\begin{equation*}
    \widetilde{\kappa}_{\mu}(n+1, n+2) = \kappa_{\mu}(n+1, n+2),
\end{equation*}
to conclude that $\widetilde{\kappa}_{\mu}(\pi) = \kappa_{\mu}(\pi)$ for all $\pi$.

We restrict our attention to the permutation $(n+1, n+2)$. Let $\lambda \vdash n$ be a Young diagram. We know that both $\kappa_{\lambda_{ij}}(n+1, n+2)$ and $\widetilde{\kappa}_{\lambda_{ij}}(n+1, n+2)$ have the same block-diagonal structure in Young's orthogonal basis. In particular, each matrix has a $2 \times 2$ submatrix whenever $(S_{ij}, S_{ji})$ are both valid, and a $1 \times 1$ submatrix whenever only one is valid. Here, we use $S_{ij}$ to denote the SYT we obtain when we start from a valid SYT $S$ with $n$ boxes, and add an $n+1$ at the end of row $i$, and an $n+2$ at the end of row $j$.

\Cref{lem:Schur_transform_legit_partial} states that
\begin{equation*}
    \widetilde{\kappa}_{\lambda_{ij}}(n+1, n+2) = W_{\lambda_{ij}} \cdot \kappa_{\lambda_{ij}}(n+1, n+2) \cdot W_{\lambda_{ij}}^{\dagger},
\end{equation*}
for $W_{\lambda_{ij}}^{\dagger}$ a diagonal unitary. Since matrices of size $1$ commute, we can deduce that $\widetilde{\kappa}_{\lambda_{ij}}(n+1, n+2)$ and $\kappa_{\lambda_{ij}}(n+1, n+2)$ have the same $1 \times 1$ submatrices in that basis.

We now focus on a $2 \times 2$ submatrix indexed by the SYTs $S_{ij}$ and $S_{ji}$. Let $e^{i\theta_1}, e^{i\theta_2}$ be the entries of $W_{\lambda_{ij}}^{\dagger}$ on $\ketbra{S_{ij}}$ and $\ketbra{S_{ji}}$ respectively. Then $\widetilde{\kappa}_{\lambda_{ij}}$ and $\kappa_{\lambda_{ij}}$ agree on the diagonal entries of the submatrix, since
\begin{equation*}
    \bra{S_{ij}} \widetilde{\kappa}_{\lambda_{ij}}(n+1, n+2) \ket{S_{ij}} = e^{i\theta_1} \cdot \bra{S_{ij}} \kappa_{\lambda_{ij}}(n+1, n+2) \ket{S_{ij}} \cdot e^{-i\theta_1} = \bra{S_{ij}} \kappa_{\lambda_{ij}}(n+1, n+2) \ket{S_{ij}},
\end{equation*}
and similarly for $S_{ji}$. On the other hand, the off-diagonal entries satisfy
\begin{align*}
    \bra{S_{ij}} \widetilde{\kappa}_{\lambda_{ij}}(n+1, n+2) \ket{S_{ji}}
    &= e^{i\theta_1} \cdot \bra{S_{ij}} \kappa_{\lambda_{ij}}(n+1, n+2) \ket{S_{ji}} \cdot e^{-i\theta_2} \\
    &= e^{i(\theta_1-\theta_2)} \cdot \bra{S_{ij}} \kappa_{\lambda_{ij}}(n+1, n+2) \ket{S_{ji}}.
\end{align*}
Since the off-diagonal entries of $\kappa_{\lambda_{ij}}(n+1, n+2)$ are given by the expression\footnote{Recall that $\Delta_{ji}$ is the difference in contents of the boxes with $n+2$ and $n+1$ in $S_{ij}$.}
\begin{equation*}
    \bra{S_{ij}} \kappa_{\lambda_{ij}}(n+1, n+2) \ket{S_{ji}} = \sqrt{1 - \frac{1}{\Delta_{ji}^2}},
\end{equation*}
it suffices to show that~$\bra{S_{ij}} \widetilde{\kappa}_{\lambda_{ij}}(n+1, n+2) \ket{S_{ji}}$ is real and positive for all pairs of SYTs of the form~$(S_{ij}, S_{ji})$.

The expression $\bra{S_{ij}} \widetilde{\kappa}_{\lambda_{ij}}(n+1, n+2) \ket{S_{ji}}$ can be written to involve the $\swap = \mathcal{P}(n+1, n+2)$ operator on qudits $n+1$ and $n+2$, by noting that
\begin{equation}
    \label{eq:correct-basis-eq1}
    \bra{S_{ij}} \widetilde{\kappa}_{\lambda_{ij}}(n+1, n+2) \ket{S_{ji}}
    = \bra{\lambda_{ij}, S_{ij}, T} \cdot \USW{n+2} \cdot \mathcal{P}(n+1, n+2) \cdot \USWdagger{n+2}  \cdot \ket{\lambda_{ij}, S_{ji}, T},
\end{equation}
for any $T \in \mathrm{SSYT}(\lambda_{ij})$. The vectors on the left and right of $\mathcal{P}(n+1, n+2)$ are written in ``our'' Schur basis, which contains a Young-Yamanouchi basis in the permutation register that is a priori related by phases to Young's orthogonal basis. We will now ``undo'' our Schur transform to obtain the Schur basis on $n$ qudits, which, by induction, we will assume is the true Schur basis. In particular, we follow the original steps from~\Cref{sec:constructing_Schur_transform}, in reverse. As each step is unitary, we can undo each step by applying the conjugate transpose of the relevant unitary. We have
\begin{align*}
    \ket{\lambda_{ij}, S_{ji}, T}
    & = \UR{n+2} \cdot \ket{\lambda + e_j, S_{j}, \lambda_{ij}, T} \\
    &= \UR{n+2} \cdot \UCG{n+2} \cdot  \Big(\sum_k \sum_{T' \in \lambda + e_j} \braket*{T',k}{T} \cdot  \ket{\lambda + e_j, S_{j}, T'} \otimes \ket{k}\Big) \\
    & = \UR{n+2} \cdot \UCG{n+2} \cdot \UR{n+1} \cdot \Big(\sum_{k} \sum_{T' \in \lambda + e_j} \braket*{T',k}{T} \cdot  \ket{\lambda, S, \lambda + e_j, T'} \otimes \ket{k}\Big) \\
    & =  \UR{n+2} \cdot \UCG{n+2} \cdot \UR{n+1} \cdot \UCG{n+1} \cdot \Big(\sum_{k, \ell} \sum_{T' \in \lambda + e_j} \sum_{T \in \lambda} \braket*{T',k}{T} \cdot \braket*{T'',\ell}{T'} \cdot  \ket{\lambda, S, T''} \otimes \ket{\ell} \otimes \ket{k}\Big) \\
    & = \USW{n+2} \cdot (\USWdagger{n} \otimes I \otimes I) \cdot \Big(\sum_{k, \ell} \sum_{T' \in \lambda + e_j} \sum_{T \in \lambda} \braket*{T',k}{T} \cdot \braket*{T'',\ell}{T'} \cdot  \ket{\lambda, S, T''} \otimes \ket{\ell} \otimes \ket{k}\Big).
\end{align*}
Throughout this section, we use $T \in \lambda$ to denote in a succinct way that the SSYT $T$ has shape $\lambda$. Similarly, we can write
\begin{equation*}
    \ket{\lambda_{ij}, S_{ij}, T}
    = \USW{n+2} \cdot (\USWdagger{n} \otimes I \otimes I) \cdot \Big( \sum_{k', \ell'} \sum_{Y' \in \lambda+e_i} \sum_{Y'' \in \lambda} \braket*{Y',\ell'}{T} \cdot \braket*{Y'',k'}{Y'} \cdot  \ket{\lambda, S, Y''} \otimes \ket{k'} \otimes \ket{\ell'}\Big).
\end{equation*}
Then the off-diagonal entry that we are trying to compute is equal to
\begin{align}
    &\sum_{k, \ell, k', \ell'} \sum_{\substack{T'' \in \lambda \\ T' \in \lambda + e_j}} \sum_{\substack{Y''\in \lambda \\ Y' \in \lambda + e_i}} \braket*{Y'',k'}{Y'} \braket*{Y',\ell'}{T} \braket*{T'',\ell}{T'} \braket*{T',k}{T} \cdot \bra{\lambda, S, T'', \ell, k} \cdot \USW{n} \cdot \swap \cdot \USWdagger{n}\cdot  \ket{\lambda, S, Y'', k', \ell'} \nonumber \\
    &= \sum_{k, \ell, k', \ell'} \sum_{\substack{T'' \in \lambda \\ T' \in \lambda + e_j}} \sum_{\substack{Y''\in \lambda \\ Y' \in \lambda + e_i}} \braket*{Y'',k'}{Y'} \braket*{Y',\ell'}{T} \braket*{T'',\ell}{T'} \braket*{T',k}{T} \cdot \braket{\lambda, S, T'', \ell, k}{\lambda, S, Y'', \ell', k'} \nonumber \\
    &= \sum_{k, \ell} \sum_{\substack{T'' \in \lambda \\ T' \in \lambda + e_j}} \sum_{\substack{Y''\in \lambda \\ Y' \in \lambda + e_i}} \braket*{Y'',k}{Y'} \braket*{Y',\ell}{T} \braket*{T'',\ell}{T'} \braket*{T',k}{T} \cdot \braket{T''}{Y''}, \label{eq:correct-basis-eq2}
\end{align}
where the last equality follows because each term is zero unless $(\ell, k) = (\ell', k')$. Since~\Cref{eq:correct-basis-eq1} holds for any SSYT $T$, we will choose $T$ to make the above summation easier to compute (its sign, at least). In particular, the one-step and two-step Clebsch-Gordan coefficients that we have developed so far gave us a solid understanding of how a highest-weight SSYT changes when we tensor onto it one or two elements of the fundamental representation $V^{d}_{(1)}$, like $\ket{k}$ and $\ket{\ell}$. Here, we are instead starting from an SSYT $T$, and want to understand which SSYTs $T''$ and $Y''$ can take us to $T$ when we tensor on $\ket{k}$'s and $\ket{\ell}$'s. Pictorially, the SSYTs in~\Cref{eq:correct-basis-eq2} satisfy the following:
\begin{equation*}
    \underbrace{T''}_{\lambda} \xrightarrow{\otimes \ket{\ell}} \underbrace{T'}_{\lambda+e_j} \xrightarrow{\otimes \ket{k}} \underbrace{T}_{\lambda_{ij}} \xleftarrow{\otimes \ket{\ell}} \underbrace{Y'}_{\lambda+e_i} \xleftarrow{\otimes \ket{k}} \underbrace{Y''}_{\lambda}
\end{equation*}
Since $T''$ and $Y''$ are not necessarily highest-weight SSYTs, it feels that our work from the previous section is not helpful here. Fortunately,  we can relate the Clebsch-Gordan coefficients that arise in~\Cref{eq:correct-basis-eq1} to the one-step and two-step Clebsch-Gordan coefficients, by setting $T$ to be the \emph{lowest}-weight SSYT of shape $\lambda_{ij}$, and then considering the \emph{complement} tableaux of $T, T', T''$, and $Y', Y''$. We introduce these notions below.

\begin{definition}[Lowest-weight SSYTs]
    Let $\lambda = (\lambda_1, \dots, \lambda_d) \vdash n$ be a Young diagram. The \emph{lowest weight SSYT} of shape $\lambda$ and alphabet $[d]$ is denoted by $T_\lambda$ and satisfies:
    \begin{equation*}
        T_\lambda^{\leq p} = (\lambda_{d-p+1}, \dots, \lambda_d).
    \end{equation*}
    The \emph{lowest-weight vector} in $V^d_\lambda$ is the vector $\ket{T_\lambda}$. 
\end{definition}

This formal definition can be used to show straightforwardly that $T_\lambda$ is, in fact, an SSYT, using the interlacing property described in \Cref{sec:irreps_of_Ud}. Indeed, note that $(T_\lambda)^{\leq p}_i = \lambda_{d-p+i}$. Then $T_\lambda^{\leq p} \preceq T_\lambda^{\leq (p+1)}$ since $(T_\lambda)^{\leq (p+1)}_{i+1} \leq (T_\lambda)^{\leq p}_i \leq (T_\lambda)^{\leq (p+1)}_{i}$, which holds because $\lambda_{d-(p+1) + (i+1)} = \lambda_{d - p + i} \leq \lambda_{d - (p+1) +i}$, as the rows of $\lambda$ are weakly decreasing. However, we can also give a more intuitive picture. 

Note that the rows of $T_{\lambda}^{\leq p}$ are the same as the bottom $p$ rows of $\lambda$. As illustrated in \Cref{fig:lowest_weight_SSYTs}, this means that a box in $T_\lambda$ is filled with the symbol $p$ when it is above exactly $(d-p)$ boxes. Thus, the lowest-weight SSYT has a simple, alternate characterization: $T_\lambda$ is the SSYT of shape $\lambda$ obtained by filling the lowest box in each column with the value $d$, and then repeating this procedure with the unfilled boxes using the values $d-1, \dots, 1$. Intuitively, $T_\lambda$ is the SSYT of shape $\lambda$ with the maximum number of $d$'s, then the maximum number of $(d-1)$'s, and so on. 

\begin{figure}
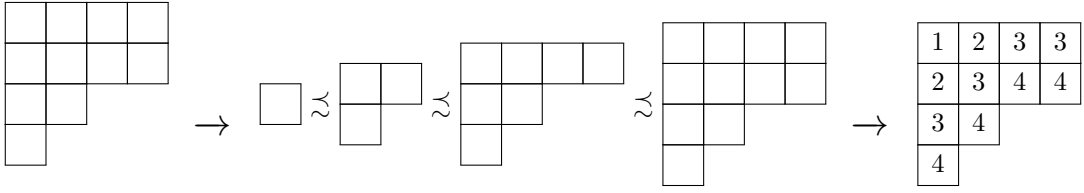
 
    \centering
    \begin{ytableau}
          ~ & ~ & ~ & ~ \\
          ~ & ~ & ~ & ~ \\
          ~ & ~ \\
          ~ \\
          \none
    \end{ytableau}
    \Large
    \begin{tabular}{c}
        \\ 
        $\xrightarrow[]{\text{}}$  \\
    \end{tabular}
    \normalsize
    \begin{ytableau}
          ~ \\
    \end{ytableau}
    $\precsim$
    \begin{ytableau}
          ~ & ~ \\
          ~ \\
    \end{ytableau}
    $\precsim$
    \begin{ytableau}
          ~ & ~ & ~ & ~ \\
          ~ & ~ \\
          ~ \\
    \end{ytableau}
    $\precsim $
    \begin{ytableau}
          ~ & ~ & ~ & ~ \\
          ~ & ~ & ~ & ~ \\
          ~ & ~ \\
          ~ \\
    \end{ytableau}
    \Large
    \begin{tabular}{c}
        \\ 
        $\xrightarrow[]{\text{}}$  \\
    \end{tabular}
    \normalsize
    \begin{ytableau}
          1 & 2 & 3 & 3 \\
          2 & 3 & 4 & 4 \\
          3 & 4 \\
          4 \\
    \end{ytableau}
    \caption{An example illustrating lowest-weight SSYTs. Take $d = 4$. On the left, we have the Young diagram $\lambda = (4,4,2,1)$. In the middle, we have the sequence $T_\lambda^{\leq p}$, for $p \in [d]$. Note that $T_\lambda^{\leq p}$ consists of the lowest $p$ rows of $\lambda$, shifted upwards by $(d-p)$. Thus, $T_\lambda^{\leq p}$ contains the cells of $\lambda$ directly above at least $(d-p)$ boxes. On the right is $T_\lambda$, the lowest-weight SSYT, corresponding to the chain $T_\lambda^{\leq 1} \preceq \dots \preceq T_\lambda^{\leq d}$. The boxes filled with symbol $p$ are the boxes in $\lambda$ that are above exactly $(d-p)$ boxes. This SSYT can also be obtained by filling the lowest box in each column with $d$, then the lowest remaining box in each column with $(d-1)$, and so on.}
    \label{fig:lowest_weight_SSYTs}
\end{figure}

\begin{definition}[Complementary Young diagrams]
    Let $\lambda \in \Z^d$ be a Young diagram, and let $m$ be an integer at least as large as the number of columns in $\lambda$. The Young diagram which \emph{complements} $\lambda$ with respect to $m$ columns, $\mu \in \Z^d$, is the Young diagram such that
    \begin{equation*}
        \mu_i = m - \lambda_{d+1-i}. 
    \end{equation*}
    We write $\mu \eqcolon \compl_{(d,m)}(\lambda)$, though we will drop $(d,m)$ when these parameters are clear. We also say $\mu$ is the \emph{complement} of $\lambda$, or $\mu$ is \emph{complementary} to $\lambda$. 
\end{definition}

\begin{figure}
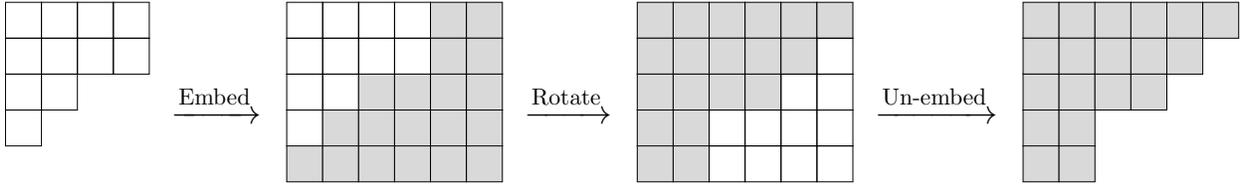
 
    \resizebox{\textwidth}{!}{%
    \centering
    \begin{ytableau}
          ~ & ~ & ~ & ~ \\
          ~ & ~ & ~ & ~ \\
          ~ & ~ \\
          ~ \\
          \none
    \end{ytableau}
    \Large
    \begin{tabular}{c}
        \\ 
        $\xrightarrow[]{\text{Embed}}$  \\
    \end{tabular}
    \normalsize
    \begin{ytableau}
          ~ & ~ & ~ & ~ & *(gray!30)~ & *(gray!30)~  \\
          ~ & ~ & ~ & ~ & *(gray!30)~ & *(gray!30)~ \\
          ~ & ~ & *(gray!30)~ & *(gray!30)~ & *(gray!30)~ & *(gray!30)~  \\
          ~ &  *(gray!30)~ & *(gray!30)~ & *(gray!30)~ & *(gray!30)~ & *(gray!30)~ \\ 
           *(gray!30)~ & *(gray!30)~ & *(gray!30)~ & *(gray!30)~ & *(gray!30)~ & *(gray!30)~
    \end{ytableau}
    \Large
    \begin{tabular}{c}
        \\
        $\xrightarrow[]{\text{Rotate}}$  \\
    \end{tabular}
    \normalsize
    \begin{ytableau}
        *(gray!30)~ & *(gray!30)~ & *(gray!30)~ & *(gray!30)~ & *(gray!30)~ & *(gray!30)~ \\
          *(gray!30)~ & *(gray!30)~ & *(gray!30)~ & *(gray!30)~ & *(gray!30)~ & ~ \\ 
           *(gray!30)~ & *(gray!30)~ & *(gray!30)~ & *(gray!30)~  & ~ & ~ \\
          *(gray!30)~ & *(gray!30)~ & ~ & ~ & ~ & ~   \\ *(gray!30)~ & *(gray!30)~ & ~ & ~ & ~ & ~ 
    \end{ytableau}
    \Large 
    \begin{tabular}{c}
        \\ 
        $\xrightarrow[]{\text{Un-embed}}$  \\
    \end{tabular}
    \normalsize
    \begin{ytableau}
        *(gray!30)~ & *(gray!30)~ & *(gray!30)~ & *(gray!30)~ & *(gray!30)~ & *(gray!30)~ \\
          *(gray!30)~ & *(gray!30)~ & *(gray!30)~ & *(gray!30)~ & *(gray!30)~  \\ 
           *(gray!30)~ & *(gray!30)~ & *(gray!30)~ & *(gray!30)~   \\
          *(gray!30)~ & *(gray!30)~\\ 
          *(gray!30)~ & *(gray!30)~
    \end{ytableau} }
    \caption{An example illustrating complementary Young diagrams. Take $d = 5$, and $m = 6$. On the left, we have the Young diagram $\lambda = (4,4,2,1,0)$. On the right, we have the Young diagram $\mu = \compl_{(d,m)}(\lambda) = (6,5,4,2,2)$. In between, we illustrate how we can obtain $\mu$ from $\lambda$ by first ``embedding'' $\lambda$ into a $d \times m$ grid of boxes, then rotating the diagram $180^\circ$, and finally ``un-embedding'' the shaded boxes, which are the boxes that were not originally in $\lambda$.}
    \label{fig:complementary_YDs}
\end{figure}

We note that $\compl(\lambda)$ is, indeed, a Young diagram, since $\mu_{i+1} = m - \lambda_{p-i} \leq m - \lambda_{p+1-i} = \mu_{i}$. The complementary Young diagram can be obtained by a three-step geometric process illustrated in \Cref{fig:complementary_YDs}. It is also clear from the figure that $\compl_{(d,m)}(\compl_{(d,m)}(\lambda)) = \lambda$, so that complementation with fixed parameters is bijective. 

For $S$ an SSYT, we can consider the complements of the Young diagrams $S^{\leq p} \in \Z^p$, with respect to a common value of $m$, for $p = 1, \dots, d$. 

\begin{lemma}
    Let $S$ be an SSYT with alphabet $[d]$. Let $m$ be an integer at least as large as the number of columns of $S$. For each $p \in [d]$, construct the complementary tableau $\compl_{(p,m)}(S^{\leq p}) \in \Z^p$. Then, for each $p \in [d-1]$, $\compl_{(p,m)}(S^{\leq p}) \precsim \compl_{(p+1,m)}(S^{\leq p+1})$. 
\end{lemma}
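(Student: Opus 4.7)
The plan is to reduce the lemma to an essentially algebraic check, since the interlacing condition $\precsim$ translates cleanly under the complementation operation. The starting point is the fact that $S^{\leq p} \precsim S^{\leq p+1}$ holds automatically: the skew diagram $S^{=p+1}$ is the set of cells of $S$ filled with the symbol $p+1$, and in any SSYT entries strictly increase down columns, so no two of these cells share a column. Equivalently, writing $\lambda \coloneqq S^{\leq p}$ (of length $p$, extended by $\lambda_{p+1}=0$) and $\mu \coloneqq S^{\leq p+1}$ (of length $p+1$), we have the two interlacing inequalities
\begin{equation*}
\lambda_i \leq \mu_i \quad \text{and} \quad \mu_{i+1} \leq \lambda_i, \qquad i = 1, \dots, p.
\end{equation*}

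First I would unpack the complementation maps. Setting $\lambda' = \compl_{(p,m)}(\lambda)$ and $\mu' = \compl_{(p+1,m)}(\mu)$, the definitions give $\lambda'_i = m - \lambda_{p+1-i}$ for $i \in [p]$ and $\mu'_i = m - \mu_{p+2-i}$ for $i \in [p+1]$. So the two conditions needed for $\lambda' \precsim \mu'$ — namely $\lambda'_i \leq \mu'_i$ and $\mu'_{i+1} \leq \lambda'_i$ for $i = 1, \dots, p$, together with the boundary $0 = \lambda'_{p+1} \leq \mu'_{p+1}$ — each turn into a single subtraction on the $\lambda,\mu$ side.

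Then the bulk of the work is a short index computation. The inequality $\mu'_{i+1} \leq \lambda'_i$ becomes $\lambda_{p+1-i} \leq \mu_{p+1-i}$, which is just the containment $\lambda \subseteq \mu$ applied at row $p+1-i$. The inequality $\lambda'_i \leq \mu'_i$ becomes $\mu_{p+2-i} \leq \lambda_{p+1-i}$, which after the substitution $j = p+1-i$ is the horizontal-strip condition $\mu_{j+1} \leq \lambda_j$. Thus complementation simply swaps the two halves of the interlacing condition. The boundary case $0 \leq \mu'_{p+1} = m - \mu_1$ follows from the standing assumption that $m$ is at least the number of columns of $S$, hence $m \geq \mu_1$.

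The main obstacle is really only bookkeeping: being careful that $\lambda$ is viewed in $\Z^{p}$ while $\mu \in \Z^{p+1}$, that complementation is taken with the correct parameter ($p$ versus $p+1$), and that the reindexing $i \mapsto p+1-i$ lines up the two interlacing inequalities in the expected way. Once these are tracked, the lemma reduces to the two one-line calculations above, and no further representation-theoretic input is needed.
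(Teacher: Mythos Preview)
Your proposal is correct and follows essentially the same approach as the paper: both start from the interlacing $S^{\leq p} \precsim S^{\leq p+1}$, unpack the definition of $\compl$, and verify by a direct index substitution that the two interlacing inequalities for the complements are just the original two inequalities with the roles swapped. Your extra ``boundary'' check $0 \leq \mu'_{p+1} = m - \mu_1$ is not strictly needed for the interlacing condition itself (it is the statement that $\mu'$ is a valid Young diagram, which the paper established separately), but it does no harm.
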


\begin{proof}
Let $A, B \in \Z^d$ be Young diagrams. Recall that $A \preceq B$ iff the row-lengths of $B$ \emph{interlace} the row-lengths of $A$: $A_{i+1} \leq B_{i+1} \leq A_i$ for all $i \in [d-1]$.

Since $S$ is a SSYT, we have $S^{\leq p} \preceq S^{\leq (p+1)}$ for each $p \in [d-1]$. Thus, for each $p \in [d-1]$ and $j \in [p-1]$, we have $(S^{\leq p})_{j+1} \leq (S^{\leq (p+1)})_{j+1} \leq (S^{\leq p})_{j}$. Now, from $(S^{\leq p})_{j+1} \leq (S^{\leq (p+1)})_{j+1}$ with $j = p-i$, we have
    \begin{equation}\label{eq:complementary_SSYTS_lemma_dummy_1}
        \compl(S^{\leq (p+1)})_{i+1} = m - (S^{\leq (p+1)})_{p+1-i} \leq m - (S^{\leq (p)})_{p+1-i} = \compl(S^{\leq p})_{i}.
    \end{equation}
Moreover, from $(S^{\leq (p+1)})_{j+1} \leq (S^{\leq p})_{j}$ with $j = p+1-i$, we have 
    \begin{equation}\label{eq:complementary_SSYTS_lemma_dummy_2}
        \compl(S^{\leq p})_i = m - (S^{\leq p})_{p + 1 - i} \leq m - (S^{\leq (p+1)})_{p + 2 - i} = \compl(S^{\leq (p+1)})_{i}. 
    \end{equation}
Combining \Cref{eq:complementary_SSYTS_lemma_dummy_1,eq:complementary_SSYTS_lemma_dummy_2} shows the interlacing condition:
\begin{equation*}
    \compl(S^{\leq p})_{i+1} \leq \compl(S^{\leq p})_{i} \leq \compl(S^{\leq (p+1)})_{i}.
\end{equation*}
We conclude that $\compl(S^{\leq p}) \precsim \compl(S^{\leq (p+1)})$.
\end{proof}

Thus, $\compl_{(1,m)}(S^{\leq 1}) \precsim \dots \precsim \compl_{(d,m)}(S^{\leq d})$. This chain is associated with an SSYT with alphabet $[d]$, by the correspondence described in \Cref{sec:irreps_of_Ud}. 

\begin{corollary}
    Let $S$ be an SSYT with alphabet $[d]$, and let $m$ be an integer at least as large as the number of columns of $S$. Then, there exists a unique SSYT with alphabet $[d]$, $T$, such that, for each $p \in [d]$, $T^{\leq p} = \compl_{(p,m)}(S^{\leq p})$. 
\end{corollary}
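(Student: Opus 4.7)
The plan is to deduce the corollary directly from the preceding lemma by invoking the bijection between SSYTs and interlacing chains of Young diagrams developed in the discussion of the Gelfand-Tsetlin basis (\Cref{GT_basis_and_highest_weight_vectors} and the surrounding exposition). Recall that the recursive application of the $U(d) \supset U(d-1) \supset \cdots \supset U(1)$ branching rule (\Cref{thm:Ud_branching_rule}) shows that the SSYTs $T$ of alphabet $[d]$ are in bijection with chains $\mu^{(1)} \precsim \mu^{(2)} \precsim \cdots \precsim \mu^{(d)}$ where $\ell(\mu^{(p)}) \leq p$, under the correspondence $\mu^{(p)} = T^{\leq p}$.

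First I would apply the preceding lemma to the SSYT $S$, which produces an interlacing chain
\begin{equation*}
\compl_{(1,m)}(S^{\leq 1}) \precsim \compl_{(2,m)}(S^{\leq 2}) \precsim \cdots \precsim \compl_{(d,m)}(S^{\leq d}).
\end{equation*}
Next I would verify the length condition required by the bijection: for each $p \in [d]$, by definition $\compl_{(p,m)}(S^{\leq p}) \in \Z^p$, so it has at most $p$ rows. Thus the chain satisfies $\ell(\compl_{(p,m)}(S^{\leq p})) \leq p$, and moreover $\compl_{(d,m)}(S^{\leq d})$ is a valid Young diagram of length at most $d$.

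Applying the bijection between such chains and SSYTs then yields a unique SSYT $T$ with alphabet $[d]$ satisfying $T^{\leq p} = \compl_{(p,m)}(S^{\leq p})$ for each $p \in [d]$, which is exactly the assertion of the corollary. Uniqueness is immediate because the correspondence $T \leftrightarrow (T^{\leq 1}, \ldots, T^{\leq d})$ is bijective: an SSYT is determined by the shapes of its restrictions $T^{\leq p}$, since the boxes labeled $p$ are precisely those in $T^{\leq p} \setminus T^{\leq p-1}$. There is no real obstacle here; the only point requiring care is checking that the length bound $\ell(\compl_{(p,m)}(S^{\leq p})) \leq p$ holds, which is automatic from the definition of the complement operation as a map $\Z^p \to \Z^p$.
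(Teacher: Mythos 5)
Your argument is correct and matches the paper's own (very brief) justification: the preceding lemma supplies the interlacing chain, and the corollary then follows immediately from the bijection between SSYTs of alphabet $[d]$ and chains $\mu^{(1)} \precsim \cdots \precsim \mu^{(d)}$ with $\ell(\mu^{(p)}) \leq p$ established in \Cref{sec:irreps_of_Ud}. You spell out the length check and the uniqueness observation slightly more explicitly, but the route is the same.
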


\begin{definition}[Complementary SSYTs] \label{def:complementary_SSYTs}
    Let $S$ be an SSYT, and let $m$ be an integer larger than the number of columns in $S$. The SSYT which \emph{complements} $S$ to be the SSYT $T$ that satisfies
    \begin{equation*}
        T^{\leq p} = \compl_{(p,m)}(S^{\leq p}).
    \end{equation*}
    Thus, $T^{\leq p}_{i} = m - S^{\leq p}_{p+1-i}$. We write $T = \compl_{(d,m)}(S)$, dropping $(d,m)$ when these parameters are clear from context. We also $T$ is the \emph{complement} of $S$, or $T$ is \emph{complementary} to $S$. 
\end{definition}

Note that if $T = \compl_{(d,m)}(S)$, then $\compl_{(p,m)}(T^{\leq p}) = S^{\leq p}$ for all $p \in [d]$, so that $S = \compl_{(d,m)}(T)$ as well. That is, $S = \compl ( \compl(S))$. See \Cref{fig:complement_SSYTs} for an example of complementary SSYTs. 

\begin{figure}
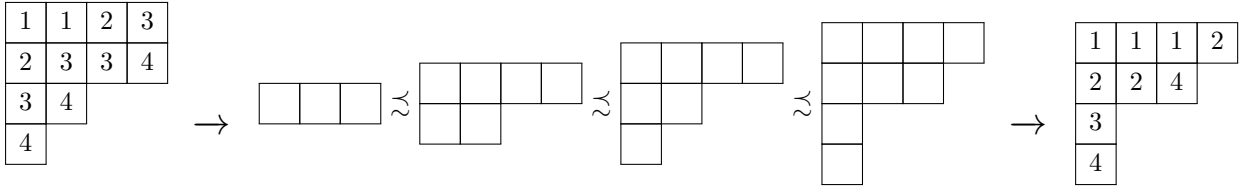
 
    \resizebox{\textwidth}{!}{%
    \centering
    \begin{ytableau}
          1 & 1 & 2 & 3 \\
          2 & 3 & 3 & 4 \\
          3 & 4 \\
          4 \\
          \none
    \end{ytableau}
    \Large
    \begin{tabular}{c}
        \\ 
        $\xrightarrow[]{\text{}}$  \\
    \end{tabular}
    \normalsize
    \begin{ytableau}
          ~ & ~ & ~ 
    \end{ytableau}
    $\precsim$
    \begin{ytableau}
          ~ & ~ & ~ & ~\\
          ~ & ~
    \end{ytableau}
    $\precsim$
    \begin{ytableau}
          ~ & ~ & ~ & ~ \\
          ~ & ~ \\
          ~ 
    \end{ytableau}
    $\precsim $
    \begin{ytableau}
          ~ & ~ & ~ & ~ \\
          ~ & ~ & ~ \\
          ~ \\
          ~ 
    \end{ytableau}
    \Large
    \begin{tabular}{c}
        \\ 
        $\xrightarrow[]{\text{}}$  \\
    \end{tabular}
    \normalsize
    \begin{ytableau}
          1 & 1 & 1 & 2 \\
          2 & 2 & 4  \\
          3 \\
          4 
    \end{ytableau}
    }
    \caption{An example illustrating complementing SSYTs. Take $d = 4$ and $m=5$. On the left, we have an SSYT $S$ of shape $\lambda = (4,4,2,1)$ and alphabet $[d]$. In the middle, we have the sequence $\compl_{(p,m)}(S^{\leq p})$, for $p \in [d]$. On the right, we have the SSYT $T = \compl_{(d,m)}(S)$ corresponding to this chain.}
    \label{fig:complement_SSYTs}
\end{figure}

\begin{lemma}
    Let $\lambda \in \Z^d$ be a Young diagram, and let $m$ be an integer at least as large as the number of columns of $\lambda$. Let $\mu \coloneq \compl_{(d,m)}(\lambda)$. Let $S_\lambda$ be the lowest-weight SSYT of shape $\lambda$, and let $T^{\mu}$ be the highest-weight SSYT of shape $\mu$, both with alphabet $[d]$. Then $S_\lambda$ and $T^{\mu}$ are complementary SSYTs. 
\end{lemma}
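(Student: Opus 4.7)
The proposal is to verify the equality $T^{\mu} = \compl_{(d,m)}(S_{\lambda})$ directly from the definitions, by computing $(T^{\mu})^{\leq p}_i$ and $(S_\lambda)^{\leq p}_{p+1-i}$ and showing they differ by exactly $m$, using the relation $\mu_i = m - \lambda_{d+1-i}$.

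First, I would unpack the left-hand side. Since $T^{\mu}$ is the highest-weight SSYT of shape $\mu$, every cell of row $i$ in $T^{\mu}$ is labelled by $i$. Restricting $T^{\mu}$ to the alphabet $[p]$ therefore deletes precisely the boxes in rows $p+1, \ldots, d$, while leaving rows $1, \ldots, p$ untouched. Hence $(T^{\mu})^{\leq p} = (\mu_1, \ldots, \mu_p)$, and in particular $(T^{\mu})^{\leq p}_i = \mu_i$ for each $i \in [p]$.

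Next, I would unpack the right-hand side using the given defining property of the lowest-weight SSYT, namely $S_{\lambda}^{\leq p} = (\lambda_{d-p+1}, \ldots, \lambda_d)$. This means $S_{\lambda}^{\leq p}_j = \lambda_{d-p+j}$; setting $j = p+1-i$ gives $S_{\lambda}^{\leq p}_{p+1-i} = \lambda_{d+1-i}$. Applying the definition of SSYT complementation (\Cref{def:complementary_SSYTs}), $\compl_{(d,m)}(S_\lambda)$ is the unique SSYT $T$ such that $T^{\leq p}_i = m - S_{\lambda}^{\leq p}_{p+1-i} = m - \lambda_{d+1-i}$ for all $p \in [d]$ and $i \in [p]$. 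But by the definition of $\mu = \compl_{(d,m)}(\lambda)$ as a Young diagram, this is precisely $\mu_i$. Thus $\compl_{(d,m)}(S_\lambda)^{\leq p}_i = \mu_i = (T^\mu)^{\leq p}_i$ for all $p$ and $i$, so the two chains of shapes agree, and by the uniqueness of the SSYT associated to such a chain we conclude $T^\mu = \compl_{(d,m)}(S_\lambda)$.

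There is no genuine obstacle here: the statement is essentially definition-chasing, and the only mildly subtle point is being careful with the reversed indexing $j \mapsto p+1-j$ inside the complement operation versus the reversed indexing $i \mapsto d+1-i$ appearing in the definition of $\mu$. It is also worth remarking as a sanity check that the hypothesis ``$m$ is at least as large as the number of columns of $\lambda$'' is exactly what is needed for $\mu$ (and each $(S_\lambda)^{\leq p}$) to have nonnegative row lengths, so that all the complementary objects are genuine Young diagrams and SSYTs. A small diagram (for instance, the running example $d=4$, $\lambda = (4,4,2,1)$, $m=5$ from \Cref{fig:complement_SSYTs} paired against the lowest/highest weight examples) would make a pleasant illustration but is not strictly needed for the proof.
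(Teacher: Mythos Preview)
Your proposal is correct and takes essentially the same approach as the paper: both arguments compute $\compl_{(p,m)}(S_\lambda^{\leq p})_i = m - \lambda_{d+1-i} = \mu_i$ via the same index substitution $j = p+1-i$ in the lowest-weight formula $(S_\lambda^{\leq p})_j = \lambda_{d-p+j}$. The only cosmetic difference is that the paper phrases the conclusion as ``all boxes labelled $p$ in $\compl(S_\lambda)$ lie in row $p$, hence it is highest-weight of shape $\mu$,'' whereas you directly verify $(T^\mu)^{\leq p} = \compl(S_\lambda)^{\leq p}$ entrywise; these are equivalent.
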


It is perhaps easiest to understand this lemma pictorially; see \Cref{fig:complementing_lowest_weight_SSYTs}. We now give a formal argument. 

\begin{proof}
    Recall $(S_{\lambda}^{\leq p})_i = \lambda_{d - p + i}$. Then
    \begin{equation*}\compl_{(p,m)}(S_\lambda^{\leq p})_i = m - (S_\lambda^{\leq p})_{p+1-i} = m - \lambda_{d - p + (p+1-i)} = m - \lambda_{d+1-i},
    \end{equation*}
    so that $\compl_{(p,m)}(S_\lambda^{\leq p}) = (m - \lambda_d, \dots, m - \lambda_{d+1-p})$. Since $\compl_{(p+1,m)}(S_\lambda^{\leq (p+1)})_i = \compl_{(p,m)}(S_\lambda^{\leq p})_i$ for $i \in [p]$, the only boxes in $\compl_{(p+1,m)}(S_\lambda^{\leq (p+1)}) \setminus \compl_{(p,m)}(S_\lambda^{\leq p})$ are those in the $(p+1)$-th row. Therefore, all boxes labelled $p$ in $\compl_{(d,m)}(S_\lambda)$ occur in the $p$-th row, and therefore this SSYT is highest-weight. Since $\compl_{(d,m)}(S_\lambda)$ is an SSYT of shape $\compl_{(d,m)}(S^{\leq d}_\lambda) = \compl_{(d,m)}(\lambda) = \mu$, we have $\compl_{(d,m)}(S_\lambda) = T^\mu$. 
\end{proof}

\begin{figure}
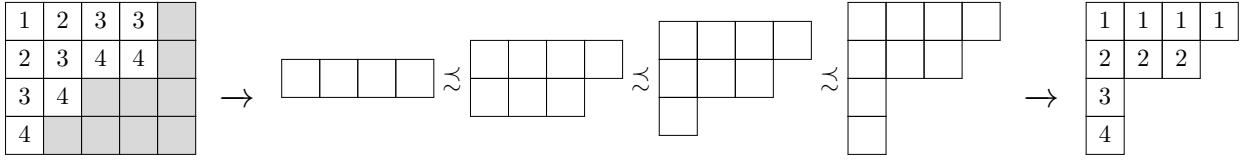
 
    \resizebox{\textwidth}{!}{%
    \centering
    \begin{ytableau}
          1 & 2 & 3 & 3 & *(gray!30)~\\
          2 & 3 & 4 & 4 & *(gray!30)~\\
          3 & 4 & *(gray!30)~ & *(gray!30)~ & *(gray!30)~\\
          4 & *(gray!30)~ & *(gray!30)~ & *(gray!30)~ & *(gray!30)~  
    \end{ytableau}
    \Large
    \begin{tabular}{c}
        \\ 
        $\xrightarrow[]{\text{}}$  \\
    \end{tabular}
    \normalsize
    \begin{ytableau}
          ~ & ~ & ~ & ~
    \end{ytableau}
    $\precsim$
    \begin{ytableau}
          ~ & ~ & ~ & ~\\
          ~ & ~ & ~ 
    \end{ytableau}
    $\precsim$
    \begin{ytableau}
          ~ & ~ & ~ & ~ \\
          ~ & ~ & ~ \\
          ~ 
    \end{ytableau}
    $\precsim $
    \begin{ytableau}
          ~ & ~ & ~ & ~ \\
          ~ & ~ & ~ \\
          ~ \\
          ~ 
    \end{ytableau}
    \Large
    \begin{tabular}{c}
        \\ 
        $\xrightarrow[]{\text{}}$  \\
    \end{tabular}
    \normalsize
    \begin{ytableau}
          1 & 1 & 1 & 1 \\
          2 & 2 & 2  \\
          3 \\
          4 \\
    \end{ytableau}
    }
    \caption{Another example of complementing SSYTs, here illustrating how the complements of lowest-weight SSYTs are highest-weight SSYTs. As in the previous figure, take $d =4$, $m=5$, and $\lambda = (4,4,2,1)$. On the left, we have the lowest-weight SSYT $S_\lambda$ of shape $\lambda$ and alphabet $[d]$, which we have depicted as embedded into a $d \times m$ grid. The gray cells are those that do not belong to $\lambda$. In the middle, we have the sequence $\compl_{(p,m)}(S_\lambda^{\leq p})$, for $p \in [d]$. Note that $\compl_{(p,m)}(S_\lambda^{\leq p})$ is the complement of the bottom $p$ rows, regarded as a Young diagram. Thus, $\compl_{(p,m)}(S_\lambda^{\leq p})$ consists of the gray cells in the bottom $p$ rows of $\lambda$, rotated $180^\circ$. Therefore, the $p$-th element in the chain adds boxes only in the $p$-th row. This means that the SSYT corresponding to this chain, $T^\lambda = \compl_{(d,m)}(S)$ is highest-weight. On the right is $T^\lambda$, and indeed it is highest weight.}
    \label{fig:complementing_lowest_weight_SSYTs}
\end{figure}

The reason why we set $T$ to be the lowest-weight SSYT, is because its complement is a highest-weight SSYT, and the following lemma allows us to relate the Clebsch-Gordan coefficients in~\Cref{eq:correct-basis-eq2} to the one- and two-step Clebsch-Gordan coefficients.

\begin{lemma}
    \label{lem:clebsch-gordan-of-complement}
    Let $S, S'$ be SSYTs with shapes $\lambda$ and $\lambda + e_i$ respectively, and $m$ be an integer larger than the number of columns in both $S$ and $S'$. Moreover, let $T = \compl(S')$ and $T' = \compl(S)$. Then, it holds that
    \begin{equation*}
        \braket{S, k}{S'} = (-1)^{d-k} \cdot \bigg(\frac{\dim(V_{\lambda+e_i}^d)}{\dim(V_{\lambda}^d)}\bigg)^{1/2} \cdot \braket{T, k}{T'} 
    \end{equation*}
\end{lemma}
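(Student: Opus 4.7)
The plan is to prove this identity by a level-by-level comparison of the scalar factor decomposition (\Cref{thm:cg-coeff-as-product}) of the two Clebsch-Gordan coefficients, exploiting the fact that complementation acts on contents by a negation-plus-shift: if $U = \compl_{(p, m)}(\alpha)$ with $\ell(\alpha) \leq p$, then $c_{p+1-j}(U) = (m - p - 1) - c_j(\alpha)$ for every $j \in [p]$. This single identity translates content differences on the $(S, S')$ side into their negatives on the $(T, T')$ side, which is precisely the relation needed to equate the scalar factor formulas of \Cref{lem:cg-rules}.

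First I would reduce to the case in which both sides are nonzero. By the Clebsch-Gordan insertion rules (\Cref{def:CG_insertion}), each side is nonzero only when certain shape conditions hold at every level $p$, and these conditions are equivalent under the complement bijection ``add a box to row $i$ of $\alpha^{\leq p}$'' $\leftrightarrow$ ``remove a box from row $p+1-i$ of $\compl(\alpha^{\leq p})$''. Assuming both sides are nonzero, \Cref{thm:cg-coeff-as-product} expresses each CG coefficient as a product of $d$ scalar factors; for $p < k$, both factors equal $1$ by item (i) of \Cref{lem:cg-rules}, so the identity reduces to verifying an equality of products indexed by $p \in \{k, k+1, \dots, d\}$.

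I would then handle the signs and magnitudes separately. At $p = k$, item (ii) of \Cref{lem:cg-rules} applies on both sides and carries no sign factor. At each $p > k$, item (iii) applies, contributing $S(i_p, i_{p-1})$ on the LHS (with $i_p$ the insertion row at level $p$ on the $(S, S')$ side) and $S(p + 1 - i_p, p - i_{p-1})$ on the RHS; the mismatch between the reflections $j \mapsto p+1-j$ on $\alpha^{\leq p}$ and $j \mapsto p - j$ on $\alpha^{\leq p-1}$ is the key feature. A direct case analysis on the orderings $i_p < i_{p-1}$, $i_p = i_{p-1}$, and $i_p > i_{p-1}$ shows the ratio of these signs is always $-1$, and multiplying over the $d - k$ levels $p \in \{k+1, \dots, d\}$ produces the factor $(-1)^{d - k}$. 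For the magnitudes, substituting the content identity into the scalar factor formulas of \Cref{lem:cg-rules} converts the products of content differences on the $(T, T')$ side into shifted products on the $(S, S')$ side; after telescoping across consecutive levels and applying Stanley's hook-content formula (\Cref{eq:stanley's-hook-content-formula}) or the Weyl dimension formula (\Cref{eq:weyl-dim-formula}), the remaining factors should assemble exactly into $\sqrt{\dim(V^d_{\lambda+e_i})/\dim(V^d_\lambda)}$.

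The main obstacle will be the magnitude telescoping. The scalar factors in \Cref{lem:cg-rules} involve subtly shifted content differences (for instance, $c_k - c_i$ on one side versus $c_k - c_i - 1$ on the other), which do not cancel level by level but must combine across adjacent $p$'s to expose the dimension ratio. Carrying this out requires expressing each level's content differences in terms of the restricted-alphabet row lengths $S^{\leq p}_r$ and then recognizing the resulting product as the hook-content formula for $\dim(V^d_{\lambda + e_i})/\dim(V^d_\lambda)$. I would first verify the identity by hand in small cases such as $d = 2$, $m = 2$, $\lambda = (1, 0)$ and $d = 3$, $m = 2$, $\lambda = (1, 1, 0)$ to confirm both the sign-counting mechanism and the precise form of the telescoping, before attempting the general calculation.
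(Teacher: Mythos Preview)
Your proposal is correct and follows essentially the same approach as the paper's proof: decompose both CG coefficients into scalar factors via \Cref{thm:cg-coeff-as-product}, use the content-negation identity under complementation to relate each scalar factor for $(S,S')$ to the corresponding one for $(T,T')$, obtain the sign $(-1)^{d-k}$ from the $d-k$ ratios $S(i_p,i_{p-1})/S(p+1-i_p,\,p-i_{p-1})=-1$, and telescope the remaining multiplicative corrections across levels so that only the $p=d$ term survives and yields the dimension ratio via the Weyl dimension formula. The telescoping you flag as the main obstacle is exactly the part the paper works through carefully, and your diagnosis of where the difficulty lies (the ``off-by-one'' shifts in the content differences) is accurate.
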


The proof of this lemma is deferred to the end of this section. We now use this lemma to rewrite the coefficients from~\Cref{eq:correct-basis-eq2} in the following way:
\begin{align*}
    \braket*{Y'',k}{Y'} = (-1)^{d-k}\bigg(\frac{\dim(V^d_{\lambda+e_i})}{\dim(V^d_{\lambda})}\bigg)^{1/2}\braket*{\widehat{Y'},k}{\widehat{Y''}},
    \qquad
    \braket*{Y',\ell}{T} = (-1)^{d-\ell} \bigg(\frac{\dim(V^d_{\lambda_{ij}})}{\dim(V^d_{\lambda+e_i})}\bigg)^{1/2} \braket*{\widehat{T},\ell}{\widehat{Y'}}, \\
    \braket*{T'',\ell}{T'} = (-1)^{d-\ell}\bigg(\frac{\dim(V^d_{\lambda+e_j})}{\dim(V^d_{\lambda})}\bigg)^{1/2} \braket*{\widehat{T'},\ell}{\widehat{T''}},
    \qquad
    \braket*{T',k}{T} = (-1)^{d-k}\bigg(\frac{\dim(V^d_{\lambda_{ij}})}{\dim(V^d_{\lambda+e_j})}\bigg)^{1/2} \braket*{\widehat{T},k}{\widehat{T'}}.
\end{align*}
For brevity, we use $\widehat{T}$ to mean $\compl(T)$, and similarly for the SSYTs $T', T'', Y', Y''$. Moreover, since complementation is injective (as shown above), $\braket*{T''}{Y''} = \braket*{\widehat{T''}}{\widehat{Y''}}$. Hence~\Cref{eq:correct-basis-eq2} becomes
\begin{align}
    &\sum_{k, \ell} \sum_{\substack{T'' \in \lambda \\ T' \in \lambda + e_j}} \sum_{\substack{Y''\in \lambda \\ Y' \in \lambda + e_i}} \braket*{Y'',k}{Y'} \braket*{Y',\ell}{T} \braket*{T'',\ell}{T'} \braket*{T',k}{T} \cdot \braket{T''}{Y''} \nonumber \\
    &= \sum_{k, \ell} (-1)^{d-k}(-1)^{d-k}(-1)^{d-\ell}(-1)^{d-\ell}  \cdot \frac{\dim(V^d_{\lambda_{ij}})}{\dim(V^d_{\lambda})} \nonumber \\
    & \cdot \sum_{\substack{\widehat{T''} \in \compl(\lambda) \\ \widehat{T'} \in \compl(\lambda + e_j)}} \sum_{\substack{\widehat{Y''}\in \compl(\lambda) \\ \widehat{Y'} \in \compl(\lambda + e_i)}} \braket*{\widehat{Y'},k}{\widehat{Y''}} \braket*{\widehat{T},\ell}{\widehat{Y'}} \braket*{\widehat{T'},\ell}{\widehat{T''}} \braket*{\widehat{T},k}{\widehat{T'}} \cdot \braket*{\widehat{T''}}{\widehat{Y''}} \nonumber \\
    &= \frac{\dim(V^d_{\lambda_{ij}})}{\dim(V^d_{\lambda})} \sum_{k, \ell} \sum_{\substack{\widehat{T''} \in \compl(\lambda) \\ \widehat{T'} \in \compl(\lambda + e_j)}} \sum_{\substack{\widehat{Y''}\in \compl(\lambda) \\ \widehat{Y'} \in \compl(\lambda + e_i)}} \braket*{\widehat{Y'},k}{\widehat{Y''}} \braket*{\widehat{T},\ell}{\widehat{Y'}} \braket*{\widehat{T'},\ell}{\widehat{T''}} \braket*{\widehat{T},k}{\widehat{T'}} \cdot \braket*{\widehat{T''}}{\widehat{Y''}}. \label{eq:correct-basis-eq3}
\end{align}
Now the nonzero terms in the summation above correspond to SSYTs that look as follows:
\begin{equation*}
    \underbrace{\widehat{T''}}_{\mu+e_{i'}+e_{j'}} \xleftarrow{\otimes \ket{\ell}}
    \underbrace{\widehat{T'}}_{\mu+e_{i'}} \xleftarrow{\otimes \ket{k}}
    \underbrace{\widehat{T}}_{\mu} \xrightarrow{\otimes \ket{\ell}}
    \underbrace{\widehat{Y'}}_{\mu+e_{j'}} \xrightarrow{\otimes \ket{k}} \underbrace{\widehat{Y''}}_{\mu+e_{i'}+e_{j'}},
\end{equation*}
where $\widehat{T}$ is a highest-weight SSYT of shape $\mu = \compl(\lambda_{ij})$. We set $i' = d+1-i$, $j' = d+1-j$, and we used the fact that $\compl(\mu+e_{i'}) = \compl(\mu) - e_i = \lambda+e_j$. Thus, now the terms that appear correspond to the coefficients we get when we add boxes to a highest-weight SSYT, something that we understand much better.

It suffices to show that~\Cref{eq:correct-basis-eq3} is positive. The factor before the summation is always positive, and can be thus ignored. Then, we observe that the terms that appear in this expression are exactly the two-step Clebsch Gordan coefficients:
\begin{equation*}
    \braket*{\widehat{T'},\ell}{\widehat{T''}} \braket*{\widehat{T},k}{\widehat{T'}} =
    \begin{cases}
        a^{\mu}_{k\ell \to i'j'}, & \text{or} \\
        b^{\mu}_{k\ell \to i'j'}
    \end{cases}
    \qquad
    \text{and}
    \qquad
    \braket*{\widehat{Y'},k}{\widehat{Y''}} \braket*{\widehat{T},\ell}{\widehat{Y'}} =
    \begin{cases}
        a^{\mu}_{\ell k \to j'i'}, & \text{or} \\
        b^{\mu}_{\ell k \to j'i'}
    \end{cases}
\end{equation*}
From~\Cref{rem:a-two-step-is-nonnegative}, we know that the $a$ coefficients are always nonnegative. Thus, any negative terms in~\Cref{eq:correct-basis-eq3} must come when at least one of the terms is a $b$ coefficient.

We consider the two cases. First, let $\braket*{\widehat{T'},\ell}{\widehat{T''}} \braket*{\widehat{T},k}{\widehat{T'}} = b^{\mu}_{k\ell \to i'j'}$ have value strictly less than zero. From~\Cref{lem:two-step-kell-to-ij}, this must mean that $k > \ell$, and that $\widehat{T''} = \widehat{T}_{k\ell\to j'i'}$, that is, $\widehat{T''}$ has a $k$ at the end of the $j'$-th row, and an $\ell$ at the end of the $i'$-th row. Since $k > \ell$, $b^{\mu}_{\ell k \to j'i'} = 0$, and thus the other term $\braket*{\widehat{Y'},k}{\widehat{Y''}} \braket*{\widehat{T},\ell}{\widehat{Y'}}$ is either $0$, or $a^{\mu}_{\ell k \to j'i'}$. When $\braket*{\widehat{Y'},k}{\widehat{Y''}} \braket*{\widehat{T},\ell}{\widehat{Y'}} = a^{\mu}_{\ell k \to j'i'}$, this must mean that $\widehat{Y''} = \widehat{T}_{\ell k \to j'i'}$, and thus $\widehat{Y''}$ has an $\ell$ at the end of the $j'$-th row, and a $k$ at the end of the $i'$-th row. Since $i' \neq j'$, and $k > \ell$, the SSYTs $\widehat{Y''}$ and $\widehat{T''}$ are different, and thus $\braket*{\widehat{T''}}{\widehat{Y''}} = 0$. Hence, we can only have positive or zero contributions to~\Cref{eq:correct-basis-eq3} in that case.

We proceed in the same fashion in the second case, when we let $\braket*{\widehat{Y'},k}{\widehat{Y''}} \braket*{\widehat{T},\ell}{\widehat{Y'}} = b^{\mu}_{\ell k \to j'i'}$ have value strictly less than zero. From~\Cref{lem:two-step-kell-to-ij}, this must mean that $\ell > k$, and that $\widehat{Y''} = \widehat{Y}_{\ell k\to i'j'}$, that is, $\widehat{Y''}$ has an $\ell$ at the end of the $i'$-th row, and a $k$ at the end of the $j'$-th row. Since $\ell > k$, $b^{\mu}_{k \ell \to i'j'} = 0$, and thus the other term $\braket*{\widehat{T'},k}{\widehat{T''}} \braket*{\widehat{T},\ell}{\widehat{T'}}$ is either $0$, or $a^{\mu}_{k \ell \to i'j'}$. When $\braket*{\widehat{T'},k}{\widehat{T''}} \braket*{\widehat{T},\ell}{\widehat{T'}} = a^{\mu}_{k \ell \to i'j'}$, this must mean that $\widehat{T''} = \widehat{T}_{k\ell \to i'j'}$, and thus $\widehat{T''}$ has a $k$ at the end of the $i'$-th row, and an $\ell$ at the end of the $j'$-th row. Since $i' \neq j'$, and $\ell > k$, the SSYTs $\widehat{T''}$ and $\widehat{Y''}$ are different, and thus $\braket*{\widehat{T''}}{\widehat{Y''}} = 0$. Hence, we can only have positive or zero contributions to~\Cref{eq:correct-basis-eq3} in that case as well.

We conclude that this off-diagonal term must be real and positive, and thus equal to $\sqrt{1 - \frac{1}{\Delta_{ji}^2}}$.
\end{proof}

\begin{proof}[Proof of~\Cref{lem:clebsch-gordan-of-complement}]
    As per~\Cref{eq:cg-coeff-as-product}, we decompose the left-hand side as a product of scalar factors
    \begin{equation}
        \label{eq:schur-transf-basis-eq1}
        \braket*{S, k}{S'}
        = \prod_{p=1}^d \left(S^{=p}, \smalloneboxSSYT{k}^{\,=p} \mid (S')^{=p}\right)
        = \left(S^{=k}, e^{=k}_{10} \mid (S')^{=k}\right) \cdot \prod_{p=k+1}^d \left(S^{=p}, e^{=p}_{11} \mid (S')^{=p}\right).
    \end{equation}
    As in \Cref{eq:bumpagge}, the only way for the Clebsch-Gordan coefficient $\braket*{S, k}{S'}$ to be nonzero, is if there exist indices $i_k, \dots, i_d$, such that $i_d = i$, and
    \begin{equation*}
        (S')^{= p} = \begin{cases}
            S^{= p} & p < k \\
            S^{= p}_{+i_p} & p = k \\
            S^{= p}_{+i_p,-i_{p-1}} & p > k
        \end{cases}
        \implies
        (S')^{\leq p} = \begin{cases}
            S^{\leq p} & p < k \\
            S^{\leq p} + e_{i_p} & p \geq k
        \end{cases}.
    \end{equation*}
    Taking the complements of the tableaux in the right-hand side expression, we deduce that
    \begin{equation*}
        (T')^{\leq p} = \begin{cases}
            T^{\leq p} & p < k \\
            T^{\leq p} + e_{j_p} & p \geq k
        \end{cases}
        \implies (T')^{= p} = \begin{cases}
            T^{= p} & p < k \\
            T^{= p}_{+j_p} & p = k \\
            T^{= p}_{+j_p,-j_{p-1}} & p > k
        \end{cases},
    \end{equation*}
    where we define $j_p \coloneq p + 1 - i_p$. This, in particular, means that the tableaux $S^{\leq p}$ and $T^{\leq p}$ are related using the following two expressions:
    \begin{equation*}
        S^{\leq p}_x = m - (T')^{\leq p}_{p+1-x} =
        \begin{cases}
            m - T^{\leq p}_{p+1-x}& p < k \\
            m - T^{\leq p}_{p+1-x} - \delta_{x(i_p)} & p \geq k
        \end{cases},
    \end{equation*}
    \begin{equation*}
        T^{\leq p}_x = m - (S')^{\leq p}_{p+1-x} =
        \begin{cases}
            m - S^{\leq p}_{p+1-x} & p < k \\
            m - S^{\leq p}_{p+1-x} - \delta_{x(j_p)} & p \geq k
        \end{cases}.
    \end{equation*}
    We use the formula in~\Cref{eq:equation-3} to write the first factor in~\Cref{eq:schur-transf-basis-eq1}:
    \begin{align*}
        \left(S^{=k}, e^{=k}_{10} \mid S^{=k}_{+i_k}\right)
        &=
        \left|\frac{\prod_{x=1}^{k-1} \left(c_x(S^{\leq k-1}) - c_{i_k}(S^{\leq k}) - 1\right)}{\prod_{x=1,x\neq i_k}^k \left(c_x(S^{\leq k}) - c_{i_k}(S^{\leq k})\right)}\right|^{1/2} \\
        &=
        \left|\frac{\prod_{x=1}^{k-1} \left(S^{\leq k-1}_x - S^{\leq k}_{i_k} - x + i_k - 1\right)}{\prod_{x=1,x\neq i_k}^k \left(S^{\leq k}_x - S^{\leq k}_{i_k} - x + i_k\right)}\right|^{1/2} \\
        &=
        \left|\frac{\prod_{x=1}^{k-1} \left(m - T^{\leq k-1}_{k-x} - m + T^{\leq k}_{k+1-i_k} + 1 - x + i_k - 1\right)}{\prod_{x=1,x\neq i_k}^k \left(m - T^{\leq k}_{k+1-x} - m + T^{\leq k}_{k+1-i_k} + 1 - x + i_k\right)}\right|^{1/2} \\
        &=
        \left|\frac{\prod_{x=1}^{k-1} -\left(T^{\leq k-1}_{k-x} - T^{\leq k}_{k+1-i_k} - (k-x) + (k+1-i_k) - 1\right)}{\prod_{x=1,x\neq i_k}^k -\left(T^{\leq k}_{k+1-x} - T^{\leq k}_{k+1-i_k} - (k+1-x) + (k+1-i_k) - 1\right)}\right|^{1/2}.
    \end{align*}
    We will now change the iterator of the numerator to $x' \leftarrow k - x$, and the iterator of the denominator to $x'' \leftarrow k+1-x$. Moreover, recall that $j_k = k + 1 - i_k$. The expression thus becomes
    \begin{align*}
        &\left(S^{=k}, e^{=k}_{10} \mid S^{=k}_{+i_k}\right) \\
        ={}&
        \left|\frac{\prod_{x'=1}^{k-1} \left(T^{\leq k-1}_{x'} - T^{\leq k}_{j_k} - x' + j_k - 1\right)}{\prod_{x''=1,x''\neq j_k}^k \left(T^{\leq k}_{x''} - T^{\leq k}_{j_k} - x'' + j_k - 1\right)}\right|^{1/2} \\
        ={}&
        \left|\frac{\prod_{x'=1}^{k-1} \left(T^{\leq k-1}_{x'} - T^{\leq k}_{j_k} - x' + j_k - 1\right)}{\prod_{x''=1,x''\neq j_k}^k \left(T^{\leq k}_{x''} - T^{\leq k}_{j_k} - x'' + j_k\right)}\right|^{1/2} \cdot \left|\frac{\prod_{x''=1,x''\neq j_k}^k \left(T^{\leq k}_{x''} - T^{\leq k}_{j_k} - x'' + j_k\right)}{\prod_{x''=1,x''\neq j_k}^k \left(T^{\leq k}_{x''} - T^{\leq k}_{j_k} - x'' + j_k - 1\right)}\right|^{1/2} \\
        ={}&
        \left(T^{=k}, e^{=k}_{10} \mid T^{=k}_{+j_k}\right) \cdot \left|\frac{\prod_{x''=1,x''\neq j_k}^k \left(T^{\leq k}_{x''} - T^{\leq k}_{j_k} - x'' + j_k\right)}{\prod_{x''=1,x''\neq j_k}^k \left(T^{\leq k}_{x''} - T^{\leq k}_{j_k} - x'' + j_k - 1\right)}\right|^{1/2}.
    \end{align*}
    We can now undo the operations above to rewrite the multiplicative correction term in terms of the original tableau $S$:
    \begin{align}
        \left(S^{=k}, e^{=k}_{10} \mid S^{=k}_{+i_k}\right)
        &=
        \left(T^{=k}, e^{=k}_{10} \mid T^{=k}_{+j_k}\right) \cdot \left|\frac{\prod_{x''=1,x''\neq j_k}^k \left(T^{\leq k}_{x''} - T^{\leq k}_{j_k} - x'' + j_k\right)}{\prod_{x''=1,x''\neq j_k}^k \left(T^{\leq k}_{x''} - T^{\leq k}_{j_k} - x'' + j_k - 1\right)}\right|^{1/2} \nonumber \\
        &=
        \left(T^{=k}, e^{=k}_{10} \mid T^{=k}_{+j_k}\right) \cdot \left|\frac{\prod_{x=1,x\neq i_k}^k \left(S^{\leq k}_{x} - S^{\leq k}_{i_k} - x + i_k - 1\right)}{\prod_{x=1,x\neq i_k}^k \left(S^{\leq k}_{x} - S^{\leq k}_{i_k} - x + i_k\right)}\right|^{1/2}
        \label{eq:complement-scalar-factor-with-10}
    \end{align}
    We proceed to deal with the remaining scalar factors, which correspond to $p = k+1$ up to $p = d$.
    We use the formula in~\Cref{eq:equation-4} to write:
    \begin{align*}
        &\left(S^{=p}, e^{=p}_{11} \mid S^{=p}_{+i_p,-i_{p-1}}\right) \\
        ={}&
        S(i_p, i_{p-1})\left|\frac{\prod_{x \neq i_{p-1}}^{p-1} \left(c_x(S^{\leq p-1})- c_{i_p}(S^{\leq p})-1\right) \prod_{x \neq i_p}^{p} \left(c_x(S^{\leq p}) - c_{i_{p-1}}(S^{\leq p-1})\right)}{\prod_{x\neq i_p}^{p} \left(c_x(S^{\leq p}) - c_{i_p}(S^{\leq p})\right)\prod_{x \neq i_{p-1}}^{p-1}\left(c_x(S^{\leq p-1}) - c_{i_{p-1}}(S^{\leq p-1})-1\right)}\right|^{1/2} \\
        ={}&
        S(i_p, i_{p-1})\left|\frac{\prod_{x \neq i_{p-1}}^{p-1} \left(S^{\leq p-1}_x - S^{\leq p}_{i_p} - x + i_p -1\right) \prod_{x \neq i_p}^{p} \left(S^{\leq p}_x - S^{\leq p-1}_{i_{p-1}} - x + i_{p-1}\right)}{\prod_{x\neq i_p}^{p} \left(S^{\leq p}_x - S^{\leq p}_{i_p} - x + i_p\right)\prod_{x \neq i_{p-1}}^{p-1}\left(S^{\leq p-1}_x - S^{\leq p-1}_{i_{p-1}} - x + i_{p-1} -1\right)}\right|^{1/2} \\
        ={}&
        S(i_p, i_{p-1})\Bigg|\frac{\prod_{x \neq i_{p-1}}^{p-1} \left(m - T^{\leq p-1}_{p-x} - m + T^{\leq p}_{p+1-i_p} + 1 - x + i_p -1\right)}{\prod_{x\neq i_p}^{p} \left(m - T^{\leq p}_{p+1-x} - m + T^{\leq p}_{p+1-i_p} + 1 - x + i_p\right)}\Bigg|^{1/2} \\
        &\qquad \cdot \Bigg|\frac{\prod_{x \neq i_p}^{p} \left(m - T^{\leq p}_{p+1-x} - m + T^{\leq p-1}_{p-i_{p-1}} + 1 - x + i_{p-1}\right)}{\prod_{x \neq i_{p-1}}^{p-1}\left(m - T^{\leq p-1}_{p-x} - m + T^{\leq p-1}_{p-i_{p-1}} + 1 - x + i_{p-1} -1\right)}\Bigg|^{1/2} \\
        ={}&
        S(i_p, i_{p-1})\Bigg|\frac{\prod_{x \neq i_{p-1}}^{p-1} -\left(T^{\leq p-1}_{p-x} - T^{\leq p}_{p+1-i_p} - (p - x) + (p + 1 - i_p) -1\right)}{\prod_{x\neq i_p}^{p} -\left(T^{\leq p}_{p+1-x} - T^{\leq p}_{p+1-i_p} - (p+1-x) + (p+1-i_p)-1\right)}\Bigg|^{1/2} \\ 
        &\qquad \cdot \Bigg|\frac{\prod_{x \neq i_p}^{p} -\left(T^{\leq p}_{p+1-x} - T^{\leq p-1}_{p-i_{p-1}} - (p+1-x) + (p-i_{p-1})\right)}{\prod_{x \neq i_{p-1}}^{p-1}-\left(T^{\leq p-1}_{p-x} - T^{\leq p-1}_{p-i_{p-1}} - (p-x) + (p-i_{p-1})\right)}\Bigg|^{1/2}.
    \end{align*}
    We will now change the iterators of the numerators to $x' \leftarrow p - x$, $y' \leftarrow p+1-x$, and the iterators of the denominators to $x'' \leftarrow p+1-x$, $y'' \leftarrow p-x$. Moreover, recall that $j_p = p+1 - i_p$ and $j_{p-1} = p - i_{p-1}$. The expression thus becomes
    \begin{align*}
        &\left(S^{=p}, e^{=p}_{11} \mid S^{=p}_{+i_p,-i_{p-1}}\right) \\
        ={}&S(i_p, i_{p-1}) \left|\frac{\prod_{x' \neq j_{p-1}}^{p-1}\left(T^{\leq p-1}_{x'} - T^{\leq p}_{j_{p}} - x' + j_{p} -1\right)}{\prod_{y'\neq j_{p}}^{p} \left(T^{\leq p}_{y'} - T^{\leq p}_{j_{p}} - y' + j_{p}-1\right)} \cdot \frac{\prod_{x'' \neq j_{p}}^{p} \left(T^{\leq p}_{x''} - T^{\leq p-1}_{j_{p-1}} - x'' + j_{p-1}\right)}{\prod_{y'' \neq j_{p-1}}^{p-1}\left(T^{\leq p-1}_{y''} - T^{\leq p-1}_{j_{p-1}} - y'' + j_{p-1}\right)}\right|^{1/2} \\
        ={}& \frac{S(i_p, i_{p-1})}{S(j_{p}, j_{p-1})} \left(T^{=p}, e^{=p}_{11} \mid T^{=p}_{+j_{p},-j_{p-1}}\right) \\
        &\qquad \cdot
        \Bigg|\prod_{y'\neq j_{p}}^{p} \frac{T^{\leq p}_{y'} - T^{\leq p}_{j_{p}} - y' + j_{p}}{T^{\leq p}_{y'} - T^{\leq p}_{j_{p}} - y' + j_{p}-1} \cdot \prod_{y'' \neq j_{p-1}}^{p-1} \frac{T^{\leq p-1}_{y''} - T^{\leq p-1}_{j_{p-1}} - y'' + j_{p-1} - 1}{T^{\leq p-1}_{y''} - T^{\leq p-1}_{j_{p-1}} - y'' + j_{p-1}}\Bigg|^{1/2}.
    \end{align*}
    We note here that
    \begin{equation*}
        \frac{S(i_p, i_{p-1})}{S(j_{p}, j_{p-1})} = \frac{S(i_p, i_{p-1})}{S(p+1-i_p, p-i_{p-1})} = -1,
    \end{equation*}
    since whenever $i_p > i_{p-1}$, then $p+1-i_p \leq p-i_{p-1}$, and whenever $i_p \leq i_{p-1}$, then $p+1-i_p > p-i_{p-1}$.
    We will now ``undo'' the operations above to rewrite the multiplicative correction term in terms of the original tableau $S$:
    \begin{align}
        &\left(S^{=p}, e^{=p}_{11} \mid S^{=p}_{+i_p,-i_{p-1}}\right) \nonumber \\
        ={}& -\left(T^{=p}, e^{=p}_{11} \mid T^{=p}_{+j_p,-j_{p-1}}\right) \cdot
        \Bigg|\prod_{y'\neq j_p}^{p} \frac{m - S^{\leq p}_{p+1-y'} - m + S^{\leq p}_{p+1-j_{p}} + 1 - y' + j_{p}}{m - S^{\leq p}_{p+1-y'} - m + S^{\leq p}_{p+1-j_{p}} + 1 - y' + j_{p}-1}\Bigg|^{1/2} \nonumber \\
        &\qquad \cdot \Bigg|\prod_{y'' \neq j_{p-1}}^{p-1} \frac{m - S^{\leq p-1}_{p-y''} - m + S^{\leq p-1}_{p - j_{p-1}} + 1 - y'' + j_{p-1} - 1}{m - S^{\leq p-1}_{p - y''} - m + S^{\leq p-1}_{p-j_{p-1}} + 1 - y'' + j_{p-1}}\Bigg|^{1/2} \nonumber \\
        ={}& -\left(T^{=p}, e^{=p}_{11} \mid T^{=p}_{+j_{p},-j_{p-1}}\right) \cdot
        \Bigg|\prod_{x\neq i_p}^{p} \frac{S^{\leq p}_{x} - S^{\leq p}_{i_p} - x + i_p - 1}{S^{\leq p}_{x} - S^{\leq p}_{i_p} - x + i_p}\Bigg|^{1/2} \nonumber \\
        &\qquad \cdot \Bigg|\prod_{x \neq i_{p-1}}^{p-1} \frac{S^{\leq p-1}_{x} - S^{\leq p-1}_{i_{p-1}} - x + i_{p-1}}{S^{\leq p-1}_{x} - S^{\leq p-1}_{i_{p-1}} - x + i_{p-1} - 1}\Bigg|^{1/2}. \label{eq:complement-scalar-factor-with-11}
    \end{align}
    We can thus write the original Clebsch-Gordan coefficient as
    \begin{align*}
        &\braket*{S, k}{S'} \\
        ={}& \left(S^{=k}, e^{=k}_{10} \mid S^{=k}_{+i_k}\right) \cdot \prod_{p=k+1}^d \left(S^{=p}, e^{=p}_{11} \mid S^{=p}_{+i_p,-i_{p-1}}\right) \\
        ={}& \left(T^{=k}, e^{=k}_{10} \mid T^{=k}_{+j_k}\right) \cdot \left|\frac{\prod_{x=1,x\neq i_k}^k \left(S^{\leq k}_{x} - S^{\leq k}_{i_k} - x + i_k - 1\right)}{\prod_{x=1,x\neq i_k}^k \left(S^{\leq k}_{x} - S^{\leq k}_{i_k} - x + i_k\right)}\right|^{1/2}
        \cdot \prod_{p=k+1}^d -\left(T^{=p}, e^{=p}_{11} \mid T^{=p}_{+j_{p},-j_{p-1}}\right) \\
        &\qquad \cdot
        \left|\prod_{x\neq i_p}^{p} \frac{S^{\leq p}_{x} - S^{\leq p}_{i_p} - x + i_p - 1}{S^{\leq p}_{x} - S^{\leq p}_{i_p} - x + i_p} \cdot \prod_{x \neq i_{p-1}}^{p-1} \frac{S^{\leq p-1}_{x} - S^{\leq p-1}_{i_{p-1}} - x + i_{p-1}}{S^{\leq p-1}_{x} - S^{\leq p-1}_{i_{p-1}} - x + i_{p-1} - 1}\right|^{1/2}
        \tag{\cref{eq:complement-scalar-factor-with-10,eq:complement-scalar-factor-with-11}}.
    \end{align*}
    If we telescope the above expression, all the products cancel except the one with the iterator $x\neq i_d$. Moreover, we have defined $i_d = i$, and $S^{\leq d} = \shape(S) = \lambda$. Thus,
    \begin{align*}
        &\braket*{S, k}{S'} \\
        ={}& (-1)^{d-k}\left(T^{=k}, e^{=k}_{10} \mid T^{=k}_{+j_k}\right) \cdot \prod_{p=k+1}^d \left(T^{=p}, e^{=p}_{11} \mid T^{=p}_{+j_p,-j_{p-1}}\right) \cdot
        \left|\prod_{x\neq i_d}^{d} \frac{S^{\leq d}_{x} - S^{\leq d}_{i_d} - x + i_d - 1}{S^{\leq d}_{x} - S^{\leq d}_{i_d} - x + i_d}\right|^{1/2} \\
        ={}& (-1)^{d-k}\braket*{T, T^k}{T'} \cdot
        \left|\prod_{x\neq i}^{d} \frac{\lambda_{x} - \lambda_{i} - x + i - 1}{\lambda_{x} - \lambda_{i} - x + i}\right|^{1/2} \\
        ={}& (-1)^{d-k}\braket*{T, T^k}{T'} \cdot
        \bigg(\frac{\dim(V^d_{\lambda+e_i})}{\dim(V^d_{\lambda})}\bigg)^{1/2}.
    \end{align*}
    The last equality follows from the Weyl dimension formula in~\Cref{eq:weyl-dim-formula}. First, we write
    \begin{equation*}
        \dim(V^d_{\lambda})
        = \prod_{1 \leq x < y \leq d} \frac{(\lambda_x - \lambda_y) + (y - x)}{y - x},
    \end{equation*}
    \begin{equation*}
        \dim(V^d_{\lambda+e_i})
        = \prod_{\substack{1 \leq x < y \leq d \\ x, y \neq i}}\frac{(\lambda_x - \lambda_y) + (y - x)}{y - x} \cdot \prod_{x < i} \frac{(\lambda_x - \lambda_i) + (i - x) - 1}{i - x}
        \cdot \prod_{i < x} \frac{(\lambda_i - \lambda_x) + (x - i)+ 1}{x - i}.
    \end{equation*}
    Thus
    \begin{align*}
        \frac{\dim(V^d_{\lambda+e_i})}{\dim(V^d_\lambda)}
        &= \prod_{x < i} \frac{(\lambda_x - \lambda_i) + (i - x) - 1}{(\lambda_x - \lambda_i) + (i - x)} \cdot \prod_{i < x} \frac{(\lambda_i - \lambda_x) + (x - i) + 1}{(\lambda_i - \lambda_x) + (x - i)} \\
        &= \prod_{x < i} \frac{\lambda_x - \lambda_i - x + i - 1}{\lambda_x - \lambda_i - x + i} \cdot \prod_{i < x} \left|\frac{\lambda_x - \lambda_i - x + i - 1}{\lambda_x - \lambda_i - x + i}\right| \\
        &= \left|\prod_{x \neq i} \frac{\lambda_x - \lambda_i - x + i - 1}{\lambda_x - \lambda_i - x + i} \right|,
    \end{align*}
    Where the second equality holds because the left-hand side is always nonnegative.
\end{proof} \label{sec:obtaining_Young's_orthogonal_basis}

\end{document}